\documentclass{amsart} 

\usepackage[english]{babel} 
\usepackage{accents} 



\makeatletter
\ifcsname c@insertedfiles\endcsname
\else 
  \newcounter{insertedfiles}
  \setcounter{insertedfiles}{0}
\fi
\makeatother

\providecommand{\noinsert}[1]{ 
	\ifnum\value{insertedfiles} = 0
	    #1 
	  \else
 	 \fi
}

\providecommand{ 
	\addtocounter{insertedfiles}{1} 
	\input{}
	\addtocounter{insertedfiles}{-1} 
}[1]{ 
	\addtocounter{insertedfiles}{1} 
	\input{#1}
	\addtocounter{insertedfiles}{-1} 
}

%
%


\ifdefined\macroalreadyinput 
\else 
	\newcommand{\macroalreadyinput} 


\usepackage{adjustbox} 
\usepackage{amsfonts}
\usepackage{amsmath}
\usepackage{amssymb}
\usepackage{amsthm}
\usepackage{array}
\usepackage[utf8]{inputenc}
\usepackage[T1]{fontenc}
\usepackage{color}
\usepackage{comment} 
\usepackage{csquotes}
\usepackage[inline]{enumitem} 

%
\usepackage{fancybox}
\usepackage{fancyhdr}
\usepackage{geometry}
\usepackage{hhline}
\usepackage{glossaries} 
\usepackage{graphicx}
\usepackage{hyperref}
\hypersetup{
	colorlinks=true,
	allcolors=black,
}
\usepackage{iflang}
\usepackage{ifthen}
\usepackage{lastpage}
\usepackage{mathrsfs}
\usepackage{mathtools}
\usepackage{multicol}
\usepackage{multirow} 
\usepackage[normalem]{ulem} 
\usepackage{tikz-cd}
\usepackage{xspace}
\usepackage{xstring}

\newsavebox{\boxcontainer}








\newtheorem{thm}{\IfLanguageName{english}{Theorem}{Théorème}}[subsection]
\newtheorem{defi}[thm]{\IfLanguageName{english}{Definition}{Définition}}
\newtheorem{prop}[thm]{Proposition}
\newtheorem{lem}[thm]{\IfLanguageName{english}{Lemma}{Lemme}}
\newtheorem{cor}[thm]{\IfLanguageName{english}{Corollary}{Corollaire}}

\newtheorem{exe}{\IfLanguageName{english}{Example}{Exemple}}[subsection]

\theoremstyle{definition} 
\newtheorem{nota}[thm]{Notation}
\newtheorem{remark}{Remark}[subsection]


%

%

\newenvironment{groupe}{{}{}} 




\DeclareMathOperator{\Sp}{Sp}

\DeclareMathOperator{\id}{id}

\DeclareMathOperator{\GL}{GL}

\newcounter{subsubsubsection} \setcounter{subsubsubsection}{1}

\newcommand{\bs}{\backslash}

\newcommand{\rr}{\mathbb{R}}
\newcommand{\qq}{\mathbb{Q}}

\newcommand{\nn}{\mathbb{N}}

\newcommand{\didots}{\setbox1=\hbox{\,.} \hbox{.\raise4pt\copy1\raise8pt\copy1}}

\newcommand{\quotient}[2]{{\raisebox{.2em}{$#1$}\left/\raisebox{-.2em}{$#2$}\right.}}
\newcommand{\blank}{\hbox{}}

\ProvideDocumentCommand{\command}{m o m}{
	\IfValueTF{#2}{
		\providecommand{#1}[#2]{#3} 
		\renewcommand{#1}[#2]{#3} 
	}{
		\providecommand{#1}{#3} 
		\renewcommand{#1}{#3} 
	}
}

\ProvideDocumentCommand{\DocumentCommand}{m m m}{
		\ProvideDocumentCommand{#1}{#2}{#3} 
		\RenewDocumentCommand{#1}{#2}{#3} 
}

\NewDocumentCommand{\dc}{o}{\IfValueTF{#1}{\left[\!\!\left[ #1 \right]\!\!\right]}{\left[\!\!\left[ \right]\!\!\right]}}
\newcommand{\ul}[1]{\uline{#1}}
\newcommand{\ol}[1]{\overline{#1}}
\newcommand{\ust}[2]{\underset{#2}{#1}}
\newcommand{\ovst}[2]{\overset{#2}{#1}}
\newcommand{\round}[1]{\mathring{\overbrace{{#1}}}}
\NewDocumentCommand{\ps}{O{\cdot} O{\cdot}}{\left\langle #1; #2 \right\rangle}
\NewDocumentCommand{\norm}{O{\cdot} o}{\IfValueTF{#2}{\left\Vert #1 \right\Vert_{#2}}{\left\Vert #1 \right\Vert}}
\NewDocumentCommand{\abs}{O{\cdot} o}{\IfValueTF{#2}{\left| #1 \right|_{#2}}{\left| #1 \right|}}


\definecolor{grey}{gray}{0.5} 
\newcommand{\blue}[1]{\colorlet{currentColor}{.}\color{blue}{#1}\color{currentColor}\xspace}

\newcommand{\red}[1]{\colorlet{currentColor}{.}\color{red}{#1}\color{currentColor}}
\newcommand{\white}[1]{\colorlet{currentColor}{.}\color{white}{#1}\color{currentColor}}
\let\bf\textbf
\let\it\textit

\newcommand{\sub}[1]{\texorpdfstring{\textsubscript{{#1}}}{}}


\let\intemp\in
\renewcommand{\in}{~\intemp~} 
\let\modelstemp\models
\renewcommand{\models}{~\modelstemp~} 
\let\captemp\cap
\renewcommand{\cap}{~\captemp~} 
\let\cuptemp\cup
\renewcommand{\cup}{~\cuptemp~} 

\NewDocumentCommand{\set}{o o o}{} 

\let\Vee\bigvee 
\let\Wedge\bigwedge 
\RenewDocumentCommand{\bigvee}{o o}{ 
	\IfValueTF{#1}{ 
		\IfValueTF{#2}{ 
			\ovst{\ust{\Vee}{#1}}{#2}
		}{
			\ust{\Vee}{#1}
		}
	}{
	\Vee
	}
}

\RenewDocumentCommand{\bigwedge}{o o}{
	\IfValueTF{#1}{ 
		\IfValueTF{#2}{ 
			\ovst{\ust{\Wedge}{#1}}{#2}
		}{
			\ust{\Wedge}{#1}
		}
	}{
	\Wedge
	}
}

\let\ssup\sup 
\let\iinf\inf 
\RenewDocumentCommand{\sup}{o o}{ 
	\IfValueTF{#1}{ 
		\IfValueTF{#2}{ 
			\ovst{\ust{\ssup}{#1}}{#2}
		}{
			\ust{\ssup}{#1}
		}
	}{
	\ssup
	}
	\, 
}

\RenewDocumentCommand{\inf}{o o}{
	\IfValueTF{#1}{ 
		\IfValueTF{#2}{ 
			\ovst{\ust{\iinf}{#1}}{#2}
		}{
			\ust{\iinf}{#1}
		}
	}{
	\iinf
	}
	\, 
}

\let\llim\lim 
\RenewDocumentCommand{\lim}{o o}{ 
	\IfValueTF{#1}{ 
		\IfValueTF{#2}{ 
			\ovst{\ust{\llim}{#1}}{#2}
		}{
			\ust{\llim}{#1}
		}
	}{
	\llim
	}
	\, 
}

\let\Prod\prod
\let\Sum\sum 
\RenewDocumentCommand{\prod}{o o}{ 
	\IfValueTF{#1}{ 
		\IfValueTF{#2}{ 
			\ovst{\ust{\Prod}{#1}}{#2}
		}{
			\ust{\Prod}{#1}
		}
	}{
	\Prod
	}
}

\RenewDocumentCommand{\sum}{o o}{ 
	\IfValueTF{#1}{ 
		\IfValueTF{#2}{ 
			\ovst{\ust{\Sum}{#1}}{#2}
		}{
			\ust{\Sum}{#1}
		}
	}{
	\Sum
	}
}

\let\Ccup\bigcup 
\let\Ccap\bigcap 
\RenewDocumentCommand{\bigcup}{o o}{ 
	\IfValueTF{#1}{ 
		\IfValueTF{#2}{ 
			\ovst{\ust{\Ccup}{#1}}{#2}
		}{
			\ust{\Ccup}{#1}
		}
	}{
	\Ccup
	}
}

\RenewDocumentCommand{\bigcap}{o o}{
	\IfValueTF{#1}{ 
		\IfValueTF{#2}{ 
			\ovst{\ust{\Ccap}{#1}}{#2}
		}{
			\ust{\Ccap}{#1}
		}
	}{
	\Ccap
	}
}

\let\Odot\bigodot
\let\Oplus\bigoplus
\let\Otimes\bigotimes
\RenewDocumentCommand{\bigodot}{o o}{ 
	\IfValueTF{#1}{ 
		\IfValueTF{#2}{ 
			\ovst{\ust{\Odot}{#1}}{#2}
		}{
			\ust{\Odot}{#1}
		}
	}{
	\Odot
	}
}

\RenewDocumentCommand{\bigoplus}{o o}{ 
	\IfValueTF{#1}{ 
		\IfValueTF{#2}{ 
			\ovst{\ust{\Oplus}{#1}}{#2}
		}{
			\ust{\Oplus}{#1}
		}
	}{
	\Oplus
	}
}

\RenewDocumentCommand{\bigotimes}{o o}{ 
	\IfValueTF{#1}{ 
		\IfValueTF{#2}{ 
			\ovst{\ust{\Otimes}{#1}}{#2}
		}{
			\ust{\Otimes}{#1}
		}
	}{
	\Otimes
	}
}

\let\Sqcup\bigsqcup
\RenewDocumentCommand{\bigsqcup}{o o}{ 
	\IfValueTF{#1}{ 
		\IfValueTF{#2}{ 
			\ovst{\ust{\Sqcup}{#1}}{#2}
		}{
			\ust{\Sqcup}{#1}
		}
	}{
	\Sqcup
	}
}

\let\Uplus\biguplus
\RenewDocumentCommand{\biguplus}{o o}{ 
	\IfValueTF{#1}{ 
		\IfValueTF{#2}{ 
			\ovst{\ust{\Uplus}{#1}}{#2}
		}{
			\ust{\Uplus}{#1}
		}
	}{
	\Uplus
	}
}

\makeatletter
\newcommand{\TestUpperCase}[1]{
    \def\temp{#1}%
    \def\upcase{0}
    \@for\i:=A,B,C,D,E,F,G,H,I,J,K,L,M,N,O,P,Q,R,S,T,U,V,W,X,Y,Z\do{%
        \ifx\temp\i%
            \def\upcase{1}
        \fi%
    }%
    \let\temp\relax
}
\newcommand{\TestLowerCase}[1]{
    \def\temp{#1}%
    \def\lowcase{0}
    \@for\i:=,a,b,c,d,e,f,g,h,i,j,k,l,m,n,o,p,q,r,s,t,u,v,w,x,y,z\do{%
        \ifx\temp\i%
            \def\lowcase{1}
        \fi%
    }%
    \let\temp\relax
}
\makeatother

\ExplSyntaxOn 

\let\Max\max 
\RenewDocumentCommand{\max}{o}{ 
	\clist_clear:N \l_tmpa_clist 
	\tl_clear:N \l_tmpa_tl 
	\tl_clear:N \l_tmpb_tl 
	\IfValueTF{#1}{ 
		\clist_set:Nn \l_tmpa_clist {#1} 
		\int_set:Nn \l_tmpa_int {\clist_count:N \l_tmpa_clist} 
	}{ 
	}
	
	\int_compare:nNnTF {\l_tmpa_int} = {0}{ 
		\Max 
	}{ 
		\int_compare:nNnTF {\l_tmpa_int} = {1}{ 
			\clist_pop:NN \l_tmpa_clist \l_tmpa_tl 
			\Max({#1}) 
		}{ 
			\int_compare:nNnTF {\l_tmpa_int} = {2}{ 
				\clist_pop:NN \l_tmpa_clist \l_tmpa_tl 
				\clist_pop:NN \l_tmpa_clist \l_tmpb_tl 
				\Max({\l_tmpa_tl} , {\l_tmpb_tl})
			}{ 
				ERRRRRRRREEEEEEEEEEUUUUUUUUUUUUR 
			}
		}
	}
}

\let\Min\min 
\RenewDocumentCommand{\min}{o}{ 
	\clist_clear:N \l_tmpa_clist
	\tl_clear:N \l_tmpa_tl
	\tl_clear:N \l_tmpb_tl
	\IfValueTF{#1}{ 
		\clist_set:Nn \l_tmpa_clist {#1}
		\int_set:Nn \l_tmpa_int {\clist_count:N \l_tmpa_clist}
	}{ 
	}
	
	\int_compare:nNnTF {\l_tmpa_int} = {0}{ 
		\Min
	}{
		\int_compare:nNnTF {\l_tmpa_int} = {1}{ 
			\clist_pop:NN \l_tmpa_clist \l_tmpa_tl
			\Min({#1})
		}{
			\int_compare:nNnTF {\l_tmpa_int} = {2}{ 
				\clist_pop:NN \l_tmpa_clist \l_tmpa_tl
				\clist_pop:NN \l_tmpa_clist \l_tmpb_tl
				\Min({\l_tmpa_tl} , {\l_tmpb_tl})
			}{ 
				ERRRRRRRREEEEEEEEEEUUUUUUUUUUUUR
			}
		}
	}
}

\ExplSyntaxOff 


\ExplSyntaxOn

\newlength{\acs} 
\setlength{\acs}{2pt} 


\clist_new:N \g_arr_clist 
\NewDocumentCommand{\arr}{o m o}{ 
	\tl_clear:N \l_tmpa_tl 
	\tl_clear:N \l_tmpb_tl 
	\clist_clear:N \l_tmpa_clist
	\IfValueTF{#3}{ 
		\clist_set:Nn \l_tmpa_clist {#3} 
	}{
		\clist_set_eq:NN \l_tmpa_clist \g_arr_clist 
	}
	
    \int_set:Nn \l_tmpa_int {\clist_count:N \l_tmpa_clist}
	\int_compare:nNnTF \l_tmpa_int < #2{
		\textbf{ERREUR \, \textbackslash arr : \, longueur \, de \, la \, liste \, < \, nombre \, fourni} 
	}{ 
    	\int_do_until:nNnn {\clist_count:N \l_tmpa_clist} < {#2+1}{
    		\int_step_inline:nnnn 1 1 {#2-1} {
    			\clist_pop:NN \l_tmpa_clist \l_tmpb_tl
        		\tl_put_right:NV \l_tmpa_tl \l_tmpb_tl 
        		\tl_put_right:Nn \l_tmpa_tl {&}
    		}
        	\clist_pop:NN \l_tmpa_clist \l_tmpb_tl
        	\tl_put_right:NV \l_tmpa_tl \l_tmpb_tl 
        	\tl_put_right:Nn \l_tmpa_tl {\\ }
    	}
    	\int_set:Nn \l_tmpa_int {\clist_count:N \l_tmpa_clist}
    	\int_compare:nNnT \l_tmpa_int > 0{
    		\int_step_inline:nnnn 1 1 {\l_tmpa_int - 1}{
    			\clist_pop:NN \l_tmpa_clist \l_tmpb_tl
          		\tl_put_right:NV \l_tmpa_tl \l_tmpb_tl
          		\tl_put_right:Nn \l_tmpa_tl {&}
        	}
        	\clist_pop:NN \l_tmpa_clist \l_tmpb_tl
        	\tl_put_right:Nn \l_tmpa_tl {\multicolumn{\int_eval:n {#2 - \l_tmpa_int + 1}}{c}{\l_tmpb_tl}}
    	}
		\begin{array}[#1]{*{#2}{c}} 
		\tl_use:N \l_tmpa_tl 
		\end{array}
	}
}


\clist_new:N \g_addToOdd_clist
\NewDocumentCommand{\addToOddPositions}{m}{
    \clist_set:NV \l_temp_clist \g_addToOdd_clist 
    \clist_clear:N \g_addToOdd_clist
    \int_step_inline:nn { \clist_count:N \l_temp_clist } {
        \clist_gput_right:Nx \g_addToOdd_clist { \clist_item:Nn \l_temp_clist {##1} } 
        \int_compare:nNnF { \int_mod:nn {##1} {2} } = { 0 } { 
            \clist_gput_right:Nn \g_addToOdd_clist {#1} 
        }
    }
}


\newcommand{\acc}[1]{
	\clist_clear:N \l_tmpa_clist 
	\clist_set:Nn \l_tmpa_clist {#1}
	
	\clist_set:NV \g_addToOdd_clist \l_tmpa_clist
	\addToOddPositions{\si} 
	\clist_set_eq:NN \g_arr_clist \g_addToOdd_clist 
	
	\int_set:Nn \l_tmpa_int {\clist_count:N \l_tmpa_clist}
	\int_set:Nn \l_tmpb_int {\int_div_truncate:nn {\l_tmpa_int}{2}}
	
	\int_compare:nNnTF {\l_tmpa_int} = {2 * \l_tmpb_int}{ 
	}{ 
		\clist_reverse:N \g_arr_clist
		\clist_pop:NN \g_arr_clist \l_tmpa 
		\clist_reverse:N \g_arr_clist
		\clist_put_right:Nn \g_arr_clist {\sinon}
	}
	\left\{\arr{3}\right. 
}


\newcommand{\barr}[1]{
	\clist_clear:N \l_tmpa_clist 
	\clist_set:Nn \l_tmpa_clist {#1}
	
	\clist_set:NV \g_addToOdd_clist \l_tmpa_clist
	\addToOddPositions{\si}
	\clist_set_eq:NN \g_arr_clist \g_addToOdd_clist 
	
	\int_set:Nn \l_tmpa_int {\clist_count:N \l_tmpa_clist}
	\int_set:Nn \l_tmpb_int {\int_div_truncate:nn {\l_tmpa_int}{2}}
	
	\int_compare:nNnTF {\l_tmpa_int} = {2 * \l_tmpb_int}{
	}{
		\clist_reverse:N \g_arr_clist
		\clist_pop:NN \g_arr_clist \l_tmpa 
		\clist_reverse:N \g_arr_clist
		\clist_put_right:Nn \g_arr_clist {\sinon}
	}
	\left|\arr{3}\right.
}

\NewDocumentCommand{\fct}{o o m}{ 
	\IfValueTF{#2}{
		\clist_set:Nn \l_fcto_clist {#2}
		\tl_set:Nn \l_f_tl {#1}
	}{
		\IfValueTF{#1}{
			\clist_set:Nn \l_fcto_clist {#1}
		}{
			\clist_set:Nn \g_addToOdd_clist {#1} 
		}
	}
	\clist_set:Nn \g_addToOdd_clist {#3} 
	\let\temp\arraycolsep 
	\setlength{\arraycolsep}{\acs} 
	\addToOddPositions{\mapsto} 
	\clist_set:NV \l_fct_clist \g_addToOdd_clist 
	\IfValueTF{#1}{
		\tl_clear:N \l_ta_tl
		\tl_clear:N \l_tb_tl
		\clist_pop:NN \l_fcto_clist \l_ta_tl
		\clist_pop:NN \l_fcto_clist \l_tb_tl
		\clist_put_left:Nn \l_fct_clist \l_tb_tl
		\clist_put_left:Nn \l_fct_clist {\rightarrow}
		\clist_put_left:Nn \l_fct_clist \l_ta_tl
		\clist_set:NV \g_arr_clist \l_fct_clist
	}{
	}
	\clist_set:NV \g_arr_clist \l_fct_clist
	\IfValueTF{#2}{
		#1 \colon \arr[t]{3}
	}
	{
		\arr[t]{3}
	}
	\setlength{\arraycolsep}{\temp} 
	\let\temp\relax 
}


\newcommand{\bseqf}[1]{ 
	
	\clist_set:Ne \l_bseqf_clist #1
	\clist_pop:NN \l_bseqf_clist \l_bseqf_sense_tl
	\clist_pop:NN \l_bseqf_clist \l_bseqf_align_tl
	\clist_pop:NN \l_bseqf_clist \l_bseqf_left_tl
	\clist_pop:NN \l_bseqf_clist \l_bseqf_topl_tl
	\clist_pop:NN \l_bseqf_clist \l_bseqf_topr_tl
	\clist_pop:NN \l_bseqf_clist \l_bseqf_bot_tl
	\clist_pop:NN \l_bseqf_clist \l_bseqf_right_tl
	\clist_pop:NN \l_bseqf_clist \l_bseqf_nbline_tl
	
	
	\tl_if_eq:VnT  {\l_bseqf_sense_tl} {d} { \tl_set:Nn \l_bseqf_sense_tl \rootAtTop }
	\tl_if_eq:VnT  {\l_bseqf_sense_tl} {u} { \tl_set:Nn \l_bseqf_sense_tl \rootAtBottom }
	
	
	\tl_if_eq:VnT { \l_bseqf_align_tl } {c}  { \tl_set:Nn \l_bseqf_align_tl {\centerAlignProof} } 
	\tl_if_eq:VnT { \l_bseqf_align_tl } {n}  { \tl_set:Nn \l_bseqf_align_tl {\normalAlignProof} } 
	\tl_if_eq:VnT { \l_bseqf_align_tl } {b}  { \tl_set:Nn \l_bseqf_align_tl {\bottomAlignProof} } 
	
	
	\tl_if_eq:VnT {\l_bseqf_nbline_tl} {o} { \tl_set:Nn \l_bseqf_nbline_tl \noLine }
	\tl_if_eq:VnT {\l_bseqf_nbline_tl} {0} { \tl_set:Nn \l_bseqf_nbline_tl \noLine }
	\tl_if_eq:VnT {\l_bseqf_nbline_tl} {-} { \tl_set:Nn \l_bseqf_nbline_tl \singleLine } 
	\tl_if_eq:VnT {\l_bseqf_nbline_tl} {1} { \tl_set:Nn \l_bseqf_nbline_tl \singleLine }
	\tl_if_eq:VnT {\l_bseqf_nbline_tl} {=} { \tl_set:Nn \l_bseqf_nbline_tl \doubleLine }
	\tl_if_eq:VnT {\l_bseqf_nbline_tl} {2} { \tl_set:Nn \l_bseqf_nbline_tl \doubleLine }
	
	
	\clist_if_in:NnT {\l_bseqf_clist} {/} { \tl_set:Nn \l_bseqf_tline_tl \solidLine }
	\clist_if_in:NnT {\l_bseqf_clist} {.} { \tl_set:Nn \l_bseqf_tline_tl \dottedLine }
	\clist_if_in:NnT {\l_bseqf_clist} {--} { \tl_gset:Nn \l_bseqf_tline_tl \dashedLine }
	
	
	\l_bseqf_sense_tl
	\l_bseqf_align_tl
	\AxiomC{\l_bseqf_topl_tl}
	\AxiomC{\l_bseqf_topr_tl}
	\LeftLabel{\l_bseqf_left_tl}\RightLabel{\l_bseqf_right_tl}
	\l_bseqf_nbline_tl
	\l_bseqf_tline_tl
	\BinaryInfC{\l_bseqf_bot_tl}
}

\NewDocumentCommand{\bseq}{o m}{
	
	\int_zero_new:N \l_bseq_env_int 
	
	\tl_set:Nn \l_bseq_sense_tl {u}
	\tl_set:Nn \l_bseq_align_tl {n} 
	\tl_clear_new:N \l_bseq_left_tl 
	
	\tl_clear_new:N \l_bseq_topl_tl 
	\tl_clear_new:N \l_bseq_topr_tl 
	
	\tl_clear_new:N \l_bseq_right_tl 
	\tl_set:Nn \l_bseq_nbline_tl {1}
	\tl_set:Nn \l_bseq_tline_tl {/}

	
	\clist_clear_new:N \l_bseq_param_clist
	
	\IfValueT{#1}{\clist_set:Nn \l_bseq_param_clist {#1}} 
	\clist_set:Nn \l_bseq_seq_clist {#2}
	
	\clist_if_in:NnT \l_bseq_param_clist {env} {\int_set:Nn \l_bseq_env_int {1}}
	\clist_if_in:NnT \l_bseq_param_clist {c} {\tl_set:Nn \l_bseq_align_tl {c}}
	\clist_if_in:NnT \l_bseq_param_clist {b} {\tl_set:Nn \l_bseq_align_tl {b}}
			
	\clist_if_in:NnT \l_bseq_param_clist {0} {\tl_set:Nn \l_bseq_nbline_tl {0}}
	\clist_if_in:NnT \l_bseq_param_clist {o} {\tl_set:Nn \l_bseq_nbline_tl {o}}
	\clist_if_in:NnT \l_bseq_param_clist {2} {\tl_set:Nn \l_bseq_nbline_tl {2}}
	\clist_if_in:NnT \l_bseq_param_clist {=} {\tl_set:Nn \l_bseq_nbline_tl {=}}
	\clist_if_in:NnT \l_bseq_param_clist {--} {\tl_set:Nn \l_bseq_tline_tl {--}}
	\clist_if_in:NnT \l_bseq_param_clist {.} {\tl_set:Nn \l_bseq_tline_tl {.}}

	
	\clist_clear_new:N \l_bseq_top_clist
	\tl_clear_new:N \l_bseq_bot_tl
	\tl_clear_new:N \l_bseq_left_tl
	\tl_clear_new:N \l_bseq_right_tl
	
	
	\keys_define:nn { sequent } {
		Ax     .clist_set:N = \l_bseq_top_clist,
		Ccl    .tl_set:N = \l_bseq_bot_tl,
		left   .tl_set:N = \l_bseq_left_tl,
		right  .tl_set:N = \l_bseq_right_tl
	}
	
	\exp_args:NnV \keys_set:nn { sequent } \l_bseq_seq_clist
	
	\clist_if_empty:NTF \l_bseq_top_clist {
		\tl_set:Nn \l_bseq_topl_tl {{ }} 
		\tl_set:Nn \l_bseq_topr_tl {{ }} 
	}{
		\clist_pop:NN \l_bseq_top_clist \l_bseq_topl_tl
		\clist_pop:NN \l_bseq_top_clist \l_bseq_topr_tl
	}
	
	\tl_if_empty:NT \l_bseq_right_tl {
		\tl_set:Nn \l_bseq_right_tl {{ }}
	}
	\tl_if_empty:NT \l_bseq_left_tl {
		\tl_set:Nn \l_bseq_left_tl {{ }}
	}
	\tl_if_empty:NT \l_bseq_bot_tl {
		\tl_set:Nn \l_bseq_bot_tl {{ }}
	}
	
	\clist_set:NV \l_bseq_clist {\l_bseq_sense_tl, \l_bseq_align_tl, \l_bseq_left_tl, \l_bseq_topl_tl, \l_bseq_topr_tl, \l_bseq_bot_tl, \l_bseq_right_tl, \l_bseq_nbline_tl, \l_bseq_tline_tl}
	
	
	\int_compare:nNnTF {\l_bseq_env_int} = {0} {
		\bseqf{\l_bseq_clist}
		\DisplayProof{}
	}{
		\begin{prooftree}
		\bseqf{\l_bseq_clist}
		\end{prooftree}
	}
}

%


\NewDocumentCommand{\seq}{o m m}
{
	\IfValueTF{#1}{
		\tl_set:Nn \l_tmpa_tl {\hhline{=}} 
	}{
		\tl_set:Nn \l_tmpa_tl {\hhline{-}} 
	}
	\int_zero_new:N  \l_max_int
	\clist_set:Nn \l_tmpa_clist {#2} 
	\clist_set:Nn \l_tmpb_clist {#3} 
	
	\int_set:Nn \l_max_int {\int_max:nn {\clist_count:N \l_tmpa_clist} {\clist_count:N \l_tmpb_clist}}
	
	\ifmmode 
		\begin{array}{c} 
			
			\begin{array}{*{\l_max_int}{c}}
				
				\clist_use:Nn \l_tmpa_clist {& \qquad} 
				
			\end{array} \\ \l_tmpa_tl 
			
			\begin{array}{*{\l_max_int}{c}}
				
				\clist_use:Nn \l_tmpb_clist {& \qquad} 
				
			\end{array} 
			
		\end{array} 
	
	\else 
		\begin{tabular}{c} 
			
			\begin{tabular}{*{\l_max_int}{c}}
				
				\clist_use:Nn \l_tmpa_clist {& \qquad} 
				
			\end{tabular} \\ \l_\tmpa_tl 
			
			\begin{tabular}{*{\l_max_int}{c}}
				
				\clist_use:Nn \l_tmpb_clist {& \qquad} 
				
			\end{tabular} 
			
		\end{tabular} 
	
	\fi
}

\ExplSyntaxOff

\newlength{\indlength}
\setlength{\indlength}{5mm} 

\newcommand{\ind}[1][1]{\makeatletter\ifx\@classname\zml-article\ifnum#1=2\hspace*{#1\indlength}\fi\else\hspace*{#1\indlength}\fi\makeatother} 





\ExplSyntaxOn

\NewDocumentCommand{\target}{m o}{
	\tl_set:Nn \l_target_tl {#1}
	
	\IfValueTF{#2}{
		\hypertarget{\tl_to_str:n {#1}}{#2}
	}{
		\hypertarget{\tl_to_str:n {#1}}{#1}
	}
}

\NewDocumentCommand{\link}{m o}{
	\tl_set:Nn \l_target_tl {#1}
	
	\IfValueTF{#2}{
		\hyperlink{\tl_to_str:n {#1}}{#2}
	}{
		\hyperlink{\tl_to_str:n {#1}}{#1}
	}
}

\ExplSyntaxOff

\NewDocumentCommand{\deftarget}{m o}{
	\IfValueTF{#2}{
		\target{#1}[\emph{#2}]
	}{
		\target{#1}[\emph{#1}]
	}
}

\NewDocumentCommand{\deflink}{m o}{
	\IfValueTF{#2}{
		\link{#1}[\emph{#2}]
	}{
		\link{#1}[\emph{#1}]
	}
}

\newcommand{\adjusttopage}[1]{\adjustbox{width=\textwidth}{#1}}


\def\inv{^{-1}}




\def\acal{\mathscr{A}} 
\def\bcal{\mathscr{B}} 
\def\ccal{\mathscr{C}} 
\def\dcal{\mathscr{D}} 
\def\ecal{\mathscr{E}} 
\def\fcal{\mathscr{F}}

\def\kcal{\mathscr{K}} 
\def\lcal{\mathscr{L}}

\def\rcal{\mathscr{R}} 
\def\scal{\mathscr{S}}

\def\vcal{\mathscr{V}}

\def\Ical{\mathcal{I}}

\def\Lcal{\mathcal{L}}

\def\Tcal{\mathcal{T}}

\renewcommand{\phi}{\varphi} 
\let\oldepsilon\epsilon
\let\oldvarepsilon\varepsilon
\renewcommand{\epsilon}{\oldvarepsilon}
\renewcommand{\varepsilon}{\oldepsilon}

\NewDocumentCommand{\satisfies}{o}{\IfValueTF{#1}{\models_{#1}}{\models}}

\def\inv{^{-1}}

\def\Ens{\IfLanguageName{english}{\scal\!et}{\ecal\!ns}}

\def\Sup{\scal\!up}

\def\equ{\IfLanguageName{english}{the following statements are equivalent\xspace}{les assertions suivantes sont équivalentes\xspace}}

\def\0{\textbf{0}} 
\def\1{\textbf{1}} 
\def\2{\textbf{2}} 
\def\A{\textbf{A}}

\def\F{\textbf{F}}

\def\I{\textbf{I}}

\def\L{\textbf{L}}

\def\T{\textbf{T}}

\def\W{\textbf{W}}

\newcommand{\ifmmodeTF}[2]{
  \ifmmode
    #1 
  \else
    #2 
  \fi
}

\newcommand{\ifmmodeFT}[2]{
  \ifmmode
    #2 
  \else
    #1 
  \fi
}

\newcommand{\ifmmodeT}[1]{
  \ifmmode
    #1 
  \fi
}

\newcommand{\ifmmodeF}[1]{
  \ifmmode
  \else
    #1 
  \fi
}

\newcommand{\rescale}[2]{\ifmmodeTF{\scalebox{#1}{${#2}$}}{\scalebox{#1}{{#2}}}} 

\NewDocumentCommand{\texte}{m o o o}{\ifmmode\IfValueTF{#2}{\IfLanguageName{english}{\text{ #2 }}{\text{ #1 }}}{\text{ #1 }}\else\IfValueTF{#2}{\IfValueTF{#3}{\IfLanguageName{english}{\IfValueTF{#4}{#4\xspace}{#3\xspace}}{#3\xspace}}{\IfLanguageName{english}{#2\xspace}{#1\xspace}}}{#1\xspace}\fi} 

\def\tq{\texte{tq}[s.t.][tel que][such that]} 
\def\et{\texte{et}[and]} 
\def\ou{\texte{ou}[or]} 
\def\si{\texte{si}[if]} 
\def\sinon{\texte{sinon}[otherwise]} 
\def\ssi{\texte{si, et seulement si,}[if and only if]}

\ExplSyntaxOn 

\seq_new:N \l_local_enum_seq 

\newcommand{\storethestuff}[1]{ 
  \seq_set_from_clist:Nn \l_local_enum_seq {#1} 
}

\newcommand{\dotheenumstuff}{ 
\int_zero:N \l_tmpa_int 
\seq_map_inline:Nn \l_local_enum_seq { 
    \int_incr:N \l_tmpa_int
    \item ##1
  } 
} 

\ExplSyntaxOff

\NewDocumentCommand{\restr}{m o o}{
	\IfValueTF{#3}{
		#1\left|\ovst{\ust{\textcolor{white}{A}}{{#2}}}{{#3}}\right.
		}{
		\IfValueTF{#2}{
			#1\left|\ust{\textcolor{white}{A}}{{#2}}\right.
		}{
		#1\left|\right.
		}
	}
}

\newcommand{\citup}[1]{\textup{\cite{#1}}}


%
%

\NeedsTeXFormat{LaTeX2e}

\RequirePackage{tikz-cd}
\RequirePackage{amssymb}
\usetikzlibrary{calc}
\usetikzlibrary{decorations.pathmorphing}

\tikzset{curve/.style={settings={#1},to path={(\tikztostart)
    .. controls ($(\tikztostart)!\pv{pos}!(\tikztotarget)!\pv{height}!270:(\tikztotarget)$)
    and ($(\tikztostart)!1-\pv{pos}!(\tikztotarget)!\pv{height}!270:(\tikztotarget)$)
    .. (\tikztotarget)\tikztonodes}},
    settings/.code={\tikzset{quiver/.cd,#1}
        \def\pv##1{\pgfkeysvalueof{/tikz/quiver/##1}}},
    quiver/.cd,pos/.initial=0.35,height/.initial=0}

\tikzset{tail reversed/.code={\pgfsetarrowsstart{tikzcd to}}}
\tikzset{2tail/.code={\pgfsetarrowsstart{Implies[reversed]}}}
\tikzset{2tail reversed/.code={\pgfsetarrowsstart{Implies}}}
\tikzset{no body/.style={/tikz/dash pattern=on 0 off 1mm}}

\fi 


\newtheorem{letterthm}{Theorem}

\NewDocumentCommand{\con}{o}{\IfValueTF{#1}{C^0({[0,1]}^{#1} , {[0,1]})}{C^0({[0,1]} , {[0,1]})}}
\NewDocumentCommand{\conc}{o o}{\IfValueTF{#2}{\IfValueTF{#1}{C^0_{{#2},\nearrow}({[0,1]}^{#1} , {[0,1]})}{C^0_{{#2},\nearrow}({[0,1]} , {[0,1]})}}{\IfValueTF{#1}{C^0_\nearrow({[0,1]}^{#1} , {[0,1]})}{C^0_\nearrow({[0,1]} , {[0,1]})}}}
\providecommand{\+}{\dot +}
\let\.\-
\renewcommand{\-}{
	\mathrel{
		\tikz[baseline=(x.base)]{
    			\node[inner sep=0pt] (x) {$-$};
    			\node at ([xshift=0pt,yshift=-1pt]x.north) {\small .};
  		}
  	}
}
\newcommand{\dotp}{\dot +} 
\newcommand{\dotm}{\-} 

\newcommand{\e}[2]{#2} 
\newcommand{\alg}{\acal_\mathcal{L}} 
\providecommand{\USC}{{USC(\lcal)}} 
 
\def\usc{\texte{s.c}[u.s.c][semi-continue][upper semi-continuous]} 
\def\sp{\texte{s.c.}[s.p.][sup-continue][sup-preserving]} 
\NewDocumentCommand{\wb}{o o}{\IfValueTF{#1}{\IfValueTF{#2}{{#1} < {#2}}{{#1}^{<}}}{<}} 
\NewDocumentCommand{\wa}{o o}{\IfValueTF{#1}{\IfValueTF{#2}{{#1} > {#2}}{{#1}^{>}}}{>}} 
\NewDocumentCommand{\nwb}{o o}{\IfValueTF{#1}{\IfValueTF{#2}{{#1} \not < {#2}}{{#1} \not <}}{\not <}} 
\NewDocumentCommand{\nwa}{o o}{\IfValueTF{#1}{\IfValueTF{#2}{{#1} \not > {#2}}{{#1} \not >}}{\not >}}

\NewDocumentCommand{\USCB}{o}{\IfValueTF{#1}{USC(#1)}{USC(\lcal,\bcal)}}

\newcommand{\excl}[1]{\IfValueTF{#1}{!_{\!\mathclap{\color{white}\rule{0.6ex}{0.6ex}}_{\mathclap{#1}}}\,}{!}} 
\newcommand{\que}[1]{\IfValueTF{#1}{?_{\!\!{{{\mathclap{\color{white}\rule{0.5ex}{0.6ex}}}}}_{\mathclap{#1}}}\,}{?}}

\newcommand{\ouv}[1]{{#1}^u} 
\newcommand{\cop}{\text{cop}}

\NewDocumentCommand{\Con}{o}{\IfValueTF{#1}{C^0(#1)}{C^0(\lcal,{[0,1]})}} 
\DocumentCommand{\Conc}{o o}{\IfValueTF{#2}{\IfValueTF{#1}{C^0_{{#2},\nearrow}([0,1]^{#1})}{C^0_{{#2},\nearrow}([0,1])}}{\IfValueTF{#1}{C^0_\nearrow([0,1]^{#1})}{C^0_\nearrow([0,1])}}}
\newcommand{\crcl}{\texte{commutative residuated complete lattice}} 
\newcommand{\crcls}{\texte{commutative residuated complete lattices}} 
\newcommand{\Crcl}{\ccal\!rcl}

\NewDocumentCommand{\Lusc}{o}{\IfValueTF{#1}{USC([0,1]_u^{#1})}{(USC([0,1]_u^n))_{n \in \nn}}} 
\NewDocumentCommand{\Lcinc}{o}{\IfValueTF{#1}{C^0_\nearrow([0,1]^{#1})}{(C^0_\nearrow([0,1]^n))_{n \in \nn}}}

\def\InFm{\I n\F_m} 
\def\InL{\I n\L} 
\RenewDocumentCommand{\bs}{o}{\IfValueTF{#1}{{#1}^	\sim}{\sim}} 
\RenewDocumentCommand{\ng}{o}{\IfValueTF{#1}{{#1}^	\neg}{\neg}} 
\def\InSqt{\Ical n \Sqt} 
\def\GL{\textit{\textbf{GL}}\xspace} 
\def\MGL{\textit{\textbf{MGL}}\xspace} 
\def\InGL{\textit{\textbf{InGL}}\xspace} 
\def\InMGL{\textit{\textbf{InMGL}}\xspace} 
\def\CFLew{\textit{\textbf{CFL}}\sub{\bf{\it{ew}}}\xspace} 
\def\InLJK{\textit{\textbf{InLJK}}\xspace}

\DocumentCommand{\con}{o}{\IfValueTF{#1}{C^0([0,1]^{#1} , [0,1])}{C^0([0,1] , [0,1])}}
\DocumentCommand{\conc}{o o}{\IfValueTF{#2}{\IfValueTF{#1}{C^0_{{#2},\nearrow}([0,1]^{#1})}{C^0_{{#2},\nearrow}([0,1])}}{\IfValueTF{#1}{C^0_\nearrow([0,1]^{#1})}{C^0_\nearrow([0,1])}}}

\DocumentCommand{\Lusc}{o}{\IfValueTF{#1}{USC([0,1]_u^{#1})}{(USC([0,1]_u^n))_{n \in \nn}}} 
\DocumentCommand{\Lcinc}{o}{\IfValueTF{#1}{C^0_\nearrow([0,1]^{#1})}{(C^0_\nearrow([0,1]^n))_{n \in \nn}}}  

\command{\crcl}{\texte{commutative residuated complete lattice}} 
\command{\crcls}{\texte{commutative residuated complete lattices}} 

\command{\<}[1]{#1} 
\command{\e}[2]{#2_#1} 
\command{\alg}{\acal_\mathcal{L}} 
\def\CFLew{\textbf{\textit{CFL}}\sub{\bf{\it{ew}}}\xspace} 

\def\MGL{\textbf{\textit{MGL}}\xspace} 
\def\GL{\textbf{\textit{GL}}\xspace} 
\def\LJK{\textbf{\textit{LJK}}\xspace} 
\def\int{\text{int}}

\def\Marco{{\T_M}} 
\def\LMarco{{\mathcal{L}_M}} 
\def\usc{\texte{s.c}[u.s.c][semi-continue][upper semi-continuous]} 
\def\sp{\texte{s.c.}[s.p.][sup-continue][sup-preserving]} 
\def\lsc{\texte{s.c}[l.s.c][][lower semi-continuous]} 
\newcommand{\Cn}[1]{C^0_\nearrow({#1})}

\usepackage{bussproofs}

\def\fCenter{\vdash}

\EnableBpAbbreviations

\def\InL{\I n\L} 
\RenewDocumentCommand{\bs}{o}{\IfValueTF{#1}{{#1}^\sim}{\sim}} 
\RenewDocumentCommand{\ng}{o}{\IfValueTF{#1}{{#1}^\neg}{\neg}} 
\def\GL{\textit{\textbf{GL}}\xspace} 
\def\MGL{\textit{\textbf{MGL}}\xspace} 
\def\InGL{\textit{\textbf{InGL}}\xspace} 
\def\InMGL{\textit{\textbf{InMGL}}\xspace} 
\def\LJK{\textit{\textbf{LJK}}\xspace} 
\def\InCFLew{\textit{\textbf{InCFL}}\sub{\textit{\textbf{ew}}}\xspace} 
\def\inv{\text{inv}} 

\def\Sqt{\scal\! qt}

\def\InSqt{\Ical n \Sqt} 
\def\InFm{\I n\F_m} 
\command{\USC}{USC(\lcal)}
\DocumentCommand{\these}{o}{\IfValueTF{#1}{\texte{N$_#1$}}{\texte{N}}}

\def\InFm{\I n\F_m} 
\def\InL{\I n\L} 
\RenewDocumentCommand{\bs}{o}{\IfValueTF{#1}{{#1}^	\sim}{\sim}} 
\RenewDocumentCommand{\ng}{o}{\IfValueTF{#1}{{#1}^	\neg}{\neg}} 
\DocumentCommand{\Con}{o}{\IfValueTF{#1}{C^0([0,1]^{#1})}{(C^0[0,1]^n))_{n \in \nn}}}  
\def\InSqt{\Ical n \Sqt} 
\def\GL{\textit{\textbf{GL}}\xspace} 
\def\MGL{\textit{\textbf{MGL}}\xspace} 
\def\InGL{\textit{\textbf{InGL}}\xspace} 
\def\InMGL{\textit{\textbf{InMGL}}\xspace} 
\def\LJK{\textit{\textbf{LJK}}\xspace} 
\def\InLJK{\textit{\textbf{InLJK}}\xspace} 
\def\inv{\text{inv}} 
\def\int{\text{int}} 
\def\class{\text{class}} 

\providecommand{\Cn}[1]{C^0_\nearrow({#1})} 
\RenewDocumentCommand{\bs}{o}{\IfValueTF{#1}{{#1}^\sim}{\sim}} 
\RenewDocumentCommand{\ng}{o}{\IfValueTF{#1}{{#1}^\neg}{\neg}}

\def\s+{{\Sigma_+}}

\def\Sqt{\scal\! qt}

\DocumentCommand{\these}{o}{\IfValueTF{#1}{\texte{N$_#1$}}{\texte{N}}}

\DocumentCommand{\Ralg}{o}{\IfValueTF{#1}{\W_#1}{\rcal alg}}

 \def\sequences{structures\xspace}
\def\InFm{\I n\F_m} 
\def\InL{\I n\L} 
\DocumentCommand{\bs}{o}{\IfValueTF{#1}{{#1}^	\sim}{\sim}} 
\DocumentCommand{\ng}{o}{\IfValueTF{#1}{{#1}^	\neg}{\neg}} 
\def\InSqt{\Ical n \Sqt} 
\def\GL{\textit{\textbf{GL}}\xspace} 
\def\InGL{\textit{\textbf{InGL}}\xspace} 
\def\MGL{\textit{\textbf{MGL}}\xspace} 
\def\InMGL{\textit{\textbf{InMGL}}\xspace} 
\def\cut{\text{cut}} 

\usepackage{stackengine}

\command{\dotlozenge}{\stackinset{c}{}{c}{}{\scalebox{.5}{$\bullet$}}{$\lozenge$}} 
\command{\dotblacklozenge}{\stackinset{c}{}{c}{}{\textcolor{white}{\scalebox{.5}{$\bullet$}}}{$\blacklozenge$}} 
\command{\dotsquare}{\stackinset{c}{}{c}{}{\scalebox{.5}{$\bullet$}}{$\square$}} 
\newcommand{\dotblacksquare}{\stackinset{c}{}{c}{}{\textcolor{white}{\scalebox{.5}{$\bullet$}}}{$\blacksquare$}}

\title{Cut-free Deductive System for Continuous Intuitionistic Logic} 

\author[1]{Guillaume Raymond Geoffroy} 

\thanks{
I warmly thank Yoann Dabrowski and Itaï Ben Yaacov for their wise advice and the avenues of reflexion they provided me. I particularly thank Yoann Dabrowski for his proofreading. I also thank the Istituto Grothendieck for hosting me at their expense during the Toposes in Mondovì conference, during which I was working on this article.	
}

\setlength{\parindent}{0pt} 
\setlength{\textheight}{595pt} 


\begin{document} 

\begin{titlepage} 

\vspace*{\fill} 

\begin{abstract} 

We introduce and develop propositional \emph{continuous intuitionistic logic} and propositional \emph{continuous affine logic} through the study of two classes of algebras, and provide sequent-style deductive systems with cut-admissibility for these logics. Our approach centres on AC-algebras, which are algebras $\USC$ of sup-preserving functions from $[0,1]$ to an integral commutative residuated complete lattice $\lcal$ (in the intuitionistic case, $\lcal$ is a locale). We give an algebraic axiomatisation of AC-algebras in the language of continuous logic and prove, using the Macneille completion, that every Archimedean model embeds into some AC-algebra. We also show that (i) $\USC$ satisfies $v \+ v = 2v$ exactly when $\lcal$ is a locale, (ii) involutiveness of negation in $\USC$ corresponds to that in $\lcal$, and that (iii) adding those conditions recovers classical continuous logic. For each variant—--affine, intuitionistic, involutive, classical—--we provide a sequent style deductive system and prove completeness and cut admissibility. This yields the first sequent style formulation of classical continuous logic enjoying cut admissibility. 

\end{abstract} 

\maketitle 

\vspace*{\fill} 

\end{titlepage} 

\tableofcontents

\section{Introduction} 

\begin{groupe} 

\ind On the one hand, Continuous Logic is a very prolific area of mathematics first introduced in \cite{benyaacovContinuousFirstOrder2010} and its model theoretic framework was then developed in \cite{benyaacovModelTheoryMetric2008}, \cite{farahModelTheoryOperator2013}, \cite{farahModelTheoryOperator2014} and \cite{farahModelTheoryOperator2014a} (see also \cite{hartIntroductionContinuousModel2023} for an introduction). By reinterpreting equality as distance and quantifiers as suprema and infima, it extends classical model-theoretic methods to encompass classes of complete metric structures, domains out of reach of classical first-order logic due to their lack of finitary axiomatisation. While model theory has historically focused on algebraic structures and their first-order theories, continuous first-order logic offers the necessary expressive tools to handle infinitary properties. For instance, it enables to study Hilbert spaces and probability algebras with classical tools of logic, interpreting independence as orthogonality or probabilistic independence depending on context. Contrary to previous attempts \cite{changContinuousModelTheory1966} and \cite{hajekMetamathematicsFuzzyLogic1998}, the current framework of \cite{benyaacovModelTheoryMetric2008}, that we follow, is more closely aligned with syntax and reasoning of classical logic, permitting broader access to foundational results such as compactness, Löwenheim-Skolem theorems, and omitting types theorems. In \cite{benyaacovProofCompletenessContinuous2010} is proven a completeness theorem for a Hilbert-style deductive system in Continuous Logic, which is, to the best of our knowledge, the sole try for a proof theory for Continuous Logic. 

On the other hand, Intuitionistic Logic was first developed by Brouwer as a logical basis for constructivism \cite{brouwerIntuitionismFormalism1913} as opposed to the formalism of Hilbert. It is the logic obtained from classical logic by removing the principle of excluded middle, or equivalently the rule of \textit{reductio ad absurdum}. However, it has, by now, found very important applications in computer science and proof assistants through the Curry-Howard correspondence \cite{howardFormulaeastypesNotionConstruction1980}. The main mathematical use of Intuitionistic Logic may be its application to the study of internal objects of toposes (whether elementary or Grothendieck) through Kripke-Joyal semantics \cite{maclaneSheavesGeometryLogic1994}. A sequent calculus for it is well-known \cite{gentzenUntersuchungenUeberLogische1935} and \cite{gentzenUntersuchungenUeberLogische1935a} (cf. \cite{gentzenInvestigationsLogicalDeduction1964} and \cite{gentzenInvestigationsLogicalDeduction1965} for english translations) and the completeness of the class of Heyting algebras and the class of Kripke propositional models are well established (for Heyting algebras, the original papers are \cite{heytingFormalenRegelnIntuitionistischen1930}, and a proof can be found in english in \cite{esakiaHeytingAlgebrasDuality2019} and \cite{fittingIntuitionisticLogicModel1969}; for completeness of Kripke semantics, see \cite{fittingIntuitionisticLogicModel1969}). This article aims at defining the propositional theory of continuous intuitionistic logic. 

A previous development of intuitionistic continuous logic had been set up by Jérémie Marquès in \cite{marquesCategoricalLogicPerspective2023} under the name \emph{Fuzzy Intuitionistic Logic} relying on a previous work of Marco Abbadini on a positive version of continuous logic \cite{abbadiniDualCompactOrdered2019}. In this paper, Abbadini started from compact ordered topological spaces, which were introduced by L. Nachbin in \cite{nachbinTopologyOrder1965}. They are to topology and partial order what compact Hausdorff spaces are to topology. In \cite{abbadiniDualCompactOrdered2019}, the author proved that the category of compact ordered spaces is dual to a category of algebras he called MC-algebras. To prove that MC-algebras form a variety, he gave a sophisticated axiomatisation of them. In \cite{marquesCategoricalLogicPerspective2023}, Jérémie Marquès introduced the notion of intuitionistic compact ordered spaces and showed that the duality in \cite{abbadiniDualCompactOrdered2019} restricts to a duality between MC-algebras with a residuation and intuitionistic compact ordered spaces. We will show that these algebras are the metrically complete algebras for our alternative approach to continuous intuitionistic logic. 

In this paper, we aim at providing sequent style deductive systems for various kind of continuous logics and prove cut admissibility for these systems. In order to prove a cut admissibility theorem for the logics presented in this paper, we will rely on Algebraic Proof Theory.  Algebraic Proof Theory was first theorised in \cite{ciabattoniAlgebraicProofTheory2012} and finds its roots in \cite{galatosCutEliminationStrong2010} and \cite{okadaPhaseSemanticCutelimination1999}. Algebraic Proof Theory is a research program aimed at systematically interrelating proof-theoretic and algebraic methods, particularly in the study of substructural logics—understood as extensions of the full Lambek calculus, typically characterized by the absence of structural rules like exchange, weakening, and contraction. It builds on the discovery that the admissibility of the cut rule and subformula property correspond closely to algebraic properties of their semantic counterparts, which are subvarieties of FL-algebras also known as residuated lattices with an additional constant 0 \cite{galatos2007residuated} \footnote{On contrary to \cite{galatos2007residuated}, in this paper, the neutral element of a residuated lattice will always be the top one.}. A key focus is on the transformation of some axioms into analytic structural rules for sequent calculi. It shows a strong link between cut admissibility and stability under Macneille completion. \cite{galatosResiduatedFramesApplications2012} also deals with involutive logics, thus enabling us to obtain a sequent calculus style system for classical continuous logic having the cut admissibility property. 

\end{groupe}

\begin{groupe} 

\ind Let us give the motivations of our work. The first objective of this article is to lay the ground work for an analysis of metric structures internal to Grothendieck toposes well handled by continuous logic when the topos is $\Ens$. Contrary to topological spaces, locales are internalisable into toposes. There are several objects of real numbers internal to a topos (\cite{johnstoneSketchesElephantTopos2002}, section D 4.7). However, the good notion of norm for internal $C^*$-algebras, Banach spaces and metric spaces in general (\cite{hofmannRepresentationsAlgebrasContinuous1972}, \cite{hofmannSheafTheoreticalConcepts1979}, \cite{reichmanSemicontinuousRealNumbers1983}, \cite{johnstoneSketchesElephantTopos2002}) is valued into the so-called semi-continuous real numbers. For a topological space $X$, this object of real numbers is the sheaf of upper semi-continuous functions into $\rr$, which externalisation is $USC(X,\rr)$. It motivates the study of $USC(X,\rr)$ and thus of $USC(X,[0,1])$ that we’ll simply denote by $USC(X)$ and generalisations of this notion. For a locale $\lcal$, the object that we’ll call $\USC$ is the natural expansion of $\lcal$ by $[0,1]$. Indeed, as element of the category $\Sup$ of complete lattices and sup-preserving functions, $\lcal$ is isomorphic to $\Sup(\{0,1\}, \lcal)$ which means it can be seen as expanded by $\{0,1\}$. $\USC = \Sup([0,1], \lcal)$ is then the natural expansion of $\lcal$ above $[0,1]$. The class of $\USC$ for $\lcal$ a locale generalises the one of $USC(X)$ for $X$ a topological space as $USC(X)$ is isomorphic to $USC(\Tcal)$, where $\Tcal$ is the topology of $X$. 

	In order to study these structures from a logical point of view, we will give an axiomatisation of the class $\USC$. The language we retained is the one of continuous logic \cite{benyaacovProofCompletenessContinuous2010}. For cut admissibility purposes only, we add some unary symbols that are definable in the language of continuous logic. This approach differs from the one of \cite{abbadiniDualCompactOrdered2019} since, in order to study a positive version of continuous logic, he uses a symbol for the truncated addition one has in $[0,1]$ and a symbol for fusion. We only have a symbol for addition, fusion being definable using subtraction by a constant, thus giving a more natural way to think about our logic from an intuitionist point of view. In \cite{abbadiniDualCompactOrdered2019}, the author studies MC-algebras and we show that every MC-algebra can be embedded into a $\USC$, for $\lcal$ a locale. Hence, having interpreted its language in ours, our theory is then a conservative extension of the one of \cite{abbadiniDualCompactOrdered2019}. Together with the cut admissibility property of the sequent calculus style system we give here, we can thus claim having found a sequent calculus style system for MC-algebras. 
	
	Let us provide further insight into the main results. In order to embrace logics for constructive mathematics besides intuitionistic and classic continuous logic and tackle logics of which the negation is involutive, we start the work in a generalised framework. In this setting, locales are replaced by residuated commutative complete lattices, which are supposed integral, also known as normal commutative quantales. The unary operations that were only introduced for cut admissibility purposes become here unavoidable, because multiplication by $2$ may not be obtained as a sum. It turns out that, for $\lcal$ a residuated commutative complete lattice, the algebra $\USC$ satisfies $v \+ v = 2v$ if and only if $\lcal$ is a locale (\bf{Corollary} \ref{2+}) and the negation of $\USC$ is involutive if and only if so is the one of $\lcal$ (\bf{Theorem} \ref{involutive correspondence}). From here on, we derive the equivalence between our theory to which are added the assumptions $v \+ v = 2v$ and of involutiveness of the negation and the theory of classical continuous logic. The involutive case, the intuitionistic case and the classical case are dealt with on their own. 
	
In the general framework, as well as for each of the aforementioned particular cases, we provide a sequent calculus-style system, and we prove a cut admissibility theorem. We emphasize here that we obtained the first sequent calculus-style system for classical continuous logic that enjoys a cut admissibility theorem. In the intuitionistic case, our logic, from a proof theoretic point of view, lacks weakening but has distributivity. Contrary to the logics for which Bunched (hyper)sequent calculus is suited, where distributivity is required (\cite{dunnGentzenSystemPositive1973}, \cite{mintsCuteliminationTheoremRelevant1976}, \cite{paoliSubstructuralLogicsPrimer2002}, \cite{ciabattoniBunchedHypersequentCalculi}), in our case, distributivity is a consequence of the other rules. We emphasize that our approach does not rely on bunched calculus, and the system presented here follows a sequent style in the sense that the only binary structure symbol is " , ". To achieve a proof of cut admissibility for all the logics discussed in this article, we introduce two systems in the section \ref{Annexes}: $\MGL$ for the non involutive case and $\InMGL$ for the involutive one. We then introduce a new system for each logic studied here and prove a completeness theorem and a cut admissibility theorem. For Intuitionistic Continuous Logic, we call the system \textbf{\textit{LJK}} and we prove the following two theorems 

\begin{letterthm}[Completeness theorem] \label{Completeness theorem int} 

The classes $IC$ and $MC$ are both sound and complete for $\textbf{\textit{LJK}}$. 

\end{letterthm}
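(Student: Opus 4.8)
The plan is to treat soundness and completeness separately and to exploit the inclusion $MC \subseteq IC$: every metrically complete model is in particular a model of intuitionistic continuous logic. This inclusion lets me reduce the four assertions to two. For soundness it suffices to prove it over the larger class $IC$, since $\vdash_{\LJK} s$ then forces $s$ to hold in every $IC$-algebra and a fortiori in every $MC$-algebra. Dually, for completeness it suffices to prove it over the smaller class $MC$, because a sequent valid in all of $IC$ is a fortiori valid in all of $MC$, so completeness over $MC$ entails completeness over $IC$.

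Soundness I would establish by induction on $\LJK$-derivations, checking that each inference rule is validated by a corresponding (quasi-)inequality in every $IC$-algebra. The axioms and logical rules translate into the defining inequalities of integral commutative residuated complete lattices together with the extra unary operations, while the structural rules specific to the intuitionistic specialisation correspond to the locale condition (Corollary \ref{2+}, i.e.\ $v \+ v = 2v$) and distributivity; each of these is a property of the interpretation in any $IC$-algebra, so provable sequents are valid. As noted, soundness over $MC$ is then immediate.

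For completeness I would use the algebraic proof theory machinery. First build the Lindenbaum--Tarski algebra $\mathbf{A}$ of $\LJK$, whose carrier is the set of formulas quotiented by interderivability and whose order is provable entailment; using cut admissibility for $\LJK$ one checks that $\mathbf{A}$ is an $IC$-algebra and that $a \le b$ in $\mathbf{A}$ holds exactly when the corresponding sequent is $\LJK$-provable. Thus an unprovable sequent is witnessed by a strict failure $a \nleq b$ in $\mathbf{A}$. Since $\mathbf{A}$ need not be complete, I would then invoke the embedding result stated earlier: every Archimedean model embeds, via the Macneille completion, into some AC-algebra $\USC = \Sup([0,1],\lcal)$ with $\lcal$ a locale. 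As this embedding is an order-embedding preserving the operations occurring in the sequent, the witness $a \nleq b$ survives, so the sequent fails in a genuine $MC$-algebra. This gives completeness over $MC$, hence over $IC$.

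The main obstacle is the passage from $\mathbf{A}$ to a member of $MC$: one must guarantee that the defining features of the intuitionistic specialisation are preserved under the Macneille completion, so that the completed algebra is again an $IC$-algebra and in fact an AC-algebra. Concretely, the $v \+ v = 2v$ / locale condition and distributivity must be shown stable under completion --- precisely the point at which algebraic proof theory intervenes, by realising these axioms as analytic structural rules of $\LJK$ (equivalently, of the appropriate level of the substructural hierarchy) whose presence forces preservation. Verifying that $\mathbf{A}$ satisfies the Archimedean hypothesis needed to apply the embedding theorem, and that the embedding preserves exactly the unary operations appearing in sequents, are the remaining technical points.
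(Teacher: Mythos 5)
Your reduction between the two classes rests on the claimed inclusion $MC \subseteq IC$, and this is false: the relationship runs the other way. \textbf{Theorem} \ref{gros thm} shows that every IC-algebra $\USC$ \emph{is} an MC-algebra (after interpreting $\odot$ as in \textbf{Definition} \ref{def odot}), whereas a Cauchy-complete Archimedean intuitionistic MC-algebra is of the form $\Cn{X}$, which is not lattice-complete and hence never of the form $\USC$; it only \emph{embeds} into some $USC(Y)$ (\textbf{Theorem} \ref{gros thm}, second part). With the true inclusion $IC \subseteq MC$, your two transfers collapse: soundness proved over $IC$ does not propagate to $MC$, and completeness proved over $MC$ does not propagate to $IC$. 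The completeness half of your argument is salvageable by accident, because your countermodel is itself an algebra $\USC$ with $\lcal$ a locale, which lies in $IC$ and (by \textbf{Theorem} \ref{gros thm}) in $MC$, so a single witness refutes validity over both classes --- but establishing that it lies in $MC$ requires precisely \textbf{Theorem} \ref{gros thm}, not the inclusion you invoked. Soundness over $MC$, by contrast, is genuinely missing from your proposal: the paper obtains it from the equivalence of theories in \textbf{Theorem} \ref{complétude des MC-algèbres} (via the embedding $\Cn{X} \hookrightarrow USC(X_u)$ of \textbf{Theorem} \ref{gros thm}), and note also that the infinitary rule [\ref{archimedean}] is only sound on the Archimedean members of the class, a point your rule-by-rule induction would have to address.

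The second gap is in your completeness mechanism. $\LJK$ is a cut-free system, and without cut the provability relation on formulas is not transitive, so the Lindenbaum--Tarski order and its congruence are not even well defined; your fix --- ``using cut admissibility for $\LJK$'' --- imports \textbf{Theorem} \ref{Cut Admissibility theorem int}, which the paper derives from the very machinery that also yields \textbf{Theorem} \ref{Completeness theorem int}, so as a self-contained argument yours either proves completeness only for $\LJK$ plus cut or is circular. The paper never constructs a Lindenbaum algebra: it applies the residuated-frame construction of the \textbf{Annexes}, in which the algebra $\Sqt_{R,+}$ of $j$-closed subsets of the sequent space --- a MacNeille-style completion of the frame defined from \emph{cut-free} derivability, not of any Lindenbaum algebra --- is shown to model the translated rules; since all rules added to form $\LJK$ are analytic, \textbf{Theorems} \ref{Completeness theorem annexes} and \ref{Cut Admissibility annexes} deliver cut-free completeness and cut admissibility simultaneously, and \textbf{Corollary} \ref{complétude des IC-algèbres} together with \textbf{Theorem} \ref{complétude des MC-algèbres} converts the algebraic models (models of $\T_\int$ with the Archimedean property, forced by rule [\ref{archimedean}]) into the classes $IC$ and $MC$. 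Your remaining technical worries are real but are handled on the algebraic side, independently of the sequent calculus: stability of $2v = v \+ v$ under Macneille completion follows from \textbf{Theorem} \ref{Equivalent theories} and \textbf{Lemma} \ref{lemme Macneille}, and the survival of the witness $a \nleq b$ under the quotient by $\simeq$ is exactly where Archimedeanity of the syntactic model, coming from the infinitary rule, is used.
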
 

\begin{letterthm}[Cut Admissibility theorem] \label{Cut Admissibility theorem int} 

In the system $\textbf{\textit{LJK}}$, for all formulas $a_1, \ldots, a_n \et b$ and $\{, \, , \circ_2, \bullet_2, \circ_\alpha, \epsilon\}$-term $G$ such that there exists a deduction of $G(a_1, \ldots, a_n) \vdash b$ using the cut rule, there exists a deduction of $G(a_1, \ldots, a_n) \vdash b$ not using the cut rule. 

\end{letterthm}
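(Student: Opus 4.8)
The plan is to prove cut admissibility \emph{semantically}, via the residuated/Gentzen-frame method of Algebraic Proof Theory \cite{galatosResiduatedFramesApplications2012, ciabattoniAlgebraicProofTheory2012}, rather than by a syntactic induction on proofs. The guiding principle, already stressed in the introduction, is that cut admissibility is tied to stability under MacNeille completion: since $\LJK$ is obtained from $\InMGL$ by adding only \emph{analytic} structural rules (rules coming from axioms in the class $\mathcal{N}_2$ of the substructural hierarchy), the whole argument reduces to extracting, from the cut-free fragment alone, a genuine model of the class $IC$ into which the syntax quasi-embeds. Soundness of $\LJK$ for $IC$ (one half of Theorem~\ref{Completeness theorem int}) then does the rest.

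First I would build a Gentzen frame $\mathbf{W}_{\LJK}$ out of the cut-free calculus. Its left component $W$ is the set of \emph{structures}, i.e. the $\{\,,\,,\circ_2,\bullet_2,\circ_\alpha,\epsilon\}$-terms over formulas, carrying the corresponding partial operations; its right component consists of the formulas occurring as right-hand sides (with their surrounding contexts); and the relation $N$ is cut-free derivability, so that $G \mathrel{N} b$ holds exactly when $G \vdash b$ admits a cut-free $\LJK$-proof. Each logical and structural rule of $\LJK$ then records precisely a compatibility of $N$ with one of these operations, which is what makes the triple a Gentzen frame.

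Next I would pass to the Galois algebra $\mathbf{W}_{\LJK}^+$ of $N$-closed subsets of $W$. This is automatically a complete residuated lattice, and the payoff of having started from a cut-free (hence analytic) system is that the closure operator interacts well with the connectives, so $\mathbf{W}_{\LJK}^+$ carries well-defined interpretations of every operation of the language — fusion, the truncated addition $\+$, multiplication by $2$, the real scalings $\circ_\alpha$, subtraction-by-a-constant, the constant $\epsilon$ — and, crucially, \emph{satisfies every axiom of} $IC$. This is the heart of the matter and the step I expect to be the main obstacle: one must verify, axiom by axiom, that the defining laws of $IC$ lie in the analytic fragment $\mathcal{N}_2$, so that they survive the passage to the Galois completion. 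The standard $FL$-algebra machinery covers fusion and the residuals, but the continuous-logic-specific operations — the truncated $\+$ and its interplay with subtraction, the family of scalings $\circ_\alpha$ for real $\alpha$, and the doubling operators $\circ_2,\bullet_2$ — are exactly where that machinery has to be extended and where the analyticity check is delicate.

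Finally I would prove the quasi-embedding (``truth'') lemma: sending a formula $a$ to the $N$-closed set $\{\,G : G \vdash a \text{ cut-free}\,\}$, and a structure to the corresponding combination in $\mathbf{W}_{\LJK}^+$, gives a valuation under which membership is governed by cut-free derivability; concretely $b$ belongs to the interpretation of $G(a_1,\dots,a_n)$ if and only if $G(a_1,\dots,a_n) \vdash b$ is cut-free derivable. With this in hand the theorem follows immediately: if $G(a_1,\dots,a_n)\vdash b$ is derivable \emph{with} cut, then by soundness (Theorem~\ref{Completeness theorem int}) it is valid in the $IC$-model $\mathbf{W}_{\LJK}^+$; evaluating it under the above valuation and reading off the quasi-embedding lemma forces $G(a_1,\dots,a_n)\vdash b$ to be cut-free derivable, which is exactly the claim.
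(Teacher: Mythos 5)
Your proposal is correct and is essentially the paper's own argument: the paper obtains Theorem \ref{Cut Admissibility theorem int} by noting that \LJK{} is \MGL$(\circ_2,\bullet_2,\circ_\alpha)$ extended by analytic structural rules only, and then invoking the residuated-frame machinery of the Annexes (Theorems \ref{Completeness theorem annexes} and \ref{Cut Admissibility annexes}) — precisely your construction of the Galois algebra of closed sets of the cut-free derivability relation, the verification that analytic rules survive the closure, and the quasi-embedding lemma (Lemma \ref{f is nice}), followed by soundness-with-cut. The one cosmetic difference is that the paper runs soundness with respect to all algebraic models of the rules rather than the class $IC$, so it never needs your stronger claim that the Galois algebra itself lies in $IC$; that claim is nonetheless justifiable, since the algebra is complete and, via the infinitary rule, Archimedean, so the representation results (Corollary \ref{int big corollary}, with the contraction-type rule forcing the underlying lattice to be a locale) place it in $IC$.
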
 

\end{groupe} 

\begin{groupe} 

\vspace{\baselineskip} 

\ind We now sketch a plan for our article. In section \ref{section definition USC}, we will define the set of upper semi-continuous functions ($\USC$) from a commutative residuated lattice $\lcal$ to $[0,1]$ and give it inherited structures from $[0,1]$ and $\lcal$. We will call them AC-algebras (\bf{Definition} \ref{def AC-algèbres}). They constitute an algebraic semantics for an affine continuous logic which is a continuous version of affine logic (\cite{shulmanAffineLogicConstructive2022}), also known as $\text{FL}_{\text{ew}}$, or Intuitionistic Multiplicative Additive Linear Logic (IMALL) with weakening. We will study how properties are transferred from $\lcal$ to  $\USC$ in subsection \ref{subsection crcl} and how properties are transferred from $[0,1]$ to $\USC$ in subsection \ref{subsection structure de [0,1]}. Then in section \ref{section algebraic axiomatisation}, we will give an algebraic axiomatisation $\T$ of these algebras and prove that \begin{letterthm} \label{big theorem} 

For all model $A$ of $\T$, there exists a commutative residuated complete lattice $\lcal$ such that the quotient of the Macneille completion of $A$ by the equivalence relation induced by the preorder $\preceq$ is isomorphic to $\USC$. 

For all model $A$ of $\T$, there exists a commutative residuated complete lattice $\lcal$ such that the quotient of $A$ by $\simeq$ embeds into $\USC$. 

\end{letterthm}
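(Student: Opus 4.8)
The plan is to prove the first (completed) assertion and then read off the second as a corollary. Indeed, the MacNeille completion supplies a dense embedding $A \hookrightarrow \widehat{A}$ which preserves and reflects the preorder $\preceq$; hence it descends to an injection $A/\!\simeq \;\hookrightarrow\; \widehat{A}/\!\simeq$, and composing with the isomorphism $\widehat{A}/\!\simeq \;\cong\; \USC$ of the first part yields the embedding of the second part. So I would focus on the first assertion. The first step there is to verify that $\widehat{A}$ is again a model of $\T$: this is the point where the algebraic proof theory philosophy is used, namely that the axioms of $\T$ have been chosen of a shape (acyclic/analytic, essentially $\bigvee$-preserving (in)equations) that survives MacNeille completion, and that the auxiliary unary symbols — the dilations $\circ_\alpha$ and the doubling pair $\circ_2,\bullet_2$ — extend to the completion and keep satisfying their defining relations.

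The second step is to extract the lattice $\lcal$ from $\widehat{A}$. Guided by the identification $\lcal \cong \Sup(\{0,1\},\lcal)$ sitting inside $\USC=\Sup([0,1],\lcal)$ as the ``crisp'' functions, I would take $\lcal$ to be the set of \emph{sharp} elements of $\widehat{A}$, i.e.\ those fixed by every dilation $\circ_\alpha$ (equivalently the idempotents for truncated addition), equipped with the restricted order and the restricted monoidal/residuated operations. Using the transfer lemmas of subsections~\ref{subsection crcl} and~\ref{subsection structure de [0,1]}, one checks that $\lcal$ is an integral commutative residuated complete lattice; by Corollary~\ref{2+} and Theorem~\ref{involutive correspondence}, $\lcal$ is moreover a locale precisely when the extra axiom $v \+ v = 2v$ holds and its negation is involutive exactly in the involutive case, which is how the intuitionistic and classical specializations will later be detected.

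The third step is the representation map itself. I would define $\Phi\colon \widehat{A}\to \USC$ by sending $a$ to the family of its \emph{level cuts}, $t\mapsto c_t(a)\in\lcal$, where $c_t(a)$ is the sharp element obtained by truncating $a$ below level $t$ (built definably from $\circ_\alpha$ and truncated subtraction) and passing to its crisp support. One verifies that $t\mapsto c_t(a)$ is monotone and preserves directed suprema in $t$, so it is a genuine sup-preserving map, i.e.\ an element of $\Sup([0,1],\lcal)=\USC$. That $\Phi$ is a homomorphism follows pointwise from the same transfer lemmas. Its kernel is exactly $\simeq$: two elements have the same cut-family iff they have the same sublevel sets in $\lcal$, which is precisely the relation $\simeq$ induced by $\preceq$ — this is also where Archimedeanness enters, since $\USC$ is genuinely $[0,1]$-valued and so carries no nonzero infinitesimals, forcing elements agreeing up to $\preceq$ to collapse. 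Finally, for surjectivity, given $f\in\USC$ one reconstructs a preimage $a=\bigvee_{t}\bigl(\text{the sharp }f(t)\text{ placed at height }t\bigr)$ using the completeness of $\widehat{A}$, and checks $c_t(a)=f(t)$ for all $t$, so $\Phi(a)=f$.

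The main obstacle is twofold. First, the stability of $\T$ under MacNeille completion: one must confirm not only that the order-theoretic axioms lift, but that the scalar and doubling operations extend continuously to the completion and still obey their equational constraints, which is the technical heart connecting the syntactic shape of the axioms to their semantic stability. Second, the surjectivity of $\Phi$: realizing an arbitrary sup-preserving $f\colon[0,1]\to\lcal$ as the cut-family of a single element requires that the reconstructed join both exist (by completeness) and have cuts $c_t$ equal to $f(t)$ at \emph{every} level, with no mass leaking between levels. This compatibility between the $[0,1]$-scaling structure and the suprema computed in $\lcal$ is the crux of the argument and is exactly what the full strength of the axioms $\T$ is designed to guarantee.
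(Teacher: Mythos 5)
Your proposal follows essentially the same route as the paper: complete $A$ by MacNeille completion (checking that the operations with adjoints extend and that the remaining axioms, being inequalities whose right-hand sides preserve lower bounds, survive, cf. Lemma \ref{lemme Macneille}), carve out $\lcal$ as the $\simeq$-classes of the elements fixed by doubling (the paper's $\lcal(A) = \{e \in A \,|\, 2e \simeq e\}$), represent each element by its level cuts through the crisp-support operator (the paper's $l(a) = \bigwedge_{n} \alpha^n(a)$, organized via the auxiliary theories $\T_0$ and $\T_1$), and finally deduce the embedding statement by restricting the isomorphism along $A \hookrightarrow \bar A$, exactly as in the paper's concluding argument. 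One small correction: the preimage of $f \in \USC$ must be a \emph{meet} of upper staircase approximations, $k(f) = \bigwedge_{d \in \dcal} \ul d \+ \pi^\ast(f(d))$ (compare Proposition \ref{approx}), not the join you write — a join of sharp elements placed at heights would reconstruct the lower regularization rather than $f$, and your verification that $c_t(a) = f(t)$ at every level would fail as stated.
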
 

To that end, in subsection \ref{subsection Complete Archimedean models}, we will introduce two auxiliary theories and show that every complete Archimedean model of these theories is isomorphic to some AC-algebra. In subsection \ref{subsection Models of T}, we will prove \bf{Theorem} \ref{big theorem} by showing that all complete Archimedean model of $\T$ is a model of the auxilliary theories. In section \ref{section general Cut Admissibility}, we will then give a sequent-style cut-free deductive system for AC-algebras, prove the class of all AC-algebras is complete for this system and that this system has the cut admissibility property. 

In section \ref{section Intuitionistic continuous logic}, we study the class of IC-algebras, that-is-to-say AC-algebras for which the underlying commutative residuated complete lattice a locale. The first subsection (subsection \ref{subsection Topological Preliminaries}), is independent from the rest of the article and can be read on its own. It deals with (compact) ordered topological spaces. In the intuitionistic case, we actually axiomatize algebras whose Archimedean quotient (quotient by the $\simeq$ relation, \bf{Notation} \ref{preceq 0}) embeds into some $USC(X)$ for some topological space $X$. We first give an axiomatisation of IC-algebras in subsection \ref{subsection algebraic axiomatisation of IC-algebras}, and then study their relationship with MC-algebras in subsection \ref{subsection reduction}. We are finally able to prove that the class of all $USC(X)$ for $X$ a topological space and the class of IC-algebras are equivalent in a wide language, namely $(USC([0,1]_u^n))_{n \in \nn}$ (\bf{Theorem} \ref{Reduction}). Finally, in subsection \ref{subsection intuitionistic Cut Admissibility}, give a sequent-style cut-free deductive system for IC-algebras and prove the class of all IC-algebras is complete for this system and this system has the cut admissibility property. 

In section \ref{Involutive case}, we study the property of involutiveness of the negation. We first prove that the negation of an AC-algebra is involutive if and only if this is the case for the negation of the underlying commutative residuated complete lattice (\bf{Theorem} \ref{involutive correspondence}), which leads to an axiomatisation of these involutive AC-algebras. Finally, we give a sequent-style cut-free deductive system admitting the cut rule that describes involutive AC-algebras. 

In section \ref{section Boolean case}, we study involutive IC-algebras. They are the analogue of complete Boolean algebras in the continuous setting. We first show that the theory obtained to describe this Boolean Continuous Logic is equivalent to the theory of classical continuous logic (\bf{Theorem} \ref{equivalence to classical logic}). Second, we prove that the ordered topological space associated to any IC-algebra (\bf{Corollary} \ref{corollary order is equality}) is actually just a topological space, thus proving they are analogous to complete Boolean algebras. Finally, we exhibit a sequent-style cut-free deductive system admitting the cut rule that describes this logic. 

Finally, in the section \bf{Annexes} \ref{Annexes}, we prove a cut-admissibility theorem (\bf{Theorem} \ref{Cut Admissibility annexes}) which we rely on to prove all other cut-admissibility theorems of this paper. 

\end{groupe} 

\section{Definition of the algebra $\USC$ of the sup-preserving functions} \label{section definition USC} 

In this section, we want to define the main object of study of this article ($\USC$), which is built upon $[0,1]$ and commutative residuated lattices ($\lcal$), and give it inherited structures from $[0,1]$ and the commutative residuated lattice $\lcal$. The definition of $\USC$ is inspired by \cite{banaschewskiRealNumbersPointfree1997} and \cite{banaschewskiExtendedRealFunctions2012}, which deal with the real numbers in pointfree topology. The language for this study is \linebreak $\mathcal{L} = \{\vee,\, \wedge,\, \+,\, \-,\, 2,\, \frac{\cdot}{2},\, j_\ast,\, j,\, \alpha,\, \underline{0},\, \underline{1}\}$, whose interpretation in $[0,1]$ is defined in \bf{Definition} \ref{Définition du langage}. The language of commutative residuated lattices is $\mathcal{L}_{crl} = \{\wedge,\, \vee,\, \otimes,\, \nrightarrow,\, \bot,\, \top\}$, whereas the one of $[0,1]$ is $\mathcal{L}_{[0,1]} = \{\max,\, \min,\, \+,\, \-,\, 2,\, \frac{\cdot}{2},\, \ul 0,\, \ul 1\}$. As we can see, the structure of $[0,1]$ contains the one of a commutative residuated lattice. However, $\mathcal{L}_{crl}$ will act on $\USC$ using the pointwise structure of $\lcal$ while $\mathcal{L}_{[0,1]}$ will act by convolution using $\otimes$ from $\mathcal{L}$. Hence, on contrary to expectations when comparing the interpretations of $\mathcal{L}_{crl}$ and $\mathcal{L}_{[0,1]}$, $\max$ is naturally interpreted as $\otimes$, because of the use of $\otimes$ in convolution. However, $\+$ and $\otimes$ are intertwinned in such a manner that we will be able to forget $\otimes$ and work with the language $\mathcal{L}$ (\bf{Theorems} \ref{de lcal à USC(lcal)}, \ref{théorème de comparaison} and \ref{from [0,1] to L}). The aim of this section is thus to determine how properties of $[0,1]$ and $\lcal$ are transfered and transformed into properties of $\USC$. 

\subsection{Preliminaries} 

\begin{nota} 

We denote by $[0,1]_u$ the set $[0,1]$ endowed with the topology whose open sets are the $[0,q)$, $q \in [0,1]$ and $[0,1]$ itself. For all topological space $X$, we denote by $\Tcal(X)$ the topology of $X$. We denote $\Tcal([0,1]_u)$ the topology of $[0,1]_u$, and by $\Tcal([0,1]_u^n)$ the topology of $[0,1]_u^n$ \label{locale}. Note that $[0,1]_u$ is sober. 

\end{nota} 

\begin{defi} 

A subset $D$ of an ordered set $X$ is \emph{sup-dense} in $X$ if every $x \in X$ is the supremum of a part of $D$. We denote by $\Sup$ the category of complete orders and sup-preserving functions. 

\end{defi} 

Here comes the interpretation of $\mathcal{L}$ in $[0,1]$. 

\begin{defi} \label{Définition du langage} 

Let $x \et y \in [0,1]$. 

\begin{tabular}{ccc} 

$\max[x,y]$ is the maximum of $x$ and $y$ & \white{a a a a a a a a} &$\min[x,y]$ is the minimum of $x$ and $y$ 
\\ 
$x \+ y = x \dotp y = \min[(x + y),1]$ & &$x \- y = x \dotm y = \max[(x - y),0]$ 
\\ 
$2x = \min[x + x,1] = x \dotp x$ & &$\frac{\cdot}{2}(x) = \frac{x}{2}$ 
\\ 
$j_\ast(x) = \frac{x}{2} + \frac{1}{2}$ & &$j(x) = 2\left(x \- \frac{1}{2}\right) = \max[x + x - 1,0]$ 
\\ 
$\alpha(x) = \max[\frac{x}{2},j(x)]$ & & 
\\ 
$\ul 0 = 0$ & & $\ul 1 = 1$ 

\end{tabular} 

\end{defi} 

\subsection{Definition of the \crcl $\USC$} \label{subsection crcl} 

\subsubsection{Definition of the set underlying $USC(\lcal)$} 

An upper semi-continuous function from a topological space $X$ to $[0,1]$ is a continuous function from $X$ to $[0,1]_u$, which, when $X$ is sober, equivalently is a morphism of locales from $\Tcal([0,1]_u)$ to the topology of $X$ \cite[Proposition IX.3.2]{maclaneSheavesGeometryLogic1994}. According to \bf{Corollary} \ref{corscale}, it is also the same data as the one of a sup-preserving function from $[0,1]$ to the topology of $X$. This idea, also presented in \cite[Definition 3.1]{garciagutierrezLowerUpperRegularizations2009} and \cite[Definition 4.4]{gutierrezgarciaAlgebraicRepresentationSemicontinuity2007}, together with \bf{Corollary} \ref{corscale}, lead us to \bf{Definition} \ref{def USC}.  

\begin{nota} 

We denote by $USC(X)$ the set of all upper semi-continuous functions from $X$ to $[0,1]$ and by $f^*$ the sup-preserving function from $[0,1]$ to the topology of $X$, that takes $q \in [0,1]$ to $f^{-1}([0,q))$ for each $f \in USC(X)$. 

\end{nota} 

\begin{defi} 

A \emph{commutative residuated lattice} is a lattice $\lcal$ endowed with a commutative monoid operation $\otimes$ whose neutral element is the top one, denoted by $\top$, and a binary operation $\nrightarrow$ such that, for all $u$, $v \et w \in \lcal$, $u \otimes v \leq w \Leftrightarrow u \leq v \nrightarrow w$. $\nrightarrow$ is called a residual. A commutative residuated complete lattice is a \emph{commutative residuated lattice} whose order is complete. 

A \emph{lax morphism of commutative residuated lattices} $f \colon \lcal \rightarrow \kcal$ is an order-preserving function such that, for all $u \et v \in \lcal$, $f(u \otimes v) \geq f(u) \otimes f(v)$ and $f(\top) = \top$. 

A \emph{lax morphism of commutative residuated complete lattices} $f \colon \lcal \rightarrow \kcal$ is a sup-preserving function such that, for all $u \et v \in \lcal$, $f(u \otimes v) \geq f(u) \otimes f(v)$. 

We denote the category of commutative residuated complete lattices and lax morphisms by $\Crcl$. 

\end{defi} 

{\remark Residuated complete lattices are also called \emph{integral}, or \emph{normal quantales}.} 

For all this section, let $\lcal$ be a \crcl, with internal implication $\nrightarrow$, maximum $\top$ and minimum $\bot$. 

\begin{lem} \label{scale} 

Let $D$ be sup-dense in $[0,1]$, $n \in \nn$ and $f \colon \e< D^n \rightarrow \lcal$ and $g\colon D^n \rightarrow D$ be functions. 

The function $\fct[F_{f,g}][\Tcal([0{,}1]_u), \lcal]{U, \acc{\bigvee[\ust{p \in D^n}{g(p) < q}] f(p), U =  {[0,q)}, \top}}$ is a lax morphism of \crcls. 

\end{lem}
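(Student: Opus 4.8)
The plan is to exploit the very simple structure of the frame $\Tcal([0,1]_u)$. First I would record its basic features: its open sets are exactly the intervals $[0,q)$ for $q \in [0,1]$ together with the whole space $[0,1]$; this presentation is unique, since a nonempty open $[0,q)$ recovers $q$ as its supremum while $[0,1]$ is the only open containing the point $1$, so that $F_{f,g}$ is well defined. Crucially, these opens are totally ordered by inclusion, so $\Tcal([0,1]_u)$ is a chain. Finally, as a locale its monoidal product $\otimes$ is intersection $\cap$, with neutral element the top open $[0,1]$. Consequently, to prove the lemma I only need to check that $F_{f,g}$ preserves arbitrary suprema and that $F_{f,g}(U \cap V) \geq F_{f,g}(U) \otimes F_{f,g}(V)$ for all opens $U$ and $V$.

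For sup-preservation, joins in $\Tcal([0,1]_u)$ are unions, so the task is to show $F_{f,g}\left(\bigcup[i] U_i\right) = \bigvee[i] F_{f,g}(U_i)$ for every family $(U_i)_i$ of opens. If some $U_i$ equals $[0,1]$ the union is $[0,1]$ and both sides are $\top$. Otherwise write $U_i = [0,q_i)$ and put $s := \sup[i] q_i$; one checks that $\bigcup[i] {[0,q_i)} = [0,s)$, whether or not the supremum is attained. The computation then hinges on the equivalence $g(p) < s \iff \exists i,\ g(p) < q_i$, valid because $s = \sup[i] q_i$ and $g(p) \in [0,1]$: it says that the index set defining $F_{f,g}([0,s))$ is the union over $i$ of the index sets defining the $F_{f,g}([0,q_i))$, whence $F_{f,g}([0,s)) = \bigvee[i] F_{f,g}([0,q_i))$. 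The empty family yields the union $\emptyset = [0,0)$, and both $F_{f,g}(\emptyset)$ and the empty join in $\lcal$ equal $\bot$, so this degenerate case is covered as well.

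For lax multiplicativity I would use the chain structure: for opens $U$ and $V$ the intersection $U \cap V$ is simply the smaller of the two, so, up to exchanging them, $U \cap V = U$ and hence $F_{f,g}(U \cap V) = F_{f,g}(U)$. Integrality of $\lcal$ — the neutral element of $\otimes$ being $\top$ — then gives $F_{f,g}(U) \otimes F_{f,g}(V) \leq F_{f,g}(U) \otimes \top = F_{f,g}(U)$, since $F_{f,g}(V) \leq \top$ and $\otimes$ is order-preserving; therefore $F_{f,g}(U \cap V) \geq F_{f,g}(U) \otimes F_{f,g}(V)$, the case $V = [0,1]$ being the instance of equality. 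This establishes both defining conditions of a lax morphism of \crcls. The only genuinely delicate step is the sup-preservation argument, and within it the single point to watch is the union formula $\bigcup[i] {[0,q_i)} = [0,s)$ together with the attendant equivalence for $g(p)$, keeping in mind the distinction between the two opens $[0,1)$ and $[0,1]$; everything else is immediate from integrality and from the fact that the domain is a chain.
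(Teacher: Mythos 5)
Your proof is correct and takes essentially the same route as the paper's: the sup-preservation step is the identical index-set computation (the equivalence $g(p) < \bigvee_{i} q_i \iff \exists i \; g(p) < q_i$), and your chain-plus-integrality argument for $F_{f,g}(U \cap V) \geq F_{f,g}(U) \otimes F_{f,g}(V)$ is the paper's computation $F_{f,g}([0,q \wedge p)) = F_{f,g}([0,q)) \wedge F_{f,g}([0,p)) \geq F_{f,g}([0,q)) \otimes F_{f,g}([0,p))$ in slightly different clothing. Your additional checks (well-definedness of the case split between $[0,q)$ and $[0,1]$, the empty family, and a family containing $[0,1]$) are sound refinements that the paper leaves implicit, but they do not change the approach.
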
 

\begin{proof} 

For every $q \leq p \in [0,1]$, $$F_{f,g}([0, q \wedge  p)) = F_{f,g}([0,q)) = F_{f,g}([0,q)) \wedge F_{f,g}([0,p)) \geq F_{f,g}([0,q)) \otimes F_{f,g}([0,p)).$$ 

For every $(q_i)_{i \in I} \in [0,1]^I$, $$F_{f,g}\left(\left[0,\bigvee[i \in I] q_i\right)\right) = \bigvee[\ust{p \in D^n}{g(p) < \bigvee[i \in I] q_i}] f(p) = \bigvee[\ust{p \in D^n}{\exists i \in I \tq g(p) < q_i}] f(p) = \bigvee[i \in I] \bigvee[\ust{p \in D^n}{g(p) < q_i}] f(p) = \bigvee[i \in I] F_{f,g}([0,q_i)).$$ 

Since $[0,1]$ is sent to $\top$, $F_{f,g}$ is a lax morphism of \crcls from $\Tcal([0,1]_u)$ to $\lcal$.
\end{proof} 

\begin{cor} \label{corscale} Let $D$ be an sup-dense subset of $[0,1]$. $\fct[G][\Sup(\e< D{,}\lcal), \Crcl(\Tcal([0{,}1]_u){,}\lcal)]{f, F_{f,\id}}$ is an isomorphism. 

\end{cor}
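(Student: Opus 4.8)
The plan is to exhibit an explicit inverse to $G$ and check that the two composites are the identity, so that $G$ is a bijection (in fact an order-isomorphism for the pointwise orders). By Lemma \ref{scale} applied with $n = 1$ and $g = \id$, the map $G$ is already well defined: for every $f \colon D \to \lcal$ the assignment $F_{f,\id}$ is a lax morphism of \crcls, given on basic opens by $F_{f,\id}([0,q)) = \bigvee_{p \in D,\, p < q} f(p)$ and by $F_{f,\id}([0,1]) = \top$. The natural candidate for the inverse is the map $H$ sending a lax morphism $\phi \colon \Tcal([0,1]_u) \to \lcal$ to the function $H(\phi) \colon D \to \lcal$, $d \mapsto \phi([0,d))$.

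Next I would record the key consequence of sup-density: since every $q \in [0,1]$ is the supremum of the elements of $D$ below it, the open $[0,q)$ is the join in the frame $\Tcal([0,1]_u)$ of the opens $[0,p)$ with $p \in D$ and $p < q$, while $\emptyset = [0,0)$ and the whole space $[0,1]$ are treated separately. Using that $\phi$ is sup-preserving, this gives $G(H(\phi))([0,q)) = \bigvee_{p \in D,\, p<q}\phi([0,p)) = \phi\!\left(\bigvee_{p \in D,\, p<q}[0,p)\right) = \phi([0,q))$ for every $q \in [0,1]$; on the top element both sides equal $\top$, matching the normalisation $\phi([0,1]) = \top$ that holds for the morphisms under consideration (this is exactly the convention built into $F_{f,g}([0,1]) = \top$). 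Hence $G \circ H = \id$.

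For the other composite, I would use that sup-preservation of $f$ together with sup-density yields $f(d) = \bigvee_{p \in D,\, p<d} f(p)$ for every $d \in D$ (because $d = \sup\{p \in D : p < d\}$), and $f(0) = \bot$ when $0 \in D$; therefore $H(G(f))(d) = F_{f,\id}([0,d)) = \bigvee_{p \in D,\, p<d}f(p) = f(d)$, so $H \circ G = \id$. It remains to check that $H$ really takes values in $\Sup(D, \lcal)$, i.e. that $H(\phi)$ is sup-preserving, which follows since $d \mapsto [0,d)$ is sup-preserving from $D$ into the frame and $\phi$ is sup-preserving, so their composite is; and that both $G$ and $H$ are monotone for the pointwise orders, which upgrades the bijection to an order-isomorphism.

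The main obstacle is the bookkeeping around the frame $\Tcal([0,1]_u)$ rather than any deep difficulty: one must verify precisely that sup-density delivers the identity $[0,q) = \bigvee_{p \in D,\, p<q}[0,p)$ used above, and handle the top open $[0,1]$, which is join-irreducible in $\Tcal([0,1]_u)$ and is therefore not recovered as a join of the basic opens $[0,q)$ but only through the convention $\phi([0,1]) = \top$. Getting these two points right — the reconstruction of $f$ from its running suprema along $D$, and the normalisation at $\top$ — is where all the content of sup-density and of the definition of $F_{f,g}$ is used.
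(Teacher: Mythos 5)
Your proof is correct and takes essentially the same route as the paper: the paper's one-line proof observes that, by sup-density, every morphism in $\Crcl(\Tcal([0,1]_u), \lcal)$ is uniquely determined by, and recovered from, its restriction $d \mapsto \phi([0,d))$ along the basic opens indexed by $D$, which is precisely your inverse $H$. Your write-up merely makes this explicit, spelling out the two composite identities, the frame identity expressing $[0,q)$ as the join of the $[0,p)$ for $p \in D$ with $p < q$, and the separate treatment of the join-irreducible top $[0,1]$ via the normalisation $F_{f,g}([0,1]) = \top$ (i.e.\ unit-preservation), all of which the paper leaves implicit.
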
 

{\remark $G$ is called a Raney's transform in \cite{santocanaleDualizingSuppreservingEndomaps2021}, and is denoted $(\cdot)\textasciicircum$.} 

\begin{proof} 

Since $D$ is sup-dense in $[0,1]$, every morphism of locales arises from a sup-preserving function from $\e< D$ to $\lcal$, which is its restriction, thus being unique. 
\end{proof} 

\begin{cor} \label{building} 

Let $D$ be sup-dense in $[0,1]$, $n \in \nn$ and $f \colon \e< D^n \rightarrow \lcal$ and $g\colon D^n \rightarrow D$ be functions. $$\fct[{[0,1]} , \lcal]{q, \bigvee[\ust{p \in D^n}{g(p) < q}] f(p)} \in \Sup([0,1] , \lcal).$$ 

\end{cor}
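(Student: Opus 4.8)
The plan is to realise the map in question as a composite of two sup-preserving maps, so that essentially no new computation is needed. Write $h \colon [0,1] \to \lcal$ for the function $q \mapsto \bigvee[\ust{p \in D^n}{g(p) < q}] f(p)$, and let $F_{f,g}$ be the lax morphism of \crcls supplied by \bf{Lemma} \ref{scale}. By the very definition of $F_{f,g}$ on the basic open $[0,q)$ we have $h(q) = F_{f,g}([0,q))$ for every $q \in [0,1]$. Introducing the map $\iota \colon [0,1] \to \Tcal([0,1]_u)$ that sends $q$ to the open set $[0,q)$, this reads $h = F_{f,g} \circ \iota$. Since a lax morphism of \crcls is in particular sup-preserving, it suffices to show that $\iota$ is sup-preserving; the composite of two sup-preserving maps will then lie in $\Sup([0,1],\lcal)$, which is the claim.

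First I would record, using \bf{Notation} \ref{locale}, that the opens of $[0,1]_u$ are exactly the sets $[0,q)$ together with $[0,1]$, so that $\iota$ is well defined, and that joins in the frame $\Tcal([0,1]_u)$ are computed as unions. Checking that $\iota$ preserves joins then amounts to the identity $\bigcup[i \in I]{[0,q_i)} = [0, \bigvee[i \in I] q_i)$ for every family $(q_i)_{i \in I}$ in $[0,1]$, which in turn rests on the order-theoretic fact that, for $x \in [0,1]$, one has $x < \bigvee[i \in I] q_i$ if and only if $x < q_i$ for some $i$. The nontrivial direction uses that $[0,1]$ is totally ordered: if $x \geq q_i$ for all $i$, then $x$ is an upper bound of the family and hence $\bigvee[i \in I] q_i \leq x$, contradicting $x < \bigvee[i \in I] q_i$. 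The empty family is also covered, since $\bigvee[i \in \emptyset] q_i = 0$ while $\iota(0) = [0,0) = \emptyset$ is the bottom of the frame.

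Finally I would dispose of two harmless boundary points. The value $\iota(1) = [0,1)$ is distinct from the open $[0,1]$, so the extra open $[0,1]$ (on which $F_{f,g}$ equals $\top$) simply never appears in the image of $\iota$ and plays no role; and every join defining $h$ exists because $\lcal$ is complete. Assembling these observations, $h = F_{f,g} \circ \iota$ is sup-preserving.

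I do not expect any genuine obstacle: the whole content is already present in the second displayed equation of the proof of \bf{Lemma} \ref{scale}, which asserts precisely that $F_{f,g}([0, \bigvee[i \in I] q_i)) = \bigvee[i \in I] F_{f,g}([0,q_i))$, and one could instead give a one-line direct proof by specialising that equation. The single point that deserves attention is the use of the totality of the chain $[0,1]$ in the equivalence $x < \bigvee[i \in I] q_i \Leftrightarrow \exists\, i,\; x < q_i$, which is the only place where the specific structure of $[0,1]$, as opposed to an arbitrary complete lattice, is invoked.
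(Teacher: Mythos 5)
Your proof is correct and is essentially the paper's argument: the paper leaves \textbf{Corollary} \ref{building} proofless precisely because the map in question is $q \mapsto F_{f,g}([0,q))$ and the second displayed equation in the proof of \textbf{Lemma} \ref{scale} is verbatim its sup-preservation, the key point in both cases being the equivalence $x < \bigvee[i \in I] q_i \Leftrightarrow \exists i \; x < q_i$ valid in the chain $[0,1]$. Your factorisation through the sup-preserving map $\iota \colon q \mapsto [0,q)$ merely packages this same computation slightly more structurally.
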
 

\begin{defi} \label{def USC} 

We define $\USC$ as the set of sup-preserving functions from $[0,1]$ to $\lcal$. 

\end{defi} 

\begin{lem} \label{adj} 

Let us define, for all $f\colon [0,1] \rightarrow \lcal$, $\fct[\ouv{f}][{[0,1]}, \lcal]{q, \bigvee[p \wb q] \; \bigwedge[r \geq p] f(r)}$. \blue{According to \bf{Corollary} \ref{building}, $f^u \in \USC$}. 

For all $f\colon [0,1] \rightarrow \lcal$ and $g \in \USC$, $\ouv{f} \leq g \Leftrightarrow \forall q \in [0,1] \; g(q) \leq f(q)$, so, for all $q \in [0,1]$, $\ouv{f}(q) \leq f(q)$. 

Moreover, for all $f\colon [0,1] \rightarrow \lcal$ if $f$ is non-decreasing, for all $q \in [0,1]$, $\ouv{f}(q) = \bigvee[p \wb q] f(p)$, and, if $f \in \USC$, $\ouv{f} = f$. 

\end{lem}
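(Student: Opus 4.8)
The plan is to recognise $\ouv{f}$ as the coreflection of $f$ into $\USC$, i.e.\ the largest sup-preserving function lying pointwise below $f$; the displayed equivalence is then precisely its universal property, and all the remaining assertions fall out. Throughout I would write $\check f(p) = \bigwedge[r \geq p] f(r)$, so that $\ouv f(q) = \bigvee[p < q] \check f(p)$, and use that in $[0,1]$ one has $q = \bigvee \{p : p < q\}$ for every $q$ (at $q = 0$ both sides are $\bot$).

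I would first record two facts. That $\ouv f \leq f$ pointwise is immediate: for $p < q$ the index $q$ lies in $[p,1]$, so $\check f(p) = \bigwedge[r \geq p] f(r) \leq f(q)$, and taking the supremum over $p < q$ gives $\ouv f(q) \leq f(q)$. That $\ouv f \in \USC$ is exactly \textbf{Corollary}~\ref{building} applied with $D = [0,1]$, $n = 1$, the map $\check f$ in the role of $f$ and $\id$ in the role of $g$, since then $q \mapsto \bigvee[p < q] \check f(p) = \ouv f(q)$ is sup-preserving.

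Next I would prove the equivalence for $g \in \USC$. If $g \leq \ouv f$, then $g \leq f$ follows at once from $\ouv f \leq f$. Conversely, assume $g(q) \leq f(q)$ for all $q$; as $g$ is sup-preserving it is order-preserving, hence $\bigwedge[r \geq p] g(r) = g(p)$, and combined with $g \leq f$ this gives $g(p) \leq \bigwedge[r \geq p] f(r) = \check f(p)$ for all $p$. Taking suprema over $p < q$ and using $\bigvee[p < q] g(p) = g\!\left(\bigvee[p < q] p\right) = g(q)$ yields $g(q) \leq \ouv f(q)$, i.e.\ $g \leq \ouv f$. Instantiating at $g = \ouv f \in \USC$ then reproduces $\ouv f \leq f$, which is the stated consequence.

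The ``moreover'' clauses are direct. If $f$ is non-decreasing the inner infimum is attained at $r = p$, so $\check f(p) = f(p)$ and $\ouv f(q) = \bigvee[p < q] f(p)$; if in addition $f \in \USC$, then $f$ is non-decreasing and sup-preserving, so $\ouv f(q) = \bigvee[p < q] f(p) = f\!\left(\bigvee[p < q] p\right) = f(q)$, whence $\ouv f = f$. The only point needing genuine care is the interplay of the outer supremum with sup-preservation, through the identity $\bigvee[p < q] g(p) = g(q)$, which rests on $q = \bigvee\{p : p < q\}$ and on the sup-preservation of $\ouv f$ provided by \textbf{Corollary}~\ref{building}; everything else is a routine manipulation of suprema and infima in $\lcal$.
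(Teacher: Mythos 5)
Your proof is correct and follows essentially the same route as the paper's: both establish the equivalence by the same pointwise manipulation, passing from $g \leq f$ to $g(p) \leq \bigwedge_{r \geq p} f(r)$ via monotonicity of $g$ and then recovering $g(q)$ as $\bigvee_{p < q} g(p)$ by sup-preservation, with the \og moreover\fg clauses handled identically. Your explicit appeal to \textbf{Corollary}~\ref{building} for $\ouv{f} \in \USC$ is a small, harmless addition that the paper leaves implicit (and indeed uses later), not a different method.
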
 

\begin{proof} 

Let $f\colon [0,1] \rightarrow \lcal$, $g \in \USC$. \begin{align*} 
\forall q \in [0,1] \; f(q) \geq g(q) &\Leftrightarrow \forall r \geq q \in [0,1] f(r) \geq g(q)\\ 
&\Leftrightarrow \forall q \in [0,1] \; \bigwedge[r \geq q] f(r) \geq g(q)\\ 
&\Leftrightarrow \forall q \in [0,1] \; \bigvee[p \wb q] \bigwedge[r \geq p] f(r) \geq g(q)\\ 
&\Leftrightarrow \ouv{f} \leq g 
\end{align*}

If $f$ is non-decreasing, then for all $q \in [0,1]$, $\ouv{f}(q) = \bigvee[p \wb q] \bigwedge[r \geq p] f(r) = \bigvee[p \wb q] f(p)$. 

Finally, if $f \in \USC$, for all $q \in [0,1]$, $\ouv {f}(q) = \bigvee[p \wb q] f(p) = f(q)$. 
\end{proof} 

Let $\dcal$ denote the set of dyadic numbers in $[0,1]$. According to \textbf{Corollary} \ref{corscale}, an upper semi-continous function from $\lcal$ to $[0,1]$ is entirely characterised by its values on $q$, for $q \in \qq \cap [0,1]$, or $q \in \dcal$. 

\subsubsection{The pointwise induced structure from $\lcal$ on $\USC$} 

The aim of this construction is to obtain a commutative residuated complete lattice (\bf{Theorem} \ref{USC is a crcl}) that satisfies some formulas that are true in $\lcal$ (\bf{Theorem} \ref{de lcal à USC(lcal)}). 

To define the order, let us first recall that a function $f \in USC(X)$, for $X$ a sober topological space, is lower than a function $g \in USC(X)$ if and only if $g^*\colon [0,1] \rightarrow \Tcal(X)$ is lower than $f^*\colon [0,1] \rightarrow \Tcal(X)$ in the sense that for all $q \in {[0,1]}$, $g^*(q) \subset f^*(q)$. Thus, we define the order on $\USC$ by $f \leq g \Leftrightarrow \forall q \in {[0,1]}\;  g(q) \leq f(q)$, for all $f$ and $g$ being in $\USC$. Notice that $\vee$ and $\wedge$ respectively correspond to the lower and upper bounds of two functions for the pointwise order induced by $\lcal$. 

\begin{lem}[{\cite[Proposition 1.]{banaschewskiExtendedRealFunctions2012}}] \label{def wedge et vee} 

Let $f$ and $g \in \USC$, and $(f_i)_{i \in I} \in \USC^I$. 

\ind The maximum of $\USC$ is $\fct[\ul {1}][{[0,1]}, \lcal]{q, \bot}$ and its minimum is $\fct[\ul {0}][{[0,1]}, \lcal]{q, \acc{\bot, q = 0, \top}}$. 
\\ 
\ind The lower bound of $(f_i)_{i \in I}$ exists and assigns to each $ q \in \e<{{[0,1]}}$ $\bigvee[i \in I] f_i(q)$. 
\\ 
\ind The upper bound of $f$ and $g$ exists and assigns to each $ q \in \e<{{[0,1]}}$ $\bigvee[\wb[p][q]] f(p) \wedge g(p)$. 

Hence $\USC$ is a complete lattice. 

\end{lem}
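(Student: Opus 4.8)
The plan is to keep in mind throughout that the order on $\USC$ recalled just above, namely $f \leq g \Leftrightarrow \forall q \in [0,1]\; g(q) \leq f(q)$, is the \emph{reverse} of the pointwise order inherited from $\lcal$, and then to establish the stated data in turn: the two extremal elements, arbitrary greatest lower bounds, and the binary least upper bound. The complete-lattice conclusion is then formal.

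For the extremal elements, I would first note that the constant function $\ul 1 \colon q \mapsto \bot$ is sup-preserving and that $\bot \leq f(q)$ for every $f \in \USC$ and every $q$; by definition of the order this says exactly that $f \leq \ul 1$, so $\ul 1$ is the maximum. For $\ul 0$, the key point is that every $f \in \USC$ satisfies $f(0) = \bot$, since $0$ is the empty supremum of $[0,1]$ and $f$ preserves suprema; thus $f(0) = \bot = \ul 0(0)$ while $f(q) \leq \top = \ul 0(q)$ for $q > 0$. After checking that $\ul 0$ is itself sup-preserving, this reads $\ul 0 \leq f$, so $\ul 0$ is the minimum.

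For arbitrary greatest lower bounds I would set $h \colon q \mapsto \bigvee[i \in I] f_i(q)$ and verify that $h$ preserves suprema (suprema in $\lcal$ commute, so $h(\bigvee[k] q_k) = \bigvee[i] \bigvee[k] f_i(q_k) = \bigvee[k] h(q_k)$), whence $h \in \USC$; a one-line order computation then shows $h$ is a lower bound of $(f_i)_{i \in I}$ and dominates every other lower bound, so it is the greatest. The only step with genuine content is the binary join: the naive pointwise meet $\psi \colon q \mapsto f(q) \wedge g(q)$ need not be sup-preserving, so it must be replaced by its reflection $\ouv{\psi}$ from Lemma~\ref{adj}. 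Since $f$ and $g$ are order-preserving, $\psi$ is non-decreasing, so Lemma~\ref{adj} gives $\ouv{\psi}(q) = \bigvee[\wb[p][q]] f(p) \wedge g(p)$, and $\ouv{\psi} \in \USC$ by Corollary~\ref{building}. It then remains to identify $\ouv{\psi}$ as the least upper bound: from $\ouv{\psi} \leq \psi$ pointwise one gets $f \leq \ouv{\psi}$ and $g \leq \ouv{\psi}$, so $\ouv{\psi}$ is an upper bound; and if $h \in \USC$ is any upper bound then $h(q) \leq f(q) \wedge g(q) = \psi(q)$ for all $q$, which is precisely the right-hand side of the adjunction of Lemma~\ref{adj} and hence yields $\ouv{\psi} \leq h$.

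Finally, since every family admits a greatest lower bound in $\USC$ (the empty family returning the top element $\ul 1$), $\USC$ is a complete lattice. I expect the main obstacle to be conceptual rather than computational: one must keep the direction of the order straight, the order of $\USC$ being opposite to the pointwise order, and, above all, recognise that the join cannot be computed pointwise but must pass through the sup-preserving reflection $\ouv{(\cdot)}$. The adjunction established in Lemma~\ref{adj} is exactly what turns the least-upper-bound verification into a routine check.
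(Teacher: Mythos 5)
Your proposal is correct and follows essentially the same route as the paper's proof: verify that $\ul 1$ and $\ul 0$ are sup-preserving and extremal, take the pointwise supremum for arbitrary lower bounds, and obtain the binary join as $\ouv{(q \mapsto f(q) \wedge g(q))}$, with the least-upper-bound property coming from the adjunction of Lemma~\ref{adj}. Your explicit observation that $f(0) = \bot$ (as the empty supremum) and your appeal to Corollary~\ref{building} for $\ouv{\psi} \in \USC$ only make explicit what the paper leaves implicit.
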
 

\begin{proof} 

If $\ul {1}$ and $\ul 0$ are \sp, then, it is clear that they are respectively the maximum and minimum of $\USC$. 

For all $(q_j)_{j \in J} \in [0,1]^J$, $\bigvee[j \in J] \ul {1}(q_j) = \bot = \ul {1}\left(\bigvee[j \in J] q_j\right)$. For all $(q_j)_{j \in J} \in [0,1]^J$, if $\bigvee[j \in J] q_j > 0$ then there exists $j_0 \in J$ such that $q_{j_0} > 0$, so $\bigvee[j \in J] \ul 0 (q_j) = \top = \ul 0 \left(\bigvee[j \in J] q_j\right)$, and if $\bigvee[j \in J] q_j = 0$ then $\bigvee[j \in J] \ul 0 (q_j) = \bot = \ul 0 \left(\bigvee[j \in J] q_j\right)$. 

Thus, $\ul {1} \et \ul 0 \in \USC$. 

Clearly, the function that assigns $\bigvee[i \in I] f_i(q)$ to each $q \in {[0,1]}$ preserves suprema. Thus, \linebreak $q \mapsto \bigvee[i \in I] f_i(q)$ is the lower bound of $(f_i)_{i \in I}$. 

Since $\otimes$ is non-decreasing in each coordinate, for all $U,\, V,\, U' \et V' \in \lcal$, $$(U \otimes V) \wedge (U' \otimes V') \geq (U \wedge U') \otimes (V \wedge V').$$ 

Let us denote by $f \wedge g$ the pointwise lower bound of $f$ and $g$. $\ouv{(f \wedge g)} \in \USC$. What we need to prove is that $\ouv{(f \wedge g)}$ is the actual upper bound of $f$ and $g$. 

However, by \bf{Lemma} \ref{adj}, for all $h \in \USC$, $$h \geq f \et h \geq g \Leftrightarrow \forall q \in [0,1] \; h(q) \leq f(q) \wedge g(q) \Leftrightarrow h \geq \ouv{f \wedge g}.$$ 

\end{proof} 

\begin{defi} \label{defi otimes} 

$\otimes$ and $\nrightarrow$ are defined on $\USC$ by, for all $f \et g \in \USC$, \linebreak $\fct[f \otimes g][{[0,1]}, \lcal]{q, f(q) \otimes g(q)}$ and $\fct[f \nrightarrow g][{[0,1]}, \lcal]{q, \bigvee[p \wb q] \bigwedge[r \geq p] f(r) \nrightarrow g(r)} = \ouv{(q \mapsto f(q) \nrightarrow g(q))}$. 

\end{defi} 

\begin{lem} \label{lem nrightarrow} 

For all $f \et g \in \USC$, $f \otimes g \et f \nrightarrow g \in \USC$. Moreover, $\otimes$ is associative and commutative and its neutral element is  $\ul 0$. Finally, $\nrightarrow$ is the residual of $\otimes$ and satisfies, for all $f \et g \in \USC$, $(f \nrightarrow g) (q) \leq f(q) \nrightarrow g(q)$. 

\end{lem}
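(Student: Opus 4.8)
The plan is to check the assertions one at a time; only the sup-preservation of $\otimes$ and the residuation identity require genuine work, the rest being pointwise transfers from $\lcal$ together with \textbf{Lemma}~\ref{adj}.

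First I would show $f \otimes g \in \USC$, i.e. that $q \mapsto f(q) \otimes g(q)$ preserves suprema. Since $\lcal$ is residuated, $x \otimes (-)$ is a left adjoint and hence preserves all joins, and by commutativity so does $(-) \otimes x$; thus for any family $(q_i)_{i \in I}$,
\[
(f \otimes g)\!\left(\bigvee[i \in I] q_i\right) = f\!\left(\bigvee[i \in I] q_i\right) \otimes g\!\left(\bigvee[i \in I] q_i\right) = \bigvee[(i,j) \in I^2] f(q_i) \otimes g(q_j).
\]
The crucial point is that $[0,1]$ is \emph{totally} ordered: for each pair $i,j$ one of $q_i, q_j$ dominates the other, so by monotonicity of $f$ and $g$ the term $f(q_i) \otimes g(q_j)$ lies below $(f \otimes g)(q_i)$ or $(f \otimes g)(q_j)$. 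Hence the double join collapses onto the diagonal and equals $\bigvee[i \in I] (f \otimes g)(q_i)$ (the empty family being covered by $f(0) = g(0) = \bot$). That $f \nrightarrow g \in \USC$ is immediate, since it is defined as $\ouv{(q \mapsto f(q) \nrightarrow g(q))}$ and $h \mapsto \ouv h$ lands in $\USC$ by \textbf{Lemma}~\ref{adj}.

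Commutativity and associativity of $\otimes$ hold pointwise, inherited verbatim from $\lcal$. For the neutral element I would verify $\ul 0 \otimes f = f$ pointwise: when $q > 0$, $\ul 0(q) = \top$ and $\top \otimes f(q) = f(q)$; when $q = 0$, sup-preservation forces $f(0) = \bot$, so both sides are $\bot$. The final, easy clause $(f \nrightarrow g)(q) \leq f(q) \nrightarrow g(q)$ is exactly the domination $\ouv{\psi}(q) \leq \psi(q)$ supplied by \textbf{Lemma}~\ref{adj} for $\psi \colon q \mapsto f(q) \nrightarrow g(q)$.

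The heart of the statement is that $\nrightarrow$ is the residual of $\otimes$, and this is precisely where \textbf{Lemma}~\ref{adj} is indispensable: the pointwise assignment $q \mapsto g(q) \nrightarrow h(q)$ need not be sup-preserving, which is why the regularization $h \mapsto \ouv h$ is built into the definition of $\nrightarrow$. Fixing $f, g, h \in \USC$ and writing $\psi \colon q \mapsto g(q) \nrightarrow h(q)$, I would unfold $\otimes$ pointwise and apply the residuation of $\lcal$ to obtain
\[
f \otimes g \leq h \iff \forall q \in [0,1]\ \ f(q) \otimes g(q) \leq h(q) \iff \forall q \in [0,1]\ \ f(q) \leq \psi(q),
\]
the middle inequalities being taken pointwise in $\lcal$, and then use that $f \in \USC$: by the adjunction of \textbf{Lemma}~\ref{adj}, the condition $\forall q\ f(q) \leq \psi(q)$ is equivalent to $f \leq \ouv{\psi} = g \nrightarrow h$. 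This yields the residuation $f \otimes g \leq h \Leftrightarrow f \leq g \nrightarrow h$. I expect this step to be the main obstacle: one must resist the temptation to residuate pointwise, and instead exploit that $h \mapsto \ouv h$ is right adjoint to the inclusion of $\USC$ into the functions $[0,1] \to \lcal$ (the content of \textbf{Lemma}~\ref{adj}), so that membership in $\USC$ is exactly what lets one trade the pointwise inequality for the genuine order relation. Sup-preservation of $f \otimes g$ is the only other delicate point, and there the totality of the chain $[0,1]$ is the essential ingredient.
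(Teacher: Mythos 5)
Your proof is correct and follows essentially the same route as the paper's: the same double-join computation with the totality of $[0,1]$ collapsing $\bigvee_{i,j} f(q_i) \otimes g(q_j)$ onto the diagonal, pointwise transfer for associativity, commutativity and the unit, and \textbf{Lemma}~\ref{adj} doing exactly the work you assign it for $f \nrightarrow g \in \USC$, the inequality $(f \nrightarrow g)(q) \leq f(q) \nrightarrow g(q)$, and the residuation equivalence. Your treatment is in fact slightly more careful than the paper's at one point: its one-line check of the unit reads $(\ul 0 \otimes f)(q) = \top \otimes f(q)$, which is only valid for $q > 0$, whereas you correctly note that at $q = 0$ sup-preservation forces $f(0) = \bot$ so both sides are $\bot$.
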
 

\begin{proof} 

Let $f$, $g$ et $h \in \USC$. For all $(q_i)_{i \in I} \in [0,1]^I$, 

\begin{align*} 
(f \otimes g)\left(\bigvee[i \in I] q_i\right) &= f\left(\bigvee[i \in I] q_i\right) \otimes g\left(\bigvee[i \in I] q_i\right)\\ 
&= \bigvee[i,j \in I] f(q_i) \otimes g(q_j)\\ 
&\leq \bigvee[i,j \in I] f(q_i \vee q_j) \otimes g(q_i \vee q_j)\\ 
&\leq \bigvee[i \in I] f(q_i) \otimes g(q_i),
\end{align*}

so $(f \otimes g)\left(\bigvee[i \in I] q_i\right) = \bigvee[i \in I] f(q_i) \otimes g(q_i) = \bigvee[i \in I] (f \otimes g)(q_i)$. Hence $f \otimes g \in \USC$. 

$\otimes$ is clearly associative and commutative. 

For all $q \in [0,1]$, $(\ul 0 \otimes f) (q) = \top \otimes f(q) = f(q)$. 

Since $f \nrightarrow g = \ouv{(q \mapsto f(q) \nrightarrow g(q))}$, according to \bf{Lemma} \ref{adj}: \begin{enumerate} 

\item[-\;] $f \nrightarrow g \in \USC$, 

\item[-\;] for all $q \in [0,1]$, $(f \nrightarrow g)(q) \leq f(q) \nrightarrow g(q)$ 

\item[-\;] and, for all $h \in \USC$, $$f \nrightarrow g \leq h \Leftrightarrow \forall q \in [0,1] \; h(q) \leq f(q) \nrightarrow g(q) \Leftrightarrow \forall q \in [0,1] \; f(q) \otimes h(q) \leq g(q) \Leftrightarrow g \leq f \otimes h.$$ 

\end{enumerate} 
\end{proof} 

We have thus proven the following theorem. 

\begin{thm} \label{USC is a crcl} 

$(\USC, \otimes, \nrightarrow)$ with the reverse order is a commutative residuated complete lattice. 

\end{thm}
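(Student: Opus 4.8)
The plan is to assemble the facts already established for $\USC$ and reconcile them with the order reversal, since essentially all the work has been done in the two preceding lemmas. First I would observe that the reverse of the order introduced before \textbf{Lemma}~\ref{def wedge et vee} is exactly the pointwise order inherited from $\lcal$, namely $f \sqsubseteq g \Leftrightarrow \forall q \in [0,1]\; f(q) \leq g(q)$. Since \textbf{Lemma}~\ref{def wedge et vee} shows that $\USC$ is a complete lattice, and reversing a complete order again yields a complete order (suprema and infima merely swap roles), $\USC$ equipped with the reverse order is again a complete lattice. This settles the lattice-theoretic part of the definition of a \crcl.

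Next I would check the monoid structure. By \textbf{Lemma}~\ref{lem nrightarrow}, the operation $\otimes$ defined pointwise in \textbf{Definition}~\ref{defi otimes} preserves $\USC$ and is associative and commutative with neutral element $\ul 0$. The definition of a commutative residuated lattice requires the unit to be the top element; here $\ul 0$ is the minimum of $\USC$ for the order of \textbf{Lemma}~\ref{def wedge et vee}, hence the maximum for the reverse order, so the requirement is exactly met.

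It then remains to verify the residuation adjunction in the reverse order. \textbf{Lemma}~\ref{lem nrightarrow} gives, for all $f, g, h \in \USC$, the equivalence $f \nrightarrow g \leq h \Leftrightarrow g \leq f \otimes h$ in the order of \textbf{Lemma}~\ref{def wedge et vee}. Rewriting each side in the reverse order turns this into $h \sqsubseteq f \nrightarrow g \Leftrightarrow f \otimes h \sqsubseteq g$; by commutativity of $\otimes$ this is precisely the residuation law $u \otimes v \sqsubseteq w \Leftrightarrow u \sqsubseteq v \nrightarrow w$ with $u = h$, $v = f$, $w = g$. With all the clauses of the definition verified, $(\USC, \otimes, \nrightarrow)$ with the reverse order is a commutative residuated complete lattice.

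The one delicate point—the step I would watch most carefully—is the bookkeeping of order directions: the reversal must simultaneously turn the minimum $\ul 0$ into the top element and monoid unit, and flip the residual adjunction into the orientation demanded by the definition. No genuinely new computation is needed beyond what \textbf{Lemma}~\ref{def wedge et vee} and \textbf{Lemma}~\ref{lem nrightarrow} already supply.
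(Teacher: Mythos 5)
Your proposal is correct and matches the paper exactly: the paper offers no separate argument, introducing the theorem with ``We have thus proven the following theorem,'' i.e., it is precisely the assembly of \textbf{Lemma}~\ref{def wedge et vee} (complete lattice) and \textbf{Lemma}~\ref{lem nrightarrow} (commutative monoid with unit $\ul 0$ and residual $\nrightarrow$) that you carry out. Your explicit bookkeeping of the order reversal --- $\ul 0$ becoming the top element and the adjunction flipping into $u \otimes v \sqsubseteq w \Leftrightarrow u \sqsubseteq v \nrightarrow w$ --- is exactly the implicit content of the paper's remark.
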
 

\begin{lem} 

For all $f \et g \in \USC$, for all $q \in [0,1]$, 

\begin{enumerate} 

\item $(f \vee g)  = \ouv{(q \mapsto f(q) \wedge g(q))}$, 

\item $(f \wedge g)  = \ouv{(q \mapsto f(q) \vee g(q))}$, 

\item $f \otimes g = \ouv{(q \mapsto f(q) \otimes g(q))}$, 

\item $f \nrightarrow g = \ouv{(q \mapsto f(q) \nrightarrow g(q))}$, 

\item $\ul 0 = \ouv{(q \mapsto \top)}$, 

\item $\ul 1 = \ouv{(q \mapsto \bot)}$. 

\end{enumerate} 

\end{lem}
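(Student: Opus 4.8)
The plan is to reduce every item to Lemma~\ref{adj}, using two facts: first, $\ouv{h} = h$ whenever $h \in \USC$; and second, since every sup-preserving function is non-decreasing, whenever the bracketed map $h$ is non-decreasing the closed form $\ouv{h}(q) = \bigvee[p \wb q] h(p)$ applies. Each identity then falls into one of these two regimes.

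Four of the items reduce to the first fact, because the bracketed map already lies in $\USC$ and equals the left-hand side, so that $\ouv{(\cdot)}$ acts as the identity. Item (4) is exactly Definition~\ref{defi otimes}. For item (3), Lemma~\ref{lem nrightarrow} gives $f \otimes g \in \USC$ with $(f \otimes g)(q) = f(q) \otimes g(q)$, hence $\ouv{(q \mapsto f(q) \otimes g(q))} = \ouv{(f \otimes g)} = f \otimes g$. For item (2), Lemma~\ref{def wedge et vee} identifies the meet $f \wedge g$ (the lower bound in $\USC$) with the sup-preserving map $q \mapsto f(q) \vee g(q)$, so $\ouv{(q \mapsto f(q) \vee g(q))} = \ouv{(f \wedge g)} = f \wedge g$. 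For item (6), the same lemma describes $\ul 1$ as the constant map $q \mapsto \bot$, which lies in $\USC$, so $\ouv{(q \mapsto \bot)} = \ul 1$.

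The two remaining items reduce to the second fact. In item (1) the map $q \mapsto f(q) \wedge g(q)$ is non-decreasing, being a meet of non-decreasing functions, so Lemma~\ref{adj} yields $\ouv{(q \mapsto f(q) \wedge g(q))}(q) = \bigvee[p \wb q] f(p) \wedge g(p)$, which is precisely the value that Lemma~\ref{def wedge et vee} assigns to the join $f \vee g$. In item (5) the constant map $q \mapsto \top$ is non-decreasing, so $\ouv{(q \mapsto \top)}(q) = \bigvee[p \wb q] \top$; this equals $\bot$ at $q = 0$ (an empty join) and $\top$ for $q > 0$, which is exactly $\ul 0$ as given by Lemma~\ref{def wedge et vee}.

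There is no serious obstacle here: the argument is pure bookkeeping. The one point requiring attention is the reversal of the order on $\USC$, which makes the \emph{pointwise} meet compute the join $f \vee g$ and the pointwise join compute the meet $f \wedge g$; once this is kept straight, one need only recognise which bracketed maps already belong to $\USC$ (items (2), (3), (4), (6)) and which are merely non-decreasing (items (1), (5)), and the identities follow immediately from the two regularisation properties of Lemma~\ref{adj}.
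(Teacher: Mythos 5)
Your proof is correct and follows essentially the same route as the paper: both invoke Lemma~\ref{adj} in its two regimes, treating items (2), (3), (4), (6) by recognising the bracketed map as already sup-preserving (so $\ouv{(\cdot)}$ is the identity) and items (1), (5) via the non-decreasing formula $\ouv{h}(q) = \bigvee[p \wb q] h(p)$. Your explicit handling of the empty join at $q = 0$ in item (5) is a small clarification the paper leaves implicit, but the argument is otherwise the same.
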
 

\begin{proof} 

Let $f \et g \in \USC$. The argument relies on \bf{Lemma} \ref{adj}. 

\begin{enumerate} 

\item $q \mapsto f(q) \wedge g(q)$ is non-decreasing, so, for all $p \in [0,1]$, $$\ouv{(q \mapsto f(q) \wedge g(q))}(p) = \bigvee[r \wb p] f(r) \wedge g(r) = (f \vee g)(p).$$ 

\item $f \wedge g \colon q \mapsto f(q) \vee g(q) \in \USC$, so $\ouv{(f \wedge g)} = f \wedge g$, i.e. $f \wedge g = \ouv{(q \mapsto f(q) \vee g(q))}$. 

\item $f \otimes g \colon q \mapsto f(q) \otimes g(q) \in \USC$, so $\ouv{(f \otimes g)} = f \otimes g$, i.e. $f \otimes g = \ouv{(q \mapsto f(q) \otimes g(q))}$. 

\item By definition of $f \nrightarrow g$. 

\item For all $q \in [0,1]$, $\ouv{(q \mapsto \top)}(q) = \bigvee[p \wb q] \top = \ul 0(q)$. 

\item \blue{For all $q \in [0,1]$, $\ouv{(q \mapsto \bot)}(q) = \bot = \ul 1(q)$}. 

\end{enumerate} 

\end{proof} 

\begin{defi} 

For the purpose of \bf{Lemma} \ref{link lcal} and \bf{Theorem} \ref{de lcal à USC(lcal)}, let \linebreak $\mathcal{L}_{crl} = \{\otimes, \; \nrightarrow, \; \vee, \; \wedge, \; \bot, \; \top\}$. We consider a countable set $\vcal$ of variables. Let $E$ denote the set of terms of $\mathcal{L}_{crl}$ and let's define $E_{lax}$ and $E_{colax}$ as follows: $$E_{lax} = \{\phi \in E \, | \, \forall f \colon \vcal \rightarrow \USC \et q \in [0,1] \; \phi[f](q) \leq \phi[f(q)]\}$$ $$E_{colax} = \{\phi \in E \, | \, \forall f \colon \vcal \rightarrow \USC \et q \in [0,1] \; \phi[f](q) \geq \phi[f(q)]\}.$$ 

{\remark Care must be taken when interpreting $\vee$ in $\USC$, that is $\wedge$, and vice-versa. Hence, for all terms $\phi \et \psi$ of $\mathcal{L}_{crl}$, $(\phi \vee \psi)[f](q) = (\phi[f] \wedge \psi[f])(q) = \phi[f](q) \vee \psi[f](q)$.} 

\end{defi} 

\begin{lem} $ $ \label{link lcal} \begin{enumerate} 

\item $E_{colax}$ contains the variables and constants and is stable by $\otimes$ and $\vee$. 

\item $E_{lax}$ contains the variables and constants and is stable by $\otimes$, $\vee$ and $\wedge$. 

\item $E_{colax} \nrightarrow E_{lax} \subset E_{lax}$ \blue{and $E_{lax} \nrightarrow \bot \subset E_{colax}$}. 

\end{enumerate} 

\end{lem}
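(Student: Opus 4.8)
The plan is to prove the three closure properties simultaneously by structural induction on the term $\phi$, keeping track, for each connective, of how the value $\phi[f](q)$ of the $\USC$-interpretation compares with the value $\phi[f(q)]$ of the $\lcal$-interpretation at the point $q$. The inductive engine is the list of pointwise identities and inequalities already available: $\otimes$ is computed pointwise (\bf{Definition} \ref{defi otimes}), so that $(\phi\otimes\psi)[f](q)=\phi[f](q)\otimes\psi[f](q)$; the crcl-join $\vee$ and the constants are likewise pointwise (\bf{Lemma} \ref{def wedge et vee}); whereas the meet and the residual are only recovered through the regularisation $\ouv{(\cdot)}$ of \bf{Lemma} \ref{adj}, which yields the one-sided bounds $(\phi\wedge\psi)[f](q)\leq\phi[f](q)\wedge\psi[f](q)$ and $(\phi\nrightarrow\psi)[f](q)\leq\phi[f](q)\nrightarrow\psi[f](q)$ (\bf{Lemma} \ref{lem nrightarrow}). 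I would also use repeatedly that every interpretation $\phi[f]$ lies in $\USC$ and is therefore non-decreasing, and that $\otimes,\vee,\wedge$ are monotone while $\nrightarrow$ reverses order in its first and preserves order in its second argument.

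With these tools, items (1) and (2) are routine propagation arguments. The variables and constants are handled by direct computation (the only subtlety being the boundary point $q=0$, where every $\USC$-function takes the value $\bot$). For the colax class the exact connectives $\otimes$ and $\vee$ preserve the inequality $\phi[f](q)\geq\phi[f(q)]$ by monotonicity, giving (1). For the lax class, $\otimes$ and $\vee$ behave the same way, and the new point is stability under $\wedge$: writing $(\phi\wedge\psi)[f](q)=\bigvee[p < q]\big(\phi[f](p)\wedge\psi[f](p)\big)$ and using that $\phi[f]$ and $\psi[f]$ are non-decreasing, I bound each term by $\phi[f](q)\wedge\psi[f](q)$ and then invoke the lax hypotheses on $\phi$ and $\psi$. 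The conceptual reason that (2) holds for $\wedge$ while (1) does not is precisely that the $\ouv{(\cdot)}$-correction can only decrease the value, and so only ever helps an upper (lax) estimate.

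For item (3), the inclusion $E_{colax}\nrightarrow E_{lax}\subseteq E_{lax}$ is immediate from the residual bound $(\phi\nrightarrow\psi)[f](q)\leq\phi[f](q)\nrightarrow\psi[f](q)$: the colax hypothesis $\phi[f](q)\geq\phi[f(q)]$ on the antitone argument and the lax hypothesis $\psi[f](q)\leq\psi[f(q)]$ on the monotone argument combine to give $\phi[f](q)\nrightarrow\psi[f](q)\leq\phi[f(q)]\nrightarrow\psi[f(q)]=(\phi\nrightarrow\psi)[f(q)]$, which is the lax inequality. The remaining inclusion $E_{lax}\nrightarrow\bot\subseteq E_{colax}$ is where I expect the real difficulty to lie, and I would isolate it as a dedicated sublemma about the negation $\neg(\cdot)=(\cdot)\nrightarrow\bot$ computed in $\USC$. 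The obstacle is directional: proving colaxness of $\phi\nrightarrow\bot$ requires a \emph{lower} bound on $(\phi\nrightarrow\bot)[f](q)=\bigvee[p < q]\bigwedge[r \geq p]\neg\big(\phi[f](r)\big)$, whereas the residual inequality of \bf{Lemma} \ref{lem nrightarrow} only ever supplies upper bounds, so the argument of the first inclusion cannot simply be reused. Instead I would expand $(\phi\nrightarrow\bot)[f]$ through the explicit formula of \bf{Definition} \ref{defi otimes}, turn the lax hypothesis $\phi[f](r)\leq\phi[f(r)]$ into $\neg\big(\phi[f](r)\big)\geq\neg\big(\phi[f(r)]\big)$ using that negation reverses order, and then carefully analyse how the supremum over $p < q$ and the infimum over $r\geq p$ recombine against the point value $\phi[f(q)]$. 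Controlling this supremum/infimum interaction — the genuine cost of having the regularisation $\ouv{(\cdot)}$ built into the residual — is the crux of the lemma, and I would settle it in isolation before slotting it back into the induction.
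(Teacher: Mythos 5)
Your items (1), (2) and the first inclusion of (3) are correct and essentially identical to the paper's argument: the pointwise identities for $\otimes$, $\vee$ and the constants, the one-sided bound $(\phi \wedge \psi)[f](q) \leq \phi[f](q) \wedge \psi[f](q)$ coming from the regularisation, and the chain $(\psi \nrightarrow \phi)[f](q) \leq \psi[f](q) \nrightarrow \phi[f](q) \leq \psi[f(q)] \nrightarrow \phi[f(q)]$ for $E_{colax} \nrightarrow E_{lax} \subset E_{lax}$. (Two minor remarks: the lemma is a one-step closure statement, so your induction scaffolding is unnecessary; and since every element of $\USC$ takes the value $\bot$ at $q = 0$, the lax/colax conditions must implicitly be read on $(0,1]$ — already the constant $\top$ fails colaxness at $q = 0$, as $\ul 0(0) = \bot$.)

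The genuine gap is exactly where you located it: you never prove $E_{lax} \nrightarrow \bot \subset E_{colax}$, deferring to a sublemma about how "the supremum over $p < q$ and the infimum over $r \geq p$ recombine". That recombination can be computed outright, and it goes the wrong way. Since $\phi[f] \in \USC$ is non-decreasing and $u \mapsto u \nrightarrow \bot$ is antitone, the map $r \mapsto \phi[f](r) \nrightarrow \bot$ is \emph{non-increasing}, so the inner infimum collapses at the right endpoint: $\bigwedge_{r \geq p} \big(\phi[f](r) \nrightarrow \bot\big) = \big(\bigvee_{r \geq p} \phi[f](r)\big) \nrightarrow \bot = \phi[f](1) \nrightarrow \bot$, whence $(\phi \nrightarrow \bot)[f](q) = \phi[f](1) \nrightarrow \bot$ for all $q > 0$; the crcl negation on $\USC$ is the degenerate $0$-indicator $f \nrightarrow \ul 1 = 0_{f(1) \nrightarrow \bot}$. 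Colaxness would then require $\phi[f](1) \nrightarrow \bot \geq \phi[f(q)] \nrightarrow \bot$, whereas monotonicity of $\phi[f]$ yields the \emph{opposite} comparison. Already $\phi = v$ fails: take $\lcal = \Tcal(\rr)$ and $f(v) \in \USC$ given by $q \mapsto (-\infty, q)$ (the sup-preserving map of the usc clamp function); then $(\neg f(v))(1/2) = (1, \infty)$ while $\neg\big(f(v)(1/2)\big) = (1/2, \infty)$, and $(1,\infty) \not\supseteq (1/2,\infty)$. So no analysis of the sup/inf interaction can close your sublemma: the inclusion is false as stated. You should also know that the paper's own proof stumbles at precisely this point: it asserts $\bigvee_{p < q} \bigwedge_{r \geq p} \phi[f](r) \nrightarrow \bot = \bigvee_{p < q} \phi[f](p) \nrightarrow \bot$, silently applying the simplification of Lemma \ref{adj} — valid only for non-decreasing integrands — to this non-increasing one. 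Your instinct that this is "the genuine cost of having the regularisation $\ouv{(\cdot)}$ built into the residual" was therefore better founded than the paper's one-line treatment suggests; repairing the statement (and its downstream use in Theorem \ref{involutive correspondence}) would require either restricting the claim or replacing $(\cdot) \nrightarrow \bot$ by the continuous negation $\ul 1 \- (\cdot)$, whose relation to $\neg$ in $\lcal$ passes through the order-reversing substitution $q \mapsto 1 - q$ rather than a pointwise comparison at $q$.
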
 

\begin{proof} To begin with, it is clear that both $E_{lax}$ and $E_{colax}$ contain the variables and constants. Let $f\colon \vcal \rightarrow \USC$. We here remind the reader that the interpretation of $\vee$ in $\USC$ is $\wedge$, and $\wedge$ is defined as the pointwise upper bound. 

\begin{enumerate} 

\item Let $\phi \et \psi \in E_{colax}$ and $q \in [0,1]$. 

$(\phi \otimes \psi)[f](q) = \phi[f](q) \otimes \psi[f](q) \geq \phi[f(q)] \otimes \psi[f(q)] = (\phi \otimes \psi) [f(q)]$ (\bf{Definition} \ref{defi otimes}). 

$(\phi \vee \psi)[f](q) = \phi[f](q) \vee \psi[f](q) \geq \phi[f(q)] \vee \psi[f(q)] = (\phi \vee \psi) [f(q)]$ (by \bf{Lemma} \ref{def wedge et vee}). 

\item Let $\phi \et \psi \in E_{colax}$ and $q \in [0,1]$. 

$(\phi \otimes \psi)[f](q) = \phi[f](q) \otimes \psi[f](q) \leq \phi[f(q)] \otimes \psi[f(q)] = (\phi \otimes \psi) [f(q)]$. 

$(\phi \vee \psi)[f](q) = \phi[f](q) \vee \psi[f](q) \leq \phi[f(q)] \vee \psi[f(q)] = (\phi \vee \psi) [f(q)]$. 

$(\phi \wedge \psi)[f](q) \leq \phi[f](q) \wedge \psi[f](q) \leq \phi[f(q)] \wedge \psi[f(q)] = (\phi \wedge \psi) [f(q)]$. 

\item Let $\phi \in E_{lax}$, $\psi \in E_{colax}$ and $q \in [0,1]$. \blue{$$(\psi \nrightarrow \phi)[f](q) \leq \psi[f](q) \nrightarrow \phi[f](q) \leq \psi[f(q)] \nrightarrow \phi[f(q)]\text{ (by \bf{Lemma} \ref{lem nrightarrow})}.$$} 

\end{enumerate} 
\end{proof} 

\begin{thm} \label{de lcal à USC(lcal)} 

Let $\phi_0, \ldots, \phi_k \in E_{lax}$ and $\psi_0, \ldots, \psi_k \in E_{colax}$. 

If $\lcal \models (\psi_1 \leq \phi_1 , \ldots , \psi_k \leq \phi_k) \Rightarrow \phi_0 \leq \psi_0$, then \\ $\USC \models (\psi_1 \geq \phi_1 , \ldots , \psi_k \geq \phi_k) \Rightarrow \phi_0 \geq \psi_0$. 

\end{thm}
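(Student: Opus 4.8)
The plan is to argue pointwise, reducing the claimed conditional inequality in $\USC$ to the one assumed in $\lcal$ via the evaluation maps $\mathrm{ev}_q \colon \USC \rightarrow \lcal$, $g \mapsto g(q)$. So I would fix a valuation $f \colon \vcal \rightarrow \USC$ satisfying the premises $\psi_i \geq \phi_i$ in $\USC$ for $i \in \{1, \ldots, k\}$, and fix $q \in [0,1]$. The crucial observation is that, since $\USC$ carries the reverse of the pointwise order (so that $a \geq b$ in $\USC$ means $a(q) \leq b(q)$ in $\lcal$ for every $q$), the premise $\psi_i \geq \phi_i$ unwinds to $\psi_i[f](q) \leq \phi_i[f](q)$ in $\lcal$, and the desired conclusion $\phi_0 \geq \psi_0$ amounts to $\phi_0[f](q) \leq \psi_0[f](q)$ for every $q$. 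It therefore suffices, for the fixed $q$, to establish the single inequality $\phi_0[f](q) \leq \psi_0[f](q)$ in $\lcal$, applied to the valuation $f(q) \colon v \mapsto f(v)(q)$.

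First I would transport the premises down to $\lcal$. For each $i \in \{1, \ldots, k\}$, chaining the colax bound for $\psi_i \in E_{colax}$, the pointwise form of $\psi_i \geq \phi_i$, and the lax bound for $\phi_i \in E_{lax}$ gives
\[
\psi_i[f(q)] \leq \psi_i[f](q) \leq \phi_i[f](q) \leq \phi_i[f(q)],
\]
so that $\psi_i[f(q)] \leq \phi_i[f(q)]$ holds in $\lcal$. Thus the $\lcal$-valuation $f(q)$ satisfies every premise of the quasi-inequality assumed valid in $\lcal$.

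Next I would invoke the hypothesis $\lcal \models (\psi_1 \leq \phi_1, \ldots, \psi_k \leq \phi_k) \Rightarrow \phi_0 \leq \psi_0$ for the valuation $f(q)$, which yields $\phi_0[f(q)] \leq \psi_0[f(q)]$ in $\lcal$. Finally I would transport this conclusion back up to $\USC$ at the point $q$ by the same device, now using $\phi_0 \in E_{lax}$ and $\psi_0 \in E_{colax}$:
\[
\phi_0[f](q) \leq \phi_0[f(q)] \leq \psi_0[f(q)] \leq \psi_0[f](q).
\]
As $q \in [0,1]$ was arbitrary, this gives $\phi_0[f](q) \leq \psi_0[f](q)$ for all $q$, i.e. $\phi_0 \geq \psi_0$ in $\USC$, as required.

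The argument is essentially a diagram chase, and I do not expect a genuine obstacle: it uses only the defining properties of $E_{lax}$ and $E_{colax}$ together with the hypothesis. The one thing demanding care is the bookkeeping of the order reversal between $\USC$ and the pointwise order of functions into $\lcal$, so that each $\geq$ in $\USC$ is read as a pointwise $\leq$ in $\lcal$ and the lax/colax membership of each $\phi_j$ and $\psi_j$ is invoked in the correct direction; this is precisely what makes the two transport steps (premises downward, conclusion upward) go through.
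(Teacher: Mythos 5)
Your proof is correct and follows essentially the same route as the paper's: evaluate at each $q \in [0,1]$, transport the premises down to $\lcal$ via the chain $\psi_i[f(q)] \leq \psi_i[f](q) \leq \phi_i[f](q) \leq \phi_i[f(q)]$, apply the quasi-inequality in $\lcal$ to the valuation $v \mapsto f(v)(q)$, and transport the conclusion back up using the lax/colax bounds for $\phi_0$ and $\psi_0$. Your bookkeeping of the order reversal is handled exactly as in the paper (and your final chain is in fact stated more carefully than the paper's, which contains a small typo in its last displayed inequality).
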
 

\begin{proof} 

Let $f\colon \vcal \rightarrow \USC$, $q \in [0,1]$ $\phi_0, \ldots, \phi_k \in E_{lax}$ and $\psi_0, \ldots, \psi_k \in E_{colax}$ such that $\lcal \models (\psi_1 \leq \phi_1 , \ldots , \psi_k \leq \phi_k) \Rightarrow \phi_0 \leq \psi_0$. Assume $\USC \models \psi_1[f] \geq \phi_1[f], \ldots, \psi_k[f] \geq \phi_k[f]$. 

For all $1 \leq i \leq k$, $\psi_i[f(q)] \leq \psi_i[f](q) \leq \phi_i[f](q) \leq \phi_i[f(q)]$, so $$\lcal \models (\psi_1[f(q)] \leq \phi_1[f(q)] , \ldots , \psi_k[f(q)] \leq \phi_k[f(q)].$$ Thus $\lcal \models \phi_0[f(q)] \leq \psi_0[f(q)].$ Hence, $\phi_0[f](q) \leq \phi_0[f(q)] \leq \psi_0[f](q) \leq \psi_0[f](q)$. 

Hence $\USC \models \phi_0[f] \geq \psi_0[f]$. 
\end{proof} 

We can also embed $\lcal$ into $\USC$. 

\begin{defi} 

To each $U \in \lcal$, we associate the \emph{0-indicator} $\fct[0_U][{[0,1]}, \lcal]{q, \acc{U, {1} > q \wa 0, \bot, q \nwa 0}}$. 

To each $f \in \USC$, we associate $U_f \colon = f(1)$. 

\end{defi} 

\begin{lem} 

For every $U \in \lcal$, $0_U \in \USC$, and, for all $f \in \USC$, $$U_f \leq U \Leftrightarrow f \geq 0_U.$$ 

Moreover, for all $U \in \lcal$, $U_{0_U} = U$, so $U \mapsto 0_U$ is an order embedding of $\lcal$ into $\USC^{op}$ and $f \mapsto U_f$ is onto. 

\end{lem}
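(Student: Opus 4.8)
The plan is to read off all four assertions directly from the definitions, using repeatedly that every $f \in \USC$ is sup-preserving, hence order-preserving and satisfying $f(0) = \bot$ (the value of $f$ on the empty supremum).

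First I would verify that $0_U$ is sup-preserving, so that $0_U \in \USC$. Given a family $(q_i)_{i \in I}$ in $[0,1]$, I would distinguish whether $\bigvee_{i \in I} q_i = 0$ or $\bigvee_{i \in I} q_i > 0$. In the first case every $q_i = 0$, so $0_U\left(\bigvee_{i \in I} q_i\right) = \bot = \bigvee_{i \in I} 0_U(q_i)$; in the second case some $q_{i_0} > 0$, and then $0_U\left(\bigvee_{i \in I} q_i\right) = U = 0_U(q_{i_0}) \leq \bigvee_{i \in I} 0_U(q_i) \leq U$, so the two sides agree. This is the same computation as the one showing $\ul 0 = 0_\top \in \USC$ in \bf{Lemma} \ref{def wedge et vee}.

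Next I would prove the equivalence $U_f \leq U \Leftrightarrow f \geq 0_U$. Recall that the order on $\USC$ is the reverse of the pointwise order, so $f \geq 0_U$ means $\forall q \in [0,1]\; f(q) \leq 0_U(q)$, and that $U_f = f(1)$ while $0_U(1) = U$. For the implication $(\Leftarrow)$ it suffices to evaluate at $q = 1$, giving $f(1) \leq 0_U(1) = U$. For $(\Rightarrow)$, assume $f(1) \leq U$; at $q = 0$ both $f(0)$ and $0_U(0)$ equal $\bot$, and for $q > 0$ monotonicity of $f$ yields $f(q) \leq f(1) \leq U = 0_U(q)$, which is the required inequality. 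The identity $U_{0_U} = U$ is then immediate, since $U_{0_U} = 0_U(1) = U$.

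It remains to deduce the two structural consequences. The equivalence just proved says exactly that the map $f \mapsto U_f$, viewed from $\USC^{op}$ to $\lcal$, is left adjoint to $U \mapsto 0_U$, and $U_{0_U} = U$ says that the corresponding counit is an equality; by the standard yoga of poset adjunctions this forces $U \mapsto 0_U$ to be an order embedding and $f \mapsto U_f$ to be surjective. Should one prefer to avoid adjunction language, surjectivity is witnessed directly by $U = U_{0_U}$, and the embedding property amounts to checking that $0_U \leq_{\USC^{op}} 0_V$, i.e.\ $\forall q\; 0_U(q) \leq 0_V(q)$, holds if and only if $U \leq V$, which reduces at once to the values at $q > 0$. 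I do not expect a genuine obstacle here: the only points demanding care are the boundary behaviour at $q = 0$ and consistently tracking the order reversal between $\USC$ and $\USC^{op}$.
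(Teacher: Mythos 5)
Your proof is correct and takes essentially the same route as the paper's: the central equivalence $U_f \leq U \Leftrightarrow f \geq 0_U$ is established by the same pointwise comparison, using monotonicity of $f$ together with $0_U(q) = U$ for $q > 0$ and the empty-supremum value $f(0) = \bot$ at $q = 0$, and $U_{0_U} = 0_U(1) = U$ is read off identically. You are simply more explicit than the paper, which omits the routine verification that $0_U$ is sup-preserving and treats the order-embedding and surjectivity statements as immediate consequences (your adjunction remark is a clean way of making that implication precise).
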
 

\begin{proof} 

For every $U \in \lcal \et f \in \USC$, $$U_f \leq U \Leftrightarrow \forall q \in {[0,1]}\, f(q) \leq U \Leftrightarrow \forall q \in {[0,1]}\, f(q) \leq 0_U(q) \Leftrightarrow f \geq 0_U.$$ 

Moreover, for all $U \in \lcal$, $U_{0_U} = 0_U(1) = U$. 
\end{proof} 

\begin{lem} \label{Crl embedding} 

$\fct[\lcal^{op}, \USC]{U, 0_U }$ is an $\mathcal{L}_{crl}$-embedding. 

\end{lem}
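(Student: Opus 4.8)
The plan is to verify directly that $U \mapsto 0_U$ commutes with each of the six symbols of $\mathcal{L}_{crl} = \{\otimes, \nrightarrow, \vee, \wedge, \bot, \top\}$; well-definedness ($0_U \in \USC$) and injectivity are already granted by the previous lemma, which exhibits $U \mapsto 0_U$ as an order embedding of $\lcal$ into $\USC^{op}$, equivalently of $\lcal^{op}$ into $\USC$. I will use throughout the explicit values of the indicator, $0_U(q) = U$ for $q > 0$ and $0_U(0) = \bot$, together with the descriptions $f \otimes g = \ouv{(q \mapsto f(q) \otimes g(q))}$, $f \nrightarrow g = \ouv{(q \mapsto f(q) \nrightarrow g(q))}$, $f \vee g = \ouv{(q \mapsto f(q) \wedge g(q))}$ and $f \wedge g = \ouv{(q \mapsto f(q) \vee g(q))}$ from the lemma following \textbf{Theorem} \ref{USC is a crcl}, remembering that passing to $\lcal^{op}$ interchanges the roles of $\vee$ and $\wedge$ and of $\bot$ and $\top$.

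First I would dispose of the constants and the pointwise symbols. The indicator $0_\top$ takes the value $\top$ on $(0,1]$ and $\bot$ at $0$, so $0_\top = \ul 0$, the bottom of $\USC$, while $0_\bot$ is constantly $\bot$, hence $0_\bot = \ul 1$, the top of $\USC$; read in $\lcal^{op}$ these give preservation of $\bot$ and $\top$. For $\otimes$, $\vee$ and $\wedge$ the inner function is already a $0$-indicator: one has $q \mapsto 0_U(q) \otimes 0_V(q) = 0_{U \otimes V}$ (using $\bot \otimes \bot = \bot$), $q \mapsto 0_U(q) \wedge 0_V(q) = 0_{U \wedge V}$ and $q \mapsto 0_U(q) \vee 0_V(q) = 0_{U \vee V}$. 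Since each of these already lies in $\USC$, the closure $\ouv{(\cdot)}$ acts as the identity by \textbf{Lemma} \ref{adj}, and we obtain $0_U \otimes 0_V = 0_{U \otimes V}$, $0_U \vee 0_V = 0_{U \wedge V}$ and $0_U \wedge 0_V = 0_{U \vee V}$, which is exactly preservation of $\otimes$, $\vee$ and $\wedge$ once $\vee$ and $\wedge$ are interpreted in $\lcal^{op}$.

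The residual is the delicate case, and the one I expect to be the main obstacle, because $q \mapsto 0_U(q) \nrightarrow 0_V(q)$ is \emph{not} a $0$-indicator: its value at $q = 0$ is $\bot \nrightarrow \bot = \top$ instead of $\bot$, and it need not even be non-decreasing (that would force $U \nrightarrow V = \top$). I therefore cannot use the non-decreasing shortcut for $\ouv{(\cdot)}$ and must work with the full formula $(0_U \nrightarrow 0_V)(q) = \bigvee[p \wb q] \bigwedge[r \geq p] \big(0_U(r) \nrightarrow 0_V(r)\big)$ supplied by \textbf{Lemma} \ref{adj}. The crux is that the inner meet absorbs the anomalous value at $0$: for every $p \in [0,1]$ one has $\bigwedge[r \geq p] \big(0_U(r) \nrightarrow 0_V(r)\big) = U \nrightarrow V$, since for $p > 0$ all indices $r \geq p$ lie in $(0,1]$ and contribute $U \nrightarrow V$, while for $p = 0$ the single extra term $\bot \nrightarrow \bot = \top$ is swallowed by the meet.

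Taking the supremum over $p \wb q$ then yields $U \nrightarrow V$ when $q > 0$ and the empty supremum $\bot$ when $q = 0$, that is precisely $0_{U \nrightarrow V}$. Hence $0_U \nrightarrow 0_V = 0_{U \nrightarrow V}$, and combining this with the previous paragraph completes the proof that $U \mapsto 0_U$ preserves all of $\mathcal{L}_{crl}$ and, being injective, is therefore an $\mathcal{L}_{crl}$-embedding of $\lcal^{op}$ into $\USC$.
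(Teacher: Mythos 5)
Your proof is correct and follows essentially the same route as the paper's: a symbol-by-symbol pointwise verification using the explicit values of the $0$-indicators, with the residual handled via the formula for $\ouv{(\cdot)}$ from \textbf{Lemma} \ref{adj}, where the anomalous value $\bot \nrightarrow \bot = \top$ at $0$ is absorbed by the inner meet (the paper achieves the same by restricting the supremum to $0 < p < q$). Your extra care in checking $q = 0$ and in observing that the pointwise functions for $\otimes$, $\vee$, $\wedge$ are already in $\USC$, so that the closure acts as the identity, only makes explicit what the paper leaves implicit.
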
 

\begin{proof} 

$0_\bot = \ul 1$. 

$0_\top = \ul 0$. 

Let $U \et V \in \lcal$ and $q > 0$. Since $q > 0$, $0_U(q) = U \et 0_V(q) = V$. 

$0_{U \otimes V} (q) = U \otimes V = 0_U(q) \otimes 0_V(q) = (0_U \otimes 0_V)(q)$. 

$0_{U \wedge V} (q) = U \wedge V = 0_U(q) \wedge 0_V(q) = (0_U \vee 0_V)(q)$. 

$0_{U \vee V} (q) = U \vee V = 0_U(q) \vee 0_V(q) = (0_U \wedge 0_V)(q)$. 

$0_U \nrightarrow 0_V (q) = \bigvee[p \wb q] \; \bigwedge[r \geq p] 0_U(r) \nrightarrow 0_V(r) = \bigvee[0 \wb p \wb q] \; \bigwedge[r \geq p] U \nrightarrow V = U \nrightarrow V = 0_{U \nrightarrow V} (q)$. 
\end{proof} 

Now, we shall state a strong converse of \bf{Theorem} \ref{de lcal à USC(lcal)}, which is an immediate consequence of \bf{Lemma} \ref{Crl embedding}. 

\begin{thm} \label{first half} 

For all terms $\phi_0, \ldots, \phi_k$ and $\psi_0, \ldots, \psi_k$ in the language $\mathcal{L}_{crl}$, if $\USC$ satisfies $(\psi_1 \geq \phi_1 , \ldots , \psi_k \geq \phi_k) \Rightarrow \phi_0 \geq \psi_0$, then $\lcal \models (\psi_1 \leq \phi_1 , \ldots , \psi_k \leq \phi_k) \Rightarrow \phi_0 \leq \psi_0$. 

\end{thm}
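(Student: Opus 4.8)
The statement is the exact converse of \textbf{Theorem}~\ref{de lcal à USC(lcal)}, and, as the text announces, the whole point is to push the $\mathcal{L}_{crl}$-embedding $U \mapsto 0_U$ of \textbf{Lemma}~\ref{Crl embedding} up to the level of terms. So the first step I would take is to prove, by induction on the structure of an $\mathcal{L}_{crl}$-term $\phi$, that for every valuation $v \colon \vcal \rightarrow \lcal$ and the induced valuation $\bar v \colon x \mapsto 0_{v(x)}$ into $\USC$ one has $\phi[\bar v] = 0_{\phi[v]}$, where $\phi[v]$ denotes the interpretation of $\phi$ in $\lcal$. The base cases (variables and the constants $\bot, \top$) and the induction steps for $\otimes$, $\vee$, $\wedge$ and $\nrightarrow$ are each read off from the computations already carried out in \textbf{Lemma}~\ref{Crl embedding}; equivalently, this is just the assertion that the homomorphism $U \mapsto 0_U$ commutes with term formation. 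The only mild care needed is at the endpoints: for the cases involving $\ouv{(\cdot)}$ (that is, $\wedge$ and $\nrightarrow$) one evaluates at a point $q > 0$ and uses that every element of $\USC$ is non-decreasing and that $0_U$ is constant equal to $U$ on $(0,1]$, so the suprema and infima over $p \wb q$ and $r \geq p$ collapse to the pointwise value; sup-preservation then forces the value $\bot$ at $q = 0$.

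Once this is established the theorem is immediate. I would take a valuation $v$ satisfying the hypotheses $\psi_i[v] \leq \phi_i[v]$ in $\lcal$ for $1 \leq i \leq k$, and feed $\bar v = 0_{v(\cdot)}$ into the assumed quasi-inequality of $\USC$. By the first step $\phi_i[\bar v] = 0_{\phi_i[v]}$ and $\psi_i[\bar v] = 0_{\psi_i[v]}$; evaluating at any $q > 0$ turns $\psi_i[v] \leq \phi_i[v]$ into $\psi_i[\bar v](q) \leq \phi_i[\bar v](q)$, and this holds trivially at $q = 0$ as well, where both sides equal $\bot$. In the order convention of $\USC$, in which $f \geq g$ means $g(q) \leq f(q)$ fails, i.e. $f(q) \leq g(q)$ pointwise, this is exactly the premise $\psi_i[\bar v] \geq \phi_i[\bar v]$. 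The hypothesis $\USC \models (\psi_1 \geq \phi_1 , \ldots , \psi_k \geq \phi_k) \Rightarrow \phi_0 \geq \psi_0$ then yields $\phi_0[\bar v] \geq \psi_0[\bar v]$, that is $\phi_0[\bar v](q) \leq \psi_0[\bar v](q)$ for all $q$. Reading this off at $q = 1$ (equivalently, applying $f \mapsto U_f = f(1)$, which satisfies $U_{0_U} = U$) gives $\phi_0[v] \leq \psi_0[v]$ in $\lcal$, which is the desired conclusion.

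The reason the inequalities reverse direction between $\lcal$ and $\USC$ is precisely that $U \mapsto 0_U$ is an order embedding of $\lcal$ into $\USC^{op}$, as recorded just before \textbf{Lemma}~\ref{Crl embedding}. I do not expect any genuine obstacle: all the content sits in \textbf{Lemma}~\ref{Crl embedding}, and what remains is the bookkeeping of the term identity $\phi[\bar v] = 0_{\phi[v]}$ together with the elementary fact that an embedding reflects every quasi-inequality. If one prefers to avoid the explicit induction, the whole argument can be phrased structurally: $0_{(\cdot)}$ identifies $\lcal^{op}$ with a subalgebra of $\USC$, and a quasi-inequality holding throughout $\USC$ in particular holds on that subalgebra, which is exactly the assertion to be proved.
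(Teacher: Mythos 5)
Your proposal is correct and takes exactly the paper's route: the paper states \textbf{Theorem}~\ref{first half} as an immediate consequence of \textbf{Lemma}~\ref{Crl embedding}, and you have simply made explicit the routine induction showing that the embedding $U \mapsto 0_U$ commutes with term formation and therefore reflects quasi-inequalities. The only blemish is a wording slip (``$f \geq g$ means $g(q) \leq f(q)$ fails'') where you evidently intend that $f \geq g$ in $\USC$ means $f(q) \leq g(q)$ pointwise, which is what your argument actually uses.
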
 

\subsection{The structure inherited from $[0,1]$ by convolution} \label{subsection structure de [0,1]} 

\subsubsection{The action of $\Lusc$ on $USC(\lcal)$} 

Here, we want to give an interpretation of any upper semi-continuous function from $[0,1]_u^n$ to $[0,1]$, $n \in \nn$, in $\USC$ and give some kind of formulas in this language that are true in $\USC$ if they are true in $[0,1]$ (\bf{Theorem} \ref{formule formules}). To this purpose, we use a copairing-like notion. We will then aim at reducing the language to only a few symbols. Before getting to the heart of the matter, we need some notations. 

\begin{nota} \label{nota a_phi}

We recall here that the topology of $[0,1]_u$ is denoted $\Tcal([0,1]_u)$ and the topology of $[0,1]_u^n$ is denoted $\Tcal([0,1]_u^n)$, for all $n \in \nn$ (\textup{\bf{Notation} \ref{locale}}). 

Let $n \in \nn$. For all $a \in \Lusc[n]$, let's denote by $a^*$ the sup-preserving function from $[0,1]$ to $\Tcal([0,1]_u^n)$ induced by $a$, namely $\fct[a^*][{[0,1]}, \Tcal({[0,1]}_u^n)]{q, a^{-1}({[0,q)})}$. \\ For all term $\phi$ in the language $\Lusc$, let $a_\phi$ be its interpretation in $[0,1]$. For all $f \in \USC^n$, let $G(f) = (G(f_1), \ldots, G(f_n))$ (cf. \bf{Corollary} \ref{corscale}). 

Finally, let $\vcal$ denote a countable set of variables. For all $v \in \vcal^n$ and $p\colon \vcal \rightarrow [0,1]$ and $f\colon \vcal \rightarrow \USC$, let $p_v = (p(v_1), \ldots, p(v_n))$ and $f_v = (f(v_1), \ldots, f(v_n))$. 

\end{nota} 

\begin{defi} 

Let $n \in \nn$, $X_1, \ldots, X_n$ be topological spaces , $\kcal$ be a commutative residuated complete lattice and $f_1 \colon \Tcal(X_1) \rightarrow \kcal, \ldots, f_n \colon \Tcal(X_n) \rightarrow \kcal$ be morphisms of commutative residuated complete lattices. 

We define $\fct[\cop(f_1, \ldots, f_n)][\Tcal\left(\prod[i = 1][n] X_i\right), \kcal]{U, \bigvee[U_1 \times \ldots \times U_n \subset U] f_1(U_1) \otimes \ldots \blue\otimes f_n(U_n)}$. 

\end{defi} 

\begin{lem} 

$\cop(f_1, \ldots, f_n)$ is the smallest lax morphism of commutative residuated complete lattices $g\colon \Tcal\left(\prod[i = 1][n] X_i\right) \rightarrow \kcal$ such that, for all $1 \leq i \leq n$, $$g(X_1 \times \ldots \times X_{i-1} \times U_i \times X_{i+1} \times \ldots \times X_n) \geq f_i(U_i).$$ Moreover, for $g = \cop(f_1, \ldots, f_n)$, the preceding inequality is an equality. 

\end{lem}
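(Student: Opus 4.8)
The plan is to verify separately that $g_0 := \cop(f_1, \ldots, f_n)$ (i) is a lax morphism of commutative residuated complete lattices, (ii) satisfies the displayed inequalities with equality, and (iii) lies below every lax morphism meeting those inequalities; together these give that $g_0$ is the smallest such morphism. Throughout I use that the boxes (rectangles) $U_1 \times \cdots \times U_n$ form a basis of the product topology $\Tcal(\prod_i X_i)$, that $\otimes$ on $\kcal$ distributes over arbitrary joins in each argument (a consequence of residuation), and that each $f_i$ is multiplicative, i.e. $f_i(U \cap V) = f_i(U) \otimes f_i(V)$, preserves joins, and sends $X_i$ to $\top$. Write $\text{val}(B) = \bigotimes_i f_i(U_i)$ for a box $B = \prod_i U_i$.

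The formal parts come first. For the equality on cylinders, the boxes contained in $C_i = X_1 \times \cdots \times U_i \times \cdots \times X_n$ are exactly the $\prod_k W_k$ with $W_i \subseteq U_i$; since $f_k(W_k) \leq f_k(X_k) = \top$ for $k \neq i$, each such box has value at most $f_i(U_i)$, while the box $C_i$ itself attains it, so $g_0(C_i) = f_i(U_i)$. For $\otimes$-laxity (recall $\otimes = \cap$ on the product frame), distributing $\otimes$ over the two defining joins gives
\[
g_0(U) \otimes g_0(V) = \bigvee_{B \subseteq U,\ B' \subseteq V} \text{val}(B) \otimes \text{val}(B'),
\]
and by commutativity and multiplicativity $\text{val}(B) \otimes \text{val}(B') = \bigotimes_i f_i(U_i^B \cap U_i^{B'}) = \text{val}(B \cap B')$, where $B \cap B'$ is a box contained in $U \cap V$; hence $g_0(U) \otimes g_0(V) \leq g_0(U \cap V)$. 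Finally $g_0(\prod_i X_i) = \top$ is witnessed by the box $\prod_i X_i$. For minimality, let $g$ be any lax morphism with $g(C_i) \geq f_i(U_i)$ for every cylinder. A box $B = \prod_i U_i$ is the intersection $\bigcap_i C_i$ of its cylinders, so laxity yields $g(B) \geq \bigotimes_i g(C_i) \geq \bigotimes_i f_i(U_i) = \text{val}(B)$; since $g$ is sup-preserving and every open is a union of boxes, $g(U) \geq \bigvee_{B \subseteq U} \text{val}(B) = g_0(U)$.

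The crux is that $g_0$ is itself sup-preserving. Monotonicity is immediate and gives $g_0(\bigcup_j V_j) \geq \bigvee_j g_0(V_j)$; the substance is the reverse inequality. Each box $B \subseteq \bigcup_j V_j$ can be written, after intersecting with the $V_j$ and refining into basic opens, as a union $B = \bigcup_m C_m$ of sub-boxes each contained in some $V_j$, so it suffices to prove the \emph{rectangle-cover estimate}: whenever a box $B$ equals a union $\bigcup_m C_m$ of sub-boxes, $\text{val}(B) \leq \bigvee_m \text{val}(C_m)$. Granting this, $\text{val}(B) \leq \bigvee_m \text{val}(C_m) \leq \bigvee_j g_0(V_j)$, and taking the join over all $B \subseteq \bigcup_j V_j$ gives $g_0(\bigcup_j V_j) \leq \bigvee_j g_0(V_j)$.

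I expect the rectangle-cover estimate to be the main obstacle, since this is exactly the point where the combinatorics of the product topology enter. The approach is to exploit join-preservation of the $f_i$ together with the distributivity of $\otimes$ over joins to \emph{separate the variables}: one replaces the cover by a cofinal family of boxes built coordinatewise, reduces to the one-variable identity $f_i(U_i) = \bigvee f_i(W_i)$, and controls the cross terms $\text{val}(C_m) \otimes \text{val}(C_{m'})$ via multiplicativity and a directedness argument on the refining family. This step requires genuine care, as it relies not only on the boxes forming a basis of $\Tcal(\prod_i X_i)$ but also on covers of a box by sub-boxes being compatible with the join-structure transported by the $f_i$ into $\kcal$, which is where the specific nature of the spaces at hand is used; the remaining three points are formal consequences of residuation and multiplicativity.
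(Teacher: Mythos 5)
Your three ``formal'' steps coincide with the paper's own proof: the paper likewise computes $\cop(f_1,\ldots,f_n)$ on boxes to obtain the equality on cylinders, proves laxity by intersecting boxes (note that only the lax inequality $f_i(U\cap V)\geq f_i(U)\otimes f_i(V)$ is used there, and only laxity is available in the paper's application, where the $f_i$ are Raney transforms; your multiplicativity hypothesis is stronger than what you have), and proves minimality exactly as you do, writing a box as the intersection of its cylinders and using sup-preservation of the competitor $g$. The genuine gap is precisely where you anticipated it: the rectangle-cover estimate is never proved. Your reduction of sup-preservation to that estimate is correct, but the closing sketch is not a proof, and as described it would fail: a cover of a box by sub-boxes does not in general refine to anything ``built coordinatewise'', so there is no one-variable identity to reduce to, and the cross terms are not governed by multiplicativity or directedness. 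The simplest obstruction is a staircase cover such as $X_1\times X_2=(A\times X_2)\cup(X_1\times B')\cup(A'\times B)$ with $A\cup A'=X_1$ and $B\cup B'=X_2$: the required inequality $\top\leq f_1(A)\vee f_2(B')\vee\bigl(f_1(A')\otimes f_2(B)\bigr)$ is obtained by distributing $\top=(f_1(A)\vee f_1(A'))\otimes(f_2(B)\vee f_2(B'))$ and then absorbing three of the four cross terms using \emph{integrality} ($u\otimes v\leq u\wedge v$, since the unit is $\top$) --- an ingredient your sketch never invokes.

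For fairness of comparison: the paper offers no argument at this point either (it states that $\cop(f_1,f_2)$ ``clearly preserves upper bounds''), so you have correctly isolated the one step with real mathematical content, and your skepticism is warranted. For the spaces the paper actually uses (powers of $[0,1]_u$, whose topologies are chains), the estimate can be proved by interpolation: since $f_i([0,q))=\bigvee_{r<q}f_i([0,r))$, distributivity of $\otimes$ over joins writes $\mathrm{val}(B)$ as a join of terms indexed by points $(r_1,\ldots,r_n)$ of the box $B$; each such point lies in some covering box $C_m$, and monotonicity bounds the corresponding term by $\mathrm{val}(C_m)$. This pointwise anchoring is what your ``cofinal coordinatewise family'' should have been, but it genuinely uses the chain structure (or some way-below-type interpolation in the topologies), not just join-preservation of the $f_i$. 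For arbitrary spaces the estimate is delicate: taking $\kcal$ to be the frame coproduct $\Tcal(X)\oplus\Tcal(Y)$ and the $f_i$ the coproduct injections, it asserts that every spatial box-cover relation holds in the coproduct, which runs into the classical failure of spatiality of localic products (e.g.\ $X=Y=\mathbb{Q}$). So to complete your proof you must either restrict the spaces or supply an interpolation argument of the above kind; the mechanism you propose will not close the gap.
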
 

\begin{proof} 

We prove it for $n = 2$ only, the proof being the same in general. 

Let $X_1$ and $X_2$ be topological spaces and $f_1 \colon \Tcal(X_1) \rightarrow \kcal$ and $f_2 \colon \Tcal(X_2) \rightarrow \kcal$ be morphisms of commutative residuated complete lattices. For all $U_1 \in \Tcal(X_1) \et U_2 \in \Tcal(X_2)$, \linebreak $\cop(f_1, f_2)(U_1 \times U_2) = f_1(U_1) \otimes f_2(U_2)$, so $\cop(f_1, f_2)(U_1 \times X_2) = f_1(U_1)$ and \linebreak $\cop(f_1, f_2)(X_1 \times U_2) = f_2(U_2)$. 

$\cop(f_1, f_2)$ clearly preserves upper bounds. 

For all $U \et V \in \Tcal(X_1 \times X_2)$, \begin{align*} 
\cop(f_1,f_2) (U \cap V) &= \bigvee[U_1 \times U_2 \subset U \cap V] f_1(U_1) \otimes f_2(U_2)\\ 
&= \bigvee[U_1 \times U_2 \subset U] \bigvee[V_1 \times V_2 \subset V] f_1(U_1 \cap V_1) \otimes f_2(U_2 \cap V_2)\\ 
&\geq \bigvee[U_1 \times U_2 \subset U] \; \bigvee[V_1 \times V_2 \subset V] f_1(U_1) \otimes f_1(V_1) \otimes f_2(U_2) \otimes f_2(V_2)\\ 
&= \cop(f_1, f_2) (U) \otimes \cop(f_1, f_2) (V). 
\end{align*} 

For all lax morphism of commutative residuated complete lattice $g \colon \Tcal\left(X_1 \times X_2\right) \rightarrow \kcal$ such that, for all $U_1 \in \Tcal(X_1)$ and $U_2 \in \Tcal(X_2)$, $g(U_1 \times X_2) \geq f_1(U_1)$ and $g(X_1 \times U_2) \geq f_2(U_2)$, for all $U \in \Tcal(X_1 \times X_2)$, \begin{align*} g(U) &= \bigvee[U_1 \times U_2 \subset U] g(U_1 \times U_2) 
\\ 
&= \bigvee[U_1 \times U_2 \subset U] g((U_1 \times X_2) \cap (X_1 \times U_2)) 
\\ 
&\geq \bigvee[U_1 \times U_2 \subset U] g(U_1 \times X_2) \otimes g(X_1 \times U_2) \\ 
&\geq \bigvee[U_1 \times U_2 \subset U] f_1(U_1) \otimes f_2(U_2) 
\\ 
&= \cop(f_1, f_2)(U). \end{align*}

\end{proof} 

\begin{cor} 

For all $n \in \nn$ and $f \in \USC^n$, $\cop(G(f)) \in \Crcl(\Tcal([0,1]_u^n), \lcal)$. 

\end{cor}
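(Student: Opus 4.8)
The plan is to read this statement as a direct specialization of the lemma immediately preceding it, so that the entire argument reduces to verifying that the tuple $G(f)$ is an admissible input for the operation $\cop$. Recall that $\cop$ was defined for a finite family of lax morphisms of commutative residuated complete lattices with a common codomain, and that the preceding lemma already shows that for any such family $h_1 \colon \Tcal(X_1) \to \kcal, \ldots, h_n \colon \Tcal(X_n) \to \kcal$ the map $\cop(h_1, \ldots, h_n)$ is itself a lax morphism of commutative residuated complete lattices $\Tcal(\prod_{i=1}^{n} X_i) \to \kcal$. Thus, once the inputs are seen to be of the correct type, there is nothing further to establish.

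First I would unwind the notation $G(f) = (G(f_1), \ldots, G(f_n))$ and recall that $f \in \USC^{n}$ means each component $f_i$ lies in $\USC = \Sup([0,1], \lcal)$. Since $[0,1]$ is trivially sup-dense in itself, Corollary \ref{corscale} applies with $D = [0,1]$, and it tells us precisely that $G$ maps $\Sup([0,1], \lcal)$ isomorphically onto $\Crcl(\Tcal([0,1]_u), \lcal)$. In particular each $G(f_i)$ is a lax morphism of commutative residuated complete lattices from $\Tcal([0,1]_u)$ to $\lcal$, which is exactly the hypothesis demanded of the inputs of $\cop$.

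I would then invoke the preceding lemma with $X_1 = \cdots = X_n = [0,1]_u$ and $\kcal = \lcal$. Using that the topological product satisfies $\prod_{i=1}^{n} [0,1]_u = [0,1]_u^{n}$, and hence $\Tcal(\prod_{i=1}^{n} [0,1]_u) = \Tcal([0,1]_u^{n})$, the lemma delivers that $\cop(G(f)) = \cop(G(f_1), \ldots, G(f_n))$ is a lax morphism of commutative residuated complete lattices from $\Tcal([0,1]_u^{n})$ to $\lcal$, i.e. that $\cop(G(f)) \in \Crcl(\Tcal([0,1]_u^{n}), \lcal)$, as required.

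I do not expect a genuine obstacle, because the substantive content---that $\cop$ of such a family preserves suprema and is lax for $\otimes$---is already discharged in the proof of the preceding lemma, where only the lax (inequality) property of the inputs, and not any strict preservation of meets, is actually used. The only points that require a moment of care are purely bookkeeping: confirming via Corollary \ref{corscale} that each $G(f_i)$ really is a lax morphism (and not merely a sup-preserving map), and identifying the product topology $\Tcal(\prod_{i=1}^{n} [0,1]_u)$ with $\Tcal([0,1]_u^{n})$ so that the codomain of $\cop(G(f))$ matches the one asserted in the statement.
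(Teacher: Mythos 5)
Your proposal is correct and matches the paper's intended argument: the corollary is stated without proof precisely because it is the instantiation of the preceding lemma at $X_1 = \cdots = X_n = [0,1]_u$, $\kcal = \lcal$, with each input $G(f_i)$ known to lie in $\Crcl(\Tcal([0,1]_u), \lcal)$ by \textbf{Corollary} \ref{corscale} applied with $D = [0,1]$. Your bookkeeping observations (that the lemma's proof only uses the lax property of the inputs, and the identification $\Tcal\bigl(\prod_{i=1}^{n}[0,1]_u\bigr) = \Tcal([0,1]_u^{n})$) are exactly the points being taken for granted in the paper.
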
 

\begin{defi} \label{interpretation} 

Let $n \in \nn$ and $a \in \Lusc[n]$. 

We define the interpretation of $a$ in $\USC$ by, for all $f \in \USC^n$, $a(f) = \cop(G(f)) \circ a^*$, i.e. for all $q \in [0,1]$, $a(f)(q) = \bigvee[\prod[i = 1][n] U_i \subset a^{-1}({[0,q)})] G(f_1)(U_1) \otimes \ldots \otimes G(f_n)(U_n)$. 

\end{defi} 

\begin{remark} \label{topological case} 

Notice that, if $\lcal$ is the topology of a topological space $X$, then, for all \linebreak $f \in USC(X)^n$ $\cop(f_1^*, \ldots, f_n^*) \colon \Tcal([0,1]_u^n) \rightarrow \lcal$ is the map associated to $f\colon X \rightarrow [0,1]^n$ through the isomorphism $USC(X) \simeq USC(\lcal)$, and thus $a(f) = (a \circ f)^*$. 

\end{remark} 

\begin{lem} \label{monotonie de l'action} 

Let $(a_i)_{i \in I} \in \Lusc[n]^I$ and $f \in \USC^n$. 

$\left(\bigwedge[i \in I] a_i\right)(f) = \bigwedge[i \in I] a_i(f)$. Thus, $a \mapsto a(f)$ is non-decreasing. 

\end{lem}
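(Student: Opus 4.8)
The plan is to read the interpretation as a composite and exploit that its right-hand factor preserves suprema. By \bf{Definition} \ref{interpretation}, the interpretation of $a \in \Lusc[n]$ is $a(f) = \Phi \circ a^*$, where $\Phi := \cop(G(f)) \colon \Tcal([0,1]_u^n) \rightarrow \lcal$ is a lax morphism of \crcls and hence sup-preserving. Since $\Phi$ depends only on $f$ and not on $a$, the whole question reduces to understanding how the meet $\bigwedge[i \in I] a_i$ behaves under $(\cdot)^*$, and then pushing the result through $\Phi$. Concretely, I would show that $(\cdot)^*$ turns the meet in $\Lusc[n]$ into a pointwise union of open sets, that $\Phi$ converts this union into a pointwise supremum in $\lcal$, and that such a pointwise supremum is exactly the meet computed in $\USC$ by \bf{Lemma} \ref{def wedge et vee}.

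First I would compute $\left(\bigwedge[i \in I] a_i\right)^*$. The meet in $\Lusc[n] = USC([0,1]_u^n)$ is the pointwise infimum of functions $x \mapsto \bigwedge[i \in I] a_i(x)$, which is again upper semi-continuous because its sublevel sets are unions of the open sublevel sets of the $a_i$. Using that an infimum is strictly below $q$ exactly when one of its terms is, for every $q \in [0,1]$ one obtains
\[
\left(\bigwedge[i \in I] a_i\right)^*(q) = \left\{x \, : \, \bigwedge[i \in I] a_i(x) < q\right\} = \bigcup[i \in I] \{x \, : \, a_i(x) < q\} = \bigcup[i \in I] a_i^*(q),
\]
so that $\left(\bigwedge[i \in I] a_i\right)^* = \bigvee[i \in I] a_i^*$ in $\Sup([0,1], \Tcal([0,1]_u^n))$, the join being computed pointwise as a union.

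It then remains to combine the two ingredients. For every $q \in [0,1]$, the sup-preservation of $\Phi$ gives
\[
\left(\bigwedge[i \in I] a_i\right)(f)(q) = \Phi\left(\bigcup[i \in I] a_i^*(q)\right) = \bigvee[i \in I] \Phi(a_i^*(q)) = \bigvee[i \in I] a_i(f)(q),
\]
and by \bf{Lemma} \ref{def wedge et vee} the right-hand side is precisely the value at $q$ of the meet $\bigwedge[i \in I] a_i(f)$ taken in $\USC$, whose defining property is to assign $\bigvee[i \in I] a_i(f)(q)$ to each $q$. This yields $\left(\bigwedge[i \in I] a_i\right)(f) = \bigwedge[i \in I] a_i(f)$. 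Monotonicity is then immediate from the binary case: if $a \leq b$ in $\Lusc[n]$ then $a = a \wedge b$, hence $a(f) = a(f) \wedge b(f) \leq b(f)$.

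I expect the only real obstacle to be keeping the order conventions straight. Because $\USC$ carries the reverse pointwise order, meets there are computed as pointwise suprema in $\lcal$, and likewise the identification $(\cdot)^*$ trades the pointwise infimum of functions for a pointwise union of opens; the whole argument works precisely because the sup-preserving map $\Phi$ sits between these two reversals and sends joins to joins. Once these identifications are aligned, the proof needs nothing beyond the elementary preimage identity for sublevel sets displayed above.
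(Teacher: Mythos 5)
Your proof is correct and follows essentially the same route as the paper's: write $a(f) = \cop(G(f)) \circ a^*$, observe that $\left(\bigwedge_{i \in I} a_i\right)^{\!*}(q) = \bigcup_{i \in I} a_i^*(q)$, push this through the sup-preserving map $\cop(G(f))$, and identify the resulting pointwise supremum in $\lcal$ with the meet in $\USC$. The only difference is that you make explicit the sublevel-set identity and the order-reversal conventions that the paper's computation leaves implicit, which is a harmless (indeed helpful) elaboration.
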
 

\begin{proof} 

For all $q \in [0,1]$, \begin{align*} \left(\bigwedge[i \in I] a_i\right)(f)(q) &= \cop(G(f)) \left(\left(\bigwedge[i \in I] a_i\right){}^*(q)\right) = \cop(G(f)) \left(\bigvee[i \in I] a_i^*(q)\right) 
\\ 
&= \bigvee[i \in I] \cop(G(f)) (a_i^*(q)) = \left(\bigwedge[i \in I] a_i(f)\right)(q), \end{align*} so $\left(\bigwedge[i \in I] a_i\right)(f) = \bigwedge[i \in I] a_i(f)$. 
\end{proof} 

\begin{lem} \label{operad action} 

For all $a \in \Lusc[n]$ and $b_1 \in USC([0,1]_u^{k_1}), \ldots, b_n \in \USC([0,1]_u^{k_n})$, for all $f \in \USC^{k_1} \times \ldots \times \USC^{k_n}$, $$(a \circ (b_1, \ldots, b_n))(f) = a(b_1(f_{1,1}, \ldots, f_{1,k_1}), \ldots, b_n(f_{n,1}, \ldots, f_{n,k_n})).$$ 

\end{lem}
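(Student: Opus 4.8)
The plan is to unfold both sides with \textbf{Definition} \ref{interpretation} and reduce the claim to an equality of two sup-preserving maps $\Tcal([0,1]_u^n) \rightarrow \lcal$. Write $c = a \circ (b_1, \ldots, b_n)$ and, for each $i$, $g_i = b_i(f_{i,1}, \ldots, f_{i,k_i})$, so that the right-hand side is $a(g)$ with $g = (g_1, \ldots, g_n) \in \USC^n$. Since preimage is contravariantly functorial, for every $q \in [0,1]$ we get $c^*(q) = c^{-1}([0,q)) = (b_1, \ldots, b_n)^{-1}(a^*(q))$, whence $c(f)(q) = \cop(G(f))\big((b_1, \ldots, b_n)^{-1}(a^*(q))\big)$ and $a(g)(q) = \cop(G(g))(a^*(q))$. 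As $a^*(q)$ ranges inside $\Tcal([0,1]_u^n)$, it suffices to prove that
$$\Phi = \cop(G(f)) \circ (b_1, \ldots, b_n)^{-1} \quad \text{and} \quad \Psi = \cop(G(g))$$
agree as maps $\Tcal([0,1]_u^n) \rightarrow \lcal$.

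Both $\Phi$ and $\Psi$ are sup-preserving: $\Psi$ since $\cop(G(g))$ is a lax morphism of \crcls, and $\Phi$ because it composes the frame homomorphism $(b_1, \ldots, b_n)^{-1}$ (each $b_i$ is continuous $[0,1]_u^{k_i} \to [0,1]_u$, so $(b_1, \ldots, b_n)$ is continuous) with the lax morphism $\cop(G(f))$. Hence it is enough to check $\Phi = \Psi$ on a box $U_1 \times \cdots \times U_n$ of basic opens, since these join-generate $\Tcal([0,1]_u^n)$. Taking $U_i = [0,q_i)$ (the case where a slot is the whole factor being identical, both sides returning $\top$ there), the \textbf{Lemma} characterising $\cop$ gives $\Psi(U_1 \times \cdots \times U_n) = \bigotimes[i = 1][n] G(g_i)([0,q_i)) = \bigotimes[i = 1][n] g_i(q_i)$, using $G(g_i)([0,q_i)) = \bigvee[p < q_i] g_i(p) = g_i(q_i)$. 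On the other hand $(b_1, \ldots, b_n)^{-1}(U_1 \times \cdots \times U_n) = b_1^{-1}(U_1) \times \cdots \times b_n^{-1}(U_n)$, so the computation of $\Phi$ hinges on the \emph{grouping identity}
$$\cop(G(f))(V_1 \times \cdots \times V_n) = \bigotimes[i = 1][n] \cop\big(G(f_{i,1}), \ldots, G(f_{i,k_i})\big)(V_i),$$
valid for arbitrary opens $V_i \subseteq [0,1]_u^{k_i}$. Granting it and setting $V_i = b_i^{-1}([0,q_i)) = b_i^*(q_i)$, we obtain $\Phi(U_1 \times \cdots \times U_n) = \bigotimes[i = 1][n] \cop(G(f_{i,\bullet}))(b_i^*(q_i)) = \bigotimes[i = 1][n] g_i(q_i)$, where the last step is exactly \textbf{Definition} \ref{interpretation} applied to $g_i = b_i(f_{i,\bullet})$. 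This matches $\Psi$, so $\Phi = \Psi$ and the lemma follows.

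The grouping identity is the one genuinely new step, and I expect it to be where the care is needed. I would prove it straight from the definition of $\cop$: expand the left-hand side as the join $\bigvee \bigotimes[i,j] G(f_{i,j})(W_{i,j})$ over all boxes $\prod_{i,j} W_{i,j} \subseteq V_1 \times \cdots \times V_n$. A box with nonempty factors lies in $V_1 \times \cdots \times V_n$ exactly when each sub-box $\prod_j W_{i,j}$ lies in $V_i$ (factors that are empty contribute only $\bot$ and may be discarded), so the indexing set of the join splits as a product over $i$. Using that $\otimes$ distributes over arbitrary joins in a \crcl (this is residuation), the join factors as the tensor over $i$ of $\bigvee \bigotimes[j] G(f_{i,j})(W_{i,j})$ taken over $\prod_j W_{i,j} \subseteq V_i$, which is precisely $\bigotimes[i = 1][n] \cop(G(f_{i,\bullet}))(V_i)$.
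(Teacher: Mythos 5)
Your proof is correct and is essentially the paper's own argument: the paper likewise reduces via $(a \circ (b_1, \ldots, b_n))^{-1} = \cop(b_1^{-1}, \ldots, b_n^{-1}) \circ a^{-1}$ (your $c^*(q) = (b_1, \ldots, b_n)^{-1}(a^*(q))$), and its central computational step is exactly your grouping identity, namely the factorization of the join over product-indexed boxes of $\bigotimes G(f_{i,j})(V_{i,j})$ using distributivity of $\otimes$ over arbitrary joins. Your packaging --- equality of two sup-preserving maps checked on the basis of boxes, with the grouping identity isolated as a standalone step --- is only an organizational variant of the paper's direct chain of equalities.
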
 

\begin{proof} 

Let $a \in \Lusc[n]$, $b_1 \in USC([0,1]^{k_1}), \ldots, b_n \in \USC([0,1]^{k_n})$ and \linebreak $f \in \USC^{k_1} \times \ldots \times \USC^{k_n}$. Let $f_i$ denote $(f_{i,1}, \ldots, f_{i,k_i})$. 

\bf{Claim:} $\cop(G(f)) \circ \cop(b_1^{-1}, \ldots, b_n^{-1}) = \cop(\cop(G(f_1)) \circ b_1^{-1}, \ldots, \cop(G(f_n)) \circ b_n^{-1})$. Indeed, for all $U \in \Tcal\left(\prod[i = 1][n] X_i\right)$, \begin{align*} \cop(G(f)) \circ \cop(b_1^{-1}, \ldots, b_n^{-1}) (U) &= \cop(G(f))\left(\bigcup[\prod[i = 1][n] U_i \subset U] (b_1^{-1}(U_1) \times \ldots \times b_n^{-1}(U_n))\right) 
\\ 
&= \bigvee[\prod[i = 1][n] U_i \subset U] \cop(G(f))(b_1^{-1}(U_1) \times \ldots \times b_n^{-1}(U_n)) 
\\ 
&= \bigvee[\prod[i = 1][n] U_i \subset U] \cop(G(f_1))(b_1^{-1}(U_1)) \otimes \ldots \otimes \cop(G(f_n))(b_n^{-1}(U_n)) 
\\ 
&= \cop(\cop(G(f_1)) \circ b_1^{-1}, \ldots, \cop(G(f_n)) \circ b_n^{-1})(U). \end{align*} 

By noticing that, \color{blue} $(a \circ (b_1, \ldots, b_n))^{-1} = \cop(b_1^{-1}, \ldots, b_n^{-1}) \circ a^{-1}$, we can conclude that, for all $q \in [0,1]$, \begin{align*}(a \circ (b_1, \ldots, b_n))(f)(q) &= \cop(G(f)) \circ (a \circ (b_1, \ldots, b_n))^{-1}([0,q)) 
\\ 
&= \cop(G(f)) \circ \cop(b_1^{-1}, \ldots, b_n^{-1}) \circ a^{-1}([0,q)) 
\\ 
&= \cop(\cop(G(f_1)) \circ b_1^{-1}, \ldots, \cop(G(f_n)) \circ b_n^{-1}) \circ a^*(q) 
\\ 
&= a(b_1(f_1), \ldots, b_n(f_n))(q). \end{align*} \color{black} 


\end{proof} 

\begin{defi} 

A term of $\Lusc$ is linear if any variable occurs at most once in it. 

\end{defi} 

\begin{lem} \label{equation formules} 

Let $\phi[v_1, \ldots, v_n]$ and $\psi[v_1, \ldots, v_n]$ be two terms of $\Lusc$ and \linebreak $f\colon \vcal \rightarrow \USC$. Let $v = (v_1, \ldots, v_n)$. 

$\phi[f] \geq a_\phi(f_v)$, with equality if $\phi$ is linear (\bf{Notation} \ref{nota a_phi}) and \blue{$a_{\phi \wedge \psi}(f_v) = a_\phi(f_v) \wedge a_\psi(f_v)$}. 

\end{lem}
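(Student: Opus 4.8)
The plan is to induct on the structure of the term $\phi$, isolating the duplication of variables as the one delicate point. The claim about $\wedge$ is immediate and independent of the rest: since $\wedge$ is interpreted in $[0,1]$ as $\min$, the terms $\phi \wedge \psi$ and $a_\phi \wedge a_\psi$ denote the very same element of $\Lusc[n]$, so $a_{\phi\wedge\psi}(f_v) = (a_\phi \wedge a_\psi)(f_v)$, and \bf{Lemma} \ref{monotonie de l'action} identifies this further with $a_\phi(f_v) \wedge a_\psi(f_v)$.

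For the main inequality I treat the base and inductive cases. If $\phi = v_i$, then $a_\phi$ is the $i$-th projection $\pi_i \in \Lusc[n]$; unwinding \bf{Definition} \ref{interpretation}, evaluating $\cop$ on a box all of whose factors but the $i$-th are the whole space, and using $G(g)([0,q)) = \bigvee[p \wb q] g(p) = g(q)$ for $g \in \USC$ (\bf{Lemma} \ref{adj}), one gets $\pi_i(f_v) = f(v_i) = \phi[f]$, so equality holds. For the inductive step, write $\phi = a(\phi_1, \ldots, \phi_m)$ with $a \in \Lusc[m]$. The $\USC$-interpretation $(h_1, \ldots, h_m) \mapsto a(h_1, \ldots, h_m)$ is monotone for the order of $\USC$ (clear from \bf{Definition} \ref{interpretation}, since $G$ and $\cop$ are built from order-preserving data), so the induction hypotheses $\phi_j[f] \geq a_{\phi_j}(f_v)$ give $\phi[f] \geq a\big(a_{\phi_1}(f_v), \ldots, a_{\phi_m}(f_v)\big)$. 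By \bf{Lemma} \ref{operad action} the right-hand side equals $\tilde a(f_v, \ldots, f_v)$, where $\tilde a = a \circ (a_{\phi_1}, \ldots, a_{\phi_m}) \in \Lusc[mn]$ is the operadic composite (with the variables of the several $\phi_j$ kept separate) and the argument consists of $m$ copies of $f_v$. Since $a_\phi = \tilde a \circ \Delta$ with $\Delta\colon [0,1]^n \to [0,1]^{mn}$ the $m$-fold diagonal, it remains to compare $\tilde a(f_v, \ldots, f_v)$ with $a_\phi(f_v)$.

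This comparison is the heart of the proof, a contraction inequality: for any $c \in \Lusc[N]$ and any $\delta\colon \{1, \ldots, N\} \to \{1, \ldots, n\}$, writing $\Delta_\delta\colon [0,1]^n \to [0,1]^N$ for $\Delta_\delta(x)_k = x_{\delta(k)}$, one has $(c \circ \Delta_\delta)(h) \leq c(h_{\delta(1)}, \ldots, h_{\delta(N)})$ in the order of $\USC$ (that is, pointwise $\geq$ in $\lcal$) for all $h \in \USC^n$. To prove it I expand both interpretations with \bf{Definition} \ref{interpretation}: the right-hand member is a supremum, over boxes $\prod[k] I_k \subseteq c^{-1}([0,q))$ with basic opens $I_k = [0,s_k)$, of $\bigotimes[k] G(h_{\delta(k)})(I_k)$, while the left-hand member is the analogous supremum over boxes contained in $\Delta_\delta^{-1}(c^{-1}([0,q)))$. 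Given such a box, set $W_i = \bigcap[\delta(k)=i] I_k$; then $\prod[i] W_i \subseteq \Delta_\delta^{-1}(c^{-1}([0,q)))$, because each coordinate $x_{\delta(k)}$ of $\Delta_\delta(x)$ already lies below $s_k$, and the laxness of the Raney transform $G(h_i)$ (\bf{Lemma} \ref{scale}), namely $\bigotimes[\delta(k)=i] G(h_i)(I_k) \leq G(h_i)\big(\bigcap[\delta(k)=i] I_k\big) = G(h_i)(W_i)$, gives $\bigotimes[k] G(h_{\delta(k)})(I_k) \leq \bigotimes[i] G(h_i)(W_i)$. Taking suprema yields the inequality; applying it with $c = \tilde a$ and chaining $\phi[f] \geq \tilde a(f_v, \ldots, f_v) \geq a_\phi(f_v)$ closes the induction.

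Finally, equality when $\phi$ is linear: then the subterms $\phi_j$ involve pairwise disjoint variables, so $\delta$ meets each fibre at most once, the laxness step involves a single factor and is an equality, and the two families of boxes correspond bijectively; equivalently, $a_\phi$ is then the genuine operadic composite of the $a_{\phi_j}$ with no diagonal, so \bf{Lemma} \ref{operad action} applies on the nose and the equalities from the induction hypotheses propagate. I expect the main obstacle to be the contraction inequality, and in particular getting its direction right: it is exactly the lax-morphism inequality for $G$ (duplicating a variable can only lower the $\otimes$-combination) that forces a one-sided comparison and pinpoints linearity as the precise condition for equality.
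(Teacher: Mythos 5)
Your proof is correct and is in substance the paper's own: both hinge on exactly the same contraction inequality — duplicating a variable replaces a single factor $G(f_{v_i})(U)$ by a product of factors, and the lax-morphism inequality $G(h)(U) \otimes G(h)(V) \leq G(h)(U \cap V)$ from \textbf{Lemma} \ref{scale} makes the comparison one-sided, with linearity giving a bijective correspondence of boxes and hence equality — together with \textbf{Lemma} \ref{operad action} for operadic composites and the observation $a_{\phi \wedge \psi} = a_\phi \wedge a_\psi$ for the meet claim. The only organizational difference is that you run a single structural induction over all terms, which requires the (easily checked from \textbf{Definition} \ref{interpretation}, but unstated) monotonicity of $a(\cdot)$ in its $\USC$-arguments, whereas the paper first proves equality for linear terms by induction and then treats a general $\phi$ in one global step, writing $\phi = \phi_0[k_1 v_1, \ldots, k_n v_n]$ with $\phi_0$ linear and applying the contraction inequality exactly once, thereby avoiding the monotonicity step.
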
 

\begin{proof} 

By induction over the linear terms of $\Lusc$. 

The proposition is true for constants and variables. 

Let $a \in \Lusc[n]$, $\phi_1[v_{1,1}, \ldots, v_{1,k_1}]$, $\ldots$, $\phi_n[v_{n,1}, \ldots, v_{n,k_n}]$ be terms of $\Lusc$, \linebreak $f\colon \vcal \rightarrow \USC$. Assume that $\phi_1[f] = a_{\phi_1}(f_{v_1})$, $\ldots$, $\phi_n[f] = a_{\phi_n}(f_{v_n})$ and that $a(\phi_1, \ldots, \phi_n)$ is linear, i.e., for all $1 \leq i, i' \leq n$, $1 \leq j \leq k_i$ and $1 \leq j' \leq k_{i'}$, if $(i,j) \neq (i',j')$, then $v_{i,j} \neq v_{i',j'}$. 

By linearity of $a[u_1, \ldots, u_n]$, for all $q \in [0,1]^n$, $a_{a[u_1, \ldots, u_n]}(q) = a(q_1, \ldots, q_n)$, so, thanks to \bf{Lemma} \ref{operad action}, 

\begin{align*} 
a[\phi_1/u_1, \ldots, \phi_n/u_n][f] &= a[\phi_1[f_{v_1}]/u_1, \ldots, \phi_n[f_{v_n}]/u_n]\\ 
&= a[a_{\phi_1}(f_{v_1})/u_1, \ldots, a_{\phi_n}(f_{v_n})/u_n]\\ 
&= a_{a[u_1, \ldots, u_n]} (a_{\phi_1}(f_{v_1}), \ldots, a_{\phi_n}(f_{v_n}))\\ 
&= a(a_{\phi_1}(f_{v_1}), \ldots, a_{\phi_n}(f_{v_n}))\\ 
&= (a \circ (a_{\phi_1}, \ldots, a_{\phi_n})) (f_v)\\ 
&= a_{a(\phi_1, \ldots, \phi_n)} (f_v)\\ 
\end{align*}

Hence, for all linear term $\phi[v]$ of $\Lusc$ and $f\colon \vcal \rightarrow \USC$, $\phi[f] = a_\phi(f_v)$. 

Let $\phi[v_1, \ldots, v_n]$ be a term of $\Lusc$, $f\colon \vcal \rightarrow \USC$ and $q \in [0,1]$. Let us denote by $k_i$ the number of occurrences of $v_i$ in $\phi$, for all $1 \leq i \leq n$. Let also, for all $k \in \nn$ and all set $u$, $k u$ denote the $k$-uple $(u, \ldots, u)$. There exists a linear term $\phi_0\left[u_{1, 1}, \ldots, u_{n, k_n}\right]$ such that $\phi[v_1, \ldots, v_n] = \phi_0[k_1 v_1, \ldots, k_n v_n]$. \begin{align*} 
\phi[f](q) &= \phi_0[f_{v_1}/u_{1,1}, \ldots, f_{v_1}/u_{1,k_1}, \ldots,f_{v_n}/u_{n,1}, \ldots, f_{v_n}/u_{n,k_n}](q) = a_{\phi_0}(g)(q) 
\\ 
&= \bigvee[\rescale{0.5}{\prod[i = 1][n] \prod[j = 1][k_i] U_{i,j} \subset a_{\phi_0}^*(q)}] \bigotimes[i = 1][n] \bigotimes[j = 1][k_i] G(f_{v_i})(U_{i,j}) 
\\ 
&\leq \bigvee[\rescale{0.5}{\prod[i = 1][n] \prod[j = 1][k_i] U_{i,j} \subset a_{\phi_0}^*(q)}] \bigotimes[i = 1][n] G(f_{v_i})\left(\bigcap[j = 1][k_i] U_{i,j}\right) && \target{( 1 )}
\\ 
&\ust{=}{(\ast)} \bigvee[\rescale{0.5}{\prod[i = 1][n] U_i \subset a_\phi^*(q)}] \bigotimes[i = 1][n] G(f_{v_i})(U_i) 
\\ 
&= a_\phi(f_v)(q) 
\end{align*} 

$(\ast)$ is justified by the following : for all family of open subsets $(U_{i,j})_{\ust{1 \leq i \leq n}{1 \leq j \leq k_i}}$ of $[0,1]$, $$\prod[i = 1][n] \prod[j = 1][k_i] U_{i,j} \subset a_{\phi_0}^*(q) \Rightarrow \prod[i = 1][n] \bigcap[j = 1][k_i] U_{i,j} \subset a_\phi^*(q)$$ and, for all family of open subsets of $[0,1]$ $(U_i)_{1 \leq i \leq n}$, $$\prod[i = 1][n] U_i \subset a_\phi(q) \Rightarrow \prod[i = 1][n] \prod[j = 1][k_i] U_i \subset a_{\phi_0}^*(q).$$ 

Let $\psi[v]$ be another term of $\Lusc$. 

\color{blue} First of all, we notice that $a_{\phi \wedge \psi} = a_\phi \wedge a_\psi$. Thus, $$a_{\phi \wedge \psi}(f_v)(q) = \hat{f_v} \circ (a_{\phi} \wedge a_{\psi})^*(q) = \hat{f_v}(a_\phi^*(q) \cup a_\psi^*(q)) = \hat{f_v}(a_\phi^*(q)) \vee \hat{f_v}(a_\psi^*(q)) = a_\phi(f_v) \wedge a_\psi(f_v)(q).$$ \color{black} 
\end{proof} 

\begin{remark} \label{Rq tense = inf} 

If $\otimes = \wedge$, then inequality $\link{( 1 )}$ is actually an equality and thus, for every term $\phi[v]$ of $\Lusc$ and $f\colon \vcal \rightarrow \USC$, $\phi[f] = a_\phi(f_v)$. 

\end{remark} 

We will crucially use the next result to give a conceptual proof of soundness in section \ref{section algebraic axiomatisation}. We here emphasize that some equations true in $[0,1]$ become false in $\USC$, such as $\max(v,v) \leq v$, which becomes $v \otimes v \leq v$ as stated by \bf{Lemma} \ref{calculatoire}, and may not be true in $\USC$. 

\begin{thm} \label{formule formules} 

Let $\phi_1[v], \ldots, \phi_n[v]$ and $\psi[v]$ be terms in the language $\Lusc$ and assume that for all $1 \leq i \leq n$ $\phi_i$ is linear. 

Then, if $[0,1] \models \bigwedge[i = 1][n] \phi_i \leq \psi$, $\USC \models \bigwedge[i = 1][n] \phi_i \leq \psi$. 

\end{thm}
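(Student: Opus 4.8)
The plan is to transport the inequality from $[0,1]$ to $\USC$ through the interpretation map $a \mapsto a(f)$, using only \textbf{Lemma} \ref{equation formules} and \textbf{Lemma} \ref{monotonie de l'action}. Fix an assignment $f \colon \vcal \rightarrow \USC$, write $v = (v_1, \ldots, v_n)$ for the variables involved, and set $\Phi = \bigwedge[i = 1][n] \phi_i$. I want to establish $\bigwedge[i = 1][n] \phi_i[f] \leq \psi[f]$ in $\USC$, where $\bigwedge$ is the lattice meet of $\USC$; this is exactly the interpretation of the term $\Phi$ in $\USC$, since the convolution interpretation of $\wedge$ agrees with the lattice meet (a consequence of \textbf{Lemma} \ref{equation formules} and \textbf{Lemma} \ref{monotonie de l'action}).

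The first step exploits \emph{linearity}. Because each $\phi_i$ is linear, \textbf{Lemma} \ref{equation formules} gives the exact identity $\phi_i[f] = a_{\phi_i}(f_v)$ (not merely an inequality). Taking the meet over $i$, using that the action preserves meets (\textbf{Lemma} \ref{monotonie de l'action}), and using the elementary identity $a_\Phi = \bigwedge[i = 1][n] a_{\phi_i}$ recorded in the proof of \textbf{Lemma} \ref{equation formules}, I obtain
\begin{align*}
\bigwedge[i = 1][n] \phi_i[f] = \bigwedge[i = 1][n] a_{\phi_i}(f_v) = \left(\bigwedge[i = 1][n] a_{\phi_i}\right)(f_v) = a_\Phi(f_v).
\end{align*}
So the $\USC$-meet is computed exactly by the action of the single operation $a_\Phi \in \Lusc[n]$ on $f_v$.

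Next I feed in the hypothesis. The assumption $[0,1] \models \bigwedge[i = 1][n] \phi_i \leq \psi$ says precisely that $a_\Phi \leq a_\psi$ as elements of $\Lusc[n]$, i.e. pointwise as upper semi-continuous functions on $[0,1]^n$. Since $a \mapsto a(f)$ is non-decreasing (\textbf{Lemma} \ref{monotonie de l'action}), this yields $a_\Phi(f_v) \leq a_\psi(f_v)$ in $\USC$. Finally, \textbf{Lemma} \ref{equation formules} applied to $\psi$ gives the one-sided bound $a_\psi(f_v) \leq \psi[f]$, valid even though $\psi$ need not be linear. Concatenating,
\begin{align*}
\bigwedge[i = 1][n] \phi_i[f] = a_\Phi(f_v) \leq a_\psi(f_v) \leq \psi[f],
\end{align*}
which is the required inequality in $\USC$.

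The main difficulty here is orientational bookkeeping rather than any hard estimate: one must keep track that the order on $\USC$ is the reverse of the pointwise $\lcal$-order, that the lattice meet of $\USC$ coincides with the convolution interpretation of $\wedge$, and — most importantly — that linearity of the $\phi_i$ is what upgrades \textbf{Lemma} \ref{equation formules} to \emph{equalities}, so the meet can be evaluated exactly, whereas for $\psi$ only the inequality $a_\psi(f_v) \leq \psi[f]$ is available. The argument works precisely because this lone inequality points the right way; had $\psi$ occurred on the other side of $\leq$, its possible non-linearity would break the chain. Everything else is a direct application of the two cited lemmas, so I expect no further obstacle.
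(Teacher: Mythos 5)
Your proof is correct and is essentially the paper's own argument: linearity upgrades \textbf{Lemma} \ref{equation formules} to exact identities $\phi_i[f] = a_{\phi_i}(f_v)$, the identity $a_{\phi \wedge \psi}(f_v) = (a_\phi \wedge a_\psi)(f_v)$ together with \textbf{Lemma} \ref{monotonie de l'action} evaluates the meet exactly, and the one-sided bound $a_\psi(f_v) \leq \psi[f]$ handles the possibly non-linear $\psi$, exactly as in the paper's chain of (in)equalities. The only cosmetic difference is that the paper justifies the pointwise inequality $a_\Phi \leq a_\psi$ on $[0,1]^n$ by invoking \textit{Remark} \ref{Rq tense = inf}, whereas you read it off directly from the hypothesis via the definition of $a_\phi$ in \textbf{Notation} \ref{nota a_phi}; both come to the same thing.
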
 

\begin{proof} 

Assume $[0,1] \models \bigwedge[i = 1][n] \phi_i \leq \psi$ and let $f\colon \vcal \rightarrow \USC$. 

\it{Remark} \ref{Rq tense = inf} implies that for all $p \colon \vcal \rightarrow [0,1]$, $a_{\bigwedge[i = 1][n] \phi_i}(p_v) = \bigwedge[i = 1][n] \phi_i[p] \leq \psi[p] = a_\psi(p_v)$, so \linebreak $a_{\bigwedge[i = 1][n] \phi_i} \leq a_\psi$. According to \bf{Lemma} \ref{equation formules} and \bf{Lemma} \ref{monotonie de l'action}, $$\left(\bigwedge[i = 1][n] \phi_i\right) [f] = \bigwedge[i = 1][n] \phi_i[f] = \bigwedge[i = 1][n] a_{\phi_i}(f_v) = a_{\bigwedge[i = 1][n] \phi_i}(f_v) \leq a_\psi(f_v) \leq \psi[f].$$ 
\end{proof} 

Seeing elements of $[0,1]$ as functions from $[0,1]^0$ to $[0,1]$, we can then embed $[0,1]$ into $\USC$. 

\begin{defi} 

To each $p \in [0,1]$, we associate $\fct[\ul p][{{[0,1]}} , \lcal]{q, \acc{\top, q > p, \bot}}$. We also recall that, following \bf{Definition} \ref{interpretation}, for all $p \in [0,1]$, $p$ defines a function $\fct[\USC^0, \USC]{\ast, p(\ast)}$. 

\end{defi} 

\begin{lem} \label{def cst} 

Let $p \in [0,1]$. We denote by $\ast$ the element of $\USC^0$. 

$\ul p = p(\ast)$. 

\end{lem}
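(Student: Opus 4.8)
The plan is to unwind the degenerate case $n = 0$ of Definition \ref{interpretation} and of the definition of $\cop$, after which the identity becomes a direct computation. First I would note that $\Lusc[0] = USC([0,1]_u^0)$ is the set of upper semi-continuous functions on the one-point space $[0,1]_u^0$, hence is canonically $[0,1]$ itself; in particular $p$ is a genuine nullary term, and $\ast$ denotes the unique element of the empty product $\USC^0$. Its interpretation is therefore given by Definition \ref{interpretation} with $n = 0$, namely $p(\ast) = \cop(G(\ast)) \circ p^*$, where $G(\ast)$ is the empty tuple of morphisms so that $\cop(G(\ast))$ is the empty copairing $\cop()$.

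Next I would compute the two factors. For $p^* \colon [0,1] \rightarrow \Tcal([0,1]_u^0)$, observe that $[0,1]_u^0$ is a single point $\{*\}$ whose topology is $\{\emptyset, \{*\}\}$; since $p^*(q) = p^{-1}([0,q))$, we get $p^*(q) = \{*\}$ when $p < q$ and $p^*(q) = \emptyset$ when $p \geq q$. For the empty copairing, reading off the definition of $\cop$ at $n = 0$, the empty product $\prod[i = 1][0] U_i$ is the whole one-point space $\{*\}$ and the empty tensor $\bigotimes[i = 1][0]$ is the neutral element $\top$; hence $\cop()(\{*\}) = \top$, while $\cop()(\emptyset) = \bot$ because the defining supremum is then taken over the empty family.

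Finally I would compose the two: for every $q \in [0,1]$, $p(\ast)(q) = \cop()(p^*(q))$ equals $\top$ exactly when $q > p$ and equals $\bot$ otherwise. This is precisely the defining formula of $\ul p$, so $\ul p = p(\ast)$.

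The only delicate part is the bookkeeping of the nullary conventions: recognising that the empty product of topological spaces is a point, that the empty product of open subsets of it is the whole space rather than $\emptyset$, and that the empty $\otimes$-product is $\top$. Once these are fixed there is no computation of substance remaining.
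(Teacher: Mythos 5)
Your proof is correct and follows essentially the same route as the paper: the paper's one-line argument likewise computes $p^*(q)$ on the one-point space ($\{*\}$ when $q > p$, $\emptyset$ otherwise) and reads off $p(\ast)(q) = \top$ or $\bot$ accordingly, matching $\ul p$. Your more careful unwinding of the nullary conventions (empty product is the point, empty $\otimes$-product is $\top$, empty supremum is $\bot$) is exactly what the paper's terse proof leaves implicit.
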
 

\begin{proof} 

For all $q \in [0,1]$, $p^*(q) = \acc{\{\emptyset\}, q > p, \emptyset}$, so $p(\ast)(q) = \acc{\top, q > p, \bot} = \ul p (q)$. 
\end{proof} 

\begin{lem} \label{[0,1] embedding} 

$\fct[{{[0,1]}} , \USC]{p, \ul p}$ is a $\Lusc$-embedding. Moreover, for all \linebreak $p \et q \in [0,1]$, $\ul p \otimes \ul q = \ul p \vee \ul q$. 

\end{lem}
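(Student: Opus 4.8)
The plan is to reduce the whole statement to the composition law of \bf{Lemma} \ref{operad action} together with the identification of constants in \bf{Lemma} \ref{def cst}. Recall that, by \bf{Lemma} \ref{def cst}, each $\ul p$ equals $p(\ast)$, the interpretation in $\USC$ of the constant $p$ viewed as a $0$-ary operation, that is, as an element of $\Lusc[0] = [0,1]$, evaluated at the unique element $\ast$ of $\USC^0$. So the map $p \mapsto \ul p$ is exactly \og apply the $0$-ary operation $p$\fg, and proving that it commutes with the $\Lusc$-operations is precisely a compatibility between $0$-ary and $n$-ary operations.

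First I would prove the homomorphism property. Fix $a \in \Lusc[n]$ and $p_1, \ldots, p_n \in [0,1]$, each $p_i$ regarded as an element of $\Lusc[0]$. Applying \bf{Lemma} \ref{operad action} with $b_i = p_i$ of arity $k_i = 0$ and with the empty family of arguments $\ast$ yields
$$a(\ul{p_1}, \ldots, \ul{p_n}) = a(p_1(\ast), \ldots, p_n(\ast)) = (a \circ (p_1, \ldots, p_n))(\ast).$$
Now $a \circ (p_1, \ldots, p_n)$ is itself a $0$-ary operation, i.e. an element of $\Lusc[0] = [0,1]$, and as such it is the real number $a(p_1, \ldots, p_n)$ obtained by interpreting $a$ in $[0,1]$. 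Hence \bf{Lemma} \ref{def cst} applies once more and gives $(a \circ (p_1, \ldots, p_n))(\ast) = \ul{a(p_1, \ldots, p_n)}$, so that
$$a(\ul{p_1}, \ldots, \ul{p_n}) = \ul{a(p_1, \ldots, p_n)}.$$
This is exactly the assertion that $p \mapsto \ul p$ is a $\Lusc$-homomorphism. Injectivity is immediate: if $p < p'$, choose $r$ with $p < r \leq p'$; then $\ul p(r) = \top$ while $\ul{p'}(r) = \bot$, and these are distinct as soon as $\lcal$ is nondegenerate. Thus $p \mapsto \ul p$ is a $\Lusc$-embedding.

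For the \og Moreover\fg clause I would compute both sides pointwise from $\ul s(r) = \top$ for $r > s$ and $\bot$ otherwise. Since $\otimes$ is defined pointwise on $\USC$ (\bf{Definition} \ref{defi otimes}) and $\bot$ is absorbing for $\otimes$ (by monotonicity, $\bot \otimes x \leq \bot \otimes \top = \bot$), one gets $(\ul p \otimes \ul q)(r) = \ul p(r) \otimes \ul q(r) = \top$ exactly when $r > \max(p,q)$, i.e. $\ul p \otimes \ul q = \ul{\max(p,q)}$. On the other hand, by the description of the lattice operations of $\USC$ in \bf{Lemma} \ref{def wedge et vee}, the join $\vee$ on $\USC$ is the regularisation $\ouv{(\cdot)}$ of the pointwise meet in $\lcal$; but $r \mapsto \ul p(r) \wedge \ul q(r)$ is already the function $\ul{\max(p,q)}$, which lies in $\USC$ and is therefore equal to its own regularisation. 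Hence $\ul p \vee \ul q = \ul{\max(p,q)} = \ul p \otimes \ul q$.

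The one delicate point is the bookkeeping in the $0$-ary instance of \bf{Lemma} \ref{operad action}: one must read the composite $a \circ (p_1, \ldots, p_n)$ of an $n$-ary operation with $n$ constants as a single element of $[0,1]$ and correctly match the two uses of \bf{Lemma} \ref{def cst}. Everything else is routine, and, unlike in \bf{Lemma} \ref{equation formules}, no linearity hypothesis is needed, because we compose with genuine operations rather than with repeated-variable terms, so the composition law holds as an exact equality.
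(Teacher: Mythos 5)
Your proof is correct, and for the main identity $a(\ul{p_1}, \ldots, \ul{p_n}) = \ul{a(p)}$ it takes a genuinely different route from the paper's. The paper treats $a[p_1, \ldots, p_n]$ (the term with the constants plugged in) and the constant $a(p)$ as two linear $\Lusc$-terms and applies \bf{Theorem} \ref{formule formules} twice, transferring the inequalities $a[p] \leq a(p)$ and $a(p) \leq a[p]$ from $[0,1]$ to $\USC$. You bypass that term-transfer machinery and argue one level lower, from \bf{Lemma} \ref{def cst} and the composition law of \bf{Lemma} \ref{operad action}: since $\ul{p_i} = p_i(\ast)$, associativity in the $0$-ary instance gives $a(\ul{p_1}, \ldots, \ul{p_n}) = (a \circ (p_1, \ldots, p_n))(\ast) = \ul{a(p)}$. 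This is in fact the mechanism hidden inside the paper's route (it is exactly the base case plus induction step of \bf{Lemma} \ref{equation formules}, on which \bf{Theorem} \ref{formule formules} rests), so your argument is shorter, yields the equality in one stroke rather than as two inequalities, and, as you note, never needs linearity because you compose genuine operations rather than terms. The caveat you flag is the right one and is real: \bf{Lemma} \ref{operad action} is stated for $b_i \in \Lusc[k_i]$ without explicitly admitting $k_i = 0$ (the paper writes $n \in \nn \cup \{0\}$ when it intends to include arity zero, e.g. in \bf{Theorem} \ref{reduction to Lcinc}), so strictly you must either check that its proof degenerates correctly (empty products of opens and empty $\otimes$-products read as the one-point space and $\top$) or sidestep arity zero by composing with the constant \emph{unary} maps $b_i \equiv p_i$, for which a one-line computation gives $b_i(f) = \ul{p_i}$ for every $f \in \USC$. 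Two further points in your favour: you actually prove injectivity, which the paper's proof leaves implicit (it only establishes the morphism identity), and you correctly observe that this requires $\top \neq \bot$ in $\lcal$; and your treatment of the ``Moreover'' clause --- identifying the pointwise meet $r \mapsto \ul p(r) \wedge \ul q(r)$ with the element $\ul{p \vee q}$ of $\USC$, hence equal to its own regularisation by \bf{Lemma} \ref{adj} --- is equivalent to the paper's direct evaluation of $(\ul p \vee \ul q)(r)$ as a supremum over $r' < r$, so both halves of the statement are fully covered.
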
 

\begin{proof} 

Let $a \in \Lusc[n]$ and $p \in [0,1]^n$. 

$[0,1] \models a[p] \leq a(p)$ and $a[u_1, \ldots, u_n]$ is linear, so $\USC \models a[p] \leq a(p)$, i.e. $a(\ul p) \leq \ul{a(p)}$. \linebreak $[0,1] \models a(p) \leq a[p]$ and $a(p)$ is a constant and thus a linear term, so $\USC \models a(p) \leq a[p]$, i.e. $\ul{a(p)} \leq a(\ul p)$. Hence $a(\ul p) = \ul{a(p)}$. 

For all $p$, $q \et r \in [0,1]$, 
\\ 
\adjusttopage{$\ul p \vee \ul q (r) = \bigvee[r' \wb r] \ul p(r') \wedge \ul q(r') = \acc{\top, p \vee q \wb r, \bot}$ and $\ul p \otimes \ul q (r) = \ul p(r) \otimes \ul q(r) = \acc{\top, p \vee q \wb r, \bot}$.} 
\end{proof} 

We shall now state and prove the strong converse of \bf{Theorem} \ref{formule formules}. 

\begin{thm} \label{formules [0,1]} 

For all terms $\phi \et \psi$ of $\Lusc$, if $\USC \models \phi \leq \psi$, then \linebreak $[0,1] \models \phi \leq \psi$. 

\end{thm}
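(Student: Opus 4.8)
The plan is to read off this converse directly from \textbf{Lemma} \ref{[0,1] embedding}, which asserts that $p \mapsto \ul p$ is a $\Lusc$-embedding of $[0,1]$ into $\USC$. The only idea required is that a universally quantified inequality holding in $\USC$ must in particular hold on the image of this embedding, and that an order-embedding reflects inequalities back to the source.

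First I would fix an arbitrary assignment $p \colon \vcal \rightarrow [0,1]$ and consider the induced assignment $v \mapsto \ul{p(v)}$ into $\USC$. Since $\USC \models \phi \leq \psi$, the interpretations of $\phi$ and $\psi$ under this particular assignment satisfy $\phi[\ul p] \leq \psi[\ul p]$ in $\USC$. Next, because $p \mapsto \ul p$ commutes with the interpretation of every function symbol of $\Lusc$ (which is precisely what it means to be a $\Lusc$-embedding), one has $\phi[\ul p] = \ul{\phi[p]}$ and $\psi[\ul p] = \ul{\psi[p]}$, where $\phi[p]$ and $\psi[p]$ denote the interpretations in $[0,1]$. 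Combining this with the previous inequality gives $\ul{\phi[p]} \leq \ul{\psi[p]}$ in $\USC$.

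Finally, since $p \mapsto \ul p$ is an order-embedding, this inequality reflects back to $\phi[p] \leq \psi[p]$ in $[0,1]$. As the assignment $p$ was arbitrary, this yields $[0,1] \models \phi \leq \psi$, as desired. I do not expect any genuine obstacle here: all the substantive work has already been carried out in establishing that $p \mapsto \ul p$ is a $\Lusc$-embedding (\textbf{Lemma} \ref{[0,1] embedding}), and the present statement is merely the transfer of inequalities along that embedding, exploiting both that it preserves the operations and that it is an order-embedding. The one point warranting a line of care is the bookkeeping around the convention (noted in the \textbf{Remark} following the definition of $E_{lax}$, $E_{colax}$) that $\vee$ and $\wedge$ swap roles between $[0,1]$ and $\USC$; but since we argue entirely through the embedding rather than by unwinding interpretations termwise, this swap is automatically respected and requires no separate treatment.
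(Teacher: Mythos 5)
Your proof is correct and is essentially the paper's own argument: the paper likewise fixes an assignment $p\colon \vcal \rightarrow [0,1]$, applies the hypothesis to the constants $\ul p$, invokes \textbf{Lemma} \ref{[0,1] embedding} to get $\ul{a_\phi(p)} \leq \ul{a_\psi(p)}$, and reflects the order back to $[0,1]$ (the paper just unwinds the order-reflection explicitly via $q > a_\psi(p) \Rightarrow q > a_\phi(p)$, which is exactly your ``order-embedding'' step). No gaps.
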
 

\begin{proof} 

Let $\phi \et \psi$ be terms of $\Lusc$ such that $\USC \models \phi \leq \psi$. Let \linebreak $p\colon \vcal \rightarrow [0,1]$. 

$\USC \models \phi[\ul p] \leq \psi[\ul p]$, so, according to \bf{Lemma} \ref{[0,1] embedding},  $\USC \models \ul{a_\phi(p)} \leq \ul{a_\psi(p)}$. Hence, for all $q \in [0,1]$, $q > a_\psi(p) \Rightarrow q > a_\phi(p)$ and thus $a_\phi(p) \leq a_\psi(p)$, i.e. $[0,1] \models \phi[p] \leq \psi[p]$. 
\end{proof} 

\subsubsection{Reduction to the language $\mathcal{L}_{[0,1]}$} 

\begin{nota} \label{topological space} 

For all ordered topological space $(X, \leq)$, let's denote by $C^0_\nearrow(X)$ the set of continuous non-decreasing functions from $X$ to $[0,1]$, by $C^0(X)$ the set of continuous functions from $X$ to $[0,1]$ and by $USC_\nearrow(X)$ the set of non-decreasing upper semi-continuous functions from $X$ to $[0,1]$. 

\end{nota} 

\begin{defi} \label{Hausdorff} 

An ordered topological space $(X, \leq)$ is said \emph{Hausdorff} if $\leq$ is closed. 

\end{defi} 

\begin{lem} \label{X_u} 

Let $X$ be an ordered topological space and let's denote by $X_u$ the topological space whose underlying set is $X$ and whose topology is the set of all downward closed open sets of $X$. $$USC_\nearrow(X) = USC(X_u).$$ 

\end{lem}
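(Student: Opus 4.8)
The plan is to unwind the definition of upper semi-continuity and reduce the asserted equality of function sets to a single pointwise criterion on sublevel sets. Recall from the identification of u.s.c.\ functions with continuous maps into $[0,1]_u$ that a function $f \colon X \to [0,1]$ lies in $USC(X)$ exactly when $f^{-1}([0,q))$ is open in $X$ for every $q \in [0,1]$ (the preimage of $[0,1]$ being automatically all of $X$, so the basic opens $[0,q)$ suffice to test continuity). Likewise $f \in USC(X_u)$ iff $f^{-1}([0,q))$ is open in $X_u$ for every $q$, and by definition the open sets of $X_u$ are precisely the downward-closed open sets of $X$.

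The heart of the argument is the following equivalence, which I would isolate as the key step: a function $f \colon X \to [0,1]$ is non-decreasing if and only if each sublevel set $f^{-1}([0,q)) = \{x : f(x) < q\}$ is downward closed. The forward direction is immediate, since if $f$ is non-decreasing, $x \in f^{-1}([0,q))$ and $y \leq x$, then $f(y) \leq f(x) < q$. For the converse I would argue by contraposition: if $f$ is not non-decreasing there are $y \leq x$ with $f(y) > f(x)$, and taking $q = f(y)$ places $x$ in $f^{-1}([0,q))$ while $y \notin f^{-1}([0,q))$, so that sublevel set is not downward closed.

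With this equivalence in hand, both inclusions fall out simultaneously. A function $f$ belongs to $USC(X_u)$ iff every $f^{-1}([0,q))$ is at once open in $X$ and downward closed; by the equivalence this is the same as asking that every $f^{-1}([0,q))$ be open in $X$, i.e.\ $f \in USC(X)$, together with $f$ being non-decreasing, which is exactly the defining condition for $f \in USC_\nearrow(X)$. Hence the two sets coincide.

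I do not expect any genuine obstacle here; the argument is a direct verification once the monotonicity-versus-downward-closedness equivalence is observed. The only points deserving a line of care are first confirming that the downward-closed open sets really form a topology on $X$ (so that $X_u$ is well defined), which is immediate because both downward-closedness and openness are preserved under arbitrary unions and finite intersections, and second noting that, consistently with the definition of $[0,1]_u$, the basic opens $[0,q)$ are themselves exactly the proper downward-closed open subsets of $[0,1]$.
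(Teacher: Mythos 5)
Your proof is correct and follows essentially the same route as the paper: both reduce membership in $USC(X_u)$ to the sublevel sets $f^{-1}([0,q))$ being downward-closed opens of $X$, and both verify that this downward-closedness is equivalent to $f$ being non-decreasing (the paper argues directly via ``for all $q > f(y)$'' where you use contraposition, a purely cosmetic difference). No gaps.
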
 

\begin{proof} 

For all $f \in USC_\nearrow(X)$, for all $q \in [0,1]$, $f^{-1}([0,q))$ is a downward closed open of $X$, so $f \in USC(X_u)$. 

Let $f \in USC(X_u)$. For all $q \in [0,1]$, $f^{-1}([0,q))$ is an open of $X$, so $f$ is \usc. Let $x \leq y \in X$. For all $q > f(y)$, $y \in f^{-1}([0,q))$, which is downward closed, so $x \in f^{-1}([0,q))$, i.e. $f(x) < q$. Thus $f(x) \leq f(y)$. 

Hence $f \in USC_\nearrow(X)$. 
\end{proof} 

\begin{lem}[Ordered version of Urysohn’s Lemma, {\cite[Chapter I, Theorem 1]{nachbinTopologyOrder1965}}] \label{Urysohn croissant} 

Let $X$ be a compact Hausdorff ordered topological space. For all downward closed subset $F$ and upward closed subset $G$ of $X$ such that $F \cap G = \emptyset$, there exists $f \in C^0_\nearrow(X)$ such that $\restr{f}[F] = 0$ and $\restr{f}[G] = 1$. 

\end{lem}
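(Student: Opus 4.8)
The plan is to follow the classical Urysohn construction, but carried out entirely inside the lattice of \emph{downward closed} open sets so that the function produced is automatically non-decreasing. Throughout I take $F$ to be closed and downward closed and $G$ closed and upward closed, as in Nachbin's formulation; for a set $S$ I write $\downarrow S$ and $\uparrow S$ for its downward, resp.\ upward, closure. The whole argument rests on one structural property, \emph{normal orderedness}: two disjoint sets, one closed downward closed and one closed upward closed, can be separated by a disjoint pair consisting of an open downward closed set and an open upward closed set. Once this is available, an interpolation step and a dyadic induction reproduce Urysohn's scheme, and the final $f$ will lie in $C^0_\nearrow(X)$.

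First I would record the elementary consequences of $\leq$ being closed. Since $\geq$ is the image of $\leq$ under the coordinate swap it is closed too, so the diagonal $\leq\cap\geq$ is closed and $X$ is Hausdorff; hence compact subsets are closed. Using that $X \times K$ is compact and that $\downarrow K$ is the image of $(X \times K)\cap{\leq}$ under the first projection, one gets that $\downarrow K$ (and symmetrically $\uparrow K$) is closed whenever $K$ is closed.

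The hard part is normal orderedness, which I would obtain from a point/set separation step. Given a closed upward closed $G$ and a point $x \notin G$, note $g \not\leq x$ for every $g \in G$ (else $x \in G$); since $\leq$ is closed, each pair $(g,x)$ has a basic open box $P_g \times Q_g$ missing $\leq$, i.e.\ $p \not\leq q$ for $p \in P_g$, $q \in Q_g$. Compactness of $G$ gives a finite subcover, producing open $P \supseteq G$ and open $Q \ni x$ with $p \not\leq q$ for all $p \in P$, $q \in Q$; in particular $P \cap Q = \emptyset$ by reflexivity. Setting $V := X \setminus \downarrow(X \setminus P)$ yields an open upward closed set with $G \subseteq V \subseteq P$ (that $G \subseteq V$ uses that $G$ is upward closed), and $V \cap Q = \emptyset$. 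Then, for disjoint closed downward closed $F$ and closed upward closed $G$, I apply this to each $x \in F$ to get $Q_x \ni x$ and $V_x \supseteq G$; a finite subcover of $F$ gives an open $U \supseteq F$ and an open upward closed $V \supseteq G$ with $U \cap V = \emptyset$, and replacing $U$ by $X \setminus \uparrow(X \setminus U)$ (which still contains the downward closed $F$) makes it open downward closed. This is normal orderedness, and it is the only place where compactness and closedness of $\leq$ must be combined.

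Finally I would run the Urysohn machine. Normal orderedness, applied to a closed downward closed $A$ and an open downward closed $B \supseteq A$ (so $X \setminus B$ is closed upward closed and disjoint from $A$), yields an open downward closed $D$ with $A \subseteq D \subseteq \overline{D} \subseteq B$, the closure lying in the downward closed complement of the separating upward closed open set. A dyadic induction then produces open downward closed sets $\{D_t\}_{t \in \dcal \cap (0,1)}$ with $F \subseteq D_t$, $\overline{D_t} \cap G = \emptyset$, and $\overline{D_s} \subseteq D_t$ for $s < t$. I then set $f(x) = \inf\{t : x \in D_t\}$ (with $\inf \emptyset = 1$). By construction $f$ is $0$ on $F$ and $1$ on $G$; it is non-decreasing because each $D_t$ is downward closed, so $y \in D_t$ and $x \leq y$ force $x \in D_t$; and it is continuous by the usual computation $\{f < a\} = \bigcup_{t<a} D_t$ and $\{f > a\} = \bigcup_{t>a}(X \setminus \overline{D_t})$, both open thanks to the nesting of closures. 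Everything after normal orderedness is the standard Urysohn argument transported to the sublattice of downward closed open sets.
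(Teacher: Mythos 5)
The paper offers no proof of this lemma at all: it is quoted from Nachbin (Chapter I, Theorem 1), so your proposal is measured against the cited source rather than against an in-paper argument. What you wrote is a faithful reconstruction of exactly Nachbin's proof: first normal orderedness of compact Hausdorff ordered spaces, obtained by combining closedness of the graph of $\leq$ with compactness (your point/set separation and the passage to $X \setminus \downarrow(X\setminus P)$, resp. $X \setminus \uparrow(X\setminus U)$, to make the separating sets monotone), then the classical Urysohn dyadic scheme carried out in the lattice of open decreasing sets, with monotonicity of $f$ falling out of decreasingness of the $D_t$. You were also right to restore the hypothesis that $F$ and $G$ are \emph{closed}: the paper's statement omits it, but it is necessary (take $X=[0,1]$ with its usual order, $F=[0,\frac{1}{2})$, $G=[\frac{1}{2},1]$) and it is present in Nachbin's formulation; the paper only ever applies the lemma to closed sets.

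There is one step that fails as written. Your dyadic induction maintains $\overline{D_s} \subseteq D_t$ and implicitly interpolates a new level between $\overline{D_s}$ and $D_t$, but your interpolation lemma requires its inner set $A$ to be closed \emph{and downward closed}, and the closure of an open downward closed set need not be downward closed in a compact Hausdorff ordered space. For instance, in $X=\{(0,0),(0,1)\}\cup\{(\frac{1}{n},1):n\ge 1\}\subset \rr^2$ with the (closed) order generated by $(0,0)\le(0,1)$, the set $D=\{(\frac{1}{n},1):n\ge 1\}$ is open and decreasing, yet $\overline{D}=D\cup\{(0,1)\}$ is not decreasing. Indeed, the inclusion $\downarrow \bar A \subseteq \ol{\downarrow A}$ is precisely one of the equivalent forms of the extra \emph{intuitionistic} hypothesis singled out in \textbf{Lemma} \ref{gros lemme}, so it cannot be taken for granted here. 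The repair is one line and uses tools you already established: replace $\overline{D_s}$ by $\downarrow\overline{D_s}$, which is closed by your projection-of-a-compact-set argument and satisfies $\downarrow\overline{D_s}\subseteq D_t$ because $D_t$ is downward closed; equivalently, strengthen the interpolation step by recording that $\overline{D}\subseteq X\setminus W$, where $W$ is the separating open increasing set, so that $X\setminus W$ is closed and decreasing and the invariant can be kept in the form $\downarrow\overline{D_s}\subseteq D_t$. With that patch the induction closes, and your final verifications of the boundary values, monotonicity, and continuity via $\{f<a\}=\bigcup_{t<a}D_t$ and $\{f>a\}=\bigcup_{t>a}(X\setminus\overline{D_t})$ are all correct.
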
 

\begin{lem} \label{cor Urysohn croissant} 

Let $X$ be a compact Hausdorff ordered topological space. $$USC_\nearrow(X) = \left\{\bigwedge A, A \subset C^0_\nearrow(X)\right\}.$$ 

\end{lem}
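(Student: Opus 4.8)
The plan is to prove the two inclusions, of which only one is substantial; throughout, $\bigwedge A$ denotes the pointwise infimum of the family $A$. For the inclusion $\{\bigwedge A : A \subset C^0_\nearrow(X)\} \subset USC_\nearrow(X)$, I would argue directly: if $h = \bigwedge A$, then $h^{-1}([0,q)) = \bigcup_{g \in A} g^{-1}([0,q))$ is open for every $q$, so $h$ is upper semi-continuous, and $h$ is non-decreasing as a pointwise infimum of non-decreasing functions. Since each $g \in C^0_\nearrow(X)$ already lies in $USC_\nearrow(X)$, this settles the easy direction.

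For the reverse inclusion, given $f \in USC_\nearrow(X)$, I would show $f = \bigwedge A$ for $A = \{g \in C^0_\nearrow(X) : g \geq f\}$. Every $g \in A$ satisfies $g \geq f$, so $\bigwedge A \geq f$ holds automatically, and it suffices to prove the opposite pointwise bound. Concretely, I reduce to the following separation claim: for each $x_0 \in X$ and each $q$ with $f(x_0) < q \leq 1$, there exists $g \in C^0_\nearrow(X)$ with $g \geq f$ and $g(x_0) < q$. Granting this, letting $q$ decrease to $f(x_0)$ gives $(\bigwedge A)(x_0) \leq f(x_0)$ for every $x_0$, hence equality.

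To produce such a $g$, I would fix $s$ with $f(x_0) < s < q$ and set $F = \{x \in X : x \leq x_0\}$ and $G = \{x \in X : f(x) \geq s\}$. Here $G$ is topologically closed by upper semi-continuity of $f$ and upward closed since $f$ is non-decreasing, while $F$ is downward closed by definition; the key observation is that $F$ is also topologically closed, since it is the preimage of the closed relation ${\leq} \subset X \times X$ (Definition \ref{Hausdorff}) under the continuous map $x \mapsto (x, x_0)$. Moreover $F \cap G = \emptyset$, because $x \leq x_0$ forces $f(x) \leq f(x_0) < s$. The ordered Urysohn lemma (Lemma \ref{Urysohn croissant}) then yields $h \in C^0_\nearrow(X)$ with $h|_F = 0$ and $h|_G = 1$. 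Taking $g = s + (1-s)h$, which is again continuous, non-decreasing, and $[0,1]$-valued, I would check that $g(x_0) = s < q$ (as $x_0 \in F$), that $g = 1 \geq f$ on $G$, and that $g \geq s > f$ off $G$; thus $g \geq f$ everywhere with $g(x_0) < q$, proving the claim.

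The main obstacle is this reverse inclusion, and within it the production of the separating pair of closed sets: the downward set $F$ must be shown topologically closed, which is exactly where the Hausdorff hypothesis—closedness of the order $\leq$—is used, while the hypotheses of Nachbin's ordered Urysohn lemma require precisely that $F$ be down-closed and $G$ up-closed. The affine rescaling $g = s + (1-s)h$ is then the routine glue that converts the $0/1$-valued separator into a continuous non-decreasing upper bound of $f$ that is small at $x_0$.
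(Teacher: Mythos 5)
Your proof is correct and takes essentially the same route as the paper: both arguments separate the down-set $\downarrow\{x_0\}$ from a superlevel set of $f$ (closed and upward closed by upper semi-continuity and monotonicity) using Nachbin's ordered Urysohn lemma (Lemma \ref{Urysohn croissant}), and then turn the $0/1$-valued separator into a continuous non-decreasing majorant of $f$ that is small at $x_0$, concluding by taking the infimum over $q > f(x_0)$. The only cosmetic difference is that you introduce an auxiliary level $s \in (f(x_0),q)$ and rescale affinely via $g = s + (1-s)h$, whereas the paper thresholds directly at $q$ and uses the truncated sum $\ul q \+ h$, which takes the value exactly $q$ at $x_0$.
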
 

\begin{proof} 

$C^0_\nearrow(X) \subset USC_\nearrow(X)$ and $USC_\nearrow(X)$ is stable by lower bounds, so \linebreak $\{\bigwedge A, A \subset C^0_\nearrow(X)\} \subset USC_\nearrow(X)$. 

Let $f \in USC_\nearrow(X)$, $x \in X$ and $q \in [0,1]$ such that $q > f(x)$. $f^{-1}( [q,1] )$ and $\{y \in X \,|\, y \leq x\}$ are closed, $f^{-1}( [q,1] )$ is upward closed, $\{y \in X \,|\, y \leq x\}$ is downward closed and $f^{-1}( [q,1] ) \cap \{x\} = \emptyset$, so, thanks to \textbf{Lemma} \ref{Urysohn croissant}, there exists $g \in C^0_\nearrow(X)$ such that $g(x) = 0 \et \restr{g}[{(f^{-1}( [q,1] ))}] = 1$, which gives $\ul q \+ g \geq f$ and $(\ul q \+ g)(x) = q$. Hence $f = \bigwedge[\ust{h \in C^0_\nearrow(X)}{h \geq f}] h$. 
\end{proof} 

\begin{thm} \label{reduction to Lcinc} 

There exists a unique family of lower-bounds-preserving functions \linebreak $\left(\cdot\colon \Lusc[n] \rightarrow \USC^{(\USC^n)}\right)_{n \in \nn \cup\{0\}}$ that is associative in the sense that, for all \linebreak $a \in \Lusc[n]$, $(b_1, \ldots, b_n) \in \prod[i = 1][n] \Lusc[k_i]$ and $(f_{i,1}, \ldots, f_{i,k_i})_{i \in \dc[1,n]} \in USC(\lcal)^{\sum[i = 1][n] k_i}$, 

$$(a \circ (b_1, \ldots, b_n)) \cdot (f_{1,1}, \ldots, f_{n,k_n}) = a \cdot (b_1 \cdot (f_{1,1}, \ldots, f_{1,k_1}), \ldots, b_n \cdot (f_{n,1}, \ldots, f_{n,k_n}))$$ 

and, for all $a \in \Lcinc[n]$ and $f \in \USC^n$, $a \cdot f = a(f)$. 

\end{thm}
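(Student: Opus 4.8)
The plan is to prove that the family being characterised is exactly the interpretation already constructed in \bf{Definition} \ref{interpretation}: I would set $a \cdot f := a(f)$ and show both that this choice meets the three requirements (existence) and that the requirements leave no other option (uniqueness).

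For existence I would verify the three clauses directly for $(a,f)\mapsto a(f)$. Preservation of lower bounds is precisely \bf{Lemma} \ref{monotonie de l'action}, namely $\left(\bigwedge[i \in I] a_i\right)(f) = \bigwedge[i \in I] a_i(f)$; operadic associativity is precisely \bf{Lemma} \ref{operad action}; and since $\Lcinc[n] = C^0_\nearrow([0,1]^n) \subseteq USC([0,1]_u^n) = \Lusc[n]$, the compatibility clause $a\cdot f = a(f)$ for $a \in \Lcinc[n]$ holds by construction. I would also record that $a \circ (b_1,\ldots,b_n)$ is again an element of $\Lusc$, so that the associativity equation is well-typed: a composite of non-decreasing upper semi-continuous maps whose outer factor $a$ is non-decreasing is again non-decreasing and upper semi-continuous.

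For uniqueness the crucial input is that every element of $\Lusc[n]$ is an infimum of elements of $\Lcinc[n]$. I would obtain this by first identifying $\Lusc[n] = USC([0,1]_u^n)$ with $USC_\nearrow([0,1]^n)$ through \bf{Lemma} \ref{X_u} — using that the product space $[0,1]_u^n$ is the space of downward-closed opens of the compact Hausdorff ordered space $[0,1]^n$ (\bf{Definition} \ref{Hausdorff}) — and then invoking the ordered Urysohn density \bf{Lemma} \ref{cor Urysohn croissant}, which gives $\Lusc[n] = \{\bigwedge A, A \subseteq \Lcinc[n]\}$. Then, for any family $\cdot$ satisfying lower-bounds preservation and the compatibility clause, writing $a = \bigwedge A$ with $A \subseteq \Lcinc[n]$ forces $$a \cdot f = \left(\bigwedge[g \in A] g\right) \cdot f = \bigwedge[g \in A] (g \cdot f) = \bigwedge[g \in A] g(f) = \left(\bigwedge[g \in A] g\right)(f) = a(f),$$ the third equality using compatibility and the fourth \bf{Lemma} \ref{monotonie de l'action}. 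Hence $\cdot$ is forced to coincide with the interpretation of \bf{Definition} \ref{interpretation}, independently of the chosen representation of $a$; note that operadic associativity is not even needed for this direction.

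The real content sits in the two lemmas invoked, not in the theorem itself, so I expect no serious obstacle in the argument above. The one place that deserves care is the topological identification $[0,1]_u^n \cong ([0,1]^n)_u$ feeding the application of \bf{Lemma} \ref{X_u} and \bf{Lemma} \ref{cor Urysohn croissant}, together with the order bookkeeping: because $\USC$ carries the reverse order (\bf{Theorem} \ref{USC is a crcl}), I must make sure the meets $\bigwedge$ taken in $\Lusc[n]$ and in $\USC$ are matched consistently, which is exactly what the equality in \bf{Lemma} \ref{monotonie de l'action} guarantees.
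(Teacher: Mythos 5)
Your proposal is correct and takes essentially the same approach as the paper: existence by setting $a \cdot f = a(f)$ (the interpretation of \textbf{Definition} \ref{interpretation}, with \textbf{Lemmas} \ref{monotonie de l'action} and \ref{operad action} supplying lower-bound preservation and associativity), and uniqueness by writing every $a \in \Lusc[n]$ as an infimum of elements of $\Lcinc[n]$ via \textbf{Lemma} \ref{cor Urysohn croissant} and then using lower-bound preservation together with the compatibility clause. Your explicit treatment of the identification of $\Lusc[n]$ with $USC_\nearrow([0,1]^n)$ through \textbf{Lemma} \ref{X_u}, and the observation that associativity is not needed for the uniqueness direction, merely make precise steps the paper leaves implicit.
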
 

\begin{proof} 

By defining, for all $a \in \Lusc[n]$ and $f \in \USC^n$, $a \cdot f = a(f)$, we obtain a family of functions $\left(\cdot\colon \Lusc[n] \rightarrow \USC^{(\USC^n)}\right)_{n \in \nn \cup\{0\}}$ satisfying the required properties. 

Let now $\left(\cdot\colon \Lusc[n] \rightarrow \USC^{(\USC^n)}\right)_{n \in \nn \cup\{0\}}$ be such a family of functions. 

According to \bf{Lemma} \ref{cor Urysohn croissant}, since $[0,1]^n$ is compact and Hausdorff, for all $a \in \Lusc[n]$, $a = \bigwedge[\ust{\rescale{0.5}{b \in \Lcinc[n]}}{b \geq a}] b$, so, for all $f \in \USC^n$, $a \cdot f = \bigwedge[\ust{\rescale{0.5}{b \in \Lcinc[n]}}{b \geq a}] (b \cdot f) = \bigwedge[\ust{\rescale{0.5}{b \in \Lcinc[n]}}{b \geq a}] b(f) = a(f)$. 
\end{proof} 

\begin{nota} 

We recall that, we have defined $\fct[j][{[0,1]}, {[0,1]}]{x, 2\left(x \dotm \frac{1}{2}\right)}$, that is a non-decreasing continuous function and the maximum and minimum of two elements $x \et y$ of $[0,1]$ are respectively denoted $\max[x,y]$ and $\min[x,y]$ (cf. \bf{Definition} \ref{Définition du langage}). 

\end{nota} 

The interpretation of $\Lcinc$ on $\USC$ gives us an interpretation of the language \linebreak $\mathcal{L}_{[0,1]} = \{\min,\, \max,\, \dotp,\, 2 \cdot,\, \frac{\cdot}{2},\, j,\, \ul 0, \ul 1\}$ in $\USC$. The aim of this subsubsection is to prove that the action of $\Lcinc$ on $\USC$ is entirely characterised by the interpretation of $\mathcal{L}_{[0,1]}$ in $\USC$ (\bf{Theorem} \ref{PositiveAction}). For this, we need a Stone-Weierstrass type theorem and metric on $\USC$. Before going any further, we need some calculatory results.

\begin{lem} \label{calculatoire} 

Let $f \et g \in \USC$, $q$ and $q' \in [0,1]$. \begin{enumerate} 

\item $\max[f , g](q) = f(q) \otimes g(q)$. 

\item $\min[f , g](q) = f(q) \vee g(q) = (f \wedge g)(q)$. 

\item $(f \+ g)(q) = \bigvee[p \dotp r \wb q] f(p) \otimes g(r) = \bigvee[p \wb q] f(p) \otimes g(q - p)$ and $\+$ admits a residual $\-$, defined by $$(f \- g)(q) = \bigvee[p < q] \; \bigwedge[r \geq p] g(r - p) \nrightarrow f(r).$$ 

\item $(f \dotp q')(q) = f(q \dotm q') = (f \+ \ul q')(q)$. 

\item $(f \dotm q')(q) = (f \- \ul q')(q) = j\left(\frac{f}{2} \+ \ul{\frac{1 - \blue{q'}}{2}}\right)(q)$. 

\item $(2f)(q) = f\left(\frac{q}{2}\right)$ and $j_\ast(f)(q) = f(j(q))$. 

\end{enumerate} 

\end{lem}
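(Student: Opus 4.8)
The plan is to treat items (1), (2), (4), (6) and the first equality of (3) uniformly. Each operation involved there ($\max$, $\min$, $\dotp$, $2\cdot$, $\frac{\cdot}{2}$, $j_\ast$ and $\cdot \dotm q'$) is a continuous non-decreasing map on a power of $[0,1]_u$, hence an element of $\Lcinc[n]$, so its value on $\USC$ is given by \bf{Definition} \ref{interpretation}: for $a \in \Lcinc[n]$ and $f \in \USC^n$, $a(f)(q) = \cop(G(f))(a^*(q)) = \bigvee[{\prod[i=1][n] U_i \subset a^{-1}([0,q))}] \bigotimes[i=1][n] G(f_i)(U_i)$. Two facts make every such join collapse: the only open sets of $[0,1]_u$ are the $[0,p)$ and $[0,1]$ itself, and, by sup-preservation (equivalently \bf{Lemma} \ref{adj}), $G(f)([0,p)) = \bigvee[p' \wb p] f(p') = f(p)$ while $G(f)([0,1]) = \top$. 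Combined with the fact that $\otimes$ preserves joins, this reduces each computation to reading off a preimage.

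Concretely I would compute the relevant preimages. From $\max^{-1}([0,q)) = [0,q) \times [0,q)$ item (1) is immediate. From $\min^{-1}([0,q)) = ([0,q)\times[0,1]) \cup ([0,1]\times[0,q))$ one gets $\min[f,g](q) = (f(q)\otimes\top)\vee(\top\otimes g(q)) = f(q)\vee g(q)$, and the identification with $(f\wedge g)(q)$ is \bf{Lemma} \ref{def wedge et vee}. For (3), $\dotp^{-1}([0,q)) = \{(x,y) : x + y < q\}$, so only the boxes $[0,p)\times[0,r)$ with $p + r \leq q$ fit inside, giving $\bigvee[p + r \leq q] f(p)\otimes g(r)$; sup-preservation of $f$, $g$ and of $\otimes$ lets me replace $p+r\leq q$ by the open condition $p\dotp r \wb q$ and then factor it as $\bigvee[p \wb q] f(p)\otimes g(q-p)$. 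For (4), $(\cdot\dotp q')^{-1}([0,q)) = [0, q\dotm q')$ gives $(f\dotp q')(q) = f(q\dotm q')$, and comparing with the $\+$-formula applied to the constant $\ul{q'}$ (whose value is $\top$ exactly when $q - p \wa q'$) gives the equality with $(f\+\ul{q'})(q)$. Item (6) is identical: $(2\cdot)^{-1}([0,q)) = [0, \frac{q}{2})$ and $j_\ast^{-1}([0,q)) = [0, j(q))$.

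The one genuinely substantial point is the residual in (3): that $\+$ admits a residual given by the stated formula. Setting $k(p) = \bigwedge[r\geq p] g(r-p)\nrightarrow f(r)$, the claim reads $f\-g = \ouv k$, and I would prove the Galois connection $f \leq g\+h \Leftrightarrow (f\-g) \leq h$ directly in the order of $\USC$ (reverse pointwise), using only the residuation $\otimes\dashv\nrightarrow$ of $\lcal$ and \bf{Lemma} \ref{adj}. For the direction showing $f \leq g\+(f\-g)$, the pointwise bound $(f\-g)(q-p)\leq k(q-p)\leq g(p)\nrightarrow f(q)$ (taking $r=q$ in the meet defining $k$, using $\ouv k \leq k$) and residuation give $g(p)\otimes(f\-g)(q-p)\leq f(q)$ for every $p\wb q$. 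For the converse, from $f \leq g\+h$ one reads, for each $p\wb q$, that $g(p)\otimes h(q-p)\leq f(q)$, i.e. $h(s)\leq g(r-s)\nrightarrow f(r)$ for all $r\geq s$; the meet over $r$ gives $h \leq k$ pointwise, and then \bf{Lemma} \ref{adj} (which says $\ouv k$ is the largest member of $\USC$ lying pointwise below $k$) yields $f\-g = \ouv k \leq h$ in $\USC$. Order-preservation of $\+$ closes the connection. This adjunction step, and in particular keeping the reverse-order bookkeeping straight, is where I expect the real work to be.

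Finally, for (5) I would first compute $(\cdot\dotm q')^{-1}([0,q)) = [0, q\dotp q')$ (for $q\wa 0$) exactly as in (6), obtaining $(f\dotm q')(q) = f(q\dotp q')$; then substitute the constant $g = \ul{q'}$ into the residual formula just established, where $\ul{q'}(r-p)\nrightarrow f(r)$ equals $f(r)$ when $r \wa p\dotp q'$ and $\top$ otherwise, so that the double join/meet collapses and matches $(f\dotm q')(q)$. The remaining equality with $j(\frac{f}{2} \+ \ul{\frac{1-q'}{2}})$ follows by lifting the $[0,1]$-identity $x\dotm q' = j(\frac{x}{2} \dotp \frac{1-q'}{2})$ (a routine check in $[0,1]$, with both sides linear in $x$) to $\USC$ through \bf{Theorem} \ref{formule formules} and \bf{Lemma} \ref{[0,1] embedding}.
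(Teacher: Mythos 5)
Your proposal follows the paper's own route almost everywhere: items (1), (2), (4), (6) and the first part of (3) are obtained, exactly as in the paper, by computing preimages of $[0,q)$ in the convolution formula of \textbf{Definition} \ref{interpretation}, and the residual in (3) by the adjunction through \textbf{Lemma} \ref{adj} (the paper organizes this as a single chain of equivalences, you as two separate implications; the content is the same, but do record, as the paper does, that $p \mapsto \bigwedge_{r \geq p} g(r \dotm p) \nrightarrow f(r)$ is non-decreasing, since the displayed formula for $f \dotm g$ is $\bigvee_{p \wb q} k(p)$ and this only equals $\ouv{k}(q)$ for non-decreasing $k$). You genuinely diverge in the second equality of (4) and the middle equality of (5): the paper lifts the corresponding $[0,1]$-identities through the linear-term transfer \textbf{Theorem} \ref{formule formules}, whereas you substitute the constant $\ul{q'}$ directly into the convolution and residual formulas. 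For (5) your route is in fact preferable, since $\dotm$ is not a symbol of $\Lusc$ (the binary truncated difference is not continuous into $[0,1]_u$), so the paper's appeal to \textbf{Theorem} \ref{formule formules} for the term $\dotm[v,q']$ is loose, and your direct substitution avoids it; your treatment of the last expression $j\bigl(\frac{f}{2} \+ \ul{\frac{1-q'}{2}}\bigr)$ via linear terms and \textbf{Lemma} \ref{[0,1] embedding} matches the paper.

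One step is wrong as written, though locally repairable: the identity $(f \dotm q')(q) = f(q \dotp q')$ fails when $q + q' > 1$. In that case $(\cdot \dotm q')^{-1}([0,q)) = \{x \;:\; x < q + q'\}$ is the \emph{whole space} $[0,1]$, which is its own open of $[0,1]_u$ (not of the form $[0,p)$), and $G(f)([0,1]) = \top$, not $f(1)$. Concretely, for $f = \ul 1$ (the constant $\bot$, i.e. the top of $\USC$) one has $\ul 1 \dotm q' = \ul{1 \dotm q'}$ by \textbf{Lemma} \ref{[0,1] embedding}, whose value at any $q > 1 \dotm q'$ is $\top$, whereas $f(q \dotp q') = \bot$. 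The correct formula is $(f \dotm q')(q) = f(q + q')$ for $q + q' \leq 1$ and $= \top$ for $q + q' > 1$. Reassuringly, your residual side produces the same case split: for $p \geq 1 \dotm q'$ there is no $r \geq p$ with $r \wa p \dotp q'$, so the meet $\bigwedge_{r \geq p} \ul{q'}(r \dotm p) \nrightarrow f(r)$ is $\top$; and for $q + q' \leq 1$ the collapse to $f(q+q')$ needs the small squeeze $f(p \dotp q') \leq \bigwedge_{r \wa p \dotp q'} f(r) \leq f(r_0)$ for any $p \dotp q' \wb r_0 \wb q \dotp q'$, followed by sup-preservation of $f$ — this is exactly what your phrase ``collapses and matches'' must spell out. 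With that case split and squeeze added, (5) goes through; everything else in the proposal is correct.
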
 

\begin{proof} 

Let $f \et g \in \USC$, $q$ and $q' \in (0,1]$. 

\begin{enumerate} 

\item \begin{align*} \max[f , g](q) = \bigvee[U \times V \subset \max[]^*(q)] G(f)(U) \otimes G(g)(V) &= \bigvee[\max[p , r] \wb q] G(f)([0,p)) \otimes G(g)([0,r)) 
\\ 
&= \bigvee[p \wb q] \bigvee[r \wb q] f(p) \otimes g(r) = f(q) \otimes g(q).\end{align*}  

\item \begin{align*} \min[f , g](q) &= \bigvee[U \times V \subset \min[]^*(q)] G(f)(U) \otimes G(g)(V) 
\\ 
&= \bigvee[\ust{U \subset [0,q)}{\ou V \subset [0,q)}] G(f)(U) \otimes G(g)(V) \;\;\;\; {{\texte{(since the order on}}} {{[0,1]}} \texte{is total)} 
\\ 
&= \left(\bigvee[U \subset {[0,q)}] G(f)(U) \otimes G(g)([0,1])\right) \vee \left(\bigvee[V \subset {[0,q)}] G(f)([0,1]) \otimes G(g)(V)\right) 
\\ 
&= f(q) \vee g(q).\end{align*}  

\item $(f \+ g)(q) = \bigvee[U \dotp V \subset {[0,q)}] G(f)(U) \otimes G(g)(V) = \bigvee[p \wb q] \bigvee[r \wb q \dotm p] f(p) \otimes g(r) = \bigvee[p \wb q] f(p) \otimes g(q - p)$. 

For all $h \in \USC$, \begin{align*} f \leq g \+ h &\Leftrightarrow \forall q \in {[0,1]} \; f(q) \geq \bigvee[p \+ r \wb q] g(p) \otimes h(r) 
\\ 
&\Leftrightarrow \forall q,p,r \in {[0,1]} \; \tq p \+ r \wb q \; f(q) \geq g(p) \otimes h(r) 
\\ 
&\Leftrightarrow \forall q,p,r \in {[0,1]} \; \tq p \+ r \wb q \; g(p) \nrightarrow f(q) \geq h(r) 
\\ 
&\Leftrightarrow  \forall q,r \in [0,1] \; \bigwedge[p \wb q \dotm r] g(p) \nrightarrow f(q) \geq h(r) 
\\ 
&\Leftrightarrow  \forall q,r \in [0,1] \; g(q \dotm r) \nrightarrow f(q) \geq h(r) 
\\ 
&\Leftrightarrow  \forall r \in [0,1] \; \bigwedge[q \geq r] g(q \dotm r) \nrightarrow f(q) \geq h(r) 
\end{align*} 

Thus, according to \bf{Lemma} \ref{adj}, since $r \mapsto \bigwedge[q \geq r] g(q \dotm r) \nrightarrow f(q)$ is non-decreasing, for all $r \in [0,1]$, $\bigvee[p \wb r] \bigwedge[q \geq p] g(\blue{q \dotm p}) \nrightarrow f(q) \geq h(r)$. 

\item $$(f \dotp q')(q) = \bigvee[U \dotp q' \subset {[0,q)}] G(f)(U) = \bigvee[p \dotp q' \wb q] f(p) = f(q \dotm q')$$. 

$(\_ \dotp q')[v]$ and $\dotp [v, q']$ are two terms of arity one, therefore linear, so, \linebreak since $[0,1] \models (\_ \dotp q')[v] = \dotp [v, q']$, thanks to \bf{Theorem} \ref{formule formules}, \linebreak $\USC \models (\_ \dotp q')[v] = \dotp [v, q']$ and so $f \dotp q' = f \+ \ul q'$. 

\item $(\_ \dotm q')[v]$, $\dotm [v, q']$ and $j\left(\frac{v}{2} \+ \frac{1 - q'}{2}\right)$ are three linear terms whose interpretations are equal in $[0,1]$, so, thanks to \bf{Theorem} \ref{formule formules}, $(f \dotm q')(q) = (f \- \ul q')(q) = j\left(\frac{f}{2} \+ \ul{\frac{1 - q'}{2}}\right)\blue{(q)}$. 

\item $$(2f)(q) = \bigvee[2 U \subset {[0,q)}] G(f)(U) = \bigvee[2 p \wb q] f(p) = f\left(\frac{q}{2}\right)$$ and $$j_\ast(f)(q) = \bigvee[j_\ast(U) \subset {[0,q)}] G(f)(U) = \bigvee[j_\ast(p) \wb q] f(p) = f(j(q)).$$ 

\end{enumerate} 
\end{proof} 

\begin{thm}[Increasing version of Stone-Weierstrass theorem for lattices] \label{IncreasingSW} 

Let $X$ be a compact topological space with at least two points endowed with an order $\leq$ and let $L$ be a sublattice of the lattice of continuous non-decreasing functions from $X$ to $[0,1]$. 

If, for all $y \not \leq x \in X$, $p \leq q \in [0,1]$ and $\epsilon > 0$, there exists $f \in L$ such that $\abs[f(x) - p] < \epsilon$ and $\abs[f(y) - q] < \epsilon$, then $L$ is dense in $C^0_\nearrow(X)$. 

\end{thm}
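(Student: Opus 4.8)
The plan is to reproduce the classical lattice form of the Stone--Weierstrass theorem (fix a target $g$, approximate it from above by a $\min$ of finitely many elements of $L$, then from below by a $\max$), the only genuinely new ingredient being an \emph{order-compatible} two-point interpolation lemma that turns the hypothesis --- in which the value at the lower point is constrained to be at most the value at the upper point --- into ordinary two-point interpolation of the non-decreasing target $g$. So I fix $g \in C^0_\nearrow(X)$ and $\epsilon > 0$, and aim to produce $f \in L$ with $|f(z) - g(z)| < \epsilon$ for all $z \in X$.

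First I would prove the interpolation lemma: for all $x, y \in X$ there exists $h \in L$ with $|h(x) - g(x)| < \epsilon$ and $|h(y) - g(y)| < \epsilon$. The argument is a case distinction on the order relation between $x$ and $y$. If $x = y$, single-point approximation suffices, and it is available because $X$ has a second point $z$, for which at least one of $z \not\leq x$ or $x \not\leq z$ holds; applying the hypothesis to the corresponding ordered pair with $p = q = g(x)$ yields an $h$ with $h(x) \approx g(x)$. If $x$ and $y$ are distinct and comparable, say $x \leq y$, then $y \not\leq x$ and, since $g$ is non-decreasing, $g(x) \leq g(y)$; applying the hypothesis to the pair $(x,y)$ with $p = g(x) \leq q = g(y)$ gives $h$ (the case $y \leq x$ is symmetric). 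If $x$ and $y$ are incomparable, then both $y \not\leq x$ and $x \not\leq y$ hold, so I may apply the hypothesis in whichever direction matches the order of $g(x)$ and $g(y)$, taking the smaller of the two values as $p$. In every case the required $h \in L$ exists. This is where the non-decreasing hypothesis on $g$ and the directional nature of the separation assumption are used, and it is the step that most needs care.

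With the lemma in hand, the remainder is the standard compactness argument. For a fixed $x \in X$ and each $y \in X$, I choose $h_{x,y} \in L$ interpolating $g$ at $x$ and $y$ within $\epsilon$. The sets $U_y = \{z \in X : h_{x,y}(z) < g(z) + \epsilon\}$ are open (both functions are continuous) and each contains $y$; by compactness finitely many $U_{y_1}, \dots, U_{y_n}$ cover $X$, and $h_x := \min(h_{x,y_1}, \dots, h_{x,y_n}) \in L$ satisfies $h_x(z) < g(z) + \epsilon$ for all $z$, while $h_x(x) > g(x) - \epsilon$ since each $h_{x,y_i}$ does so at $x$.

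Finally, the sets $V_x = \{z \in X : h_x(z) > g(z) - \epsilon\}$ are open and contain $x$; a finite subcover $V_{x_1}, \dots, V_{x_m}$ yields $f := \max(h_{x_1}, \dots, h_{x_m}) \in L$. Then $f(z) > g(z) - \epsilon$ for all $z$ (some $V_{x_j}$ contains $z$) and $f(z) < g(z) + \epsilon$ for all $z$ (every $h_{x_j}$ satisfies this bound), so $|f(z) - g(z)| < \epsilon$ uniformly. Because $L$ is a sublattice of $C^0_\nearrow(X)$, all the functions built by finite $\min$ and $\max$, including $f$, again lie in $L$, and no separation axiom on $X$ beyond compactness is needed. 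The main obstacle, as noted, is the order-compatible interpolation lemma; the two compactness sweeps are routine.
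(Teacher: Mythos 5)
Your proposal is correct and follows essentially the same route as the paper: the classical two-sweep lattice Stone--Weierstrass argument (pointwise interpolation, then two compactness passes with finite $\min$ and $\max$), with the sweeps merely in mirrored order relative to the paper's (which takes $\max$ first, then $\min$). Your explicit case analysis for the order-compatible interpolation lemma --- splitting on $x=y$, comparable, and incomparable pairs, and using monotonicity of $g$ to get $p \leq q$ --- spells out a step the paper's proof only gestures at, so it is if anything a more careful rendering of the same argument.
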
 

\begin{proof} 

Let $g \in C^0_\nearrow(X)$ and $\epsilon > 0$. Let $x \in X$. 

For all $y \in X$, there exists $f \in L$ such that $\abs[f(x) - g(x)] < \epsilon \et \abs[f(y) - g(y)] < \epsilon$. Indeed, for all $y \neq x$, the assumption of the theorem gives such a function, and, since there are at least two points in $X$, if we take $y = x$, there exists $x' \neq x$ and the assumpion again gives such a function. 

$X$ being compact, there exists $f_1, \ldots , f_n \in L$ such that for all $i \in \dc[1,n]$ $\abs[f_i(x) - g(x)] < \epsilon$ and $X = \bigcup[i = 1][n] (f_i - g)^{-1} ((-\epsilon,\epsilon))$. Thus, $\bigvee[i = 1][n] f_i (y) > g(y) - \epsilon$ for all $y \in X$, $\bigvee[i = 1][n] f_i(x) < g(x) + \epsilon$ and $\bigvee[i = 1][n] f_i \in L$. 

This being true for every $x \in X$ and $X$ being compact, there exists $h_1, \ldots, h_k \in L$ such that for all $j \in \dc[1,k]$ and $y \in X$ $h_j (y) > g(y) - \epsilon$ and $X = \bigcup[j = 1][k] (h_j - g)^{-1} ((-{1},\epsilon))$. Thus, for all $x \in X$ $\abs[\left(\bigwedge[j = 1][k] h_j(x)\right) - g(x)] < \epsilon$, and $\bigwedge[j = 1][k] h_j \in L$. 
\end{proof} 

We notice here that \bf{Theorem} 8.3 of \cite{abbadiniDualCompactOrdered2019} is a corollary of \bf{Theorem} \ref{IncreasingSW}. 

\begin{cor} \label{corISW} Let $n \in \nn$. \\ The set $L_n = \{a \in \Lcinc[n] \, | \, a \text{ is a composition of } \max,\, \min,\, \dotp,\, 2 \cdot,\, \frac{\cdot}{2},\, j,\, \ul 0, \ul 1 \text{ and the projections}\}$ is dense in $\Lcinc[n]$. 

\end{cor}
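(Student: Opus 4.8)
The plan is to apply the increasing Stone--Weierstrass theorem (\textbf{Theorem} \ref{IncreasingSW}) to the compact space $X = [0,1]^n$ equipped with the product order, taking $L = L_n$. We may assume $n \geq 1$ (if $\nn$ includes $0$, the case $n=0$ is immediate: $\Lcinc[0] = [0,1]$ and $L_0$ already contains every dyadic constant, a dense subset). Then $X$ is compact and has at least two points. Each generator of $L_n$ — the projections, the constants $\ul 0,\ul 1$, and the operations $\max,\min,\dotp,2\cdot,\frac{\cdot}{2},j$ — is continuous and non-decreasing on $[0,1]^n$ (for $\dotp,2\cdot,\frac{\cdot}{2},j$ this is read off \textbf{Definition} \ref{Définition du langage}), and both properties are preserved under composition; hence $L_n \subseteq C^0_\nearrow(X) = \Lcinc[n]$. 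Since $\max$ and $\min$ are themselves generators, $L_n$ is closed under them, so it is a sublattice of $C^0_\nearrow(X)$.

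It remains to verify the separation hypothesis: given $y \not\leq x$ in $[0,1]^n$, reals $p \leq q$ in $[0,1]$ and $\epsilon > 0$, I must produce $f \in L_n$ with $\abs[f(x) - p] < \epsilon$ and $\abs[f(y) - q] < \epsilon$. Because $y \not\leq x$, fix a coordinate $i$ with $x_i < y_i$. I will construct a unary non-decreasing $g \in L_1$ and set $f = g \circ \pi_i$, where $\pi_i \in L_n$ is the $i$-th projection, so that $f(x) = g(x_i)$ and $f(y) = g(y_i)$.

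To build $g$, I first observe that $L_1$ contains three families of auxiliary maps. Every dyadic constant lies in $L_1$: one gets $\frac{1}{2^k}$ by applying $\frac{\cdot}{2}$ to $\ul 1$ repeatedly, and then $m/2^k = \min(m\cdot\tfrac{1}{2^k},1)$ as the $m$-fold truncated sum $\dotp$ of copies of $\tfrac{1}{2^k}$. For each dyadic $c$, the truncated difference $t \mapsto t \dotm c = j\!\big(\tfrac{t}{2} \dotp \ul{\tfrac{1-c}{2}}\big)$ lies in $L_1$ (direct computation from $j(z)=\max(2z-1,0)$ in \textbf{Definition} \ref{Définition du langage}), even though $\dotm$ is not itself a generator. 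Finally, for each dyadic $s \geq 0$ the clamped dilation $D_s(u) = \min(su,1)$ lies in $L_1$, realised from $\frac{\cdot}{2}$ and iterated truncated self-addition $\dotp$ exactly as for the dyadic constants. Now choose a dyadic $a$ with $x_i < a < y_i$, dyadics $p' \leq q'$ with $\abs[p-p'] < \epsilon$ and $\abs[q-q'] < \epsilon$, and a dyadic slope $s$ large enough that $s > (q'-p')/(y_i - a)$, and set
$$ g = \min\big(\,\ul{q'}\,,\ \ul{p'} \dotp D_s(\,\cdot \dotm a\,)\big) \in L_1. $$
Then $g$ is non-decreasing; for $t \leq a$ one has $t \dotm a = 0$, so $g(t) = \min(q',p') = p'$, giving $g(x_i) = p'$; and for $t$ with $t - a > (q'-p')/s$ one has $D_s(t\dotm a) \geq q'-p'$, so $\ul{p'} \dotp D_s = q'$ and $g(t) = q'$, which by the choice of $s$ applies at $t = y_i$, giving $g(y_i) = q'$. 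Hence $f = g \circ \pi_i$ satisfies $\abs[f(x)-p] < \epsilon$ and $\abs[f(y)-q] < \epsilon$.

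With all hypotheses of \textbf{Theorem} \ref{IncreasingSW} verified, I conclude that $L_n$ is dense in $C^0_\nearrow([0,1]^n) = \Lcinc[n]$. The only genuinely delicate point is the synthesis of the clamped ramp $g$ from the permitted generators — notably expressing the truncated difference through $j$ and producing arbitrary dyadic slopes via $\frac{\cdot}{2}$ and $\dotp$ — since a naive linear interpolation would require a slope $(q'-p')/(b-a)$ that need not be dyadic; this is circumvented by fixing a large dyadic slope and letting the upper breakpoint fall where it may inside $(x_i,y_i)$. Everything else is a routine check that the generators lie in $C^0_\nearrow$ and that $L_n$ is a sublattice.
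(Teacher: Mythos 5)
Your proof is correct and takes essentially the same route as the paper: both apply \textbf{Theorem} \ref{IncreasingSW} to the sublattice $L_n$ of $\Lcinc[n]$ and verify the separation hypothesis with a clamped dyadic ramp in a single coordinate, using the same key identity $z \dotm d = j\left(\frac{z}{2} \dotp \frac{1-d}{2}\right)$ to express truncated subtraction by a dyadic constant. The only cosmetic difference is that the paper's witness dilates by a power of two $2^k$ via the generator $2\cdot$ (taking $\min\left(2^k(z_i \dotm d_1), d_2\right) \dotp d_3$), whereas you synthesise general dyadic dilations $D_s$ by iterated truncated self-addition and halving — an equivalent device.
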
 

\begin{proof} 

For $n = 0$, $\Lcinc[n] = [0,1]$, and, for all $d \in \dcal$, there exists $k \et m \in \nn$ such that $d = \sum[i = 1][k] \frac{1}{2^m}$, so $d \in L_0$. Thus $L_0$ is dense in $\Lcinc[0]$. 

Let $n \in \nn$. $[0,1]^n$ has at least two distinct points. Since $L_n$ is stable by $\max$ and $\min$, $L_n$ is a lattice. 

Let $x \not \leq y \in [0,1]^n$, $p \leq q \in [0,1]$ and $\epsilon > 0$. There exists $i \in \dc[1,n]$ such that $x_i < y_i$, $d_1 \in \dcal \cap [x_i , y_i)$, $k \in \nn$ such that $2^k \geq \frac{1}{y_i - d_1}$, and $d_2 \et d_3 \in \dcal$ such that $\abs[d_2 - (q - p)] < \frac{\epsilon}{2}$ and $\abs[d_3 - p] < \frac{\epsilon}{2}$. 

We define $\fct[a][{[0,1]}^n , {[0,1]}]{z, {{\left(\min[2^k(z_i \dotm d_1)] , d_2\right) \+ d_3}}}$. 

$\dcal \subset L_0$ and, for all $z \in [0,1]$ and $d \in \dcal$, $z \dotm d = j\left(\frac{z}{2} \dotp \frac{1 - d}{2}\right) \in L_1$, so $a \in L_n$. $$\abs[a(x) - p] = \abs[\left(\min[\left(2^k (x_i \dotm a_1)\right) , d_2] \+ d_3\right) - p] = \abs[d_3 - p] < \frac{\epsilon}{2} < \epsilon.$$ \begin{align*}\abs[a(y) - q] &= \abs[\left(\min[\left(2^k (y_i \dotm a_1)\right) , d_2] \+ d_3\right) - q] = \abs[(d_2 \+ d_3) - q] \leq \abs[d_2 + d_3 - q] 
\\ 
&\leq \abs[(d_2 - (q - p))+ (d_3 - p)] \leq \abs[d_2 - (q - p)] + \abs[d_3 - p] < \epsilon.\end{align*} 

Finally, thanks to \textbf{Theorem} \ref{IncreasingSW}, $[0,1]^n$ being compact, $L_n$ is dense in $\Lcinc[n]$. 
\end{proof} 

In order to deal with density, we need a metric on $\USC$. 

\begin{lem} 

$\fct[{[0,1]} , USC(\lcal)]{q , \ul q}$ admits a left adjoint $\norm$. Thus, for all $f \in USC(\lcal)$ and $q \in [0,1]$, $\norm[f] \leq q \Leftrightarrow f \leq \ul q$. 

$\fct[d][USC(\lcal)^2, {[0,1]}]{{{(f,g)}}, {{\max[\norm[f \- g] , \norm[g \- f]]}}}$ defines a metric on $USC(\lcal)$. 
\end{lem}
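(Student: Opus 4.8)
The plan is to prove the two assertions in turn: first the existence of the left adjoint $\norm$, then that $d$ is a metric.

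For the adjunction I would set $\norm[f] := \inf \{q \in [0,1] \mid f \leq \ul q\}$ and verify the Galois condition $\norm[f] \leq q \Leftrightarrow f \leq \ul q$ directly. The starting observation is the explicit meaning of $f \leq \ul q$ in $\USC$: since $\ul q(r) = \top$ for $r > q$ and $\ul q(r) = \bot$ otherwise, the defining (reverse) order of $\USC$ gives $f \leq \ul q \Leftrightarrow \forall r \in (q,1]\; f(r) = \top$. Hence $Q_f := \{q \mid f \leq \ul q\}$ is an up-set of $[0,1]$ containing $1$. The crux is that $Q_f$ attains its infimum, i.e. $f \leq \ul{\norm[f]}$: if $r > \norm[f] = \inf Q_f$, pick $q \in Q_f$ with $q < r$; then $f(r) = \top$ because $r > q$, so $f \leq \ul{\norm[f]}$. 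Monotonicity of $q \mapsto \ul q$ (immediate from the description of $\ul q$) then yields both directions: $f \leq \ul q \Rightarrow q \in Q_f \Rightarrow \norm[f] \leq q$, and conversely $\norm[f] \leq q \Rightarrow \ul{\norm[f]} \leq \ul q$, whence $f \leq \ul{\norm[f]} \leq \ul q$. Equivalently one can check that $q \mapsto \ul q$ preserves arbitrary meets — by the meet formula of \bf{Lemma}~\ref{def wedge et vee}, $\bigwedge[i] \ul{q_i}$ sends $r$ to $\bigvee[i] \ul{q_i}(r)$, which is $\top$ iff $r > \inf[i] q_i$, i.e. equals $\ul{\inf[i] q_i}$ — and invoke the adjoint functor theorem for complete lattices. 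I expect this attainment-of-infimum step to be the only delicate point of the first part.

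I record two consequences used repeatedly. Taking $q = 0$ and using that $\ul 0$ is the least element of $\USC$ (\bf{Lemma}~\ref{def wedge et vee}), we get $\norm[h] = 0 \Leftrightarrow h \leq \ul 0 \Leftrightarrow h = \ul 0$. Moreover the unit of the adjunction gives $a \leq \ul{\norm[a]}$ for every $a$, and $\norm$, being a left adjoint, is monotone. Now for $d$: nonnegativity is automatic and symmetry is built into $d(f,g) = \max[\norm[f \- g] , \norm[g \- f]]$. For separation I use that $\-$ is the residual of $\+$ (\bf{Lemma}~\ref{calculatoire}), so $a \- b \leq c \Leftrightarrow a \leq b \+ c$, together with $\ul 0$ being the unit of $\+$ (\bf{Lemma}~\ref{calculatoire}, part 4 with constant $0$) and the least element of $\USC$: this gives $f \- g = \ul 0 \Leftrightarrow f \- g \leq \ul 0 \Leftrightarrow f \leq g \+ \ul 0 = g$. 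Combined with $\norm[h] = 0 \Leftrightarrow h = \ul 0$, this yields $d(f,g) = 0 \Leftrightarrow (f \leq g \text{ and } g \leq f) \Leftrightarrow f = g$.

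The triangle inequality is the part with the most moving pieces, and I expect it to be the main obstacle. I would establish $\norm[f \- h] \leq \norm[f \- g] \+ \norm[g \- h]$ (the reversed inequality being symmetric); then $d(f,h) \leq d(f,g) \+ d(g,h)$ follows since each component is bounded by the corresponding $d$ and $\+$ is monotone and commutative. The key inequality splits into two standard steps. First, a residuated-monoid triangle inequality $f \- h \leq (f \- g) \+ (g \- h)$: from $f \leq g \+ (f \- g)$ and $g \leq h \+ (g \- h)$, both instances of $a \leq b \+ (a \- b)$, monotonicity of $\+$ gives $f \leq h \+ \big((f \- g) \+ (g \- h)\big)$, and residuation concludes (here $\+$ is a commutative associative operation with unit $\ul 0$, its identities transferred from $[0,1]$ as linear equations via \bf{Theorem}~\ref{formule formules}). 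Second, subadditivity of the norm, $\norm[a \+ b] \leq \norm[a] \+ \norm[b]$: from the units $a \leq \ul{\norm[a]}$ and $b \leq \ul{\norm[b]}$, monotonicity of $\+$, and the fact that $p \mapsto \ul p$ is an $\Lusc$-embedding (\bf{Lemma}~\ref{[0,1] embedding}), hence preserves $\+$, we get $a \+ b \leq \ul{\norm[a]} \+ \ul{\norm[b]} = \ul{\norm[a] \+ \norm[b]}$, and the adjunction gives the claim. Chaining monotonicity of $\norm$ with these two inequalities finishes the triangle inequality, completing the proof that $d$ is a metric.
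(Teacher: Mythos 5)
Your proposal is correct and takes essentially the same route as the paper: the adjoint exists because $q \mapsto \ul q$ preserves arbitrary lower bounds (the paper gets this from $\ul p = p(\ast)$ and \textbf{Lemma} \ref{monotonie de l'action}, then invokes the adjoint functor theorem for complete lattices --- exactly your second variant, your pointwise attainment-of-infimum check being a hands-on substitute), and the triangle inequality rests on the same three ingredients, namely $f \- h \leq (f \- g) \+ (g \- h)$, the unit $a \leq \ul{\norm[a]}$, and $\ul{\cdot}$ preserving $\+$. Your factorisation through subadditivity and monotonicity of $\norm$ is only a cosmetic repackaging of the paper's inline residuation computation, and your closing step $d(f,h) \leq d(f,g) \+ d(g,h) \leq d(f,g) + d(g,h)$ matches the paper's remark that the truncated and ordinary sums agree here since $\norm[f \- h] \leq 1$.
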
 

\begin{proof}  

Indeed, according to \bf{Lemma} \ref{def cst}, for all $p \in [0,1]$, $\ul p = p(\ast)$ and according to \bf{Lemma} \ref{monotonie de l'action}, $\fct[{[0,1]}, \USC^{(\USC^0)}]{p, p}$ preserves lower bounds, so $p \mapsto \ul p$ preserves lower bounds. 

Since $[0,1]$ is a complete order, $p \mapsto \ul p$ admits a left adjoint. 

$d$ is clearly positive and symmetric. For all $f \et g \in USC(\lcal)$, if $d(f,g) = 0$, then \linebreak $\norm[f \- g] = 0 \et \norm[g \- f] = 0$, thus $f \leq g \et g \leq f$, that-is-to-say $f = g$. 

Let $f$, $g$ and $h \in USC(\lcal)$. $f \leq (f \- g) \+ (g \- h) \+ h$, so $(f \- h) \- (f \- g) \leq g \- h$, which implies that $(f \- h) \- (f \- g) \leq \ul{\norm[g \- h]}$. Thus $(f \- h) \- \ul{\norm[g \- h]} \leq f \- g$, and so $f \- h \leq \ul{\norm[f \- g]} \+ \ul{\norm[g \- h]} = \ul{\norm[f \- g] \+ \norm[g \- h]}$. Hence $\norm[f \- h] \leq \norm[f \- g] \+ \norm[g \- h]$, which is equivalent to $\norm[f \- h] \leq \norm[f \- g] + \norm[g \- h]$, since $\norm[f \- h] \leq 1$. In the same way, $\norm[h \- f] \leq \norm[h \- g] + \norm[g \- f]$. 

Finally, \begin{align*} d(f,h) &= \max[\norm[f \- h] , \norm[h \- f]] \\ 
	&\leq \max[\norm[f \- g] + \norm[g \- h] , \norm[h \- g] + \norm[g \- f]] \\ 
	&\leq d(f,g) + d(g,h).\end{align*} 
\end{proof} 

The purpose of the structure on $USC(\lcal)$ is to obtain the following theorem. 

\begin{thm} 

\label{PositiveAction} 

Let us endow $C^0(USC(\lcal)^n,USC(\lcal))$, $n \in \nn \cup\{0\}$, with the metric $d_\infty$ defined by $d_\infty (F,G) = \!\!\!\!\! \bigvee[f \in USC(\lcal)^n] \!\!\!\! d(F(f),G(f))$, for all $(F,G) \in C^0(USC(\lcal)^n,USC(\lcal))^2$. 

\adjusttopage{There exists a unique family of continuous functions $\left(\cdot\colon \Lcinc[n] \rightarrow C^0_\nearrow(USC(\lcal)^n,USC(\lcal))\right)_{n \in \nn}$} that is associative in the sense that, for all $a \in \Lcinc[n]$, $(b_1, \ldots, b_n) \in \prod[i = 1][n] \Lcinc[k_i]$ and \linebreak $(f_{i,1}, \ldots, f_{i,k_i})_{i \in \dc[1,n]} \in USC(\lcal)^{\sum[i = 1][n] k_i}$, 

$$(a \circ (b_1, \ldots, b_n)) \cdot (f_{1,1}, \ldots, f_{n,k_n}) = a \cdot (b_1 \cdot (f_{1,1}, \ldots, f_{1,k_1}), \ldots, b_n \cdot (f_{n,1}, \ldots, f_{n,k_n}))$$ 

such that, for all $f \et g \in \USC$: 

\begin{enumerate}[label=(\arabic*)] 

\item \label{max cdot} $\max \cdot (f , g) = f \otimes g$ 
\item \label{min cdot} $\min \cdot (f , g) = f \wedge g$ 
\item \label{+ cdot} $\dotp \cdot (f , g) = f \+ g$ 
\item \label{2 cdot} $2 \cdot f = 2f$ 
\item \label{frac()(2) cdot} $\frac{\cdot}{2} \cdot f = \frac{f}{2}$ 
\item \label{j cdot} $j \cdot f = j(f)$

\end{enumerate} 

Moreover, $\cdot$ are isometries, and, for every $n \in \nn$, $a \in \Lcinc[n]$, $f \in USC(\lcal)^n$, $a \cdot f = a(f)$. 

\end{thm}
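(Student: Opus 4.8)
The plan is to reuse the interpretations constructed in \textbf{Theorem}~\ref{reduction to Lcinc}: for $a \in \Lcinc[n]$ and $f \in \USC^n$ I set $a \cdot f := a(f)$. With this definition the associativity relation and the final identity $a \cdot f = a(f)$ are exactly \textbf{Theorem}~\ref{reduction to Lcinc}, and the prescribed values of the six generators (items~\ref{max cdot}--\ref{j cdot}) are precisely \textbf{Lemma}~\ref{calculatoire}. Thus everything reduces to three points: that each $a \cdot (-)$ is a continuous non-decreasing map $\USC^n \to \USC$, that $a \mapsto a \cdot (-)$ is an isometry (hence a continuous map into $(C^0_\nearrow(\USC^n,\USC), d_\infty)$), and that these properties pin the family down uniquely.

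The core of the argument is the isometry estimate
$$\bigvee[f \in \USC^n] d\big(a(f), b(f)\big) = \sup[{x \in {[0,1]}^n}] \abs[a(x) - b(x)], \qquad a, b \in \Lcinc[n].$$
For "$\geq$" I would test against the constant tuples $\ul{p} = (\ul{p_1}, \ldots, \ul{p_n})$: by \textbf{Lemma}~\ref{[0,1] embedding} one has $a(\ul{p}) = \ul{a(p)}$, and a short computation with the metric (\textbf{Lemma}~\ref{calculatoire}, together with $\norm[\ul{r}] = r$) gives $d(\ul{a(p)}, \ul{b(p)}) = \abs[a(p) - b(p)]$, so the left-hand side dominates every $\abs[a(p) - b(p)]$. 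For "$\leq$", suppose $\abs[a(x) - b(x)] \leq \epsilon$ for all $x$; then $a(x) \leq b(x) \dotp \epsilon$ pointwise, i.e. $a \leq (\cdot \dotp \epsilon) \circ b$ in $\Lcinc[n]$, so monotonicity of the action (\textbf{Lemma}~\ref{monotonie de l'action}), associativity, and item~\ref{+ cdot} yield $a(f) \leq b(f) \+ \ul{\epsilon}$; since $\-$ residuates $\+$ and $\norm[h] \leq \epsilon \Leftrightarrow h \leq \ul{\epsilon}$, this gives $\norm[a(f) \- b(f)] \leq \epsilon$, and symmetrically $\norm[b(f) \- a(f)] \leq \epsilon$, whence $d(a(f), b(f)) \leq \epsilon$. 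The estimate shows $a \mapsto a \cdot (-)$ is distance-preserving.

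To see that $a \cdot (-)$ actually lands in $C^0_\nearrow(\USC^n, \USC)$, I would first note that each of the six generators acts by a non-decreasing map that is Lipschitz for $d$ (a direct check from \textbf{Lemma}~\ref{calculatoire}: $\max, \min, \dotp$ are non-expansive, $2\cdot$ is $2$-Lipschitz, and $\frac{\cdot}{2}, j$ are likewise Lipschitz), so every $a \in L_n$ — a finite composition of generators, constants and projections — acts by a continuous non-decreasing map. For an arbitrary $a \in \Lcinc[n]$ I choose $b_k \in L_n$ with $b_k \to a$ uniformly (\textbf{Corollary}~\ref{corISW}); by the isometry $d_\infty(b_k \cdot (-), a \cdot (-)) \to 0$, and a uniform limit of continuous non-decreasing maps is again continuous and non-decreasing, so $a \cdot (-) \in C^0_\nearrow(\USC^n, \USC)$.

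Uniqueness is then immediate: any family satisfying the hypotheses agrees with $\cdot$ on the generators, on $\ul{0}, \ul{1}$ and on the projections by items~\ref{max cdot}--\ref{j cdot}, hence on all of $L_n$ by associativity; as both families are continuous maps $\Lcinc[n] \to (C^0_\nearrow(\USC^n, \USC), d_\infty)$ and $L_n$ is dense in $\Lcinc[n]$ (\textbf{Corollary}~\ref{corISW}), they coincide everywhere. The main obstacle is the "$\leq$" half of the isometry estimate, where the pointwise bound $\abs[a - b] \leq \epsilon$ has to be transported to the inequality $a(f) \leq b(f) \+ \ul{\epsilon}$ inside $\USC$ by combining monotonicity, associativity and the residuation $\+ / \-$; verifying the Lipschitz behaviour of the six generators is the remaining, more computational, ingredient.
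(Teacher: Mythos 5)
Your proposal is correct and follows essentially the same route as the paper: existence is the already-constructed action $a \cdot f = a(f)$ (associativity being \textbf{Lemma}~\ref{operad action} via \textbf{Theorem}~\ref{reduction to Lcinc}), the isometry is established as in the paper's \textbf{Lemma}~\ref{isometry} (lower bound by testing on constant tuples via \textbf{Lemma}~\ref{[0,1] embedding}, upper bound by a residuation argument), and uniqueness comes from density of $L_n$ (\textbf{Corollary}~\ref{corISW}) together with continuity in $a$. Your two local refinements --- replacing the paper's step $a \leq (a \- b) \+ b$ by $a \leq (\cdot \dotp \epsilon) \circ b$, which keeps the argument inside $\Lcinc[n]$, and explicitly verifying that each $a \cdot (-)$ is a continuous non-decreasing map of $f$ (Lipschitz generators, then uniform limits), a point the paper leaves implicit --- are sound and, if anything, slightly more careful than the original.
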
 

\begin{remark} 

For every sublanguage $L$ of $\Lcinc$ containing $\mathcal{L}_{[0,1]}$, such as $\mathcal{L}$, \bf{Theorem} \ref{PositiveAction} works if one extends the list of axioms \ref{max cdot} to \ref{j cdot} to the symbols of $L$. 

\end{remark} 

As a corollary of \bf{Theorems} \ref{reduction to Lcinc} and \ref{PositiveAction}, we give the following theorem. 

\begin{thm} 

\label{USCAction} 

Let us endow, for all $n \in \nn$, $\Lusc[n]$ with the supremum metric, and $USC(\lcal)^{(USC(\lcal)^n)}$ with the metric $d_\infty$ defined by $d_\infty (F,G) = \!\!\!\!\! \bigvee[f \in USC(\lcal)^n] \!\!\!\! d(F(f),G(f))$, for all $F \et G \colon USC(\lcal)^n \rightarrow USC(\lcal)$. 

$(a,f) \mapsto a(f)$, as defined in \ref{interpretation}, is the unique family of continuous and lower-bound-preserving functions $\left(\cdot\colon \Lusc[n] \rightarrow USC(\lcal)^{(USC(\lcal)^n)})\right)_{n \in \nn}$ such that, for all $f \et g \in \USC$: 

\begin{enumerate}[label=(\arabic*)] 

\item $\max \cdot (f , g) = f \otimes g$ 
\item $\min \cdot (f , g) = f \wedge g$ 
\item $\dotp \cdot (f , g) = f \+ g$ 
\item $2 \cdot f = 2f$ 
\item $\frac{\cdot}{2} \cdot f = \frac{f}{2}$ 
\item $j \cdot f = j(f)$

\end{enumerate} 

\end{thm}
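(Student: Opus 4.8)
The plan is to read this statement off directly from \bf{Theorems} \ref{reduction to Lcinc} and \ref{PositiveAction}, the action on $\Lusc[n]$ being recovered from the action on $\Lcinc[n]$ by taking infima. For existence, I would check that the interpretation $(a,f) \mapsto a(f)$ of \bf{Definition} \ref{interpretation} has the three required features: it is lower-bound-preserving by \bf{Lemma} \ref{monotonie de l'action}, it satisfies the six identities by \bf{Lemma} \ref{calculatoire}, and it is continuous. For continuity, I would use that on $\Lcinc[n]$ the action is already an isometry (\bf{Theorem} \ref{PositiveAction}) and that every $a \in \Lusc[n]$ is the infimum $\bigwedge[\ust{b \in \Lcinc[n]}{b \geq a}] b$ by \bf{Lemma} \ref{cor Urysohn croissant}; since both the supremum metric on $\Lusc[n]$ and the action respect these infima, the isometry property should descend from $\Lcinc[n]$ to all of $\Lusc[n]$.

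For uniqueness, let $\cdot'$ be any continuous, lower-bound-preserving family satisfying the six identities. I would first restrict $\cdot'$ to $\Lcinc[n]$: this restriction is continuous and still satisfies the six identities, so by \bf{Theorem} \ref{PositiveAction} it coincides with the action $b \mapsto b(\cdot)$ on $\Lcinc[n]$. I would then extend to all of $\Lusc[n]$ using the infimum representation above: for $a \in \Lusc[n]$ and $f \in \USC^n$,
$$a \cdot' f = \bigwedge[\ust{b \in \Lcinc[n]}{b \geq a}] (b \cdot' f) = \bigwedge[\ust{b \in \Lcinc[n]}{b \geq a}] b(f) = a(f),$$
where the first equality is lower-bound-preservation of $\cdot'$, the second is the agreement just established on $\Lcinc[n]$, and the last is \bf{Lemma} \ref{monotonie de l'action} (equivalently the defining property in \bf{Theorem} \ref{reduction to Lcinc}). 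This pins the family down uniquely.

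The main obstacle I expect is the appeal to \bf{Theorem} \ref{PositiveAction} in the uniqueness step: that theorem characterises the $\Lcinc$-action among \emph{associative} continuous families, whereas here I am only given continuity, lower-bound-preservation, and the six identities. I would therefore have to argue that the restriction of $\cdot'$ to $\Lcinc[n]$ inherits enough compositional structure to invoke \bf{Theorem} \ref{PositiveAction}; concretely, that $\cdot'$ already agrees with $b \mapsto b(\cdot)$ on the subset $L_n$ of compositions of $\max, \min, \dotp, 2\cdot, \frac{\cdot}{2}, j, \ul 0, \ul 1$ and the projections, which is dense in $\Lcinc[n]$ by \bf{Corollary} \ref{corISW}. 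Once agreement on $L_n$ is secured, continuity propagates it to $\Lcinc[n]$ and the infimum argument then propagates it to $\Lusc[n]$; verifying agreement on the compositions in $L_n$ from the six generator identities is the delicate bookkeeping at the heart of the proof, and is where the operadic compatibility of the two cited theorems must be threaded together.
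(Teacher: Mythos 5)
Your overall architecture is the paper's own: existence via the interpretation of \textbf{Definition} \ref{interpretation} together with \textbf{Lemmas} \ref{monotonie de l'action} and \ref{calculatoire}, and uniqueness by restricting to $\Lcinc[n]$, applying \textbf{Theorem} \ref{PositiveAction}, and then extending through the infimum representation of \textbf{Lemma} \ref{cor Urysohn croissant} --- your displayed extension computation is verbatim the uniqueness argument inside \textbf{Theorem} \ref{reduction to Lcinc}, which the paper simply cites. However, your continuity argument for existence would fail. An element of $\Lusc[n]$ is only the \emph{pointwise} infimum of its continuous majorants, never a uniform limit: for $a = \1_{[\frac{1}{2},1]} \in \Lusc[1]$, every $b \in \Lcinc[1]$ with $b \geq a$ satisfies $b\left(\frac{1}{2}\right) = 1$, hence $\norm[b - a]_\infty = 1$ by continuity of $b$ just to the left of $\frac{1}{2}$. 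So the supremum metric does not ``respect these infima,'' and the isometry property cannot descend from $\Lcinc[n]$ to $\Lusc[n]$ by any approximation argument. No descent is needed: \textbf{Lemma} \ref{isometry} proves directly, for arbitrary $a \et b \in \Lusc[n]$, that $a \mapsto a(\_)$ is an isometry --- the inequality $\norm[a - b]_\infty \leq d_\infty(a \cdot \_ , b \cdot \_)$ by evaluating at the constants $\ul q$, and the converse from $a \leq (a \- b) \+ b$ together with $f \leq \ul{\norm[f]}$ --- and this lemma is exactly what the paper's existence proof invokes.

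On uniqueness, you rightly isolate the real issue: \textbf{Theorem} \ref{PositiveAction} characterises the action only among \emph{associative} continuous families, and associativity is not among the hypotheses listed in the statement. But the repair you sketch --- recovering agreement on the compositions in $L_n$ from the six generator identities alone --- cannot succeed, because without associativity those identities constrain the family at exactly six elements, all of arity $1$ or $2$, and all of the arity-$2$ ones are symmetric. Concretely, the family that coincides with the standard action in every arity except that in arity $2$ one sets $a \cdot' (f,g) = a(g,f)$ is isometric (hence continuous), preserves lower bounds by \textbf{Lemma} \ref{monotonie de l'action}, and satisfies all six identities since $\otimes$, $\wedge$ and $\+$ are commutative; yet it differs from the standard action at the first projection $\pi_1 \in \Lcinc[2]$, where $\pi_1 \cdot' (f,g) = g$. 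So associativity is a genuine hypothesis, not delicate bookkeeping: the statement as printed omits the clause, but the paper's proof tacitly carries it (``such a family of functions''), consistently with the associativity clauses of \textbf{Theorems} \ref{reduction to Lcinc} and \ref{PositiveAction} and with the remark that these results assert a unique operad-module structure. Once associativity is assumed, your two-step uniqueness argument is exactly the paper's proof: \textbf{Theorem} \ref{PositiveAction} pins the family down on $\Lcinc[n]$, and the infimum computation --- which, as you note, uses only lower-bound preservation and agreement on $\Lcinc[n]$ --- is precisely how \textbf{Theorem} \ref{reduction to Lcinc} concludes.
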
 

{\remark \bf{Theorems} \ref{PositiveAction} and \ref{USCAction} mean that there is a unique structure of module on $USC(\lcal)$ over the operads $\Lcinc$ and $\Lusc$ satisfying points \ref{max cdot} to \ref{j cdot}.} 

To prove \bf{Theorems} \ref{PositiveAction} and \ref{USCAction}, we use the following lemma. 

\begin{lem} \label{isometry} 

For all $n \in \nn$, $\fct[\Lusc[n], \USC^{(\USC^n)}]{a, a(\_)}$ is an isometry. 

\end{lem}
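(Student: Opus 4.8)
The plan is to establish the two inequalities $d_\infty(a(\_),b(\_))\geq d(a,b)$ and $d_\infty(a(\_),b(\_))\leq d(a,b)$ for arbitrary $a,b\in\Lusc[n]$, where $d(a,b)=\bigvee[x\in{[0,1]}^n]\abs[a(x)-b(x)]$ is the supremum metric on $\Lusc[n]$ and $d_\infty(F,G)=\bigvee[f\in\USC^n]d(F(f),G(f))$. Throughout, the only delicate point is that the order carried into the computation is the reversed order of $\USC$.

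For the lower bound I would restrict the supremum defining $d_\infty$ to the constant tuples $\ul p=(\ul{p_1},\ldots,\ul{p_n})$, $p\in{[0,1]}^n$. Since $p\mapsto\ul p$ is a $\Lusc$-embedding (\textbf{Lemma} \ref{[0,1] embedding}), one has $a(\ul p)=\ul{a(p)}$ and $b(\ul p)=\ul{b(p)}$. The left adjoint $\norm$ of $t\mapsto\ul t$ satisfies $\norm[\ul t]=t$, because $t\mapsto\ul t$ is an order embedding, so $\norm[\ul t]\leq q\Leftrightarrow\ul t\leq\ul q\Leftrightarrow t\leq q$. Using $\ul r\-\ul s=\ul{r\dotm s}$ (again the embedding), this gives $d(\ul r,\ul s)=\max[\norm[\ul r\-\ul s],\norm[\ul s\-\ul r]]=\max[r\dotm s,s\dotm r]=\abs[r-s]$. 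Hence $d(a(\ul p),b(\ul p))=\abs[a(p)-b(p)]$, and taking the supremum over $p$ yields $d_\infty(a(\_),b(\_))\geq\bigvee[x\in{[0,1]}^n]\abs[a(x)-b(x)]=d(a,b)$.

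For the upper bound I set $\delta=d(a,b)$. In $[0,1]$ we have $a(x)\leq b(x)\dotp\delta$ and $b(x)\leq a(x)\dotp\delta$ for every $x$, so, writing $c_\delta\in\Lusc[0]$ for the constant $\delta$, these become the pointwise inequalities $a\leq\dotp\circ(b,c_\delta)$ and $b\leq\dotp\circ(a,c_\delta)$ in $\Lusc[n]$. I would then feed these into the action: by the associativity of the action together with $\dotp\cdot(F,G)=F\+G$ (\textbf{Theorem} \ref{USCAction}) and $c_\delta(f)=\ul\delta$ (\textbf{Lemma} \ref{def cst}), one gets $(\dotp\circ(b,c_\delta))(f)=b(f)\+\ul\delta$ for every $f\in\USC^n$. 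Since $a\mapsto a(f)$ is non-decreasing (\textbf{Lemma} \ref{monotonie de l'action}) from the pointwise order on $\Lusc[n]$ to the order of $\USC$, the pointwise estimates transport to $a(f)\leq b(f)\+\ul\delta$ and $b(f)\leq a(f)\+\ul\delta$ in $\USC$.

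It then remains to convert these order inequalities into metric bounds, and this is the step where the direction of the residuation matters. By the residuation of $\+$ by $\-$ (\textbf{Lemma} \ref{calculatoire}), which makes $\_\-b(f)$ left adjoint to $b(f)\+\_$, one has $a(f)\-b(f)\leq\ul\delta\Leftrightarrow a(f)\leq b(f)\+\ul\delta$; combined with the adjunction defining $\norm$, namely $\norm[a(f)\-b(f)]\leq\delta\Leftrightarrow a(f)\-b(f)\leq\ul\delta$, this gives $\norm[a(f)\-b(f)]\leq\delta$, and symmetrically $\norm[b(f)\-a(f)]\leq\delta$. Hence $d(a(f),b(f))\leq\delta$ for every $f$, so $d_\infty(a(\_),b(\_))\leq\delta=d(a,b)$, and combining with the lower bound proves that $a\mapsto a(\_)$ is an isometry. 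The main obstacle is precisely this upper bound: one must keep the reversed order on $\USC$ straight when applying \textbf{Lemma} \ref{monotonie de l'action}, and must apply the residuation of \textbf{Lemma} \ref{calculatoire} in the correct orientation so that the harmless pointwise estimate $a(x)\leq b(x)\dotp\delta$ genuinely yields $\norm[a(f)\-b(f)]\leq\delta$; by contrast the lower bound is a routine computation once the constant embedding is seen to be isometric.
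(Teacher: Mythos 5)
Your proof is correct, and it follows the paper's own two-step strategy: the lower bound by restricting the supremum in $d_\infty$ to constant tuples via \textbf{Lemma} \ref{[0,1] embedding} (the paper asserts $d(\ul{a(q)},\ul{b(q)}) = \abs[a(q) - b(q)]$ without detail; your adjunction computation supplies it), and the upper bound by pushing a pointwise inequality through the monotone action and then applying residuation together with the adjunction defining $\norm$. The packaging of the upper bound differs mildly: the paper starts from $a \leq (a \- b) \+ b$, where $\-$ is the residual computed inside the AC-algebra $\Lusc[n]$, and then uses monotonicity of the action in the \emph{argument} $f$ together with $f \leq \ul{\norm[f]}$ to bound $(a \- b)(f)$ by a constant; your variant replaces $a \- b$ by the scalar $\delta = d(a,b)$ and uses $a \leq b \dotp \delta$ pointwise, which is equivalent and arguably leaner, since it never interprets the residual $a \- b$ by the action and only needs monotonicity in the $\Lusc[n]$-variable (\textbf{Lemma} \ref{monotonie de l'action}).

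Two citations need repair. First, you invoke \textbf{Theorem} \ref{USCAction} for $\dotp \cdot (F,G) = F \+ G$, but the paper derives \textbf{Theorem} \ref{USCAction} (through \textbf{Theorem} \ref{PositiveAction}) from \textbf{Lemma} \ref{isometry} itself, so as written your reference is circular. It is also unnecessary: by \textbf{Definition} \ref{def AC-algèbres}, the operation $\+$ on $\USC$ \emph{is} by definition the action of $\dotp$, so $(\dotp \circ (b, c_\delta))(f) = b(f) \+ \ul{\delta}$ already follows from \textbf{Lemma} \ref{operad action} and \textbf{Lemma} \ref{def cst}, or, avoiding the nullary composition entirely, from \textbf{Lemma} \ref{calculatoire} (4) applied to the unary slice $t \mapsto t \dotp \delta$ composed with $b$. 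Second, $\ul r \- \ul s = \ul{r \dotm s}$ is not literally an instance of the $\Lusc$-embedding, because the binary map $\dotm$ is decreasing in its second argument and hence does not belong to $\Lusc[2]$; it follows instead from \textbf{Lemma} \ref{calculatoire} (5), which identifies $f \- \ul s$ with the action of the non-decreasing unary map $t \mapsto t \dotm s$, to which the embedding does apply. With these references corrected, your argument is complete and sound.
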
 

\begin{proof} 

Let $n \in \nn$, $a \et b \in \Lusc[n]$. 

For all $q \in [0,1]^n$, $a(\ul q) = \ul{a(q)} \et b(\ul q) = \ul{b(q)}$, so $d(a(\ul q) , b(\ul q)) = d(\ul{a(q)},\ul{b(q)}) = \abs[a(q) - b(q)]$, and so \linebreak $\norm[a - b]_\infty \leq d_\infty(a \cdot \_ , b \cdot \_)$. 

Let $f \in USC(\lcal)^n$. $a \leq (a \- b) \+ b$, so $a(f) \leq (a \- b)(f) \+ b(f)$, which amounts to \linebreak $a(f) \- b(f) \leq (a \- b)(f)$. However, $f \leq \ul{\norm[f]}$, so $(a \- b)(f) \leq (a \- b)(\ul{\norm[f]})$. Hence $d_\infty(a \cdot \_ , b \cdot \_) \leq \norm[a - b]_\infty$. 

Hence, $a \mapsto a(\_)$ is an isometry. 
\end{proof} 

\begin{proof}[Proof of \textbf{Theorem} \ref{PositiveAction}] 

\it{\ul{Existence part :}} 

Let, for all $n \in \nn$, $a \in \Lcinc[n]$ and $f \in \USC^n$, $a \cdot f = a(f)$. The associativity of $\cdot$ is what \bf{Lemma} \ref{operad action} states. By virtue of \bf{Lemma} \ref{isometry}, for all $n \in \nn$, $\fct[\Lcinc[n], C^0_\nearrow(\USC^n, \USC)]{a, a \cdot \_}$ is an isometry. This family of isometries tautologically satisfies axioms \ref{max cdot} to \ref{j cdot}. 

\it{\ul{Uniqueness part :}} 

Let now $L$ be a language and $\cdot$ be a family of functions as in \textup{\textbf{Theorem} \ref{PositiveAction}}. Let $n \in \nn$ and $a \in \Lcinc[n]$. 

There exists $(a_k)_{k \in \nn} \in L_n^\nn$ such that $a_k \rightarrow a$. The $a_k$s being compositions of elements of $L$, and $\cdot$ preserving the composition, for all $f \in USC(\lcal)^n$ $a_k \cdot f = a_k(f)$. $\cdot$ being continuous, for all $f \in USC(\lcal)^n$, $(a \cdot f) = \lim a_k \cdot f = \lim a_k(f) = a(f)$. 
\end{proof} 

\begin{proof}[Proof of \bf{Theorem} \ref{USCAction}]  

\it{\ul{Existence :}} 

By virtue of \bf{Lemma} \ref{isometry}, for all $n \in \nn$, $\fct[\Lusc[n], \USC^{(\USC^n)}]{a, a(\_)}$ is continuous. This family of continuous functions tautologically satisfies axioms \ref{max cdot} to \ref{j cdot}. 

\it{\ul{Uniqueness :}} 

Let $\cdot$ be such a family of functions. By \bf{Theorem} \ref{PositiveAction}, for all $n \in \nn$, $a \in \Lcinc[n]$ and $f \in \USC^n$, $a \cdot f = a(f)$. So, by \bf{Theorem} \ref{reduction to Lcinc}, for all $n \in \nn$, $a \in \Lusc[n]$ and $f \in \USC^n$, $a \cdot f = a(f)$. 
\end{proof} 

\subsection{The Continuous Logic structure on $\USC$} 

We remind the reader that the language we are finally interested in is $\mathcal{L} = \{\vee,\, \wedge,\, \+,\, \-,\, \frac{\cdot}{2},\, 2,\, j_\ast,\, j,\, \alpha,\, \ul 0,\, \ul 1\}$. 
\\ 
In order to give a systematic way to translate any inequation in the language $\mathcal{L}_{crl}$ and $\mathcal{L}_{[0,1]}$ to an inequation in the language $\mathcal{L}$ such that if the former is universally satisfied by $\USC$, so is the later, we have to replace $\otimes$ and $\nrightarrow$ in formulas of $\mathcal{L}_{crl}$ (\bf{Theorem} \ref{théorème de comparaison}) and compare the interpretation of $\max$ from the language $\mathcal{L}_{[0,1]}$ to $\vee$ (\bf{Theorem} \ref{from [0,1] to L}). 

\begin{defi} \label{def AC-algèbres} 

The interpretation of $\+$, $2$, $\frac{\cdot}{2}$, $2$, $j_\ast$, $j$, $\alpha$, $\ul 0$ and $\ul 1$ is given by the interpretation of $\mathcal{L}$ in $[0,1]$ (\bf{Definition} \ref{interpretation}). We call every $(\USC,\, \vee,\, \wedge,\, \+,\, \-,\, 2,\, \frac{\cdot}{2},\, j_\ast,\, j,\, \alpha,\, \ul 0,\, \ul 1)$, for $\lcal$ a \crcl, an \emph{Affine Continuous algebra}, or AC-algebra. 

\end{defi} 

\begin{lem} \label{calculs} 

Let $f$, $g \in \USC$ and $U$, $V \in \lcal$. 

$2\frac{f}{2} = f$ and, for all $q \in [0,1]$, $2(g \nrightarrow f)(q) \leq 2g(q) \nrightarrow 2f(q)$. 

$0_U \+ 0_V = 0_{U \otimes V}$ and $0_U \- 0_V = 0_{V \nrightarrow U}$. 

\end{lem}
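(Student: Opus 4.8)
The lemma bundles four independent verifications, and the plan is to reduce each to the explicit pointwise formulas collected in \textbf{Lemma} \ref{calculatoire}, substituting and simplifying. I would separate the two claims concerning arbitrary $f,g \in \USC$ from the two claims concerning the $0$-indicators.

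For the identity $2\frac f2 = f$ the cleanest route is \textbf{Theorem} \ref{formule formules}: the terms $2\frac v2$ and $v$ are both linear, and in $[0,1]$ one has $2\cdot\frac x2 = x$ (since $\frac x2 + \frac x2 = x \leq 1$), so $[0,1] \models 2\frac v2 = v$. Applying the theorem to each of the inequalities $2\frac v2 \leq v$ and $v \leq 2\frac v2$ (the left-hand term is linear in both cases, as required) gives $\USC \models 2\frac v2 = v$. One could instead compute by hand from \textbf{Lemma} \ref{calculatoire}(6): $(2\frac f2)(q) = (\frac f2)(q/2)$, and unwinding \textbf{Definition} \ref{interpretation} for the unary symbol $\frac{\cdot}{2}$ (noting $q/2 \leq \frac12$) gives $(\frac f2)(q/2) = f(q)$. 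For the inequality $2(g\nrightarrow f)(q) \leq 2g(q)\nrightarrow 2f(q)$ I would evaluate everything at $q/2$ and chain three facts: by \textbf{Lemma} \ref{calculatoire}(6), $\bigl(2(g\nrightarrow f)\bigr)(q) = (g\nrightarrow f)(q/2)$; by \textbf{Lemma} \ref{lem nrightarrow} the residual of $\USC$ lies pointwise below the residual of $\lcal$, so $(g\nrightarrow f)(q/2) \leq g(q/2)\nrightarrow f(q/2)$; and $g(q/2) = (2g)(q)$, $f(q/2) = (2f)(q)$ once more by \textbf{Lemma} \ref{calculatoire}(6). Composing these is exactly the claim.

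For the $0$-indicators I would compute directly from the convolution formulas of \textbf{Lemma} \ref{calculatoire}(3), using $0_U(q) = U$ for $q > 0$ and $0_U(0) = \bot$. For $0_U \+ 0_V = 0_{U\otimes V}$: at $q = 0$ both sides are $\bot$ (empty join); for $q > 0$ one has $(0_U \+ 0_V)(q) = \bigvee[p < q] 0_U(p)\otimes 0_V(q-p)$, where the $p = 0$ summand is $\bot \otimes 0_V(q) = \bot$ and every summand with $0 < p < q$ equals $U \otimes V$, so (the interval $(0,q)$ being nonempty) the join is $U \otimes V = 0_{U\otimes V}(q)$. For $0_U \- 0_V = 0_{V\nrightarrow U}$: again both sides vanish at $q = 0$; for $q > 0$, $(0_U \- 0_V)(q) = \bigvee[p < q]\bigwedge[r \geq p] 0_V(r-p)\nrightarrow 0_U(r)$, and for each fixed $p < q$ the index $r = p$ contributes $\bot \nrightarrow 0_U(p) = \top$ while every $r > p$ contributes $V \nrightarrow U$ (as $r > p \geq 0$ forces $r > 0$), so the inner meet is $V \nrightarrow U$ and hence so is the outer join.

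I anticipate no conceptual difficulty; everything is mechanical once \textbf{Lemma} \ref{calculatoire} is invoked. The only steps demanding care are the boundary bookkeeping in the last two identities: one must confirm that for $q > 0$ the index sets $\{p : 0 < p < q\}$, $\{r : p < r \leq 1\}$ and $[0,q)$ are nonempty, and that the jump of the indicator at $0$ generates only the harmless terms $\bot \otimes (\cdot) = \bot$ and $\bot \nrightarrow (\cdot) = \top$, which are absorbed by the surrounding join and meet rather than changing the computed value.
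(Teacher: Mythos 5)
Your proof is correct and takes essentially the same route as the paper: the paper likewise verifies all four identities by direct pointwise computation from the formulas of \textbf{Lemma} \ref{calculatoire}, with the same boundary bookkeeping for the $0$-indicators (the terms $\bot \otimes (\cdot)$ at $p=0$ and $\bot \nrightarrow (\cdot) = \top$ at $r=p$ being absorbed by the join and meet exactly as you describe). The only cosmetic differences are that for $2\frac{f}{2}=f$ the paper uses your alternative direct computation rather than the transfer via \textbf{Theorem} \ref{formule formules} (both are valid, since $2\frac{v}{2}$ and $v$ are linear terms), and that it re-derives the pointwise bound $(g \nrightarrow f)\left(\frac{q}{2}\right) \leq g\left(\frac{q}{2}\right) \nrightarrow f\left(\frac{q}{2}\right)$ inline instead of citing \textbf{Lemma} \ref{lem nrightarrow}.
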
 

\begin{proof} 

For all $q \in (0,1]$, 

$$2\frac{f}{2}(q) = \frac{f}{2}\left(\frac{q}{2}\right) = \bigvee[\frac{U}{2} \subset {\left[0,\frac{q}{2}\right)]}] G(f)(U) = G(f)([0,q)] = f(q)$$ $$2(g \nrightarrow f)(q) = (g \nrightarrow f)\left(\frac{q}{2}\right) = \bigvee[p \wb \frac{q}{2}] \bigwedge[r \geq p] g(r) \nrightarrow f(r) \leq g\left(\frac{q}{2}\right) \nrightarrow f\left(\frac{q}{2}\right) = 2g(q) \nrightarrow 2f(q)$$ $$0_U \+ 0_V (q) = \bigvee[p \wb q] 0_U(p) \otimes 0_V(q - p) = \bigvee[0 \wb p \wb q] 0_U(p) \otimes 0_V(q - p) = U \otimes V = 0_{U \otimes V}(q)$$ $$0_U \- 0_V (q) = \bigvee[p \wb q] \bigwedge[r \geq p] 0_V(r) \nrightarrow 0_U(r) = \bigvee[0 \wb p \wb q] \bigwedge[r \geq p] 0_V(r) \nrightarrow 0_U(r) = V \nrightarrow U = 0_{V \nrightarrow U }(q).$$ 
\end{proof} 

The following theorem is used in the proof of \bf{Theorem} \ref{from [0,1] to L}, but also independently of \bf{Theorem} \ref{from [0,1] to L}, in the proof of \bf{Corollary} \ref{2+}. 

\begin{thm} \label{théorème de comparaison} 

For all $f \et g \in \USC$, $$\frac{f \+ g}{2} \leq f \otimes g \leq f \+ g \et f \- g \leq g \nrightarrow f \leq 2(f \- \frac{g}{2}).$$ 

\end{thm}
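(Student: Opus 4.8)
The plan is to split the statement into its two independent pairs and handle each by a different transfer mechanism. For the first pair $\frac{f\+g}{2}\le f\otimes g\le f\+g$, I would first observe that all three terms live in the $\Lusc$-fragment: by \bf{Lemma} \ref{calculatoire} the pointwise product $f\otimes g$ coincides with the convolution action $\max\cdot(f,g)$, while $f\+g$ and $\frac{f\+g}{2}$ are the interpretations of the linear $\Lusc$-terms $v_1\+v_2$ and $\frac{\cdot}{2}(v_1\+v_2)$. Since in $[0,1]$ one checks directly on the truncated operations of \bf{Definition} \ref{Définition du langage} that $\frac{x\+y}{2}\le\max(x,y)\le x\+y$, and the left-hand terms $\frac{v_1\+v_2}{2}$ and $\max(v_1,v_2)$ are linear, \bf{Theorem} \ref{formule formules} transfers both inequalities verbatim to $\USC$. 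This settles the first pair.

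For the second pair I would work inside the residuated structure of $\USC$ (\bf{Theorem} \ref{USC is a crcl}), using two adjunctions read in the reverse pointwise order of $\USC$: the $(\otimes,\nrightarrow)$ one from \bf{Lemma} \ref{lem nrightarrow}, namely $g\nrightarrow f\le h\Leftrightarrow f\le g\otimes h$, and the $(\+,\-)$ one from \bf{Lemma} \ref{calculatoire}, namely $f\- g\le h\Leftrightarrow f\le g\+ h$. Taking $h=g\nrightarrow f$ in the second adjunction reduces $f\- g\le g\nrightarrow f$ to proving $f\le g\+(g\nrightarrow f)$. The first adjunction supplies the counit $f\le g\otimes(g\nrightarrow f)$, and the already-proved right half of the first pair, instantiated at the arguments $g$ and $g\nrightarrow f$, gives $g\otimes(g\nrightarrow f)\le g\+(g\nrightarrow f)$; chaining these two yields $f\le g\+(g\nrightarrow f)$, as needed.

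For the last inequality $g\nrightarrow f\le 2(f\-\frac g2)$ I would run the analogous reduction in the other adjunction: by $(\otimes,\nrightarrow)$ it suffices to prove $f\le g\otimes 2(f\-\frac g2)$. Starting from the $(\+,\-)$-counit $f\le\frac g2\+(f\-\frac g2)$, I would bound the right-hand side using the inequality $a\+ b\le 2(a\otimes b)$ — itself obtained by applying the doubling map $2$ to the first inequality of the theorem and invoking $2\frac h2=h$ from \bf{Lemma} \ref{calculs} — together with the one-line identities $2(a\otimes b)=2a\otimes 2b$ and $2\frac g2=g$. This gives $\frac g2\+(f\-\frac g2)\le 2(\frac g2\otimes(f\-\frac g2))=g\otimes 2(f\-\frac g2)$, hence $f\le g\otimes 2(f\-\frac g2)$, which is the required reduced inequality.

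The step I expect to be most error-prone — and therefore the one to pin down first — is the systematic order reversal when passing between $[0,1]$, $\lcal$ and $\USC$: because $\le$ on $\USC$ is the reverse of the pointwise order, both residuation adjunctions and their counits take what looks like the "wrong" direction, and the naive reading of $\-$ as ordinary truncated subtraction is misleading. I would therefore fix the orientation of each adjunction exactly as it is derived in the cited proofs of \bf{Lemma} \ref{lem nrightarrow} and \bf{Lemma} \ref{calculatoire}, and use only those forms throughout, so that the counits $f\le g\otimes(g\nrightarrow f)$ and $f\le\frac g2\+(f\-\frac g2)$ come out with the signs that make the two chains close.
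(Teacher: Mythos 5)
Your argument is correct. For the pair $\frac{f \+ g}{2} \leq f \otimes g \leq f \+ g$ and for $f \- g \leq g \nrightarrow f$ you follow exactly the paper's route: transfer of $\frac{u \+ v}{2} \leq \max[u,v] \leq u \+ v$ from $[0,1]$ via \textbf{Theorem}~\ref{formule formules} together with \textbf{Lemma}~\ref{calculatoire}, and then a comparison of the two residuals through $f \otimes g \leq f \+ g$ (the paper phrases it as ``$g \nrightarrow f \leq h \Rightarrow f \- g \leq h$ for all $h$'', you phrase it with the counit $f \leq g \otimes (g \nrightarrow f)$; these are the same computation). Where you genuinely diverge is the last inequality $g \nrightarrow f \leq 2\left(f \- \frac{g}{2}\right)$. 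The paper first proves the auxiliary inequality $g \nrightarrow \frac{f}{2} \leq f \- g$ by the same residual comparison using $h \+ g \leq 2(h \otimes g)$, and then doubles it, invoking the lax law $2(g \nrightarrow h) \geq 2g \nrightarrow 2h$ extracted from the pointwise estimate of \textbf{Lemma}~\ref{calculs} together with $2\frac{f}{2} = f$ and the substitution $g \mapsto \frac{g}{2}$. You instead verify the single inequality $f \leq g \otimes 2\left(f \- \frac{g}{2}\right)$ directly, from the counit $f \leq \frac{g}{2} \+ \left(f \- \frac{g}{2}\right)$, the doubled first inequality $a \+ b \leq 2(a \otimes b)$, and the exact identity $2(a \otimes b) = 2a \otimes 2b$, then apply the $(\otimes, \nrightarrow)$ adjunction once. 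This buys you something: you bypass the comparison $2(g \nrightarrow h) \geq 2g \nrightarrow 2h$, which in $\USC$ requires a small step beyond the pointwise inequality of \textbf{Lemma}~\ref{calculs}, since $2g \nrightarrow 2h$ is a regularisation $\ouv{(\cdot)}$ and one must also use sup-preservation of $2(g \nrightarrow h)$. The price is that $2(a \otimes b) = 2a \otimes 2b$ is stated nowhere in the paper, so you should justify it explicitly; it is indeed one line from $(2h)(q) = h\left(\frac{q}{2}\right)$ (\textbf{Lemma}~\ref{calculatoire}) and the pointwise definition of $\otimes$, and the same formula gives the monotonicity of $2 \cdot$ that your doubling step also uses. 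Your closing caution about orientation is well placed, and your adjunctions are all taken in the correct form: $g \nrightarrow f \leq h \Leftrightarrow f \leq g \otimes h$ from \textbf{Lemma}~\ref{lem nrightarrow} and $f \- g \leq h \Leftrightarrow f \leq g \+ h$ from \textbf{Lemma}~\ref{calculatoire}.
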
 

\begin{proof} 

Let $f \et g \in \USC$ and $q \in [0,1]$. $[0,1] \models \frac{u \+ v}{2} \leq \max[u,v]$ and \linebreak $[0,1] \models \max[u,v] \leq u \+ v$, so, according to \bf{Theorem} \ref{formule formules} and \bf{Lemma} \ref{calculatoire}, \linebreak $\frac{f \+ g}{2} \leq f \otimes g \leq f \+ g$. 

For all $h \in \USC$, $g \nrightarrow f \leq h \Leftrightarrow f \leq g \otimes h \Rightarrow f \leq g \+ h \Leftrightarrow (f \- g) \leq h$, so $f \- g \leq g \nrightarrow f$, and $f \- g \leq h \Leftrightarrow f \leq h \+ g \Rightarrow f \leq 2(h \otimes g) \Leftrightarrow g \nrightarrow \frac{f}{2} \leq h$, so $g \nrightarrow \frac{f}{2} \leq f \- g$. \linebreak Since it is true for all $f \et g \in \USC$, for all $f \et g \in \USC$, \linebreak $2(f \- g) \geq 2\left(g \nrightarrow \frac{f}{2}\right) \geq 2g \nrightarrow 2\frac{f}{2} = 2g \nrightarrow f$. 
\end{proof} 

\begin{thm} \label{from [0,1] to L} 

Let \blue{$(\phi_i[v])_{1 \leq i \leq n}$} and $\psi[v]$ be terms in the language \blue{$\left(\mathcal{L}_{[0,1]} \setminus \{\-\}\right) \cup \{j,\, j_\ast,\, \alpha\}$} and assume \blue{each $\phi_i$} is linear. Let $\blue{\tilde \phi_i}$ be the $\mathcal{L}$-terms obtained by replacing every occurrence of $\max$ (resp. $\min$) in $\blue{\phi_i}$ by $\vee$ (resp. $\wedge$) and $\bar \psi$ be the $\mathcal{L}$-term obtained by replacing every occurrence of $\max$ (resp. $\min$) in $\phi$ by $\+$ (resp. $\wedge$). 

\blue{Then, $\USC \models \bigwedge[i = 1][n] \tilde \phi_i \leq \bar \psi$ if and only if $[0,1] \models \bigwedge[i = 1][n] \phi_i \leq \psi$.} 

\end{thm}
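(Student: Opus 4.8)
The plan is to control both $\mathcal{L}$-terms by the two single-function interpretations $a_\phi(f_v)$ and $a_\psi(f_v)$, exploiting that the three operations competing to interpret $\max$ form a chain $f\vee g\leq f\otimes g\leq f\+ g$ in $\USC$, with $\vee$ (the operation inserted into $\tilde\phi$) at the bottom, the genuine action $\otimes$ of $\max$ (\textbf{Lemma}~\ref{calculatoire}) in the middle, and $\+$ (the operation inserted into $\bar\psi$) at the top. The upper inequality is \textbf{Theorem}~\ref{théorème de comparaison}; for the lower one I would argue that, $\lcal$ being integral, $f\otimes g$ lies pointwise below $q\mapsto f(q)\wedge g(q)$, and since $f\otimes g$ is itself sup-preserving it then lies below the regularization $\ouv{(q\mapsto f(q)\wedge g(q))}=f\vee g$ by the adjunction of \textbf{Lemma}~\ref{adj}. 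Note that $\min$ is interpreted by $\wedge$ on both sides, so only the $\max$-nodes move.

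For the forward implication I would run two structural inductions. Every symbol of $\mathcal{L}_{[0,1]}\cup\{j_\ast,\alpha\}$ denotes a monotone operation, so replacing, at each $\max$-node of $\phi$, the true action $\otimes$ by the smaller $\vee$ gives $\tilde\phi[f]\leq\phi[f]$; since $\phi$ is linear, \textbf{Lemma}~\ref{equation formules} identifies $\phi[f]$ with $a_\phi(f_v)$, whence $\tilde\phi[f]\leq a_\phi(f_v)$. Dually, replacing $\otimes$ by the larger $\+$ at the $\max$-nodes of $\psi$ gives $\psi[f]\leq\bar\psi[f]$, and the general bound $a_\psi(f_v)\leq\psi[f]$ of \textbf{Lemma}~\ref{equation formules} (no linearity needed for $\psi$) yields $a_\psi(f_v)\leq\bar\psi[f]$. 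Finally $[0,1]\models\phi\leq\psi$ means $a_\phi\leq a_\psi$ in $\Lusc[n]$, and monotonicity of the action $a\mapsto a(f_v)$ (\textbf{Lemma}~\ref{monotonie de l'action}) gives $a_\phi(f_v)\leq a_\psi(f_v)$. Concatenating, $\tilde\phi[f]\leq a_\phi(f_v)\leq a_\psi(f_v)\leq\bar\psi[f]$ for every $f\colon\vcal\to\USC$, that is, $\USC\models\tilde\phi\leq\bar\psi$.

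For the converse I would imitate the proof of \textbf{Theorem}~\ref{formules [0,1]} and substitute the constants $f=\ul p$, transporting the inequality back to $[0,1]$ along the order-embedding $p\mapsto\ul p$ of \textbf{Lemma}~\ref{[0,1] embedding}. On the image of this embedding $\vee$ and $\wedge$ coincide with $\max$ and $\min$ (indeed $\ul a\vee\ul b=\ul a\otimes\ul b=\ul{\max(a,b)}$) and every unary symbol commutes with $\ul{\,\cdot\,}$, so the left side collapses cleanly to $\tilde\phi[\ul p]=\ul{a_\phi(p)}$. The difficulty, which I expect to be the main obstacle, sits entirely on the right: $\+$ does \emph{not} restrict to $\max$ on constants, since $\ul a\+\ul b=\ul{a\+ b}$, so a bare substitution only compares $a_\phi$ with the $\+$-reading of $\psi$ rather than with $a_\psi$, which is strictly weaker. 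Removing this slack is the crux of the converse.

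The route I would pursue to close it is to test \emph{in addition} on the $0$-indicators $0_U$, where \textbf{Lemma}~\ref{calculs} gives $0_U\+0_V=0_{U\otimes V}$, so that $\+$ falls back onto $\otimes$ and thus mimics $\max$ under $0_{(-)}$, and then to play these lattice-theoretic test points against the numerical points $\ul p$. Verifying that the $\max\mapsto\+$ replacement introduces no genuine gap — equivalently, that constants and $0$-indicators jointly separate enough to force $a_\phi\leq a_\psi$, and hence $[0,1]\models\phi\leq\psi$ — is the delicate and decisive step; it is precisely here, in reconciling the asymmetric treatment of $\max$ on the two sides, that the heart of the argument lies.
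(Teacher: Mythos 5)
Your forward direction is correct and coincides in substance with the paper's: the paper likewise sandwiches the genuine interpretation of $\max$, namely $\otimes$ (\textbf{Lemma} \ref{calculatoire}), between $\vee$ and $\+$ (\textbf{Theorem} \ref{théorème de comparaison} for the upper bound, the pointwise integrality argument for the lower one) and then appeals to \textbf{Theorem} \ref{formule formules}; your chain $\tilde\phi[f] \leq a_\phi(f_v) \leq a_\psi(f_v) \leq \bar\psi[f]$ merely unpacks that theorem into its ingredients, \textbf{Lemmas} \ref{equation formules} and \ref{monotonie de l'action}.

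The gap you isolate in the converse is genuine, and in fact it cannot be closed, because the converse as stated is false (reading the statement's ``in $\phi$'' in the definition of $\bar\psi$ as the evident typo for ``in $\psi$''). Take $\phi = u \+ v$ and $\psi = \max[u , v]$: $\phi$ is linear, both are $\mathcal{L}_{[0,1]}$-terms, and $\tilde\phi = \bar\psi = u \+ v$, so $\USC \models \tilde\phi \leq \bar\psi$ holds trivially for every $\lcal$; yet at $u = v = \frac{1}{2}$ one has $\phi = 1 > \frac{1}{2} = \psi$ in $[0,1]$. The paper's own one-line proof of the converse commits exactly the step you flagged as illegitimate: it asserts that substituting $\ul p$ turns $\tilde\phi \leq \bar\psi$ into $\ul{\phi[p]} \leq \ul{\psi[p]}$, but since $\ul a \+ \ul b = \ul{a \dotp b} \neq \ul{\max[a , b]}$, the substitution only yields $[0,1] \models \phi \leq \psi'$, where $\psi'$ is $\psi$ with every $\max$ replaced by $\dotp$ --- a strictly weaker conclusion, vacuous in the example above. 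Your proposed rescue via the $0$-indicators is also hopeless: by \textbf{Lemmas} \ref{calculs} and \ref{Crl embedding}, $0_U \+ 0_V = 0_{U \otimes V} = 0_U \otimes 0_V$, so on indicators $\+$ already coincides with the genuine $\max$-action and these test points detect no discrepancy at all --- and the counterexample shows that no semantic test whatsoever can. What your substitution argument does prove is the converse with $\tilde\psi$ (every $\max$ in $\psi$ replaced by $\vee$) in place of $\bar\psi$, since $\ul p \vee \ul q = \ul p \otimes \ul q = \ul{\max[p , q]}$ by \textbf{Lemma} \ref{[0,1] embedding}. Note that only the direct sense of the theorem is invoked later (e.g.\ in \textbf{Lemma} \ref{soundness for T}), so the downstream results are unaffected; still, your instinct that ``reconciling the asymmetric treatment of $\max$ on the two sides'' is the heart of the matter was exactly right --- it is where both your attempt and the paper's proof break down.
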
 

\begin{proof} 

\ul{\it{Direct sense :}} Thanks to \bf{Theorem} \ref{formule formules} and since, for all term $\phi$ in $\mathcal{L}_{[0,1]}$ the interpretation of $\phi$ in $\USC$ is non-decreasing in every coordinate, it suffices to show that, for all $f$ and $g \in \USC$, $f \vee g \leq \max[f,g]$ and $\max[f,g] \leq f \+ g$. 

Let $f \et g \in \USC$ and $q \in [0,1]$. $\max[f,g](q) = f(q) \otimes g(q) \leq f(q)$ and \linebreak $\max[f,g](q) = f(q) \otimes g(q) \leq g(q)$, so $\max[f,g](q) \leq f(q) \wedge g(q) = (f \vee g)(q)$. According to \bf{Lemma} \ref{calculatoire} and \bf{Theorem} \ref{théorème de comparaison}, $\max[f,g] = f \otimes g \leq f \+ g$. 

\ul{\it{Converse sense :}} According to \bf{Lemma} \ref{[0,1] embedding}, $q \mapsto \ul q$ is an $\mathcal{L} \cup \{j_\ast,\, \alpha\}$-embedding, so, if $\USC \models \blue{\bigwedge[i = 1][n] \tilde \phi_i} \leq \bar \psi$, then, for all $p \colon \vcal \rightarrow [0,1]$, $\USC \models \blue{\bigwedge[i = 1][n] \ul{\phi_i[p]}} \leq \ul{\psi[p]}$, so $\blue{\bigwedge[i = 1][n] \phi_i[p]} \leq \psi[p]$ and thus $[0,1] \models \blue{\bigwedge[i = 1][n] \phi_i} \leq \psi$. 
\end{proof} 

\begin{remark} 

We end this subsection by noticing that, in the case where $\lcal$ is the topology $\Tcal(X)$ of a topological space $X$, this pointwise structure is the one induced by the $(\vee,\, \wedge,\, \blue\+,\, \blue\-,\, 2,\, \frac{\cdot}{2},\, j\, \ul 0,\, \ul 1)$-structure of $[0,1]$ through the isomorphism $USC(\Tcal(X)) \simeq USC(X)$. 

\end{remark} 

\section{Algebraic axiomatisation of AC-algebras} \label{section algebraic axiomatisation} 

\subsection{The three theories} 

Let us give an algebraic caracterisation of the $\USC$ for $\lcal$ a complete residuated lattice. The language in which we express the axioms is $\mathcal{L}$, that we recall to be $\{\vee,\, \wedge,\, \blue{\+,\, \-},\, \frac{\cdot}{2},\, 2,\, j_\ast,\, j,\, \alpha,\, \ul 0,\, \ul 1\}$. 

\begin{nota} \label{notation dyadiques} 

For all dyadic $d \in [0,1)$, we define $\ul d$ as $\sum[k = 1][2^n d] \frac{j_\ast(\ul 0)}{2^{n-1}}$ where $n$ is the smallest non negative integer such that $2^n d \in \nn$. Note that $\ul{\frac{1}{2}} = j_\ast(\ul 0)$. 

\end{nota} 

Here comes the list of axioms, the truth of which can be deduced from \bf{Theorem} \ref{formule formules} in every algebra of the form $USC(\lcal)$. 

\setlist[enumerate]{itemsep=5pt, topsep=10pt} 

\begin{enumerate}[label=(\arabic*)] 

\item \label{bounded commutative residuated lattice} \textbf{$\blue{(\vee,\, \wedge, \+, \-, \ul 0, \ul 1)}$ is a bounded commutative residuated lattice structure:} 

\begin{enumerate}[label=(\arabic{enumi}.\alph*)] 

\item \label{bounded lattice} $(A, \vee, \wedge, \blue{\ul 0}, \ul 1)$ is a bounded lattice 
\item \label{com monoid} $(A, \+, \ul 0)$ is a commutative monoid 
\item \label{residuation} $\-$ is the residual of $\+$: $w \leq u \+ v \Leftrightarrow w \- v \leq u$ 

\end{enumerate} 

\item \target{croissance} \label{croissance} \textbf{$2$, $\frac{\cdot}{2}$, $j_\ast$, $j$ and $\alpha$ non-decreasing.} 

\item \label{aDjunctions} \textbf{The adjunctions:} 

\begin{enumerate}[label=] 

\item \target{2 adjunction}[] \begin{enumerate*}[label=(\arabic{enumi}.\alph{enumii}.\arabic*) , itemjoin={\quad}] \item \label{(2v)/2<=v} $\frac{2 v}{2} \leq v$ \hspace{79pt} and \item \label{v<=2(v/2)} $v \leq 2 \frac{v}{2}$ \end{enumerate*} 

\item \begin{enumerate*}[label=(\arabic{enumi}.\alph{enumii}.\arabic*) , itemjoin={\quad}] \item \label{jj*v<=v} $j \circ j_\ast(v) \leq v$ \hspace{53pt} and \item \label{v<=j*jv} $v \leq j_\ast \circ j(v)$ \end{enumerate*} 

\end{enumerate} 

\item \target{Defining axioms}[] \label{Defining axioms} \textbf{Defining axioms:} 

\begin{enumerate}[label=\phantom{\alph*}] 

\item \hypertarget{cont de 2}{} \begin{enumerate*}[label=(\arabic{enumi}.\alph{enumii}.\arabic*) , itemjoin={\quad}] \item \label{2(+) <= 2 + 2} \blue{$2(u \+ v) \leq 2 u \+ 2 v$} \hspace{35pt} and \item \label{+ 2 <= 2(+)} $2 u \+ 2 v \leq 2 (u \+ v)$ \end{enumerate*} 

\item \color{blue} \hypertarget{def j*}{} \begin{enumerate*}[label=(\arabic{enumi}.\alph{enumii}.\arabic*) , itemjoin={\quad}] \item \label{def j* <=} $j_\ast(2v \+ u) \leq v \+ j_\ast(u)$ \hspace{19pt} and \item \label{def j* >=} $j_\ast(2v \+ u) \geq v \+ j_\ast(u)$ \end{enumerate*} \color{black} 

\item \hypertarget{>=}{} \begin{enumerate*}[label=(\arabic{enumi}.\alph{enumii}.\arabic*) , itemjoin={\quad}] \item \label{2 >=} $2 v \geq v$ \hspace{80pt} and \item \label{j* >=} $j_\ast(v) \geq v$ \end{enumerate*} 

\item \hypertarget{def alpha, 1}{} \begin{enumerate*}[label=(\arabic{enumi}.\alph{enumii}.\arabic*) , itemjoin={\quad}] \item \label{alpha beta v >= v} $v \leq \alpha(2v \wedge j_\ast(v))$ \hspace{32pt} and \item \label{alpha beta v <=  v} $\alpha(2v \wedge j_\ast(v)) \leq v$ \end{enumerate*} 

\item \hypertarget{def alpha, 2}{} \begin{enumerate*}[label=(\arabic{enumi}.\alph{enumii}.\arabic*) , itemjoin={\quad}] \item \label{beta alpha v <=  v} $2\alpha(v) \wedge j_\ast \circ \alpha(v)\leq v$ \hspace{17pt} and \item \label{beta alpha v >= v} $v \leq 2\alpha(v) \wedge j_\ast \circ \alpha(v)$ \end{enumerate*} 

\item \hypertarget{fin def 2 et j*}{} \begin{enumerate*}[label=(\arabic{enumi}.\alph{enumii}.\arabic*) , itemjoin={\quad}] \item \label{2j*(0) >= 1} $2j_\ast(\ul 0) \geq \ul 1$ \hspace{65pt} and \item \label{2v <=} \blue{$2v \leq v \+ v$} \end{enumerate*} 

\item \begin{enumerate*}[label=(\arabic{enumi}.\alph{enumii})] \item \label{1/2^n + 1/2^n} for all $n \in \nn^*$, $\alpha^n \circ j_\ast (\ul 0) \+ \alpha^n \circ j_\ast (\ul 0) \leq \alpha^{n-1} \circ j_\ast (\ul 0)$ \end{enumerate*} 

\end{enumerate} 

\color{blue} \item \color{black} \target{infinitesimals} \label{infinitesimals} \textbf{And the algebra of values is $[0,1]$ up to infinitesimals:} 

\begin{enumerate}[label=(\arabic{enumi}.\alph*)] \label{Values} 

\item \label{v<=d} for all dyadic number $d \in [0,1]$ and $n \in \nn$, $v \leq \ul{d} \+ \alpha^n (v \+ \ul{1 - d})$ 

\item \label{v>=d} for all $n \in \nn$, $\bigwedge[k = 1][2^{n+1} - 1] \ul{1 - \frac{k}{2^{n+1}}} \+ \alpha^{n+1} \blue{\left(v \+ \ul{\frac{k}{2^{n+1}}}\right)} \leq v \+ \ul{\frac{1}{2^n}}$ 

\end{enumerate} 

\item \textbf{The two non-algebraic properties:} \label{non-alg. properties} 

\begin{enumerate}[label=(\arabic{enumi}.\alph*)] 

\item \label{complete} completeness: the partial order is complete 

\item \label{archimedean} Archimedean: the lower bound of the family of $\ul{\frac{1}{2^n}}$ is $\ul 0$ 

\end{enumerate} 

\end{enumerate} 

We denote by $\T$ the theory consisting of axioms \ref{bounded lattice} to \ref{v>=d}. 

\begin{prop} 

\label{approx} For all $f$ and $g \in \USC$, $f \+ g = \bigwedge[q,p \in {[0,1]}] \ul q \+ \ul p \+ 0_{f(q)} \+ 0_{g(p)}$. Thus, for all $f \in \USC$, $f = \bigwedge[q \in {[0,1]}] \ul q \+ 0_{f(q)}$. 

\end{prop}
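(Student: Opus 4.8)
The plan is to verify the identity pointwise, exploiting that in $\USC$ (which carries the reverse pointwise order) the meet $\bigwedge$ of any family is computed as the pointwise supremum in $\lcal$, by \textbf{Lemma}~\ref{def wedge et vee}. Thus, writing $h_{q,p} = \ul q \+ \ul p \+ 0_{f(q)} \+ 0_{g(p)}$, it suffices to show that for every $r \in [0,1]$ one has $(f \+ g)(r) = \bigvee_{q,p \in [0,1]} h_{q,p}(r)$, the right-hand side being the pointwise join in $\lcal$ of the family $(h_{q,p})$.

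First I would record an iterated form of the convolution formula of \textbf{Lemma}~\ref{calculatoire}: since $\+$ is associative and commutative and $\otimes$ preserves suprema, for $a_1,\dots,a_4 \in \USC$ one obtains $(a_1 \+ a_2 \+ a_3 \+ a_4)(r) = \bigvee_{s_1 + s_2 + s_3 + s_4 < r} a_1(s_1) \otimes a_2(s_2) \otimes a_3(s_3) \otimes a_4(s_4)$, the truncation in $\dotp$ disappearing because the constraint forces the sum to be $< r \leq 1$. Applying this to $h_{q,p}$ and using $\ul q(s) = \top \Leftrightarrow s > q$ and $0_U(s) = U \Leftrightarrow s > 0$ (with value $\bot$ otherwise), together with the facts that $\top$ is neutral and $\bot$ is absorbing for $\otimes$, each summand is either $\bot$ or exactly $f(q) \otimes g(p)$; a non-$\bot$ summand exists iff one can pick $s_1 > q$, $s_2 > p$, $s_3, s_4 > 0$ with $\sum s_i < r$, i.e. iff $q + p < r$. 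Hence $h_{q,p}(r) = f(q) \otimes g(p)$ when $q + p < r$ and $h_{q,p}(r) = \bot$ otherwise.

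Taking the join over $q,p$ then yields $\bigvee_{q,p} h_{q,p}(r) = \bigvee_{q + p < r} f(q) \otimes g(p)$, which is exactly $(f \+ g)(r)$ by \textbf{Lemma}~\ref{calculatoire}, proving the first identity. For the second, the same computation applied to the single convolution $\ul q \+ 0_{f(q)}$ gives $(\ul q \+ 0_{f(q)})(r) = f(q)$ for $q < r$ and $\bot$ otherwise, so $\bigl(\bigwedge_q \ul q \+ 0_{f(q)}\bigr)(r) = \bigvee_{q < r} f(q)$, and this equals $f(r)$ precisely because $f \in \USC$ satisfies $f(r) = \bigvee_{p < r} f(p)$ by \textbf{Lemma}~\ref{adj}. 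I expect the main obstacle to be purely bookkeeping: keeping the reversed order of $\USC$ straight so that $\bigwedge$ really is a pointwise $\bigvee$, and handling the strict inequalities and the truncation in $\dotp$ carefully so that the index set $\{q + p < r\}$ emerges cleanly without spurious boundary contributions.
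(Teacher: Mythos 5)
Your proof is correct and takes essentially the same route as the paper's: evaluate both sides at each $r \in [0,1]$, use that the infimum in $\USC$ is the pointwise supremum in $\lcal$ (\textbf{Lemma} \ref{def wedge et vee}), expand the convolution, and observe that the surviving index set is exactly $\{(q,p) \;|\; q + p < r\}$, which recovers $(f \+ g)(r)$ via \textbf{Lemma} \ref{calculatoire}. The only cosmetic differences are that the paper first collapses $0_{f(q)} \+ 0_{g(p)} = 0_{f(q) \otimes g(p)}$ and $\ul q \+ \ul p = \ul{q \dotp p}$ using \textbf{Lemmas} \ref{calculs} and \ref{[0,1] embedding} before the pointwise computation and then deduces the second identity from the first, whereas you expand the four-fold convolution directly and obtain the second identity by the same one-variable calculation.
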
 

{\remark The equality $f = \bigwedge[q \in {[0,1]}] \ul q \+ 0_{f(q)}$ is a corollary of the equality $\bar{e}(f) = f\textasciicircum$ found in \cite[p. 4]{santocanaleDualizingSuppreservingEndomaps2021}.} 

\begin{proof} 

For all $r \in {[0,1]}$, thanks to \textbf{Lemmas} \ref{calculs} and \ref{[0,1] embedding}, \begin{align*} 
\left(\bigwedge[q,p \in {[0,1]}] \ul q \+ \ul p \+ 0_{f(q)} \+ 0_{g(p)}\right) (r) &= \bigvee[q,p \in {[0,1]}] (\ul q \+ \ul p \+ 0_{f(q) \otimes g(p)}) (r)\\ 
&= \bigvee[q,p \in {[0,1]}] \bigvee[r' \+ r'' \wb r] (\ul q \+ \ul p)(r') \otimes (0_{f(q)} \otimes 0_{g(p)}) ({r''})\\ 
&= \bigvee[q,p \in {[0,1]}] \bigvee[0 \wb r''] \bigvee[r' \+ r'' \wb r] \Big(\ul{q \+ p}({r'}) \otimes (f(q) \otimes g(p))\Big)\\ 
&= \bigvee[q,p \in {[0,1]}] \bigvee[0 \wb r''] \bigvee[\ust{q \+ p \wb r'}{r' \+ r'' \wb r}] f(q) \otimes g(p)\\ 
&= \bigvee[q,p \in {[0,1]}] \bigvee[\ust{q \+ p \wb r'}{r' \+ 0 \wb r}] f(q) \otimes g(p)\\ 
&= \bigvee[p \+ q \wb r] f(q) \otimes g(p)\\ 
&= (f \+ g)(r) 
\end{align*} 
\end{proof} 

\begin{lem} \label{soundness for T} 

For every commutative residuated complete lattice $\lcal$, $\USC$ satisfies $\T$ and \ref{non-alg. properties}. 

\end{lem}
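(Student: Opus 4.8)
The plan is to run through the axioms of $\T$ and the two properties grouped under \ref{non-alg. properties} one block at a time, using the transfer principle \textbf{Theorem}~\ref{formule formules} as the main engine for the inequational axioms and treating the structural axioms and two exceptional cases directly. First the structural and non-algebraic parts. The bounded-lattice axioms together with completeness \ref{complete} are exactly \textbf{Lemma}~\ref{def wedge et vee}, which presents $\USC$ as a complete lattice with bottom $\ul 0$ and top $\ul 1$; that $(\USC,\+,\ul 0)$ is a commutative monoid and that $\-$ is the residual of $\+$ is \textbf{Lemma}~\ref{calculatoire}, parts (3)--(4), where the convolution formula for $\+$ is visibly symmetric and $\ul 0$ is identified as the neutral element. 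Monotonicity of $2,\tfrac{\cdot}{2},j_\ast,j,\alpha$ holds because each is the interpretation of an element of $\Lcinc[1]$, whose action on $\USC$ lands in $C^0_\nearrow(\USC,\USC)$ by \textbf{Theorem}~\ref{PositiveAction} and is hence order-preserving. For the Archimedean property \ref{archimedean} I would compute the meet pointwise: by \textbf{Lemma}~\ref{def wedge et vee} one has $\bigl(\bigwedge[n]\ul{\tfrac{1}{2^n}}\bigr)(q)=\bigvee[n]\ul{\tfrac{1}{2^n}}(q)$, and $\ul{\tfrac{1}{2^n}}(q)=\top$ as soon as $q>\tfrac{1}{2^n}$, so this join is $\top$ for every $q>0$ and $\bot$ for $q=0$, i.e. $\ul 0$.

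The bulk of the remaining axioms (the adjunctions, the defining axioms other than the two discussed below, the continuity axioms, and both axioms of \ref{Values}) are inequalities of the shape $\bigwedge[i]\phi_i \le \psi$ in which every operation occurring is the interpretation of an $\Lusc$-operation: $\wedge$ is $\min$ by \textbf{Lemma}~\ref{calculatoire}(2), $\+$ is $\dotp$, and crucially neither the join $\vee$ (which differs from $\max=\otimes$) nor $\-$ ever appears. In each of these the smaller side $\bigwedge[i]\phi_i$ is a meet of linear terms, and the corresponding inequality is an elementary truth in $[0,1]$ — for instance $\tfrac{2x}{2}=\min[x,\tfrac12]\le x$, $j(j_\ast(x))=x$, $2x=x\dotp x$, $\alpha(\beta(x))=x$ with $\beta(x)=\min[2x,\tfrac{x}{2}+\tfrac12]=\alpha^{-1}(x)$, and the dyadic identity $\alpha^n(j_\ast(\ul 0))=\ul{2^{-(n+1)}}$. \textbf{Theorem}~\ref{formule formules} then lifts each such inequality from $[0,1]$ to $\USC$; this also covers the two halves $v\le\alpha(2v\wedge j_\ast(v))$ and $v\le 2\alpha(v)\wedge j_\ast(\alpha(v))$, whose smaller side is the linear term $v$.

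The one genuine obstacle is the complementary pair $\alpha(2v\wedge j_\ast(v))\le v$ and $2\alpha(v)\wedge j_\ast(\alpha(v))\le v$: here the non-linear term sits on the \emph{smaller} side, so \textbf{Theorem}~\ref{formule formules} only yields the reverse inequalities already obtained above. I would settle these by a direct computation exploiting that every $f\in\USC$ is non-decreasing (being sup-preserving). Writing $a(f)(q)=f(\bar a(q))$ with $\bar a(q)=\sup\{p:a(p)<q\}$ for a non-decreasing continuous unary $a$ (as in \textbf{Lemma}~\ref{calculatoire}(6) and \textbf{Lemma}~\ref{adj}), one has $2f(q)=f(q/2)$, $j_\ast(f)(q)=f(j(q))$ and $\alpha(f)(q)=f(\beta(q))$. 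Since $f$ is non-decreasing and $q/2,j(q)\le q$, the meet collapses to a precomposition:
\[
(2f\wedge j_\ast(f))(q)=f(q/2)\vee f(j(q))=f(\max[q/2,j(q)])=f(\alpha(q)).
\]
Applying $\alpha$ then gives $\alpha(2f\wedge j_\ast(f))(q)=f(\alpha(\beta(q)))=f(q)$, so the two axioms defining $\alpha$ as a left inverse of $v\mapsto 2v\wedge j_\ast(v)$ hold with equality; the symmetric computation $2\alpha(f)\wedge j_\ast(\alpha(f))(q)=\alpha(f)(\alpha(q))=f(\beta(\alpha(q)))=f(q)$ settles the other pair. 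This exhausts $\T$ together with \ref{non-alg. properties}, the only non-routine point being the collapse of the meet above, which is exactly where the monotonicity of elements of $\USC$ is used.
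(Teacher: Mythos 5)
Your proof is correct, and its backbone — structural facts from \textbf{Lemmas}~\ref{def wedge et vee} and \ref{calculatoire} plus transfer of inequalities from $[0,1]$ — is the same as the paper's, which disposes of the entire block from \ref{(2v)/2<=v} to \ref{v>=d} in one line by citing \textbf{Theorem}~\ref{from [0,1] to L}. The genuine difference is exactly at the point you flag: the paper's citation is strictly too quick for axiom \ref{alpha beta v <=  v}, since \textbf{Theorem}~\ref{from [0,1] to L} (like \textbf{Theorem}~\ref{formule formules}) requires the smaller side to be linear, and $\alpha(2v \wedge j_\ast(v))$ is not. Your direct computation $\big(2f \wedge j_\ast(f)\big)(q) = f(q/2) \vee f(j(q)) = f(\alpha(q))$, which uses only that sup-preserving maps are non-decreasing together with $\alpha(f) = f \circ \beta$ from \textbf{Lemma}~\ref{def}, settles this cleanly — indeed with equality — so your write-up patches a small gap the paper leaves implicit. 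Two remarks: first, for \ref{beta alpha v <=  v} the detour is unnecessary, since $2\alpha(v) \wedge j_\ast(\alpha(v))$ is already a meet of two linear terms, so \textbf{Theorem}~\ref{formule formules} applies as stated; second, even for \ref{alpha beta v <=  v} one can stay inside the transfer principle by observing that $\alpha(f \wedge g) = (f \wedge g) \circ \beta = \alpha(f) \wedge \alpha(g)$, which rewrites the left-hand side as the meet of the linear terms $\alpha(2v)$ and $\alpha(j_\ast(v))$. Your explicit pointwise verification of the Archimedean property \ref{archimedean}, and your appeal to \textbf{Lemma}~\ref{def wedge et vee} rather than the transfer theorem for the bounded-lattice axioms, are likewise correct and somewhat more careful than the paper's ``clearly''.
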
 

\begin{proof} 

Clearly, $\USC$ satisfies axioms \ref{residuation} and \ref{croissance} and is complete and Archimedean. According to \bf{Theorem} \ref{from [0,1] to L}, $\USC$ satisfies axioms \ref{bounded lattice}, \ref{com monoid}, and axioms from \ref{(2v)/2<=v} to \ref{v>=d}, since they are satisfied by $[0,1]$. 
\end{proof} 

We aim at proving the following theorem: 

\begin{thm} \label{complete archimedean} 

Any complete Archimedean model of $\T$ is isomorphic to an AC-algebra. 

\end{thm}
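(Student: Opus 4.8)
The plan is to recover a \crcl\ $\lcal$ intrinsically inside $A$ and then rebuild the sup-preserving functions through the formula of \textbf{Proposition} \ref{approx}. The guiding remark is that in $\USC$ the fixed points of the doubling operation $2\cdot$ are exactly the $0$-indicators: one checks $2\cdot 0_U = 0_U$, and conversely $2f=f$ forces $f$ to be constant on $(0,1]$ with $f(\ul 0)=\bot$, hence $f=0_{f(\ul 1)}$. Accordingly I would set $\lcal := \{\,v\in A \mid 2v=v\,\}$, equipped with the \emph{reverse} of the order of $A$ (to agree with \textbf{Lemma} \ref{Crl embedding}, where $U\mapsto 0_U$ embeds $\lcal$ into $\USC^{op}$), with monoid operation $\otimes := \+$ and unit $\ul 0$, and with lattice operations inherited (the meet and join of $A$ being interchanged).

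Showing $\lcal$ is a \crcl\ is where the defining axioms enter. Closure under $\+$ follows from \ref{2v <=}, \ref{+ 2 <= 2(+)} and \ref{2(+) <= 2 + 2}, which together give $2(u\+v)=2u\+2v$, so $2u=u$ and $2v=v$ yield $2(u\+v)=u\+v$. Closure under arbitrary meets of $A$ follows from monotonicity of $2$ (axiom \ref{croissance}) and \ref{2 >=}: if each $2v_i=v_i$ then $2\bigwedge[i] v_i \le \bigwedge[i] 2v_i=\bigwedge[i] v_i\le 2\bigwedge[i] v_i$. Hence $\lcal$ is closed under all $A$-meets and contains the top, so it is a complete lattice in the reversed order; the monoid $(\lcal,\+,\ul 0)$ is commutative with unit $\ul 0$, which is the minimum of $A$ and therefore the top of $\lcal$, so it is integral. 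The residual comes from axiom \ref{residuation}: for $X,Y,Z\in\lcal$ one has $X\otimes Y$ below $Z$ in $\lcal$ iff $Z\le X\+Y$ in $A$ iff $Z\-Y\le X$ in $A$, so $Z\-Y$ (once seen to lie in $\lcal$) is the residual $Y\nrightarrow Z$. Checking that $\-$ preserves $\lcal$ is the first delicate point; I would obtain it either from \textbf{Lemma} \ref{calculs} transported through the reconstruction, or directly from the displayed adjunction together with closure under meets (a $\otimes$ preserving joins always admits a residual in a complete lattice).

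Next I would build the comparison map. Since $\lcal$ is closed under $A$-meets, the inclusion $\lcal\hookrightarrow A$ has a reflection, and for each $q\in[0,1]$ I let $v_q(a)\in\lcal$ be the largest value with $a\le \ul q\+0_{v_q(a)}$ (which exists because $\+$ preserves $\lcal$-joins). I then define $\Phi(a)(q):=\bigvee[s<q] v_s(a)$, the join computed in $\lcal$, which is sup-preserving in $q$ by construction. The whole point of the value axioms \ref{v<=d}--\ref{v>=d}, together with completeness and the Archimedean axiom \ref{archimedean}, is precisely to guarantee that $\Phi(a)$ is $\lcal$-valued and that $a$ is recovered as $a=\bigwedge[q \in {[0,1]}]\ul q\+0_{\Phi(a)(q)}$; the iterated estimates involving the constants $\alpha^n\circ j_\ast(\ul 0)=\ul{1/2^n}$ are what make this reconstruction converge. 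The candidate inverse is $\Psi(b)=\bigwedge[q \in {[0,1]}]\ul q\+0_{b(q)}$, literally the decomposition of \textbf{Proposition} \ref{approx}, now read inside $A$.

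Finally I would check that $\Phi$ is an $\mathcal{L}$-isomorphism onto $\USC$. That $\Phi$ and $\Psi$ are mutually inverse is the content of the reconstruction identity above (surjectivity being immediate from \textbf{Proposition} \ref{approx}); injectivity follows from the Archimedean axiom, which excludes infinitesimal discrepancies between value functions. Preservation of each symbol of $\mathcal{L}$ reduces, through the transfer \textbf{Theorems} \ref{from [0,1] to L} and \ref{formule formules} for the $[0,1]$-part and \textbf{Lemmas} \ref{Crl embedding} and \ref{calculs} for the $\lcal$-part (i.e. $0_U\+0_V=0_{U\otimes V}$ and $0_U\-0_V=0_{V\nrightarrow U}$), to identities already established. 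I expect the genuine obstacle to be analytic rather than algebraic: proving inside the abstract complete Archimedean model that axioms \ref{v<=d}--\ref{v>=d} plus \ref{complete} and \ref{archimedean} force the identity $a=\bigwedge[q \in {[0,1]}]\ul q\+0_{\Phi(a)(q)}$, i.e. that $\Phi$ really round-trips and stays $\lcal$-valued. This is where the full weight of the axiomatisation concentrates, and I would attack it by first pinning down the diagonal constants $\ul{1/2^n}$, then establishing the correspondence on the dense dyadic values, and finally passing to the limit using completeness.
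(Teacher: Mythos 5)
Your skeleton is the paper's own: the underlying \crcl{} is carved out of the model as the fixed points of doubling, and the inverse map is the decomposition of \textbf{Proposition} \ref{approx}. (Your variant — taking $v_q(a)$ to be the $A$-least fixed point $e$ with $a \leq \ul q \+ e$, which exists because fixed points are closed under $A$-meets and $\+$ preserves meets — is sound, and pleasantly sidesteps, at the definitional stage, the paper's difficulty that a join of fixed points of $2$ is not obviously fixed; note also that in a complete Archimedean model $\preceq$ coincides with $\leq$, so your use of strict equality $2v = v$ agrees with the paper's $2e \simeq e$.) The genuine gap is that the step you defer is the theorem, and your three-step plan for it omits the mechanism. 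The inequality $a \leq \bigwedge_{q} \ul q \+ v_q(a)$ is automatic from your definition; the converse is where all the content lies, and it cannot be reached as you describe. Axioms \ref{v<=d} and \ref{v>=d} are phrased in terms of the iterates $\alpha^n$, while $v_q(a)$ is defined through fixed points of $2$; to bridge them you must prove that $l(b) := \bigwedge_{n} \alpha^n(b)$ is itself a fixed point of $2$, so that \ref{v<=d}, intersected over $n$, yields $v_d(a) \leq_A l(a \+ \ul{1 - d}) \leq \alpha^{n+1}(a \+ \ul{1 - d})$, after which \ref{v>=d} and Archimedeanity \ref{archimedean} close the estimate. That fixed-point fact is precisely the technical heart of the paper's \textbf{Lemma} \ref{l is nice} (the formulas $\alpha^n(a) = \bigvee_{k} \frac{j^{n-k}(a)}{2^k}$ and $\alpha^{-n}(a) = \bigwedge_{k} j_\ast^{n-k}(2^k a)$, the estimates against $\frac{\ul 1}{2^n}$, and the equivalence of the fixed points of $2$, $j$ and $\alpha$); it cannot be extracted from monotonicity and adjunctions alone, because $2$, being a right adjoint, preserves meets but not joins. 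Your sketch never engages $\alpha$ at all.

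The second problem is circularity in the steps you do indicate. \textbf{Theorems} \ref{formule formules} and \ref{from [0,1] to L}, \textbf{Proposition} \ref{approx} and \textbf{Lemma} \ref{calculs} are statements about genuine algebras $\USC$; they say nothing about the abstract model $A$ until \emph{after} the isomorphism is established, so they cannot be used to show that $\Phi$ preserves the symbols of $\mathcal{L}$, nor that "surjectivity is immediate". Preservation of $\+$ requires a computation carried out inside $A$ from the axioms — in the paper, item (6) of the proof of \textbf{Theorem} \ref{celui-là}, showing $a \+ b \simeq k(i(a) \+ i(b))$ via \ref{redef +}, \ref{def l}, \ref{Stability 2} and residuation — and an order isomorphism does not determine $\+$, so this cannot be skipped. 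Likewise $\Phi \circ \Psi = \id$ requires computing $v_q(\Psi(f))$ internally (item (3) there), and preservation of $2$, $\frac{\cdot}{2}$, $j_\ast$, $j$, $\alpha$ is what forces the paper's detour through the auxiliary theories $\T_0$ and $\T_1$ and the adjoint-transfer machinery of \textbf{Proposition} \ref{la proposition}. In short: your construction of $\lcal$ and of the comparison maps is a legitimate (mildly streamlined) version of the paper's, but the round-trip identity and the morphism properties — which you locate correctly but do not prove — are exactly where \textbf{Lemmas} \ref{l is nice}, \ref{lemme 2} and \ref{lemme 3} and the $\T_0$/$\T_1$ apparatus are indispensable, and the shortcuts you propose for them would fail.
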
 

For this, we will actually prove that all models of another axiomatisation, simpler but in a \linebreak non-continuous language, are isomorphic to an AC-algebra, and prove it is a consequence of \linebreak the aforementioned theory $\T$. This axiomatisation is given in the language \linebreak $\mathcal{L}_1 = \{\preceq, +, -, 2, \frac{\cdot}{2}, j_\ast, j, \alpha, \beta, l, l^\ast, (\ul d)_{d \in [0,1] \texte{dyadic}}\}$. To define the new symbols on $\USC$, we use the following lemma. 

\begin{lem} \label{def} 

Let $\phi \et \psi \colon [0,1] \rightarrow [0,1]$ such that $\phi$ is right adjoint to $\psi$. $\phi \in USC([0,1])$ and, for all $f \in \USC$, $\phi(f) = f \circ \psi$. Moreover, $f \mapsto \ouv{(f \circ \phi)}$ is left adjoint to $\phi(\_)$. Hence, if $\psi \in USC([0,1])$, for all $f \in \USC$, $\psi(f) = \ouv{(f \circ \phi)}$. 

Thus we can define $\psi(\_)$ for all $\psi\colon [0,1] \rightarrow [0,1]$ such that $\psi$ admits a right adjoint $\phi$ by, for all \linebreak $f \in \USC$, $\phi(f) = \ouv{(f \circ \phi)}$. 

\end{lem}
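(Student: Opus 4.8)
The plan is to derive all four assertions from a single computation of the preimage $a^{-1}([0,q))$, and then to read them off one at a time.

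First I would compute, for each $q\in[0,1]$, the sublevel set $a^{-1}([0,q))=\{y\mid a(y)<q\}$. Since $b$ is left adjoint to $a$, one has $q\le a(y)\Leftrightarrow b(q)\le y$ for all $q,y$; as $[0,1]$ is totally ordered, negating both sides yields $a(y)<q\Leftrightarrow y<b(q)$, so that $a^{-1}([0,q))=[0,b(q))$. This interval is an open of $[0,1]_u$, and since $a$ is monotone its sublevel sets are exactly downward-closed opens; hence $a$ is continuous from $[0,1]_u$ to $[0,1]_u$, that is $a\in USC([0,1])=\Lusc[1]$, which is assertion (i). Next, for assertion (ii), I would unfold \textbf{Definition}~\ref{interpretation} in the case $n=1$, where $a(f)=G(f)\circ a^{*}$. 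Using the identity just obtained, $a^{*}(q)=a^{-1}([0,q))=[0,b(q))$, so $a(f)(q)=G(f)([0,b(q)))=\bigvee_{p<b(q)}f(p)$; because $f\in\USC$ is sup-preserving and $b(q)=\bigvee_{p<b(q)}p$, this equals $f(b(q))$, giving $a(f)=f\circ b$.

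For assertion (iii) I would use \textbf{Lemma}~\ref{adj} to convert the two sides of the claimed adjunction $\ouv{((\cdot)\circ a)}\dashv a(\cdot)$ into pointwise inequalities. On one side, \textbf{Lemma}~\ref{adj} (applied to the function $f\circ a\colon[0,1]\to\lcal$ and to $g\in\USC$) gives $\ouv{(f\circ a)}\le g\Leftrightarrow\forall q,\ g(q)\le f(a(q))$; on the other side, by the definition of the order on $\USC$ together with (ii), $f\le a(g)=g\circ b\Leftrightarrow\forall q,\ g(b(q))\le f(q)$. Thus the adjunction reduces to the equivalence
\[
\big(\forall q,\ g(q)\le f(a(q))\big)\ \Longleftrightarrow\ \big(\forall q,\ g(b(q))\le f(q)\big)
\qquad(f,g\in\USC).
\]
The implication from left to right is where the computation of $a^{-1}([0,q))$ pays off a second time: fixing $q$ and using $p<b(q)\Leftrightarrow a(p)<q$, for every $p<b(q)$ one has $g(p)\le f(a(p))\le f(q)$ by the hypothesis and monotonicity of $f$; taking the supremum over $p<b(q)$ and invoking sup-preservation of $g$ gives $g(b(q))=\bigvee_{p<b(q)}g(p)\le f(q)$.

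I expect the converse implication to be the main obstacle. Naively substituting $q\mapsto a(q)$ lands on the wrong side of the unit $q\le a(b(q))$ and does not close, so the argument must instead expand $f(a(q))=\bigvee_{r<a(q)}f(r)$, write $g(q)=\bigvee_{s<q}g(s)$, and match the two suprema through the adjunction, crucially using that $g$ is sup-preserving and that the counit $b\circ a$ controls the comparison of $s<q$ with the values $b(r)$; this is the delicate step that genuinely requires the full adjunction $b\dashv a$ rather than a formal manipulation. Finally, assertion (iv) follows either by uniqueness of left adjoints — if $b\in USC([0,1])$ then $b(\cdot)$ is defined directly by \textbf{Definition}~\ref{interpretation} and is again left adjoint to $a(\cdot)$, so it must coincide with $\ouv{((\cdot)\circ a)}$ — or, more concretely, by the computation dual to (ii): when $b\in USC([0,1])$ one shows $b^{*}(q)=b^{-1}([0,q))=[0,\bigvee_{p<q}a(p))$, whence $b(f)(q)=\bigvee_{p<q}f(a(p))=\ouv{(f\circ a)}(q)$ by sup-preservation of $f$. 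I would present this direct computation as the primary argument and keep the adjunction-uniqueness route as a cross-check.
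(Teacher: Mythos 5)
Your first two assertions are handled correctly: computing $a^{-1}([0,q)) = [0,b(q))$ by negating the adjunction in the chain $[0,1]$ is exactly what is needed, and it yields both $a \in \Lusc[1]$ and $a(f) = f \circ b$ (the corresponding line of the paper's proof has the roles of $a$ and $b$ inadvertently swapped — it computes ``$b^*(q) = [0,a(q))$'' — so your version is the right one). Your left-to-right implication, via $p < b(q) \Leftrightarrow a(p) < q$, monotonicity of $f$ and sup-preservation of $g$, is also correct. The genuine gap is the converse implication, which you defer to ``expanding the two suprema and matching them through the adjunction''. That matching cannot be carried out: expanding gives $f(a(q)) = \bigvee_{r < a(q)} f(r) \geq \bigvee_{r < a(q)} g(b(r)) = g(b(a(q)))$, and the counit yields $b(a(q)) \leq q$, hence $g(b(a(q))) \leq g(q)$ — the opposite of the inequality $g(q) \leq f(a(q))$ you need. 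In fact the converse is false, so no argument can close it: take $a(x) = 2x \wedge 1$, $b(x) = \frac{x}{2}$, and, over any $\lcal$ with $\bot \neq \top$, take $f = \ul 1$ and $g = \ul{\frac{1}{2}}$. Then $a(g) = g \circ b$ is the constant function $\bot$, i.e.\ $a(g) = \ul 1$, so $f \leq a(g)$; but $f \circ a$ is also constantly $\bot$, hence $\ouv{(f\circ a)} = \ul 1 \not\leq g$. The genuine left adjoint of $a(\_)$ here is $\frac{\cdot}{2}$, which sends $\ul 1$ to $\ul{\frac{1}{2}}$: the formula $\ouv{(f\circ a)}$ misses the ``top tail'' (the value $\top$ on $(\frac{1}{2},1]$) that the left adjoint must produce. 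You were right to flag this as the delicate step; note that the paper's own chain also delivers only the implication $\ouv{(f\circ a)} \leq g \Rightarrow f \leq a(g)$ (its last link is written as $\Leftarrow$, and its second step, trading $f(a(q))$ for the family of values $f(p)$ with $p > a(q)$, is not reversible for a merely sup-preserving $f$), so there is no hidden trick to import.

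The same top-tail phenomenon undermines your ``primary'' computation for the last assertion: $b^{-1}([0,q))$ equals $[0, \bigvee_{p<q} a(p))$ only when it is a proper initial segment; as soon as $b(1) < q$ it is all of $[0,1]$, and then $b(f)(q) = G(f)([0,1]) = \top$, whereas $\bigvee_{p<q} f(a(p)) = f(1)$. Again $b = \frac{\cdot}{2}$, which certainly lies in $USC([0,1]_u)$, witnesses this at every $q > \frac{1}{2}$: the interpretation of $\frac{\cdot}{2}$ from \textbf{Definition}~\ref{interpretation} takes the value $\top$ there, while $\ouv{(f\circ a)}(q) = f(1)$. So the identity $b(f) = \ouv{(f\circ a)}$ needs the extra hypothesis $b(1) = 1$ besides $b \in USC([0,1]_u)$, and your fallback via uniqueness of left adjoints cannot rescue it, since it presupposes precisely the failed claim that $\ouv{((\cdot)\circ a)}$ is a left adjoint. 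What is true unconditionally is that $a(\_) = (\_)\circ b$ preserves the lower bounds of $\USC$ (which are pointwise suprema) and therefore has a left adjoint, given by $f \mapsto \ouv{\big(x \mapsto \bigwedge_{q \,:\, x \leq b(q)} f(q)\big)}$; this agrees with $\ouv{(f\circ a)}$ exactly when $b$ is upper semi-continuous with $b(1)=1$, and it, rather than $\ouv{(f\circ a)}$, is the operation one should use to define $b(\_)$ in general.
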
 

\begin{proof} 

For all $x \et q \in [0,1]$, $\psi(x) \wb q \Leftrightarrow x \wb \phi(q)$, so $\psi^*(q) = [0,\phi(q))$. Thus, for all $f \in \USC$, $\phi(f) = f \circ \psi$. 

Let $f \et g \in \USC$. \begin{align*} \ouv{(f \circ \phi)} \leq g &\Leftrightarrow \forall q \in [0,1] \, f(\phi(q)) \geq g(q) 
\\ 
&\Leftrightarrow \forall p \et q \in [0,1] \tq \phi(q) \wb p \, f(p) \geq g(q) 
\\ 
&\Leftrightarrow \forall p \et q \in [0,1] \tq q \wb \psi(p) \, f(p) \geq g(q) 
\\ 
&\Leftrightarrow \forall p \in [0,1] \, f(p) \geq g(\psi(p)) 
\\ 
&\Leftarrow f \leq \phi(g). \end{align*} 

If $b \in USC([0,1])$, then, since the adjunction equations $a(b(v)) \geq v$ and $b(a(v)) \leq v$ are satisfied by $[0,1]$, by virtue of \bf{Theorem} \ref{formule formules}, they are also satisfied by $\USC$ and thus $b(\_)$ is left adjoint to $a(\_)$, so, for all $f \in \USC$, $b(f) = \ouv{(f \circ a)}$. 
\end{proof} 

{\remark $b(f) \neq f \circ a$ in general.} 

For all $q \in [0,1]$, let $l(q) = \acc{1, q = 1, 0}$, which admits a left adjoint $l^\ast$ and let $\beta(q) = \alpha^{-1}(q) $. \textbf{Lemma} \ref{def} enables to endowe each $\USC$ with an $\mathcal{L}_1$-structure, by defining $l^\ast$ and $\beta$ on it. 

\begin{lem} \label{0-indicator} 

For all \crcl $\lcal$, $f \in \USC$, $l(f) = 0_{f(1)}$. 

\end{lem}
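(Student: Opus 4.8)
The plan is to evaluate $l(f)$ explicitly through \bf{Lemma} \ref{def} and then to compare it pointwise with the $0$-indicator $0_{f(1)}$. The key observation is that $l$ is the \emph{right} adjoint (of $l^*$) and is upper semi-continuous, so it is the map denoted $a$ in \bf{Lemma} \ref{def}; that lemma then supplies the clean formula $l(f) = f \circ l^*$ for every $f \in \USC$. Hence the claim reduces to proving the equality of the two functions $f \circ l^*$ and $0_{f(1)}$ from $[0,1]$ to $\lcal$.

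Next I would make $l^*$ explicit. Unwinding the adjunction $l^*(x) \leq q \Leftrightarrow x \leq l(q)$ and using that $l$ equals $1$ at the point $1$ and $0$ elsewhere, one finds $l^*(0) = 0$ and $l^*(q) = 1$ for every $q > 0$. Plugging this into $l(f) = f \circ l^*$ gives $l(f)(q) = f(1)$ whenever $q > 0$ and $l(f)(0) = f(0)$. On the other side, the definition of the $0$-indicator yields $0_{f(1)}(q) = f(1)$ for $q > 0$ and $0_{f(1)}(0) = \bot$. The two therefore coincide at once on all $q > 0$.

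The single point needing care is the value at $q = 0$, where one must check that $f(0) = \bot$. This is precisely where sup-preservation of $f$ is used: since $0$ is the empty supremum of $[0,1]$, we get $f(0) = f\left(\bigvee \emptyset\right) = \bot$ in $\lcal$, matching $0_{f(1)}(0)$. With this the two functions agree everywhere, giving $l(f) = 0_{f(1)}$. I do not anticipate a real obstacle; the only things to be careful about are choosing the correct branch of \bf{Lemma} \ref{def} (so that one uses $l(f) = f \circ l^*$ rather than the $\ouv{(\,\cdot\,)}$-formula attached to the lower adjoint $l^*$) and recalling that every member of $\USC$ takes the value $\bot$ at $0$.
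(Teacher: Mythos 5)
Your proposal is correct and follows essentially the same route as the paper: both invoke \textbf{Lemma} \ref{def} (with $l$ as the right adjoint) to get $l(f) = f \circ l^\ast$, and both use sup-preservation of $f$ (in particular $f(0) = \bigvee \emptyset = \bot$) to identify the result with $0_{f(1)}$. The only difference is cosmetic: you compute $l^\ast$ explicitly ($l^\ast(0)=0$, $l^\ast(q)=1$ for $q>0$) and compare pointwise, whereas the paper reaches the same conclusion in one chain of suprema via the adjunction $p < l^\ast(q) \Leftrightarrow l(p) < q$.
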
 

\begin{proof} 

For all \crcl $\lcal$, $f \in \USC$ and $q \in [0,1]$, $$l(f)(q) = f(l^\ast(q)) = \bigvee[p \wb l^\ast(q)] f(p) = \bigvee[l(p) \wb q] f(p) = 0_{f(1)}(q).$$ 
\end{proof} 

The intermediate list of axioms, satisfied by every AC-algebra, by virtue of \bf{Theorem} \ref{formule formules}, is the following: 

\begin{enumerate}[label=(\arabic*)] 

\item \label{boun com res latt 2} \textbf{$(\preceq, \+, \-, \ul 0, \ul 1)$ is a bounded \blue{preordered} monoid with residuation:} 

\item \label{croissance 2} \textbf{$2$, $\frac{\cdot}{2}$, $j_\ast$, $j$, $\alpha$, $l$ and $l^\ast$ are non-decreasing.} 

\item \label{aDjunctions 2} \textbf{The adjunctions: \ref{(2v)/2<=v} to \ref{v<=j*jv} and } 

\begin{enumerate}[label=] 

\setcounter{enumii}{2} 

\item \begin{enumerate*}[label=\arabic{enumi}.\alph{enumii}.\arabic*. , itemjoin={\quad}] \item \label{alpha beta v >= v 2} $v \preceq \alpha \circ \beta (v)$, \quad and \item \label{alpha beta v <=  v 2} $\beta \circ \alpha(v) \preceq v$ \end{enumerate*} 

\item \label{l*} \begin{enumerate*}[label=\arabic{enumi}.\alph{enumii}.\arabic*. , itemjoin={\quad}] \item \label{ll* >=} $v \preceq l \circ l^\ast (v)$ \quad and \item \label{l*l <=} $l^\ast \circ l (v) \preceq v$ \end{enumerate*} 

\end{enumerate} 

\item \textbf{Defining axioms:} 

\begin{enumerate} 

\item \label{redef 2} $2 (\ul d \+ l(v)) \simeq \ul{2d} \+ l(v)$ 

\item \label{redef j*} $j_\ast (\ul d \+ l(v)) \simeq \ul{j_\ast(d)} \+ l(v)$ 

\item \label{redef alpha} $\alpha (\ul d \+ l(v)) \simeq \ul{\alpha(d)} \+ l(v)$ 

\item \label{def l} $l (\ul d \+ l(v)) \simeq \ul{l(d)} \+ l(v)$ 

\item \label{redef +} $\ul d \+ \ul{d'} \simeq \ul{d \dotp d'}$ 

\end{enumerate} 

\item \textbf{$+$ stabilizes the fixed points of $l$:} \begin{enumerate}[label=\arabic{enumi}.\alph*.] 

\setcounter{enumii}{3}

\item \label{Stability 2} $l(l(u) \+ l(v)) \simeq l(u) \+ l(v)$ 

\end{enumerate} 

\item \label{[0,1] 2} \textbf{And the algebra of values is $[0,1]$ to infinitesimals:} 

\begin{enumerate}[label=\arabic{enumi}.\alph*.] 

\setcounter{enumii}{2}

\item for all dyadic number $d \in [0,1]$, $v \preceq \ul d \+ l (v \+ \ul{1 - d})$ 

\item for all $n \in \nn$, $\bigwedge[k = 1][2^{n+1} -1] \ul{1 - \frac{k-1}{2^{n+1}}} \+ l (v \+ \ul{\frac{k-1}{2^{n+1}}}) \preceq v \+ \ul{\frac{1}{2^n}}$ 

\end{enumerate} 

\item \textbf{The two non-algebraic properties:} 

\begin{enumerate}[label=\arabic{enumi}.\alph*] 

\setcounter{enumii}{2}

\item \label{complete 2} completeness: there exists $\bigwedge\colon \{E \subset A\} \rightarrow A$ such that, for all $E \subset A$ and $b \in A$, $\bigwedge E \succeq b \Leftrightarrow \forall a \in E \; a \succeq b$ 
\item \label{archimedean 2} Archimedean: for all $n \in \nn$, $a \preceq \ul{\frac{1}{2^n}} \Rightarrow a \preceq \ul 0$ 

\end{enumerate} 

\end{enumerate}

We denote by $\T_1$ the set of axioms \ref{boun com res latt 2} to \ref{[0,1] 2}, $\mathcal{L}_0$ the language $\{\preceq, \+, \-, l, l^\ast, (\ul d)_{d \in [0,1] \texte{dyadic}}\}$ and \linebreak $\T_0$ the set of axioms of $\T_1$ that are formulas of $\mathcal{L}_0$, i.e. in which $2$, $\frac{\cdot}{2}$, $j_\ast$, $j$, $\alpha$ and $\beta$ do not appear. 

\subsection{Complete Archimedean models of $\T_0$ and $\T_1$} \label{subsection Complete Archimedean models} 

\subsubsection{Complete Archimedean models of $\T_0$} 

We prove here that any complete Archimedean model of $\T_0$ is isomorphic to an AC-algebra (\bf{Theorem} \ref{celui-là}). In the following two subsubsections, we prove that any complete model of $\T$ can be edowed with a preorder that makes it an Archimedean model of $\T_0$. It is thus $\lcal_0$-isomorphic to an AC-algebra and we also prove that the extra structure is preserved by the isomorphism. Then, we will deal with non complete non Archimedean models of $\T$. 

\blue{Let $A$ be a complete Archimedean model of $\T_0$. Let us denote by $\dcal$ the set of all the dyadic numbers in $[0,1]$.} 

\begin{lem} \label{remark dyadic} 

For all $(d_i)_{i \in I} \in \dcal^I$ such that $\bigwedge[i \in I] d_i \in \dcal$ and $a \in A$, $$a \preceq \ul{\bigwedge[i \in I] d_i} \Leftrightarrow \forall i \in I \; a \preceq \ul{d_i}.$$ 

\end{lem}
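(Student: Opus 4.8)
The plan is to read the statement as saying that $\ul{\bigwedge[i \in I] d_i}$ is the $\preceq$-infimum of the family $(\ul{d_i})_{i \in I}$, and to prove the two implications separately: the forward one by monotonicity, the backward one by an Archimedean squeezing argument. Throughout write $d = \bigwedge[i \in I] d_i$, so that $d \in \dcal$ by hypothesis and $d \leq d_i$ for every $i$. First I would record that $e \mapsto \ul e$ is $\preceq$-monotone on dyadics: if $e \leq e'$ are dyadic then $e \dotp (e' - e) = e'$ with $e' - e \in \dcal$, so axiom \ref{redef +} gives $\ul{e'} \simeq \ul e \+ \ul{e' - e}$; since $\ul 0$ is the least element and $\+$ is $\preceq$-monotone (axiom \ref{boun com res latt 2}), we get $\ul e = \ul e \+ \ul 0 \preceq \ul e \+ \ul{e' - e} \simeq \ul{e'}$.

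The forward implication is then immediate: from $d \leq d_i$ we get $\ul d \preceq \ul{d_i}$, so $a \preceq \ul d$ yields $a \preceq \ul{d_i}$ for every $i$ by transitivity. (If $I = \emptyset$ then $d = 1$ and $a \preceq \ul 1$ holds trivially since $\ul 1$ is the greatest element, so this case is vacuous in both directions.)

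For the backward implication, suppose $a \preceq \ul{d_i}$ for all $i$, and assume $I \neq \emptyset$. I would reduce $a \preceq \ul d$ to an Archimedean estimate: by the residuation clause of axiom \ref{boun com res latt 2}, $a \preceq \ul d = \ul 0 \+ \ul d$ is equivalent to $a \- \ul d \preceq \ul 0$, so by the Archimedean axiom \ref{archimedean 2} it suffices to prove $a \- \ul d \preceq \ul{\frac{1}{2^n}}$ for every $n$. Fixing $n$, since $d = \bigwedge[i \in I] d_i$ there is an $i$ with $d_i - d < \frac{1}{2^n}$; as $d \leq d_i$ are dyadic we have $d_i - d \in \dcal$ and $d \dotp (d_i - d) = d_i$, so axiom \ref{redef +} gives $\ul{d_i} \simeq \ul d \+ \ul{d_i - d}$. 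Residuation then yields $\ul{d_i} \- \ul d \preceq \ul{d_i - d}$, and since $w \mapsto w \- \ul d$ is $\preceq$-monotone (being a residual), the hypothesis $a \preceq \ul{d_i}$ gives $a \- \ul d \preceq \ul{d_i} \- \ul d \preceq \ul{d_i - d} \preceq \ul{\frac{1}{2^n}}$, the last step by the monotonicity of $e \mapsto \ul e$ established above.

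The only delicate point is this backward direction: the decisive move is to pass from the order assertion $a \preceq \ul d$ to the residuated form $a \- \ul d \preceq \ul 0$, so that the real-number convergence of $d_i$ down to $d$ can be fed into the Archimedean axiom through the arbitrarily small dyadic constants $\ul{d_i - d}$. Everything else is routine monotonicity bookkeeping built on axioms \ref{boun com res latt 2} and \ref{redef +}.
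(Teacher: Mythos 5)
Your proof is correct and follows essentially the same route as the paper's: the forward direction by monotonicity of $e \mapsto \ul e$ (which the paper leaves implicit), and the converse by choosing, for each $n$, an index $i_0$ with $d_{i_0}$ within $\frac{1}{2^n}$ of the infimum and concluding by Archimedeanity. The only, harmless, differences are that you route the Archimedean step through residuation (reducing to $a \- \ul d \preceq \ul{\frac{1}{2^n}}$) instead of applying it directly to $a \preceq \ul{\bigwedge[i \in I] d_i} \+ \frac{\ul 1}{2^n}$, and that working with $d_{i_0} - d$ rather than $d \dotp \frac{1}{2^n}$ lets you dispense with the paper's preliminary case split on whether all $d_i = 1$.
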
 

\begin{proof} 

Let $(d_i)_{i \in I} \in \dcal^I$ such that $\bigwedge[i \in I] d_i \in \dcal$ and $a \in A$. If for all $i \in I$ $d_i = 1$, then $\bigwedge[i \in I] \ul{d_i} = \ul 1$, so $\bigwedge[i \in I] \ul{d_i} = \ul{\bigwedge[i \in I] d_i}$. 

Otherwise, we can assume that, for all $i \in I$, $d_i < 1$. Since for all $i_0 \in I$, $\bigwedge[i \in I] d_i \leq d_{i_0}$, if $a \preceq \ul{\bigwedge[i \in I] d_i}$, then $a \preceq \ul{d_{i_0}}$. 

Conversely, assume that, for all $i \in I$, $a \preceq \ul{d_i}$ and let $n \in \nn$. 

$\left(\bigwedge[i \in I] d_i\right) \dotp \frac{1}{2^n} > \left(\bigwedge[i \in I] d_i\right)$, so there exists $i_0 \in I$ such that $\left(\bigwedge[i \in I] d_i\right) \dotp \frac{1}{2^n} > d_{i_0}$. Thus, since $\left(\bigwedge[i \in I] d_i\right) \dotp \frac{1}{2^n} \in \dcal$, $a \preceq \ul{d_{i_0}} \preceq \ul{\left(\bigwedge[i \in I] d_i\right) \dotp \frac{1}{2^n}} \simeq \ul{\left(\bigwedge[i \in I] d_i\right)} \+ \ul{\frac{1}{2^n}}$. Thus, since $A$ is Archimedean, $a \preceq \ul{\left(\bigwedge[i \in I] d_i\right)}$. 
\end{proof} 

Since $[0,1]$ is generated by lower bounds of elements of $\dcal$, there is a unique lower bounds-preserving extension of $\ul \_$ to $[0,1]$, that we still denote $\ul \_$. 

\begin{thm} \label{celui-là} 

The quotient of every complete Archimedean model of $\T_0$ by the equivalence relation induced by its preorder is isomorphic to a $\USC$ for some commutative residuated complete lattice $\lcal$. 

\end{thm}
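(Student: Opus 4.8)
The plan is to recover the lattice $\lcal$ directly from the abstract model $A$ and then to reconstruct the isomorphism with $\USC$ through the approximation formula of \bf{Proposition} \ref{approx}. First I would set $L := \im(l)$. Taking $d = 0$ in the defining axiom \ref{def l} gives $l \circ l = l$, so $L$ is exactly the set of fixed points of $l$; moreover the adjunction $l^\ast \dashv l$ (axioms \ref{ll* >=} and \ref{l*l <=}) makes $l$ a right adjoint, hence it preserves the meets $\bigwedge$ supplied by \ref{complete 2}, so $L$ is closed under arbitrary meets and is a complete lattice for $\preceq$. I would then let $\lcal$ be $L$ equipped with the \emph{reversed} order, with $\otimes := {\+}$ (well defined on $L$ by the stability axiom \ref{Stability 2}), with neutral element $\ul 0$, and with the residual $\nrightarrow$ induced by $\-$, mimicking \bf{Lemma} \ref{calculs}. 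Transporting the bounded-ordered-monoid-with-residuation axioms \ref{boun com res latt 2} along the inclusion $L \hookrightarrow A$ then shows that $(\lcal, \otimes, \nrightarrow)$ is a \crcl.

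Next I would define the candidate isomorphism $\Phi \colon A \rightarrow \USC$ by
\[
\Phi(a)(q) := l\!\left(a \+ \ul{1 - q}\right) \in L = \lcal ,
\]
where $\ul{\,\cdot\,}$ is the meet-preserving extension of the dyadic constants to all of $[0,1]$. The motivation is the identity $l(f \+ \ul{1-q}) = 0_{f(q)}$, valid in $\USC$ by \bf{Lemmas} \ref{0-indicator} and \ref{calculatoire}(4), which says that $\Phi(a)(q)$ reads off the value of $a$ at $q$. The inverse is guided by \bf{Proposition} \ref{approx}: to $g \in \USC$ I would associate $\Psi(g) := \bigwedge[q \in {[0,1]}] \ul q \+ g(q)$, where $g(q) \in \lcal$ is read inside $L \subseteq A$. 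The two things to verify are that $q \mapsto \Phi(a)(q)$ is genuinely sup-preserving, hence an element of $\USC$, and that $\Phi$ and $\Psi$ are mutually inverse.

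The identity $\Psi \circ \Phi = \id$ is the abstract counterpart of \bf{Proposition} \ref{approx}, namely $a = \bigwedge[q \in {[0,1]}] \ul q \+ l(a \+ \ul{1 - q})$; this is exactly where the value axioms \ref{[0,1] 2} combine with completeness \ref{complete 2} and the Archimedean axiom \ref{archimedean 2}, the latter through \bf{Lemma} \ref{remark dyadic}, to collapse the dyadic approximations onto the exact value. For $\Phi \circ \Psi = \id$ I would compute $l(\Psi(g) \+ \ul{1-q})$ using that $l$ preserves meets together with the defining axioms \ref{redef +} and \ref{def l}, reducing it to $g(q)$. That $\Phi$ preserves $\preceq$, $\+$, $\-$, $l$, $l^\ast$ and the constants $\ul d$ then follows from the defining axioms \ref{redef 2}--\ref{redef +} and \ref{Stability 2} together with \bf{Lemma} \ref{calculs}, while sup-preservation of $q \mapsto \Phi(a)(q)$ is obtained from the value and Archimedean axioms in the same manner as the reconstruction formula.

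I expect the main obstacle to be the abstract derivation of the approximation formula $a = \bigwedge[q \in {[0,1]}] \ul q \+ l(a \+ \ul{1-q})$ and, hand in hand with it, the sup-preservation of $\Phi(a)$: in the concrete algebra these are immediate from the pointwise description of $\USC$, but here they must be extracted purely from the value axioms, completeness and the Archimedean property, carefully passing from the dyadic scale on which \ref{[0,1] 2} speaks to arbitrary reals via \bf{Lemma} \ref{remark dyadic}. By contrast, the fact that $\lcal$ forms a \crcl, together with the homomorphism property of $\Phi$, is essentially bookkeeping once the defining axioms are in place.
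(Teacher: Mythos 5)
Your plan is, in outline, the paper's own proof: take the fixed points of $l$ with the reversed order as $\lcal$, map $a$ to $q \mapsto l(a \+ \ul{1-q})$, invert by $g \mapsto \bigwedge_{d \in \dcal} \ul d \+ g(d)$, and drive the two composites home with the value axioms \ref{[0,1] 2}, completeness, and Archimedeanity via \textbf{Lemma} \ref{remark dyadic}. But you erase the distinction between $\simeq$ and $=$, and that is a genuine hole, not bookkeeping. Axiom \ref{def l} at $d = 0$ yields only $l(l(v)) \simeq l(v)$, and \ref{Stability 2} only $l(l(u) \+ l(v)) \simeq l(u) \+ l(v)$; since the axioms of $\T_0$ are themselves stated with $\simeq$, its models carry a mere preorder $\preceq$, so $\im(l)$ with the reversed order need not be antisymmetric: it is then not a complete lattice, $USC$ of it is not even defined, and your $\Phi$ and $\Psi$ can be mutually inverse at best up to $\simeq$ — indeed the paper proves $i \circ k = \id$ exactly but only $k \circ i \simeq \id_A$. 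This is not a pathological case one may wave away: the theorem is later applied, in the proof of \textbf{Theorem} \ref{Complétude}, to $(A, \preceq)$ where $\preceq$ is the "up to infinitesimals" preorder, which is never antisymmetric on an interesting model of $\T$. The paper's proof resolves exactly this by setting $\lcal(A) = \quotient{\{e \in A \,|\, l(e) \simeq e\}}{\simeq}$ with the projection $\pi$ and its section $\pi^\ast$ (\textbf{Lemma} \ref{celui-ci}), inserting $\pi$ into the definition of $i$ and $\pi^\ast$ into that of $k$; your argument needs this quotient device (or an antisymmetry hypothesis you have not justified) before it produces an actual commutative residuated complete lattice and an actual isomorphism.

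A second, smaller gap: the residual $\nrightarrow$ "induced by $\-$" is unjustified, because for fixed points $e, e'$ nothing forces $e \- e'$ to be a fixed point of $l$ — \textbf{Lemma} \ref{calculs} is a computation inside a genuine $\USC$, i.e.\ inside the conclusion you are trying to reach. The paper instead gets the residual abstractly: $x \mapsto x \+ e'$ is a right adjoint by the residuation axiom in \ref{boun com res latt 2}, meets in $\lcal(A)$ are computed through meets in $A$, hence $\otimes$ preserves lower bounds and has a residual by completeness (\textbf{Lemma} \ref{celui-ci}, part 2); equivalently one may take $e' \nrightarrow w = \bigwedge\{e \in \lcal(A) \,|\, w \preceq e \+ e'\}$. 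Apart from these two points your proposal tracks the paper step for step: your pointwise formula $\Phi(a)(q) = l(a \+ \ul{1-q})$ agrees (up to $\simeq$) with the paper's $i(a)(q) = \bigwedge_{d \in \dcal,\, d < q} \pi \circ l(a \+ \ul{1-d})$ because $l$, $\pi$ and translation by $\ul{1-q}$ all preserve lower bounds; sup-preservation of $i(a)$ and $\Psi \circ \Phi \simeq \id$ use \ref{[0,1] 2}, \ref{archimedean 2} and \textbf{Lemma} \ref{remark dyadic} exactly as you indicate; and $\Phi \circ \Psi = \id$ together with the morphism checks proceed from \ref{ll* >=}, \ref{l*l <=}, \ref{def l}, \ref{redef +} and \ref{Stability 2} as in items (3), (5) and (6) of the paper's proof.
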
 

\begin{proof} 

First, define $\lcal(A) = \quotient{\{e \in A \;|\; l(e) \simeq e\}}{\simeq}$, endowe it with the quotient order and the canonical surjection $\pi\colon \{e \in A \,|\, l(e) \simeq e\} \rightarrow \lcal(A)$ and call $\lcal(A)^{op}$ the set $\lcal(A)$ with the reversed order. 

\begin{lem} \label{celui-ci} 

\begin{enumerate} 

\item $\lcal(A)$ is complete and $\pi$ preserves lower bounds and thus admits a left adjoint, which is $\fct[\pi^\ast][\lcal(A), \{e \in A \;|\; l(e) \simeq e\}]{U, \bigwedge \{e \in A \, | \, l(e) \simeq e \et U \leq \pi(e)\}}$. 

Moreover $\pi \circ \pi^\ast = \id_\lcal(A)$ and $\pi^\ast \circ \pi \simeq \id_{\{e \in A \,|\, l(e) \simeq e\}}$. 

\item $\+$ induces a commutative and associative operation $\otimes$ on $\lcal(A)$ that admits a residual. 

\end{enumerate} 

\end{lem} 

\begin{proof} \begin{enumerate} \vspace{-7pt} 

\item Let $U \in \lcal$ and $(e_i)_{i \in I} \in \{e \in A \, | \, l(e) \simeq e\}^I$. There exists $e \in A$ such that \linebreak $l(e) \simeq e$ and $\pi(e) = U$. 

$\forall i \in I \; U \leq \pi(e_i) \Leftrightarrow \forall i \in I \; e \preceq e_i \Leftrightarrow e \preceq \bigwedge[i \in I] e_i \Leftrightarrow U \leq \pi\left(\bigwedge[i \in I] e_i\right)$. Hence, the lower bound of $(\pi(e_i))_{i \in I}$ is $\pi\left(\bigwedge[i \in I] \pi(e_i)\right)$. Thus $\lcal(A)$ is complete and, for all $U \in \lcal$, $$\pi \circ \pi^\ast(U) = \bigwedge[\ust{e \in A}{l(e) \simeq e \et U \leq \pi(e)}] \pi(e) = \bigwedge[\ust{e \in A}{l(e) \simeq e \et U = \pi(e)}] \pi(e) = U,$$ and, for all $e \in A$, $\pi^\ast \circ \pi(e) \simeq e$, because $\pi(\pi^\ast \circ \pi(e)) = \pi(e)$. 

\item For all $e \et e' \in A$ such that $l(e) = e \et l(e') = e'$, by \ref{Stability 2}, \linebreak $l(e \+ e') = l(l(e) \+ l(e')) = l(e) \+ l(e') = e \+ e'$. 

Since $+$ is non-decreasing, it induces an operation $\otimes\colon \lcal(A) \times \lcal(A) \rightarrow \lcal(A)$. 

Clearly, $\otimes$ is commutative and associative and admits $\pi(\ul 0)$ as neutral element. 

For all $(e_i)_{i \in I} \in A^I$ such that for all $i \in I$ $l(e_i) = e_i$ and $e \in A$ such that $l(e) = e$, \linebreak \resizebox{394pt}{!}{$\left(\bigwedge[i \in I] \pi(e_i)\right) \otimes \pi(e) = \pi\left(\left(\bigwedge[i \in I] e_i\right) \+ e\right) = \pi\left(\bigwedge[i \in I] (e_i \+ e)\right) = \bigwedge[i \in I] \pi(e_i \+ e) = \bigwedge[i \in I] (\pi(e_i) \otimes \pi(e))$.} 

Thus $\otimes$ admits a residual. 

\end{enumerate} 
\end{proof} 

Then, we define $\fct[i][A, \lcal(A)^{\e<{[0,1]}}]{a, \big(q \mapsto \bigwedge[\ust{d \in \dcal}{d < q}] \pi \circ l(a \+ \ul{1 - d})\big)}$ and $\fct[k][USC(\lcal(A)^{op}), A]{f, \bigwedge[d \in \dcal] \ul d \+ \pi^\ast(f(d))}$. 

We prove, in order, the following statements. 

\begin{enumerate} 

\item The image of $i$ is a subset of $USC(\lcal(A)^{op})$, and we still call $i$ the induced map, and, for all $a \in A$ and $d \in \dcal$, $i(a)(d) = \pi \circ l(a \+ \ul{1 - d})$. 

\item $k \circ i \simeq \id_A$. 

\item $i \circ k = \id_{USC(\lcal(A)^{op})}$. 

\item $i$ and $k$ are non-decreasing. 

\item As soon as $i$ preserves $\+$, $i$ preserves $\-$. As soon as $i$ preserves $l$, $i$ preserves $l^\ast$. 

\item $i$ preserves $\+$, $l$ and the $\ul d$'s for $d \in \dcal$. 

\end{enumerate} 

\begin{enumerate} 

\item For all $(q_i)_{i \in I} \in [0,1]^I$ and $a \in A$, \begin{align*} i(a)\left(\bigvee[i \in I] q_i\right) &= \bigwedge[\ust{d \in \dcal}{d < \bigvee[i \in I] q_i}] \pi \circ l\left(a \+ \ul{1 - d}\right) 
\\ 
&\ust{=}{\hypertarget{star}{(\ast)}} \bigwedge[\ust{d \in \dcal}{\exists i \in I \tq d < q_i}] \pi \circ l\left(a \+ \ul{1 - d}\right) 
\\ 
&= \bigwedge[i \in I] \bigwedge[\ust{d \in \dcal}{d < q_i}] \pi \circ l\left(a \+ \ul{1 - d}\right) 
\\ 
&= \bigwedge[i \in I] i(a)({q_i}),\end{align*} \hyperlink{star}{$(\ast)$} being a consequence of \textbf{Lemma} \ref{remark dyadic}. 

Hence, according to \textbf{Theorem} \ref{scale}, $i(a) \in USC(\lcal(A)^{op})$ for all $a \in A$. 

Let $a \in A$ and $d \in \dcal$. Since $\pi$, $l$ and $\+$ preserve lower bounds, \begin{align*} i(a)(d) &= \bigwedge[\ust{d' \in \dcal}{d' < d}] \pi \circ l(a \+ \ul{1 - d'})\\ 
&= \pi \circ l\left(\bigwedge[\ust{d' \in \dcal}{d' < d}] (a \+ \ul{1 - d'})\right)\\ 
&= \pi \circ l\left(a \+ \ul{\bigwedge[\ust{d' \in \dcal}{d' < d}] (1 - d')}\right)\\ 
&= \pi \circ l(a \+ \ul{1 - d}).
\end{align*}  

\item By \ref{[0,1] 2}, for all $a \in A$ and $d \in \dcal$, $a \preceq \ul d \+ l(a \+ \ul{1 - d})$, i.e. $a \preceq \bigwedge[d \in \dcal] \ul d \+ l(a \+ \ul{1 - d}) \simeq k \circ i(a)$, and, for all $n \in \nn$, $a \+ \ul{\frac{1}{2^n}} \succeq \bigwedge[k = 1][2^{n+1} -1] \ul{1 - \frac{k-1}{2^{n+1}}} \+ l(v \+ \ul{\frac{k-1}{2^{n+1}}}) \succeq \bigwedge[d \in \dcal] \ul d \+ l(a \+ \ul{1 - d}) \simeq k \circ i(a)$, so, since $A$ is Archimedean, $a \succeq k \circ i (a)$. 

\item Let $f \in USC(\lcal(A)^{op})$. For all $d \in [0,1]$, 

\vspace{-15 pt} \begin{align*} l(k(f) \+ \ul{1 - d}) &= l\left(\bigg(\bigwedge[d' \in \dcal] \ul{d'} \+ \pi^\ast(f( d'))\bigg) \+ \ul{1 - d}\right)\\ 
&\ust{\simeq}{\ref{ll* >=} \et \ref{l*l <=} } \bigwedge[d' \in \dcal] l\big(\pi^\ast(f( d')) \+ \ul{(1 - d + d') \wedge 1}\big)\\ 
&\ust{\simeq}{\ref{def l} \et \ref{redef +} } \bigwedge[d' \in [0,1]] l(\pi^\ast(f( d'))) \+ \ul{l\big((1 - d + d') \wedge 1\big)}\\ 
&= \bigwedge[\ust{d' \in \dcal}{d' < d}] \pi^\ast(f( d')),\end{align*}

so, thanks to \textbf{Lemma} \ref{celui-ci}, $$i(k(f))( d) = \pi\left(\bigwedge[\ust{d' \in \dcal}{d' < d}] \pi^\ast(f( d'))\right) = \bigwedge[\ust{d' \in \dcal}{d' < d}] \pi\circ \pi^\ast(f( d')) = \bigwedge[\ust{d' \in \dcal}{d' < d}] f( d') = f( d).$$ 

\item For all $a \leq b \in A$ and all $d \in \dcal$, $i(a)( d) = \pi \circ l(a \+ \ul{1 - d}) \leq \pi \circ l(b \+ \ul{1 - d}) = i(b)( d)$, and so $i(a) \leq i(b)$. 

For all $f \leq g \in USC(\lcal(A)^{op})$, $k(f) = \bigwedge[d \in \dcal] \ul d \+ \pi^\ast(f(d)) \preceq \bigwedge[d \in \dcal] \ul d \+ \pi^\ast(g(d)) = k(g)$. 

\item Assume that $i$ preserves $\+$. 

Then, for all $a$ and $b \in A$, and $f \in USC(\lcal(A)^{op})$, 

\vspace{-15 pt} \begin{align*} i(a) \- i(b) \leq f &\Leftrightarrow i(a) \leq i(b) \+ i \circ k(f) \\ 
&\Leftrightarrow i(a) \leq i(b \+ k(f)) \\ 
&\Leftrightarrow a \preceq b \+ k(f) \\ 
&\Leftrightarrow a \- b \preceq k(f) \\ 
&\Leftrightarrow i(a \- b) \leq f.\end{align*} 

Assume that $i$ preserves $l$. 

Then, for all $a$ and $b \in A$, and $f \in USC(\lcal(A)^{op})$, 

\vspace{-15 pt} \begin{align*} l^\ast(i(a)) \leq f &\Leftrightarrow i(a) \leq l(i(f)) \\ 
&\Leftrightarrow i(a) \leq i(l(f)) \\ 
&\Leftrightarrow a \preceq l(f) \\ 
&\Leftrightarrow l^\ast(a) \preceq k(f) \\ 
&\Leftrightarrow i(l^\ast(a)) \leq f.\end{align*} 

\item For all $d \et d'\in \dcal$, \begin{align*} i(\ul d)(d') &= \pi \circ l(\ul d \+ \ul{1 - d'}) 
\\ 
&\ust{=}{\ref{redef +}} \pi \circ l(\ul{(1 + d - d') \wedge 1}) 
\\ 
&\ust{=}{\ref{def l}} \pi(\ul{l\big((1 + d - d') \wedge 1\big)}) 
\\ 
&= \acc{\pi(\ul 1)\!\!\!\!, d \geq d', \pi(\ul 0)\!\!\!\!} 
\\ 
&= \ul d (d').\end{align*} 

For all $a \in A$ and $d \in \dcal$, $$i(l(a))(d) = l(l(a) \+ \ul{1 - d}) = l^2(a) \+ \ul{l(1 - d)} \ust{=}{\ref{def l}} l(a) \+ \ul{1 \- l^\ast(d)} = i(a)(l^\ast(d)) = l(i(a))(d).$$ 

For all $a$ and $b \in A$, \begin{align*}a \+ b &\simeq \left(\bigwedge[q \in \dcal] \ul q \+ l(a \+ \ul{1 - q})\right) \+ \left(\bigwedge[p \in \dcal] \ul p \+ l(b \+ \ul{1 - p})\right) \texte{by \ref{celui-ci} (1)}\\ 
&\simeq \bigwedge[q,p \in \dcal] \ul{q \dotp p} \+ l(a \+ \ul{1 - q}) \+ l(b \+ \ul{1 - p}) \texte{by \ref{residuation} \et \ref{redef +}}\\ 
&\simeq \bigwedge[p,q \in \dcal] \ul{q \dotp p} \+ \pi^\ast(i(a)(q)) \+ \pi^\ast(i(b)(p)) \texte{by definition of} i\\  
&\simeq \bigwedge[p,q \in \dcal] \ul{q \dotp p} \+ \pi^\ast(i(a)(q) \otimes i(b)(p)) \texte{by \ref{celui-ci} (2)} \\ 
&\simeq \bigwedge[p,q \in \dcal] \bigwedge[r \wa p \dotp q] \big(\ul{r} \+ \pi^\ast(i(a)(q) \otimes i(b)(p))\big) \\ 
&\simeq \bigwedge[r \in \dcal] \ul r \+ \pi^\ast\left(\bigwedge[p \dotp q \wb r] \big(i(a)(q) \otimes i(b)(p)\big)\right) \texte{by \ref{residuation}} \\ 
&\simeq \bigwedge[r \in \dcal] \ul r \+ \pi^\ast\big((i(a) \+ i(b))(r)\big) \\ 
&\simeq k\big(i(a) \+ i(b)\big) \end{align*}

so $i(a \+ b) = i(a) \+ i(b)$. 

\end{enumerate} 
\end{proof} 

\subsubsection{From $\T_0$ to $\T_1$} 

Now that we have proven any complete Arcimedean model of $\T_0$ to be isomorphic to an AC-algebra, we prove that the extra structure of $\T_1$, which contains the extra structure of $\T$, is preserved by the isomorphisms between models of $\T_0$ and AC-algebras, and therefore that every complete Archimedean model of $\T_1$ is isomorphic to an AC-algebra (\bf{Corollary} \ref{The corollaire}). Then, we will deal with non complete non Archimedean models of $\T$. 
\vspace{\baselineskip} 
Let $A$ be a complete Archimedean model of $\T_0$. 

\begin{lem} \label{Utile} 

For all $\phi \et \psi \colon [0,1] \rightarrow [0,1]$ such that $\phi$ is right adjoint to $\psi$, for all $a \in A$, $$\bigwedge[d \in \dcal] \ul d \+ l(a \+ \ul{1 - \psi(d)}) \simeq \bigwedge[d \in \dcal] \ul{\phi(d)} \+ l(a \+ \ul{1 - d}).$$ 

\end{lem}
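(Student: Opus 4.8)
The plan is to prove the two inequalities separately. Write $F := \bigwedge[d \in \dcal] \ul d \+ l(a \+ \ul{1 - \psi(d)})$ and $H := \bigwedge[d \in \dcal] \ul{\phi(d)} \+ l(a \+ \ul{1 - d})$, so that the claim is $F \simeq H$. Three structural facts will be used throughout: the maps $l$ and $u \mapsto u \+ v$ are non-decreasing and preserve arbitrary lower bounds (the former because $l^\ast$ is left adjoint to $l$ by \ref{ll* >=} and \ref{l*l <=}, the latter because $(\cdot) \- v$ is left adjoint to $(\cdot) \+ v$ by the residuation axiom \ref{residuation}), while $\ul{\_}$ is non-decreasing and preserves lower bounds by its very construction; and, since $[0,1]$ is linearly ordered and $\psi$ is left adjoint to $\phi$, the adjunction $\psi(x) \leq y \Leftrightarrow x \leq \phi(y)$ negates to the strict equivalence $\psi(x) > y \Leftrightarrow x > \phi(y)$, valid for all $x, y \in [0,1]$.

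For $F \preceq H$ I would fix $d_0 \in \dcal$ and show $F \preceq \ul{\phi(d_0)} \+ l(a \+ \ul{1 - d_0})$. Using that $\ul{\_}$ preserves lower bounds together with the density/Archimedean argument of \textbf{Lemma} \ref{remark dyadic}, one rewrites $\ul{\phi(d_0)} \simeq \bigwedge[\ust{d \in \dcal}{d > \phi(d_0)}] \ul d$ (the empty infimum, occurring when $\phi(d_0) = 1$, being the top $\ul 1$, for which the target is already $\ul 1$); pulling $\+$ through this infimum reduces the claim to $F \preceq \ul d \+ l(a \+ \ul{1 - d_0})$ for every dyadic $d > \phi(d_0)$. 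For such a $d$ the strict equivalence gives $\psi(d) > d_0$, hence $1 - \psi(d) < 1 - d_0$ and $\ul{1-\psi(d)} \preceq \ul{1-d_0}$; monotonicity of $l$ and of $a \+ (\cdot)$ then yields $l(a \+ \ul{1-\psi(d)}) \preceq l(a \+ \ul{1-d_0})$, so that the $d$-th term $\ul d \+ l(a \+ \ul{1-\psi(d)})$ of $F$ lies below $\ul d \+ l(a \+ \ul{1-d_0})$, giving the required bound.

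For the converse $H \preceq F$, fix $e_0 \in \dcal$ and target $\ul{e_0} \+ l(a \+ \ul{1 - \psi(e_0)})$. Here the key move is the left-continuity identity $l(a \+ \ul{1 - \psi(e_0)}) \simeq \bigwedge[\ust{d \in \dcal}{d < \psi(e_0)}] l(a \+ \ul{1 - d})$, which follows because $l$, $a \+ (\cdot)$ and $\ul{\_}$ all preserve lower bounds and $\psi(e_0) = \sup\{d \in \dcal : d < \psi(e_0)\}$ by density of the dyadics (the degenerate case $\psi(e_0) = 0$ again producing the top). Pulling $\+$ through this infimum reduces the claim to $H \preceq \ul{e_0} \+ l(a \+ \ul{1-d})$ for every dyadic $d < \psi(e_0)$; the strict equivalence now gives $\phi(d) < e_0$, so $\ul{\phi(d)} \preceq \ul{e_0}$ and the $d$-th term $\ul{\phi(d)} \+ l(a \+ \ul{1-d})$ of $H$ lies below $\ul{e_0} \+ l(a \+ \ul{1-d})$, as needed.

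The two directions are symmetric once the right reformulations are in place, and the only genuinely delicate points are the two infimum-rewriting steps. Both rest on the same mechanism, namely combining preservation of lower bounds by $l$, $\+$ and $\ul{\_}$ with the density of $\dcal$ and the Archimedean axiom (as encapsulated in \textbf{Lemma} \ref{remark dyadic}), so I expect this density/Archimedean bookkeeping, rather than the adjunction itself, to be the main obstacle; it is the linearity of $[0,1]$ that makes the passage from the non-strict adjunction to the strict equivalence $\psi(x) > y \Leftrightarrow x > \phi(y)$ clean and keeps the strict and non-strict inequalities aligned throughout.
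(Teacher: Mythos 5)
Your proof is correct, but it takes a genuinely different (and in fact more careful) route than the paper. The paper's own proof is a two-line termwise comparison: for each dyadic $d$ it bounds the $d$-th term of one infimum by the expression of the other family evaluated at the transported index ($\psi(d)$ or $\phi(d)$), invoking only the composite inequalities relating $\phi \circ \psi$ and $\psi \circ \phi$ to the identity. Your argument never leaves dyadic indices: you approximate $\ul{\phi(d_0)}$ by $\bigwedge_{e \in \dcal,\, e > \phi(d_0)} \ul e$ (resp.\ $\ul{1 - \psi(e_0)}$ by $\bigwedge_{e \in \dcal,\, e < \psi(e_0)} \ul{1-e}$), pull the infimum through $\+$ and $l$ by inf-preservation, and close with the strict equivalence $\psi(x) > y \Leftrightarrow x > \phi(y)$, which is exactly what linearity of $[0,1]$ buys. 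This difference matters: the transported indices $\psi(d)$ and $\phi(d)$ in the paper's proof need not be dyadic, so its comparison is against expressions outside the indexing family and tacitly requires precisely the density/inf-preservation bookkeeping you make explicit; moreover, as printed the paper asserts $d \geq \phi(\psi(d))$ and $d \leq \psi(\phi(d))$, which are the unit and counit of the opposite orientation $\phi \dashv \psi$ — under the stated hypothesis ($\psi$ left adjoint to $\phi$, the orientation actually used in Proposition \ref{la proposition}) the correct inequalities are $d \leq \phi(\psi(d))$ and $\psi(\phi(d)) \leq d$, with which the naive termwise comparisons point the wrong way, and your strict-inequality device is what repairs this. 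One mild imprecision on your side: you cite Lemma \ref{remark dyadic} for the rewriting of $\ul{\phi(d_0)}$, but since $\phi(d_0)$ need not be dyadic the right justification is simply that $\ul{\_}$ is, by construction, the unique lower-bound-preserving extension to $[0,1]$ (Lemma \ref{remark dyadic} being what makes that extension well defined); this does not affect correctness.
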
 

\begin{proof} 

For all $a \in A$ and $d \in \dcal$, $d \geq \phi(\psi(d))$, so $\ul d \+ l(a \+ \ul{1 - \psi(d)}) \succeq \ul{\phi(\psi(d))} \+ l(a \+ \ul{1 - \psi(d)})$ and $d \leq \psi(\phi(d))$, so $\ul{\phi(d)} \+ l(a \+ \ul{ 1 - d}) \succeq l{\phi(d)} \+ l(a \+ \ul{ 1 - \psi(\phi(d))})$. 
\end{proof} 

\begin{prop} \label{la proposition} 

Let $\phi$ and $\psi$ be two new function symbols of arity $1$, $a_\phi$ and $a_\psi$ be their interpretations in $[0,1]$ and define the following axioms 

\begin{enumerate} 

\item \label{phi et psi croissent} $\phi$ and $\psi$ are non-decreasing 

\item \label{phi psi} $\psi \circ \phi (v) \preceq v$ 

\item \label{psi phi} $v \preceq \phi \circ \psi (v)$ 

\item \label{def phi} $\phi(\ul d \+ l(v)) \simeq \ul{a_\phi(d)} \+ l(a)$, $d$ a dyadic in $[0,1]$ 

\end{enumerate} 

Assume \ref{phi et psi croissent}, \ref{phi psi} and \ref{psi phi} are satisfied by $([0,1], a_\phi, a_\psi)$ (notice that \ref{def phi} is satisfied by $[0,1]$). 

For every $\phi_A\colon A \rightarrow A$, the following two statements are equivalent: \begin{enumerate} 

\item there exists $\psi_A\colon A \rightarrow A$ such that $(A, \phi_A, \psi_A)$ satisfies \ref{phi et psi croissent}, \ref{phi psi}, \ref{psi phi} and \ref{def phi} 

\item for all $a \in A$, $\phi_A(a) \simeq \bigwedge[d \in \dcal] \ul d \+ l(a \+ \ul{1 - a_\psi(d)})$. 

\end{enumerate} 

In this case, for all $a \in A$, $\psi_A(a) \simeq \bigwedge \left\{b \in A \, | \, a \preceq \phi_A(b)\right\}$. 

\end{prop}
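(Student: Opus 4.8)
The plan is to recognise conditions \ref{phi et psi croissent}--\ref{psi phi}, together with monotonicity, as the assertion that $\psi_A$ is a left adjoint of $\phi_A$ for the order $\preceq$: indeed \ref{psi phi} is the unit $v \preceq \phi_A\circ\psi_A(v)$ and \ref{phi psi} is the counit $\psi_A\circ\phi_A(v)\preceq v$. In the complete lattice $A$ a given map has at most one left adjoint, which already yields the uniqueness of $\psi_A$; moreover $\phi_A$, being a right adjoint, preserves all infima $\bigwedge$. This adjunction reading is the conceptual pivot of the whole argument.

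For the first implication, assume $(A,\phi_A,\psi_A)$ satisfies \ref{phi et psi croissent}--\ref{def phi}. From the proof of \textbf{Theorem}~\ref{celui-là} (the relation $k\circ i\simeq\id_A$, the formula $i(a)(d)=\pi\circ l(a\+\ul{1-d})$, and $\pi^\ast\circ\pi\simeq\id$ on the fixed points of $l$, \textbf{Lemma}~\ref{celui-ci}) every $a\in A$ satisfies $a\simeq\bigwedge[d\in\dcal]\ul d\+l(a\+\ul{1-d})$. Since $\phi_A$ preserves infima and respects $\simeq$, I apply it termwise and then invoke axiom \ref{def phi} with $v=a\+\ul{1-d}$, obtaining
\[\phi_A(a)\simeq\bigwedge[d\in\dcal]\phi_A(\ul d\+l(a\+\ul{1-d}))\simeq\bigwedge[d\in\dcal]\ul{a_\phi(d)}\+l(a\+\ul{1-d}).\]
Because $a_\phi$ is right adjoint to $a_\psi$ in $[0,1]$, \textbf{Lemma}~\ref{Utile} rewrites the right-hand side as $\bigwedge[d\in\dcal]\ul d\+l(a\+\ul{1-a_\psi(d)})$, which is exactly the asserted formula.

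For the converse I argue by transport along the isomorphism of \textbf{Theorem}~\ref{celui-là}. On $\USC$ the symbols are interpreted canonically through \textbf{Lemma}~\ref{def}, namely $a_\phi(f)=f\circ a_\psi$ and $a_\psi(f)=\ouv{(f\circ a_\phi)}$; since $[0,1]$ satisfies \ref{phi et psi croissent}--\ref{def phi} and these are (in)equalities between linear terms, \textbf{Theorem}~\ref{formule formules} shows that $(\USC,a_\phi(\cdot),a_\psi(\cdot))$ satisfies them as well. Setting $\tilde\phi=k\circ a_\phi(\cdot)\circ i$ and $\psi_A=k\circ a_\psi(\cdot)\circ i$, and using that $i,k$ preserve $\preceq,\+,l$ and the constants $\ul d$, respect $\simeq$, and satisfy $k\circ i\simeq\id$, $i\circ k=\id$, these four conditions transfer to $(A,\tilde\phi,\psi_A)$. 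Applying the first implication to $\tilde\phi$ gives $\tilde\phi(a)\simeq\bigwedge[d\in\dcal]\ul d\+l(a\+\ul{1-a_\psi(d)})\simeq\phi_A(a)$, whence $\phi_A\simeq\tilde\phi$; as $\simeq$ is a congruence, $(A,\phi_A,\psi_A)$ satisfies \ref{phi et psi croissent}--\ref{def phi} with this same $\psi_A$.

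The most delicate point is the verification of the defining axiom \ref{def phi} for the formula-defined $\phi_A$. I expect to handle it not by a direct computation inside the monoid structure, but precisely by the transport above: pushing the identity $\tilde\phi(\ul d\+l(v))\simeq\ul{a_\phi(d)}\+l(v)$ from $\USC$ back to $A$ is legitimate only because the isomorphism of \textbf{Theorem}~\ref{celui-là} preserves $l$, $\+$ and the $\ul d$, so checking that this preservation is genuinely available (and that the two directions of \ref{def phi} are both linear, so that \textbf{Theorem}~\ref{formule formules} applies to each) is where the care must go.
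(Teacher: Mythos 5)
Your proof of the implication (existence of $\psi_A$) $\Rightarrow$ (explicit formula) is essentially the paper's: $\phi_A$ preserves lower bounds as a right adjoint, one expands $a \simeq \bigwedge[d \in \dcal] \ul d \+ l(a \+ \ul{1-d})$ (this is $k \circ i \simeq \id_A$ from the proof of \textbf{Theorem} \ref{celui-là}), applies axiom \ref{def phi} termwise, and swaps $a_\phi$ for $a_\psi$ via \textbf{Lemma} \ref{Utile}; the uniqueness-of-left-adjoints remark is also the intended one. For the converse you genuinely diverge. The paper stays entirely inside $A$: from the formula it checks directly that $\phi_A$ preserves lower bounds (because $\+$ and $l$ do), so by completeness $\phi_A$ admits a left adjoint $\psi_A$, which yields \ref{phi et psi croissent}--\ref{psi phi}, and it then verifies \ref{def phi} by a short computation using \ref{def l} and \ref{redef +}. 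Your transport through the isomorphism of \textbf{Theorem} \ref{celui-là} is not circular (that theorem and the $\USC$-side facts are independent of the proposition) and it does explain why the formula is canonical, but it is costlier: you must know in advance that $(\USC, a_\phi(\cdot), a_\psi(\cdot))$ satisfies all four axioms (the paper only establishes this in the lemma \emph{after} the proposition), and you need $i$ to preserve the non-dyadic constants $\ul{a_\phi(d)}$ and $\ul{1 - a_\psi(d)}$, whereas \textbf{Theorem} \ref{celui-là} only proves preservation of $\ul d$ for $d \in \dcal$ — true, since $i$ and $k$ form an order isomorphism up to $\simeq$ and hence preserve the infima defining the extended $\ul{\,\cdot\,}$, but this is a step you leave unstated.

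The one incorrect step is the justification that $(\USC, a_\phi(\cdot), a_\psi(\cdot))$ satisfies axioms \ref{phi et psi croissent}--\ref{def phi} "by \textbf{Theorem} \ref{formule formules}, since these are inequalities between linear terms". That theorem applies only to terms of $\Lusc$, interpreted through the operadic action of \textbf{Definition} \ref{interpretation}. A left adjoint $a_\psi$ is in general \emph{not} upper semi-continuous — for instance $l^\ast$, for which $(l^\ast)^{-1}([0,q)) = \{0\}$ is not open in $[0,1]_u$ — and its interpretation on $\USC$ is not the action but the ad hoc $f \mapsto \ouv{(f \circ a_\phi)}$ of \textbf{Lemma} \ref{def} (the paper explicitly remarks that $b(f) \neq f \circ a$ in general). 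So \textbf{Theorem} \ref{formule formules} says nothing about axioms \ref{phi et psi croissent}--\ref{psi phi}, which mention $\psi$. The damage is local: \textbf{Lemma} \ref{def} itself proves that $f \mapsto \ouv{(f \circ a_\phi)}$ is left adjoint to $a_\phi(\cdot)$, which yields those three axioms directly, and your appeal to \textbf{Theorem} \ref{formule formules} for \ref{def phi} is legitimate (only upper semi-continuous symbols occur, and both sides are linear in $v$); this is exactly the repair the paper performs in the lemma following the proposition. With that citation corrected your argument goes through, though the paper's direct converse remains the more economical one.
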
 

\begin{proof} 

Assume that, for all $a \in A$, $\phi_A(a) \simeq \bigwedge[d \in \dcal] \ul d \+ l(a \+ \ul{1 - a_\psi(d)})$. Then, for all $(a_i)_{i \in I} \in A^I$, \linebreak $\phi_A\left(\bigwedge[i \in I] a_i\right) \simeq \bigwedge[d \in \dcal] \ul d \+ l\left(\bigwedge[i \in I] a_i \+ \ul{1 - a_\psi(d)}\right) \simeq \bigwedge[i \in I] \bigwedge[d \in \dcal] \ul d \+ l(a_i \+ \ul{1 - a_\psi(d)}) \simeq \bigwedge[i \in I] \phi_A(a_i)$. Let $\psi_A$ be define as in \bf{Proposition} \ref{la proposition}. For all $a \et b \in A$, if $\phi_A(b) \succeq a$, then $b \succeq \psi_A(a)$. Let now $a \et b \in A$ such that $\psi_A(a) \preceq b$. $\phi_A(b) \succeq \phi_A(\psi_A(a)) = \bigwedge \left\{\phi_A(c), \; c \in A \tq a \preceq \phi_A(c)\right\} \succeq a$. 

Thus $\phi_A$ admits $\psi_A$ a left adjoint, i.e. $(A, \phi_A, \psi_A)$ satisfies \ref{phi et psi croissent}, \ref{phi psi} and \ref{psi phi}. Finally, by \ref{def l} and \ref{redef +}, for all $a \in A$ and $q \in [0, 1]$, \begin{align*} \phi_A(\ul q \+ l(a)) &\simeq \bigwedge[d \in \dcal] \ul d \+ l(\ul q \+ l(a) \+ \ul{1 - a_\psi(d)}) \simeq \bigwedge[d \in \dcal] \ul d \+ l(l(a) \+ \ul{1 - (a_\psi(d) \- q)}) 
\\ 
&\simeq \bigwedge[d \in \dcal] \ul d \+ l(a) \+ \ul{l(1 - (a_\psi(d) \- q))} \simeq \bigwedge[d \geq a_\phi(q) \texte{dyadic}] \ul d \+ l(a) \simeq \ul{a_\phi(q)} \+ l(a). \end{align*} 

Hence $(A, \phi_A, \psi_A)$ satisfies \ref{def phi}. 

Assume now that there exists $\psi_A\colon A \rightarrow A$ such that $(A, \phi_A, \psi_A)$ satisfies \ref{phi et psi croissent}, \ref{phi psi}, \ref{psi phi} and \ref{def phi}. Then, $\phi_A$ preserves lower bounds up to $\simeq$, so, using \bf{Lemma} \ref{Utile}, for all $a \in A$, $$\phi_A(a) \ust{\simeq}{\ref{celui-ci} (1)} \bigwedge[d \in \dcal] \phi_A(\ul d \+ l(a \+ \ul{1 - d})) \ust{\simeq}{\ref{def phi}} \bigwedge[d \in \dcal] \ul{\phi(d)} \+ l(a \+ \ul{1 - d})) \simeq \bigwedge[d \in \dcal] \ul{d} \+ l(a \+ \ul{1 - \psi(d)}).$$ 
\end{proof} 

\begin{lem} 

For all $\phi \et \psi \colon [0,1] \rightarrow [0,1]$ such that $\phi$ is right adjoint to $\psi$, for all commutative residuated complete lattice $\lcal$, \ref{phi et psi croissent}, \ref{phi psi}, \ref{psi phi} and \ref{def phi} are satisfied by  $(\USC, \phi, \psi)$, where $\phi$ and $\psi$ are defined on $\USC$ as in \bf{Lemma} \ref{def}. 

\end{lem}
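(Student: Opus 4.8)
The plan is to read off the axioms \ref{phi et psi croissent}, \ref{phi psi} and \ref{psi phi} from the adjunction recorded in \bf{Lemma} \ref{def}, and to settle the defining axiom \ref{def phi} by a direct pointwise computation. I identify the symbol $\phi$ of \bf{Proposition} \ref{la proposition} with the $\USC$-operation $a(\_)$ and $\psi$ with $b(\_)$, so that $a_\phi = a$ and $a_\psi = b$ as maps on $[0,1]$; since $\preceq$ is an honest partial order on $\USC$, the relation $\simeq$ is plain equality. Recall from \bf{Lemma} \ref{def} that on $\USC$ one has $a(f) = f \circ b$ and $b(f) = \ouv{(f \circ a)}$, and that $b(\_)$ is left adjoint to $a(\_)$ for the order $\leq$ of $\USC$, i.e. $b(f) \leq g \Leftrightarrow f \leq a(g)$ for all $f, g \in \USC$. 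Note that $f \circ b$ is sup-preserving whenever $f$ is, since the left adjoint $b$ preserves suprema on $[0,1]$, so $a(f)$ indeed lies in $\USC$.

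First I would dispatch the three adjunction axioms at once. Both members of an adjoint pair between posets are order-preserving, so $a(\_)$ and $b(\_)$ are non-decreasing, which gives \ref{phi et psi croissent}. The counit and unit of $b(\_) \dashv a(\_)$ are $b(a(v)) \leq v$ and $v \leq a(b(v))$; under the identification $\phi = a$, $\psi = b$ these read $\psi \circ \phi(v) \preceq v$ and $v \preceq \phi \circ \psi(v)$, which are precisely \ref{phi psi} and \ref{psi phi}.

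The computational heart is \ref{def phi}, that is $a(\ul d \+ l(v)) = \ul{a(d)} \+ l(v)$. By \bf{Lemma} \ref{0-indicator}, $l(v) = 0_{v(1)}$, so $l(v)(r) = v(1)$ for $r > 0$ and $l(v)(0) = \bot$; also $\ul d(p) = \top$ exactly when $p > d$. Using the convolution formula of \bf{Lemma} \ref{calculatoire}, $(\ul d \+ l(v))(q) = \bigvee[p \wb q] \ul d(p) \otimes l(v)(q - p)$, and every nonzero summand equals $\top \otimes v(1) = v(1)$, occurring exactly when there is $p$ with $d < p < q$; hence $(\ul d \+ l(v))(q) = v(1)$ when $q > d$ and $\bot$ otherwise. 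Applying $a(f) = f \circ b$ yields $a(\ul d \+ l(v))(q) = (\ul d \+ l(v))(b(q))$, which is $v(1)$ exactly when $b(q) > d$. The identical computation gives $(\ul{a(d)} \+ l(v))(q) = v(1)$ exactly when $q > a(d)$. Thus the two functions agree as soon as $b(q) > d \Leftrightarrow q > a(d)$, which is the contrapositive, legitimate because $[0,1]$ is totally ordered, of the defining equivalence $b(q) \leq d \Leftrightarrow q \leq a(d)$ of the assumption that $a$ is right adjoint to $b$.

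The argument is conceptually short, and I expect no genuine obstacle: the only delicate points are the convolution computation and the treatment of the boundary values at $q \in \{0, 1\}$ (pinned down by sup-preservation), together with the already-noted check that the operations $a(\_)$ and $b(\_)$ take values in $\USC$.
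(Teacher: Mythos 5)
Your proof is correct and follows essentially the same route as the paper: the axioms \ref{phi et psi croissent}, \ref{phi psi} and \ref{psi phi} are read off directly from the adjunction established in \textbf{Lemma} \ref{def}, and \ref{def phi} is settled by the same pointwise convolution computation, using $l(f) = 0_{f(1)}$ (\textbf{Lemma} \ref{0-indicator}), $a(f) = f \circ b$, and the equivalence $b(q) > d \Leftrightarrow q > a(d)$ coming from the adjunction on $[0,1]$. The only cosmetic difference is that you evaluate $\ul d \+ l(v)$ explicitly before composing with $b$, whereas the paper carries out the computation in one chain of equalities ending with $0_{f(1)}(b(q) \- d) = 0_{f(1)}(q \- a(d))$.
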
 

\begin{proof} 

Let $a \et b \colon [0,1] \rightarrow [0,1]$ such that $a$ is right adjoint to $b$, and let $\lcal$ be a commutative residuated complete lattice. \bf{Lemma} \ref{def} already proved \ref{phi et psi croissent}, \ref{phi psi} and \ref{psi phi}. 

For all $f \in \USC$, $d \in \dcal$ and $q \in [0,1]$, according to \bf{Lemmas} \ref{calculatoire} and \ref{0-indicator}, \begin{align*} \phi(\ul d \+ l(f))(q) &= (\ul d \+ 0_{f(1)})(\psi(q)) 
\\ 
&= 0_{f(1)}(\psi(q) \- d) 
\\ 
&= 0_{f(1)}(q \- \phi(d)) 
\\ 
&= (\ul{\phi(d)} \+ l(f))(q).\end{align*} 
\end{proof} 

Thus, for any complete Archimedean model $A$ of $\T_0$ and $\phi\colon [0,1] \rightarrow [0,1]$ admitting a left adjoint $\psi$, let's define $\fct[\phi_A][A, A]{a, \bigwedge[d \in \dcal] \ul d \+ l(a \+ \ul{1 - \psi(d)})}$ and $\fct[\phi_A][A, A]{a, {\bigwedge \left\{b \in A \, |\, a \preceq \phi_A(b)\right\}}}$. 

\begin{thm} \label{Il faut le citer plus bas} 

Let $\phi\colon [0,1] \rightarrow [0,1]$ such that $\phi$ admits a left adjoint $\psi$. Let $A$ be complete Archimedean model of $\T_0$ and $\lcal$ be the \crcl associated to $A$ by \bf{Theorem} \ref{celui-là}. For every $\Phi\colon A \rightarrow A$, the following two statements are equivalent : 

\begin{enumerate} 

\item the quotient of $(A, \Phi)$ by $\simeq$ is isomorphic to $(\USC, \phi_{\USC})$ 

\item $\Phi \simeq \phi_A$. 

\end{enumerate} 

In this case, the isomorphism also preserves $\psi$. 

\end{thm}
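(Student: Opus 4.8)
The plan is to reduce everything to a single observation: both $\phi_A$ and $\phi_\USC$ are given by one and the same $\mathcal{L}_0$-formula, so they are automatically matched by any isomorphism of the relevant structures. First I would record that $\USC$ is itself a complete Archimedean model of $\T_0$ (by \bf{Lemma} \ref{soundness for T}, together with the interpretation of $l,l^\ast$ supplied by \bf{Lemmas} \ref{def} and \ref{0-indicator}), so \bf{Proposition} \ref{la proposition} applies to it. Since the preceding Lemma shows that $(\USC,\phi_\USC,\psi_\USC)$ satisfies \ref{phi et psi croissent}, \ref{phi psi}, \ref{psi phi} and \ref{def phi}, statement (1) of \bf{Proposition} \ref{la proposition} holds for $\USC$, whence statement (2): for all $f\in\USC$, $$\phi_\USC(f) = \bigwedge[d \in \dcal] \ul d \+ l(f \+ \ul{1 - \psi(d)}),$$ the equality being genuine because $\simeq$ is equality on the partially ordered set $\USC$. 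This is exactly the formula defining $\phi_A$ on $A$; writing $\Theta(\cdot) := \bigwedge[d\in\dcal] \ul d \+ l(\cdot \+ \ul{1-\psi(d)})$, we have $\phi_A = \Theta^A$ and $\phi_\USC = \Theta^{\USC}$.

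The key lemma I would then isolate is that any isomorphism $\theta$ of $\mathcal{L}_0$-structures between (quotients of) complete Archimedean models of $\T_0$ intertwines the two interpretations of $\Theta$. Indeed $\theta$ preserves $\preceq$, hence all existing infima, and it preserves $\+$, $l$ and the constants $\ul d$ for $d$ dyadic; since $\ul{\_}$ is the unique lower-bounds-preserving extension to $[0,1]$, $\theta$ also preserves $\ul{1-\psi(d)}$ (this is the one place to be careful, as $\psi(d)$ need not be dyadic). Consequently $\theta\circ\Theta \simeq \Theta\circ\theta$. Applied to the isomorphism $i$ of \bf{Theorem} \ref{celui-là} — which, by its proof, preserves exactly $\preceq,\+,\-,l,l^\ast$ and the $\ul d$ — this gives $i\circ\phi_A \simeq \phi_\USC\circ i$.

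With this in hand the equivalence is immediate. For $(2)\Rightarrow(1)$ I would take $\lcal = \lcal(A)^{op}$ and the isomorphism $\bar i\colon A/{\simeq}\ \to \USC$ induced by $i$; from $\Phi\simeq\phi_A$ and $i\circ\phi_A\simeq\phi_\USC\circ i$ one gets $\bar i\circ\Phi \simeq \phi_\USC\circ\bar i$, so $\bar i$ is an isomorphism $(A,\Phi)/{\simeq}\ \to (\USC,\phi_\USC)$. For $(1)\Rightarrow(2)$, given an isomorphism $\theta$ carrying $\Phi$ to $\phi_\USC$, the key lemma gives $\theta\circ\phi_A \simeq \phi_\USC\circ\theta \simeq \theta\circ\Phi$, and injectivity of $\theta$ forces $\Phi\simeq\phi_A$. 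Finally, for the claim about $\psi$: by \bf{Proposition} \ref{la proposition} the maps $\psi_A$ and $\psi_\USC$ are the unique left adjoints of $\phi_A$ and $\phi_\USC$, and an order-isomorphism intertwining two right adjoints automatically intertwines their left adjoints (a one-line adjunction computation), so the isomorphism preserves $\psi$ as well.

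I expect the only genuine obstacle to be the bookkeeping in the key lemma: making sure that \emph{isomorphism of the quotient structures} really means an $\mathcal{L}_0$-isomorphism preserving $l$ and the infinitary $\bigwedge$, and that the non-dyadic constants $\ul{1-\psi(d)}$ are transported via the lower-bounds-preserving extension of $\ul{\_}$. Everything else is a direct transport of the formula $\Theta$ along isomorphisms.
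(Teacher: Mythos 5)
Your proposal is correct and is essentially the paper's own proof, reorganized: the paper likewise characterizes $\phi_\USC$ by the formula $\bigwedge[d \in \dcal] \ul d \+ l(\,\cdot \+ \ul{1 - \psi(d)})$ (via the $0$-indicator computation, which you replace by applying \textbf{Proposition} \ref{la proposition} to $\USC$ itself through the lemma preceding the theorem), transports it along the isomorphism $i$ of \textbf{Theorem} \ref{celui-là} — whose proof supplies exactly the preservation of $\preceq$, of infima, of $\+$, $l$ and the $\ul d$ that your key lemma requires — and establishes preservation of $\psi$ by the same uniqueness-of-left-adjoint chain of equivalences. One cosmetic correction: the soundness of $\T_0$ in $\USC$ should be cited to \textbf{Lemmas} \ref{def}, \ref{0-indicator}, \ref{calculs} and \textbf{Proposition} \ref{approx} (which cover the axioms involving $l$), since \textbf{Lemma} \ref{soundness for T} only concerns $\T$ in the $l$-free language $\mathcal{L}$.
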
 

\begin{proof} 

Assume first that $\Phi \simeq \phi_A$. 

Let $i$ and $k$ be as in the proof of \textbf{Theorem} \ref{celui-là}. Let us also denote by $\bar\phi$ and $\bar\psi$ the respective interpretations of $\phi$ and $\psi$ in $USC(\lcal(A)^{op})$. For all $a \in A$, \begin{align*} k(\bar\phi(i(a))) &= \bigwedge[d \in \dcal] \ul d \+ \pi^\ast(\bar\phi(i(a))(d)) = \bigwedge[d \in \dcal] \ul d \+ \pi^\ast(i(a)(\psi(d))) = \bigwedge[d \in \dcal] \ul d \+ l(a \+ \ul{1 - \psi(d)}) 
\\ 
&= \bigwedge[d \in \dcal] \ul{\phi(d)} \+ l(a \+ \ul{1 - d}) = \phi_A(a) \simeq \Phi(a),\end{align*} so $i(\Phi(a)) = \bar\phi(i(a))$. 

Conversely, if the quotient of $(A,\Phi)$ $\simeq$ is isomorphic to $(\USC, \phi_{\USC})$, then the isomorphism preserves the lower bounds, so,by \bf{Lemma} \ref{0-indicator}, it suffices to prove that, for every commutative residuated complete lattice $\lcal$ and $f \in \USC$, $\bar\phi(f) = \bigwedge[q \in {[0,1]}] \ul q \+ 0_{f(\psi(q))}$. Let thus $\lcal$ be a commutative residuated complete lattice and $f \in \USC$. 

For all $p \in [0,1]$, $$\left(\bigwedge[q \in {[0,1]}] \ul q \+ 0_{f(q)}\right) (p) = \bigvee[q \in {[0,1]}] \bigvee[r < p] \ul q (r) \otimes 0_{f(q)}(p - r) = \bigvee[q < r < p] 0_{f(q)}(p - r) = \bigvee[q < p] f(q) = f(p).$$ 

Applying this to $\bar\phi(f) = f \circ \psi$ gives $\bar\phi(f) = \bigwedge[q \in {[0,1]}] \ul q \+ 0_{f(\psi(q))}$. 

Assume now that the quotient of $(A, \Phi)$ by $\simeq$ is isomorphic to $(\USC, \bar\phi)$ and write $i$ the composition of such an isomorphism with the quotient map. 

For all $a \in A$ and $f \in USC(\lcal)$, there exists $b \in A$ such that $f = i(b)$ and thus, \begin{align*} i(\psi_A(a)) \leq f &\Leftrightarrow i(\psi_A(a)) \leq i(b) \\ 
&\Leftrightarrow \psi_A(a) \preceq b \\ 
&\Leftrightarrow a \preceq \phi_A(b) \\ 
&\Leftrightarrow i(a) \leq i(\phi_A(b)) \\ 
&\Leftrightarrow i(a) \leq \bar\phi(i(b)) \\ 
&\Leftrightarrow \bar\psi(i(a)) \leq f.
\end{align*} 
\end{proof} 

\begin{cor} \label{The corollaire} 

For every complete Archimedean model $A$ of $\T_1$, there exists $\lcal$ a commutative residuated complete lattice such that $A$ is isomorphic to $\USC$. 

\end{cor}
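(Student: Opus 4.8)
The plan is to read \textbf{Corollary} \ref{The corollaire} as an immediate consequence of \textbf{Theorem} \ref{celui-là} and the transfer theorem stated just above it, applied once to each adjoint pair among the operations of $\mathcal{L}_1$ that do not belong to $\mathcal{L}_0$. First I would note that since $\T_0 \subseteq \T_1$, every complete Archimedean model $A$ of $\T_1$ is in particular a complete Archimedean model of $\T_0$. Hence \textbf{Theorem} \ref{celui-là} yields a \crcl $\lcal = \lcal(A)$ together with the canonical maps $i$ and $k$ built in its proof, which realise an isomorphism (compatible with $\simeq$) between $A$ and $\USC$ preserving the full $\mathcal{L}_0$-structure $\{\preceq, +, -, l, l^\ast, (\ul d)_d\}$. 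It then remains only to verify that this very same $i$ also transports the remaining operations $2$, $\frac{\cdot}{2}$, $j_\ast$, $j$, $\alpha$ and $\beta$.

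These six symbols organise into three adjoint pairs, namely $(2, \frac{\cdot}{2})$, $(j_\ast, j)$ and $(\alpha, \beta)$, in each of which the first operation admits the second as a left adjoint on $[0,1]$: one checks directly that $\frac{\cdot}{2} \dashv 2$ and $j \dashv j_\ast$, while $\alpha$ is a continuous increasing bijection of $[0,1]$ whose inverse is $\beta$, so $\beta \dashv \alpha$. For each such pair $(\phi,\psi)$ I would feed the interpretations $(\Phi,\psi_A)$ of $(\phi,\psi)$ in $A$ into \textbf{Proposition} \ref{la proposition}. The monotonicity clause \ref{phi et psi croissent} comes from axiom \ref{croissance 2}; the counit and unit inequalities \ref{phi psi} and \ref{psi phi} are exactly the corresponding adjunction axioms of $\T_1$ for that pair; and \ref{def phi} is precisely the defining axiom \ref{redef 2}, \ref{redef j*} or \ref{redef alpha}. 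Thus statement (1) of \textbf{Proposition} \ref{la proposition} holds, and its conclusion (2) delivers $\Phi \simeq \phi_A$, where $\phi_A(a) = \bigwedge[d \in \dcal] \ul d \+ l(a \+ \ul{1 - \psi(d)})$.

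With $\Phi \simeq \phi_A$ established for each pair, I would then invoke the transfer theorem immediately preceding the corollary: its condition (2) is exactly $\Phi \simeq \phi_A$, and its conclusion is that the isomorphism coming from \textbf{Theorem} \ref{celui-là} — that is, our $i$ — satisfies $i(\Phi(a)) = \phi_\USC(i(a))$ and moreover preserves the left adjoint $\psi$. Applying this in turn for $\phi \in \{2, j_\ast, \alpha\}$, the single map $i$ is seen to intertwine all six extra operations with their interpretations in $\USC$. Together with the $\mathcal{L}_0$-preservation already in hand, this shows $i$ is an $\mathcal{L}_1$-isomorphism of $A$ onto $\USC$, which is precisely the assertion of the corollary.

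The one point genuinely needing care, and where I expect the only substance to lie, is that a single isomorphism $i$ must work simultaneously for every operation. This is guaranteed because the proof of the transfer theorem does not manufacture an abstract isomorphism but reuses verbatim the canonical pair $(i,k)$ of \textbf{Theorem} \ref{celui-là}; consequently the three applications are mutually compatible and glue into one $\mathcal{L}_1$-isomorphism rather than three unrelated ones. The residual checks — that each adjoint direction is correct (in particular that $\beta = \alpha^{-1}$ really is the left adjoint of $\alpha$) and that the $\T_1$ axioms match clause-by-clause with \ref{phi et psi croissent}--\ref{def phi} — are routine; the monotonicity of $\beta$ required by \ref{phi et psi croissent} follows from that of $\alpha$ together with the adjunction inequalities for the pair $(\alpha,\beta)$.
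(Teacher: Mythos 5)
Your proof is correct and is exactly the argument the paper intends: the corollary is stated without an explicit proof precisely because it assembles, as you do, Theorem \ref{celui-là} (for the $\mathcal{L}_0$-reduct), Proposition \ref{la proposition} applied to the three adjoint pairs $(2,\frac{\cdot}{2})$, $(j_\ast,j)$, $(\alpha,\beta)$, and the transfer theorem immediately preceding it, whose proof reuses the same canonical pair $(i,k)$ so that the preservations glue. Your side remarks are also on point, in particular that the monotonicity of $\beta$ (absent from axiom list \ref{croissance 2}) is automatic from the adjunction inequalities together with the monotonicity of $\alpha$.
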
 

\begin{proof} 

Let $A$ by a complete Archimedean model of $\T_1$. It is a complete Archimedean model of $\T_0$, so, according to \bf{Theorem} \ref{celui-là}, there exists $\lcal$ a \crcl and $i\colon A \rightarrow \USC$ a surjective morphism such that, for all $a \et b \in A$, $a \preceq b \Leftrightarrow i(a) \leq i(b)$. Since the interpretations of $(\USC, \, 2 \cdot, \, \frac{\cdot}{2})$, $(\USC,\, j_\ast,\, j)$, $(\USC,\, \alpha,\, \beta)$ and $(\USC,\, l,\, l^\ast)$ satisfie points 1 to 4 of \bf{Proposition} \ref{la proposition}, $2 \cdot = (2 \cdot)_A$, $j_\ast = {j_\ast}_A$, $\alpha = \alpha_A$ and $l = l_A$. Hence, according to \bf{Theorem} \ref{Il faut le citer plus bas}, $i$ preserves $2 \cdot$, $j_\ast$, $\alpha$, $l$ and their left adjoints $\frac{\cdot}{2}$, $j$, $\beta$ and $l^ \ast$. 

\end{proof} 

\subsection{Models of $\T$} \label{subsection Models of T} 

\subsubsection{Complete models of $\T$} 

Now that we have proven any complete Archimedean model of $\T_1$ to be isomorphic to an AC-algebra, we prove that any complete model of $\T$ can be endowed with a preorder that makes it a complete Archimedean model of $\T_1$. Therefore its quotient by the equivalence relation induced by the preorder is isomorphic to an AC-algebra (\bf{Theorem} \ref{Complétude}). Then, we will deal with non complete non Archimedean models of $\T$. 

In this section, we prove the following theorem. 

\begin{thm} \label{Complétude} 

For all complete model $A$ of $\T$, there exists a commutative residuated complete lattice $\lcal$ such that $\quotient{A}{\simeq}$ is isomorphic to $USC(\lcal)$. 

\end{thm}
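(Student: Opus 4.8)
The plan is to carry out exactly the reduction announced just before the statement: show that the quotient $\quotient{A}{\simeq}$, once equipped with suitable interpretations of the extra symbols $\beta$, $l$, $l^\ast$, is a complete Archimedean model of $\T_1$, and then invoke \bf{Corollary} \ref{The corollaire} to produce a \crcl $\lcal$ with $\quotient{A}{\simeq} \cong \USC$. Since any $\T_1$-isomorphism preserves $\preceq$ and the dyadic constants, it preserves $\vee$, $\wedge$, $\ul 0$, $\ul 1$ as well, so it is in particular an isomorphism of AC-algebras, which is what the theorem asserts. Write $\bar A = \quotient{A}{\simeq}$.

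First I would verify that $\preceq$ (\bf{Notation} \ref{preceq}) is a preorder compatible with every operation of $\mathcal{L}$, so that $\simeq$ is a congruence and $\bar A$ inherits an $\mathcal{L}$-structure on which $\preceq$ induces a genuine partial order. Compatibility with $+$, $2$, $j_\ast$ and $\alpha$ is supplied by the continuity axioms \ref{2(+) <= 2 + 2}, \ref{j*(+) <= j* + } and \ref{alpha(+) <= alpha + 2}, combined with the scale-doubling inequality $\frac{\ul 1}{2^n} \+ \frac{\ul 1}{2^n} \leq \frac{\ul 1}{2^{n-1}}$ coming from \ref{2v <=} and \ref{1/2^n + 1/2^n}: for instance $j_\ast(a) \leq j_\ast(b \+ \tfrac{\ul 1}{2^n}) \leq j_\ast(b) \+ \tfrac{\ul 1}{2^n}$ gives $j_\ast(a) \preceq j_\ast(b)$ directly, while the $2$- and $\alpha$-cases additionally absorb a doubled infinitesimal into the next scale. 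Monotonicity (axiom group \ref{croissance}) handles $\vee$ and $\wedge$, and residuation \ref{residuation} descends to a residuation for $\preceq$ because $w \preceq u \+ v \Leftrightarrow w \- v \preceq u$ follows termwise in $n$. Next, completeness of $\bar A$: \ref{residuation} exhibits $\cdot \+ v$ as a right adjoint, hence as meet-preserving, so $\left(\bigwedge[i] b_i\right) \+ \frac{\ul 1}{2^n} = \bigwedge[i] \left(b_i \+ \frac{\ul 1}{2^n}\right)$ in $A$; this makes arbitrary meets $\simeq$-compatible, so they descend to $\bar A$ and make it complete. The Archimedean property is built into $\simeq$, since an element collapses to $\ul 0$ precisely when it is $\leq \frac{\ul 1}{2^n}$ for all $n$, whence $\bigwedge[n \in \nn] \frac{\ul 1}{2^n} \simeq \ul 0$.

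The decisive input is the definition of the new symbols and the verification of the $l$-axioms of $\T_1$. Here I would set $\beta(v) := 2v \wedge j_\ast(v)$ and $l(v) := \bigwedge[n \in \nn] \alpha^n(v)$, taking $l^\ast$ to be the left adjoint of $l$, which exists because $l$ is a meet of meet-preserving maps on the complete lattice $\bar A$. The choice of $\beta$ is forced by \ref{2 >=} and \ref{j* >=} (which give $\beta(v) \geq v$) together with \ref{alpha beta v >= v}--\ref{beta alpha v >= v}, making $\alpha$ and $\beta$ mutually inverse; hence $\alpha$ is an order-automorphism with $\alpha(v) \leq v$, and in $[0,1]$ one checks $\bigwedge[n] \alpha^n = l$, so this $l$ is the correct projection onto the ``locale part'' $\{e : 2e \simeq e\}$. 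With these choices the adjunction axioms \ref{alpha beta v >= v 2}, \ref{ll* >=} and \ref{l*l <=} (and their companions) are immediate, and the crucial value axioms of \ref{[0,1] 2} fall out of \ref{v<=d} and \ref{v>=d}: taking the meet over $n$ in \ref{v<=d} and using meet-preservation of $\ul d \+ (\cdot)$ yields $v \leq \ul d \+ l(v \+ \ul{1 - d})$, while the estimate $l(w) \leq \alpha^{n+1}(w)$ turns \ref{v>=d} into its $l$-version. The defining and stability axioms \ref{redef 2}--\ref{Stability 2} then reduce to routine manipulations of $\bigwedge[n] \alpha^n$, using that $\alpha$ is an automorphism and that $l$ lands in the $\alpha$-fixed points.

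The step I expect to be the main obstacle is precisely this last one: correctly identifying $l$ with $\bigwedge[n] \alpha^n$ and then matching the $\alpha^n$-indexed axioms of $\T$ — which are stated with $\leq$ and without any Archimedean assumption — against the $l$-indexed, $\simeq$-stated axioms of $\T_1$, all while tracking the order reversal and checking that every newly introduced operation is well-defined on the quotient (i.e.\ $\simeq$-invariant). Once $\bar A \models \T_1$ is established together with its completeness and Archimedeanness, \bf{Corollary} \ref{The corollaire} yields a \crcl $\lcal$ and an isomorphism $\bar A \cong \USC$, completing the proof.
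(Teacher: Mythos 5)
Your proposal is correct and follows essentially the same route as the paper's proof: reduce to $\T_1$ via the preorder $\preceq$, set $\beta(v) = 2v \wedge j_\ast(v)$ and identify $l$ with $\bigwedge_{n} \alpha^n(\cdot)$ (the paper defines $l(a)$ as the supremum of the elements $e$ with $2e \simeq e$ and $e \preceq a$, and proves this identity in \textbf{Lemma} \ref{l is nice}), verify the value, continuity, completeness and Archimedean conditions exactly as you sketch (these are the paper's \textbf{Lemmas} \ref{lemme 1}, \ref{lemme 2} and \ref{lemme 3}), and conclude by \textbf{Corollary} \ref{The corollaire}. The only cosmetic differences are that the paper verifies $\T_1$ on the preordered structure $(A, \preceq)$ and lets the quotient happen inside \textbf{Corollary} \ref{The corollaire} rather than quotienting first, and that it exhibits the left adjoint explicitly as $l^\ast = \bigvee_{n} \alpha^{-n}(\cdot)$ instead of appealing to meet-preservation of $l$ on a complete lattice.
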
 

Let us now assume $A$ to be a complete model of $\T$. 

\begin{nota} \label{preceq 0} 

Define $\preceq$ on $A$ by: for all $a \et b \in A$, $a \preceq b \Leftrightarrow \forall n \in \nn \; a \leq b \+ \ul{\frac{1}{2^n}}$ \blue{and $\simeq$ by $a \simeq b \Leftrightarrow a \preceq b \et b \preceq a$.} Let's define $\lcal(A) = \{e \in A \, | \, 2 e \simeq e\}$. 

\end{nota} 

To this purpose, we will prove that, by defining $\fct[l][A, \blue{A}]{a, \bigvee[\ust{e \in \lcal(A)}{e \preceq a}] e}$ and $\fct[\beta][A,A]{a, 2a \wedge j_\ast(a)}$, there exists $l^\ast\colon A \rightarrow A$ such that $(A,\, \preceq,\, \wedge,\, (\ul d)_{d \in \dcal},\, \+,\, \-,\, 2,\, \frac{\cdot}{2},\, j_\ast,\, j,\, \alpha,\, \beta,\, l, l^\ast)$ satisfies $\T_1$. 

\begin{lem} \label{définition dyadiques} $ $ 

\begin{enumerate} 

\item \label{définition dyadiques 1} For all $n \in \nn$, $\alpha^{n-1} \circ j_\ast (\ul 0) = \ul{\frac{1}{2^n}}$. 

\item \label{définition dyadiques 2} For all dyadic $d \in [0,1]$ and $n \in \nn^*$ such that $2^n d \in \nn$, $\ul d = \sum[k = 1][2^n d] \ul{\frac{1}{2^n}}$. 

\end{enumerate} 

\end{lem}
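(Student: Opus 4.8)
The plan is to pin down the special elements and read off both parts from them. Throughout write $c_n := \alpha^n \circ j_\ast(\ul 0)$ and $e_n := \frac{\ul 1}{2^n}$ (the $n$-fold application of $\frac{\cdot}{2}$ to $\ul 1$), so that $e_{n+1} = \frac{e_n}{2}$, the chain $(e_n)$ is decreasing (since $\frac{v}{2}\leq v$ follows from \ref{2 >=} through the adjunction between $\frac{\cdot}{2}$ and $2$ encoded by \ref{(2v)/2<=v} and \ref{v<=2(v/2)}), and $e_1 = \frac{\ul 1}{2}$. Part 1 amounts to $c_{n-1}=e_n$, which I would prove by induction on $n$. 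For the base case $n=1$ I first note $2\ul 0 = \ul 0$: indeed $\ul 0 \leq 2\ul 0$ by \ref{2 >=}, while $2\ul 0 \leq \ul 0 + \ul 0 = \ul 0$ by \ref{2v <=} together with $\ul 0$ being the neutral element of $+$ (\ref{com monoid}); then \ref{def j* <=} and \ref{def j* >=} give $c_0 = j_\ast(\ul 0)=j_\ast(2\ul 0) = \ul 0 + \frac{\ul 1}{2} = e_1$.

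For the inductive step, assuming $c_{n-1}=e_n$, I would establish $c_n = \alpha(e_n) = e_{n+1}$ by two opposite inequalities. For $\alpha(e_n)\leq e_{n+1}=\frac{e_n}{2}$, put $w=\frac{e_n}{2}$: axioms \ref{alpha beta v >= v} and \ref{alpha beta v <=  v} read $\alpha(2w\wedge j_\ast(w))=w$, so by monotonicity of $\alpha$ (\ref{croissance}) it suffices to check $e_n \leq 2\tfrac{e_n}{2}\wedge j_\ast\!\big(\tfrac{e_n}{2}\big)$. The first conjunct holds by \ref{v<=2(v/2)}, and the second because $j_\ast\!\big(\tfrac{e_n}{2}\big)\geq j_\ast(\ul 0)=\frac{\ul 1}{2}=e_1\geq e_n$, using monotonicity of $j_\ast$, the base case, and that $(e_n)$ is decreasing.

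The reverse inequality is the crux. Applying \ref{beta alpha v <=  v} and \ref{beta alpha v >= v} at $v=c_{n-1}$ gives $2c_n \wedge j_\ast(c_n)=c_{n-1}$, hence $c_{n-1}\leq 2c_n$; on the other hand \ref{1/2^n + 1/2^n} together with \ref{2v <=} yields $2c_n \leq c_n + c_n \leq c_{n-1}$. Therefore $2c_n = c_{n-1}$ (and, reading the middle terms, $c_n + c_n = c_{n-1}$ as well). Then \ref{(2v)/2<=v} gives $\frac{c_{n-1}}{2} = \frac{2c_n}{2}\leq c_n$, i.e. $e_{n+1}=\frac{e_n}{2}\leq c_n$ by the induction hypothesis. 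Combining the two inequalities yields $c_n = e_{n+1}$, which closes the induction and proves part 1.

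Part 2 then follows from the identity $e_{n+1}+e_{n+1}=e_n$ obtained above (this is $c_n+c_n=c_{n-1}$ rewritten via part 1). Given a dyadic $d$ and $n$ with $2^n d\in\nn$, grouping the $2^{n+1}d$ summands of $\sum[k=1][2^{n+1}d]\frac{\ul 1}{2^{n+1}}$ into $2^n d$ consecutive pairs and using commutativity and associativity of $+$ (\ref{com monoid}) collapses each pair to $e_n$, giving $\sum[k=1][2^{n+1}d]\frac{\ul 1}{2^{n+1}} = \sum[k=1][2^n d]\frac{\ul 1}{2^n}$; an induction on $n$ starting from the minimal exponent used in \textbf{Notation} \ref{notation dyadiques} shows the value is independent of the admissible $n$, which is exactly the assertion. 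I expect the main obstacle to be precisely this reverse inequality $c_n\geq e_{n+1}$: in an arbitrary model of $\T$ one only has the loose relations $v\leq 2\frac{v}{2}$ and $2v\leq v+v$, so the elements $\alpha^n\circ j_\ast(\ul 0)$ are a priori not rigid, and it is the interplay of the invertibility of $\alpha$ (with inverse $v\mapsto 2v\wedge j_\ast(v)$, axioms \ref{alpha beta v >= v}–\ref{beta alpha v >= v}) with the otherwise innocuous-looking axiom \ref{1/2^n + 1/2^n} that forces $2c_n=c_{n-1}$ and makes these elements behave like genuine dyadic rationals.
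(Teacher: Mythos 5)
Your proof is correct, and it has the same overall skeleton as the paper's (identical base case via $2\ul 0 = \ul 0$ and \ref{def j* <=}--\ref{def j* >=}; identical part 2, resting on $\frac{\ul 1}{2^{n+1}} \+ \frac{\ul 1}{2^{n+1}} = \frac{\ul 1}{2^n}$ plus pairwise regrouping down to the minimal exponent), but your inductive step for part 1 takes a genuinely different route. The paper computes $\alpha\left(\frac{\ul 1}{2^n}\right)$ in closed form as $\frac{\ul 1}{2^{n+1}} \vee j\left(\frac{\ul 1}{2^n}\right)$ --- using that $\alpha$, being the monotone inverse of $\beta(v) = 2v \wedge j_\ast(v)$, is its left adjoint and hence equals $v \mapsto \frac{v}{2} \vee j(v)$ (made explicit only later, in Lemma \ref{l is nice}(3)) --- and then kills the $j$-term via $\frac{\ul 1}{2^n} \leq j_\ast(\ul 0)$. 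You avoid this closed form entirely and instead sandwich: the upper bound $\alpha(e_n) \leq e_{n+1}$ from $e_n \leq \beta(e_{n+1})$, monotonicity, and $\alpha \circ \beta = \id$; the lower bound from $\beta \circ \alpha = \id$ giving $c_{n-1} \leq 2c_n$, against $2c_n \leq c_n \+ c_n \leq c_{n-1}$ from \ref{2v <=} and \ref{1/2^n + 1/2^n}. Your diagnosis of the crux is accurate --- in a general model one only has $v \leq 2\frac{v}{2}$, not equality, so a one-sided computation cannot close the step without some extra input; the paper's extra input is the join formula for $\alpha$, yours is axiom \ref{1/2^n + 1/2^n}. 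The trade-off: the paper's argument is shorter and keeps \ref{1/2^n + 1/2^n} out of part 1 entirely (it surfaces only in part 2), whereas your route needs no adjoint computation for $\alpha$ and delivers $c_n \+ c_n = c_{n-1}$, i.e.\ the key identity of part 2, already en route, making the two parts more integrated. Both arguments use only the axioms of $\T$ (no completeness or Archimedeanity), as they should.
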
 

\begin{proof} \begin{enumerate} 

\item By definition, $\ul{\frac{1}{2}} = j_\ast(\ul 0)$. 

Let $n \in \nn$ such that $\alpha^{n-1} \circ j_\ast(\ul 0) = \ul{\frac{1}{2^n}}$. $\alpha^n \circ j_\ast(\ul 0) = \alpha\left(\ul{\frac{1}{2^n}}\right) = \ul{\frac{1}{2^{n+1}}} \vee j\left(\ul{\frac{1}{2^n}}\right)$. However, $\ul{\frac{1}{2^n}} \leq \ul{\frac{1}{2}} = j_\ast(\ul 0)$, so $j\left(\ul{\frac{1}{2^n}}\right) \leq \ul 0 \leq \ul{\frac{1}{2^{n+1}}}$. Thus $\alpha^n \circ j_\ast(\ul 0) = \ul{\frac{1}{2^{n+1}}}$. 

\item For all $n \in \nn^*$, $\ul{\frac{1}{2^{n+1}}} \+ \ul{\frac{1}{2^{n+1}}} = \alpha^n \circ j_\ast(\ul 0) + \alpha^n \circ j_\ast(\ul 0) \leq \ul{\frac{1}{2^n}}$ and $\ul{\frac{1}{2^{n+1}}} \+ \ul{\frac{1}{2^{n+1}}} \geq 2 \ul{\frac{1}{2^{n+1}}} \geq \ul{\frac{1}{2^n}}$. Hence, for all $n \in \nn^*$, $\ul{\frac{1}{2^{n+1}}} \+ \ul{\frac{1}{2^{n+1}}} = \ul{\frac{1}{2^n}}$. Moreover, $\ul{\frac{1}{2}} \+ \ul{\frac{1}{2}} \geq 2 \ul{\frac{1}{2}} \geq \ul 1$, so $\ul{\frac{1}{2}} \+ \ul{\frac{1}{2}} = \ul 1$. 

Let $n \in \nn^*$ such that $2^n d \in \nn$. Let $n_0$ be the smallest non negative integer such that $2^{n_0} d \in \nn$. 

$$\sum[k = 1][2^n d] \ul{\frac{1}{2^n}} = \sum[k = 1][2^{n_0} d] \; \sum[i = 1][2^{n - n_0}] \frac{\ul{\frac{1}{2^{n_0}}}}{2^{n - n_0}} = \sum[k = 1][2^{n_0} d] \ul{\frac{1}{2^{n_0}}} = \ul d.$$ 

\end{enumerate} 
\end{proof} 






\begin{lem} \label{l is nice} 

Let $a \in A$ and $d \in \dcal$. \begin{enumerate} 

\item $2j_\ast(a) = \ul 1$, $\frac{a}{2} \leq a$, $j(a) \leq a$, $l(a) \preceq a$. 

\item For all $n \geq 0$, ${j_\ast}^n(a) \+ \ul{\frac{1}{2^n}} \geq \ul 1$.

\item $\alpha$ is invertible, \blue{of inverse $\beta$}, and, for all $n \in \nn$, $\alpha^n(a) = \bigvee[k = 0][n] \frac{j^k(a)}{2^{n-k}}$ and $\alpha^{-n}(a) = \bigwedge[k = 0][n] j_\ast^{n-k}(2^k a)$. Hence $\alpha(a) \leq a$ and $\alpha^{-1}(a) \geq a$. 

\item Let $c \in A$. \blue{$\_ \+ c$, $2 \cdot$, $\frac{\cdot}{2}$, $j_\ast$, $j$, $\alpha$ and $\alpha^{-1}$ are non-decreasing for $\preceq$. $\_ \- c$ is non-decreasing and $c \- \_$ is non-increasing for $\preceq$.} 

Moreover, for all $a \preceq b$, $l(a) \leq l(b)$. 

\item For all $n \in \nn$, $\alpha^n(a) \leq j(a) \+ \ul{\frac{1}{2^n}}$ and $2a \leq \alpha^{-n}(a) \+ \ul{\frac{1}{2^n}}$. 

\item \label{simeq equivalences} $j(a) \simeq a \Leftrightarrow \alpha(a) \simeq a \Leftrightarrow \alpha^{-1}(a) \simeq a \Leftrightarrow 2a \simeq a$. 

\item \label{fixed point} \blue{$l(a)$ is a fixed point of $2\cdot, \, \alpha \et \alpha^{-1}$, so $l(a) \in \lcal(A)$ and $l(l(a)) = l(a)$.} 

\item \blue{$\bigwedge[n \in \nn] \alpha^n(a) \leq l(a) \preceq \bigwedge[n \in \nn] j^n(a)$, so $l(a) \simeq \bigwedge[n \in \nn] \alpha^n(a) \simeq \bigwedge[n \in \nn] j^n(a)$.} 

\end{enumerate} 

\end{lem} 

\begin{proof} 

Let $a \in A$ and $d \in \dcal$. \begin{enumerate} 

\item $2j_\ast(a) \geq 2j_\ast(\ul 0) \ust{\geq}{\ref{2j*(0) >= 1}} \ul 1$. 

$a \leq 2a$ by \ref{2 >=}, so, by \ref{(2v)/2<=v}, $\frac{a}{2} \leq a$. 

$a \leq j_\ast(a)$ by \ref{j* >=}, so, by \ref{jj*v<=v}, $j(a) \leq a$. 

For all $n \in \nn$ and $e \in \lcal(A)$ such that $e \preceq a$, $e \leq a \+ \ul{\frac{1}{2^n}}$, so $l(a) \leq a \+ \ul{\frac{1}{2^n}}$, i.e. $l(a) \preceq a$. 

\item We notice that, for all $n \in \nn$, $$2\sum[k = 1][n] \ul{\frac{1}{2^{k+1}}} \leq \sum[k = 1][n] \ul{\frac{1}{2^{k+1}}} \+ \sum[k = 1][n] \ul{\frac{1}{2^{k+1}}} = \sum[k = 1][n] \ul{\frac{1}{2^{k+1}}} \+ \ul{\frac{1}{2^{k+1}}} 
= \sum[k = 1][n] \ul{\frac{1}{2^k}}.$$ 

It suffices to prove, by induction on $n \geq 0$, that, for all $n \in \nn$, ${j_\ast}^n(\ul 0) \geq \sum[k = 1][n] \ul{\frac{1}{2^k}}$ and $\sum[k = 1][n] \ul{\frac{1}{2^{k}}} \+ \ul{\frac{1}{2^n}} = \ul 1$. 

${j_\ast}^0(\ul 0) \geq \ul 0 = \sum[k = 1][0] \ul{\frac{1}{2^k}}$ and $\sum[k = 1][0] \ul{\frac{1}{2^{k}}} \+ \ul{\frac{1}{2^0}} = \ul 1$. 

For all $n \geq 0$, such that ${j_\ast}^n(\ul 0) \geq \sum[k = 1][n] \ul{\frac{1}{2^k}}$ and $\sum[k = 1][n] \ul{\frac{1}{2^{k+1}}} \+ \ul{\frac{1}{2^n}} = \ul 1$, $${j_\ast}^{n+1}(\ul 0) \geq j_\ast\left(\sum[k = 1][n] \ul{\frac{1}{2^k}}\right) \geq j_\ast\left(2\sum[k = 1][n] \ul{\frac{1}{2^{k+1}}}\right) \geq \sum[k = 1][n] \ul{\frac{1}{2^{k+1}}} \+ \ul{\frac{1}{2}} = \sum[k = 1][n + 1] \ul{\frac{1}{2^k}}$$ and $$\sum[k = 1][n + 1] \ul{\frac{1}{2^k}} \+ \ul{\frac{1}{2^{n+1}}} = \sum[k = 1][n] \ul{\frac{1}{2^k}} \+ \ul{\frac{1}{2^{n+1}}} \+ \ul{\frac{1}{2^{n+1}}} \geq \sum[k = 1][n] \ul{\frac{1}{2^k}} + 2 \ul{\frac{1}{2^{n+1}}} \geq  \sum[k = 1][n] \ul{\frac{1}{2^k}} \+ \ul{\frac{1}{2^n}} \geq \ul 1.$$ 

Hence, for all $n \in \nn$, ${j_\ast}^n(a) \+ \ul{\frac{1}{2^n}}\geq j_\ast(\ul 0) \+ \ul{\frac{1}{2^n}} \geq \sum[k = 1][n] \ul{\frac{1}{2^k}} \+ \ul{\frac{1}{2^n}} = \ul 1$. 

\item By induction on $n \geq 0$. 

By \ref{alpha beta v >= v}  to \ref{beta alpha v >= v} , we already know that $\alpha$ is invertible and $\alpha^{-1}(a) = 2a \wedge j_\ast(a) \blue{= \beta(a)}$. 

Let $n \in \nn$ such that $\alpha^{-n}(a) = \bigwedge[k = 0][n] j_\ast^{n-k}(2^k a)$. 

\begin{align*} \alpha^{-(n+1)}(a) &= \alpha^{-1}\left(\bigwedge[k = 0][n] j_\ast^{n-k}(2^k a)\right) = \bigwedge[k = 0][n] 2 j_\ast^{n-k}(2^k a) \wedge \bigwedge[k = 0][n] j_\ast^{n+1-k}(2^k a) 
\\ 
&= 2^{n+1}(a) \wedge \ul 1 \wedge \bigwedge[k = 0][n] j_\ast^{n+1-k}(2^k a) = \bigwedge[k = 0][n + 1] j_\ast^{n+1-k}(2^k a).\end{align*} 

Hence, for all $n \in \nn$, $\alpha^{-n}(a) = \bigwedge[k = 0][n] j_\ast^{n-k}(2^k a)$. 

Thus, for all $n \in \nn$, and $c \in A$, \begin{align*} \alpha^n(a) \leq c &\Leftrightarrow a \leq \alpha^{-n}(c) 
\\ 
&\Leftrightarrow \forall 0 \leq k \leq n \; a \leq j_\ast^{n-k}(2^k c) 
\\ 
&\ust{\Leftrightarrow}{\ref{(2v)/2<=v} - \ref{v<=j*jv}} \forall 0 \leq k \leq n \; \frac{j^{n-k}(a)}{2^k} \leq c 
\\ 
&\Leftrightarrow \bigvee[k = 0][n] \frac{j^{n-k}(a)}{2^k} \leq c,\end{align*} 

so $\alpha^n(a) = \bigvee[k = 0][n] \frac{j^{n-k}(a)}{2^k}$. 

\item Let $a \et b \in A$ and $n \in \nn$ such that $a \leq b \+ \ul{\frac{1}{2^n}}$. 

By \ref{2(+) <= 2 + 2}, $2a \leq 2 \left(b \+ \ul{\frac{1}{2^n}}\right) \leq 2b \+ 2\ul{\frac{1}{2^n}} = 2b \+ \ul{\frac{1}{2^{n-1}}}$. 

By \ref{def j* <=}, $j_\ast(a) \leq j_\ast\left(b \+ \ul{\frac{1}{2^{n+1}}}\right) \leq j_\ast(b) \+ \ul{\frac{1}{2^n}}$. 

\blue{\ref{+ 2 <= 2(+)} is equivalent to $\frac{v \+ u}{2} \leq \frac{v}{2} \+ \frac{u}{2}$, so $\frac{a}{2} \leq \frac{b \+ \ul{\frac{1}{2^n}}}{2} \leq \frac{b}{2} \+ \ul{\frac{1}{2^{n+1}}}$. 

\ref{def j* >=} is equivalent to $j(v \+ u) \leq 2v \+ j(u)$, so $j(a) \leq j\left(b \+ \ul{\frac{1}{2^{n+1}}}\right) \leq j(b) \+ \ul{\frac{1}{2^n}}$. 

$\alpha(a) \leq \alpha\left(b \+ \ul{\frac{1}{2^{n+1}}}\right) = \frac{b \+ \ul{\frac{1}{2^{n+1}}}}{2} \vee j(b \+ \ul{\frac{1}{2^{n+1}}}) \leq \left(\frac{b}{2} \+ \ul{\frac{1}{2^n}}\right) \vee \left(j(b) \+ \ul{\frac{1}{2^n}}\right) \leq \alpha(b) \+ \ul{\frac{1}{2^n}}$.} 

$\alpha^{-1}(a) = 2a \wedge j_\ast(a) \leq \left(2b \+ \ul{\frac{1}{2^n}}\right) \wedge \left(j_\ast(b) \+ \ul{\frac{1}{2^n}}\right) = 2b \wedge j_\ast(b) \+ \ul{\frac{1}{2^n}} = \alpha^{-1}(b) \+ \ul{\frac{1}{2^n}}$. 

\blue{$a \- c \leq \left(b \+ \ul{\frac{1}{2^n}}\right) \- c \leq (b \- c) \+ \ul{\frac{1}{2^n}}$. 

$c \- a \geq c \- \left(b \+ \ul{\frac{1}{2^n}}\right) = c \- b \- \ul{\frac{1}{2^n}}$.} 

For all $e \in \lcal$ such that $e \preceq b$, $e \preceq a$, so $l(b) \leq l(a)$. 

Finally, by commutativity and associativity of $\+$, for all $c \in A$, $b \+ c \leq (a \+ c) \+ \ul{\frac{1}{2^n}}$. 

\item Let $n \in \nn$. For all $1 \leq k \leq n$, by \ref{2 >=} and \ref{j* >=}, $j_\ast^{n-k}(2^k a) \geq 2^k a \geq 2a$, and $j_\ast^n(a) \+ \ul{\frac{1}{2^n}} \geq \ul 1 \geq 2a$, so $\alpha^{-n}(a) \+ \ul{\frac{1}{2^n}} \geq 2a$. For all $0 \leq k \leq n - 1$, by \ref{(2v)/2<=v}, \ref{jj*v<=v}, \ref{2 >=} and \ref{j* >=}, $\frac{j^{n-k}(a)}{2^k} \leq j^{n-k}(a) \leq j(a)$, and $\frac{a}{2^n} \leq \ul{\frac{1}{2^n}} \leq j(a) \+ \ul{\frac{1}{2^n}} $, so $\alpha^n(a) \leq j(a) \+ \ul{\frac{1}{2^n}}$. 

\item $\alpha(a) \leq a$, so, if $j(a) \simeq a$, then, for all $n \geq 0$, $\alpha(a) \+ \ul{\frac{1}{2^n}} = (\frac{a}{2} \vee j(a)) \+ \ul{\frac{1}{2^n}} \geq j(a) \+ \ul{\frac{1}{2^n}} \geq a$, so $\alpha(a) \simeq a$. 

$\alpha^{-1}(a) \geq a$, so if $2a \simeq a$, then, for all $n \in \nn$, $\alpha^{-1}(a) = 2a \wedge j_\ast(a) \leq 2a \leq a \+ \ul{\frac{1}{2^n}}$, so $\alpha^{-1}(a) \simeq a$. 

If $\alpha(a) \simeq a$, then, since $\alpha$ is continuous, for all $n \in \nn$, $a \simeq \alpha^n(a) \leq j(a) \+ \ul{\frac{1}{2^n}}$, so $a \preceq j(a)$, i.e. $a \simeq j(a)$. 

$\alpha^{-1}$ is continuous because $2\cdot$ and $j_\ast$ are, so, if $\alpha^{-1}(a) \simeq a$, then, for all $n \in \nn$, $a \+ \ul{\frac{1}{2^n}} \simeq \alpha^{-n}(a) \+ \ul{\frac{1}{2^n}} \geq 2a$, so $a \succeq 2a$, i.e. $a \simeq 2a$. 

Finally, $a \simeq \alpha(a) \Leftrightarrow \alpha^{-1}(a) \simeq \alpha(a)$, thanks to  of $\alpha$ and $\alpha^{-1}$. 

\item \color{blue} Let $e \in \lcal(A)$ such that $e \preceq a$. $e \leq l(a)$, so $\alpha(e) \leq \alpha(l(a))$. However, according to point 6, $e \simeq \alpha(e)$, so $e \preceq \alpha(l(a))$. Hence, for all $n \in \nn$, $l(a) \leq \alpha(l(a)) \+ \ul{\frac{1}{2^n}}$ and thus $l(a) \preceq \alpha(l(a))$. However, $\alpha(l(a)) \leq l(a)$, so $\alpha(l(a)) \simeq l(a)$. Hence $\beta(l(a)) \simeq l(a)$ and $2l(a) \simeq l(a)$. 

$2l(a) \in \lcal(A)$, so, since $2l(a) \simeq l(a) \preceq a$, $2l(a) \leq l(a)$, i.e. $2l(a) = l(a)$. In the same way, $\beta(l(a)) = l(a)$ and thus $\alpha(l(a)) = l(a)$. 

\color{black} 

\item \blue{From points \ref{simeq equivalences} and \ref{fixed point}, we can deduce that $j(l(a)) \simeq l(a)$. Hence, since $l(a) \preceq a$, for all $n \in \nn$, $l(a) = j^n(l(a)) \preceq j^n(a)$, so $l(a) \preceq \bigwedge[n \geq N] j^n(a)$. Since, for all $a \in A$, $j(a) \leq \frac{a}{2} \vee j(a) = \alpha(a)$, $\bigwedge[n \in \nn] j^n(a) \leq \bigwedge[n \in \nn] \alpha^n(a)$. Since $\alpha\left(\bigwedge[n \in \nn] \alpha^n(a)\right) = \bigwedge[n \in \nn] \alpha^{n+1} (a) = \bigwedge[n \in \nn] \alpha^n (a)$, and $\bigwedge[n \in \nn] \alpha^n (a) \leq a$, $\bigwedge[n \in \nn] \alpha^n (a) \leq l(a)$.} 


\end{enumerate} 
\end{proof} 

\begin{lem} \label{lemme 2} For all $a \et b \in A$ and $d \et d' \in \dcal$, 

\begin{enumerate} 

\item $\ul d \+ \ul{d'} = \ul{d \dotp d'}$. 

\item $2 (\ul d \+ l(a)) = \ul{2 d} \+ l(a)$ 

\item $j_\ast (\ul d \+ l(a)) = \ul{j_\ast(d)} \+ l(a)$ 

\item $\alpha (\ul d \+ l(a)) = \ul{\alpha(d)} \+ l(a)$ 

\item $l (\ul d \+ l(a)) = \ul{l(d)} \+ l(a)$ 

\item $a \leq \ul d \+ l(a \+ \ul{1 - d})$ and for all $n \in \nn$, $\bigwedge[k = 1][2^{n+1}] \ul{1 - \frac{k}{2^{n+1}}} \+ l\left(a \+ \ul{\frac{k}{2^{n+1}}}\right) \blue\preceq a \+ \ul{\frac{1}{2^n}}$ . 

\end{enumerate} 

\end{lem}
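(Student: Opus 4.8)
The six identities are exactly the defining axioms \ref{redef +} and \ref{redef 2}--\ref{def l}, together with \ref{v<=d}--\ref{v>=d}, now asserted with genuine equalities; the plan is to establish them by direct computation in the complete model $A$, using the structural facts already collected in \textbf{Lemma} \ref{l is nice} (in particular $2l(a) = l(a)$, the invertibility of $\alpha$ with $\alpha^{-1}(v) = 2v \wedge j_\ast(v)$, and $l(a) = \bigwedge[n \in \nn] \alpha^n(a)$) and the residuation axiom \ref{residuation}, which makes each map $\_ \+ v$ a right adjoint and hence infima-preserving. I would treat the statements in the order given, bootstrapping the later ones from the earlier.

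For (1) I would fix a common denominator $n$ with $2^n d, 2^n d' \in \nn$, so that by \textbf{Lemma} \ref{définition dyadiques} both $\ul d$ and $\ul{d'}$ are sums of $2^n d$ and $2^n d'$ copies of $\frac{\ul 1}{2^n}$; commutativity and associativity of $\+$ merge these into $2^n d + 2^n d'$ copies. If $d + d' \leq 1$ this is $\ul{d \dotp d'}$ by definition; if $d + d' > 1$ the first $2^n$ copies already sum to $\ul 1$, and since $\ul 0$ is the bottom we have $\ul 1 \+ x = \ul 1$ for every $x$, so the sum collapses to $\ul 1 = \ul{d \dotp d'}$. For (2) I would first record the purely dyadic fact $2 \ul d = \ul{2d}$: the inequality $2\ul d \leq \ul d \+ \ul d = \ul{2d}$ is \ref{2v <=} together with (1), and the reverse follows by writing $\frac{\ul 1}{2^n} = 2\frac{\ul 1}{2^{n+1}}$ (proved inside \textbf{Lemma} \ref{définition dyadiques}) and applying \ref{+ 2 <= 2(+)} repeatedly. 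Since $2l(a) = l(a)$, the two continuity axioms \ref{2(+) <= 2 + 2} and \ref{+ 2 <= 2(+)} then give $2(\ul d \+ l(a)) \leq \ul{2d} \+ l(a)$ and $\ul{2d} \+ l(a) \leq 2(\ul d \+ l(a))$ respectively.

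The crux is obtaining the exact equalities in (3) and (4), where a single continuity axiom yields only one inequality. I would sidestep this by carrying both back to (2). For (3), using (2) in the form $2(\ul{d/2} \+ l(a)) = \ul d \+ l(a)$ and the exact identity $j_\ast(2v) = v \+ \frac{\ul 1}{2}$ (the conjunction of \ref{def j* <=} and \ref{def j* >=}), I get $j_\ast(\ul d \+ l(a)) = (\ul{d/2} \+ l(a)) \+ \frac{\ul 1}{2} = \ul{j_\ast(d)} \+ l(a)$, the last step by (1) since $j_\ast(d) = d/2 \dotp \frac{1}{2}$. For (4), I would apply the bijection $\alpha$ of \textbf{Lemma} \ref{l is nice} to the target: computing $\alpha^{-1}(\ul{\alpha(d)} \+ l(a)) = 2(\ul{\alpha(d)} \+ l(a)) \wedge j_\ast(\ul{\alpha(d)} \+ l(a))$ and substituting (2) and (3) turns this into $(\ul{2\alpha(d)} \+ l(a)) \wedge (\ul{j_\ast(\alpha(d))} \+ l(a))$. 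The elementary observation $(\ul p \+ l(a)) \wedge (\ul q \+ l(a)) = \ul{\min[p,q]} \+ l(a)$ (one summand lies below the other) then collapses the meet to $\ul{\alpha^{-1}(\alpha(d))} \+ l(a) = \ul d \+ l(a)$, and applying $\alpha$ gives (4).

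Finally, (5) and (6) exploit $l(a) = \bigwedge[n \in \nn] \alpha^n(a)$ and the infima-preservation of $\_ \+ l(a)$. For (5), when $d = 1$ both sides are $\ul 1$ by the absorption used in (1); when $d < 1$, iterating (4) gives $\alpha^n(\ul d \+ l(a)) = \ul{\alpha^n(d)} \+ l(a)$, and since $\alpha^n(d) \downarrow 0$ in $[0,1]$ the meet over $n$ is $(\bigwedge[n \in \nn] \ul{\alpha^n(d)}) \+ l(a) = \ul 0 \+ l(a) = l(a) = \ul{l(d)} \+ l(a)$, using that $\ul\cdot$ preserves lower bounds. For (6), taking $\bigwedge[n \in \nn]$ of axiom \ref{v<=d} and pulling the infimum through $\ul d \+ \_$ yields $a \leq \ul d \+ l(a \+ \ul{1-d})$; the second inequality follows from axiom \ref{v>=d} together with $l(w) \leq \alpha^{n+1}(w)$ and monotonicity of $\+$ and $\bigwedge$. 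The main obstacle throughout is precisely the exact-equality directions of (3) and (4); the idea that makes them work is to reduce everything to the single clean identity (2) via the algebraic relations $j_\ast \circ 2 = \_ \+ \frac{\ul 1}{2}$ and $\alpha^{-1} = 2\,\cdot \wedge j_\ast$, after which only the trivial meet and absorption identities remain.
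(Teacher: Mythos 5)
Your proposal is correct and follows essentially the same route as the paper's own proof: the same reduction of (3) to (2) via the exact identity $j_\ast(2v) = v \+ \frac{\ul 1}{2}$, the same trick for (4) of computing $\alpha^{-1}(\ul{\alpha(d)} \+ l(a)) = \ul d \+ l(a)$ using $\alpha^{-1} = 2\cdot \wedge j_\ast$ and the meet-preservation of $\_ \+ l(a)$ coming from residuation, and the same use of $l = \bigwedge_n \alpha^n$ (\textbf{Lemma} \ref{l is nice}) for (5) and (6). Your only departures are presentational refinements the paper leaves implicit, such as handling the truncated case $d + d' > 1$ in (1) and splitting $d = 1$ from $d < 1$ in (5).
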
 

\begin{proof} Let $a \et b \in A$ and $d \et d' \in [0,1]$. 

\begin{enumerate} 

\item Let $n$ be a non negative integer such that $2^n d \et 2^n d' \in \nn$. 

$\ul d \+ \ul{d'} = \sum[k = 1][2^n d] \ul{\frac{1}{2^n}} + \sum[k = 1][2^n d'] \ul{\frac{1}{2^n}} = \sum[k = 1][2^n (d + d')] \ul{\frac{1}{2^n}} = \ul{d \+ d'}$ 

\item For all integer $n \geq 2$ such that $2^n d \in \nn$, $2 \ul{\frac{1}{2^n}} = 2 \frac{\ul{\frac{1}{2^{n-1}}}}{2} = \ul{\frac{1}{2^{n-1}}}$, and \linebreak $2 \ul{\frac{1}{2^n}} \leq \ul{\frac{1}{2^n}} \+ \ul{\frac{1}{2^n}} \leq \ul{\frac{1}{2^{n-1}}}$, so $2 \ul{\frac{1}{2^n}} = \ul{\frac{1}{2^{n-1}}}$. $2 \ul d = 2 \sum[k = 1][2^n d] \ul{\frac{1}{2^n}} = \sum[k = 1][2^n d] 2 \ul{\frac{1}{2^n}} = \sum[k = 1][2^{n-1} 2d] \ul{\frac{1}{2^{n-1}}} = \ul{2d}$. 

Thus, by \ref{+ 2 <= 2(+)}, \ref{2(+) <= 2 + 2} and \textbf{Lemma} \ref{l is nice} (7), $2(\ul d \+ l(a)) = 2 \ul d \+ 2 l(a) = \ul{2d} \+ l(a)$. 

\item $j_\ast (\ul d \+ l(a)) = j_\ast\left(2\left(\ul{\frac{d}{2}} \+ l(a)\right)\right) = \ul{\frac{d}{2}} \+ l(a) \+ \ul{\frac{1}{2}} = \ul{\frac{d}{2} \+ \frac{1}{2}} \+ l(a) = \ul{j_\ast(d)} \+ l(a)$. 

\item Since $\+$ admits a residual (cf. \ref{residuation}), $\+$ preserves lower bounds and so $$2(\ul d \+ l(a)) \wedge j_\ast(\ul d \+ l(a)) = (\ul{2d} \+ l(a)) \wedge (\ul{j_\ast(d)} \+ l(a)) = (\ul{2d} \wedge \ul{j_\ast(d)}) \+ l(a) = \ul{2d \wedge j_\ast(d)} \+ l(a).$$ 

Since, for all $x \in [0, 1]$, $2\alpha (x) \wedge j_\ast \circ \alpha(x) = x$, $$\alpha^{-1}(\ul{\alpha(d)} \+ l(a)) = 2(\ul{\alpha(d)} \+ l(a)) \wedge j_\ast(\ul{\alpha(d)} \+ l(a)) = \ul{2\alpha (d) \wedge j_\ast \circ \alpha(d)} \+ l(a) = \ul{d} \+ l(a),$$ so $\alpha (\ul d \+ l(a)) = \ul{\alpha(d)} \+ l(a)$. 

\item If $d = 1$, then $l(\ul d \+ l(a)) = \ul 1 = \ul{l(1)} \+ l(a)$. Assume that $q < 1$, ie $\bigwedge[n \in \nn] \alpha^n(d) = 0$. By \textbf{Lemma} \ref{l is nice} (8), $l(\ul d \+ l(a)) \simeq \bigwedge[n \in \nn] \alpha^n(\ul d \+ l(a)) \simeq \bigwedge[n \in \nn] \ul{\alpha^n(d)} \+ l(a)$ and thus, for all $k \in \nn$, $l(\ul d \+ l(a)) \leq \ul{\frac{1}{2^k}} \+ l(a) \leq \ul{\frac{1}{2^k}} \+ a$. Hence $l(\ul d \+ l(a)) \preceq a$ and thus $l(\ul d \+ l(a)) \leq l(a) = \ul{l(d)} \+ l(a)$. However, $\ul d \+ l(a) \geq l(a)$, so $l(\ul d \+ l(a)) \geq l(l(a)) = l(a) = \ul{l(d)} \+ l(a)$. 

\item By \ref{v<=d}, for all $n \in \nn$, $a \leq \ul d \+ \alpha^n (a \+ \ul{1 - d})$, so, by \textbf{Lemma} \ref{l is nice} (8), $a \leq \ul d \+ l(a \+ \ul{1 - d})$. 

Similarly, using \ref{v>=d}, for all $n \in \nn$, $$a \+ \ul{\frac{1}{2^n}} \geq \bigwedge[k = 1][2^{n+1} -1] \ul{1 - \frac{k}{2^{n+1}}} \+ \alpha^{n+1} \left(a \+ \ul{\frac{k}{2^{n+1}}}\right) \blue\succeq \bigwedge[k = 1][2^{n+1} -1] \ul{1 - \frac{k}{2^{n+1}}} \+ l\left(a \+ \ul{\frac{k}{2^{n+1}}}\right).$$ 

\end{enumerate} 
\end{proof} 

\begin{lem} \label{lemme 3} 

\color{blue} Let $\fct[l^\ast][A, A]{a, \bigvee[n \in \nn] \alpha^{-n}(a)}$. For all $a \et b \in A$, $l^\ast(a) \preceq b \Leftrightarrow a \preceq l(b)$, and $l(l(a) \+ l(b)) = l(a) \+ l(b)$. \color{black} 

\end{lem} 

\begin{proof} 

\blue{Let $a \et b \in A$. $$l^\ast(a) \preceq b \Leftrightarrow \forall n \in \nn \; \alpha^{-n}(a) \preceq b \Leftrightarrow \forall n \in \nn \; a \preceq \alpha^n(b) \Leftrightarrow a \preceq l(b)$$} For all $a \et b \in A$, $2(l(a) \+ l(b)) = 2l(a) + 2l(b) = l(a) \+ l(b)$, so $l(a) \+ l(b) \in \lcal(A)$, so\blue{, since $l(a) \+ l(b) \preceq l(a) \+ l(b)$,} $l(l(a) \+ l(b)) \geq l(a) \+ l(b)$. However, $l(b) \+ \bigwedge[n \in \nn] \ul{\frac{1}{2^n}} \preceq l(b)$ and \linebreak $l(b) \+ \bigwedge[n \in \nn] \ul{\frac{1}{2^n}} \in \lcal(A)$, so $l(b) \+ \bigwedge[n \in \nn] \ul{\frac{1}{2^n}}  \leq l(b)$. Hence $$l(l(a) \+ l(b)) \leq l(a) \+ l(b) \+ \bigwedge[n \in \nn] \ul{\frac{1}{2^n}} \leq l(a) \+ l(b).$$ 
\end{proof} 

\begin{proof}[Proof of \textbf{Theorem} \ref{Complétude}] 

According to \textbf{Lemmas} \ref{l is nice}, \ref{lemme 2} and \ref{lemme 3}, since for all \linebreak $a \et b \in A$, $a \leq b \Rightarrow a \preceq b$, $\acal = (A,\, \preceq,\, (\ul d)_{d \in \dcal},\, \+,\, \-,\, 2,\, \frac{\cdot}{2},\, j_\ast,\, j,\, \alpha,\, \beta,\, l, l^\ast)$ is a \linebreak model of $\T_1$. Moreover, for all $a \et b \in A$ such that for all $n \in \nn$ $a \preceq b \+ \ul{\frac{1}{2^n}}$, for all $n \in \nn$, $a \leq b \+ \ul{\frac{1}{2^{n+1}}} \+ \ul{\frac{1}{2^{n+1}}} = b \+ \ul{\frac{1}{2^n}}$, so $a \preceq b$. Thus, $\acal$ is an Archimedean model of $\T_1$. Finally, since $\leq$ is a complete order on $A$, we can consider $\bigwedge\colon \{E \subset A\} \rightarrow A$. For all $E \subset A$ and $b \in A$, $b \preceq \bigwedge E \Leftrightarrow \forall n \in \nn \; b \leq \bigwedge E \+ \ul{\frac{1}{2^n}} \Leftrightarrow \forall n \in \nn \; \forall a \in E \; b \leq a \+ \ul{\frac{1}{2^n}} \Leftrightarrow \forall a \in E \; b \preceq a$, so $\acal$ is a complete model of $\T_1$. Hence, according to \textbf{Corollary} \ref{The corollaire}, there exists a residuated commutative complete lattice $\lcal$ such that $\quotient{\acal}{\simeq}$ is $\mathcal{L}_1$-isomorphic to $\USC$. \color{blue} Let us denote by $i$ the isomorphism. We can show that $i$ preserves the binary lower bounds in the same way as we show it preserves the binary upper bounds. For all $a$ $b \et c \in A$, \begin{align*} i(a \vee b) \leq i(c) &\Leftrightarrow a \vee b \preceq c 
	\\
	&\Leftrightarrow \forall n \in \nn\; a \vee b \leq c \+ \ul{\frac{1}{2^n}} 
	\\ 
	&\Leftrightarrow \forall n \in \nn\; a \leq c \+ \ul{\frac{1}{2^n}} \et b \leq c \+ \ul{\frac{1}{2^n}} 
	\\ 
	&\Leftrightarrow i(a) \leq i(c) \et i(b) \leq i(c) 
	\\ 
	&\Leftrightarrow i(a) \vee i(b) \leq i(c). \end{align*} 
	
Hence the $\quotient{A}{\simeq}$ is $\mathcal{L}$-isomorphic to $\USC$. \color{black} 
\end{proof} 

{\remark Since, for all \crcl $\lcal$, $\ul{\frac{1}{2}} = \frac{\ul 1}{2}$, $\ul{\frac{1}{2}} \preceq \frac{\ul 1}{2}$ is a consequence of $\T$.} 

\subsubsection{General models of $\T$} 

Now that we have proven any \blue{complete model of $\T$ to be, up to a quotient,} isomorphic to an AC-algebra, we prove that general models of $\T$ are, up to a quotient, embeddable into AC-algebras (\bf{Theorem} \ref{big theorem}). 

\begin{nota} \label{preceq} 

$u \preceq v \Leftrightarrow \forall n \in \nn \; u \leq v \+ \ul{\frac{1}{2^n}}$ and $u \simeq v \Leftrightarrow u \preceq v \et v \preceq u$. 

\end{nota} 

To complete models of $\T$, we use the Macneille completion of an order whose construction was first given in \cite[Definition 11.4]{macneillePartiallyOrderedSets1937}. 

\begin{defi}[{\cite{banaschewskiHullensystemeUndErweiterung1956}, cf. also \cite{banaschewskiCategoricalCharacterizationMacNeille1967}, and \cite[Theorem 7041]{daveyIntroductionLatticesOrder2002}}] 

The Macneille completion of an ordered set $(X, \leq)$ is a complete ordered set $(\bar X, \leq)$ together with an non-decreasing function \linebreak $\phi\colon X \rightarrow \bar X$ such that for every $x \in \bar X$, there exist $(y_i)_{i \in I} \in X^I$ and $(z_j)_{j \in J} \in X^J$ such that $\bigvee[i \in I] \phi(y_i) = x = \bigwedge[j \in J] \phi(z_j)$. 

\end{defi} 

The aim of this section is to prove the \bf{Theorem} \ref{big theorem}. What we have to prove is that the $\lcal$-structure can be extended to the Macneille completion and that it still satisfies the axioms of $\T$. 

First, a bit of preliminaries about the Macneille completion. 

\begin{lem} \label{opp Macneille} 

The Macneille completion of the opposite of an ordered set is the opposite of its Macneille completion. 

\end{lem}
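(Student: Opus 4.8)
The plan is to exploit the fact that the defining property of the Macneille completion recalled just above is self-dual: it asks for a \emph{complete} order equipped with a non-decreasing map $\phi$ under which every element is simultaneously a supremum and an infimum of images of points of the original set. Interchanging the roles of suprema and infima, which is exactly what passing to the opposite order does, leaves this property unchanged. So I would not reprove existence, but rather take the Macneille completion $(\bar X, \leq, \phi)$ of $(X, \leq)$ as given and check that $\big((\bar X)^{op}, \phi\big)$ satisfies the defining property for $X^{op}$, then conclude by uniqueness.

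First I would record the two order-theoretic generalities that make this work. An ordered set is complete if and only if its opposite is, because suprema computed in $\bar X$ are precisely infima computed in $(\bar X)^{op}$ and vice versa; hence $(\bar X)^{op}$ is again a complete order. Moreover the same underlying function $\phi$, now read as a map $X^{op} \to (\bar X)^{op}$, is still non-decreasing, since reversing the order on the source and on the target simultaneously preserves monotonicity. Thus the pair $\big((\bar X)^{op}, \phi\big)$ already supplies a complete order with a monotone map from $X^{op}$.

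Next I would verify the density clause. Fix $x \in \bar X$; by hypothesis there are families $(y_i)_{i \in I}$ and $(z_j)_{j \in J}$ in $X$ with $\bigvee[i \in I] \phi(y_i) = x = \bigwedge[j \in J] \phi(z_j)$, the supremum and infimum taken in $\bar X$. Reading these equalities in $(\bar X)^{op}$, where suprema and infima are exchanged, they become $\bigwedge[i \in I] \phi(y_i) = x = \bigvee[j \in J] \phi(z_j)$. So, with the two families simply swapping their roles, every element of $(\bar X)^{op}$ is again both a supremum and an infimum of elements of $\phi(X^{op})$. This is exactly the defining property of the Macneille completion of $X^{op}$.

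Finally, since the Macneille completion is determined up to ($\phi$-compatible) isomorphism by the characterization cited in the definition, I would conclude that $\overline{X^{op}} = (\bar X)^{op}$, as claimed. There is really no substantial obstacle here; the nearest thing to one is purely a matter of bookkeeping, namely making explicit that the stated density/completeness property pins the completion down up to isomorphism, so that the self-duality verified above is enough to identify $\overline{X^{op}}$ with $(\bar X)^{op}$ rather than merely to exhibit one more model of the same property.
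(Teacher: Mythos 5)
Your argument is correct, and in fact the paper offers no proof of this lemma at all: it is stated bare, with the duality evidently regarded as immediate from the self-duality of the defining property, which is precisely the observation your proof makes explicit. So your proposal supplies the missing details rather than diverging from an existing argument: completeness of an order is a self-dual property, the same map $\phi$ read as $X^{op} \rightarrow (\bar X)^{op}$ is still monotone, and the two density families $(y_i)_{i \in I}$ and $(z_j)_{j \in J}$ simply exchange roles when suprema and infima are interchanged.

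One point deserves care, and you half-identify it yourself in your closing remark. The paper's definition, as literally stated, asks only that $\phi$ be \emph{non-decreasing}, and under that reading the defining property does \emph{not} determine the completion up to isomorphism: for $X$ a two-element antichain, the constant map to a one-point complete order satisfies every clause of the stated definition, so "the" Macneille completion would not be well defined and your final appeal to uniqueness would have nothing to invoke. The uniqueness you need is the standard fact — from the Banaschewski--Bruns characterization the paper cites — which requires $\phi$ to be an order-\emph{embedding}, not merely monotone. Once that (clearly intended) hypothesis is restored, your argument goes through verbatim, since the embedding condition is itself self-dual: $\phi$ is an embedding of $X$ into $\bar X$ if and only if it is an embedding of $X^{op}$ into $(\bar X)^{op}$. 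With that one clause made explicit, your proof is complete.
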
 

The binary case of the following lemma is proven in \cite[Proposition 3.17]{theunissenMacNeilleCompletionsLattice2007}. 

\begin{lem}  \label{Macneille} 

Let $n \in \nn$, $X , X_1 , \ldots , X_n$ be ordered sets and $f\colon \prod[i = 1][n] X_i \rightarrow X$. \begin{enumerate} 

\item There exists $\bar f\colon \prod[i = 1][n] \ol X_i \rightarrow \ol X$ preserving upper bounds in each coordinate such that \linebreak $\ol f \circ (\phi , \ldots , \phi) = \phi \circ f$ if and only if there exist $g_1 , \ldots ,g_n\colon \prod[i = 1][n] X_i \rightarrow \ol X$ such that, for all \linebreak \adjustbox{width = 396pt}{$1 \leq i \leq n$, for all $x \in \prod[i = 1][n] X_i$ and $y \in X$ $f(x) \leq y \Leftrightarrow \phi(x_i) \leq g_i(x_1 , \ldots , x_{i-1} , y , x_{i+1} , \ldots , x_n)$.} 

Moreover, if it exists, $\bar f$ is unique. 

\item There exists $\bar f\colon \prod[i = 1][n] \ol X_i \rightarrow \ol X$ preserving lower bounds in each coordinate such that \linebreak $\ol f \circ (\phi , \ldots , \phi) = \phi \circ f$ if and only if there exist $g_1 , \ldots ,g_n\colon \prod[i = 1][n] X_i \rightarrow \ol X$ such that, for all \linebreak \adjustbox{width = 396pt}{$1 \leq i \leq n$, for all $x \in \prod[i = 1][n] X_i$ and $y \in X$ $f(x) \geq y \Leftrightarrow \phi(x_i) \geq g_i(x_1 , \ldots , x_{i-1} , y , x_{i+1} , \ldots , x_n)$.} 

Moreover, if it exists, $\bar f$ is unique. 

\end{enumerate} 

\end{lem}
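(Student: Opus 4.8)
The plan is to prove statement (1) in full and then deduce statement (2) from it by duality: applying (1) to the opposite orders $X_1^{op},\ldots,X_n^{op},X^{op}$ and invoking \textbf{Lemma} \ref{opp Macneille}, under which suprema become infima and $\leq$ becomes $\geq$, turns the upper-bound assertion into the lower-bound one. Throughout I would use three standard features of the MacNeille embedding $\phi\colon X \to \ol X$ recorded in the definition above: it is an order-embedding, so $\phi(a) \leq \phi(b) \Leftrightarrow a \leq b$; it is join-dense; and it is meet-dense, i.e. every $w \in \ol X$ satisfies $w = \bigvee\{\phi(a) : \phi(a) \leq w\} = \bigwedge\{\phi(y) : w \leq \phi(y)\}$. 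This bi-density is exactly what drives both directions.

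For the forward implication, suppose $\bar f$ exists. Fix a coordinate $i$ and a tuple $a_{/i}=(a_j)_{j\neq i}$. The map $u \mapsto \bar f(\phi(a_1),\ldots,u,\ldots,\phi(a_n))$ from $\ol{X_i}$ to $\ol X$ preserves joins, and since both lattices are complete it has a right adjoint $\rho_{a_{/i}}\colon \ol X \to \ol{X_i}$. I would then set $g_i(x_{/i},y):=\rho_{x_{/i}}(\phi(y))$ and verify the equivalence directly, using that $\phi$ is an order-embedding and that $\bar f\circ(\phi,\ldots,\phi)=\phi\circ f$:
\[
f(x)\leq y \;\Leftrightarrow\; \phi(f(x))\leq \phi(y) \;\Leftrightarrow\; \bar f(\phi(x))\leq \phi(y) \;\Leftrightarrow\; \phi(x_i)\leq \rho_{x_{/i}}(\phi(y)) = g_i(x_{/i},y).
\]

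For the converse I would first extract monotonicity of $f$ in each coordinate from the $g_i$-equivalence: if $x_i\leq x_i'$, set $y=f(x_{/i},x_i')$, so $\phi(x_i)\leq\phi(x_i')\leq g_i(x_{/i},y)$ forces $f(x_{/i},x_i)\leq y$. The only candidate for a coordinatewise join-preserving extension compatible with join-density is
\[
\bar f(u_1,\ldots,u_n) = \bigvee\{\phi(f(a)) : a\in\textstyle\prod_{j} X_j,\ \phi(a_j)\leq u_j \text{ for all } j\},
\]
which simultaneously yields uniqueness; and $\bar f\circ(\phi,\ldots,\phi)=\phi\circ f$ holds because, by monotonicity, the defining set for $\bar f(\phi(x))$ has maximum $\phi(f(x))$. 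The substantive step — which I expect to be the main obstacle — is checking that $\bar f$ preserves joins in each coordinate $i$. Writing $u_i=\bigvee_k u_i^{(k)}$ and $w=\bigvee_k \bar f(u_{/i},u_i^{(k)})$, by meet-density it suffices to show $f(a)\leq y$ for every $a$ with $\phi(a_j)\leq u_j$ and every $y$ with $w\leq\phi(y)$. For such $y$ and each $k$, testing against all $b_i$ with $\phi(b_i)\leq u_i^{(k)}$ gives $\phi(f(a_{/i},b_i))\leq \bar f(u_{/i},u_i^{(k)})\leq w\leq\phi(y)$, hence $f(a_{/i},b_i)\leq y$, i.e. $\phi(b_i)\leq g_i(a_{/i},y)$; passing to the supremum by join-density yields $u_i^{(k)}\leq g_i(a_{/i},y)$, so $\phi(a_i)\leq u_i\leq g_i(a_{/i},y)$, and the $g_i$-equivalence delivers $f(a)\leq y$. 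This gives the nontrivial inequality $\bar f(u_{/i},u_i)\leq w$, the reverse being monotonicity, completing join-preservation and hence the proof.
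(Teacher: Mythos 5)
Your proof is correct. There is, however, nothing in the paper to compare it against: the paper states this lemma without proof, remarking only that the binary case is \cite[Proposition 3.17]{theunissenMacNeilleCompletionsLattice2007}. Your argument is in substance the standard one behind that citation, generalized to $n$ coordinates: the condition on the $g_i$ says exactly that $f$ is coordinatewise residuated into the completion, your formula $\bar f(u_1,\ldots,u_n)=\bigvee\{\phi(f(a)) : \phi(a_j)\leq u_j \text{ for all } j\}$ is the lower MacNeille extension, join- and meet-density of $\phi$ do all the work exactly as you deploy them, and deducing part (2) by dualizing through \textbf{Lemma} \ref{opp Macneille} is precisely how the paper intends the two lemmas to combine. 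Three details you handled well (two of them silently): you corrected the statement's typo, since $g_i$ must take values in $\ol{X_i}$ rather than $\ol X$ for $\phi(x_i)\leq g_i(\ldots)$ to typecheck; your step $f(x)\leq y\Leftrightarrow\phi(f(x))\leq\phi(y)$ uses that $\phi$ is an order-embedding, which the paper's stated definition of the MacNeille completion (non-decreasing plus bi-dense) does not literally guarantee but which is standard, clearly intended, and genuinely needed here; and your reduction via meet-density to testing against all $y$ with $w\leq\phi(y)$ makes the join-preservation argument cover the empty family as well, since $\phi(a_i)\leq\bot$ forces, through the $g_i$-equivalence, $f(a)\leq y$ for every $y\in X$ and hence $\phi(f(a))\leq\bot$ by meet-density of $\ol X$.
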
 

Let now $A$ be a model of $\T$ and denote by $\phi\colon A \rightarrow \bar A$ the canonical morphism from $A$ to its Macneille completion. 

\begin{defi}[\citup{theunissenMacNeilleCompletionsLattice2007}] 

On $\bar A$, we define $\vee$ and $\wedge$ to be the binary upper and lower bounds, $\ul 0$ to be the bottom element and $\ul 1$ to be the top one. Since $\+$, multiplication by $2$, $j_\ast$, and $\alpha$ admit residual and adjoints, thanks to \textbf{Lemmas} \ref{Macneille} and \ref{opp Macneille}, we can define the lower bound preserving functions $\+$, $\-$, $2 \cdot$, $j_\ast$, $\alpha$ and $\beta$ on $\ol A$ as follows: 

for all $(a_i)_{i \in I} \in A^I$ and $(b_j)_{j \in J} \in A^J$, \begin{enumerate} 

\item $\bigwedge[i \in I] \phi(a_i) \+ \bigwedge[j \in J] \phi(b_j) = \bigwedge[(i,j) \in I \times J] \phi(a_i \+ b_j)$ 
\item $\bigvee[i \in I] \phi(a_i) \- \bigwedge[j \in J] \phi(b_j) = \bigvee[(i,j) \in I \times J] \phi(a_i \- b_j)$ 
\item $2 \bigwedge[i \in I] \phi(a_i) = \bigwedge[i \in I] \phi(2 a_i)$ 
\item $j_\ast\left(\bigwedge[i \in I] \phi(a_i)\right) = \bigwedge[i \in I] \phi(j_\ast(a_i))$ 
\item $\alpha\left(\bigwedge[i \in I] \phi(a_i)\right) = \bigwedge[i \in I] \phi(\alpha(a_i))$ 

\end{enumerate} 

We define $\frac{\cdot}{2}$ and $j$ as the respective adjoints of $2$ and $j_\ast$. 

\end{defi} 

\begin{lem} 

$\phi$ is a morphism of $\mathcal{L}$-structures. 

\end{lem}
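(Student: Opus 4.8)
The plan is to check that $\phi$ commutes with each symbol of $\mathcal{L}=\{\vee,\wedge,\+,\-,\frac{\cdot}{2},2,j_\ast,j,\alpha,\ul 0,\ul 1\}$, grouping the symbols according to the way the corresponding operation on $\ol A$ was produced. Throughout I would rely on the two defining features of the Macneille completion: $\phi$ is an order-embedding (so $a\leq b\iff\phi(a)\leq\phi(b)$), and $\phi(A)$ is join- and meet-dense in $\ol A$ while $\phi$ preserves every supremum and infimum already existing in $A$. In particular every $x\in\ol A$ equals $\bigwedge\{\phi(z)\mid z\in A,\ \phi(z)\geq x\}$, a fact used decisively in the last step.

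For the lattice symbols and the constants nothing is needed beyond these generalities: $\vee$ and $\wedge$ on $\ol A$ are the binary join and meet, which already exist in the lattice $A$, so $\phi(a\vee b)=\phi(a)\vee\phi(b)$ and $\phi(a\wedge b)=\phi(a)\wedge\phi(b)$; and since $\ul 0$ and $\ul 1$ are interpreted on both $A$ and $\ol A$ as the bottom and the top, $\phi$ preserves them as the empty join and the empty meet.

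The operations $\+$, $\-$, $2\cdot$, $j_\ast$ and $\alpha$ were defined on $\ol A$ by explicit formulas on representations of elements as meets (or joins) of $\phi$-images, these formulas being well defined by \textbf{Lemma}~\ref{Macneille}. For each it suffices to instantiate the defining formula at one-element index sets: writing $\phi(a)$ and $\phi(b)$ as the meet (resp. join) of a singleton family immediately yields $\phi(a)\+\phi(b)=\phi(a\+ b)$, $\phi(a)\-\phi(b)=\phi(a\- b)$, $2\phi(a)=\phi(2a)$, $j_\ast(\phi(a))=\phi(j_\ast(a))$ and $\alpha(\phi(a))=\phi(\alpha(a))$.

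The only genuine work, and the expected main obstacle, concerns $\frac{\cdot}{2}$ and $j$, defined not by a formula but as the left adjoints of $2\cdot$ and $j_\ast$ on $\ol A$, so that preservation cannot be read off directly. The key is that the same adjunctions hold in $A$, by axioms \ref{(2v)/2<=v}, \ref{v<=2(v/2)} (for $\frac{\cdot}{2}\dashv 2$) and \ref{jj*v<=v}, \ref{v<=j*jv} (for $j\dashv j_\ast$). Fixing $a\in A$, I would prove $\phi(\frac{a}{2})=\frac{\phi(a)}{2}$ by showing the two sides lie below exactly the same images $\phi(z)$, through
$$\tfrac{\phi(a)}{2}\leq\phi(z)\iff\phi(a)\leq 2\phi(z)=\phi(2z)\iff a\leq 2z\iff\tfrac{a}{2}\leq z\iff\phi(\tfrac{a}{2})\leq\phi(z),$$
where the steps use, in order, the adjunction in $\ol A$, the preservation of $2\cdot$ just established, the fact that $\phi$ reflects order, and the adjunction in $A$. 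Since every element of $\ol A$ is the meet of the images $\phi(z)$ lying above it, two elements bounded above by the same such images coincide, giving the desired equality; replacing $(\frac{\cdot}{2},2)$ by $(j,j_\ast)$ yields $\phi(j(a))=j(\phi(a))$ verbatim, completing the verification that $\phi$ is an $\mathcal{L}$-morphism.
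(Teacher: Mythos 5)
Your proposal is correct and takes essentially the same route as the paper: the paper also treats preservation of $\vee,\wedge,\+,\-,2,j_\ast,\alpha$ and the constants as immediate from the construction (your singleton-instantiation remark) and concentrates on $\frac{\cdot}{2}$ and $j$, which it handles by exactly your chain of equivalences---adjunction in $\bar A$, the already-established $2\phi(c)=\phi(2c)$, order reflection of $\phi$, adjunction in $A$---combined with meet-density of $\phi(A)$ in $\bar A$ to conclude equality. The only cosmetic difference is that the paper runs the equivalences against an arbitrary $b\in\bar A$ rewritten as a meet of images, while you test against the images $\phi(z)$ directly and invoke density at the end; the content is identical.
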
 

\begin{proof} 

It suffices to prove that, $\phi$ is a morphism for $\frac{\cdot}{2}$ and $j$. Both proofs follow the same steps, so we just prove that $\phi$ preserves $\frac{\cdot}{2}$. 

For all $a \in A$ and $b \in \bar A$, \begin{align*} \phi\left(\frac{a}{2}\right) \leq b &\Leftrightarrow \phi\left(\frac{a}{2}\right) \leq \bigwedge[\ust{c \in A}{\phi(c) \leq b}] \phi(c) \\ 
&\Leftrightarrow \forall c \in A \tq \phi(c) \leq b, \; \phi\left(\frac{a}{2}\right) \leq \phi(c) \\ 
&\Leftrightarrow \forall c \in A \tq \phi(c) \leq b, \; \frac{a}{2} \leq c \\ 
&\Leftrightarrow \forall c \in A \tq \phi(c) \leq b, \; a \leq 2 c \\ 
&\Leftrightarrow \forall c \in A \tq \phi(c) \leq b, \; \phi(a) \leq 2 \phi(c) \\ 
&\Leftrightarrow \phi(a) \leq \bigwedge[\ust{c \in A}{\phi(c) \leq b}] 2 \phi(c) \\ 
&\Leftrightarrow \phi(a) \leq 2 \bigwedge[\ust{c \in A}{\phi(c) \leq b}] \phi(c) \\ 
&\Leftrightarrow \frac{\phi(a)}{2} \leq b. \end{align*}

\end{proof} 

{\remark This also proves that the upper bounds preserving operations induced by $j$ and $\frac{\cdot}{2}$ on $\bar A$ are left adjoint to $j_\ast$ and $2\cdot$.} 

To conclude the proof of \bf{Theorem} \ref{big theorem}, we just have to prove next theorem. 

\begin{thm} \label{completion} 

$\bar A$ is a complete model of $\T$. 

\end{thm}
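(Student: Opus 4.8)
The plan is to verify that $\bar A$ satisfies every axiom of $\T$; the order-completeness demanded by the conclusion is free, since the Macneille completion is by construction a complete lattice, and for the same reason the bounded-lattice axiom \ref{bounded lattice} holds with $\vee,\wedge$ the binary bounds of $\bar A$ and the constants sent by $\phi$ to its extremal elements. The driving observation, already built into the defining formulas of the operations and into Lemma \ref{Macneille}, is that each of $\wedge$, $\+$, $2\cdot$, $j_\ast$, $\alpha$ and each constant commutes with arbitrary meets of $\phi$-images: applied to families $\bigwedge_i \phi(a_i)$, the $\bar A$-operation returns the meet of the $\phi$-images of the corresponding $A$-values. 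Because $\phi(A)$ is meet-dense in $\bar A$, it follows that any $\mathcal L$-term built solely from these operations and constants, evaluated at arbitrary arguments of $\bar A$, is a meet of $\phi$-images of the same term evaluated in $A$ at representing families.

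Inspecting the axiom list, every axiom of $\T$ except the residuation axiom \ref{residuation} and the two adjunction axioms \ref{(2v)/2<=v}--\ref{v<=2(v/2)} and \ref{jj*v<=v}--\ref{v<=j*jv} is written using only $\wedge,\+,2\cdot,j_\ast,\alpha$ and constants. This covers the monoid laws \ref{com monoid}, all the defining axioms of group~(4) (e.g. \ref{2v <=}, \ref{+ 2 <= 2(+)}, \ref{def j* >=}, \ref{2 >=}, \ref{j* >=}), the continuity axioms \ref{2(+) <= 2 + 2}--\ref{alpha(+) <= alpha + 2}, and the value axioms \ref{v<=d}--\ref{v>=d}. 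For each such axiom $s\le t$ (an equality being treated as two inequalities) I would write every variable as a meet $\bigwedge_i \phi(a_i)$, use the commutation observation to express both $s$ and $t$ as meets of $\phi$-images of the $A$-terms, and reduce to the inequality holding in $A$ for every choice of representatives; since $A\models\T$ it holds termwise, and $\phi(x)\le\phi(y)$ pointwise forces the corresponding meets to compare the same way in $\bar A$. Monotonicity (axiom \ref{croissance}) of $2\cdot,j_\ast,\alpha$ is then immediate from meet-preservation, and that of $\frac{\cdot}{2},j$ from their being adjoints.

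The adjunction axioms \ref{(2v)/2<=v}--\ref{v<=2(v/2)} and \ref{jj*v<=v}--\ref{v<=j*jv} hold by construction: $\frac{\cdot}{2}$ and $j$ were defined on $\bar A$ as the adjoints of $2\cdot$ and $j_\ast$, and the two displayed inequalities in each pair are exactly the unit and counit of the corresponding Galois connection. There remains the residuation axiom \ref{residuation}, $w\le u\+v\Leftrightarrow w\- v\le u$. For this I would use that $\bar A$ is complete and that $u\mapsto u\+v$ preserves arbitrary meets, so that this map automatically has a left adjoint; by Lemmas \ref{Macneille} and \ref{opp Macneille} that left adjoint is precisely the operation $\-$ already fixed on $\bar A$, and compatibility with $\phi$ is checked by applying the residuation of $A$ to representatives. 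This last step is the main obstacle: it is the only place where the two dual densities (meet-density for the argument reached through $\+$, join-density for the one reached through $\-$) have to be combined, and where the reverse-order convention and the precise directions of the adjunctions must be tracked carefully. Everything else reduces to the uniform meet-density transfer of the second paragraph, so assembling these verifications yields that $\bar A$ is a complete model of $\T$.
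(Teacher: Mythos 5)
Your proposal is correct and its skeleton coincides with the paper's: order-completeness and the bounded-lattice axioms come for free from the Macneille construction, the adjunction axioms \ref{(2v)/2<=v}--\ref{v<=2(v/2)} and \ref{jj*v<=v}--\ref{v<=j*jv} hold because $\frac{\cdot}{2}$ and $j$ are \emph{defined} as the adjoints of $2\cdot$ and $j_\ast$, and the remaining axioms are transferred by density. The two real differences are local. For residuation, the paper simply cites \cite[Proposition 3.3]{theunissenMacNeilleCompletionsLattice2007}, while you re-derive it (adjoint existence for $u \mapsto u \+ v$, which does preserve arbitrary meets in each coordinate by Lemma \ref{Macneille}, followed by identification of that adjoint with the $\-$ already defined, checked on representatives via residuation in $A$); this makes the proof self-contained at the price of exactly the bookkeeping you flag. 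For the bulk of the axioms, the paper does not expand both sides as meets: it isolates Lemma \ref{lemme Macneille}, which needs only monotonicity of the left-hand side and coordinatewise meet-preservation of the right-hand side, together with the inequality on $\phi$-images (immediate since $\phi$ is an $\mathcal{L}$-morphism and $A \models \T$). Your two-sided expansion asks that both sides commute with meets; this happens to hold here because no $\vee$, $\-$, $\frac{\cdot}{2}$ or $j$ occurs in the axioms \ref{2v <=}--\ref{v>=d} outside closed subterms, but the one-sided lemma is the more economical tool.

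One step in your plan needs repair. When a variable occurs with different multiplicities on the two sides --- \ref{2v <=} ($2v \leq v \+ v$), \ref{alpha beta v >= v}, \ref{beta alpha v >= v} --- the meet expansions of $s$ and $t$ are indexed by \emph{different} families (one index per occurrence), so ``the inequality holding in $A$ for every choice of representatives'' is not an instance of the axiom: $A \models 2v \leq v \+ v$ yields $2a_i \leq a_i \+ a_i$, not $2a_{i''} \leq a_i \+ a_{i'}$ for mixed indices. The fix is to use the canonical representing family $\{a \in A \;|\; \phi(a) \geq u\}$, which is down-directed because $A$ is a lattice and $\phi$ preserves binary meets; monotonicity of the operations then collapses the mixed-index meets to diagonal ones, after which the termwise comparison is legitimate. (The same directedness point is silently at work when the paper applies Lemma \ref{lemme Macneille} to these three axioms, since $v \mapsto v \+ v$, as a unary map on $\bar A$, preserves meets of down-directed families but not arbitrary meets.) With that fix, your argument is complete.
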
 

An immediate corollary of \bf{Theorem} \ref{completion} is the following one. 

\begin{cor} 

There exists a commutative residuated complete lattice $\lcal$ such that the quotient $\quotient{\bar A}{\simeq}$ is isomorphic $\USC$. 

\end{cor}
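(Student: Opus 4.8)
The plan is to read the Corollary off directly from the two results already in place, since it is labelled (correctly) as an immediate consequence of \textbf{Theorem} \ref{completion}. By \textbf{Theorem} \ref{completion}, the MacNeille completion $\bar A$ is a complete model of $\T$. On the other hand, \textbf{Theorem} \ref{Complétude} was proved for a model fixed only by the standing hypothesis "\,$A$ is a complete model of $\T$\,": its conclusion is that for such a model there exists a commutative residuated complete lattice $\lcal$ with $\quotient{A}{\simeq}$ $\mathcal{L}$-isomorphic to $\USC$. Thus the natural route is simply to instantiate that theorem at the model $\bar A$.

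Concretely, first I would record that the conclusion of \textbf{Theorem} \ref{Complétude} uses nothing about the particular $A$ beyond its being a complete model of $\T$: the preorder $\preceq$, the sublattice $\lcal(A)=\{e \mid 2e \simeq e\}$, the maps $l,l^\ast$, and the isomorphism it manufactures are all defined from the $\mathcal{L}$-structure and the completeness of the order. Hence it is legitimate to replace $A$ by any complete model of $\T$, in particular by $\bar A$. Then, applying \textbf{Theorem} \ref{Complétude} to $\bar A$ produces a commutative residuated complete lattice $\lcal$ together with an $\mathcal{L}$-isomorphism $\quotient{\bar A}{\simeq} \cong \USC$, which is exactly the asserted statement.

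The hard part is not in the Corollary at all but in the inputs it cites: essentially everything lives in \textbf{Theorem} \ref{completion} (checking that every axiom of $\T$ is preserved under MacNeille completion, transporting $\+,\-,2\cdot,j_\ast,\alpha,\beta,\frac{\cdot}{2},j$ via \textbf{Lemmas} \ref{Macneille} and \ref{opp Macneille}) and in \textbf{Theorem} \ref{Complétude}. The only point in the corollary's own proof that deserves a sentence is the scoping remark above, confirming that the quotient construction and the resulting isomorphism of \textbf{Theorem} \ref{Complétude} apply verbatim to $\bar A$; once that is granted, the argument is a one-line instantiation.
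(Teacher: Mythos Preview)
Your proposal is correct and matches the paper's approach exactly: the corollary is stated as an immediate consequence of \textbf{Theorem} \ref{completion}, and your argument---apply \textbf{Theorem} \ref{completion} to see $\bar A$ is a complete model of $\T$, then instantiate \textbf{Theorem} \ref{Complétude} at $\bar A$---is precisely the intended one-line deduction.
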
 

Here are two lemmas we need to prove \bf{Theorem} \ref{completion}. 

\begin{lem}[{\cite[Proposition 3.3]{theunissenMacNeilleCompletionsLattice2007}}]  

The adjunction axioms between $\+$ and $\-$ are satisfied by $\bar A$. 

\end{lem}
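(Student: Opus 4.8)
The phrase ``adjunction axioms between $\+$ and $\-$'' refers to the residuation \ref{residuation}, so the goal is to show that for all $x$, $y$, $z \in \bar A$ one has $z \leq x \+ y \Leftrightarrow z \- y \leq x$. The plan is to transport this biconditional down to $A$, where it holds by hypothesis (axiom \ref{residuation}), using the fact that every element of $\bar A$ is simultaneously a meet and a join of elements of $\phi(A)$ together with the explicit formulas defining $\+$ and $\-$ on $\bar A$.

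Concretely, I would first use the defining property of the Macneille completion to present the arguments in the shapes that match the definitions of the extended operations: write $x = \bigwedge[i \in I] \phi(a_i)$ and $y = \bigwedge[j \in J] \phi(b_j)$ as meets, and $z = \bigvee[k \in K] \phi(c_k)$ as a join. Substituting into the definitions then gives $x \+ y = \bigwedge[(i,j) \in I \times J] \phi(a_i \+ b_j)$ and $z \- y = \bigvee[(k,j) \in K \times J] \phi(c_k \- b_j)$, since these are exactly the meet--meet and join--meet forms in which $\+$ and $\-$ were defined. Using that $\phi$ reflects the order and the elementary observation that a join lies below a meet precisely when every component of the join lies below every component of the meet, the left-hand side $z \leq x \+ y$ unwinds to the family of inequalities $c_k \leq a_i \+ b_j$ for all $i$, $j$, $k$, while the right-hand side $z \- y \leq x$ unwinds to $c_k \- b_j \leq a_i$ for all $i$, $j$, $k$. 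By residuation \ref{residuation} in $A$ these two families coincide term by term, which yields the desired equivalence.

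The only point that carries genuine weight is the well-definedness of the extended operations: the formulas for $\+$ and $\-$ must not depend on the chosen representations of $x$, $y$, $z$ as meets and joins. This is already secured by \bf{Lemma} \ref{Macneille}, which is precisely what guarantees the existence and uniqueness of the meet-preserving and join-preserving extensions, so no additional argument is needed there. The one piece of bookkeeping to get right is the alignment of the representations: the join must sit in the first slot of $\-$ and the meet of the $\phi(b_j)$ in its second slot, matching both the shape of the definition and the slot structure of $\+$. This alignment is forced, and the crucial asymmetry---that $\+$ preserves meets whereas $\-$ preserves joins in its first argument---is exactly what makes the two sides of the biconditional land on the same family of inequalities in $A$. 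I therefore do not anticipate a real obstacle; the entire content is the translation of residuation across the density of $\phi(A)$ in $\bar A$.
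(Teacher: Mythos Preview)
Your argument is correct. The paper does not give its own proof of this lemma; it simply cites \cite[Proposition 3.3]{theunissenMacNeilleCompletionsLattice2007} and moves on. What you have written is essentially the standard direct verification that residuation lifts to the Macneille completion: represent the three arguments in the shapes dictated by the definitions (meets for the inputs of $\+$, a join for the first input of $\-$), unfold both sides of the biconditional into the same indexed family of inequalities in $A$ via the order-embedding $\phi$, and conclude by residuation in $A$. This is presumably the content of the cited proposition as well, so there is no genuine divergence to discuss.
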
 

\begin{lem} \label{lemme Macneille} 

Let $f$ and $g$ be two non-decreasing functions from $\ol X^n$ to $\ol X$ such that, for all $x \in X^n$, $f(x) \leq g(x)$. 

If $f$ preserves upper bounds in each coordinate or $g$ preserves lower bounds in each coordinate, then $f \leq g$. 

\end{lem}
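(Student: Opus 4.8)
The plan is to upgrade the pointwise inequality on $\phi(X)^n$ to the whole of $\ol X^n$ by exploiting the defining density property of the Macneille completion, namely that every $\bar x \in \ol X$ can be written both as $\bar x = \bigvee[i \in I] \phi(y_i)$ and as $\bar x = \bigwedge[j \in J] \phi(z_j)$ for suitable families $(y_i)_{i \in I} \et (z_j)_{j \in J}$ in $X$. The two hypotheses are order-dual, so the two cases are treated symmetrically; I would spell out the join-preserving case for $f$ and obtain the other by the evident dualisation.

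Suppose first that $f$ preserves upper bounds in each coordinate. Fix $\bar x = (\bar x_1, \ldots, \bar x_n) \in \ol X^n$ and, for each $1 \leq k \leq n$, choose a family $(y^{(k)}_{i})_{i \in I_k}$ in $X$ with $\bar x_k = \bigvee[i \in I_k] \phi(y^{(k)}_i)$. Pulling the joins out of $f$ one coordinate at a time gives
\[
f(\bar x) = \bigvee[(i_1, \ldots, i_n) \in I_1 \times \cdots \times I_n] f\big(\phi(y^{(1)}_{i_1}), \ldots, \phi(y^{(n)}_{i_n})\big).
\]
Each argument tuple lies in $\phi(X)^n$, so the hypothesis gives $f(\phi(y^{(1)}_{i_1}), \ldots, \phi(y^{(n)}_{i_n})) \leq g(\phi(y^{(1)}_{i_1}), \ldots, \phi(y^{(n)}_{i_n}))$; and since $\phi(y^{(k)}_{i_k}) \leq \bar x_k$ for every $k$ and $g$ is non-decreasing, this is at most $g(\bar x)$. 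Taking the join over all index tuples yields $f(\bar x) \leq g(\bar x)$.

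For the dual case, suppose instead that $g$ preserves lower bounds in each coordinate. Writing each $\bar x_k = \bigwedge[j \in J_k] \phi(z^{(k)}_j)$ and pulling the meets out of $g$, one obtains $g(\bar x) = \bigwedge[(j_1,\ldots,j_n)] g(\phi(z^{(1)}_{j_1}), \ldots, \phi(z^{(n)}_{j_n}))$, and each such term dominates $f$ evaluated at the same tuple (by the hypothesis), which in turn dominates $f(\bar x)$ by monotonicity of $f$ together with $\bar x_k \leq \phi(z^{(k)}_{j_k})$; hence $g(\bar x) \geq f(\bar x)$. There is no genuine obstacle here: the one point deserving a word is the iterated extraction of the (possibly infinitary) joins resp. meets through the coordinatewise-preserving function, which is justified because preservation \emph{in each coordinate} means precisely that each unary section of $f$ (resp. $g$), obtained by fixing the remaining coordinates at arbitrary elements of $\ol X$, is sup- (resp. inf-)preserving, so the fixed values being taken in $\phi(X)$ or in $\ol X$ is immaterial. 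The whole argument is thus purely a matter of combining density with monotonicity.
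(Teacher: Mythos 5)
Your proof is correct and is essentially the paper's own argument: the paper likewise represents each coordinate as a join (resp.\ meet) of elements coming from $X$ via the density property of the Macneille completion, pulls the joins out of $f$ (resp.\ the meets out of $g$) coordinate by coordinate, applies the pointwise hypothesis on tuples from $X^n$, and concludes by monotonicity of the other function, treating the second case by symmetry. Your additional remark justifying the iterated extraction through sections with the remaining coordinates fixed in $\ol X$ is a correct (and slightly more careful) spelling-out of what the paper leaves implicit.
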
 

\begin{proof} Let $(x_j)_{i \in J} \in X^n$. 

Let us assume that $f$ preserves all upper bounds. Then $$f\left(\bigvee[j \in J] x_j\right) = \bigvee[(j_i)_{1 \leq i \leq n} \in J^n] f(x_{1,j_1} , \ldots , x_{n,j_n}) \leq \bigvee[1 \leq i \leq n] \bigvee[j_i \in J] g(x_{1,j_1} , \ldots , x_{n,j_n}) \leq g\left(\bigvee[j \in J] x_j\right).$$ 

The second case is symmetric. 

\end{proof} 

\begin{proof}[Proof of \textbf{Theorem} \ref{completion}] 

$\bar A$ is a complete bounded lattice and $\+$ admits a residual, so $\bar A$ is a commutative residuated complete lattice. The adjunction axioms \ref{(2v)/2<=v}, \ref{v<=2(v/2)}, \ref{jj*v<=v} and \ref{v<=j*jv} are satisfied by definition. 

To prove that the other axioms of $\T$ are satisfied by $\bar A$, we notice that the axioms and schemes of axioms from \ref{2(+) <= 2 + 2} to \ref{v>=d} are of the form $f \leq g$, where $f$ and $g$ are non-decreasing functions such that $g$ preserves lower bounds in each coordinate. Thus, \textbf{Lemma} \ref{lemme Macneille} enables to conclude that $\bar A$ satisfies $\T$. 
\end{proof} 

\begin{proof}[Proof of \textbf{Theorem} \ref{big theorem}] 

By \textbf{Theorems} \ref{completion} and \ref{Complétude}, there exists a commutative residuated lattice $\lcal$ and an isomorphism $i\colon \quotient{\bar A}{\simeq} \rightarrow \USC$. Let $\pi$ denote the quotient morphism from $\bar A$ to $\quotient{\bar A}{\simeq}$ and $\phi$ the inclusion morphism of $A $ into $\bar A$. The image of $i \circ \pi \circ \phi$ is endowed with an $\mathcal{L}$-structure, which makes it the quotient of $A$ by $\simeq$. Hence, the quotient of $A$ by $\simeq$ is an $\mathcal{L}$-structure that embeds into $\USC$. 
\end{proof} 

\color{blue} 

\bf{Theorem} \ref{complete archimedean} is an immediate corollary of \bf{Theorem} \ref{completion} 

\color{black} 

\begin{cor} \label{complétude des AC-algèbres} 

For all $\mathcal{L}$-terms $\phi \et \psi$, the following assertions are equivalent: 

\begin{enumerate} 

\item \label{complétude des AC-algèbres 1} For all commutative residuated complete lattice $\lcal$, $\USC \models \phi \leq \psi$. 

\item \label{complétude des AC-algèbres 2} For all $n \in \nn$, $\phi \leq \psi \+ \ul{\frac{1}{2^n}}$ is consequence of $\T$. 

\end{enumerate} 

\end{cor} 

\begin{proof} 

We already proved that \blue{the class of $\USC$, for $\lcal$ a commutative residuated complete lattice} is sound for $\T$. By Archimedeanity of every AC-algebra, we have that \ref{complétude des AC-algèbres 2} implies \ref{complétude des AC-algèbres 1}. 

Let $\phi \et \psi$ be $\mathcal{L}$-terms both having $k$ free variables such that, for all commutative residuated complete lattice $\lcal$, $\USC \models \phi \leq \psi$. Let $A$ be a model of $\T$. According to \bf{Theorem} \ref{big theorem}, there exists a commutative residuated lattice $\lcal$ and a morphism $i \colon A \rightarrow \USC$ such that, for all $a \et b \in A$, $i(a) \leq i(b) \Leftrightarrow \forall n \in \nn \, a \leq b \+ \ul{\frac{1}{2^n}}$. For all $a \in A^k$ and $n \in \nn$, since $i(\phi(a)) \leq i(\psi(a))$, $\phi(a) \leq \psi(a) \+ \ul{\frac{1}{2^n}}$. The class of models of $\T$ being complete for $\T$, for all $n \in \nn$ $\phi \leq \psi \+ \ul{\frac{1}{2^n}}$ is consequence of $\T$. 
\end{proof} 

\section{Cut Admissibility for Affine Continuous Logic} \label{section general Cut Admissibility} 

\addtocounter{subsection}{1} 

The \blue{purpose} of this section is to exhibit a deductive system in a sequent-style calculus for the logic described by AC-algebras. For this, we define a language for structures (on the left side of the turnstile) and formulas, and give the correspondence with the language $\mathcal{L}$. Here can be seen the role of $\alpha$, which is to be a substitue to both $\frac{\cdot}{2}$ and $j$ whenever we need them on the left side of a turnstile. \\ The theorems stated in this section rely on \bf{Theorems} \ref{Completeness theorem annexes} and \ref{Cut Admissibility annexes} proven in \bf{Annexes} \ref{Annexes}. In order to talk about cut elimination, we need to define a language for the structures and formulas of our sequent like calculus and give the correspondence with the language $\mathcal{L}$. 

\begin{figure}[!ht] 

$$\begin{array}{|c|c|c||c|c|} 
\hline 
\multicolumn{3}{|c||}{\text{Positive symbols}}& \multicolumn{2}{c|}{\text{Negative correspondent}}\\ 
\hline 
\text{structures}& \text{formulas} & \text{algebraic correspondent} & \text{formulas} & \text{algebraic notation}\\ 
\hline 
,& \+& \+& \-& \-\\ 
\hline 
\epsilon& \ul 0& \ul 0& \ul 1& \ul 1\\ 
\hline 
& \wedge& \wedge& \vee& \vee\\ 
\hline 
\circ_2& 2& 2& \frac{\cdot}{2}& \frac{\cdot}{2}\\ 
\hline 
\bullet_2& j_\ast& j_\ast& j& j\\ 
\hline 
\circ_\alpha& \alpha& \alpha& \blacksquare_\alpha& 2v \wedge j_\ast(v)\\ 
\hline 
\end{array}$$ 

\caption{Correspondence between structure symbols and $\mathcal{L}$} \label{Correspondence between structure symbols and L} 

\end{figure} 

\begin{nota} \label{notations coupures} 

For all $k \in \nn$, let $k \gamma$ denote $\gamma, \ldots, \gamma$ $k$ times, and for every dyadic number in $[0,1)$, $d = \frac{k}{2^n}$ with $k$ odd, let $\epsilon_d$ denote $k \circ_\alpha^{n-1} \bullet_2 \epsilon$. To be noticed, $\epsilon_{\frac{1}{2}} = \bullet_2 \epsilon$. We also define $\epsilon_1$ as $\circ_2 \bullet_2 \epsilon$. 

The justification of these notations is the following: \textbf{Lemma} \ref{définition dyadiques} proves that, with the definition given in \textbf{Notation} \ref{notation dyadiques}, the interpretation of a dyadic $\frac{k}{2^n}$ in our language is $k \, \alpha^{n-1} \circ j_\ast(0)$. Moreover, axiom \ref{2j*(0) >= 1} ensures that $\ul 1 = 2 j_\ast(\ul 0)$ and, for any algebra $A$ in the language $\mathcal{L} \setminus \{\ul 1\}$ satisfying axioms \ref{com monoid} to \ref{v>=d} where $\ul 1$ is replaced by $2 j_\ast(\ul 0)$, by letting $\ul 1 = 2 j_\ast(\ul 0)$, $A$ is now a model of $\T$. 

\end{nota} 

The system \MGL for modal full Lambek calculus can be applied with one or several modalities. We need six modalities and three structural symbols $\circ_2$, $\bullet_2$ and $\circ_\alpha$. We thus obtain a system \MGL($\circ_2, \bullet_2, \circ_\alpha)$ given by \GL (from the \textbf{Appendix} \ref{Annexes}) --- understood with contexts of the extended language --- and \textit{Figure} \ref{Introduction Rules for Modalities}. In addition to these rules, we add the structural rules given in \textit{Figure} \ref{CFLew} and call the total system \CFLew. 

\begin{figure}[!ht] 

$$\begin{array}{|cccc|} 
\hline 

[\target{L \,2}] \seq{\Gamma[\circ_2 A] \vdash B}{{\Gamma[2 A] \vdash B}} &[\target{R \,2}] \seq{\gamma \vdash B}{\circ_2 \gamma \vdash 2 B} &[\target{L \, \frac{\cdot}{2}}] \seq{{\Gamma[A] \vdash B}}{{\Gamma[\circ_2 \frac{A}{2}] \vdash B}} &[\target{R \, \frac{\cdot}{2}}] \seq{{\circ_2 \gamma \vdash A}}{\Gamma[\gamma] \vdash  \frac{A}{2}} \\ 

[\target{L \,j_\ast}] \seq{\Gamma[\bullet_2 A] \vdash B}{{\Gamma[j_\ast(A)] \vdash B}} &[\target{R \,j_\ast}] \seq{\gamma \vdash B}{\bullet_2 \gamma \vdash j_\ast(B)} &[\target{L \, j}] \seq{{\Gamma[A] \vdash B}}{{\Gamma[\bullet_2 j(A)] \vdash B}} &[\target{R \, j}] \seq{{\bullet_2 \gamma \vdash A}}{\Gamma[\gamma \vdash  j(A)} \\ 

[\target{L \,\alpha}] \seq{\Gamma[\circ_\alpha A] \vdash B}{{\Gamma[\alpha(A)] \vdash B}} &[\target{R \,\alpha}] \seq{\gamma \vdash B}{\circ_\alpha \gamma \vdash \alpha(B)} &[\target{L \, \blacksquare_\alpha}] \seq{{\Gamma[A] \vdash B}}{{\Gamma[\circ_\alpha \blacksquare_\alpha A] \vdash B}} &[\target{R \, \blacksquare_\alpha}] \seq{{\circ_\alpha \gamma \vdash A}}{\Gamma[\gamma] \vdash  \blacksquare_\alpha A} \\ 

\hline 
\end{array}$$ 

\caption{Introduction Rules for Modalities} \label{Introduction Rules for Modalities} 

\end{figure} 

\begin{figure}[!ht] 

\setlength{\arraycolsep}{2 pt} \adjustbox{width=\textwidth}{$\begin{array}{|ccc|} 

\hline 

[\ref{com monoid}a] \seq{{\Gamma[\gamma, \delta] \vdash A}}{{\Gamma[\delta, \gamma] \vdash A}} & [\ref{com monoid}b] \seq{{\Gamma[\gamma, (\delta, \pi)] \vdash A}}{{\Gamma[(\gamma, \delta), \pi] \vdash A}} & [\ref{com monoid}c] \seq[=]{{\Gamma[\epsilon, \gamma] \vdash A}}{{\Gamma[\gamma] \vdash A}} \\ 

[\ref{2 >=}] \seq{\Gamma[\gamma] \vdash A}{\Gamma[\circ_2 \gamma] \vdash A} & [\ref{j* >=}] \seq{\Gamma[\gamma] \vdash A}{\Gamma[\bullet_2 \gamma] \vdash A} & [\ref{bounded lattice}] \seq{\Gamma[\epsilon] \vdash A}{\Gamma[\gamma] \vdash A} \\ 

[\ref{2v <=}] \seq{\Gamma[\circ_2 \gamma] \vdash A, \Gamma[\circ_2 \delta] \vdash A}{{\Gamma[\gamma, \delta] \vdash A}} & \blue{[\link{cont de 2}[(4.a)]] \seq[=]{{\Gamma[\circ_2 (\gamma, \delta)] \vdash A}}{{\Gamma[\circ_2 \gamma, \circ_2 \delta] \vdash A}}} & \blue{[\ref{bounded lattice} \et \ref{2j*(0) >= 1}]} \seq{{}}{\epsilon_1 \vdash A} \\ 

& [(\hyperlink{def j*}{4.b})] \blue{\seq[=]{\Gamma[{\bullet_2 (\circ_2 \gamma, \delta)}] \vdash A}{{\Gamma[\gamma , \bullet_2 \delta] \vdash A}}} & \\ 

\hypertarget{4.d}{[\ref{alpha beta v >= v}a]} \seq{\Gamma[\gamma] \vdash A}{\Gamma[\circ_\alpha \circ_2 \gamma] \vdash A} & [\ref{alpha beta v >= v}b] \seq{\Gamma[\gamma] \vdash A}{\Gamma[\circ_\alpha \bullet_2 \gamma] \vdash A} & [\ref{alpha beta v <=  v}] \seq{\Gamma[\circ_\alpha \circ_2 \gamma] \vdash A, \Gamma[\circ_\alpha \bullet_2 \gamma] \vdash A}{\Gamma[\gamma] \vdash A} \\ 

[\ref{beta alpha v >= v}a] \seq{\Gamma[\gamma] \vdash A}{\Gamma[\circ_2 \circ_\alpha \gamma] \vdash A} & [\ref{beta alpha v >= v}b] \seq{\Gamma[\gamma] \vdash A}{\Gamma[\bullet_2 \circ_\alpha \gamma] \vdash A} & [\ref{beta alpha v <=  v}] \seq{\Gamma[\circ_2 \circ_\alpha \gamma] \vdash A, \blue\Gamma[\bullet_2 \circ_\alpha \gamma] \vdash A}{\Gamma[\gamma] \vdash A} \\ 

[\ref{v<=d}] \seq{\Gamma[\gamma] \vdash A}{{\Gamma[\epsilon_d, {\circ_\alpha}^n (\gamma {,}  \epsilon_{1 - d})] \vdash A}} & [\ref{1/2^n + 1/2^n}]\seq{{\Gamma[\epsilon_{\frac{1}{2^n}}, \epsilon_{\frac{1}{2^n}}] \vdash A}}{\Gamma[\epsilon_{\frac{1}{2^{n-1}}}] \vdash A} & [\hypertarget{Archi}{\ref{archimedean}}] \seq{\forall n \in \nn\; \Gamma[\epsilon_{\frac{1}{2^n}}\blue] \vdash A}{\Gamma[\epsilon ]\vdash A} \\ 

\multicolumn{3}{|c|}{[\ref{v>=d}] \seq{{\Gamma\left[\epsilon_{1 - \frac{1}{2^{n+1}}} ,  {\circ_\alpha}^{n+1}\left(\gamma , \epsilon_{\frac{1}{2^{n+1}}}\right)\right] \vdash A}, \ldots, {\Gamma\left[\epsilon_{1 - \frac{2^{n+1} - 2}{2^{n+1}}} ,  {\circ_\alpha}^{n+1}\left(\gamma , \epsilon_{\frac{2^{n+1} - 2}{2^{n+1}}}\right)\right] \vdash A}}{{\Gamma[\gamma , \epsilon_{\frac{1}{2^n}}] \vdash A}}}\\ 

\hline 

\end{array}$} 

\caption{\CFLew} \label{CFLew} 

\end{figure} 

\begin{lem} 

Algebraic models of \CFLew are the Archimedean models of the theory $\T$. 

\end{lem} 

\begin{proof} 

According to the soundnes part of \textbf{Theorem} \ref{Completeness theorem annexes}, the models of \CFLew are the residuated lattices satisfying, for each previous structural rule $r$, the formula in the language $\L$ where each structure variable has been replaced by a fresh new formula variable, $\seq{{}}{{}}$ by $\Rightarrow$, $\vdash$ by $\geq$, $,$ by $\+$, $\circ_2$ by $2 \cdot$, $\bullet_2$ by $j_\ast$, $\circ_\alpha$ by $\alpha$ and each $\epsilon_d$ by $\ul d$ and the contexts have been removed. 

\begin{exe} \label{Exemples d'équivalence} 

Let us take \link{2 adjunction}[\emph{(3.a)}] as an example of axioms the rules of which are introduction rules, and $(\hyperlink{4.d}{\hyperlink{def alpha, 1}{4.d}})$ as an example of axioms the rules of which are structural rules. 

\begin{enumerate} 

\item[{\link{2 adjunction}[\emph{(3.a)}]} : ] $\seq{\Gamma[\circ_2 A] \vdash B}{\Gamma[2A] \vdash B}$ and $\seq{\Gamma[A] \vdash B}{\Gamma[\circ_2 \frac{A}{2}] \vdash B}$ give $\seq{\seq{A \vdash a}{\circ_2 \frac{A}{2} \vdash A}}{2 \frac{A}{2} \vdash A}$ and $\seq{\seq{A \vdash A}{\circ_2 A \vdash 2A}}{A \vdash 2 \frac{A}{2}}$. 

\item[(\hyperlink{4.d}{\hyperlink{def alpha, 1}{4.d}}) : ] First, $\seq{\Gamma[\gamma] \vdash A}{\Gamma[\circ_\alpha \circ_2 \gamma] \vdash A}$ gives the formula $b \geq a \Rightarrow \alpha(2b) \geq a$, which is equivalent to $\alpha(2b) \geq b$. 

Second, $\seq{\Gamma[\gamma] \vdash A}{\Gamma[\circ_\alpha \bullet_2 \gamma] \vdash A}$ gives the formula $b \geq a \Rightarrow \alpha \circ j_\ast(b) \geq a$, which is equivalent to $\alpha \circ j_\ast(b) \geq b$. Thanks to the fact that $\alpha$ admits a left adjoint, these two formulas are in turn equivalent to the following one: $v \leq \alpha(2v \wedge j_\ast(v))$ \ref{alpha beta v >= v}. 

Third, $\seq{\Gamma[\circ_\alpha \circ_2] \gamma \vdash A, \Gamma[\circ_\alpha \bullet_2 \gamma] \vdash A}{\Gamma[\gamma] \vdash A}$ gives the formula \linebreak $\alpha(2b) \geq a \, \& \, \alpha \circ j_\ast(b) \geq a \Rightarrow b \geq a$, which is equivalent to $b \geq \alpha(2b) \wedge \alpha \circ j_\ast(b)$. Since $\alpha$ admits a left adjoint, it is, in turn, equivalent to $\alpha(2v \wedge j_\ast(v)) \leq v$ \ref{alpha beta v <=  v}. 

\end{enumerate} 

\end{exe} 

\ref{croissance} and \ref{aDjunctions} can be treated the same way as \link{2 adjunction}[\emph{(3.a)}] and \ref{bounded lattice}, \ref{Defining axioms} and \ref{infinitesimals} can be treated the same way as (\hyperlink{4.d}{\hyperlink{def alpha, 1}{4.d}}). The only rule to which the previous methodology can't apply directly is \hyperlink{Archi}{(6.b)} because it is infinitary. However, it is immediate that every model of this rule has equivalently the property $\forall n \in \nn$ $\ul{\frac{1}{2^n}} \geq v \Rightarrow \ul 0 \geq v$, which is actually property \ref{archimedean}. Hence models of \CFLew are exactly the Archimedean models of $\T$. 
\end{proof} 

Since all the added rules are analytic (cf. \bf{Definition} \ref{analytic rules annexes}), according to \textbf{Theorems} \ref{Completeness theorem annexes} and \ref{Cut Admissibility annexes}, the following two theorems are true. 

\begin{thm}[Completeness theorem] \label{Completeness theorem} 

The class of $USC(\lcal)$, for $\lcal$ a complete commutative residuated lattice, is sound and complete for $\CFLew$. 

\end{thm}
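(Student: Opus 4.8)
The plan is to deduce the theorem from the abstract algebraic completeness result \textbf{Theorem} \ref{Completeness theorem annexes}, using the preceding Lemma to pin down the algebraic models and \textbf{Theorem} \ref{big theorem} to cut that class of models down to the AC-algebras. First I would invoke the fact, noted just above, that all the structural rules added to form $\CFLew$ are analytic (\textbf{Definition} \ref{analytic rules annexes}), so that \textbf{Theorems} \ref{Completeness theorem annexes} and \ref{Cut Admissibility annexes} apply: $\CFLew$ is sound and complete for the class of its algebraic models, which by the preceding Lemma is precisely the class of models of $\T$ satisfying the Archimedean property \ref{archimedean}.

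For soundness with respect to AC-algebras I would simply observe that, by \textbf{Lemma} \ref{soundness for T}, every $\USC$ is a model of $\T$ and is Archimedean, hence an algebraic model of $\CFLew$; soundness of the system for this subclass is then inherited from its soundness for the whole class. The substance is in completeness: I must show that a sequent valid in every AC-algebra is already valid in every algebraic model of $\CFLew$, for then it is derivable. Under the dictionary of the preceding Lemma, a sequent $\Gamma \vdash B$ corresponds to an $\mathcal{L}$-inequality $t_\Gamma \geq B$ (reading $,$ as $\+$, the modalities $\circ_2,\bullet_2,\circ_\alpha$ as $2\cdot,j_\ast,\alpha$, and each $\epsilon_d$ as $\ul d$). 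Given an algebraic model $A$ refuting this inequality, I would pass to $\quotient{A}{\simeq}$ and apply \textbf{Theorem} \ref{big theorem}, which provides an $\mathcal{L}$-embedding $\quotient{A}{\simeq} \hookrightarrow \USC$ for some commutative residuated complete lattice $\lcal$. As an order-embedding respecting every operation of $\mathcal{L}$, it reflects the failure of $t_\Gamma \geq B$, so the inequality already fails in $\USC$. Contrapositively, validity in all AC-algebras forces validity in all algebraic models, and completeness for AC-algebras follows.

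The main obstacle I expect is the bookkeeping of the two orders and of the infinitesimal quotient. The semantics of \textbf{Theorem} \ref{Completeness theorem annexes} tests the exact order of the residuated lattice, whereas the embedding of \textbf{Theorem} \ref{big theorem} only controls $A$ up to the relation $\simeq$, that is, up to infinitesimals. Reconciling them requires the Archimedean axiom \ref{archimedean}, which ensures that in an algebraic model of $\CFLew$ the validity of the translated inequality is governed by the relation $\preceq$ rather than by the bare order, making the quotient by $\simeq$ faithful; concretely, $t_\Gamma \geq B$ becomes equivalent to the family of relaxed inequalities obtained by perturbing with $\frac{\ul 1}{2^n}$, $n \in \nn$. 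Since this is exactly the equivalence packaged in \textbf{Corollary} \ref{complétude des AC-algèbres}, I would route the completeness step through that corollary rather than re-establish the bridge by hand, leaving only the routine verification that the sequent-to-inequality translation is compatible with it.
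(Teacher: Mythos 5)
Your proposal is correct and takes essentially the same route as the paper: the paper's entire proof is the observation that the added rules are analytic, so \textbf{Theorems} \ref{Completeness theorem annexes} and \ref{Cut Admissibility annexes} apply to the class of algebraic models of $\CFLew$, which the preceding Lemma identifies with the Archimedean models of $\T$. The bridge you spell out---soundness via \textbf{Lemma} \ref{soundness for T}, and completeness by embedding an Archimedean model (where, by residuation and the Archimedean rule, $\preceq$ coincides with $\leq$, so the quotient by $\simeq$ is trivial) into some $\USC$ via \textbf{Theorem} \ref{big theorem}, or equivalently routing through \textbf{Corollary} \ref{complétude des AC-algèbres}---is exactly what the paper leaves implicit when passing from completeness for all algebraic models to completeness for the subclass of AC-algebras.
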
 

\begin{thm}[Cut Admissibility theorem] \label{Cut Admissibility} $ $ 

In the system \CFLew, for all formulas $a_1, \ldots, a_n \et b$ and $\{, \, , \circ_2, \bullet_2, \circ_\alpha, \epsilon\}$-term $G$ such that there exists a deduction of $G(a_1, \ldots, a_n) \vdash b$ using the cut rule, there exists a deduction of \linebreak $G(a_1, \ldots, a_n) \vdash b$ not using the cut rule. 

\end{thm}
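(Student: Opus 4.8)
The plan is to obtain this statement as a direct instance of the general cut-admissibility result for modal full Lambek calculus extended by analytic structural rules, namely \textbf{Theorem} \ref{Cut Admissibility annexes}. By construction, and as recorded in the proof of the preceding lemma, \CFLew is precisely the calculus \MGL$(\circ_2,\bullet_2,\circ_\alpha)$ — that is, \GL read over the extended language together with the modal introduction rules of \textit{Figure} \ref{Introduction Rules for Modalities} — augmented by the structural rules collected in \textit{Figure} \ref{CFLew}. Since \textbf{Theorem} \ref{Cut Admissibility annexes} grants cut admissibility to \MGL enriched with any family of analytic structural rules, the whole burden of the proof reduces to a single verification: that every rule displayed in \textit{Figure} \ref{CFLew} satisfies the conditions of \textbf{Definition} \ref{analytic rules annexes}.

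First I would set aside the modal introduction rules of \textit{Figure} \ref{Introduction Rules for Modalities}: these are the standard left/right rules pairing each unary operation with its residual or adjoint modality ($\circ_2/2$ against $\frac{\cdot}{2}$, $\bullet_2/j_\ast$ against $j$, and $\circ_\alpha/\alpha$ against $\blacksquare_\alpha$), so they already belong to the modal base \MGL and require no analyticity check. Then I would proceed through the structural rules of \textit{Figure} \ref{CFLew} one at a time, confirming in each case the defining clauses of analyticity: that every structural metavariable $\gamma,\delta,\pi$ appearing in the premises also appears in the conclusion (and conversely), that no metavariable is duplicated or erased in a way that breaks the subformula property, and that the context holes $\Gamma[\cdot]$ and formula slots are uniform from premises to conclusion. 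The commutativity, associativity and unit rules \ref{com monoid}, the monotonicity rules, the doubling and halving rules, the interaction rules \ref{2(+) <= 2 + 2}--\ref{alpha(+) <= alpha + 2}, the adjunction rules \ref{alpha beta v >= v} and \ref{alpha beta v <=  v} (whose equivalence with the algebraic axioms is illustrated in \textbf{Example} \ref{Exemples d'équivalence}), and the value schemes \ref{v<=d}--\ref{v>=d} are all manifestly of the required linear shape, so this step is routine bookkeeping.

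The one genuinely delicate point — and what I expect to be the main obstacle — is the infinitary Archimedean rule \ref{archimedean}, whose premises are indexed by all $n \in \nn$. The finitary analytic-rule machinery does not literally cover an infinitary rule, so I would need to confirm that \textbf{Theorem} \ref{Cut Admissibility annexes} is stated in sufficient generality to admit infinite premise-families, and that the cut-reduction step permutes a cut on $\Gamma[\epsilon]$ upward through the infinitely many premises of this rule. The favourable feature here is that the rule only inserts the infinitesimal contexts $\epsilon_{1/2^{n}}$ into a fixed hole without ever touching the cut formula; hence a cut whose active occurrence meets the conclusion can be replayed uniformly against each premise, and the induction measure governing cut elimination in the Annexes must simply be chosen insensitive to the (infinite) branching. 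This is exactly the setting for which \textbf{Theorem} \ref{Cut Admissibility annexes} is designed.

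Once analyticity of every rule, including the infinitary case, is established, \textbf{Theorem} \ref{Cut Admissibility annexes} applies verbatim: any cut inside a deduction of $G(a_1,\ldots,a_n) \vdash b$, for $G$ an arbitrary term in the structure language $\{{,}\,,\circ_2,\bullet_2,\circ_\alpha,\epsilon\}$, is eliminated, yielding the desired cut-free deduction. I would close by remarking that the companion completeness statement (\textbf{Theorem} \ref{Completeness theorem}) follows along the same lines from \textbf{Theorem} \ref{Completeness theorem annexes}, via the lemma identifying the algebraic models of \CFLew with the models of $\T$ satisfying \ref{archimedean}, which by the results of Section \ref{section algebraic axiomatisation} are exactly the algebras $\USC$.
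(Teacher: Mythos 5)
Your proposal follows the paper's proof exactly: the paper likewise observes that every rule added to \MGL($\circ_2, \bullet_2, \circ_\alpha$) in \textit{Figure} \ref{CFLew} is analytic in the sense of \textbf{Definition} \ref{analytic rules annexes} and then invokes \textbf{Theorems} \ref{Completeness theorem annexes} and \ref{Cut Admissibility annexes} directly, with the infinitary Archimedean rule covered by the closing remark of the \textbf{Annexes} that adding a countable rule in analytic form preserves cut elimination, soundness and completeness. One small correction of mechanism: the Annexes argument is semantic (soundness of the system with cut plus completeness of the cut-free system over the same class of algebraic models, via $\Sqt_{R,+}$), so no upward permutation of a cut through the infinitely many premises of rule \ref{archimedean} is ever performed, and no induction measure on deductions is involved.
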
 

Finally, we prove a correspondence theorem between structural rules in the language of the system $\GL$ (\cite[Table 1]{galatosResiduatedFramesApplications2012}) and structural rules in the language $\CFLew$. 

\begin{defi}[{\cite[p.8 and Definition 4.3]{ciabattoniAlgebraicProofTheory2012}}] 

Let us denote by $\mathcal{L}_\GL$ the language $\{,\,,\, \cdot,\, \epsilon,\, 1,\, /,\, \setminus,\,\}$, which is the language in which $\GL$ is expressed. 

A rule in the language $\mathcal{L}_\GL$ is said \emph{analytic} when it is an $\mathcal{L}_\GL$-rule of the form \linebreak $\seq{\Gamma[\Upsilon_1] \vdash A , \ldots , \Gamma[\Upsilon_n] \vdash A}{\Gamma[\Upsilon_0] \vdash A}$ --- where the $\Upsilon_i$s are $\{, \, , \, \epsilon\}$-terms --- satisfying: 

\begin{enumerate} 

\item[] \hypertarget{Linearity}{Linearity :} $A$ is a formula variable and the variables of $\Upsilon_0$ are distinct. 

\item[] \hypertarget{Separation}{Separation :} $A$ doesn't appear in $\Upsilon_0$. 

\item[] \hypertarget{Inclusion}{Inclusion :} The variables of the $\Upsilon_i$'s are among the ones of $\Upsilon_0$. 

\end{enumerate} 

\end{defi} 

\begin{thm} 

Let $r = \seq{\Gamma[\Upsilon_1] \vdash A , \ldots , \Gamma[\Upsilon_n] \vdash A}{\Gamma[\Upsilon_0] \vdash A}$ be an analytic $\mathcal{L}_\GL$ rule and let's denote by $k_i$ the total number of times \quad $,$ \quad appears in $\Upsilon_i$, $1 \leq i \leq n$. Let $r_c$ be the structural rule obtained from $r$ by adding $\circ_2$ $2^{k_1 \vee \ldots \vee k_n}$ times in front of $\Upsilon_0'$ by $2^{k_1 \vee \ldots \vee k_n}$. 

$r_c$ is analytic and, for all commutative residuated complete lattice $\lcal$, $\lcal$ satisfies $r$ in the sense of \bf{Definition} \ref{def satisfaction} if and only if $USC(\lcal)$ satisfies $r_c$ in the same sense. 

\end{thm}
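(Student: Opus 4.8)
First I would dispatch the purely syntactic claim. Passing from $r$ to $r_c$ only (i) replaces each structural comma occurring in a premise and in $\Upsilon_0$ by the still-binary symbol reading as $\+$, and (ii) prefixes the unary modal symbol $\circ_2$ to the resulting $\Upsilon_0'$ exactly $K:=k_1\vee\cdots\vee k_n$ times (this is what ``multiplying by $2^K$'' means structurally). Neither operation creates, deletes or duplicates a structure variable, nor does it introduce the formula variable $A$. Consequently the three conditions defining an analytic rule are inherited verbatim: $A$ is still a formula variable and the variables of the new conclusion are still pairwise distinct (Linearity), $A$ still does not occur in the conclusion structure (Separation), and the variables of each new premise are still among those of the new conclusion (Inclusion). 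Hence $r_c$ is analytic.

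\textbf{Set-up for the equivalence.} Since $r$ is a structural rule of $\mathcal{L}_\GL$, each $\Upsilon_i$ is built from structure variables, $\epsilon$ and the comma alone. Write $\hat\Upsilon_i$ for its algebraic reading with comma $=\otimes$ and $\epsilon=\ul 0$, and $\check\Upsilon_i$ for the reading with comma $=\+$; both are monomials in the variables and therefore lie in $E_{lax}\cap E_{colax}$. Unravelling \textbf{Definition} \ref{def satisfaction}, $\lcal\models r$ reads $(\hat\Upsilon_1\le a,\ldots,\hat\Upsilon_n\le a)\Rightarrow\hat\Upsilon_0\le a$ for every valuation, whereas $\USC\models r_c$ reads $(\check\Upsilon_1\ge a,\ldots,\check\Upsilon_n\ge a)\Rightarrow 2^K\check\Upsilon_0\ge a$ in the order of $\USC$, the reversal coming from $\vdash\,\mapsto\,\ge$ as in \textbf{Example} \ref{Exemples d'équivalence}. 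Iterating \textbf{Theorem} \ref{théorème de comparaison} on a monomial with $k$ commas gives, in the order of $\USC$, $\frac{\check\Upsilon}{2^{k}}\le\hat\Upsilon\le\check\Upsilon$: the upper bound is $\otimes\le\+$, and the lower bound follows from $f\otimes g\ge\frac{f\+g}{2}$ together with monotonicity and the identity $2\frac{\cdot}{2}=\id$ of \textbf{Lemma} \ref{calculs}.

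\textbf{Forward implication $\lcal\models r\Rightarrow\USC\models r_c$.} Applying \textbf{Theorem} \ref{de lcal à USC(lcal)} with $\psi_i=\hat\Upsilon_i$, $\phi_i=a$ (premises) and $\phi_0=\hat\Upsilon_0$, $\psi_0=a$ (conclusion), all of which lie in the required classes, upgrades $\lcal\models r$ to $\USC\models r$ in the $\otimes$-reading. Now suppose the premises of $r_c$ hold at a valuation $f$ with cut-value $a$, i.e. $\check\Upsilon_i[f]\ge a$ for all $i$. Applying $\frac{\cdot}{2^{k_i}}$ and the lower comparison yields $\hat\Upsilon_i[f]\ge\frac{\check\Upsilon_i[f]}{2^{k_i}}\ge\frac{a}{2^{k_i}}\ge\frac{a}{2^{K}}$ for every $i$; here the maximum $K$ is precisely what lets a single rescaling dominate all premises simultaneously. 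Feeding $\USC\models r$ with the same $f$ but cut-value $\frac{a}{2^K}$ gives $\hat\Upsilon_0[f]\ge\frac{a}{2^K}$, whence, using the upper comparison $\hat\Upsilon_0\le\check\Upsilon_0$ and $2^K\frac{\cdot}{2^K}=\id$, one gets $2^K\check\Upsilon_0[f]\ge 2^K\hat\Upsilon_0[f]\ge a$, the conclusion of $r_c$.

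\textbf{Converse implication $\USC\models r_c\Rightarrow\lcal\models r$.} Rather than inverting the comparison (which moves the factor $2^K$ the wrong way), I would evaluate $r_c$ only at $0$-indicators. By \textbf{Lemma} \ref{Crl embedding}, $U\mapsto 0_U$ is an $\mathcal{L}_{crl}$-embedding of $\lcal$ into $\USC^{op}$, and by \textbf{Lemma} \ref{calculs} one has $0_U\+0_V=0_{U\otimes V}$, so on indicators the $\+$-reading collapses onto the $\otimes$-reading: $\check\Upsilon_i[0_U]=0_{\hat\Upsilon_i[U]}=\hat\Upsilon_i[0_U]$. Moreover $2\cdot 0_X=0_X$ by \textbf{Lemma} \ref{calculatoire}, so $2^K\check\Upsilon_0[0_U]=0_{\hat\Upsilon_0[U]}$, the scaling being trivial here. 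Thus, given a $\lcal$-valuation $U$ and value $W$ for $A$ making the premises of $r$ true ($\hat\Upsilon_i[U]\le W$), the valuation $f_v=0_{U(v)}$, $a=0_W$ satisfies the premises of $r_c$ (as $0$ is an order-embedding into $\USC^{op}$); then $\USC\models r_c$ gives $0_{\hat\Upsilon_0[U]}\ge 0_W$, i.e. $\hat\Upsilon_0[U]\le W$, the conclusion of $r$. I expect the main obstacle to be the bookkeeping of the middle two paragraphs—fixing the order convention, proving the iterated comparison with the exact exponent $2^{k}$, and recognising that $K=\bigvee_i k_i$ is exactly what makes one rescaling $a\mapsto a/2^K$ serve every premise—whereas the converse becomes short once everything is restricted to $0$-indicators, where both $\+=\otimes$ and $2\cdot=\id$.
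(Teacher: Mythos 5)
Your proof is correct and follows essentially the same route as the paper: analyticity is purely syntactic, the forward direction combines the quasiequation transfer of \textbf{Theorem}~\ref{de lcal à USC(lcal)} with the comparison $\frac{f \+ g}{2} \leq f \otimes g \leq f \+ g$ of \textbf{Theorem}~\ref{théorème de comparaison} (the maximal exponent $K = k_1 \vee \ldots \vee k_n$ letting one rescaling serve all premises), and the converse reduces to $0$-indicators, which is exactly the content of \textbf{Theorem}~\ref{first half} (via \textbf{Lemma}~\ref{Crl embedding}) that the paper invokes. The only cosmetic divergence is that you keep the rule in quasiequation form and rescale the cut-value $a \mapsto \frac{a}{2^{K}}$, whereas the paper first collapses $r$ to the single inequality $\bigwedge[1 \leq i \leq n] \Upsilon_i^\bullet \leq 2^{K}\Upsilon_0^\bullet$ and absorbs the differing exponents through $\frac{h \wedge g}{2^{k \vee k'}} \leq \frac{h}{2^{k}} \wedge \frac{g}{2^{k'}}$ — the same mechanism, applied in a different order (your iterated lower comparison implicitly uses the subdistributivity $\frac{h \+ g}{2} \leq \frac{h}{2} \+ \frac{g}{2}$, which the paper flags explicitly).
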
 

\begin{proof} 

The analycity of $r_c$ is obvious. 

Let $\lcal$ be a commutative residuated complete lattice and, for all $\mathcal{L}_\GL$-term $\Upsilon$, let $\Upsilon^\circ$ denote the $\mathcal{L}_{crl}$-term where $\epsilon$ and $,$ are respectively replaced by $\top$ and $\otimes$, let $\Upsilon^\bullet$ denote the $\mathcal{L}$-term where $\epsilon$ and $,$ are respectively replaced by $\ul 0$ and $\+$. 

Assume that $\lcal$ satisfies $r$, i.e. for all $U \colon \vcal \rightarrow \lcal$ and $V \in \lcal$, if $\Upsilon_1^\circ[U] \leq V$ and $\ldots$ and $\Upsilon_1^\circ[U] \leq V$, then $\Upsilon_0^\circ[U] \leq V$. $\lcal \models \Upsilon_1^\circ \vee \ldots \vee \Upsilon_n^\circ \geq \Upsilon_0^\circ$. Since, for all $0 \leq i \leq n$ $\Upsilon_i$ is lax and colax, from \bf{Theorems} \ref{de lcal à USC(lcal)}, \ref{théorème de comparaison} and the remark that, for all $h \et g \in USC(\lcal)$, $\frac{h \+ g}{2} \leq \frac{h}{2} \+ \frac{g}{2}$, we deduce that, for all $f \colon \vcal \rightarrow USC(\lcal)$, \linebreak $\frac{\Upsilon_1^\bullet[f]}{2^{k_1}} \wedge \ldots \wedge \frac{\Upsilon_n^\bullet[f]}{2^{k_n}} \leq \Upsilon_0^\bullet[f]$. Finally, since for all $h \et g \in USC(\lcal)$ and $k \et k' \in \nn$, $\frac{h \wedge g}{2^{k \vee k'}} \leq \frac{h}{2^k} \wedge \frac{g}{2^{k'}}$, for all $f \colon \vcal \rightarrow \lcal$, $\frac{\Upsilon_0^\bullet[f] \wedge \ldots \wedge \Upsilon_n^\bullet[f]}{2^{k_1 \vee \ldots \vee k_n}} \leq \Upsilon_0^\bullet[f]$, i.e. $\Upsilon_0^\bullet[f] \wedge \ldots \wedge \Upsilon_n^\bullet[f] \leq 2^{k_1 \vee \ldots \vee k_n} \Upsilon_0^\bullet[f]$. 

Hence $USC(\lcal)$ satisfies $r_c$. 

Conversely, assume that $USC(\lcal)$ satisfies $r_c$. Let $U \colon \vcal \rightarrow \lcal$. 

Then, $\Upsilon_0^\bullet[0_U] \wedge \ldots \wedge \Upsilon_n^\bullet[0_U] \leq 2^{k_1 \vee \ldots \vee k_n} \Upsilon_0^\bullet[0_U] = \Upsilon_0^\bullet[0_U]$, so, according to \bf{Theorem} \ref{first half}, \linebreak $\Upsilon_0^\circ[U] \vee \ldots \vee \Upsilon_n^\circ[U] \geq \Upsilon_0^\bullet[U]$. Hence $\lcal$ satisfies $r$. 
\end{proof} 

\section{Intuitionistic continuous logic} \label{section Intuitionistic continuous logic} 

We give a study of the case where the underlying commutative residuated complete lattice $\lcal$ is a locale. In this special case, we actually axiomatize algebras whose quotient by the $\simeq$ relation (\bf{Notation} \ref{preceq}) embeds into some $USC(X)$ for some topological space $X$. To do this, we first need some preliminary results about ordered topological spaces. Second, we provide an axiomatisation for the algebras of the form $USC(\lcal)$ for $\lcal$ a locale, and, third, we show that the theory of the $USC(\lcal)$---for $\lcal$ a locale---is the same as the one of the $USC(X)$---for $X$ a topological space. To achieve this task, we build from any locale $\lcal$ a topological space $X$ and an embedding from $USC(\lcal)$ to $USC(X)$. Finally, we give a sequent-style deductive system admitting the cut rule for these algebras. 

\subsection{Topological Preliminaries} \label{subsection Topological Preliminaries} 

This subsection is independent of the preceding sections. We give some results about compact ordered topological spaces. Let $(X, \leq)$ be an ordered topological space. 

\begin{nota} 

For all $A \subset X$, let $\downarrow A = \{x \in X \, | \, \exists a \in A \tq x \leq a\}$ and \linebreak $\uparrow A = \{x \in X \, | \, \exists a \in A \tq x \geq a\}$. 

For all $f \colon X \rightarrow [0,1]$, let's denote by $\downarrow f$ the smallest non-increasing function greater than $f$, $\uparrow f$ the smallest non-decreasing function greater than $f$, $\bar f$ is the smallest lower semi-continuous function greater than $f$ and $\ovst{f}{\circ}$ the greatest upper semi-continuous function smaller than $f$. 

\end{nota} 

The next lemma gives caracterisations of the notion of intuitionistic space (cf. \bf{Definition} \ref{intuitionistic topological space}), while drawing a parallel between being totally disconnected and being Hausdorff, between open sets and \lsc functions, and between clopen sets and continuous functions. 


\begin{lem} \label{gros lemme} $ $ 

\adjustbox{max width = \textwidth}{
\begin{tikzcd}
	&&& {\forall U \subset X \texte{open} \ol{(\uparrow U)^c} \subset (\uparrow U)^c} \\
	{\forall \texte{clopen} C \subset X \; \uparrow C \texte{is clopen} } && {\forall U \subset X \texte{open} \uparrow U \texte{is open}} && {\forall A \subset X \; \downarrow \bar A \subset \ol{\downarrow A}} \\
	\\
	\\
	\\
	{\forall f \in C^0(X) \; \uparrow f \in C^0(X) } && {\forall f \in LSC(X) \; \uparrow f \in LSC(X) } && {\forall f \colon X \rightarrow [0,1] \; \downarrow \bar f \leq \ol{\downarrow f} } \\
	&&& {\forall f \in LSC(X) \; \ol{1 \- \uparrow f} \leq 1 \- \uparrow f}
	\arrow["\target{(1)}", Rightarrow, from=2-3, to=2-5]
	\arrow["\target{(1 \; bis)}"', Rightarrow, from=2-5, to=1-4]
	\arrow["\target{(1 \; ter)}"', Rightarrow, from=1-4, to=2-3]
	\arrow["\texte{If} X \texte{is totally disconnected}", "\target{(2)}"', curve={height=-30pt}, Rightarrow, from=2-1, to=2-3]
	\arrow["\texte{If} X \texte{is compact Hausdorff}", "\target{(2 \; bis)}"', curve={height=-30pt}, Rightarrow, from=2-3, to=2-1]
	\arrow["\target{(3)}", Rightarrow, 2tail reversed, from=2-5, to=6-5]
	\arrow["\target{(4)}", Rightarrow, 2tail reversed, from=6-3, to=2-3]
	\arrow["\target{(5)}"', Rightarrow, from=6-3, to=6-5]
	\arrow["\target{(5 \; bis)}", Rightarrow, from=6-5, to=7-4]
	\arrow["\target{(5 \; ter)}", Rightarrow, from=7-4, to=6-3]
	\arrow["\texte{if} X \texte{is Hausdorff}", "\target{(6)}"', curve={height=-30pt}, Rightarrow, from=6-1, to=6-3]
	\arrow["\target{(6 \; bis)}"', "\texte{if} X \texte{is compact Hausdorff}", curve={height=-30pt}, Rightarrow, from=6-3, to=6-1]
\end{tikzcd} 
} 
\begin{center} \target{\textit{Diagram 1}} \end{center} 

\link{(7)} Moreover, if $X$ is compact and Hausdorff, for all $A \subset X$, $\ol{\downarrow A} \subset \downarrow \bar A$ and, for all $f\colon X \rightarrow [0,1]$, $\ol{\downarrow f} \leq \downarrow \bar f$. 

\end{lem}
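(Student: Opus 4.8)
The plan is to treat the two assertions of the statement separately: the set inclusion $\ol{\downarrow A} \subset \downarrow \bar A$ is the genuine content, and the functional inequality $\ol{\downarrow f} \leq \downarrow \bar f$ is then deduced from it through the same set--function dictionary that underlies the equivalence $\link{(3)}$. The whole argument rests on a single topological input, available precisely because $X$ is compact and $\leq$ is closed (\bf{Definition} \ref{Hausdorff}): \emph{the down-set $\downarrow K$ of a compact set $K$ is closed}. This is standard in Nachbin's theory of ordered spaces \citup{nachbinTopologyOrder1965}, and it is exactly the place where both hypotheses are used.

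First I would isolate and prove this lemma. Writing $R = \{(x,a) \in X \times X \,|\, x \leq a\}$, which is closed since $X$ is Hausdorff, one has $\downarrow K = \pi_1\big((X \times K) \cap R\big)$, where $\pi_1$ is the first projection. Because $K$ is compact, the restriction of $\pi_1$ to $X \times K$ is a closed map (projection along a compact factor is closed); since $(X \times K) \cap R$ is closed, its image $\downarrow K$ is closed.

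With this in hand the set statement is immediate. As $X$ is compact, $\bar A = \ol A$ is closed, hence compact, so $\downarrow \bar A$ is closed by the lemma. From $A \subset \bar A$ we get $\downarrow A \subset \downarrow \bar A$, and passing to closures yields $\ol{\downarrow A} \subset \ol{\downarrow \bar A} = \downarrow \bar A$, which is the desired inclusion.

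For the function statement I would not redo any topology but invoke the dictionary already set up for $\link{(3)}$. There the inclusion $\downarrow \bar A \subset \ol{\downarrow A}$ (for all $A$) was shown equivalent to the inequality $\downarrow \bar f \leq \ol{\downarrow f}$ (for all $f$), by translating through the strict super-level sets $\{f > t\}$, using $\{\downarrow f > t\} = \downarrow \{f > t\}$ (which follows from $(\downarrow f)(x) = \sup_{y \geq x} f(y)$) together with the level-wise description of $\bar{\,\cdot\,}$ as topological closure and the criterion $g \leq h \Leftrightarrow \forall t \; \{g > t\} \subset \{h > t\}$. The very same translation, with every inclusion reversed, shows that $\ol{\downarrow A} \subset \downarrow \bar A$ for all $A$ is equivalent to $\ol{\downarrow f} \leq \downarrow \bar f$ for all $f$; hence the latter follows at once from the set inclusion just established. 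The substantive obstacle of the whole result is therefore the closed-down-set lemma: it is what genuinely consumes both compactness of $X$ and closedness of $\leq$, and without compactness the relevant projection need not be closed, so $\downarrow \bar A$ need not be closed and the inclusion can fail; the level-set reduction for $f$ is then only routine bookkeeping.
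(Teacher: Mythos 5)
There is a genuine gap, and it is one of scope: the statement you were asked to prove is the whole of Lemma \ref{gros lemme}, i.e.\ the twelve implications of \textit{Diagram 1} --- the cycle (1), (1 bis), (1 ter) relating intuitionisticity to $\forall U$ open $\ol{(\uparrow U)^c}\subset (\uparrow U)^c$ and to $\forall A\; \downarrow \bar A \subset \ol{\downarrow A}$, the clopen statements (2), (2 bis), the set--function equivalences (3) and (4), the function-level cycle (5), (5 bis), (5 ter), and (6), (6 bis) --- \emph{plus} the final claim (7). Your proposal reads the lemma as consisting only of ``the two assertions'' of (7) and proves nothing else; in particular you \emph{invoke} ``the dictionary already set up for (3)'' to transfer the set inclusion to the function inequality, but (3) is itself one of the arrows to be proven, so in a self-contained proof you must establish that transfer. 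The paper does so via \textbf{Lemma} \ref{Adjunction morphisms}: the identities $\ol{\1_A} = \1_{\bar A}$, $\downarrow \1_A = \1_{\downarrow A}$, and the decomposition $f = \bigvee_{q \in [0,1]} \1_{f^{-1}((q,1])} \dotm (1 \- q)$, together with the fact that $\downarrow$ and closure preserve suprema and commute with subtraction of constants. Note also that your level-set ``routine bookkeeping'' is subtler than stated: the regularization does not act level-by-level, since $\{\bar f \geq t\} = \bigcap_{s < t} \ol{\{f > s\}}$, so the reversed dictionary needs a small intersection argument over $s < t$ (it does go through, but it is not automatic, which is presumably why the paper prefers the indicator-function decomposition).

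The part you did prove, (7), is correct and is essentially the paper's own argument. Your closed-down-set lemma is exactly the paper's \textbf{Lemma} \ref{Outils} \ref{outils 2} (stated there for $\uparrow F$ with $F$ closed; apply it to the opposite order), with a mildly different justification: the paper deduces that $\uparrow F = \pi_1(\geq \cap\, (X \times F))$ is compact, hence closed in the (topologically Hausdorff) space $X$, whereas you use that projection along a compact factor is a closed map --- a marginally more economical variant that does not need the ambient space compact for the lemma itself, only for $\bar A$ to be compact. From there your derivation $\downarrow A \subset \downarrow \bar A$, $\downarrow \bar A$ closed, hence $\ol{\downarrow A} \subset \downarrow \bar A$, matches the paper verbatim. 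So the verdict is: (7) is fine, but the remaining eleven-plus arrows of the lemma --- which are the bulk of its content and are what the rest of Section \ref{subsection Topological Preliminaries} actually uses (e.g.\ (4) and (6 bis) in \textbf{Corollary} \ref{the smallest non-decreasing usc function greater}) --- are entirely missing from your proposal.
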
 

To prove \textbf{Lemma} \ref{gros lemme}, we need to prove some lemmas first. 

First, from \textbf{Lemma} \ref{Adjunction morphisms}, we can deduce that \link{(1)}, \link{(1 \; bis)}, \link{(1 \; ter)} and \link{(2 \; bis)} are respectively equivalent to \link{(5)}, \link{(5 \; bis)}, \link{(5 \; ter)} and \link{(6 \; bis)}, and we can prove \link{(3)} and \link{(4)}. 

\begin{lem} \label{Adjunction morphisms} $ $ 

\begin{enumerate} 

\item \label{adjunction morphisms 1} For all $A \subset X$, $\ol{\1_A }= \1_{\bar A}$, $\mathring{\1_A} = \1_{\mathring A}$, $\downarrow \1_A = \1_{\downarrow A}$ and $\uparrow \1_\A = \1_{\uparrow A}$, so: 

\begin{enumerate}[label = (\arabic{enumi}.\alph{enumii})] 

\item $A$ is open \ssi $\1_A$ is \lsc, 

\item $A$ is closed \ssi $\1_A$ is \usc, 

\item $A$ is upward closed \ssi $\1_A$ is non-decreasing, 

\item $A$ is downward closed \ssi $\1_A$ is non-increasing. 

\end{enumerate} 

\item \label{adjunction morphisms 2} For all $f \colon X \rightarrow [0,1]$, $f = \bigvee[q \in {[0,1]}] \1_{f^{-1}( (q,1] )} \;\dotm (1 \- q)$. 

\end{enumerate} 

\end{lem}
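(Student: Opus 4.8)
The plan is to treat Lemma \ref{Adjunction morphisms} as four instances of a single principle, followed by a pointwise "layer-cake" identity. For part \ref{adjunction morphisms 1}, I would first record that $A \mapsto \1_A$ is an order-embedding of $(\mathcal{P}(X), \subseteq)$ into the pointwise-ordered functions $X \rightarrow [0,1]$, with image exactly the $\{0,1\}$-valued functions. Each of the four identities then says that this embedding carries an extremal set-theoretic closure operation to the matching extremal regularization of functions. Concretely, to prove $\uparrow \1_A = \1_{\uparrow A}$ I would verify the two defining properties of the smallest non-decreasing majorant: $\1_{\uparrow A}$ is non-decreasing and dominates $\1_A$, and conversely any non-decreasing $g \geq \1_A$ satisfies $g(x) \geq g(a) \geq 1$ whenever $x \geq a \in A$, so $g \geq \1_{\uparrow A}$. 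The identity $\downarrow \1_A = \1_{\downarrow A}$ is the order-dual, while $\ol{\1_A} = \1_{\bar A}$ and $\round{\1_A} = \1_{\mathring A}$ are the topological analogues: for the closure, $\1_{\bar A}$ is a majorant of the relevant semicontinuity type, and it is the least one because any such majorant $g$ has a closed super-level set $\{g \geq 1\} \supseteq A$, hence $\supseteq \bar A$; dually, $\1_{\mathring A}$ is the greatest minorant of the dual type, using that $\{g > 0\} \subseteq A$ is open and therefore contained in $\mathring A$.

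The four biconditionals \textup{(1.a)--(1.d)} then follow immediately, and in two equivalent ways. Directly, $\1_A$ is \lsc{} iff every set $\{\1_A > q\}$ is open, which (the only nontrivial threshold being $q \in [0,1)$) says exactly that $A$ is open; the \usc, non-decreasing and non-increasing cases are read off the level sets in the same fashion. Alternatively, they drop out of part \ref{adjunction morphisms 1}: since $A \mapsto \1_A$ is injective, $A$ is open iff $A = \mathring A$ iff $\1_A = \1_{\mathring A} = \round{\1_A}$, i.e. iff $\1_A$ coincides with its own regularization, which is precisely the semicontinuity condition, and likewise for the other three.

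For part \ref{adjunction morphisms 2} I would argue pointwise. Fixing $x \in X$ and $q \in [0,1]$, the $q$-th term $\1_{f^{-1}((q,1])} \dotm (1 \- q)$ evaluates at $x$ to $q$ when $f(x) > q$ and to $0$ otherwise, since $1 \dotm (1 \- q) = q$ while $0 \dotm (1 \- q) = 0$. Taking the pointwise supremum over $q$ therefore yields $\sup\{q \in [0,1] \; | \; q < f(x)\} = f(x)$, which is the claimed equality. I expect no serious obstacle here: the lemma is essentially bookkeeping. The only points demanding care are the correct semicontinuity directions for the regularizations $\ol{\cdot}$ and $\round{\cdot}$ together with the fact that the relevant classes of functions are closed under the suprema, respectively infima, needed for these extremal functions to exist, and the observation that $A \mapsto \1_A$ is monotone, so that the extremal properties are genuinely transported across the embedding.
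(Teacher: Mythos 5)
Your proof is correct and takes essentially the same route as the paper: the paper declares part (1) obvious and proves part (2) by exactly your pointwise computation, first checking $f \geq \1_{f^{-1}((q,1])} \dotm (1 \- q)$ for every $q$ and then noting that the $q$-th term evaluates to $q$ at any $x$ with $f(x) > q$, so the supremum is $f(x)$. Your fuller treatment of part (1), including the flag that the closure/interior envelopes must be read with the usc-majorant/lsc-minorant orientation (so that closedness of superlevel sets, respectively openness of strict sublevel sets, makes the indicator of $\bar A$, respectively $\mathring A$, genuinely extremal), is a sound elaboration of what the paper leaves unproved.
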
 

\begin{proof} 

Point \ref{adjunction morphisms 1} is obvious. 

Let $f\colon X \rightarrow [0,1]$. 

For all $q \in [0,1]$ and $x \in X$, if $\1_{f^{-1}( (q,1] )}(x) = 1$, then $x \in f^{-1}( (q,1] )$, so \linebreak $f(x) > q = \1_{f^{-1}( (q,1] )} \dotm (1 \- q)$, so $f \geq \1_{f^{-1}( (q,1] )} \;\dotm (1 \- q)$. For all $x \in X$ and $q < f(x)$, $\1_{f^{-1}( (q,1] )}(x) \dotm (1 \- f(x)) = q$, so $f(x) = \bigvee[q \in {[0,1]}] \1_{f^{-1}( (q,1] )} \;\dotm (1 \- q)$. 
\end{proof}

Second, we prove \link{(1)}, \link{(1 \; bis)}, \link{(1 \; ter)}, \link{(2)}, \link{(2 \; bis)} and \link{(6)} thanks to \textbf{Lemma} \ref{Outils}. 

\begin{lem} \label{Outils} $ $ 

\begin{enumerate} 

\item \label{outils 1} For all $A \et B \subset X$, $\uparrow A \subset B^c \Leftrightarrow A \subset (\downarrow B)^c$. 

\item \label{outils 2} If $X$ is compact Hausdorff, for all $F \subset X$ closed, $\uparrow F$ is closed. 

\item \label{outils 3} if $X$ is compact and Hausdorff, $LSC(X) = \{\bigvee A, A \subset C^0(X)\}$. 

\end{enumerate} 

\end{lem}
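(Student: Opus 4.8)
The plan is to treat the three parts separately, since only the last two use topology while the first is pure order theory.

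For part \ref{outils 1}, I would simply unwind both inclusions into the same existential statement. The inclusion $\uparrow A \subset B^c$ is equivalent to $\uparrow A \cap B = \emptyset$, i.e.\ to the nonexistence of an $x \in B$ lying above some $a \in A$; symmetrically $A \subset (\downarrow B)^c$ means $A \cap \downarrow B = \emptyset$, i.e.\ the nonexistence of an $a \in A$ lying below some $b \in B$. Both negations are literally ``$\exists a \in A, \exists b \in B$ with $a \leq b$'', so the equivalence is immediate. This is just the Galois connection between $\uparrow$ and $\downarrow$ and needs no hypothesis on $X$.

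For part \ref{outils 2}, the idea is the standard Nachbin projection argument. Writing $R = \{(x,y) : x \leq y\}$ for the order relation, the Hausdorff hypothesis (\textbf{Definition} \ref{Hausdorff}) says $R$ is closed in $X \times X$; note that this also forces the topology to be $T_2$, since the diagonal $R \cap R^{-1}$ is then closed. Given $F$ closed, $F \times X$ is closed, so $R \cap (F \times X)$ is a closed, hence compact, subset of the compact space $X \times X$. Its image under the second projection $p_2\colon X \times X \rightarrow X$ is compact, therefore closed because $X$ is Hausdorff; and this image is exactly $\uparrow F = \{y : \exists x \in F, x \leq y\}$. So $\uparrow F$ is closed.

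For part \ref{outils 3}, one inclusion is formal: $C^0(X) \subset LSC(X)$ and $LSC(X)$ is stable under arbitrary pointwise suprema, so every $\bigvee A$ with $A \subset C^0(X)$ is lower semi-continuous. The content is the converse, that each $f \in LSC(X)$ equals the supremum of the continuous functions below it. I would deduce this from the already-proven ordered Urysohn result (\textbf{Lemma} \ref{cor Urysohn croissant}) by equipping $X$ with the equality order: then $\leq$ is closed precisely because $X$ is Hausdorff, every set is up- and down-closed so $C^0_\nearrow(X) = C^0(X)$ and $USC_\nearrow(X) = USC(X)$, and \textbf{Lemma} \ref{cor Urysohn croissant} gives $USC(X) = \{\bigwedge A : A \subset C^0(X)\}$. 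Applying this to $1 \dotm f \in USC(X)$ and taking complements via $g \mapsto 1 \dotm g$, which swaps $USC$ and $LSC$ and exchanges infima with suprema, yields $f = \bigvee\{1 \dotm g : g \in A\}$, a supremum of continuous functions. Alternatively one can argue directly: for $x \in X$ and $q < f(x)$, pick $r$ with $q < r < f(x)$ and apply Urysohn (valid since a compact Hausdorff space is normal) to the disjoint closed sets $\{x\}$ and $f^{-1}([0,r])$ to obtain a continuous $h$ with $h(x) = 1$ and $h = 0$ on $f^{-1}([0,r])$; then $r h \leq f$ and $(rh)(x) = r > q$.

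The main obstacle is concentrated in part \ref{outils 3}: parts \ref{outils 1} and \ref{outils 2} are routine once the projection trick is in place, but \ref{outils 3} requires the genuine approximation of semi-continuous functions by continuous ones. The cleanest route is to recognize it as a corollary of \textbf{Lemma} \ref{cor Urysohn croissant} applied with the equality order, which avoids re-running the Stone--Weierstrass/Urysohn machinery; the only care needed is the bookkeeping of the complementation $g \mapsto 1 \dotm g$ and the verification that the equality order meets the hypotheses of that lemma.
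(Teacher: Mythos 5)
Your proposal is correct and takes essentially the same route as the paper: part (1) by direct unwinding of the two inclusions, part (2) by the Nachbin projection argument on the closed relation (the paper projects $\geq \cap (X \times F)$ onto the first coordinate and implicitly uses that a closed order forces $X$ to be $T_2$ when concluding ``compact hence closed'', a point you make explicit), and part (3) by dualizing \textbf{Lemma} \ref{cor Urysohn croissant}. The only divergence is in the part-(3) bookkeeping: the paper passes to the opposite order $X^{op}$, whereas you use the equality order together with the complementation $g \mapsto 1 \dotm g$, which is arguably the more careful reduction, since read literally the paper's $USC(X^{op})$ would consist of non-increasing u.s.c.\ functions rather than all of $LSC(X)$, and your version (plus the direct Urysohn alternative you sketch) closes that gap cleanly.
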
 

\begin{proof} \begin{enumerate} 

\item Assume that $\uparrow A \subset B^c$ and let $x \in A$. For all $y \in B$, if $x \leq y$, then $y \in \uparrow A$ and so $y \in B^c$. Hence $x \in (\downarrow B)^c$. 

Assume that $A \subset (\downarrow B)^c$ and let $x \in \uparrow A$. There exists $y \in A$ \tq $x \geq y$. If $x \in B$, then $y \in \downarrow A$ and then $y \not\!\!\in A$. Hence $x \in B^c$. 

\item Let $F \subset X$ closed. Denote by $\pi_1$ the projection on the first coordinate from $X^2$ to $X$. 

$\uparrow F = \pi_1(\geq \cap (X \times F))$. $\geq$ is closed in $X^2$ which is compact so $\geq$ is itself compact, and thus $X \times F$ is also compact. Since $\pi_1$ is continuous, $\uparrow F$ is compact and thus closed in $X$. 

\item According to \bf{Lemma} \ref{cor Urysohn croissant}, if we denote by $X^{op}$ the opposite order of $X$, \linebreak $USC(X^{op}) = \{\bigwedge A, A \subset C^0(X)\}$, which is actually $LSC(X) = \{\bigvee A, A \subset C^0(X)\}$. 

\end{enumerate} 
\end{proof} 

\begin{proof}[Proof of \textbf{Lemma} \ref{gros lemme}] We only have to prove \link{(1)}, \link{(1 \; bis)}, \link{(1 \; ter)}, \link{(2)}, \link{(2 \; bis)}, \link{(6)} and  \link{(7)} thanks to \textbf{Lemma} \ref{Adjunction morphisms}. As an example of application of \bf{Lemma} \ref{Adjunction morphisms}, we also prove \link{(3)}. 

\link{(1)}: Let $A \subset X$. Let $U = (\ol{\downarrow A})^c$. 

$\downarrow A \subset \ol{\downarrow A}$, so $U = (\ol{\downarrow A})^c \subset (\downarrow A)^c$, which gives, according to \textbf{Lemma} \ref{Outils} \ref{outils 1}, $\uparrow U \subset A^c$. By hypothesis, $\uparrow U \subset \round{(A^c)} = (\bar A)^c$. Thanks to \textbf{Lemma} \ref{Outils} \ref{outils 1}, $U \subset (\downarrow \bar A)^c$, and so $\downarrow \bar A \subset \ol{\downarrow A}$. 

\link{(1 \; bis)}: Let $U \subset X$ be open. 

$(\uparrow U)^c \subset (\uparrow U)^c$, so, according to \textbf{Lemma} \ref{Outils} \ref{outils 1} applied to the opposite order, $\downarrow((\uparrow U)^c) \subset U^c$, and thus $\downarrow\left(\ol{(\uparrow U)^c}\right) \subset \ol{\downarrow((\uparrow U)^c)} \subset U^c$, by hypothesis. Hence $U \subset \left(\downarrow\ol{(\uparrow U)^c}\right)^c$. Again by \textbf{Lemma} \ref{Outils} \ref{outils 1}, $\uparrow U \subset \left(\ol{(\uparrow U)^c}\right)^c$, i.e. $\ol{(\uparrow U)^c} \subset (\uparrow U)^c$.

\link{(1 \; ter)}: Let $U \subset X$ be open. $\ol{(\uparrow U)^c} \subset (\uparrow U)^c$, i.e. $(\uparrow U)^c$ is closed, i.e. $\uparrow U$ is open. 

\link{(2)}: Assume that $X$ is totally disconnected and that, for all clopen $C$, $\uparrow C$ is clopen. Then, for all $U \subset X$ open, $\uparrow U = \bigcup[C \subset U \texte{clopen}] \uparrow C$, which is open. 

\link{(2 \; bis)}: Assume that $X$ is compact, Hausdorff and intuitionistic. For all clopen $C$ of $X$, $\uparrow C$ is open, and, thanks to \textbf{Lemma} \ref{Outils} \ref{outils 2}, $\uparrow C$ is also closed. 

\link{(3)}: If, for all $f \colon X \rightarrow [0,1]$, $\downarrow \bar f \leq \ol{\downarrow f}$, then, for all $A \subset X$, $\1_{\downarrow \bar A} = \downarrow \ol{\1_A} \leq \ol{\downarrow \1_A} = \1_{\ol{\downarrow A}}$, so $\downarrow \bar A \subset \1_{\ol{\downarrow A}}$. 

Assume now that, for all $A \subset X$, $\downarrow \bar A \subset \1_{\ol{\downarrow A}}$ and let $f \colon X \rightarrow [0,1]$. Since $\downarrow$ and $\bar{\white{A}}$ preserve upper bounds, and commute with subtraction of constants, 

\begin{align*} 
\downarrow \bar f &= \bigvee[q \in {[0,1]}] \downarrow \ol{\1_{f^{-1}( (q,1] ) \dotm (1 \- q)}} \\ 
&= \bigvee[q \in {[0,1]}] \downarrow \ol{\1_{f^{-1}( (q,1] )}} \;\dotm (1 \- q) \\ 
&=  \bigvee[q \in {[0,1]}] \1_{\downarrow \ol{f^{-1}( (q,1] )}} \;\dotm (1 \- q) \\ 
&\leq \bigvee[q \in {[0,1]}] \1_{\ol{\downarrow f^{-1}( (q,1] )}} \;\dotm (1 \- q) \\ 
&=  \bigvee[q \in {[0,1]}] \ol{\downarrow\1_{f^{-1}( (q,1] )}} \;\dotm (1 \- q) \\ 
&=  \bigvee[q \in {[0,1]}] \ol{\downarrow\1_{f^{-1}( (q,1] )} \;\dotm (1 \- q)} \\ 
&=  \ol{\downarrow f} 
\end{align*} 

\link{(6)}: According to \textbf{Lemma} \ref{Outils} \ref{outils 3}, for all $f \in LSC(X)$, $f = \bigvee[\ust{g \in C^0(X)}{g \leq f}] g$, so $\uparrow f = \bigvee[\ust{g \in C^0(X)}{g \leq f}] \uparrow g$, which is \lsc by \bf{Lemma} \ref{Outils} \ref{outils 3} and because, for all $g \in C^0(X)$, \linebreak $\uparrow g \in C^0(X) \subset LSC(X)$. 

\link{(7)}: The two conclusions of \link{(7)} are equivalent, thanks to \textbf{Lemma} \ref{Adjunction morphisms}. Let us prove the first one. 

Assume $X$ is compact and Hausdorff. Let $A \subset X$. 

$\bar A$ is closed, so, according to \textbf{Lemma} \ref{Outils} \ref{outils 2}, $\uparrow \bar A$ is also closed. Since $\uparrow A \subset \uparrow \bar A$, $\ol{\uparrow A} \subset \uparrow \bar A$. 
\end{proof} 

\begin{cor} \label{the smallest non-decreasing usc function greater} 

For all compact Hausdorff ordered topological space $X$ such that the upward closure of an open is open, and $f \in C^0(X)$, $\uparrow f$ is the smallest non-decreasing \usc function greater than $f$. 

\end{cor}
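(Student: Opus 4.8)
The plan is to deduce the statement directly from \textbf{Lemma} \ref{gros lemme}, reserving an actual argument only for the semi\-continuity of $\uparrow f$. By the very definition of $\uparrow f$ it is the smallest non-decreasing function with $\uparrow f \geq f$; in particular $\uparrow f$ is non-decreasing and dominates $f$. Hence two points remain: first, that $\uparrow f$ is \usc, so that it actually belongs to the class of non-decreasing \usc functions lying above $f$; and second, that it is the least member of that class. The second point is formal: if $g$ is any non-decreasing \usc function with $g \geq f$, then $g$ is in particular a non-decreasing function above $f$, so $g \geq \uparrow f$ by minimality of $\uparrow f$ in the larger class. Thus everything reduces to showing that $\uparrow f$ is \usc.

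For this I would run the relevant implications of the diagram in \textbf{Lemma} \ref{gros lemme}. Since $f \in C^0(X)$, it is lower semi\-continuous, i.e. $f \in LSC(X)$. Because $X$ is intuitionistic, the equivalence \link{(4)} between ``$X$ is intuitionistic'' and ``$\forall f \in LSC(X)\ \uparrow f \in LSC(X)$'' yields $\uparrow f \in LSC(X)$. Now invoking the hypothesis that $X$ is moreover compact and Hausdorff, the implication \link{(6 \; bis)} upgrades this to $\forall f \in C^0(X)\ \uparrow f \in C^0(X)$; applied to our $f$ it gives $\uparrow f \in C^0(X)$. A continuous function is in particular upper semi\-continuous, so $\uparrow f$ is \usc, which is exactly what was missing.

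Combining the two paragraphs, $\uparrow f$ is a non-decreasing \usc function with $\uparrow f \geq f$, and by the formal minimality argument it is the smallest such function, which is the assertion of the corollary. The only genuine content lies in the \usc step, and that is entirely delegated to \textbf{Lemma} \ref{gros lemme}; so I do not expect a real obstacle here beyond keeping track of which arrows of the diagram are unconditional (here intuitionicity, used via \link{(4)}) and which require the compactness and Hausdorff hypotheses (used via \link{(6 \; bis)} to pass from preservation of lower semi\-continuity to preservation of continuity). I would note in passing that one could instead reach ``$\uparrow f$ is \usc'' without compactness, by chaining to the equivalent property $\ol{1 \- \uparrow f} \leq 1 \- \uparrow f$ on the lower-right of the diagram; but since the corollary is stated for compact Hausdorff $X$, routing through continuity of $\uparrow f$ is the most economical presentation.
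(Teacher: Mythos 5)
Your proof is correct and follows the paper's own argument exactly: both establish that $\uparrow f$ is continuous (hence \usc) via \textbf{Lemma} \ref{gros lemme} \link{(4)} and \link{(6 \; bis)}, and then conclude by the defining minimality of $\uparrow f$ among non-decreasing functions above $f$. Your closing aside about an alternative, compactness-free route to upper semicontinuity through the node $\ol{1 \- \uparrow f} \leq 1 \- \uparrow f$ is a valid observation but not needed here.
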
 

\begin{proof} 

Let $X$ be an compact Hausdorff ordered topological space $X$ such that the upward closure of an open is open and $f \in C^0(X)$. According to \textbf{Lemma} \ref{gros lemme} 4 and 6 bis, $\uparrow f$ is continuous and thus upper semi-continuous. Moreover, for all non-decreasing upper semi-continuous function $g \geq f$, since $g$ is non-decreasing, $g \geq \uparrow f$, which proves the corollary. 
\end{proof} 

\subsection{Algebraic axiomatisation of IC-algebras} \label{subsection algebraic axiomatisation of IC-algebras} 

We are here aiming at an axiomatisation of the class of $\USC$ for $\lcal$ a locale. 

\begin{nota} 

We remind the reader of \bf{Notation} \ref{topological space} : for all ordered topological space $(X, \leq)$, we denote by $C_\nearrow^0(X)$ the set of continuous increasing functions from $X$ to $[0,1]$, by $C^0(X)$ the set of continuous functions from $X$ to $[0,1]$ and by $USC_\nearrow(X)$ the set of non-decreasing upper semi-continuous functions from $X$ to $[0,1]$. 

\end{nota} 

\begin{defi} 

We denote by $IC$ the class whose elements are the $\USC$ for $\lcal$ a locale and call these algebras \emph{Ituitionnistic Continuous Algebras}. 

\end{defi} 

\subsubsection{Review of the algebra $USC(\lcal)$} 

In the case where $\lcal$ is a locale, $\USC$ has nicer properties (\bf{Theorem} \ref{int formule formules}). 

Let $\lcal$ be a locale. $\lcal$ is a commutative residuated complete lattice, so $USC(\lcal)$ has an $\mathcal{L}$-structure. The lower bounds of two upper semi-continuous functions is simply taken pointwise and the sum and difference of two upper semi-continuous functions $f$ and $g$ from $\lcal$ to $[0,1]$ are $$\fct[f \+ g][{{{[0,1]}}}, \lcal]{q, \underset{p < q}{\bigvee} f(\< p) \wedge g(q \- p)} \et \fct[f \- g][{{{[0,1]}}}, \lcal]{q, \underset{p < q}{\bigvee} \, \underset{r \in [0,1]}{\bigwedge} g(r \- p) \nrightarrow f(\< r)}.$$ 

The supremum of $f \et g \in \USC$ is given by, for all $q \in [0,1]$, $(f \vee g)(q) = \bigvee[p < q] f(p) \wedge g(p)$. As in \bf{Lemma} \ref{calculatoire}, for all $f \in \USC$, $2 f = f \circ \frac{\cdot}{2}$ and $j_\ast(f) = f \circ j$. $f \mapsto \frac{f}{2}$ and $f \mapsto j(f)$ are respectively left adjoint to $j_\ast$ and $2 \cdot$. 

\begin{thm} \label{int formule formules} 

Let $\phi[v]$ and $\psi[v]$ be terms in the language $\Lusc$. 

Then, if $[0,1] \models \phi \leq \psi$, $\USC \models \phi \leq \psi$. 

\end{thm}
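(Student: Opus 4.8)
The plan is to obtain \textbf{Theorem} \ref{int formule formules} as the locale refinement of \textbf{Theorem} \ref{formule formules}, whose only shortcoming in the general \crcl setting is the linearity hypothesis imposed on the $\phi_i$. That hypothesis entered exclusively through inequality $\link{(1)}$ in the proof of \textbf{Lemma} \ref{equation formules}, and \textbf{Remark} \ref{Rq tense = inf} already observes that $\link{(1)}$ becomes an equality as soon as the tensor of $\lcal$ satisfies $\otimes = \wedge$. So the whole strategy is: record that a locale, viewed as a \crcl, has $\otimes = \wedge$; feed this into \textbf{Remark} \ref{Rq tense = inf} to remove the idempotence obstruction; and then rerun the argument of \textbf{Theorem} \ref{formule formules} with the linearity requirement simply dropped.

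First I would note that for $\lcal$ a locale the monoidal product is the binary meet, $u \otimes v = u \wedge v$ (this is exactly the identification used at the start of this subsection in the review of $USC(\lcal)$, where $\+$ and $\vee$ are written with $\wedge$ in place of $\otimes$). With $\otimes = \wedge$ in force, \textbf{Remark} \ref{Rq tense = inf} applies verbatim and delivers the one ingredient the general theorem lacked: for \emph{every} term $\phi$ of $\Lusc$—linear or not—and every $f \colon \vcal \rightarrow \USC$, one has $\phi[f] = a_\phi(f_v)$, where $a_\phi$ is the interpretation of $\phi$ in $[0,1]$ (\textbf{Notation} \ref{nota a_phi}) acting on $\USC$ as in \textbf{Definition} \ref{interpretation}.

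The conclusion is then immediate and parallels the end of the proof of \textbf{Theorem} \ref{formule formules}. Assuming $[0,1] \models \phi \leq \psi$, evaluating at an arbitrary $p \colon \vcal \rightarrow [0,1]$ gives $a_\phi(p_v) = \phi[p] \leq \psi[p] = a_\psi(p_v)$, that is $a_\phi \leq a_\psi$ as elements of $\Lusc[n]$ for the pointwise order. Monotonicity of the action (\textbf{Lemma} \ref{monotonie de l'action}) then yields $a_\phi(f_v) \leq a_\psi(f_v)$ in $\USC$, and chaining with the two equalities of the previous paragraph gives
$$\phi[f] = a_\phi(f_v) \leq a_\psi(f_v) = \psi[f],$$
so $\USC \models \phi \leq \psi$, as required. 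The main point—and the only place any real obstacle could hide—is the passage $\phi[f] = a_\phi(f_v)$ for nonlinear $\phi$; but this is entirely absorbed into the idempotence $\otimes = \wedge$ via \textbf{Remark} \ref{Rq tense = inf}, so once that identity is noted the proof is essentially a one-line specialisation of the work already carried out in Subsection \ref{subsection structure de [0,1]}.
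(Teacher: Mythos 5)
Your proposal is correct and follows essentially the same route as the paper's own proof: the paper likewise invokes \textbf{Remark} \ref{Rq tense = inf} (valid since $\otimes = \wedge$ in a locale) to get $\phi[f] = a_\phi(f_v)$ for arbitrary, not necessarily linear, terms, and then concludes via the monotonicity of the action (\textbf{Lemma} \ref{monotonie de l'action}) exactly as you do. Your extra remark identifying the locale condition $\otimes = \wedge$ as the precise point where the linearity obstruction in inequality $\link{(1)}$ disappears is a faithful account of what the paper's argument implicitly relies on.
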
 

\begin{proof} 

Assume $[0,1] \models \phi \leq \psi$ and let $f\colon \vcal \rightarrow \USC$. 

\it{Remark} \ref{Rq tense = inf} implies that for all $p \colon \vcal \rightarrow [0,1]$, $a_\phi(p_v) = \phi[p] \leq \psi[p] = a_\psi(p_v)$, so $a_\phi \leq a_\psi$. According to \it{Remark} \ref{Rq tense = inf} and \bf{Lemma} \ref{monotonie de l'action}, $$\phi[f] = a_\phi(f_v) \leq a_\psi(f_v) = \psi[f].$$ 
\end{proof} 

\begin{cor} \label{2+} 

Let $\lcal$ be a commutative residuated complete lattice. The following assertions are equivalent : 

\begin{enumerate} 

\item For all $f \in \USC$, $2 f = f \+ f$. 

\item $\lcal$ is a locale. 

\end{enumerate} 

\end{cor}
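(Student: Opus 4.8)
The plan is to translate condition (2) into the purely order-theoretic statement that the monoid operation coincides with meet, and then to handle the two implications separately. The first observation is that, for an integral commutative residuated complete lattice, $\lcal$ is a locale if and only if $\otimes = \wedge$: integrality gives $u \otimes v \leq u \otimes \top = u$ and, by commutativity, $u \otimes v \leq v$, so $u \otimes v \leq u \wedge v$ always holds, while the residual $\nrightarrow$ makes $\otimes$ preserve arbitrary joins in each variable, so demanding $\otimes = \wedge$ is exactly the frame distributivity law. Moreover, in this integral setting $\otimes$ is idempotent if and only if $\otimes = \wedge$: from $(u \wedge v) \otimes (u \wedge v) \leq u \otimes v$ together with $u \otimes v \leq u \wedge v$, idempotency forces $u \wedge v \leq u \otimes v \leq u \wedge v$. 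Thus it suffices to show that the identity $2f = f \+ f$ in $\USC$ is equivalent to idempotency of $\otimes$ on $\lcal$.

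For the implication (2) $\Rightarrow$ (1) I would simply invoke \textbf{Theorem}~\ref{int formule formules}, which applies precisely because $\lcal$ is a locale and carries no linearity hypothesis. Since $2v$ and $v \+ v$ are genuine terms of $\Lusc$ and $[0,1] \models 2v = v \+ v$ (both reducing to the truncated double of $v$), applying the theorem to the two inequalities $2v \leq v \+ v$ and $v \+ v \leq 2v$ transfers the equality to $\USC$, giving $2f = f \+ f$ for every $f \in \USC$.

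The substance lies in (1) $\Rightarrow$ (2), which I would prove by testing the identity on the family of $0$-indicators. For $U \in \lcal$ one computes $2(0_U) = 0_U$ directly from $(2f)(q) = f(q/2)$ (\textbf{Lemma}~\ref{calculatoire}(6)), since $q/2 > 0$ whenever $q > 0$ and $0_U$ takes the constant value $U$ on $(0,1]$, while $0_U \+ 0_U = 0_{U \otimes U}$ is exactly the content of \textbf{Lemma}~\ref{calculs}. Feeding $f = 0_U$ into the hypothesis $2f = f \+ f$ therefore yields $0_U = 0_{U \otimes U}$, and the injectivity of the embedding $U \mapsto 0_U$ from \textbf{Lemma}~\ref{Crl embedding} forces $U = U \otimes U$. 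As $U$ was arbitrary, $\otimes$ is idempotent, and by the reduction of the first paragraph $\lcal$ is a locale.

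I expect the main obstacle to be recognising that the single family of $0$-indicators already detects any failure of idempotency, so that no manipulation of general $f$ is needed. A natural but misleading temptation is to start instead from \textbf{Theorem}~\ref{théorème de comparaison}, which gives $\frac{f \+ f}{2} \leq f \otimes f \leq f \+ f$; substituting $f \+ f = 2f$ there produces the term $\frac{2f}{2}$, and because $\frac{2f}{2} \neq f$ in $\USC$ in general (in contrast to the identity $2\frac{f}{2} = f$ of \textbf{Lemma}~\ref{calculs}), this route does not close cleanly. Channeling the argument through the $0_U$ and the already-established embedding lemmas avoids this difficulty entirely and keeps both directions short.
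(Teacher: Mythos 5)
Your proof is correct, and it follows the same overall skeleton as the paper (reduce ``locale'' to idempotency of $\otimes$, test on $0$-indicators for one direction, transfer a $[0,1]$-identity for the other), but the two implications are routed through different lemmas. For $(1) \Rightarrow (2)$ the paper simply cites \textbf{Theorem}~\ref{first half}; strictly speaking that theorem is stated only for $\mathcal{L}_{crl}$-terms, whereas $2v \geq v \+ v$ involves $2$ and $\+$, so the honest justification is exactly your explicit computation $2(0_U) = 0_U$, $0_U \+ 0_U = 0_{U \otimes U}$ plus injectivity of $U \mapsto 0_U$ --- your version unwinds (and in effect repairs) the paper's citation. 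For $(2) \Rightarrow (1)$ the routes genuinely diverge: you transfer the $[0,1]$-identity $2x = x \+ x$ directly via \textbf{Theorem}~\ref{int formule formules}, which is legitimate precisely because $\lcal$ is a locale and that theorem needs no linearity, while the paper stays at the quantale level, transferring $v \leq v \otimes v$ up to $\USC \models v \geq v \otimes v$ by \textbf{Theorem}~\ref{de lcal à USC(lcal)} and then bridging $\otimes$ to $\+$ with \textbf{Theorem}~\ref{théorème de comparaison}: from $\frac{v \+ v}{2} \leq v \otimes v \leq v$ one applies $2\cdot$ and the identity $2\frac{w}{2} = w$ of \textbf{Lemma}~\ref{calculs} to get $v \+ v \leq 2v$. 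This shows your closing caution is overstated: the comparison-theorem route \emph{does} close cleanly in the $(2) \Rightarrow (1)$ direction, because the substitution there produces $2\frac{w}{2}$ (which equals $w$), not $\frac{2w}{2}$; the obstruction you correctly identify, namely $\frac{2f}{2} \neq f$ in general, only blocks the \emph{converse} direction $(1) \Rightarrow (2)$, which is exactly why both you and the paper fall back on the indicators there. Your route buys a shorter $(2) \Rightarrow (1)$ at the cost of invoking the locale-specific transfer theorem, whereas the paper's argument isolates which fragment of the transfer machinery (lax/colax quasiequations plus the comparison inequalities) suffices.
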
 

\begin{proof} 

We remind the reader that the order on $\USC$ is the reverse pointwise order. 

If $\USC \models 2v \geq v \+ v$, then, according to \bf{Theorem} \ref{first half}, $\lcal \models v \leq v \otimes v$ and, if $\lcal \models v \leq v \otimes v$, then, according to \bf{Theorems} \ref{de lcal à USC(lcal)} and \ref{théorème de comparaison}, $\USC \models v \geq \frac{v \+ v}{2}$, which is equivalent to $\USC \models 2v \geq v \+ v$. 

Moreover, in a \crcl, $v \leq v \otimes v$ and $u \wedge v = u \otimes v$ are equivalent. Indeed, the later obviously implies the former, and, if for all $V \in \lcal$, $V \leq V \otimes V$, then for all $U \et V \in \lcal$, $U \wedge V \leq (U \wedge V) \otimes (U \wedge V) \leq U \otimes V$, so, since $U \otimes V \leq U \otimes \top = U$ and $U \otimes V \leq \top \otimes V = V$, $U \otimes V = U \wedge V$. 

\end{proof} 

\begin{remark} 

We end this subsection by noticing that, in the case where $\lcal$ is the topology $\Tcal$ of a topological space $X$, this structure is the one induced by the $\mathcal{L}$-structure of $USC(X)$ through the bijection $USC(\Tcal) \simeq USC(X)$. 

\end{remark} 

\subsubsection{The caracterisation} 

Here comes the list of axioms that are satisfied by every $IC$-algebra, by \bf{Theorem} \ref{int formule formules}. 

\begin{enumerate} 

\item \textbf{$\blue{(\vee,\, \wedge, \+, \-, \ul 0, \ul 1)}$ is a bounded commutative residuated lattice structure, as in \ref{bounded commutative residuated lattice}.} 

\item \label{int croissance} \textbf{$2$, $\frac{\cdot}{2}$, $j_\ast$, $j$ and $\alpha$ don't decrease.} 

\item \textbf{The adjunctions, as in \ref{aDjunctions}.} 

\item \textbf{Defining axioms:} 

\begin{enumerate*}[label=] 

\item \color{blue} \hypertarget{int def 2}{} \begin{enumerate*}[label=(4.f.2') , itemjoin={\quad}] \item[\ref{2v <=}] \; and \item \label{int 2v >=} $v \+ v \leq 2 v$, \end{enumerate*} \color{black} 

\item \blue{\ref{def j* <=} and \ref{def j* >=}}, 

\item \hypertarget{>=}{} \ref{2 >=} \; and \; \ref{j* >=}, 

\item \hypertarget{int def alpha, 1}{} \ref{alpha beta v >= v} \; and \; \ref{alpha beta v <=  v}, 

\item \hypertarget{int def alpha, 2}{} \ref{beta alpha v <=  v} \; and \; \ref{beta alpha v >= v}, 

\item \blue{\ref{2j*(0) >= 1},} 

\item \red{\ref{1/2^n + 1/2^n} LÀ (et les règles de déduction)} 

\end{enumerate*} 

\color{blue} \item \color{black} \textbf{And the algebra of values is $[0,1]$ to infinitesimals, as in \ref{Values}.} 

\end{enumerate} 

We denote by $\T_\int$ the previous theory. 

\begin{thm} \label{Equivalent theories} 

$\T_\int$ and $\T \cup \{2 v \geq v \+ v\}$ are equivalent theories. 

\end{thm}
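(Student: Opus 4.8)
The plan is to prove that $\T_\int$ and $\T \cup \{2v \geq v \+ v\}$ have exactly the same models, by comparing the two axiom lists. The two theories share every axiom except for the following: $\T$ contains the two additivity-of-doubling axioms \ref{+ 2 <= 2(+)} and \ref{2(+) <= 2 + 2}, while $\T_\int$ instead contains the single axiom \ref{int 2v >=}, that is $2v \geq v \+ v$; both theories contain \ref{2v <=}, that is $2v \leq v \+ v$. Let $C$ denote the collection of all axioms common to the two theories (so $C$ includes \ref{2v <=} but none of \ref{+ 2 <= 2(+)}, \ref{2(+) <= 2 + 2}, \ref{int 2v >=}). Then $\T_\int$ is $C$ together with \ref{int 2v >=}, whereas $\T \cup \{2v \geq v \+ v\}$ is $C$ together with the three axioms \ref{+ 2 <= 2(+)}, \ref{2(+) <= 2 + 2}, and \ref{int 2v >=}.

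First I would show that every model $A$ of $\T_\int$ is a model of $\T \cup \{2v \geq v \+ v\}$. Combining \ref{2v <=} with \ref{int 2v >=} gives the identity $2v = v \+ v$ for all $v \in A$. Using the commutative monoid axioms \ref{com monoid}, for all $u,v \in A$ one computes
\[
2(u \+ v) = (u \+ v) \+ (u \+ v) = (u \+ u) \+ (v \+ v) = 2u \+ 2v,
\]
which is precisely the conjunction of \ref{+ 2 <= 2(+)} and \ref{2(+) <= 2 + 2}. Since $A$ also satisfies $C$ and \ref{int 2v >=}, it satisfies every axiom of $\T \cup \{2v \geq v \+ v\}$.

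The reverse direction is immediate: any model of $\T \cup \{2v \geq v \+ v\}$ satisfies all of $C$ (as $C \subseteq \T$) and satisfies $2v \geq v \+ v$, which is \ref{int 2v >=}; these are exactly the axioms of $\T_\int$, while the two axioms \ref{+ 2 <= 2(+)} and \ref{2(+) <= 2 + 2} of $\T$ are simply no longer demanded. Hence the two classes of models coincide, and the theories are equivalent.

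I do not anticipate a genuine obstacle: the whole argument reduces to the displayed one-line computation together with a careful matching of the two axiom lists. The only point requiring attention is to notice that $\T_\int$ trades the two doubling-additivity axioms of $\T$ for the strengthening \ref{int 2v >=}, and that this strengthening, via \ref{2v <=}, forces $2v = v \+ v$ and so recovers the dropped axioms.
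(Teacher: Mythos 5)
Your proposal is correct and follows essentially the same route as the paper: both identify $\T_\int$ as $\T$ with the doubling-additivity axioms \ref{+ 2 <= 2(+)} and \ref{2(+) <= 2 + 2} traded for \ref{int 2v >=}, and both recover them from $2v = v \+ v$ (obtained via \ref{2v <=}) through the computation $2(u \+ v) = (u \+ v) \+ (u \+ v) = 2u \+ 2v$ using commutativity and associativity of $\+$. No gap to report.
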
 

\begin{proof} 

$\T_\int = (T \setminus \{2(u \+ v) \leq 2u \+ 2v, 2u \+ 2v \leq 2(u \+ v)\}) \cup \{2v = v \+ v\}$, so $\T \cup \{2v \geq v \+ v\}$ implies $\T_\int$. \\ Moreover, $(T \setminus \{2(u \+ v) \leq 2u \+ 2v, 2u \+ 2v \leq 2(u \+ v)\}) \vdash 2v = v \+ v \Rightarrow 2(u \+ v) = 2u \+ 2v$, so $\T_\int \vdash 2(u \+ v) = 2u \+ 2v$, so $\T_\int$ implies $\T$. 
\end{proof} 

\begin{cor} \label{int big corollary} 

For all model $A$ of $\T_\int$, there exists a locale $\lcal$ such that the quotient of the Macneille completion of $A$ by $\simeq$ is isomorphic to $\USC$. 

For all model $A$ of $\T_\int$, there exists a locale $\lcal$ such that the quotient of $A$ by $\simeq$ embeds into $\USC$. 

\end{cor}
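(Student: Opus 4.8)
The plan is to reduce directly to \textbf{Theorem} \ref{big theorem} by means of \textbf{Theorem} \ref{Equivalent theories}. Since $\T_\int$ and $\T \cup \{2v \geq v \+ v\}$ axiomatise the same class, any model $A$ of $\T_\int$ is in particular a model of $\T$, so \textbf{Theorem} \ref{big theorem} already supplies a commutative residuated complete lattice $\lcal$ together with an isomorphism between $\quotient{\bar A}{\simeq}$ and $\USC$ (first statement) and an embedding of $\quotient{A}{\simeq}$ into $\USC$ (second statement). The only thing left to establish is that this $\lcal$ can be taken to be a \emph{locale}, and for that I would invoke \textbf{Corollary} \ref{2+}: it suffices to show that $\USC$ satisfies $2v = v \+ v$.

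The core of the argument is therefore to transport the extra axiom $2v \geq v \+ v$ along the two constructions used in \textbf{Theorem} \ref{big theorem}, namely the Macneille completion and the quotient by $\simeq$. First I would check that $\bar A$ still satisfies $v \+ v \leq 2v$. This is an inequality $f \leq g$ between non-decreasing maps with $f(v) = v \+ v$ and $g(v) = 2v$; in the Macneille completion both $\+$ and multiplication by $2$ are defined so as to preserve lower bounds in each coordinate, so \textbf{Lemma} \ref{lemme Macneille} (applied exactly as in the proof of \textbf{Theorem} \ref{completion}) lets me deduce $v \+ v \leq 2v$ on all of $\bar A$ from its validity on the image of $A$. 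Combined with axiom \ref{2v <=}, which holds in $\bar A$ because $\bar A$ is a model of $\T$, this gives $2v = v \+ v$ in $\bar A$.

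Next I would push this equality through the quotient: since the canonical surjection $\bar A \rightarrow \quotient{\bar A}{\simeq}$ is an $\mathcal{L}$-homomorphism and $a = b$ implies $a \simeq b$, the classes of $2v$ and of $v \+ v$ coincide, so $\quotient{\bar A}{\simeq} \models 2v = v \+ v$. As the isomorphism of \textbf{Theorem} \ref{big theorem} is an $\mathcal{L}$-isomorphism, $\USC$ satisfies $2v = v \+ v$ as well, i.e. $2f = f \+ f$ for every $f \in \USC$. By \textbf{Corollary} \ref{2+}, $\lcal$ is then a locale, which proves the first statement. For the second statement, the embedding $\quotient{A}{\simeq} \hookrightarrow \USC$ produced in the proof of \textbf{Theorem} \ref{big theorem} (the map $i \circ \pi \circ \phi$) is the very same one, and it now lands in $\USC$ for a locale $\lcal$; nothing further is needed.

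The main obstacle is conceptual bookkeeping rather than a hard estimate: one must make sure the single extra inequality $2v \geq v \+ v$ survives each step, being careful that in the completion it is the codomain operation $2\cdot$ that preserves lower bounds (so that \textbf{Lemma} \ref{lemme Macneille} applies in the correct direction), and that the quotient map and the transported isomorphism are genuinely $\mathcal{L}$-morphisms, so the equation is indeed preserved. Everything else is inherited verbatim from \textbf{Theorem} \ref{big theorem}.
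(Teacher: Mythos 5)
Your proof is correct and follows essentially the same route as the paper: reduce via \textbf{Theorem} \ref{Equivalent theories} to view $A$ as a model of $\T$, apply \textbf{Theorem} \ref{big theorem}, observe that the extra inequality $v \+ v \leq 2v$ survives the Macneille completion (by \textbf{Lemma} \ref{lemme Macneille}, exactly as in \textbf{Theorem} \ref{completion}) and the quotient, and conclude that $\lcal$ is a locale. The only cosmetic difference is that the paper phrases the transported identity as idempotence of the monoid operation and invokes \textbf{Theorem} \ref{first half} directly, whereas you route through \textbf{Corollary} \ref{2+}, which encapsulates the same argument.
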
 

\begin{proof} 

Let $A$ be a model of $\T_\int$. $A$ is a model of $\T$, so, according to \bf{Theorem} \ref{big theorem}, there exists a commutative residuated complete lattice $\lcal$ such that the quotient of the Macneille completion of $A$ by $\simeq$ is isomorphic to $\USC$. Since, in $A$, $\otimes = \vee$, in its Macneille completion, this identity is still true. Hence it is also true in the quotient and thus in $\USC$. \bf{Theorem} \ref{first half} ensures that, for all $U \et V \in \lcal$, $U \otimes V =  U \wedge V$. Hence $\lcal$ is a locale. 

This construction gives us a function $i$ from $A$ to $\USC$ such that, for all $a \et b \in A$, $i(a) = i(b) \Leftrightarrow a \simeq b$. Hence, the image of $i$ is the quotient of $A$ by $\simeq$, which proves that the quotient of $A$ by $\simeq$ embeds into $\USC$. 
\end{proof} 

\begin{cor} \label{complétude des IC-algèbres} 

For all $\mathcal{L}$-terms $\phi \et \psi$, the following assertions are equivalent: 

\begin{enumerate} 

\item \label{complétude des IC-algèbres 1} For all locale $\lcal$, $\USC \models \phi \leq \psi$. 

\item \label{complétude des IC-algèbres 2} For all $n \in \nn$, $\phi \leq \psi \+ \ul{\frac{1}{2^n}}$ is consequence of $\T_\int$. 

\end{enumerate} 

\end{cor}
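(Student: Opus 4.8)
The plan is to transcribe the proof of Corollary \ref{complétude des AC-algèbres} almost verbatim, substituting the theory $\T$ by $\T_\int$, the class of commutative residuated complete lattices by the class of locales, and the embedding result Theorem \ref{big theorem} by its localic sharpening Corollary \ref{int big corollary}. Two preliminary facts are needed. First, soundness: every IC-algebra is a model of $\T_\int$. By Theorem \ref{Equivalent theories} it suffices to check $\T$ together with $2v \geq v \+ v$; the former holds by Lemma \ref{soundness for T}, and the latter because, $\lcal$ being a locale, Corollary \ref{2+} gives $\USC \models 2v = v \+ v$. Second, I would record that the embedding produced by Corollary \ref{int big corollary} is an $\mathcal{L}$-morphism $i \colon A \to \USC$ (with $\lcal$ a locale) that is an order embedding for $\preceq$, that is $i(a) \leq i(b) \Leftrightarrow a \preceq b \Leftrightarrow \forall m \in \nn\; a \leq b \+ \frac{\ul 1}{2^m}$, exactly as in the AC-algebra case.

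For \ref{complétude des IC-algèbres 2} $\Rightarrow$ \ref{complétude des IC-algèbres 1}, fix a locale $\lcal$. By soundness $\USC$ is a model of $\T_\int$, so \ref{complétude des IC-algèbres 2} gives $\USC \models \phi \leq \psi \+ \frac{\ul 1}{2^n}$ for every $n$. Since $\+$ is residuated (axiom \ref{residuation}), the map $w \mapsto \psi[f] \+ w$ is a right adjoint and hence preserves meets, so $\bigwedge[n \in \nn](\psi[f] \+ \frac{\ul 1}{2^n}) = \psi[f] \+ \bigwedge[n \in \nn]\frac{\ul 1}{2^n} = \psi[f] \+ \ul 0 = \psi[f]$ by Archimedeanity. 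Taking the meet over $n$ therefore yields $\phi[f] \leq \psi[f]$ for every valuation $f \colon \vcal \rightarrow \USC$, which is \ref{complétude des IC-algèbres 1}.

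For \ref{complétude des IC-algèbres 1} $\Rightarrow$ \ref{complétude des IC-algèbres 2}, I read ``consequence of $\T_\int$'' as validity in all models of $\T_\int$, so I fix a model $A$, an integer $n$, and a tuple $a$ from $A$. Corollary \ref{int big corollary} supplies a locale $\lcal$ and an $\mathcal{L}$-morphism $i \colon A \to \USC$ with the order-embedding property above. Because $i$ commutes with term interpretation, $i(\phi(a)) = \phi(i(a))$ and $i(\psi(a)) = \psi(i(a))$; applying \ref{complétude des IC-algèbres 1} to $\lcal$ gives $\phi(i(a)) \leq \psi(i(a))$, that is $i(\phi(a)) \leq i(\psi(a))$. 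Reflecting the order through $i$ gives $\phi(a) \preceq \psi(a)$, in particular $\phi(a) \leq \psi(a) \+ \frac{\ul 1}{2^n}$. As $A$, $a$ and $n$ were arbitrary, $\phi \leq \psi \+ \frac{\ul 1}{2^n}$ is a consequence of $\T_\int$.

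The only non-clerical point is the second preliminary: Corollary \ref{int big corollary} is phrased as an embedding of the $\simeq$-quotient, so I must make explicit that this embedding both preserves and reflects $\preceq$. This is where I would spend the bulk of the care, unwinding that $i$ factors as the quotient map onto $\quotient{A}{\simeq}$ (which carries the order induced by $\preceq$, $\simeq$ being a congruence for $\mathcal{L}$) followed by the isomorphism of Corollary \ref{int big corollary}; granting this, the argument is a direct copy of the AC-algebra proof and presents no further obstacle.
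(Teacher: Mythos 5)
Your proof is correct and follows essentially the same route as the paper's: soundness via the equivalence $\T_\int \equiv \T \cup \{2v \geq v \+ v\}$ together with Archimedeanity for \ref{complétude des IC-algèbres 2}$\Rightarrow$\ref{complétude des IC-algèbres 1}, and Corollary \ref{int big corollary} supplying a locale $\lcal$ and a morphism $i \colon A \rightarrow \USC$ that reflects $\preceq$ for \ref{complétude des IC-algèbres 1}$\Rightarrow$\ref{complétude des IC-algèbres 2}. The additional details you supply (the meet-preservation argument from residuation, and the factorisation of $i$ through the $\simeq$-quotient) merely spell out what the paper leaves implicit.
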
 

\begin{proof} 

For all locale $\lcal$, $\USC$ is a model of $\T \cup \{2v \geq v \+ v\}$, which is equivalent to $\T_\int$, so $\USC$ is a model of $\T_\int$. By Archimedeanity of every IC-algebra, we have that \ref{complétude des IC-algèbres 2} implies \ref{complétude des IC-algèbres 1}. 

Let $\phi \et \psi$ be $\mathcal{L}$-terms both having $k$ free variables such that, for all locale $\lcal$, $\USC \models \phi \leq \psi$. Let $n \in \nn$. Let $A$ be a model of $\T$. According to \bf{Corollary} \ref{int big corollary}, there exists a commutative residuated lattice $\lcal$ and a morphism $i \colon A \rightarrow \USC$ such that, for all $a \et b \in A$, \linebreak $i(a) \leq i(b) \Leftrightarrow \forall n \in \nn \, a \leq b \+ \ul{\frac{1}{2^n}}$. For all $a \in A^k$ and $n \in \nn$, since $i(\phi(a)) \leq i(\psi(a))$, $\phi(a) \leq \psi(a) \+ \ul{\frac{1}{2^n}}$. 

The class of models of $\T$ being complete for $\T$, $\phi \leq \psi \+ \ul{\frac{1}{2^n}}$ is consequence of $\T$. 
\end{proof} 

\begin{cor} 

Let $\mathcal{L}' = \{\+, \-, \wedge, \vee, \frac{\cdot}{2}, \ul 1, \ul 0\}$ and write $2v$ for $v \+ v$, $j(v)$ for $\left(v \- \frac{\ul 1}{2}\right) \+ \left(v \- \frac{\ul 1}{2}\right)$, $j_\ast(v)$ for $v \+ \frac{\ul 1}{2}$ and $\alpha(v)$ for $\frac{v}{2} \vee \left(\left(v \- \frac{\ul 1}{2}\right) \+ \left(v \- \frac{\ul 1}{2}\right)\right)$. Let $\T_\int'$ be the following theory in the language $\Lcal'$. 

\begin{enumerate} 

\item \textbf{$\blue{(\vee,\, \wedge, \+, \-, \ul 0, \ul 1)}$ is a bounded commutative residuated lattice structure, as in \ref{bounded commutative residuated lattice}.} 

\item $\frac{\cdot}{2}$ \textbf{is non-decreasing.} 

\item \textbf{The adjunctions:} \ref{aDjunctions} and \hypertarget{2(v/2)<=v int}{(3.a.2')} $\frac{v}{2} \+ \frac{v}{2} \leq v$ 

\item \textbf{Defining axioms:} 

\begin{enumerate*}[label=] 




\item \hypertarget{int def alpha, 1}{} \ref{alpha beta v >= v}, 

\item \hypertarget{int def alpha, 2}{} \ref{beta alpha v <=  v} 



\end{enumerate*} 

\color{blue} \item \color{black} \textbf{And the algebra of values is $[0,1]$ to infinitesimals, as in \ref{Values}.} 

\end{enumerate} 

For all locale $\lcal$, $\USC$ is a model of $\T_\int'$. 

$a \preceq b \Leftrightarrow \forall n \in \nn \; a \leq b \+ \ul{\frac{1}{2^n}}$ defines a preorder on every model of $\T_\int'$. 

For all model $A$ of $\T_\int'$, there exists a locale $\lcal$ such that the quotient of the Macneille completion of $A$ by the equivalence relation induced by the preorder $\preceq$ is isomorphic to $\USC$. 

For all model $A$ of $\T_\int'$, there exists a locale $\lcal$ such that the quotient of $A$ by the equivalence relation induced by the preorder $\preceq$ embeds into $\USC$. 

For all $\mathcal{L}$-terms $\phi \et \psi$, if, for all locale $\lcal$, $\USC \models \phi \leq \psi$, then for all $n \in \nn$, $\phi \leq \psi \+ \ul{\frac{1}{2^n}}$ is consequence of $\T_\int'$. 

\end{cor} 

\begin{proof} 

First, for all locale $\lcal$, by \bf{Theorem} \ref{int formule formules}, $\USC$ is a model of $\T_\int'$. 

Second, let $A$ be a model of $\T_\int'$. 

\ul{\bf{Claim :}} By defining $2$, $j$, $j_\ast$ and $\alpha$ by $2v = v \+ v$, $j(v) = \left(v \- \frac{\ul 1}{2}\right) \+ \left(v \- \frac{\ul 1}{2}\right)$, $j_\ast(v) = v \+ \frac{\ul 1}{2}$ and $\alpha(v) = \frac{v}{2} \vee \left(\left(v \- \frac{\ul 1}{2}\right) \+ \left(v \- \frac{\ul 1}{2}\right)\right)$, $A$ is a model of $\T_\int$. 

Since every model of $\T_\int'$ is a model of $\T_\int$, according to \textbf{Corollary} \ref{int big corollary}: 

\begin{enumerate}[label=] 

\item For all model $A$ of $\T_\int'$, there exists a locale $\lcal$ such that the quotient of the Macneille completion of $A$ by the equivalence relation induced by the preorder $\preceq$ is isomorphic to $\USC$. 

\item For all model $A$ of $\T_\int'$, there exists a locale $\lcal$ such that the quotient of $A$ by the equivalence relation induced by the preorder $\preceq$ embeds into $\USC$. 
\end{enumerate} 

Since $\T_\int$ and $\T_\int'$ have same Archimedean models, from \bf{Corollary} \ref{complétude des IC-algèbres}, we deduce that, for all $\mathcal{L}$-terms $\phi \et \psi$, if, for all locale $\lcal$, $\USC \models \phi \leq \psi$, then for all $n \in \nn$, $\phi \leq \psi \+ \ul{\frac{1}{2^n}}$ is consequence of $\T_\int'$. 

\ul{\bf{Proof of the claim :}} The axioms of $\T_\int$ that aren't in $\T_\int'$ are the following ones: 

\begin{enumerate} 
	\item $2$, $j$, $j_\ast$ and $\alpha$ are non-decreasing. \ref{croissance} 
	\item \link{cont de 2}[(4.a)], \link{>=}[(4.c)] and \link{fin def 2 et j*}[(4.f)]. 
	\item \link{def j*}[(4.b)], \ref{alpha beta v <=  v}, \ref{beta alpha v >= v} and \ref{1/2^n + 1/2^n}. 
\end{enumerate} 

The first two points are obviously satisfied by $A$. Let us proove the third one. Let $a \et b \in A$. 

\link{def j*}[(4.b)] : Notice first that, for all $x \et y \in A$, since $2\left(\frac{x}{2} \+ \frac{y}{2}\right) = x \+ y$, $\frac{x \+ y}{2} \leq \frac{x}{2} \+ \frac{y}{2}$. Hence, $j_\ast(2a \+ b) = \frac{2a \+ b}{2} \+ \frac{\ul 1}{2} \leq \frac{2a}{2} \+ \frac{b}{2} \+ \frac{\ul 1}{2} \leq a \+ \frac{b}{2} \+ \frac{\ul 1}{2} = a \+ j_\ast(b)$. 

Then, notice that, thanks to \ref{jj*v<=v} and \ref{v<=j*jv}, $\forall x \et y \in A\; j_\ast(2x \+ y) \geq x \+ j_\ast(y)$ is equivalent to $\forall x \et y \in A\; 2x \+ j(y) \geq j(x \+ y)$. We prove that $j(a \+ b) \leq 2a \+ j(b)$: $$j(a \+ b) = 2\left((a \+ b) \- \frac{\ul 1}{2}\right) \leq 2\left(\left(b \- \frac{\ul 1}{2}\right) \+ a \right) = j(b) \+ 2a.$$ 

\ref{alpha beta v <=  v} : $\alpha(2a \wedge j_\ast(a)) = \frac{2a \wedge j_\ast(a)}{2} \vee j(2a \wedge j_\ast(a)) \leq \frac{2a}{2} \vee j \circ j_\ast(a) \leq a$. 

\ref{beta alpha v >= v} : $2\alpha(a) \geq 2 \frac{a}{2} \geq a$ and $j_\ast \circ \alpha(a) \leq j_\ast \circ j(a) \geq a$. 

\ref{1/2^n + 1/2^n} : First, let us notice that $j_\ast(\ul 0) = \frac{\ul 1}{2}$ and, for all $x \in A$ such that $x \leq \frac{\ul 1}{2}$, $\alpha(x) = \frac{x}{2} \vee 2\left(x \- \frac{\ul 1}{2}\right) = \frac{x}{2}$, so, for all $n \in \nn$, $\alpha^n \circ j_\ast(\ul 0) = \frac{\ul 1}{2^n}$. Hence, for all $n \in \nn*$, using \hyperlink{2(v/2)<=v int}{(3.a.2')}, $\alpha^n \circ j_\ast(\ul 0) \+ \alpha^n \circ j_\ast(\ul 0) = 2 \frac{\ul 1}{2^n} \leq \frac{\ul 1}{2^{n-1}} = \alpha^{n-1} \circ j_\ast(\ul 0)$. 
\end{proof} 

\subsection{Equivalence between IC-algebras and MC-algebras} 

The aim of this section is to prove \textbf{Theorem} \ref{gros thm} and its \bf{Corollary} \ref{corollaire Macneille}. 

\begin{nota} 

First of all, let's denote by $\LMarco$ the language of \cite[pp. 1404-1405]{abbadiniDualCompactOrdered2019}, in which $\oplus$ is replace by $\+$ and to which is added the symbol $\-$. Let us also denote by $\Marco$ the theory of \cite[pp. 1404-1405]{abbadiniDualCompactOrdered2019}, to which are added the following two axioms: $$a \leq (a \- b) \+ b \et (a \+ b) \- b \leq a.$$ 

\end{nota} 

Following \cite{marquesCategoricalLogicPerspective2023}, we give the following definitions. 

\begin{defi} \label{intuitionistic topological space} 

Let $(X, \leq)$ be an ordered topological space. We recall that $X$ is \emph{Hausdorff} if $\leq$ is closed (cf. \bf{Definition} \ref{Hausdorff}). 

$X$ is \emph{intuitionistic} if, for all open subset $U$ of $X$, $\uparrow U$ is open. 

\end{defi} 

\begin{defi} 

MC-algebras are defined in \cite[pp. 1404-1405]{abbadiniDualCompactOrdered2019} and we define $d$ on any MC-algebra $A$, as in \cite[Definition 6.1]{abbadiniDualCompactOrdered2019}, by $$\forall f \et g \in A\quad d(f,g) = \left(\bigwedge \{q \in [0,1] \, | \, f \leq g \+ q\}\right) \vee \left(\bigwedge \{q \in [0,1] \, | \, g \leq f \+ q\}\right).$$ 

Let us recall the definiton of Archimedeanity and Cauchy-completeness for MC-algebras (\cite[Definitions 6.2 and 6.4]{abbadiniDualCompactOrdered2019}). 

Let $A \in MC$. We say that $A$ is Archimedean if, for all $f \et g \in A$, $d(f,g) = 0 \Rightarrow f = g$. 

Let $(a_n)_{n \in \nn}$ be a sequence in $A$, and let $a \in A$. The sequence $(a_n)_{n \in \nn}$ is called a \emph{Cauchy sequence} if, for every $\epsilon > 0$, there exists $k \in \nn$ such that for all $n, m \geq k$,$d(a_n, a_m) < \epsilon$. The sequence $(a_n)_{n \in \nn}$ is said to \emph{converge} to $a \in A$, or that $a$ is a \emph{limit} of $(a_n)_{n \in \nn}$, if for every $\epsilon > 0$, there exists $n_0 \in \nn$ such that for all $m \geq n_0$, $d(a_m, a) < \epsilon$. The sequence $(a_n)_{n \in \nn}$ is said to \emph{converge} if there exists $b \in A$ such that $(a_n)_{n \in \nn}$ converges to $b$. The set $A$ is said to be \emph{Cauchy-complete} if every Cauchy sequence in $A$ converges. 

An MC-algebra $A$ is said \emph{intuitionistic} when $\dotp$ admits a residual. 

\end{defi} 

Intuitionistic ordered spaces and MC-algebras are linked by the following duality theorem. 

\begin{thm}[{\cite[Theorem 8.5.]{abbadiniDualCompactOrdered2019}}] \label{isomorphie MC-algèbres} 

The map that assigns to every ordered topological space $X$ the algebra $\Cn{X}$ gives rise to an equivalence of categories between the opposite of the category of compact Hausdorff ordered topological spaces with monotone continuous maps and the category of Cauchy-complete Archimedean MC-algebras. 

We denote by $Sp(A)$ the compact Hausdorff ordered topological space thus associated to an MC-algebra $A$. 

\end{thm}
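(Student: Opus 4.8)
Since this is quoted as \cite[Theorem 8.5]{abbadiniDualCompactOrdered2019}, one may simply invoke it; but the natural self-contained route is a Gelfand/Stone-style dual equivalence adapted to the monotone, $[0,1]$-valued setting. Note first that $X \mapsto \Cn{X}$ is contravariant: a monotone continuous map $h \colon X \to Y$ induces $\Cn{Y} \to \Cn{X}$, $f \mapsto f \circ h$, which is readily checked to be an MC-algebra homomorphism. The plan is therefore to build an explicit quasi-inverse $A \mapsto Sp(A)$ and to produce natural isomorphisms $A \cong \Cn{Sp(A)}$ and $X \cong Sp(\Cn{X})$, so that $\Cn{-}$ becomes a dual equivalence.

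For a Cauchy-complete Archimedean MC-algebra $A$, I would take $Sp(A)$ to be the set of MC-algebra homomorphisms $\chi \colon A \to [0,1]$ into the scalar algebra $[0,1]$, topologised as a subspace of the product $[0,1]^A$ and ordered pointwise by $\chi \leq \chi' \Leftrightarrow \forall a \in A\; \chi(a) \leq \chi'(a)$. As the homomorphism conditions and the relation $\leq$ are all closed in $[0,1]^A$ (resp.\ in $[0,1]^A \times [0,1]^A$), $Sp(A)$ is a closed, hence compact, Hausdorff ordered subspace with closed order. The evaluation map $\widehat{(\cdot)} \colon A \to \Cn{Sp(A)}$, $\widehat a(\chi) = \chi(a)$, is then an MC-homomorphism whose image is a sub-MC-algebra separating the points of $Sp(A)$ in the ordered sense; by the increasing Stone--Weierstrass theorem (\textbf{Theorem} \ref{IncreasingSW}) this image is dense in $\Cn{Sp(A)}$ for the metric $d$, and Cauchy-completeness of $A$ forces it to be all of $\Cn{Sp(A)}$. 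Injectivity of $\widehat{(\cdot)}$ reduces, via $d(a,b)=0 \Rightarrow a = b$, to the existence of enough characters to separate the metric, which is where the structure theory of MC-algebras (the analysis of their ideals and residuation) must be exploited.

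For the reverse isomorphism, I would send $x \in X$ to the character $\mathrm{ev}_x \colon f \mapsto f(x)$ of $\Cn{X}$. This map is continuous and order-preserving; it reflects the order and is injective because, for $x \not\leq y$ in a compact Hausdorff ordered space, the ordered Urysohn lemma (\textbf{Lemma} \ref{Urysohn croissant}) produces a monotone continuous $f$ with $f(y)=0$ and $f(x)=1$, separating $\mathrm{ev}_x$ from $\mathrm{ev}_y$. Being a continuous injection from a compact space into a Hausdorff space, it is a homeomorphism onto its image; surjectivity---that every character of $\Cn{X}$ is a point evaluation---follows from a compactness argument on the family of ``small'' sets $\{f \leq q\}$ selected by the character, using \textbf{Lemma} \ref{cor Urysohn croissant} to reduce from $USC_\nearrow(X)$ to $\Cn{X}$. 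Finally I would verify that $Sp$ is functorial and that $\widehat{(\cdot)}$ and $\mathrm{ev}$ are natural in their arguments, completing the dual equivalence. The main obstacle is the pair of surjectivity claims---density-plus-completeness for $A \cong \Cn{Sp(A)}$ and the ``every character is an evaluation'' step for $X \cong Sp(\Cn{X})$---both of which rest on the ordered Stone--Weierstrass and Urysohn inputs but require genuine work to upgrade mere separation into an isomorphism.
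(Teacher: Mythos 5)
The paper does not prove this statement at all: it is imported verbatim as \cite[Theorem 8.5]{abbadiniDualCompactOrdered2019}, so your opening remark that one may simply invoke the citation is exactly what the paper does. Your sketch, for its part, follows the same strategy as Abbadini's actual proof in the cited source — a dual adjunction given by homming into the schizophrenic object $[0,1]$ on both sides, with the ordered Urysohn lemma (\textbf{Lemma} \ref{Urysohn croissant}) and the ordered Stone--Weierstrass theorem (\textbf{Theorem} \ref{IncreasingSW}, which the paper notes subsumes Abbadini's Theorem 8.3) supplying separation and density. The routine parts of your sketch are sound: $Sp(A) = \hom(A,[0,1])$ is closed in $[0,1]^A$ because the MC-operations are continuous on $[0,1]$ and each homomorphism condition involves finitely many coordinates, so $Sp(A)$ is compact Hausdorff with closed pointwise order; the evaluation maps are monotone continuous; and ev-injectivity on points follows from ordered Urysohn as you say.

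The two steps you defer are, however, precisely the mathematical content of the cited theorem, and one of them is heavier than your phrasing suggests. That $\widehat{(\cdot)}\colon A \to \Cn{Sp(A)}$ is injective — equivalently, that an Archimedean MC-algebra has \emph{enough} characters into $[0,1]$ — is not a reduction via $d(a,b)=0 \Rightarrow a=b$ alone; it requires showing that every Archimedean MC-algebra embeds subdirectly into powers of $[0,1]$, which in Abbadini's development occupies the bulk of the structure theory (analysis of congruences/ideals and the role of the Archimedean condition). Relatedly, your claim that ``Cauchy-completeness of $A$ forces'' the dense image to be everything presupposes that $\widehat{(\cdot)}$ is an isometry (or at least has closed image), which again rests on enough characters, so the two issues are not independent. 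Also note two smaller points: the increasing Stone--Weierstrass hypothesis demands two-point interpolation with \emph{prescribed} values $p \leq q$ up to $\epsilon$, not mere order-separation, so you must massage a separating element using the (dyadic) constants and truncated operations, as in \textbf{Corollary} \ref{corISW}; and the equivalence is contravariant (a dual equivalence), which you correctly flag even though the paper's statement elides it. None of this makes your outline wrong — it is the right proof — but as written it is a scaffold around the hard core rather than a proof of it.
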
 

\begin{thm}[{\cite[Proposition 1.3.23.]{marquesCategoricalLogicPerspective2023}}] \label{équivalence} 

The category of intuitionistic Archimedean Cauchy-complete MC-algebras is equivalent to the opposite of the category of compact Hausdorff intuitionistic ordered spaces. 

\end{thm}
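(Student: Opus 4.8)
The plan is to obtain the asserted equivalence by \emph{restricting} the Abbadini duality of Theorem~\ref{isomorphie MC-algèbres} along the two inclusions of full subcategories. On the algebraic side, intuitionistic MC-algebras form a full subcategory of Cauchy-complete Archimedean MC-algebras (being intuitionistic means $\dotp$ admits a residual, which is an extra property and not extra structure on morphisms); on the geometric side, compact Hausdorff intuitionistic ordered spaces form a full subcategory of compact Hausdorff ordered spaces. Consequently, once we know that the functors $X \mapsto C^0_\nearrow(X)$ and $A \mapsto Sp(A)$ of Theorem~\ref{isomorphie MC-algèbres} carry one class of objects onto the other in both directions, the equivalence automatically restricts, and no separate verification on morphisms is needed. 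Thus the whole statement reduces to the object-level biconditional: for a compact Hausdorff ordered space $X$, the space $X$ is intuitionistic if and only if the MC-algebra $C^0_\nearrow(X)$ is intuitionistic.

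To prove this biconditional I would first pin down the candidate residual. For $g,h \in C^0_\nearrow(X)$, a function $f$ satisfies $f \dotp g \le h$ pointwise precisely when $f \le h \dotm g$, where $h \dotm g \in C^0(X)$ is the (continuous but possibly non-monotone) truncated difference. Hence the residual $g \to h$, should it exist in $C^0_\nearrow(X)$, must be the largest non-decreasing continuous minorant of $h \dotm g$, whose pointwise value at $x$ is $\bigwedge_{y \ge x} (h\dotm g)(y)$; dually this equals $1 \dotm \downarrow\!\bigl(1 \dotm (h\dotm g)\bigr)$, with $\downarrow$ the smallest non-increasing majorant. So the existence of the residual for all $g,h$ is equivalent to the assertion that this largest non-decreasing minorant is always continuous.

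The forward direction ($X$ intuitionistic $\Rightarrow$ residual exists) is exactly where the toolkit of the topological preliminaries enters: using the equivalent characterisations of intuitionism collected in Lemma~\ref{gros lemme}—in particular the fact that $\uparrow$ preserves continuity on compact Hausdorff intuitionistic spaces, Corollary~\ref{the smallest non-decreasing usc function greater}—one shows that the relevant monotone regularisation of a continuous function is again continuous, so that $g \to h$ lands in $C^0_\nearrow(X)$ and the adjunction $f \dotp g \le h \Leftrightarrow f \le g \to h$ holds. For the converse I would run the argument backwards: feeding the residual into suitably chosen Urysohn functions produced by Lemma~\ref{Urysohn croissant} forces one of the equivalent topological conditions of Lemma~\ref{gros lemme}, hence intuitionism of $X$.

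The main obstacle I anticipate is precisely that the residual is \emph{not} the naive pointwise truncated subtraction: that operation destroys monotonicity, so the genuine content is that the largest non-decreasing continuous minorant exists and is continuous, a topological fact equivalent to intuitionism. Care with the order conventions—the reversal between the pointwise order on functions and the order used on $\USC$, and between $\uparrow$/$\downarrow$ and their opposite-order counterparts—will be essential in selecting the correct equivalent formulation from Lemma~\ref{gros lemme}, and getting that bookkeeping right is where the argument could most easily go astray.
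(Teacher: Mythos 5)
Your global strategy is sound and matches the intended route: the paper does not prove this theorem at all (it is imported from \cite[Proposition 1.3.23]{marquesCategoricalLogicPerspective2023}), and under the property reading of ``intuitionistic'' the equivalence does reduce, by restricting the duality of Theorem \ref{isomorphie MC-algèbres} to full subcategories, to the object-level biconditional ``$X$ intuitionistic iff $\Cn{X}$ intuitionistic'', whose forward half is exactly what the paper proves in Theorem \ref{gros thm} (2). The genuine gap is in your identification of the residual, and it is precisely the order-convention trap you flagged at the end. In an intuitionistic MC-algebra the residual of $\dotp$ is the \emph{difference}, i.e. the lower adjoint determined by $f \ominus g \leq h \Leftrightarrow f \leq g \dotp h$; this is the convention fixed by the axioms $a \leq (a \dotm b) \dotp b$ and $(a \dotp b) \dotm b \leq a$ adjoined to $\Marco$, and it is computed in the proof of Theorem \ref{gros thm} (2) as $f \ominus g = \uparrow(f \dotm g)$, the \emph{smallest non-decreasing majorant} of the pointwise truncated difference, continuous when $X$ is compact Hausdorff intuitionistic by Lemma \ref{gros lemme} (4) and (6 bis) (packaged in Corollary \ref{the smallest non-decreasing usc function greater}). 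You instead posit an implication-style upper adjoint via ``$f \dotp g \leq h$ pointwise precisely when $f \leq h \dotm g$'', which is false already in $[0,1]$: for $f = 0$, $g = \tfrac{1}{2}$, $h = \tfrac{1}{3}$ one has $f \dotp g = \tfrac{1}{2} > h$ while $f = 0 \leq h \dotm g = 0$. Worse, the upper adjoint you then set out to compute cannot exist: $f \mapsto f \dotp g$ sends the bottom element (the constant $0$, which is the monoid unit) to $g$, so $\{f \;|\; f \dotp g \leq h\}$ is empty whenever $g \not\leq h$ and there is no largest solution. Hence your candidate ``largest non-decreasing continuous minorant of $h \dotm g$'' is the wrong regularization altogether, and the reduction of residuation to its continuity is incorrect.

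A secondary gap: even after correcting the adjoint, your converse direction is only gestured at. Existence of $\ominus$ in $\Cn{X}$ gives continuity of $\uparrow u$ only for $u$ of the special form $f \dotm g$ with $f, g \in \Cn{X}$, and an arbitrary continuous $u$ (or an arbitrary open set, if you target the definition of intuitionistic directly) need not admit such a decomposition; so ``feeding the residual into Urysohn functions from Lemma \ref{Urysohn croissant}'' has to be developed into an actual argument yielding openness of $\uparrow U$ for every open $U$, or one of the equivalent conditions in the diagram of Lemma \ref{gros lemme}. Since the paper never proves this direction (it cites Marquès), nothing in the source fills this in for you. Your fullness remark, by contrast, is consistent with the paper's framing --- the introduction explicitly says the duality of \cite{abbadiniDualCompactOrdered2019} ``restricts'' to the intuitionistic case --- though it is worth stating explicitly that MC-morphisms are not required to preserve $\ominus$ (just as lattice morphisms of Heyting algebras need not preserve implication), so that the property reading, and with it the automatic restriction of the equivalence, is genuinely the one being used.
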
 

\begin{lem} 

Let $A$ be an MC-algebra. 

$A$ is Archimedean if and only if, for all $f \et g \in A$ and $\forall n \in \nn\; f \leq g \+ \frac{1}{2^n} \Rightarrow f \leq g$. 

\end{lem}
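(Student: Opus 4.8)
The plan is to prove the equivalence by treating the two implications separately, the key difficulty being that the metric $d$ is symmetric whereas the stated condition is one-sided in the order. Throughout I would use only elementary facts about MC-algebras that are already available: that $\+$ is order-preserving, that $\ul 0$ is its neutral element (so $h \leq h \+ q$ for $q \geq 0$), and the standard behaviour of infima of subsets of $[0,1]$. I would also record at the outset that $\{q \mid f \leq g \+ q\}$ is nonempty (it contains $1$, since $g \+ \ul 1 = \ul 1$ is the top element), so that these infima are genuinely computed in $[0,1]$.

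For the forward implication, assume $A$ is Archimedean and suppose $f \leq g \+ \frac{1}{2^n}$ for all $n \in \nn$. I would not try to bound $d(f,g)$ directly, since it also controls the reverse comparison $g \leq f \+ q$, about which the hypothesis says nothing; instead I would pass to the join $g' = f \vee g$ and show $d(g', g) = 0$. Since $g \leq g'$ and $g' = g' \+ \ul 0$, the value $0$ lies in $\{q \mid g \leq g' \+ q\}$, so this infimum is $0$. For the other half, the universal property of the join gives $g' \leq g \+ q \Leftrightarrow f \leq g \+ q$ (the comparison $g \leq g \+ q$ being automatic), so $\{q \mid g' \leq g \+ q\} = \{q \mid f \leq g \+ q\}$; by hypothesis this set contains every $\frac{1}{2^n}$, hence its infimum is at most $\bigwedge[n \in \nn] \frac{1}{2^n} = 0$, i.e. $0$. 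Thus $d(g',g) = 0$, and Archimedeanity forces $g' = g$, that is $f \vee g = g$, which is exactly $f \leq g$.

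For the converse, assume the $\frac{1}{2^n}$-condition and suppose $d(f,g) = 0$. Unfolding $d$ as a join of two infima, both infima must vanish: $\bigwedge \{q \mid f \leq g \+ q\} = 0$ and $\bigwedge \{q \mid g \leq f \+ q\} = 0$. From the first, since the infimum in $[0,1]$ is $0$, for each $n$ there is some $q$ in the set with $q \leq \frac{1}{2^n}$; monotonicity of $\+$ then yields $f \leq g \+ q \leq g \+ \frac{1}{2^n}$. As this holds for all $n$, the hypothesis gives $f \leq g$. Symmetrically the second infimum gives $g \leq f$, whence $f = g$, establishing Archimedeanity.

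The only genuinely delicate point is the asymmetry noted above, which the join trick in the forward direction resolves cleanly; the converse is merely a matter of splitting the symmetric hypothesis $d(f,g)=0$ into its two one-sided halves and extracting arbitrarily small witnesses $q \leq \frac{1}{2^n}$ from each vanishing infimum. Everything else is routine order arithmetic in the MC-algebra.
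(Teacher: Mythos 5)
Your proof is correct and takes essentially the same approach as the paper: where you form the join and show $d(f \vee g,\, g) = 0$, the paper forms the meet and shows $d(f,\, f \wedge g) = 0$, and your converse direction (splitting $d(f,g)=0$ into its two vanishing infima and extracting witnesses $q \leq \frac{1}{2^n}$) is the paper's argument with the routine extraction step made explicit. Your join variant is marginally leaner, since it needs only the universal property of $\vee$ and monotonicity of $\+$, whereas the meet computation implicitly uses that $\+$ by a constant is compatible with $\wedge$.
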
 

\begin{proof} 

Let $f \et g \in A$. 

Assume that $A$ is Archimedean and, for all $n \in \nn$, $f \leq g \+ \frac{1}{2^n}$. Then, \linebreak $\bigwedge \{q \in [0,1] \, | \, f \leq f \wedge g \+ q\} = \bigwedge \{q \in [0,1] \, | \, f \wedge g \leq f \+ q\} = 0$, so $d(f, f \wedge g) = 0$ \linebreak and thus, since $A$ is Archimedean, $f = f \wedge g$, i.e. $f \leq g$. 

Conversely, assume that, for all $h \et k \in A$, $\forall n \in \nn\; h \leq k \+ \frac{1}{2^n} \Rightarrow h \leq k$ and that $d(f,g) = 0$. Then, for all $n \in \nn$, $f \leq g \+ \frac{1}{2^n}$ and $g \leq f \+ \frac{1}{2^n}$, so $f \leq g$ and $g \leq f$, i.e. $f = g$. 
\end{proof} 

\begin{defi} \label{def odot} 

For all locale $\lcal$, let's define $\odot$ by, for all $a \et b \in \USC$, $a \odot b = j\left(\frac{a}{2} \+ \frac{b}{2}\right)$. 

We also define the theory $\Marco_\int$ as the theory $\Marco$ in which all terms of the form $u \odot v$ have been replaced by $j\left(\frac{u}{2} \+ \frac{v}{2}\right)$. 

\end{defi} 

\begin{thm} \label{gros thm} $ $ 

\begin{enumerate} 

\item \label{gros thm 1} For all locale $\lcal$, $\USC$ is an MC-algebra. 

\item \label{gros thm 2} Let $X$ be an ordered topological space. There exists a topological space $Y$ and an embedding of $\Cn{X}$ in $USC(Y)$ for the language $\{\leq, \+, \odot, \wedge, \vee, (\ul q)_{q \in [0,1]}, \frac{\cdot}{2}\}$. 

Moreover, if $X$ is intuitionistic, then $\Cn{X}$ is intuitionistic and this embedding also preserves $\-$. 

\end{enumerate} 

\end{thm}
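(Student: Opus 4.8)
The plan is to prove \ref{gros thm 1} by transfer from $[0,1]$ and \ref{gros thm 2} by the concrete choice $Y = X_u$, invoking the topological preliminaries only to control the residual $\-$. For \ref{gros thm 1}, observe that every primitive operation of the MC-algebra signature $\LMarco$ (the truncated sum, the fusion $\odot$, the lattice operations, the halving and the constants) is a \emph{non-decreasing} continuous map $[0,1]^n \to [0,1]$, hence an element of $\Lusc[n] = USC([0,1]_u^n)$ and therefore a term of the language $\Lusc$; indeed, by Definition \ref{def odot}, $\odot$ is the $\Lusc$-term $j\left(\frac{\cdot}{2} \+ \frac{\cdot}{2}\right)$. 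The axioms of $\Marco$ are universally quantified (in)equalities between such terms, and they all hold in $[0,1]$, which is itself an MC-algebra. Applying Theorem \ref{int formule formules} to each axiom — and to both $\leq$ and $\geq$ for each equation — transfers them to $\USC$. Thus $\USC$ satisfies $\Marco$, i.e. it is an MC-algebra; being moreover a complete lattice and Archimedean (Lemma \ref{soundness for T}), it is even Cauchy-complete and Archimedean.

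For \ref{gros thm 2}, I would take $Y = X_u$, the underlying set of $X$ topologised by its downward-closed open sets, so that $USC(Y) = USC_\nearrow(X)$ by Lemma \ref{X_u} and $USC(Y) = USC(\Tcal(X_u))$ with $\Tcal(X_u)$ a locale. The candidate embedding is the inclusion $\iota\colon \Cn{X} \hookrightarrow USC_\nearrow(X)$. By the remark closing subsection \ref{subsection structure de [0,1]}, which identifies in the topological case the $\mathcal{L}$-structure of $USC(\Tcal(X_u))$ with the pointwise structure of $USC(X_u)$, each of $\+,\odot,\wedge,\vee,\frac{\cdot}{2}$ and each constant $\ul q$ is computed pointwise. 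Since each of these pointwise operations sends non-decreasing continuous functions to non-decreasing continuous functions, $\iota$ is a homomorphism for $\{\leq,\+,\odot,\wedge,\vee,(\ul q)_{q},\frac{\cdot}{2}\}$; it is injective and, both orders being the pointwise one, order-reflecting, hence an embedding.

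For the \emph{moreover} clause I work under the hypotheses (compact Hausdorff, intuitionistic) required by the cited results. That $\Cn X$ is then intuitionistic is exactly Theorem \ref{équivalence}. To see that $\iota$ preserves $\-$, note that the residual $\iota(f) \- \iota(g)$ computed in $USC_\nearrow(X)$ is the least non-decreasing upper semi-continuous function lying pointwise above the truncated difference $f \dotm g$; as $f \dotm g \in C^0(X)$, Corollary \ref{the smallest non-decreasing usc function greater} identifies this least function with $\uparrow(f \dotm g)$, which is continuous and non-decreasing, hence lies in $\Cn X$ and is precisely $f \- g$ as computed in $\Cn X$. Therefore $\iota(f \- g) = \iota(f) \- \iota(g)$.

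The crux, and the only non-routine point, is this last identification of $\-$. Pointwise, $f \dotm g$ of two increasing continuous functions is continuous but generally \emph{not} monotone, so $\-$ genuinely departs from the pointwise operation and one must pass to the least non-decreasing majorant; keeping that majorant continuous is exactly what intuitionicity of $X$ secures, through Lemma \ref{gros lemme} and Corollary \ref{the smallest non-decreasing usc function greater}. Everything else reduces to the closure of $\Cn X$ under the pointwise operations of the first list and to the transfer theorem \ref{int formule formules}.
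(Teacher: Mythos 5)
Your proposal is correct and follows essentially the same route as the paper: point \ref{gros thm 1} by transferring the MC-axioms (inequalities between $\Lusc$-terms true in $[0,1]$) via Theorem \ref{int formule formules}, and point \ref{gros thm 2} by taking $Y = X_u$, noting the inclusion $\Cn{X} \hookrightarrow USC_\nearrow(X) = USC(X_u)$ preserves the pointwise operations, and identifying the residual with $\uparrow(f \dotm g)$ via Lemma \ref{gros lemme} and Corollary \ref{the smallest non-decreasing usc function greater} — including the same implicit reliance on compact Hausdorff hypotheses in the \emph{moreover} clause, which you rightly flag. The only cosmetic difference is your appeal to Theorem \ref{équivalence} for intuitionicity of $\Cn{X}$, which is redundant since your residual computation (as in the paper) already exhibits the residuation directly.
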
 

\begin{proof} \begin{enumerate} 

\item The theory of MC-algebras is satisfied by $[0,1]$ (\cite[p. 1404]{abbadiniDualCompactOrdered2019}). Moreover, this theory is given in a language only having \usc functions and every axiom is of the form $\phi \leq \psi$ with $\phi$ a term of $\Lusc$. Thus, by \textbf{Theorem} \ref{int formule formules}, every axiom of the theory of MC-algebras is also satisfied by every IC-algebra. 

\item Let $(X, \Tcal)$ be an ordered topological space. 

We recall that, we denote by $X_u$ the set $X$ endowed with the topology \linebreak $\{U \in \Tcal \, | \, U \texte{is downward closed}\}$, and that $USC_\nearrow(X) = USC(X_u)$ (\bf{Lemma} \ref{X_u}). Therefore, we prove that $C^0_\nearrow(X)$ embeds into $USC_\nearrow(X)$. 

For all $f \in \Cn{X}$, $f$ is non-decreasing and upper semi-continuous, so $f \in USC_\nearrow(X)$. That gives a natural order embedding of $\Cn{X}$ into $USC_\nearrow(X)$, and this injection preserves all the punctual operations, namely $\dotp$, $\odot$, $\vee$, $\wedge$, $\frac{\cdot}{2}$, $j_\ast$, $j$ and $\alpha$, and the constants. 

Assume now that $X$ is intuitionistic and let $f \et g \in \Cn{X}$. \\ Let $\fct[f \dotm g][X, {[0,1]}]{x, f(x) \dotm g(x)}$ and let's denote by $\ominus$ the residual of $\+$ in $USC_\nearrow(X)$. 

According to \textbf{Lemma} \ref{gros lemme} \link{(4)} and \link{(6 bis)}, $\uparrow (f \dotm g) \in \Cn{X}$, and, for all $h \in \Cn{X}$, $\uparrow (f \dotm g) \leq h \Leftrightarrow f \dotm g \leq h \Leftrightarrow f \leq g \+ h$. 

Hence $(f, g) \mapsto \uparrow (f \dotm g)$ is the residual of $\+$. 

However, according to \bf{Corollary} \ref{the smallest non-decreasing usc function greater},  $\uparrow (f \dotm g)$ is the smallest upper-semicontinuous non-decreasing function from $X$ to $[0,1]$ greater than $f \dotm g$, that-is-to-say, for all \linebreak $h \in USC_\nearrow(X)$, $\uparrow (f \dotm g) \leq h \Leftrightarrow f \dotm g \leq h$ and $\uparrow (f \dotm g) \in USC_\nearrow(X)$. Hence $\uparrow (f \dotm g) \leq h \Leftrightarrow f \leq g \+ h$ and $\uparrow (f \dotm g) \in USC_\nearrow(X)$. Hence $\uparrow (f \dotm g) = f \ominus g$. Thus, the embedding preserves the residuals. 

\end{enumerate} 
\end{proof} 

\begin{cor} \label{corollaire Macneille} 

The class of intuitionistic MC-algebras is stable under Macneille completion. 

\end{cor}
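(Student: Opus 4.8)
The plan is to argue exactly as in the proof of \textbf{Theorem} \ref{completion}, now applied to the theory $\Marco_\int$ of intuitionistic MC-algebras together with the two residuation axioms $a \le (a \- b) \+ b$ and $(a \+ b) \- b \le a$. First I would record the structural observation already used in the proof of \textbf{Theorem} \ref{gros thm}(\ref{gros thm 1}): after rewriting every $u \odot v$ as $j\left(\frac{u}{2} \+ \frac{v}{2}\right)$ (cf.\ \textbf{Definition} \ref{def odot}), Abbadini's presentation of MC-algebras consists entirely of inequalities $\phi \le \psi$ between non-decreasing $\Lusc$-operations. The goal then reduces to showing that, for an intuitionistic MC-algebra $A$, its Macneille completion $\bar A$, endowed with the induced operations, still satisfies all of these inequalities and is still intuitionistic.

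The second step is to equip $\bar A$ with the operations, and this is where intuitionisticity is genuinely used. Because $A$ is intuitionistic, $\+$ admits a residual $\-$; moreover the remaining generators have their adjoints available in $A$, since $2 \dashv \frac{\cdot}{2}$ and $j \dashv j_\ast$ hold in $[0,1]$ (axioms \ref{(2v)/2<=v}--\ref{v<=j*jv}) and therefore, by \textbf{Theorem} \ref{int formule formules}, in $A$. Hence the hypotheses of \textbf{Lemmas} \ref{Macneille} and \ref{opp Macneille} are met, and I can extend $\+$, $2\cdot$, $j_\ast$ and $\alpha$ to lower-bound-preserving operations on $\bar A$ while defining $\-$, $\frac{\cdot}{2}$ and $j$ as their respective adjoints --- verbatim the construction carried out for models of $\T$. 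Exactly as in the lemma showing that $\phi\colon A \to \bar A$ is a morphism of $\mathcal{L}$-structures, the canonical map $A \to \bar A$ preserves the whole signature.

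Finally I would verify the axioms on $\bar A$. The residuation axioms --- that is, the adjunction between $\+$ and $\-$ --- hold on $\bar A$ by construction of $\-$ as the left adjoint of the lower-bound-preserving map $\+$ in the complete lattice $\bar A$ (this is the content of \citup{theunissenMacNeilleCompletionsLattice2007} already invoked for $\T$), and likewise the adjunctions defining $\frac{\cdot}{2}$ and $j$ hold by construction. Every other axiom is an inequality $\phi \le \psi$ whose right-hand side is a composite of lower-bound-preserving generators, hence preserves lower bounds in each coordinate; since it holds on the image of $A$, \textbf{Lemma} \ref{lemme Macneille} propagates it to all of $\bar A$. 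This gives that $\bar A$ is an intuitionistic MC-algebra.

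The main obstacle is precisely the point at which intuitionisticity enters. For a general MC-algebra there is no residual, so \textbf{Lemma} \ref{Macneille} does not allow one to extend $\+$ to a lower-bound-preserving operation on $\bar A$ at all, and the argument simply fails --- which explains why the statement is restricted to the intuitionistic case. What must be checked with care is therefore twofold: (i) that the residual $\-$ of an intuitionistic MC-algebra really does furnish the auxiliary maps $g_i$ demanded by \textbf{Lemma} \ref{Macneille}, and (ii) that, after the $\odot \mapsto j\left(\frac{\cdot}{2}\+\frac{\cdot}{2}\right)$ rewriting, each axiom of $\Marco_\int$ has one side built purely from lower-bound-preserving generators (equivalently, the other side purely from upper-bound-preserving ones), so that \textbf{Lemma} \ref{lemme Macneille} applies uniformly; the $\odot$-terms, being mixed composites, are the ones whose shape needs the closest inspection.
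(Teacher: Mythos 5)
The paper prints no proof of this corollary at all: it is offered as a consequence of \textbf{Theorem} \ref{gros thm} together with the Macneille machinery already set up in Section \ref{section algebraic axiomatisation} (\textbf{Lemmas} \ref{opp Macneille}, \ref{Macneille}, \ref{lemme Macneille} and the argument of \textbf{Theorem} \ref{completion}), and your plan of rerunning that argument for $\Marco_\int$ is the natural reconstruction of the intended route. Two of your steps are nevertheless defective as written. First, you invoke \textbf{Theorem} \ref{int formule formules} to obtain the adjunction inequalities in $A$; that theorem is a statement about $\USC$ for $\lcal$ a locale, and an abstract intuitionistic MC-algebra is not (yet) known to be of that form --- that it embeds into such an algebra is essentially what \textbf{Theorem} \ref{gros thm} proves for the Cauchy-complete Archimedean case only. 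What you actually need is that the relevant $[0,1]$-valid inequalities are consequences of $\Marco_\int$ itself: residuation of $\+$ is an axiom of $\Marco$, but $2$, $j$ and $j_\ast$ are not primitives of $\LMarco$ and must first be introduced as terms ($v \+ v$, etc.), after which their adjunction laws have to be extracted from Abbadini's axioms (for instance via his result that MC-algebras form the variety generated by $[0,1]$), not from \textbf{Theorem} \ref{int formule formules}.

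Second, and more seriously, the crux --- that every axiom of $\Marco_\int$ fits \textbf{Lemma} \ref{lemme Macneille} --- is exactly what you leave open, and your blanket claim that each remaining axiom has a right-hand side built from lower-bound-preserving generators is false after the $\odot$-rewriting: on $\bar A$ the operation $j$ is defined as the left adjoint of $j_\ast$ and hence preserves only upper bounds, while $\+$ preserves only lower bounds, so $\odot = j\left(\frac{\cdot}{2} \+ \frac{\cdot}{2}\right)$ is a mixed composite, and an axiom of $\Marco_\int$ with $\odot$ occurring on the side that must preserve bounds does not fall under \textbf{Lemma} \ref{lemme Macneille} without further manipulation. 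Moreover, \textbf{Lemma} \ref{lemme Macneille} requires $f \leq g$ at \emph{all} tuples of $X^n$, i.e.\ the linearised inequality, so axioms with repeated variables must each be replaced by a $[0,1]$-valid off-diagonal strengthening (e.g.\ $2(u \wedge v) \leq u \+ v$ in place of $2v \leq v \+ v$) before the lemma applies. These axiom-by-axiom shape checks are precisely where the intuitionistic hypothesis must do its real work --- without residuation the analogous completion results fail for closely related classes --- so a proof that explicitly defers them, as yours does in its final paragraph, is incomplete at the one point where it could actually break down.
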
 

\begin{cor} \label{corollaire extension conservative} 

$\T_\int$ is a conservative extension of $\Marco_\int$ in the following sense: 

For all terms $\phi \et \psi$ in the language $\LMarco$, for all $n \in \nn$ $\phi \leq \psi \+ \ul{\frac{1}{2^n}}$ is a theorem of $\T_\int$, if and only if for all $n \in \nn$ $\phi \leq \psi \+ \ul{\frac{1}{2^n}}$ is a theorem of $\Marco_\int$. 

\end{cor}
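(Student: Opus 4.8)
The statement is an equivalence between two families of inequalities, so I would prove the two implications separately, writing $\phi \preceq \psi$ as an abbreviation for the family ``$\phi \leq \psi \+ \frac{\ul 1}{2^n}$ for every $n \in \nn$''; by \textbf{Notation} \ref{preceq} this is exactly the interpretation of $\phi \preceq \psi$ in an arbitrary model, and ``theorem of a theory'' means ``valid in every model of that theory''. The whole argument rests on two facts established earlier: the completeness of IC-algebras for $\T_\int$ (\textbf{Corollary} \ref{complétude des IC-algèbres}), which reduces provability of $\phi \preceq \psi$ in $\T_\int$ to the single semantic condition that $\USC \models \phi \leq \psi$ for every locale $\lcal$; and the embedding of \textbf{Theorem} \ref{gros thm}, which links IC-algebras and MC-algebras in both directions.

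For the implication ``theorem of $\Marco_\int$ $\Rightarrow$ theorem of $\T_\int$'' I would first recall that every IC-algebra is an intuitionistic MC-algebra: by the first assertion of \textbf{Theorem} \ref{gros thm} it is an MC-algebra, it is intuitionistic since $\+$ has the residual $\-$, and after substituting $j\!\left(\frac{u}{2} \+ \frac{v}{2}\right)$ for $\odot$ (\textbf{Definition} \ref{def odot}) it is a model of $\Marco_\int$. Hence if $\phi \preceq \psi$ is a theorem of $\Marco_\int$, each inequality $\phi \leq \psi \+ \frac{\ul 1}{2^n}$ holds in every IC-algebra $\USC$; as $\USC$ is Archimedean, the relation $\preceq$ coincides with $\leq$ there, so $\USC \models \phi \leq \psi$ for every locale $\lcal$, and \textbf{Corollary} \ref{complétude des IC-algèbres} then yields that $\phi \preceq \psi$ is a theorem of $\T_\int$.

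The reverse implication (conservativity) is the substantial one. Assume $\phi \preceq \psi$ is a theorem of $\T_\int$; by \textbf{Corollary} \ref{complétude des IC-algèbres} every IC-algebra satisfies $\phi \leq \psi$. Let $B$ be an arbitrary model of $\Marco_\int$, i.e. an intuitionistic MC-algebra; the goal is $B \models \phi \preceq \psi$. The plan is: pass to the Archimedean quotient $B/{\simeq}$, for which $B \models \phi \preceq \psi$ is equivalent to $(B/{\simeq}) \models \phi \leq \psi$, and check it is again an intuitionistic MC-algebra; complete it to an Archimedean, Cauchy-complete intuitionistic MC-algebra, using \textbf{Corollary} \ref{corollaire Macneille} to keep the residual (intuitionisticness) alive through the completion; apply the duality of \textbf{Theorems} \ref{isomorphie MC-algèbres} and \ref{équivalence} to identify this completion with $\Cn{X}$ for some intuitionistic compact Hausdorff ordered space $X$; and finally invoke the second assertion of \textbf{Theorem} \ref{gros thm} to embed $\Cn{X}$ into an IC-algebra $USC(Y) \cong USC(\Tcal(Y))$ by a map preserving the whole $\LMarco$-structure, including $\-$. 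Since $USC(Y) \models \phi \leq \psi$ and this embedding is an order embedding commuting with all the $\LMarco$-operations appearing in $\phi$ and $\psi$, the inequality $\phi \leq \psi$ is reflected back to $\Cn{X}$, hence to $B/{\simeq}$, and therefore $B \models \phi \preceq \psi$.

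The main obstacle is the completion step: I must produce, from the intuitionistic MC-algebra $B/{\simeq}$, an algebra that is simultaneously Archimedean, metrically (Cauchy-)complete and still intuitionistic, so that the Marquès duality \textbf{Theorem} \ref{équivalence} applies and exhibits it as some $\Cn{X}$ with $X$ intuitionistic. \textbf{Corollary} \ref{corollaire Macneille} guarantees that the residual survives the Macneille completion, so the delicate point is to reconcile the order completion used there with the Cauchy completion demanded by the duality (and to verify that neither completion destroys Archimedeanity), together with the bookkeeping showing that the Archimedean quotient turns the $\preceq$-statement about $B$ into the genuine order inequality $\phi \leq \psi$ that the embedding can transport. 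Once these completions are shown to coincide (or to embed compatibly) on Archimedean intuitionistic MC-algebras, the transfer of $\phi \leq \psi$ along the order embedding of \textbf{Theorem} \ref{gros thm} is routine.
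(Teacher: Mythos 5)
Your backward direction coincides with the paper's: both rest on the observation that every IC-algebra, with $\odot$ read as in \textbf{Definition} \ref{def odot}, is a model of $\Marco_\int$ (by \textbf{Theorem} \ref{gros thm} (1), residuation of $\+$, and the adjunction identities $a \leq (a \- b) \+ b$ and $(a \+ b) \- b \leq a$), followed by Archimedeanity to collapse $\preceq$ to $\leq$ and then \textbf{Corollary} \ref{complétude des IC-algèbres}. Your forward direction uses the same engine as the paper — the embedding $\Cn{X} \hookrightarrow USC(Y)$ of \textbf{Theorem} \ref{gros thm} (2), which transports $\phi \leq \psi$ from the IC-algebra $USC(Y)$ back to $\Cn{X}$ — but you organize it differently: the paper stops once validity in every $\Cn{X}$ is established and then passes directly to \emph{theoremhood in $\Marco_\int$}, treating the completeness of the class of $\Cn{X}$'s as supplied by the imported dualities (\textbf{Theorems} \ref{isomorphie MC-algèbres} and \ref{équivalence}), whereas you rebuild that completeness by hand, starting from an arbitrary model of $\Marco_\int$, quotienting, completing, and only then invoking the duality. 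One point in your favour: you correctly insist that the space produced be \emph{intuitionistic}, which is what makes the embedding preserve $\-$; the paper's own proof quotes the embedding only for the $\-$-free fragment $\{\+, \odot, \wedge, \vee, (\ul q)_{q \in [0,1]}, \frac{\cdot}{2}\}$ of $\LMarco$, a slight gloss that your route repairs.

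Your self-declared ``main obstacle'' — reconciling the Macneille completion of \textbf{Corollary} \ref{corollaire Macneille} with the Cauchy-completeness demanded by \textbf{Theorems} \ref{isomorphie MC-algèbres} and \ref{équivalence} — is real as stated, but it dissolves along either of two routes, so it is not a fatal gap. Either follow the paper and perform no completion at all: the reduction of provability in $\Marco_\int$ to validity in Cauchy-complete Archimedean intuitionistic MC-algebras is exactly what the cited duality machinery is invoked for. Or stay with your pipeline and observe that the Macneille completion of an Archimedean intuitionistic MC-algebra is again Archimedean: every element of the completion is simultaneously a join and a meet of embedded elements, and the $\+$ of the completion preserves meets, so an inequality $f \leq g \+ \ul{\frac{1}{2^n}}$ holding for all $n$ reduces to inequalities $y \leq z \+ \ul{\frac{1}{2^n}}$ between elements of the original algebra, whence $f \leq g$; moreover an order-complete Archimedean MC-algebra is Cauchy-complete (the limit of a Cauchy sequence $(a_m)$ is $\bigvee[n \in \nn] \bigwedge[m \geq n] a_m$). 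With \textbf{Corollary} \ref{corollaire Macneille} keeping the residual alive, \textbf{Theorem} \ref{équivalence} then exhibits the completion as $\Cn{X}$ with $X$ intuitionistic compact Hausdorff, and the rest of your transfer argument goes through as you describe.
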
 

\begin{proof} 

Let $X$ be an ordered topological space. Let $X_u$ be the topological space of the \textbf{proof} of \textbf{Theorem} \ref{gros thm}. 

According to \textbf{Theorem} \ref{gros thm} \ref{gros thm 2}, $\fct[\Cn{X}, USC(Y)]{f, f}$ is an embedding for the language \linebreak $\{\+, \odot, \wedge, \vee, (\ul q)_{q \in [0,1]}, \frac{\cdot}{2}\}$. Hence, for all terms of arity $k$ $\phi[v] \et \psi[v]$ in the language $\LMarco$, if for all $n \in \nn$ $\phi \leq \psi \+ \ul{\frac{1}{2^n}}$ is a theorem of $\T_\int$, then, for all $f \in \Cn{X}^k$, since $\phi[f] \leq \psi[f]$ in $USC(Y)$, $\phi[f] \leq \psi[f]$ in $\Cn{X}$, so, for all $n \in \nn$ $\phi \leq \psi \+ \ul{\frac{1}{2^n}}$ is a theorem of $\Marco_\int$. 

Conversely, for all locale $\lcal$, for all terms of arity $k$ $\phi[v] \et \psi[v]$ in the language $\LMarco$, if for all $n \in \nn$ $\phi \leq \psi \+ \ul{\frac{1}{2^n}}$ is a theorem of $\Marco_\int$, then, since $\USC \models \Marco_\int$, for all $f \in \USC^k$, $\phi[f] \leq \psi[f]$, so, for all $n \in \nn$ $\phi \leq \psi \+ \ul{\frac{1}{2^n}}$ is a theorem of $\T_\int$. 
\end{proof} 

Hence we can finally conclude by stating the following theorem. 

\begin{thm} \label{complétude des MC-algèbres} 

Both the classes of intuitionistic MC-algebras and $IC$ are sound for both theories $\T_\int$ and $\Marco$. 

Moreover, for all $\mathcal{L}$-terms $\phi \et \psi$, the following assertions are equivalent: 

\begin{enumerate} 

\item \label{complétude des MC-algèbres 1} For all $n \in \nn$, $\phi \leq \psi \+ \ul{\frac{1}{2^n}}$ is consequence of $\Marco_\int$. 

\item \label{complétude des MC-algèbres 2} For all $n \in \nn$, $\phi \leq \psi \+ \ul{\frac{1}{2^n}}$ is consequence of $\T_\int$. 

\item \label{complétude des MC-algèbres 3} For all locale $\lcal$, $\USC \models \phi \leq \psi$. 

\item \label{complétude des MC-algèbres 4} For all topological space $X$, $\Cn{X} \models \phi \leq \psi$. 

\end{enumerate} 

\end{thm}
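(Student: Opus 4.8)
The plan is to obtain the four-way equivalence as a short cycle built almost entirely from the two completeness corollaries already proved, and then to insert statement \ref{complétude des MC-algèbres 4} using the MC-algebra duality together with the embedding of Theorem \ref{gros thm}. Concretely, Corollary \ref{complétude des IC-algèbres} is exactly \ref{complétude des MC-algèbres 2} $\Leftrightarrow$ \ref{complétude des MC-algèbres 3}, and Corollary \ref{corollaire extension conservative} is \ref{complétude des MC-algèbres 1} $\Leftrightarrow$ \ref{complétude des MC-algèbres 2}; so \ref{complétude des MC-algèbres 1}, \ref{complétude des MC-algèbres 2} and \ref{complétude des MC-algèbres 3} are equivalent with no further work. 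The remaining task is to splice \ref{complétude des MC-algèbres 4} into this chain and then to read off the four soundness assertions of the first sentence.

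For \ref{complétude des MC-algèbres 3} $\Rightarrow$ \ref{complétude des MC-algèbres 4} I would argue by transfer of a universal inequality to a substructure. Equip an arbitrary topological space $X$ with the equality order; then $\uparrow U = U$ for every open $U$, so $X$ is trivially intuitionistic and $\Cn{X} = C^0(X)$ is an intuitionistic MC-algebra. By Theorem \ref{gros thm}, point \ref{gros thm 2}, there is a topological space $Y$ and an embedding $\Cn{X} \hookrightarrow USC(Y)$ preserving $\+$, $\odot$, $\wedge$, $\vee$, $\frac{\cdot}{2}$, the constants $\ul q$, and — since $X$ is intuitionistic — also $\-$. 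Because over this theory one has $2v = v \+ v$ (Corollary \ref{2+}), $j_\ast(v) = v \+ \frac{\ul 1}{2}$, $j(v) = (v \- \frac{\ul 1}{2}) \+ (v \- \frac{\ul 1}{2})$ and $\alpha(v) = \frac{v}{2} \vee j(v)$, this map is in fact an $\mathcal{L}$-embedding. Now $USC(Y) \simeq USC(\Tcal(Y))$ with $\Tcal(Y)$ a locale, so $USC(Y) \in IC$; hence by \ref{complétude des MC-algèbres 3} it satisfies $\phi \leq \psi$ on all arguments, and since $\phi \leq \psi$ is a universal inequality it descends to the subalgebra $\Cn{X}$.

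For the converse direction, closing the cycle as \ref{complétude des MC-algèbres 4} $\Rightarrow$ \ref{complétude des MC-algèbres 1}, and for the soundness assertions, I would lean on the MC-algebra duality (Theorems \ref{isomorphie MC-algèbres} and \ref{équivalence}) and the Archimedean characterisation. The easy soundness facts are immediate: $IC \models \T_\int$ is recorded in the proof of Corollary \ref{complétude des IC-algèbres}, $IC \models \Marco$ is Theorem \ref{gros thm}, point \ref{gros thm 1}, and intuitionistic $MC \models \Marco$ is definitional. For \ref{complétude des MC-algèbres 4} $\Rightarrow$ \ref{complétude des MC-algèbres 1}: if $\phi \leq \psi$ holds in every $\Cn{X}$ then so does $\phi \leq \psi \+ \frac{\ul 1}{2^n}$ for each $n$; since the Cauchy-complete Archimedean intuitionistic MC-algebras are precisely the $\Cn{X}$ and $\Marco_\int$ is complete for this class, each such inequality is a consequence of $\Marco_\int$, which is \ref{complétude des MC-algèbres 1}. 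The reverse implication \ref{complétude des MC-algèbres 1} $\Rightarrow$ \ref{complétude des MC-algèbres 4} is soundness of $\Marco_\int$ on each $\Cn{X}$ followed by Archimedeanity.

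The step I expect to require the most care is the soundness of a general, non-complete intuitionistic MC-algebra for $\T_\int$, together with the attendant change of language. The corollaries I am chaining, and the completeness of $\Marco_\int$, are phrased for $\LMarco$-terms, whereas $\phi$ and $\psi$ are $\mathcal{L}$-terms; so I must first use the identities $2v = v \+ v$, $j_\ast(v) = v \+ \frac{\ul 1}{2}$, $j(v) = (v \- \frac{\ul 1}{2}) \+ (v \- \frac{\ul 1}{2})$ and $\alpha(v) = \frac{v}{2} \vee j(v)$ to rewrite each $\mathcal{L}$-term into an $\LMarco$-term modulo $\T_\int$, and check this rewriting is compatible with the embeddings above and with passage to the Archimedean quotient. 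The genuinely delicate point is that an arbitrary intuitionistic MC-algebra need not be of the form $\Cn{X}$, so its validation of $\T_\int$ is only meaningful after Archimedeanization — which is precisely why every clause is stated with the $\+ \frac{\ul 1}{2^n}$ slack rather than as exact satisfaction; I would therefore argue that every intuitionistic MC-algebra is a model of $\Marco_\int$ whose $\T_\int$-axioms, valid in $[0,1]$, are $\Marco_\int$-consequences up to $\frac{\ul 1}{2^n}$, and conclude by the same Archimedean passage used to close the cycle.
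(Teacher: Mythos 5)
Your reduction of the four-way equivalence to \textbf{Corollaries} \ref{complétude des IC-algèbres} and \ref{corollaire extension conservative} is exactly the chain the paper intends (the paper states this theorem without a proof, as a summary of those results), but your treatment of assertion \ref{complétude des MC-algèbres 4} has a genuine gap: the two arrows touching it quantify over different classes of spaces. For \ref{complétude des MC-algèbres 3} $\Rightarrow$ \ref{complétude des MC-algèbres 4} you read $X$ as a plain topological space equipped with the equality order, so that $\Cn{X} = C^0(X)$ with all operations pointwise; but for \ref{complétude des MC-algèbres 4} $\Rightarrow$ \ref{complétude des MC-algèbres 1} you invoke the fact that the $\Cn{X}$ are exactly the Cauchy-complete Archimedean intuitionistic MC-algebras, which requires $X$ to range over compact Hausdorff intuitionistic \emph{ordered} spaces. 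Under your equality-order reading, the implication \ref{complétude des MC-algèbres 4} $\Rightarrow$ \ref{complétude des MC-algèbres 1} is in fact false: in $C^0(X)$ the residual of $\+$ is the pointwise truncated difference, so $\ul 1 \- (\ul 1 \- v) = v$ and the inequality $v \leq \ul 1 \- (\ul 1 \- v)$ holds in every such algebra; yet it fails in $\USC$ for any non-Boolean locale $\lcal$ by \textbf{Theorem} \ref{involutive correspondence}, so it cannot be a consequence of $\Marco_\int$ up to $\frac{\ul 1}{2^n}$ — otherwise, by the implications \ref{complétude des MC-algèbres 1} $\Rightarrow$ \ref{complétude des MC-algèbres 2} $\Rightarrow$ \ref{complétude des MC-algèbres 3} you have already established, every IC-algebra would be involutive. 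So assertion \ref{complétude des MC-algèbres 4} must be read over (intuitionistic) ordered spaces, as the paper's surrounding proofs do (the proof of \textbf{Corollary} \ref{corollaire extension conservative} begins ``Let $X$ be an ordered topological space''), and then your proof of \ref{complétude des MC-algèbres 3} $\Rightarrow$ \ref{complétude des MC-algèbres 4} covers only the equality-ordered instances.

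The repair is short and uses nothing beyond what you already cite. Drop the equality-order reduction: for any intuitionistic ordered $X$, \textbf{Theorem} \ref{gros thm} \ref{gros thm 2} directly yields an embedding $\Cn{X} \hookrightarrow USC(Y)$ preserving $\+$, $\wedge$, $\vee$, $\frac{\cdot}{2}$, the constants and $\-$, hence an $\mathcal{L}$-embedding once $2$, $j$, $j_\ast$, $\alpha$ are rewritten — note that the identity you need is $j_\ast(v) = \frac{v}{2} \+ \frac{\ul 1}{2}$, not $v \+ \frac{\ul 1}{2}$: axioms \ref{def j* <=} and \ref{def j* >=} give $j_\ast(2v) = v \+ \frac{\ul 1}{2}$, and the unnumbered corollary following \textbf{Corollary} \ref{complétude des IC-algèbres} carries the same slip that you reproduced; with the corrected identity your argument goes through unchanged. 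For the converse direction, it is cleaner to close the cycle as \ref{complétude des MC-algèbres 4} $\Rightarrow$ \ref{complétude des MC-algèbres 3} rather than through $\Marco_\int$: the algebra $\USC$ is a Cauchy-complete Archimedean intuitionistic MC-algebra, so by \textbf{Theorems} \ref{isomorphie MC-algèbres} and \ref{équivalence} it is MC-isomorphic to $\Cn{Sp(USC(\lcal))}$ with $Sp(USC(\lcal))$ compact Hausdorff intuitionistic, and an MC-isomorphism automatically preserves $\-$, being an order isomorphism commuting with $\+$. This also lets you avoid the appeal to a completeness theorem for $\Marco_\int$ with respect to the class of all $\Cn{X}$, which the paper never proves (its own proof of \textbf{Corollary} \ref{corollaire extension conservative} tacitly uses the same fact, so you are no worse off than the paper there, but the duality route needs only what is on the page). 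With these two adjustments your decomposition is correct and coincides with the intended argument.
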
 

\subsection{Reduction of the axiomatisation of $USC(\lcal)$ to the axiomatisation of $USC(X)$} \label{subsection reduction} 

This subsection aims at proving that the class of IC-algebras and the class of all the $USC(X)$ for $X$ a topological space have same theory in the language $\Lusc$ and that the class of IC-algebras and the class of Cauchy-complete Archimedean MC-algebra have same theory in the language $(\conc)_{n \in \nn}$. The first subsubsection is dedicated to defining the action of $(\conc)_{n \in \nn}$ on every Cauchy-complete Archimedean MC-algebra. 

\subsubsection{The action of all continuous functions from $[0,1]$ to $[0,1]$ on $USC(\lcal)$} 

Before getting to the heart of the matter, we need some notations. 

\begin{nota} 

We recall here that the topology of $[0,1]$ is denoted $\Tcal_u$ and the topology of $[0,1]^n$ is denoted $\Tcal_{u,n}$, for all $n \in \nn$ (\textup{\ref{locale}}). 

\end{nota} 

\begin{thm} \emph{(MC version of \textbf{Theorem} \ref{PositiveAction})}. \label{MC-action} 

Let $A$ be a Cauchy-complete Archimedean MC-algebra. 

There exists a unique family of continuous functions $\left(\cdot\colon \conc[n] \rightarrow C^0(A^n,A)\right)_{n \in \nn}$ that is associative in the sense that, for all $a \in \conc[n]$ and $(b_1, \ldots, b_n) \in \prod[i = 1][n] \conc[k_i]$ and $(f_{i,1}, \ldots, f_{i,k_i})_{i \in \dc[1,n]} \in A^{\sum[i = 1][n] k_i}$, 

$$(a \circ (b_1, \ldots, b_n)) \cdot (f_{1,1}, \ldots, f_{n,k_n}) = a \cdot (b_1 \cdot (f_{1,1}, \ldots, f_{1,k_1}), \ldots, b_n \cdot (f_{n,1}, \ldots, f_{n,k_n}))$$ 

and, for all $(f,g) \in A^2$: \begin{enumerate} 

\item $\vee \cdot (f , g) = f \vee g$ 
\item $\wedge \cdot (f , g) = f \wedge g$ 
\item $\dotp \cdot (f , g) = f \+ g$ 
\item $\odot  \cdot (f , g) = f \odot g$

\end{enumerate} 

Moreover, $\cdot$ are isometries, and, if there exists an ordered topological space $X$ such that $A = C^0_\nearrow(X)$, then for every $n \in \nn$, $a \in \conc[n]$ and $f \in C^0_\nearrow(X)^n$ $a \cdot f = a \circ f$. 

\end{thm}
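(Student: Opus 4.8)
The plan is to reduce everything to the concrete picture of $A$ furnished by Abbadini's duality and then to mimic, almost line for line, the proof of \textbf{Theorem} \ref{PositiveAction}. First I would invoke \textbf{Theorem} \ref{isomorphie MC-algèbres} to fix an isomorphism $A \cong \Cn{X}$ with $X = Sp(A)$ a compact Hausdorff ordered space, and transport the metric $d$ along it; from the defining formula for $d$ (the definition accompanying \textbf{Definition} \ref{def odot}) one checks that on $\Cn{X}$ it is exactly the uniform metric $d(f,g) = \sup_{x \in X}\abs[f(x) - g(x)]$. With this picture the candidate action is plain composition: for $a \in \conc[n]$ and $f \in A^n = \Cn{X}^n$ set $a \cdot f = a \circ (f_1, \ldots, f_n)$. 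Since a composite of continuous non-decreasing maps is continuous and non-decreasing, $a \circ f$ again lies in $\Cn{X} = A$, so this is well defined.

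The existence half is then immediate. Associativity of $\cdot$ is associativity of composition (with projections acting as coordinate projections), and the four defining identities hold because $\vee$, $\wedge$, $\dotp$, $\odot$ are computed pointwise in $\Cn{X}$, i.e. are themselves instances of composition. That $a \mapsto (f \mapsto a \cdot f)$ is an isometry from $(\conc[n],\norm[\cdot][\infty])$ into $(C^0(A^n,A),d_\infty)$ is proved exactly as in \textbf{Lemma} \ref{isometry}: one inequality $d_\infty(a\cdot\_,b\cdot\_)\leq \norm[a-b][\infty]$ holds because $f(x)\in[0,1]^n$, and the reverse is obtained by feeding in the constant tuples $(\ul{q_1},\ldots,\ul{q_n})$, for which $a\cdot(\ul{q_1},\ldots,\ul{q_n}) = \ul{a(q)}$ with $q=(q_1,\ldots,q_n)$, so $d(a\cdot(\ul q),b\cdot(\ul q))=\abs[a(q)-b(q)]$. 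In particular $\cdot$ is continuous.

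For uniqueness I would pin any admissible family down on a dense subset and extend by continuity. The key density statement is the analogue of \textbf{Corollary} \ref{corISW}: the sublattice $L_n^{MC}$ of $\conc[n]$ generated by the projections together with $\vee$, $\wedge$, $\dotp$, $\odot$ and the constants $(\ul q)_{q\in[0,1]}$ is uniformly dense in $\conc[n]$ for $n\geq 1$. This follows from the increasing Stone--Weierstrass theorem \textbf{Theorem} \ref{IncreasingSW} once its separation hypothesis is verified, which is done as in \textbf{Corollary} \ref{corISW} by building the required truncated-affine functions from $\dotp$, $\odot$ and constants (using $z \dotm c = z \odot \ul{1-c}$ and $k$-fold $\dotp$ for integer multiplication); this is essentially Theorem~8.3 of \cite{abbadiniDualCompactOrdered2019}, already noted to be a corollary of \textbf{Theorem} \ref{IncreasingSW}. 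Now, given a second admissible family $\cdot'$, associativity forces it to agree with $\cdot$ on every composite of generators, so it suffices to fix the generators. The four identities handle $\vee$, $\wedge$, $\dotp$, $\odot$. The constants need a short separate argument (which is also the content of the $n=0$ case): neutrality of $\ul 0$ for $\vee$ and of $\ul 1$ for $\wedge$ forces $\ul 0\cdot'\ast=\ul 0$ and $\ul 1\cdot'\ast=\ul 1$, and then $q\mapsto q\cdot'\ast$ is a continuous homomorphism of the $(\dotp,\odot,\vee,\wedge,\ul0,\ul1)$-reduct of $[0,1]$ into $A=\Cn{X}$; evaluating at each $x\in X$ yields a continuous endomorphism of the standard MV-algebra $[0,1]$, which must be the identity, whence $q\cdot'\ast=\ul q$. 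Thus $\cdot'=\cdot$ on $L_n^{MC}$, and by continuity on all of $\conc[n]$.

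Finally, the ``moreover'' clause drops out of uniqueness: for any $X$ with $A=\Cn{X}$, composition $a\circ f$ is itself an admissible family (pointwise it satisfies the four identities, is associative, and is an isometry by the same constant-tuple computation), so it must coincide with $\cdot$, giving $a\cdot f = a\circ f$. The step I expect to be the main obstacle is the density claim: although it parallels \textbf{Corollary} \ref{corISW}, the MC generating set lacks $\frac{\cdot}{2}$, $2$ and $j$, so dyadic constants cannot be produced internally; one must instead carry all the constants $(\ul q)_{q}$ as generators, verify the separation condition of \textbf{Theorem} \ref{IncreasingSW} from $\dotp$, $\odot$ and these constants alone, and combine this with the rigidity of the standard MV-algebra, which is exactly what makes the constants' action unique.
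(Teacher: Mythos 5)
Your proposal is correct, and its skeleton — realise the action concretely, check it is an isometric module structure, then obtain uniqueness from density of a finitely generated sublattice together with continuity and associativity — is the same as the paper's. The local differences are worth recording. For existence, the paper simply cites \textbf{Definition} \ref{interpretation} and \textbf{Theorem} \ref{PositiveAction} (viewing $A$, via \textbf{Theorem} \ref{isomorphie MC-algèbres} and \textbf{Lemma} \ref{X_u}, inside some $\USC$), whereas you pass through $A \cong \Cn{Sp(A)}$ and take literal composition; the two definitions agree by \textit{Remark} \ref{topological case}, which is also how the paper obtains the final clause $a \cdot f = a \circ f$ that you instead derive from uniqueness — both are fine. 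For density, the paper's \textbf{Corollary} \ref{corMC-SW} invokes Abbadini's Stone--Weierstrass theorem for MC-algebras (\textbf{Theorem} \ref{MC-SW}), whose separation hypothesis is only ``$\phi(x) < \phi(y)$ for $x \not\geq y$'' and is met by a projection alone; your route through \textbf{Theorem} \ref{IncreasingSW} is valid but obliges you to verify the stronger two-point interpolation condition by hand with $\dotp$, $\odot$ and constants, so the paper's shortcut is cheaper. Finally, your separate treatment of the constants is a genuine addition rather than a redundancy: the paper's dense set $L_n$ is generated using the dyadic constants, yet its uniqueness argument only says each $a_k$ is ``a composition of elements of $\{\vee, \wedge, \dotp, \odot\}$ and of projections'' and never explains why an abstract admissible family is determined on constant functions — nor on projections, which, since the operadic composition has disjoint variables, are not composites of the four binary symbols alone (one needs, e.g., $\pi_1 = \wedge \circ (\id, c_1) = \vee \circ (\id, c_0)$, which first requires pinning down the unary constants $c_0, c_1 \in \conc[1]$ and the identity). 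Your rigidity argument does exactly this: neutrality and absorption force $c_0 \cdot {}\_ = \ul 0$ and $c_1 \cdot {}\_ = \ul 1$, the relations $u \dotp u = \ul 1$ and $u \odot u = \ul 0$ then force $c_{1/2}$, induction yields all dyadics, and monotonicity plus continuity of the family the remaining constants; since the family in the statement is indexed by $n \in \nn$, one should phrase this for constant functions of positive arity rather than a genuine $n = 0$ component, but the argument transfers verbatim. On this point your write-up is more complete than the paper's own proof.
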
 

\remark{The previous theorem means that there is a unique structure of module on any MC-algebra over the operad $\left(\conc[n]\right)_{n \in \nn}$.} 

The uniqueness of such a family follows from a Stone-Weierstrass-type theorem, which can itself be seen as a corollary of \bf{Theorem} \ref{IncreasingSW}. 

\begin{thm}[Stone-Weierstrass Theorem for MC-algebras, {\cite[Theorem 8.3]{abbadiniDualCompactOrdered2019}}] \label{MC-SW} 

Let $X$ be a preordered topological space, let $L$ be an MC-subalgebra of $C_\nearrow^0(X)$, and suppose that, for every $x, y \in X$, if $x \not\geq y$ then there exists $\phi \in L$ such that $\phi(x) < \phi(y)$. 

If $X$ is compact, then, for every $\psi \in C_\nearrow^0(X)$, there exists a sequence $(\psi_n)_{n \in \nn}$ in $L$ uniformely converging to $\psi$. 

\end{thm}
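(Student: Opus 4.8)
The plan is to deduce \textbf{Theorem}~\ref{MC-SW} from the increasing Stone--Weierstrass theorem \textbf{Theorem}~\ref{IncreasingSW}, by checking that the MC-subalgebra $L$, regarded merely as a sublattice of $C^0_\nearrow(X)$, satisfies the two-point interpolation hypothesis of \ref{IncreasingSW}. First I would dispose of the degenerate cases. Write $x \sim y \Leftrightarrow (x \leq y \et y \leq x)$; non-decreasing continuous maps $X \to [0,1]$ factor through the poset reflection $X/\!\!\sim$, which is again compact and carries a genuine order, so one may replace $X$ by $X/\!\!\sim$ and assume $\leq$ is a partial order. If the resulting space has at most one point, then $C^0_\nearrow(X)$ is a point or is order-isomorphic to $[0,1]$, and since every MC-subalgebra contains all dyadic constants $\ul d$ (built from $\ul 1$ using $\frac{\cdot}{2}$ and $\dotp$) and these are dense, density of $L$ is immediate. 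So I may assume $X$ is compact with at least two points, exactly the setting of \ref{IncreasingSW}.

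The heart of the argument is upgrading the single order-separating function granted by the hypothesis of \ref{MC-SW} into the two-target interpolation demanded by \ref{IncreasingSW}. Fix $x,y$ with $y \not\leq x$ (equivalently $x \not\geq y$), together with $p \leq q$ in $[0,1]$ and $\epsilon > 0$. By hypothesis there is $\phi \in L$ with $a := \phi(x) < \phi(y) =: b$. Since $a < b$ and $p \leq q$, there is a non-decreasing continuous $h \colon [0,1] \to [0,1]$ (e.g. piecewise affine) with $h(a) = p$ and $h(b) = q$, and then $h \circ \phi$ is non-decreasing, continuous, and equals $p$ at $x$ and $q$ at $y$. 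It remains to place a good approximation of $h \circ \phi$ inside $L$. Here the MC-structure enters: by \textbf{Corollary}~\ref{corISW} the compositions of the generators $\max, \min, \dotp, 2\cdot, \frac{\cdot}{2}, j, \ul 0, \ul 1$ are dense in $\conc[1]$, so I may pick such a composition $h'$ with $\norm[h' - h][\infty] < \epsilon$. Each of these generators is an operation of every MC-subalgebra once one observes that $2v = v \dotp v$ and $j(v) = v \odot v$ — indeed in $[0,1]$ one has $a \odot b = j\!\left(\frac{a}{2} \+ \frac{b}{2}\right) = \max[a + b - 1, 0]$ (cf. \textbf{Definition}~\ref{def odot}), whence $v \odot v = \max[2v - 1, 0] = j(v)$, and a truncated subtraction of a dyadic constant is $f \odot \ul{1 - d}$. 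Consequently $f := h'(\phi)$, obtained by substituting $\phi$ into $h'$, lies in $L$, and $\abs[f(x) - p] = \abs[h'(a) - h(a)] < \epsilon$ and likewise $\abs[f(y) - q] < \epsilon$. This is precisely the interpolation hypothesis of \ref{IncreasingSW}.

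With the hypothesis verified, \ref{IncreasingSW} yields that $L$ is dense in $C^0_\nearrow(X)$ for the uniform norm. To reach the stated conclusion, for $\psi \in C^0_\nearrow(X)$ I would choose, for each $n$, an element $\psi_n \in L$ with $\sup_{x \in X} \abs[\psi_n(x) - \psi(x)] < \frac{1}{2^n}$. Since the MC-metric $d$ on $C^0_\nearrow(X)$ coincides with the uniform metric (one computes directly $d(f,g) = \norm[f - g][\infty]$), the sequence $(\psi_n)_{n \in \nn}$ converges uniformly to $\psi$, as required.

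The step I expect to be the main obstacle is the upgrade carried out in the second paragraph: guaranteeing that the interpolating post-composition $h \circ \phi$ really can be realized inside $L$ from the available MC-operations. The delicate point is that the target value $p$ at $x$ may lie \emph{below} $\phi(x) = a$, so one must be able to pull values down; this succeeds precisely because subtracting a constant preserves monotonicity, so the MV-style difference $f \odot \ul{1 - d}$ acts pointwise on $C^0_\nearrow(X)$ and stays in the subalgebra. Once the generator list is seen to be MC-definable and \ref{corISW} supplies the uniform approximation, everything else is routine bookkeeping.
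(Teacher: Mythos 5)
Your proof is correct and follows exactly the route the paper itself indicates: the paper gives no proof of Theorem \ref{MC-SW} (it cites \cite[Theorem 8.3]{abbadiniDualCompactOrdered2019} and merely remarks, just after Theorem \ref{IncreasingSW}, that it is a corollary of that theorem), and your argument — poset reflection, upgrading the single order-separating $\phi \in L$ to two-point interpolation by post-composing with a monotone $h$ approximated via Corollary \ref{corISW}, then applying Theorem \ref{IncreasingSW} — supplies precisely the omitted derivation. The one assertion you leave tacit, that every MC-subalgebra is closed under $\frac{\cdot}{2}$ and contains the dyadic constants, does hold, but only because Abbadini's signature includes the countable-arity operation interpreted as $(x_n)_n \mapsto \sum_{n \geq 1} x_n/2^n$ (so $x/2$ and $\ul d$ are term operations); this deserves an explicit sentence, since the finitary operations $\+,\, \odot,\, \vee,\, \wedge,\, \ul 0,\, \ul 1$ alone preserve $\{0,1\}$ and could never produce $\frac{\cdot}{2}$.
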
 

\begin{cor} \label{corMC-SW} Let $n \in \nn$. 

$L_n = \{a \in \conc[n] \, | \, a \text{ is a composition of } \vee, \, \wedge, \+, \, \odot, \, (d)_{d \in [0,1] \texte{dyadic}} \et \text{the projections}\}$ is dense in $\conc[n]$. 

\end{cor}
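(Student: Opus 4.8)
The plan is to deduce this from the Stone--Weierstrass theorem for MC-algebras (\textbf{Theorem} \ref{MC-SW}) applied to $X = [0,1]^n$ equipped with the product order, which is compact, exactly as \textbf{Corollary} \ref{corISW} was obtained from the lattice version \textbf{Theorem} \ref{IncreasingSW}. The decisive difference is that \textbf{Theorem} \ref{MC-SW} only demands the weak order-separation condition ``$x \not\geq y \Rightarrow \exists \phi \in L,\ \phi(x) < \phi(y)$'' provided $L$ is a full MC-subalgebra, so here the separation step becomes trivial and the real content shifts to checking that $L_n$ is an MC-subalgebra of $\conc[n]$. I would first dispose of the base case $n = 0$, where $\conc[0] = [0,1]$ and $L_0$ already contains every dyadic of $[0,1]$; since the dyadics are dense, $L_0$ is dense.

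For $n \geq 1$ I would first verify $L_n \subseteq \conc[n]$: each generator $\vee$, $\wedge$, $\+$, $\odot$ is continuous and non-decreasing on $[0,1]^n$, each projection and each constant $\ul d$ lies in $\conc[n]$, and both continuity and monotonicity are preserved under composition. The key step is then to show $L_n$ is closed under all the MC-operations: by construction it is closed under $\vee, \wedge, \+, \odot$ and contains the dyadic constants, and the remaining operations reduce to these. The identity I would lean on is $v \dotm \ul d = v \odot \ul{1 - d}$ for every dyadic $d$ (a direct computation from the definition $a \odot b = j(\frac{a}{2} \+ \frac{b}{2})$ of \textbf{Definition} \ref{def odot}), together with $2^k v = v \+ \cdots \+ v$, so that truncated subtraction by a dyadic constant and integer scaling stay inside $L_n$. (Should the MC-signature carry a scalar constant $\ul q$ for every real $q$, I would instead run the argument with the MC-subalgebra $A_n$ generated by the projections, observe that $A_n$ separates points just as below, and note that $L_n$ is uniformly dense in $A_n$ because $\ul q$ is a uniform limit of dyadic constants $\ul d$ and the MC-operations are uniformly continuous; density of $A_n$ then transfers to $L_n$.)

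Finally the separation hypothesis of \textbf{Theorem} \ref{MC-SW} is immediate: if $x \not\geq y$ in $[0,1]^n$ there is a coordinate $i$ with $x_i < y_i$, and the $i$-th projection $\pi_i \in L_n$ satisfies $\pi_i(x) = x_i < y_i = \pi_i(y)$. As $[0,1]^n$ is compact, \textbf{Theorem} \ref{MC-SW} yields that $L_n$ is dense in $\conc[n]$. The main obstacle is the middle step, namely pinning down exactly which operations and constants the MC-signature contains and verifying $L_n$ is closed under all of them (equivalently, that the chosen generators already generate the MC-subalgebra of $\conc[n]$ determined by the projections); once that is settled, monotone separation by projections together with compactness of the cube make the invocation of \textbf{Theorem} \ref{MC-SW} routine.
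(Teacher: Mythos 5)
Your proposal is correct and follows essentially the same route as the paper: check that the generators produce an MC-subalgebra, separate points with the projections, and invoke \textbf{Theorem} \ref{MC-SW} on the compact cube $[0,1]^n$. The constants issue you hedge in your parenthetical is exactly how the paper handles it — it passes to the uniform closure $\ol{L_n}$, which is still stable under $\+,\ \vee,\ \wedge,\ \odot$ and contains all constant functions since the dyadics are dense, applies \textbf{Theorem} \ref{MC-SW} to $\ol{L_n}$, and deduces that $L_n$ itself is dense in $\conc[n]$.
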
 

\begin{proof} 

Let $n \in \nn$. $L_n$ is stable by $\+$, $\vee$, $\wedge$, and $\odot$. Let $\ol{L_n}$ denote its uniforme closure. By continuity of $\+$, $\vee$, $\wedge$, and $\odot$, $\ol{L_n}$ is still stable by $\+$, $\vee$, $\wedge$, and $\odot$. Moreover, it contains all the constant functions. Therefore, $\ol{L_n}$ is an MC-algebra. 

For all $x \not \leq y \in [0,1]^n$, there exists $i \in \dc[1,n]$ such that $x_i < y_i$, and so the projection on the $i$-th coordinate, $\pi$ satisfies $\pi(x) < \pi(y)$. $\pi \in L_n$. Thanks to \textbf{Theorem} \ref{MC-SW}, $[0,1]^n$ being compact, $\ol{L_n}$ is dense in $\conc[n]$, i.e. $L_n$ is dense in $\conc[n]$. 
\end{proof} 

\begin{proof} Proof of \textbf{Theorem} \ref{MC-action}. 

Let $A$ be a Cauchy-complete and Archimedean MC-algebra. \bf{Theorem} \ref{isomorphie MC-algèbres} allows to assume that there exists $X$ a compact Hausdorff ordered topological space such that $A = C_\nearrow^0(X)$. Action of $\Lcinc$ on $C_\nearrow^0(X)$ by composition satisfies points 1 to 4 of \bf{Theorem} \ref{MC-action}. 

%
To prove the uniqueness part of \textbf{Theorem} \ref{MC-action}, let $\cdot$ be a family of functions as in \textup{\textbf{Theorem} \ref{MC-action}}, $n \in \nn$ and $a \in \conc[n]$. There exists $(a_k)_{k \in \nn} \in L_n^\nn$ such that $a_k \rightarrow a$. 

For all $k \in \nn$, $a_k$ being a composition of elements of $\{\vee, \, \wedge, \, \+, \, \odot\}$ and of projections, since $\cdot$ preserves the composition, for all $f \in C_\nearrow^0(X)^n$ $a_k \cdot f = a_k \circ f$. $\cdot$ being continuous, for all $f \in C_\nearrow^0(X)^n$, $(a \cdot f) = \lim a_k \cdot f = \lim a_k \circ f = a \circ f$. 
\end{proof} 

\subsubsection{The reduction} 

The aim of this subsection is to prove the next theorem. 

\begin{thm} \label{Reduction} 

For every $n \in \nn$ and $a \in \Lusc[n]$, if for all topological space $X$ and $f \in USC(X)^n$ $a \cdot f = \ul 0$, then for all locale $\lcal$ and $f \in USC(\lcal)^n$, $a \cdot f = \ul 0$. 

\end{thm}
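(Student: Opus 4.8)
The plan is to prove the statement by transferring it down to the base algebra $[0,1]$ and back up again, rather than by trying to realise an arbitrary locale $\lcal$ as (a subobject or quotient of) the topology of an actual space. The latter is tempting but hopeless: a frame need neither embed into, nor admit a nontrivial frame surjection onto, the topology of a topological space — pointless locales admit only trivial spatial quotients — so a localic witness to $a\cdot f\neq\ul 0$ cannot in general be pushed to a spatial one. What saves the day is that the relation $a\cdot f=\ul 0$ is the interpretation of an inequality between two $\Lusc$-terms, namely $a(v_1,\dots,v_n)$ and the constant $\ul 0$, and the first of these is \emph{linear}, so the two-way soundness between $[0,1]$ and $\USC$ already established in this section is directly applicable.

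Concretely, I would first note that $a\cdot f=\ul 0$ is equivalent to $a\cdot f\leq \ul 0$, since $\ul 0$ is the minimum of $\USC$ (Lemma~\ref{def wedge et vee}). Next, applying the hypothesis to any single nonempty space $X$ and writing $USC(X)=USC(\Tcal(X))$, one gets $USC(\Tcal(X))\models a(v)\leq\ul 0$; hence by Theorem~\ref{formules [0,1]} we obtain $[0,1]\models a(v)\leq\ul 0$, i.e. the function $a$ is identically $0$ on $[0,1]^n$. (Equivalently and even more elementarily, evaluate at the one-point space, where $USC(\{\ast\})=[0,1]$ and the action is exact by Remark~\ref{topological case}, so $a\cdot t=\ul{a(t)}$ for $t\in[0,1]^n$ forces $a(t)=0$.) Finally, since $\lcal$ is a locale and $a(v)\leq\ul 0$ holds in $[0,1]$, Theorem~\ref{int formule formules} transfers it upward: $\USC\models a(v)\leq\ul 0$, that is $a\cdot f\leq\ul 0$ for every $f\in\USC^n$, and therefore $a\cdot f=\ul 0$, as required.

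The only genuine subtlety — what I would flag as the main obstacle — is conceptual rather than computational: one must route the argument through $[0,1]$ and resist spatialising $\lcal$ itself. The transfer is legitimate precisely because $a(v_1,\dots,v_n)$ is a single operad generator applied to \emph{distinct} variables, hence linear; for such a term the map $f\mapsto a\cdot f$ is exact, with no lax inequality intervening, so the soundness Theorems~\ref{formules [0,1]} and~\ref{int formule formules} apply in both directions and identify satisfaction of $a(v)\leq\ul 0$ in $\USC$ with satisfaction in $[0,1]$. Had the statement instead involved a term with repeated variables, the laxness recorded in Lemma~\ref{equation formules} would break this clean equivalence, and one would have to fall back on the heavier apparatus of this subsection — the embedding of Theorem~\ref{gros thm} together with the uniqueness of the operad action in Theorem~\ref{MC-action} — to relate the localic and spatial actions.
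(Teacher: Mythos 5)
Your proof is correct, and it takes a genuinely different---and in this instance lighter---route than the paper's. The paper does not pass through $[0,1]$ at all: it spatialises the algebra $USC(\lcal)$ itself. Since $USC(\lcal)$ is a Cauchy-complete Archimedean MC-algebra, Theorem \ref{isomorphie MC-algèbres} gives an isomorphism $USC(\lcal) \simeq \Cn{X}$ for some compact Hausdorff ordered space $X$, and Theorem \ref{gros thm} embeds $\Cn{X}$ into $USC(Y)$ for a topological space $Y$; both maps commute with the action (Corollary \ref{MC morphisms}, together with preservation of lower bounds to pass from $\conc$ to all of $\Lusc$), so applying the hypothesis to $Y$ and using injectivity of the composite yields $a \cdot f = \ul 0$. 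This, incidentally, refutes your opening assertion that pushing a localic witness to a spatial one is ``hopeless'': that is true of the frame $\lcal$ itself, but the paper performs exactly such a push at the level of the function algebra, via Nachbin-type duality rather than any frame morphism. Your argument---specialise the hypothesis to the one-point space to conclude $a \equiv 0$ on $[0,1]^n$ (via Remark \ref{topological case}, or Theorem \ref{formules [0,1]} applied with a nonempty $X$, which guarantees $\bot \neq \top$ so that $p \mapsto \ul p$ separates points), then lift $a(v) \leq \ul 0$ by Theorem \ref{int formule formules}, the term $a(v_1, \ldots, v_n)$ being linear---is sound at every step, and it buys two things the paper's proof does not make explicit: since linearity lets you invoke Theorem \ref{formule formules} instead, the conclusion actually holds for every commutative residuated complete lattice, not only locales; and the hypothesis is only needed for a single space. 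What the paper's heavier detour buys is the action-preserving embedding $USC(\lcal) \hookrightarrow USC(Y)$ itself, which transfers any property reflected by embeddings commuting with the action---not merely conclusions of the special shape ``$a \cdot f$ equals the bottom element $\ul 0$'' that your reduction through $[0,1]$ exploits---and which directly serves the corollary about $\conc$ and $\Cn{X}$ that follows the theorem. One last remark: your closing caveat about repeated variables is doubly moot here---the statement concerns a single generator $a \in \Lusc[n]$ applied to distinct arguments, so no repetition can occur, and over locales Remark \ref{Rq tense = inf} makes every term act exactly in any case.
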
 

An immediate corollary is the following one. 

\begin{cor} 

For every $n \in \nn$ and $a \in \conc[n]$, if for all compact Hausdorff intuitionistic topological space $X$ and $f \in \Cn{X}^n$ $a \cdot f = \ul 0$, then for all locale $\lcal$ and $f \in USC(\lcal)^n$, $a \cdot f = \ul 0$. 

\end{cor}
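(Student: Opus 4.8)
The plan is to avoid constructing an embedding $\USC \hookrightarrow USC(X)$ altogether and instead to produce, for each locale $\lcal$, a topological space $X$ together with a \emph{surjective} $\Lusc$-morphism $R\colon USC(X) \twoheadrightarrow \USC$. Granting such an $R$, the statement is immediate: if $a \cdot g = \ul 0$ holds for every $g \in USC(X)^n$, then given any $f \in \USC^n$ I lift each coordinate along $R$ to some $g \in USC(X)^n$ with $R(g) = f$, and since $R$ commutes with the action and fixes the constant $\ul 0$ I get $a \cdot f = a \cdot R(g) = R(a \cdot g) = R(\ul 0) = \ul 0$. So the whole problem reduces to manufacturing a space $X$ and a surjective $\Lusc$-morphism out of $USC(X)$; note that descending along a surjection is what lets me sidestep the fact that a pointless locale admits no frame embedding into any spatial frame.

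The space comes from the elementary fact that every locale is a quotient of a spatial frame. I would take $X$ to be the Alexandrov space whose points are the elements of $\lcal$ and whose opens are the down-closed subsets of $(\lcal, \leq)$; its topology is a genuine frame $\Tcal(X)$, and the join map $q\colon \Tcal(X) \rightarrow \lcal$, $D \mapsto \bigvee D$, is a surjective frame homomorphism. Indeed $q$ preserves arbitrary joins tautologically, preserves finite meets by the frame distributivity of $\lcal$ together with the fact that $D \cap E$ already contains every $d \wedge e$ with $d \in D$, $e \in E$, sends $X$ to $\top$, and is onto since $q(\downarrow a) = \bigvee \downarrow a = a$. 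Its right adjoint is $q_\ast(a) = \downarrow a$, with $q \circ q_\ast = \id_\lcal$.

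Postcomposition then gives the candidate $R = \Sup([0,1], q)\colon USC(X) \rightarrow \USC$, $g \mapsto q \circ g$, which is well defined since $q$ is sup-preserving. The key point---and the step I expect to demand the most care---is that $R$ is a morphism for the whole operadic action of $\Lusc$, i.e. $R(a \cdot g) = a \cdot R(g)$. This holds because the action of \textbf{Definition}~\ref{interpretation} is assembled entirely from arbitrary joins and from $\otimes$ (which on both $\Tcal(X)$ and the locale $\lcal$ is binary meet), all of which the frame homomorphism $q$ preserves; concretely one checks $q(G(g_i)(U)) = G(q \circ g_i)(U)$ for the Raney transform $G$ of \textbf{Corollary}~\ref{corscale}, and then pushes $q$ through the join of $\otimes$-products defining $a(g)(q')$. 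For surjectivity I would, given $f \in \USC$, set $g = \ouv{(q_\ast \circ f)}$, which lies in $USC(X)$ by \textbf{Lemma}~\ref{adj} applied with $\Tcal(X)$ in place of $\lcal$; since $q_\ast \circ f$ is non-decreasing, that lemma gives $g(q') = \bigvee_{p < q'} q_\ast(f(p))$, whence $q(g(q')) = \bigvee_{p < q'} f(p) = f(q')$ using that $q$ preserves joins and $f$ is sup-preserving, so $R(g) = f$.

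With $R$ established as a surjective $\Lusc$-morphism the proof closes as in the first paragraph. The only genuinely technical obstacle is the verification that $R$ intertwines the action, and there everything is forced by $q$ being a bona fide frame homomorphism: no delicate analysis of upper semi-continuity, and in particular no control of meets of the $\Lcinc$-action, is required, which is exactly the advantage of working with a frame quotient rather than an embedding.
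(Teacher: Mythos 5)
There is a genuine gap, and it sits in your first paragraph rather than in the technical work. Your surjection $R \colon USC(X) \twoheadrightarrow \USC$ reduces the conclusion to the hypothesis that $a \cdot g = \ul 0$ for every $g \in USC(X)^n$, where $X$ is the Alexandrov space of down-sets of $\lcal$ and the lifts $g = \ouv{(q_\ast \circ f)}$ are general upper semi-continuous functions. But the corollary's hypothesis is strictly narrower: it asserts vanishing only on tuples of \emph{continuous non-decreasing} functions $f \in \Cn{X}^n$, and only for \emph{compact Hausdorff intuitionistic ordered} spaces $X$. Your Alexandrov space lies outside this class (it is compact, since every open containing $\top$ is all of $\lcal$, but it is not Hausdorff in any relevant sense), and — decisively — your lifts are u.s.c. but nowhere near continuous, so the hypothesis simply does not apply to them. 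What you have proved, and as far as I can check correctly (the frame quotient $q \colon D \mapsto \bigvee D$ does intertwine the action of \textbf{Definition} \ref{interpretation}, because in the locale case that action is a join of finite meets of values $G(g_i)([0,p)) = g_i(p)$, all preserved by a frame homomorphism; and the surjectivity computation via \textbf{Lemma} \ref{adj} is fine), is \textbf{Theorem} \ref{Reduction}, whose hypothesis quantifies over all topological spaces and all of $USC(X)^n$. That is a genuinely different and rather elegant proof of \emph{that} theorem — your point that a surjection sidesteps non-spatiality is apt there — but it is not the stated corollary, which is a stronger assertion precisely because its hypothesis is weaker.

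The missing ingredient is the bridge from $\Cn{\cdot}$-vanishing to $USC(\cdot)$-vanishing, and it cannot be improvised: the action does not commute with infima in the argument $f$ (only in $a$, \textbf{Lemma} \ref{monotonie de l'action}), so you cannot approximate your u.s.c. lifts from above by continuous functions and pass to the limit. The paper closes this gap with duality: $\USC$ is a Cauchy-complete Archimedean intuitionistic MC-algebra, hence by \textbf{Theorems} \ref{isomorphie MC-algèbres} and \ref{équivalence} it is isomorphic to $\Cn{Z}$ for some compact Hausdorff \emph{intuitionistic} ordered space $Z$, and by \textbf{Corollary} \ref{MC morphisms} (resting on the Stone--Weierstrass density of \textbf{Corollary} \ref{corMC-SW}) this isomorphism commutes with the action of every $a \in \conc[n]$. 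Given that, the corollary is one line: $i(a \cdot f) = a \cdot i(f) = \ul 0$, and $i$ is injective. To repair your proof you would have to add exactly this duality step to convert the $\Cn{X}$-hypothesis into a $USC$-hypothesis for your Alexandrov space — at which point the quotient construction, while a nice self-contained replacement for the paper's proof of \textbf{Theorem} \ref{Reduction}, becomes redundant for the corollary itself.
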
 

A tool lemma for this purpose is the following one. 

\begin{lem} \label{L-morphisms} 

Any $\mathcal{L}$-morphism $F\colon USC(\lcal) \rightarrow USC(\lcal')$, where both $\lcal$ and $\lcal'$ are locales, is $1$-Lipschitzian, and if $F$ is an embedding, $F$ is an isometry. 

Thus, for all Cauchy-complete Archimedean MC-algebras $A$ and $B$ and all morphism of MC-algebras $F\colon A \rightarrow B$, $F$ is $1$-Lipschitzian, and, if $F$ is an embedding, $F$ is an isometry. 

\end{lem}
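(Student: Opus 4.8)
The plan is to express the metric purely through the residuated sum and the dyadic constants, and then to exploit that an $\mathcal{L}$-morphism preserves exactly those data while being monotone. First I would record that for $f \et g \in \USC$ and $q \in [0,1]$, the adjunction defining $\norm$ together with the residuation \ref{residuation} gives $\norm[f \- g] \leq q \Leftrightarrow f \- g \leq \ul q \Leftrightarrow f \leq \ul q \+ g$. Hence $d(f,g) \leq q$ if and only if $f \leq \ul q \+ g$ and $g \leq \ul q \+ f$, and this single reformulation is all I would use about $d$.

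Next I would note that an $\mathcal{L}$-morphism $F$ fixes every dyadic constant $\ul d$, since by \textbf{Notation} \ref{notation dyadiques} $\ul d$ is assembled from $\ul 1$, $\frac{\cdot}{2}$ and $\+$, and that $F$ is monotone because it preserves $\wedge$. To bound the Lipschitz constant, fix $f, g$ and set $q = d(f,g)$. For each dyadic $d > q$, monotonicity of $\ul{\cdot}$ yields $\ul q \leq \ul d$, so $f \leq \ul q \+ g \leq \ul d \+ g$, and applying $F$ gives $F(f) \leq \ul d \+ F(g)$. Taking the meet over all dyadic $d > q$ and using that $\+$ preserves lower bounds (being a right adjoint, with left adjoint $\-$) together with $\ul q = \bigwedge_{d > q} \ul d$ (lower-bound preservation of the extension of $\ul{\cdot}$), I obtain $F(f) \leq \ul q \+ F(g)$, and symmetrically $F(g) \leq \ul q \+ F(f)$. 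By the reformulation above this says $d(F(f),F(g)) \leq q = d(f,g)$, so $F$ is $1$-Lipschitzian.

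For the embedding case I would run the same chain backwards: an embedding reflects the order (it preserves and reflects $\wedge$), so from $F(f) \leq \ul d \+ F(g) = F(\ul d \+ g)$ for dyadic $d$ above $d(F(f),F(g))$ one gets $f \leq \ul d \+ g$, and passing to the meet $f \leq \ul q \+ g$ together with its symmetric; whence $d(f,g) \leq d(F(f),F(g))$, which combined with the Lipschitz bound gives isometry. Finally, the metric on an Archimedean MC-algebra from \cite[Definition 6.1]{abbadiniDualCompactOrdered2019} has exactly the form $d(f,g) = \left(\bigwedge\{q : f \leq g \+ \ul q\}\right) \vee \left(\bigwedge\{q : g \leq f \+ \ul q\}\right)$, and a morphism of MC-algebras likewise preserves $\+$ and the dyadic constants and is monotone, so the two arguments apply verbatim. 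The only delicate point throughout is the passage to the limit over dyadics, namely that $F$, $\+$ and $\ul{\cdot}$ all commute with the meets involved; this is precisely what the residuation of $\+$ and the lower-bound-preserving definition of $\ul{\cdot}$ supply, and it is the one place where I expect any friction.
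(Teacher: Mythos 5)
Your proof is correct, but it takes a slightly different route from the paper's, and the difference is worth spelling out. The paper exploits that $\-$ is a primitive symbol of $\mathcal{L}$: an $\mathcal{L}$-morphism satisfies $F(f \- g) = F(f) \- F(g)$ and fixes every dyadic $\ul d$, so $\norm[f \- g] \leq d$ gives $f \- g \leq \ul d$, hence $F(f) \- F(g) \leq \ul d$ and $\norm[F(f) \- F(g)] \leq d$; reading this chain as a chain of equivalences when $F$ reflects the order yields the isometry claim, all in a few lines and with no limit argument beyond the tacit fact that the norm is the infimum of its dyadic upper bounds. You instead move everything to the $\+$ side via residuation ($\norm[f \- g] \leq q \Leftrightarrow f \leq \ul q \+ g$), using only that $F$ preserves $\+$ and the dyadic constants and is monotone (resp.\ order-reflecting). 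This costs you an explicit passage to the limit over dyadics, but it proves a strictly more general fact — $1$-Lipschitzianity of any monotone map preserving $\+$ and the dyadics — and it makes the MC-algebra half of the lemma genuinely automatic, since Abbadini's metric is defined through $\+$ alone; the paper instead sidesteps the question of whether a morphism preserves the residual by building $\-$ into $\LMarco$ from the outset. One adjustment you should make: the meet you take \emph{inside the algebra}, via $\ul q = \bigwedge_{d > q} \ul d$ and lower-bound preservation of $(\cdot) \+ g$, is legitimate in $USC(\lcal)$ (the paper establishes both facts) but cannot be applied ``verbatim'' to an arbitrary Archimedean MC-algebra, which need not have these infima. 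It is also unnecessary: since the relevant quantities are $[0,1]$-valued infima, knowing $F(f) \leq \ul d \+ F(g)$ for every dyadic $d > q$ already forces $d(F(f), F(g)) \leq q$ by taking the infimum in $[0,1]$ (and the degenerate case $q = 1$, where no dyadic exceeds $q$, is trivial). With that substitution the one point of ``friction'' you flag disappears entirely, in both settings.
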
 

\begin{proof} 

For all $(f,g) \in USC(\lcal)^2$ and $d \in [0,1]$ dyadic, if $\norm[f \- g] \leq d$, then $f \- g \leq \ul d$, and thus $F(f) \- F(g) = F(f \- g) \leq F(\ul d) = \ul d$, which gives $\norm[F(f) \- F(g)] \leq d$. Thus, for all $(f,g) \in USC(\lcal)^2$, $\norm[F(f) \- F(g)] \leq \norm[f \- g]$, and then $$d(F(f),F(g)) = \norm[F(f) \- F(g)] \wedge \norm[F(g) \- F(f)] \leq \norm[f \- g] \wedge \norm[g \- f] = d(f,g).$$ 

If $F$ is an embedding, then, for all $(f,g) \in USC(\lcal)^2$ and $d \in [0,1]$ dyadic, 

\begin{align*} \norm[F(f) \- F(g)] \leq d &\Leftrightarrow F(f) \- F(g) \leq \ul d \\ 
& \Leftrightarrow F(f \- g) \leq F(\ul d) \\ 
& \Leftrightarrow f \- g \leq \ul d \\ 
& \Leftrightarrow \norm[f \- g] \leq d 
\end{align*} 

which amounts to $\norm[F(f) \- F(g)] = \norm[f \- g]$. 

Thus, for all $(f,g) \in USC(\lcal)^2$, $$d(F(f),F(g)) = \norm[F(f) \- F(g)] \wedge \norm[F(g) \- F(f)] = \norm[f \- g] \wedge \norm[g \- f] = d(f,g).$$ 
To prove the result for MC-algebras, thanks to \bf{Theorem} \ref{isomorphie MC-algèbres}, we just have to prove it for $C_\nearrow^0(X)$ and $C_\nearrow^0(Y)$, for $X$ and $Y$ compact ordered topological spaces. Let thus $F\colon C_\nearrow^0(X) \rightarrow C_\nearrow^0(Y)$ be a morphism of MC-algebras. By duality, it corresponds to a homeorphism from $Y$ to $X$ and thus extends to a morphism of AC-algebras $\tilde F \colon USC_\nearrow(X) \rightarrow USC_\nearrow(Y)$. It is thus 1-Lipschitzian. Moreover, if $F$ is an embedding, then so is $\tilde F$ and thus $\tilde F$ is an isometry and so is $F$. 

\end{proof} 

\begin{cor} \label{MC morphisms} 

Let $A$ and $B$ be two Cauchy-complete Archimedean MC-algebras and $F\colon A \rightarrow B$ be a morphism of MC-algebras. 

For all $a \in \conc[n]$ and $f \in A$, $F(a \cdot f) = a \cdot F(f)$. 

\end{cor}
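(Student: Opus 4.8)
The plan is to reduce the identity to the dense subclass $L_n \subset \conc[n]$ furnished by \textbf{Corollary} \ref{corMC-SW}, establish it there by structural induction, and then pass to the limit using the continuity of both the action and the morphism $F$. Throughout, $f$ denotes an element of $A^n$ and I write $f_i$ for its components.

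First I would prove $F(a \cdot f) = a \cdot F(f)$ for every $a \in L_n$, by induction on the length of $a$ as a composition of $\vee$, $\wedge$, $\dotp$, $\odot$, the dyadic constants, and the projections. For a projection $\pi_i$ we have $\pi_i \cdot f = f_i$, whence $F(\pi_i \cdot f) = F(f_i) = \pi_i \cdot F(f)$; for a dyadic constant $d$ we have $d \cdot f = \ul d$ and $F(\ul d) = \ul d = d \cdot F(f)$, since a morphism of MC-algebras fixes the constants. For the inductive step, write $a = b \circ (c_1, c_2)$ with $b \in \{\vee, \wedge, \dotp, \odot\}$. The associativity clause of \textbf{Theorem} \ref{MC-action} gives $a \cdot f = b \cdot (c_1 \cdot f, c_2 \cdot f)$, and because $b$ is one of the primitive operations of the MC-algebra language, $F$ preserves it, so $F(a \cdot f) = b \cdot (F(c_1 \cdot f), F(c_2 \cdot f))$. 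Applying the induction hypothesis to $c_1$ and $c_2$ and then associativity once more yields $F(a \cdot f) = b \cdot (c_1 \cdot F(f), c_2 \cdot F(f)) = a \cdot F(f)$.

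Then I would extend the identity to arbitrary $a \in \conc[n]$ by density. By \textbf{Corollary} \ref{corMC-SW} there is a sequence $(a_k)_{k \in \nn}$ in $L_n$ with $a_k \rightarrow a$ for the supremum metric. The continuity of the action (\textbf{Theorem} \ref{MC-action}) gives $a_k \cdot f \rightarrow a \cdot f$ and $a_k \cdot F(f) \rightarrow a \cdot F(f)$. By \textbf{Lemma} \ref{L-morphisms}, $F$ is $1$-Lipschitzian, hence continuous, so $F(a_k \cdot f) \rightarrow F(a \cdot f)$. Since $F(a_k \cdot f) = a_k \cdot F(f)$ for every $k$ by the previous step, and $a_k \cdot F(f) \rightarrow a \cdot F(f)$, uniqueness of limits in the metric space $B$ forces $F(a \cdot f) = a \cdot F(f)$.

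The argument is essentially routine once these two ingredients are assembled; there is no genuine obstacle beyond bookkeeping. The one point demanding care is the legitimacy of passing to the limit, which rests simultaneously on the continuity of $a \mapsto a \cdot \_$ from \textbf{Theorem} \ref{MC-action} and on the $1$-Lipschitz continuity of $F$ from \textbf{Lemma} \ref{L-morphisms}; the base of the induction also tacitly uses that a morphism of MC-algebras preserves the dyadic constants.
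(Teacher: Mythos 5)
Your proposal is correct and follows essentially the same route as the paper's proof: approximate $a$ by the dense family $L_n$ from \textbf{Corollary} \ref{corMC-SW}, use that $F$ preserves the primitive operations (hence commutes with each $a_k \in L_n$), and pass to the limit via the continuity of the action and the $1$-Lipschitz continuity of $F$ from \textbf{Lemma} \ref{L-morphisms}. The only difference is presentational: the paper leaves the identity $F(a_k \cdot f) = a_k \cdot F(f)$ implicit, whereas you spell it out by induction on the composition structure of $a_k$.
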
 

\begin{proof} 

Let $a \in \conc[n]$ and $f \in A$. There exists $(a_k)_{k \in \nn} \in L_n^\nn$ such that $a_k \rightarrow a$. For all $k \in \nn$, $F(a_k \cdot f) = a_k \cdot F(f)$, so, since $F$ is continuous, $F(a \cdot f) = a \cdot F(f)$. 
\end{proof} 

\begin{proof} Proof of \textbf{Theorem} \ref{Reduction}. 

Let $\lcal$ be a locale. 

Since $USC(\lcal)$ is a Cauchy-complete Archimedean MC-algebra, there exists a compact Hausdorff ordered topological space $X$ and an isomorphism of MC-algebras $i\colon USC(\lcal) \simeq C^0_\nearrow(X)$ (\bf{Theorem} \ref{isomorphie MC-algèbres}). Moreover, there exists a topological space $Y$ and $\fct[i'][C_\nearrow^0(X), {USC(Y)}]{f,f}$ an embedding of MC-algebras. According to \bf{Corollary} \ref{MC morphisms}, $i$ and $i'$ preserves $b \cdot \_$, for all $b \in \con[n]$. The lower bound of a family of continuous functions, if it exists, is the punctual lower bound. Hence, $i'$, as $i$, also preserves lower bounds. Thus, for all $a \in \Lusc[n]$ and $f \in \USC^n$, $$i' \circ i(a(f)) = i' \circ i\left(\bigwedge[\ust{b \in C^0_\nearrow(X)}{b \geq a}] b(f)\right) = \left(\bigwedge[\ust{b \in C^0_\nearrow(X)}{b \geq a}] b(i' \circ i(f))\right) = a(i' \circ i(f)).$$ 

Thus, for all $a \in \Lusc[n]$, if for all topological space $W$ and $f \in USC(W)$, $a \cdot f = \ul 0$, then, for all $f \in USC(\lcal)^n$, $i' \circ i(a(f)) = a(i' \circ i(f)) = \ul 0$, which, by injectivity of $i' \circ i$, gives $a \cdot f = \ul 0$. 
\end{proof} 

\subsection{Cut Admissibility} \label{subsection intuitionistic Cut Admissibility} 

The language for sequent calculus in the intuitionistic case is the same as the one for \CFLew, recalled in the following table. We keep Notation \ref{notations coupures}. 

\begin{figure}[!ht] 

$$\begin{array}{|c|c|c||c|c|} 
\hline 
\multicolumn{3}{|c||}{\text{Positive symbols}}& \multicolumn{2}{c|}{\text{Negative correspondent}}\\ 
\hline 
\text{structures}& \text{formulas} & \text{algebraic correspondent} & \text{formulas} & \text{algebraic notation}\\ 
\hline 
,& +& +& -& -\\ 
\hline 
\epsilon& \ul 0& \ul 0& \ul 1& \ul 1\\ 
\hline 
& \wedge& \wedge& \vee& \vee\\ 
\hline 
\circ_2& 2& 2& \frac{\cdot}{2}& \frac{\cdot}{2}\\ 
\hline 
\bullet_2& j_\ast& j_\ast& j& j\\ 
\hline 
\circ_\alpha& \alpha& \alpha& \blacksquare_\alpha& 2v \wedge j_\ast(v)\\ 
\hline 
\end{array}$$ 

\caption{Correspondence between structure symbols and $\mathcal{L}$} 

\end{figure} 

\begin{figure}[!ht] 

\adjustbox{max width = \textwidth}{

$\begin{array}{|ccc|} 

\hline 

[\ref{com monoid}a] \seq{{\Gamma[\gamma, \delta] \vdash A}}{{\Gamma[\delta, \gamma] \vdash A}} & [\ref{com monoid}b] \seq{{\Gamma[\gamma, (\delta, \pi)] \vdash A}}{{\Gamma[(\gamma, \delta), \pi] \vdash A}} & [\ref{com monoid}c] \seq[=]{{\Gamma[\epsilon, \gamma] \vdash A}}{{\Gamma[\gamma] \vdash A}} \\ 

[\ref{2 >=}] \seq{\Gamma[\gamma] \vdash A}{\Gamma[\circ_2 \gamma] \vdash A} & [\ref{j* >=}] \seq{\Gamma[\gamma] \vdash A}{\Gamma[\bullet_2 \gamma] \vdash A} & [\ref{bounded lattice}] \seq{\Gamma[\epsilon] \vdash A}{\Gamma[\gamma] \vdash A} \\ 

[\ref{2v <=}] \seq{\Gamma[\circ_2 \gamma] \vdash A, \Gamma[\circ_2 \delta] \vdash A}{{\Gamma[\gamma, \delta] \vdash A}} & \blue{[\ref{int 2v >=}] \seq{{\Gamma[\gamma, \gamma] \vdash A}}{\Gamma[\circ_2 \gamma] \vdash A}} & \blue{[\ref{bounded lattice} \et \ref{2j*(0) >= 1}]} \seq{{}}{\epsilon_1 \vdash A} \\ 

& [(\hyperlink{def j*}{4.b})] \blue{\seq[=]{\Gamma[{\bullet_2 (\circ_2 \gamma, \delta)}] \vdash A}{{\Gamma[\gamma , \bullet_2 \delta] \vdash A}}} & \\ 

\hypertarget{4.d}{[\ref{alpha beta v >= v}a]} \seq{\Gamma[\gamma] \vdash A}{\Gamma[\circ_\alpha \circ_2 \gamma] \vdash A} & [\ref{alpha beta v >= v}b] \seq{\Gamma[\gamma] \vdash A}{\Gamma[\circ_\alpha \bullet_2 \gamma] \vdash A} & [\ref{alpha beta v <=  v}] \seq{\Gamma[\circ_\alpha \circ_2 \gamma] \vdash A, \Gamma[\circ_\alpha \bullet_2 \gamma] \vdash A}{\Gamma[\gamma] \vdash A} \\ 

[\ref{beta alpha v >= v}a] \seq{\Gamma[\gamma] \vdash A}{\Gamma[\circ_2 \circ_\alpha \gamma] \vdash A} & [\ref{beta alpha v >= v}b] \seq{\Gamma[\gamma] \vdash A}{\Gamma[\bullet_2 \circ_\alpha \gamma] \vdash A} & [\ref{beta alpha v <=  v}] \seq{\Gamma[\circ_2 \circ_\alpha \gamma] \vdash A, \blue\Gamma[\bullet_2 \circ_\alpha \gamma] \vdash A}{\Gamma[\gamma] \vdash A} \\ 

[\ref{v<=d}] \seq{\Gamma[\gamma] \vdash A}{{\Gamma[\epsilon_d, {\circ_\alpha}^n (\gamma {,} \epsilon_{1 \- d})] \vdash A}} & [\ref{1/2^n + 1/2^n}] \seq{{\Gamma[\gamma] \vdash A}}{\Gamma[\circ_2] \vdash A} & [\hypertarget{Archi}{\ref{archimedean}}] \seq{\forall n \in \nn\; \Gamma[\epsilon_{\frac{1}{2^n}}\blue] \vdash A}{\Gamma[\epsilon ]\vdash A} \\ 

\multicolumn{3}{|c|}{[\ref{v>=d}] \seq{{\Gamma\left[\epsilon_{1 \- \frac{1}{2^{n+1}}} ,  {\circ_\alpha}^{n+1}\left(\gamma , \epsilon_{\frac{1}{2^{n+1}}}\right)\right] \vdash A}, \ldots, {\Gamma\left[\epsilon_{1 \- \frac{2^{n+1} \- 2}{2^{n+1}}} ,  {\circ_\alpha}^{n+1}\left(\gamma , \epsilon_{\frac{2^{n+1} \- 2}{2^{n+1}}}\right)\right] \vdash A}}{{\Gamma[\gamma , \epsilon_{\frac{1}{2^n}}] \vdash A}}}\\ 

\hline 

\end{array}$ 

}

\caption{\LJK} \label{LJK} 

\end{figure} 

Here again, we use a system similar to \MGL with several modalities. We need six modalities and three structural symbols $\circ_2$, $\bullet_2$ and $\circ_\alpha$. We thus obtain a system \MGL($\circ_2, \bullet_2, \circ_\alpha)$ given by \GL (from the \textbf{Appendix} \ref{Annexes}) understood with contexts of the extended language and \textit{Figure} \ref{Introduction Rules for Modalities}. In addition to these rules, we add the structural rules given in \textit{Figure} \ref{LJK} and call the total system \LJK. 

\vspace{\baselineskip} 

Since all the added rules are analytic (cf. \bf{Definition} \ref{analytic rules annexes}), according to \textbf{Theorems} \ref{Completeness theorem annexes}, \ref{Cut Admissibility annexes}, \ref{Equivalent theories} and \ref{complétude des MC-algèbres}, the \bf{Theorems} \ref{Completeness theorem int} and \ref{Cut Admissibility theorem int} are true. 

\section{Involutive case} \label{Involutive case} 

In this section, we study the involutivity of the negation. We first prove that the negation of an AC-algebra is involutive if and only if this is the case for the negation of the underlying commutative residuated complete lattice (\bf{Theorem} \ref{involutive correspondence}), which leads to an axiomatisation of these involutive AC-algebras. Finally, we give a sequent-style cut-free deductive system admitting the cut rule that describes involutive AC-algebras. 

\subsection{Involutive Continuous Algebras} 

\begin{thm} \label{involutive correspondence} 

Let $\lcal$ be a commutative residuated complete lattice. 

The negation of $USC(\lcal)$ is involutive if and only if so is the negation of $\lcal$. 

\end{thm}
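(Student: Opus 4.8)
The negation to analyse is $\ng[v] = \ul 1 \- v$, interpreted on $\USC$ through the residual $\-$ of $\+$, while on $\lcal$ the negation is $\ng[U] = U \nrightarrow \bot$; in either case involutiveness means $\ng[{\ng[v]}] = v$. The plan is to first make the negation on $\USC$ explicit. Starting from the formula for $\-$ in \textbf{Lemma} \ref{calculatoire} and using that every $f \in \USC$ is non-decreasing with $f(1) = \bigvee f$, so that $\bigwedge_{r \geq p} \ng[f(r-p)] = \ng[f(1-p)]$, one obtains for all $f \in \USC$ and $q \in [0,1]$
\[ (\ul 1 \- f)(q) \;=\; \bigvee_{s > 1-q} \ng[f(s)]. \]
Establishing this single formula, with the correct handling of the reversed order on $\USC$ and of the boundary value at $q=0$ (where the empty join yields $\bot$), is the technical core of the argument.

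For the implication ``$\lcal$ involutive $\Rightarrow \USC$ involutive'' I would iterate this formula. Writing $g = \ul 1 \- f$ and using the residuation identity $(\bigvee_i U_i) \nrightarrow \bot = \bigwedge_i (U_i \nrightarrow \bot)$, so that $\ng[\cdot]$ sends joins to meets, I get
\[ \bigl(\ul 1 \- (\ul 1 \- f)\bigr)(q) \;=\; \bigvee_{t < q} \; \bigwedge_{u > t} \ng[{\ng[f(u)]}]. \]
Assuming $\ng[{\ng[U]}] = U$ in $\lcal$, the double negation inside disappears and the right-hand side becomes $\bigvee_{t<q} \bigwedge_{u>t} f(u)$. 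The inner meet is the right limit of the non-decreasing $f$ at $t$, and the outer join over $t < q$ collapses back to $f(q)$ because $f$ is sup-preserving (left continuous): one squeezes between $\bigvee_{t<q} f(t) = f(q)$ and $f(q)$. Hence $\ul 1 \- (\ul 1 \- f) = f$ for every $f$, i.e. $\USC$ is involutive.

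For the converse ``$\USC$ involutive $\Rightarrow \lcal$ involutive'' I would transport involutiveness along the embedding $U \mapsto 0_U$ of $\lcal^{op}$ into $\USC$, which is the order embedding established just before \textbf{Lemma} \ref{Crl embedding}. Since $\ul 1 = 0_\bot$, \textbf{Lemma} \ref{calculs} gives $\ul 1 \- 0_U = 0_\bot \- 0_U = 0_{U \nrightarrow \bot} = 0_{\ng[U]}$, so the embedding intertwines the two negations: $\ul 1 \- 0_U = 0_{\ng[U]}$, whence $\ul 1 \- (\ul 1 \- 0_U) = \ul 1 \- 0_{\ng[U]} = 0_{\ng[{\ng[U]}]}$. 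If $\USC$ is involutive, the left-hand side equals $0_U$, so $0_{\ng[{\ng[U]}]} = 0_U$, and injectivity of $U \mapsto 0_U$ forces $\ng[{\ng[U]}] = U$. Thus $\lcal$ is involutive, which closes the equivalence.

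The main obstacle is the first step. The high-level transfer results (\textbf{Theorems} \ref{formule formules} and \ref{de lcal à USC(lcal)}) govern the convolution action and the $\otimes$-residual $\nrightarrow$, not the $\+$-residual $\-$ that defines this negation — and indeed the crl-negation $f \nrightarrow \ul 1$ is degenerate and never involutive, so those theorems cannot be invoked here. The equivalence therefore rests on the closed formula for $\ul 1 \- f$ and on the semicontinuity argument collapsing $\bigvee_{t<q}\bigwedge_{u>t} f(u)$ to $f(q)$; it is precisely in that collapse that sup-preservation is genuinely used, and it is exactly the step that breaks down when $\lcal$ is not involutive.
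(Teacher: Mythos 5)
Your proof is correct, and it takes a genuinely different route from the paper's. The paper disposes of the theorem in two lines: it invokes Lemma \ref{link lcal} to assert that $(v \nrightarrow \bot) \nrightarrow \bot$ is a lax $\mathcal{L}_{crl}$-term, transfers the quasi-equation $(v \nrightarrow \bot)\nrightarrow\bot \leq v$ between $\lcal$ and $\USC$ via Theorem \ref{de lcal à USC(lcal)} (with Theorem \ref{first half} implicitly supplying the converse), and reads the transferred inequality as $1-(1-v) \geq v$. You instead establish the closed formula $(\ul 1 \- f)(q) = \bigvee_{s>1-q}\,(f(s) \nrightarrow \bot)$, iterate it, collapse $\bigvee_{u<q}\bigwedge_{s>u} f(s)$ to $f(q)$ by sup-preservation, and obtain the converse through the $0$-indicator embedding and Lemma \ref{calculs}. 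What your route buys is substantial, because your closing caveat that the transfer theorems ``cannot be invoked here'' is not mere caution but substantively right, and it points at a real defect in the paper's own argument: since $r \mapsto f(r) \nrightarrow \bot$ is non-increasing, the $\nrightarrow$-negation of the crl $\USC$ degenerates to $f \nrightarrow \ul 1 = 0_{f(1) \nrightarrow \bot}$, so the crl double negation is $0_{(f(1)\nrightarrow\bot)\nrightarrow\bot}$, which is not above $f$ already for $f = \ul{1/2}$ over a Boolean $\lcal$; correspondingly, the proof of Lemma \ref{link lcal}(3) evaluates $\bigwedge_{r \geq p} \phi[f](r) \nrightarrow \bot$ at $r = p$ where, the function being non-increasing, the infimum sits at $r = 1$, so the claimed colaxity of $\phi \nrightarrow \bot$ (hence the laxity of the double negation) fails, and the bridge between the crl-term $(v\nrightarrow\bot)\nrightarrow\bot$ and the $\mathcal{L}$-term $1-(1-v)$ is nowhere supplied. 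Your direct computation is precisely that missing bridge: it proves the statement about the negation $\ul 1 \- v$ in the form the theorem is actually used later (e.g.\ in Theorem \ref{inv big theorem}), at the modest cost of the explicit order-reversal and $q=0$ bookkeeping, which you handle correctly; the paper's approach, had it worked, would have been shorter and uniform with the other transfer arguments, but as written it is your argument that is the sound one.
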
 

\begin{proof} 

\color{blue} 
If the negation on $USC(\lcal)$ is involutive, then, by \bf{Lemma} \ref{first half}, so is the negation on $\lcal$. 

Assume now that the negation on $\lcal$ is involutive. For all $f \in \USC$, \linebreak $(1 \- f)(q) = \bigvee[p < q] \bigwedge[1 > r \geq p] (f(r - p) \nrightarrow \bot) = \bigvee[p < q] f(1 - p) \nrightarrow \bot$, so \begin{align*}(1 \- (1 \- f))(q) &= \bigvee[p < q] (1 \- f)(1 - p) \nrightarrow \bot \\ &= \bigvee[p < q] \left(\bigvee[r < 1 - p] f(1 - r) \nrightarrow \bot\right) \nrightarrow \bot \\ &= \bigvee[p < q] (f(p) \nrightarrow \bot) \nrightarrow \bot \\ &= \bigvee[p < q] f(p) \\ &= f(q).\end{align*} 
\color{black} 
\end{proof} 

\begin{defi} 

We denote by $T_\inv$ the theory $T \cup \{v \leq (1 \- (1 \- v))\}$. 

We denote by $InAC$ the class whose elements are the $USC(\lcal)$ for $\lcal$ \blue{an} involutive commutative residuated complete lattice and call these algebras \emph{Involutive Affine Continuous Algebras}. 

\end{defi} 

\begin{thm} \label{inv big theorem} 

For all model $A$ of $\T_\inv$, there exists an involutive commutative residuated complete lattice $\lcal$ such that the quotient of the Macneille completion of $A$ by $\simeq$ is isomorphic to $USC(\lcal)$. 

For all model $A$ of $\T_\inv$, there exists an involutive commutative residuated complete lattice $\lcal$ such that the quotient of $A$ by $\simeq$ embeds into $USC(\lcal)$. 

\end{thm}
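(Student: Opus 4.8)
The plan is to follow the pattern of \textbf{Corollary}~\ref{int big corollary}: reduce everything to showing that the Macneille completion $\bar A$ is itself a model of $\T_\inv$, and then read off both conclusions from \textbf{Theorem}~\ref{big theorem} together with the correspondence \textbf{Theorem}~\ref{involutive correspondence}. Concretely, let $A$ be a model of $\T_\inv$. Since $A \models \T$, \textbf{Theorem}~\ref{completion} already tells us that $\bar A$ is a complete model of $\T$; the only thing missing is the involutivity axiom $v \leq 1 - (1 - v)$. Once I know $\bar A \models \T_\inv$, \textbf{Theorem}~\ref{big theorem} supplies a commutative residuated complete lattice $\lcal$ with $\bar A/{\simeq} \cong \USC$ and an embedding of $A/{\simeq}$ into the same $\USC$, and it remains only to check that this $\lcal$ is involutive.

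The key step is to propagate the involutivity axiom through the Macneille completion. I would argue this exactly as in the proof of \textbf{Theorem}~\ref{completion}, using \textbf{Lemma}~\ref{lemme Macneille}, but reading its hypotheses the other way around. Write the axiom as $f(v) \leq g(v)$ with $f(v) = v$ and $g(v) = 1 - (1 - v)$. Since $v \mapsto 1 - v$ is the residual $1 - (\cdot)$ and hence order-reversing, its square $g$ is non-decreasing on $\bar A$. Moreover the canonical map $\phi \colon A \to \bar A$ is a morphism of $\mathcal{L}$-structures, so for every $a \in A$ we get $\phi(a) \leq \phi(1 - (1 - a)) = g(\phi(a))$ from $A \models v \leq 1 - (1 - v)$; that is, $f \leq g$ holds on the image $\phi(A)$. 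Here the subtlety is that $g$ need not preserve lower bounds, so the argument used for the other axioms of $\T$ does not apply directly; instead I would invoke the \emph{other} alternative of \textbf{Lemma}~\ref{lemme Macneille}, namely that the \emph{smaller} function $f = \mathrm{id}$ preserves upper bounds. This is immediate for the identity and yields $f \leq g$ on all of $\bar A$, i.e. $\bar A \models v \leq 1 - (1 - v)$. Hence $\bar A$ is a complete model of $\T_\inv$.

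Finally I would transfer involutivity to $\lcal$. The quotient map $\bar A \to \bar A/{\simeq}$ is a surjective, order-preserving morphism of $\mathcal{L}$-structures, and the inequality $v \leq 1 - (1 - v)$, being a term inequality, is preserved by such maps: any element of the quotient lifts to some $w \in \bar A$, and applying the morphism to $w \leq 1 - (1 - w)$ gives the inequality for its image. Thus $\bar A/{\simeq} \models 1 - (1 - v) \geq v$, and through the isomorphism $\USC \models 1 - (1 - v) \geq v$. By \textbf{Theorem}~\ref{involutive correspondence} this is precisely equivalent to the negation of $\lcal$ being involutive, so $\lcal$ is an involutive commutative residuated complete lattice. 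The first statement is now exactly \textbf{Theorem}~\ref{big theorem} for $\bar A$ with this involutive $\lcal$, and the second follows from the embedding part of \textbf{Theorem}~\ref{big theorem} applied to $A$ with the same $\lcal$. I expect the only real obstacle to be the preservation step of the previous paragraph — specifically, recognizing that it is the identity side rather than the double-negation side that furnishes the sup-preservation hypothesis needed for \textbf{Lemma}~\ref{lemme Macneille}.
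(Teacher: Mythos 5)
Your proposal is correct and follows essentially the same route as the paper: apply \textbf{Theorem}~\ref{big theorem} to $A$ as a model of $\T$, propagate the axiom $v \leq 1 - (1 - v)$ through the Macneille completion via \textbf{Lemma}~\ref{lemme Macneille} using precisely the observation that the identity (the smaller side) preserves upper bounds, pass the inequality to the quotient and hence to $\USC$, and conclude involutivity of $\lcal$ from \textbf{Theorem}~\ref{involutive correspondence}. The subtlety you flagged — that it is the identity side, not the double-negation side, that supplies the sup-preservation hypothesis — is exactly the point the paper's proof relies on.
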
 

\begin{proof} 

Let $A$ be a model of $\T_\inv$. $A$ is a model of $\T$, so, according to \bf{Theorem} \ref{big theorem}, there exists a commutative residuated complete lattice $\lcal$ such that the quotient of the Macneille completion of $A$ by $\simeq$ is isomorphic to $USC(\lcal)$. According to this same theorem, the quotient of $A$ by $\simeq$ also embeds into $USC(\lcal)$. 

Since, $A$ satisfies $v \leq 1 \- (1 \- v)$ and the function $a \mapsto a$ preserves upper bounds, \bf{Lemma} \ref{lemme Macneille} ensures that the Macneille completion of $A$, this identity is still true. Hence it is also true in the quotient and thus in $USC(\lcal)$. \bf{Theorem} \ref{involutive correspondence} then ensures that the negation on $\lcal$ is involutive. 

\end{proof} 

\begin{cor} \label{complétude des InAC-algèbres} 

For all $\mathcal{L}$-terms $\phi \et \psi$, the following assertions are equivalent: 

\begin{enumerate} 

\item \label{complétude des InAC-algèbres 1} For all involutive commutative residuated complete lattice $\lcal$, $\USC \models \phi \leq \psi$. 

\item \label{complétude des InAC-algèbres 2} For all $n \in \nn$, $\phi \leq \psi \+ \ul{\frac{1}{2^n}}$ is consequence of $\T_\inv$. 

\end{enumerate} 

\end{cor} 

\begin{proof} 

For all involutive commutative residuated complete lattice $\lcal$, $\USC$ is a model of $\T_\inv$. By Archimedeanity of every InAC-algebra, we have that \ref{complétude des InAC-algèbres 2} implies \ref{complétude des InAC-algèbres 1}. 

Let $\phi \et \psi$ be $\mathcal{L}$-terms both having $k$ free variables such that, for all locale $\lcal$, $\USC \models \phi \leq \psi$. Let $n \in \nn$. 

Let $A$ be a model of $\T_\inv$. According to \bf{Corollary} \ref{inv big theorem}, there exists an involutive commutative residuated lattice $\lcal$ and a morphism $i \colon A \rightarrow \USC$ such that, for all $a \et b \in A$, \linebreak $i(a) \leq i(b) \Leftrightarrow \forall n \in \nn \; a \leq b \+ \ul{\frac{1}{2^n}}$. For all $a \in A^k$ and $n \in \nn$, since $i(\phi(a)) \leq i(\psi(a))$, $\phi(a) \leq \psi(a) \+ \ul{\frac{1}{2^n}}$. 

The class of models of $\T$ being complete for $\T$, $\phi \leq \psi \+ \ul{\frac{1}{2^n}}$ is consequence of $\T$. 

\end{proof} 

\subsection{Cut Admissibility} \label{subsection Cut Admissibility} 

In the involutive case, since, contrary to \textbf{Appendix} \ref{Annexes}, '$,$' is commutative, there is only one negation symbol $\neg$. Moreover, since $\alpha$ is bijective, $\circ_\alpha = \bullet_\alpha$. Thus, we can forget $\bullet_\alpha$ and add the rule $\seq[=]{\circ_\alpha \Gamma \vdash \Delta}{\Gamma \vdash \circ_\alpha \Delta}$. However, everything works as in \textbf{Appendix} \ref{Annexes}. The language for the structures and formulas of sequent calculus in the involutive case, and its correspondence with the language $\mathcal{L}_\inv$, are given in the table of \it{Figure} \ref{Inv Correspondence between structure symbols and L}. We will also consider the language $\mathcal{L}_\inv'$ which is $\mathcal{L}_\inv$ with the symbols of the table of \it{Figure} \ref{Correspondence between the new structure symbols and L}. We keep Notation \ref{notations coupures}. 

\begin{figure}[!ht] 

$$\begin{array}{|c|c|c||c|c|} 
\hline 
\multicolumn{3}{|c||}{\text{Left interpretation}}& \multicolumn{2}{c|}{\text{Right interpretation}}\\ 
\hline 
\text{structures}& \text{formulas} & \text{algebraic correspondent} & \text{formulas} & \text{algebraic correspondent}\\ 
\hline 
,& \+& \+& \odot& j\left(\frac{\_}{2} \+ \frac{\_}{2}\right)\\ 
\hline 
\epsilon& \ul 0& \ul 0& \ul 1& \ul 1\\ 
\hline 
\circ_2& 2& 2& \frac{\cdot}{2}& \frac{\cdot}{2}\\ 
\hline 
\bullet_2& j_\ast& j_\ast& j& j\\ 
\hline 
\circ_\alpha& \alpha& \alpha& \blacksquare_\alpha& 2v \wedge j_\ast(v)\\ 
\hline 
\ng[\blank]& \ng& \ng& \ng& \ng\\ 
\hline 
\end{array}$$ 

\caption{Correspondence between structure symbols and $\mathcal{L}$} \label{Inv Correspondence between structure symbols and L} 

\end{figure} 

\begin{figure}[!ht] 

$$\begin{array}{|c|c|c||c|c|} 
\hline 
\multicolumn{3}{|c||}{\text{Left interpretation}}& \multicolumn{2}{c|}{\text{Right interpretation}}\\ 
\hline 
\text{structures}& \text{formulas} & \text{algebraic correspondent} & \text{formulas} & \text{algebraic correspondent}\\ 
\hline 
\bullet_\alpha& \alpha& \alpha& \blacksquare_\alpha& 2v \wedge j_\ast(v)\\ 
\hline 
\bs[\blank]& \bs& \bs& \bs& \bs\\ 
\hline 
\end{array}$$ 

\caption{Correspondence between the new structure symbols and $\mathcal{L}$} \label{Correspondence between the new structure symbols and L} 

\end{figure} 

The system \InMGL for involutive modal full Lambek calculus can be applied with one or several modalities and one or two negations. Here, we also have only one negation, a difference that \textbf{Lemma} \ref{only one negation} tacles. We thus obtain a system \InMGL($\circ_2, \bullet_2, \circ_\alpha$) in the language $\mathcal{L}_\inv$ given by \InGL (from the \textbf{Appendix} \ref{Annexes}) and \textit{Figure} \ref{inv Introduction Rules for Modalities} and a system \InMGL($\circ_2, \bullet_2, \circ_\alpha, \bullet_\alpha$) in the language $\mathcal{L}_\inv'$ given by \InGL and \textit{Figure} \ref{inv Introduction Rules for Expanded Modalities}. In addition to the rules of \InMGL($\circ_2, \bullet_2, \circ_\alpha$), we add the structural rules given in \textit{Figure} \ref{InCFLew} and call the total system \InCFLew. Define \InCFLew' as the system in the language $\mathcal{L}_\inv'$ having for rules the ones of \InMGL($\circ_2, \bullet_2, \circ_\alpha, \bullet_\alpha$) and \it{Figure} \ref{InCFLew}. 

\begin{figure}[!ht] 

$$\begin{array}{|cccc|} 
\hline 

[\target{L \,2}] \seq{\circ_2 A \vdash \Delta}{{2 A \vdash \Delta}} &[\target{L \,j_\ast}] \seq{\bullet_2 A \vdash \Delta}{{j_\ast(A) \vdash \Delta}} &[\target{R \,2}] \seq{\Gamma \vdash A}{\circ_2 \Gamma \vdash 2 A} &[\target{R \,j_\ast}] \seq{\Gamma \vdash A}{\bullet_2 \Gamma \vdash j_\ast(A)} \\ 

[\target{\circ_2 / \bullet_2}] \seq[=]{{\circ_2 \Gamma \vdash \Delta}}{\Gamma \vdash \bullet_2 \Delta} &[\target{L \,\alpha}] \seq{\circ_\alpha A \vdash \Delta}{{\alpha(A) \vdash \Delta}} &[\target{R \,\alpha}] \seq{\Gamma \vdash A}{\circ_\alpha \Gamma \vdash \alpha(A)} &[\target{\circ_\alpha / \circ_\alpha}] \seq[=]{{\circ_\alpha \Gamma \vdash \Delta}}{\Gamma \vdash \circ_\alpha \Delta} \\ 

\target{Invng L}[[\ng L]] \seq[=]{{\gamma, \delta \vdash \beta}}{{\delta \vdash \ng[\gamma], \beta}} &\target{Invng R}[[\ng R]] \seq[=]{{\gamma, \delta \vdash \beta}}{{\gamma \vdash \beta, \ng[\delta]}} &\target{[Lneg]}[[L\ng]] \seq{{\ng[a] \vdash \delta}}{\ng a \vdash \delta} &\target{[Rneg]}[[R\ng]] \seq{{\gamma \vdash \ng[a]}}{\gamma \vdash \ng a} \\ 

\hline 
\end{array}$$ 

\caption{Introduction Rules \InMGL($\circ_2, \bullet_2, \circ_\alpha$)} \label{inv Introduction Rules for Modalities} 

\end{figure} 

\begin{figure}[!ht] 

$$\begin{array}{|cccc|} 
\hline 

[\target{L \,2}] \seq{\circ_2 A \vdash \Delta}{{2 A \vdash \Delta}} &[\target{L \,j_\ast}] \seq{\bullet_2 A \vdash \Delta}{{j_\ast(A) \vdash \Delta}} &[\target{R \,2}] \seq{\Gamma \vdash A}{\circ_2 \Gamma \vdash 2 A} &[\target{R \,j_\ast}] \seq{\Gamma \vdash A}{\bullet_2 \Gamma \vdash j_\ast(A)} \\ 

[\target{\circ_2 / \bullet_2}] \seq[=]{{\circ_2 \Gamma \vdash \Delta}}{\Gamma \vdash \bullet_2 \Delta} &[\target{L \,\alpha}] \seq{\circ_\alpha A \vdash \Delta}{{\alpha(A) \vdash \Delta}} &[\target{R \,\alpha}] \seq{\Gamma \vdash A}{\circ_\alpha \Gamma \vdash \alpha(A)} &[\target{\circ_\alpha / \bullet_\alpha}] \seq[=]{{\circ_\alpha \Gamma \vdash \Delta}}{\Gamma \vdash \bullet_\alpha \Delta} \\ 

\target{[Invtild L]}[[\bs L]] \seq[=]{{\gamma, \delta \vdash \beta}}{{\delta \vdash \bs[\gamma], \beta}} &\target{[Invneg R]}[[\ng R]] \seq[=]{{\gamma, \delta \vdash \beta}}{{\gamma \vdash \beta, \ng[\delta]}} &[\target{L \,\alpha \,2}] \seq{\bullet_\alpha A \vdash \Delta}{{\alpha(A) \vdash \Delta}} & [\target{R \,\alpha \,2}] \seq{\Gamma \vdash A}{\bullet_\alpha \Gamma \vdash \alpha(A)} \\ 

\target{[Ltild]}[[L\bs]] \seq{{\bs[a] \vdash \delta}}{\bs a \vdash \delta} &\target{[Rtild]}[[R\bs]] \seq{{\gamma \vdash \bs[a]}}{\gamma \vdash \bs a} &\target{[Lneg]}[[L\ng]] \seq{{\ng[a] \vdash \delta}}{\ng a \vdash \delta} &\target{[Rneg]}[[R\ng]] \seq{{\gamma \vdash \ng[a]}}{\gamma \vdash \ng a} \\ 

\hline 
\end{array}$$ 

\caption{Introduction Rules for \InMGL($\circ_2, \bullet_2, \circ_\alpha, \bullet_\alpha$)} \label{inv Introduction Rules for Expanded Modalities} 

\end{figure} 

\begin{figure}[!ht] 

\adjusttopage{$\begin{array}{|ccc|} 

\hline 

[\ref{com monoid}] \seq{{\Gamma, \Delta \vdash \Theta}}{{\Delta, \Gamma \vdash \Theta}} & [\ref{com monoid}b] \seq{{\Gamma, (\Delta, \Pi) \vdash \Theta}}{{(\Gamma, \Delta), \Pi \vdash \Theta}} & [\ref{com monoid}c] \seq[=]{{\epsilon, \Gamma \vdash \Theta}}{{\Gamma \vdash \Theta}} \\ 

[\ref{2 >=}] \seq{\Gamma \vdash \Theta}{\circ_2 \Gamma \vdash \Theta} & [\ref{j* >=}] \seq{\Gamma \vdash \Theta}{\bullet_2 \Gamma \vdash \Theta} & [\ref{bounded lattice}] \seq{}{\Gamma \vdash \ul 0} \\ 

[\ref{2v <=}] \seq{\circ_2 \Gamma \vdash \Theta, \circ_2 \Delta \vdash \Theta}{{\Gamma, \Delta \vdash \Theta}} & \blue{[\link{cont de 2}[(4.a)]] \seq[=]{{\circ_2 (\Gamma, \Delta) \vdash \Theta}}{{\circ_2 \Gamma, \circ_2 \Delta \vdash \Theta}}} & \blue{[\ref{bounded lattice} \et \ref{2j*(0) >= 1}]} \seq{{}}{\epsilon_1 \vdash \Theta} \\ 

& [(\hyperlink{def j*}{4.b})] \blue{\seq[=]{{\bullet_2 (\circ_2 \Gamma , \Delta)\vdash \Theta}}{{\Gamma , \bullet_2 \Delta \vdash \Theta}}} & \\ 

\hypertarget{4.d}{[\ref{alpha beta v >= v}a]} \seq{\Gamma \vdash \Theta}{\circ_\alpha \circ_2 \Gamma \vdash \Theta} & [\ref{alpha beta v >= v}b] \seq{\Gamma \vdash \Theta}{\circ_\alpha \bullet_2 \Gamma \vdash \Theta} & [\ref{alpha beta v <=  v}] \seq{\circ_\alpha \circ_2 \Gamma \vdash \Theta, \circ_\alpha \bullet_2 \Gamma \vdash \Theta}{\Gamma \vdash \Theta} \\ 

[\hyperlink{def alpha, 2}{4.e}a] \seq{\Gamma \vdash \Theta}{\circ_2 \circ_\alpha \Gamma \vdash \Theta} & [\ref{beta alpha v >= v}b] \seq{\Gamma \vdash \Theta}{\bullet_2 \circ_\alpha \Gamma \vdash \Theta} & [\ref{beta alpha v <=  v}] \seq{\circ_2 \circ_\alpha \Gamma \vdash \Theta, \bullet_2 \circ_\alpha \Gamma \vdash \Theta}{\Gamma \vdash \Theta} \\ 

[\ref{v<=d}] \seq{\Gamma \vdash \Theta}{{\epsilon_d, {\circ_j}^n (\Gamma {,} \epsilon_{1 - d}) \vdash \Theta}} & [\ref{1/2^n + 1/2^n}]\seq{{\epsilon_{\frac{1}{2^n}}, \epsilon_{\frac{1}{2^n}} \vdash \Theta}}{\epsilon_{\frac{1}{2^{n-1}}} \vdash \Theta} & [\ref{archimedean}] \seq{\forall n \in \nn\; \epsilon_{\frac{1}{2^n}} \vdash \Theta}{\epsilon \vdash \Theta} \\ 

\multicolumn{3}{|c|}{[\ref{v>=d}] \seq{{\epsilon_{1 - \frac{1}{2^{n+1}}} ,  {\circ_j}^{n+1}\left(\Gamma , \epsilon_{\frac{1}{2^{n+1}}}\right) \vdash \Theta}, \ldots, {\epsilon_{1 - \frac{2^{n+1} - 2}{2^{n+1}}} ,  {\circ_j}^{n+1}\left(\Gamma , \epsilon_{\frac{2^{n+1} - 2}{2^{n+1}}}\right) \vdash \Theta}}{{\Gamma , \epsilon_{\frac{1}{2^n}} \vdash \Theta}}}\\ 

\hline 

\end{array}$} 

\caption{\InCFLew} \label{InCFLew}   

\end{figure} 

\pagebreak 

Since $\circ_\alpha$ is self dual, we reduce \InCFLew to the setting of \textbf{Appendix} \ref{Annexes} thanks to the next lemma. 

\begin{lem} \label{only one negation} 

The systems \target{one}[(1)] \InMGL($\circ_2, \bullet_2, \circ_\alpha$) $\cup \left\{\seq{{\Gamma, \Delta \vdash \Theta}}{{\Delta, \Gamma \vdash \Theta}}\right\}$ and \target{two}[(2)] \linebreak \InMGL($\circ_2, \bullet_2, \circ_\alpha, \bullet_\alpha$) $\cup \left\{\seq{{\Gamma, \Delta \vdash \Theta}}{{\Delta, \Gamma \vdash \Theta}}, \; \seq[=]{\circ_\alpha \Gamma \vdash \Delta}{\bullet_\alpha \Gamma \vdash \Delta}\right\}$ are equivalent in the following sense: 

For every deduction in \link{one}[(1)] there exists a deduction in \link{two}[(2)] with same premisses and conclusion. 

By letting, in \link{one}[(1)], $\bullet_\alpha = \circ_\alpha$ and $\bs = \ng$, every deduction in \link{two}[(2)] is a deduction in \link{one}[(1)]. 

Hence, in this sense, \InCFLew and \InCFLew' are equivalent. 

\end{lem}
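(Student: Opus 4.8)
The plan is to prove the two inclusions separately, the inclusion of \link{two}[(2)] into \link{one}[(1)] being a matter of substitution and the inclusion of \link{one}[(1)] into \link{two}[(2)] an induction on derivations.

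For the direction from \link{two}[(2)] to \link{one}[(1)], I would simply perform on an arbitrary derivation the syntactic substitution $\bullet_\alpha \mapsto \circ_\alpha$ and $\bs \mapsto \ng$ announced in the statement, and check rule by rule that every instance is sent to a valid instance of \link{one}[(1)]. Under this substitution the adjunction $[\circ_\alpha/\bullet_\alpha]$ becomes $[\circ_\alpha/\circ_\alpha]$, the rules $[L\,\alpha\,2]$ and $[R\,\alpha\,2]$ collapse onto $[L\,\alpha]$ and $[R\,\alpha]$, the interchange rule relating $\circ_\alpha\Gamma\vdash\Delta$ and $\bullet_\alpha\Gamma\vdash\Delta$ degenerates into a repetition of a single sequent, and the tilde rules $[\bs L]$, $[L\bs]$, $[R\bs]$ become $[\ng L]$, $[L\ng]$, $[R\ng]$; all other rules are untouched. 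Since each image is then a rule of \link{one}[(1)] (or a vacuous step), the substituted tree is the desired derivation with the same premisses and conclusion.

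For the converse direction I would argue by induction on a derivation in \link{one}[(1)]. Since all rules of \InGL, the exchange rule, the rules for $\circ_2$ and $\bullet_2$, the introduction rules $[L\,\alpha]$, $[R\,\alpha]$, and the negation rules $[\ng R]$, $[L\ng]$, $[R\ng]$ already belong to \link{two}[(2)], it suffices to derive inside \link{two}[(2)] the only two rules of \link{one}[(1)] without a counterpart there: the self-dual rule $[\circ_\alpha/\circ_\alpha]$ and the left negation rule $[\ng L]$. For the first, I would chase the adjunction against the interchange rule: from $\circ_\alpha\Gamma\vdash\Delta$, the rule $[\circ_\alpha/\bullet_\alpha]$ yields $\Gamma\vdash\bullet_\alpha\Delta$, and the interchange rule then rewrites $\bullet_\alpha\Delta$ into $\circ_\alpha\Delta$ on the succedent, giving $\Gamma\vdash\circ_\alpha\Delta$; the converse follows by reversing these invertible steps. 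For the second, from $\gamma,\delta\vdash\beta$ I would apply exchange to get $\delta,\gamma\vdash\beta$, then $[\ng R]$ to get $\delta\vdash\beta,\ng[\gamma]$, and finally commutativity of the comma on the succedent to reach $\delta\vdash\ng[\gamma],\beta$.

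The hard part will be the facts these last two derivations silently rely on: that in \link{two}[(2)] augmented with exchange the left structural operations (commutativity of the comma and the $\circ_\alpha/\bullet_\alpha$ interchange) admit succedent counterparts, and that the two structural negations $\ng$ and $\bs$ are thereby interderivable. This is exactly the phenomenon announced before the lemma—that commutativity of the comma is what allows \link{one}[(1)] to use a single negation—so the genuine content of the first inclusion is the reconciliation of that single $\ng$ with the pair $\ng,\bs$ of \link{two}[(2)]. Once these succedent commutations are established (a routine but careful diagram chase through the rules of \InGL, using the invertibility of the negation rules to transfer a succedent formula to the antecedent, exchange it there, and transfer it back), the two displayed derivations go through and the induction closes.
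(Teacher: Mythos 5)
Your proposal is correct and takes essentially the same route as the paper: the $(2)\to(1)$ direction is the same substitution check, and for $(1)\to(2)$ you identify exactly the two rules of $(1)$ missing from $(2)$, namely $[\circ_\alpha/\circ_\alpha]$ and $[\ng L]$, together with the right mechanism for deriving them. The succedent commutations you defer in your last paragraph are precisely the explicit proof trees the paper writes out, using the derived rule $[\bs/\ng]$ with the structural identities $\ng[{(\gamma,\delta)}] = \ng[\delta], \ng[\gamma]$ and $\ng[{(\circ_\alpha \gamma)}] = \circ_\alpha(\ng[\gamma])$ to shuttle the succedent to the antecedent, apply the antecedent-side exchange or $\circ_\alpha/\bullet_\alpha$ interchange there, and transfer back.
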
 

\begin{proof} 

We just have to prove that the rules of each system are deducible in the other one. 

\link{one}[(1)] $\rightarrow$ \link{two}[(2)] : The two rules of \link{one}[(1)] that are not rules of \link{two}[(2)] are $[\link{\circ_\alpha / \circ_\alpha}]$ and $\link{Invng L}[[\ng L]]$. 

Here are the proof trees of $\seq[=]{{\Gamma \vdash \circ_\alpha \Delta}}{\bullet_\alpha \Gamma \vdash \Delta}$, $[\link{\circ_\alpha / \circ_\alpha}]$ and $\link{Invng L}[[\ng L]]$in \link{two}[(2)]. We recall the validity of the deduction $\seq[=]{\gamma \vdash \ng[\delta]}{\delta \vdash \bs[\gamma]}\link{[\bs/\ng]}$ and that $\ng[(\gamma, \delta)] = \ng[\delta], \ng[\gamma] \textup{\target{[neg]}}$. 

\begin{center} 

\begin{tabular}{ccccc} 

\normalAlignProof 
\alwaysDoubleLine 
\Axiom$\gamma \fCenter \circ_\alpha \delta$ 
\UnaryInf$\gamma \fCenter \bs[\ng[(\circ_\alpha \delta)]]$ 
\RightLabel{$\link{[\bs/\ng]}$} 
\UnaryInf$\ng[(\circ_\alpha \delta)] \fCenter \ng[\gamma]$ 
\UnaryInf$\circ_\alpha (\ng[\delta]) \fCenter \ng[\gamma]$ 
\UnaryInf$\ng[\delta] \fCenter \bullet_\alpha (\ng[\gamma])$ 
\UnaryInf$\ng[\delta] \fCenter \ng[(\bullet_\alpha \gamma)]$ 
\RightLabel{$\link{[\bs/\ng]}$} 
\UnaryInf$\bullet_\alpha \gamma \fCenter \bs[\ng[\delta]]$ 
\UnaryInf$\bullet_\alpha \gamma \fCenter \delta$ 
\DisplayProof{} & & 

\normalAlignProof 
\alwaysDoubleLine 
\Axiom$\circ_\alpha \gamma \fCenter \delta$ 
\UnaryInf$\bullet_\alpha \gamma \fCenter \delta$ 
\RightLabel{$[\link{\circ_\alpha / \bullet_\alpha}]$} 
\UnaryInf$\gamma \fCenter \circ_\alpha \delta$ 
\DisplayProof{} & & 

\normalAlignProof 
\alwaysDoubleLine 
\Axiom$\gamma, \delta \fCenter \beta$ 
\RightLabel{\ref{com monoid}} 
\UnaryInf$\delta, \gamma\fCenter \beta$ 
\RightLabel{$\link{[Invneg R]}[[\ng R]]$} 
\UnaryInf$\delta \fCenter \beta, \ng[\gamma]$ 
\UnaryInf$\delta \fCenter \bs[\ng[(\beta, \ng[\gamma])]]$ 
\RightLabel{$\link{[\bs/\ng]}$} 
\UnaryInf$\ng[(\beta, \ng[\gamma])] \fCenter \ng[\delta]$ 
\RightLabel{\link{[neg]}} 
\UnaryInf$\ng[\ng[\gamma]], \ng[\beta] \fCenter \ng[\delta]$ 
\RightLabel{$\ref{com monoid}$} 
\UnaryInf$\ng[\beta], \ng[\ng[\gamma]] \fCenter \ng[\delta]$ 
\RightLabel{$\link{[\bs/\ng]}$} 
\UnaryInf$\ng[(\ng[\gamma], \beta)] \fCenter \ng[\delta]$ 
\RightLabel{\link{[neg]}} 
\UnaryInf$\delta \fCenter \bs[\ng[(\ng[\gamma], \beta)]]$ 
\UnaryInf$\delta \fCenter \ng[\gamma], \beta$ 
\DisplayProof{}

\end{tabular} 

\end{center} 

\link{two}[(2)] $\rightarrow$ \link{one}[(1)]: By letting $\bullet_\alpha = \circ_\alpha$ and $\bs = \ng$, every rule of \link{two}[(2)] is a rule of \link{one}[(1)], so every deduction in \link{two}[(2)] becomes a deduction in \link{one}[(1)]. 

\end{proof} 

\begin{lem} 

Algebraic models of \InCFLew are the ones of the theory $\T_\inv$ satisfying property \ref{archimedean}. 

\end{lem} 

\begin{proof} 

According to \textbf{Theorem} \ref{Completeness theorem annexes} and \textbf{Lemma} \ref{only one negation}, the models of \InCFLew are the residuated lattices satisfying, for each previous structural rule $r$, the formula in the language $\mathcal{L}_\inv'$ where each structure variable has been replaced by a fresh new formula variable, $\seq{{}}{{}}$ by $\Rightarrow$, $\vdash$ by $\geq$, $,$ by $\+$, $\circ_2$ by $2 \cdot$, $\bullet_2$ by $j_\ast$, $\circ_\alpha$ by $\alpha$ and each $\epsilon_d$ by $\ul d$. 

Then, as in \textbf{Example} \ref{Exemples d'équivalence} we can see that these axioms are equivalent to axioms of $\T_\inv$ as refered to in the system \hyperlink{CFLew}{\InCFLew}. \link{croissance}[(2)] can be treated the same way as \link{2 adjunction}[\text{(3.a)}] and \ref{bounded lattice}, \link{Defining axioms}[(4)] and \link{infinitesimals}[(5)] can be treated the same way as \hyperlink{4.d}{(\hyperlink{def alpha, 1}{4.d})}. The only rule to which the previous methodology can't apply directly is \hyperlink{Archi}{\ref{archimedean}} because it is infinitary. However, it is immediate that every model of this rule has equivalently the property $\forall n \in \nn \ul{\frac{1}{2^n}} \geq v \Rightarrow \ul 0 \geq v$, which is actually property \ref{archimedean}. 

Hence the models of \InCFLew are exactly the models of $\T_\inv$. 
\end{proof} 

Since all the rules added to obtain \InCFLew are analytic (cf. \bf{Definition} \ref{analytic rules annexes}), according to \textbf{Theorems} \ref{Completeness theorem annexes} and \ref{Cut Admissibility annexes}, the following theorem is true. 

\begin{thm}[Completeness theorem] 

The class of $USC(\lcal)$, for $\lcal$ a complete commutative involutive residuated lattice, is sound and complete for $\InCFLew$. 

\end{thm}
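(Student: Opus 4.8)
The plan is to assemble soundness and completeness from three already-established ingredients: the abstract completeness theorem of the Appendix (Theorem \ref{Completeness theorem annexes}), the algebraic characterisation of the models of \InCFLew just proven, and Corollary \ref{complétude des InAC-algèbres}. First I would record that, via Lemma \ref{only one negation}, \InCFLew is equivalent (deduction-for-deduction) to an \InGL-style system extended by analytic structural rules; since all the added rules are analytic, Theorem \ref{Completeness theorem annexes} applies and tells us that a sequent is derivable in \InCFLew if and only if the $\mathcal{L}_\inv'$-inequality $\phi \leq \psi$ obtained from it by the translation recipe of the previous lemma (reading $\vdash$ as $\geq$, ``$,$'' as $\+$, $\circ_2$ as $2\cdot$, $\bullet_2$ as $j_\ast$, $\circ_\alpha$ as $\alpha$, and each $\epsilon_d$ as $\ul d$) holds in every algebraic model. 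By the preceding lemma these algebraic models are exactly the models of $\T_\inv$ that satisfy the Archimedean property \ref{archimedean}.

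Soundness is then immediate: for $\lcal$ a complete commutative involutive residuated lattice, \USC satisfies $\T$ and is complete and Archimedean by Lemma \ref{soundness for T}, and Theorem \ref{involutive correspondence} guarantees $\USC \models v \leq 1 - (1 - v)$, so $\USC \models \T_\inv$ and \USC is one of the algebraic models. Hence every \InCFLew-derivable sequent holds in all such \USC.

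For completeness, I would start from a sequent whose associated inequality $\phi \leq \psi$ holds in \USC for every complete involutive $\lcal$. Corollary \ref{complétude des InAC-algèbres} converts this into the statement that, for each $n \in \nn$, $\phi \leq \psi \+ \frac{\ul 1}{2^n}$ is a consequence of $\T_\inv$. Now let $M$ be any algebraic model of \InCFLew: by the characterisation above $M \models \T_\inv$ and $M$ is Archimedean, so $M \models \phi \leq \psi \+ \frac{\ul 1}{2^n}$ for all $n$, and Archimedeanity collapses these into $M \models \phi \leq \psi$. Thus $\phi \leq \psi$ is valid in every algebraic model, and Theorem \ref{Completeness theorem annexes} returns a derivation of the sequent in \InCFLew.

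The one subtle step is the passage from validity in the ``represented'' subclass $\{\USC : \lcal \text{ complete involutive}\}$ to validity in the whole class of algebraic models of \InCFLew. Because continuous logic only pins down inequalities up to infinitesimals, Corollary \ref{complétude des InAC-algèbres} alone yields the weaker $\phi \leq \psi \+ \frac{\ul 1}{2^n}$, not $\phi \leq \psi$; what upgrades it is precisely that the infinitary Archimedean rule \ref{archimedean} is built into \InCFLew, so Archimedeanity is available in every algebraic model. I expect this interplay between the Archimedean rule and Corollary \ref{complétude des InAC-algèbres} --- rather than any new computation --- to be the crux, so I would take care to invoke the two together.
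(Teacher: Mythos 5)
Your proposal is correct and takes essentially the same route as the paper: reduce \InCFLew to the \InGL-based system with analytic structural rules via Lemma \ref{only one negation}, invoke the abstract Theorem \ref{Completeness theorem annexes}, and bridge between the class of all algebraic models and the class of the $\USC$'s. The paper's own proof is a two-line citation that leaves this bridge implicit, and the details you supply --- Lemma \ref{soundness for T} with Theorem \ref{involutive correspondence} for soundness, and Corollary \ref{complétude des InAC-algèbres} combined with the built-in Archimedean rule \ref{archimedean} for completeness --- are exactly the intended content.
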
 

\begin{proof} 

Since \InCFLew' and \InCFLew are equivalent (\bf{Lemma} \ref{only one negation}), according to \bf{Theorem} \ref{Completeness theorem annexes}, the class of $USC(\lcal)$ for $\lcal$ a complete commutative involutive residuated lattice is sound and complete for \InCFLew', this class is also sound and complete for $\InCFLew$. 
\end{proof} 

\begin{thm}[Cut Admissibility theorem]  

In the system \InCFLew, for all formulas $a_1, \ldots, a_n$ and $b$ $\{, \, , \circ_2, \bullet_2, \circ_\alpha, \epsilon\}$-term $G$ such that there exists a deduction of $G(a_1, \ldots, a_n) \vdash b$ using the cut rule, there exists a deduction of $G(a_1, \ldots, a_n) \vdash b$ not using the cut rule. 

\end{thm} 

\begin{proof} 

\ul{\bf{Claim :}} In the system \InCFLew', for all formulas $a_1, \ldots, a_n \et b$ and $\{, \, , \circ_2, \bullet_2, \circ_\alpha, \bullet_\alpha, \epsilon\}$-term $G$ such that there exists a deduction of \linebreak $G(a_1, \ldots, a_n) \vdash b$ using the cut rule, there exists a deduction of $G(a_1, \ldots, a_n) \vdash b$ not using the cut rule. 

The claim is a direct consequence of \bf{Theorem} \ref{Cut Admissibility annexes}. 

For each deduction $D$ in \InCFLew $\cup \{\text{cut}\}$, according to \textbf{Lemma} \ref{only one negation} there exists a deduction in \InCFLew' $\cup \{\text{cut}\}$ having same premisses and conclusion. Since, according to \bf{Theorem} \ref{Cut Admissibility}, \InCFLew' admits the cut rule, there exists a deduction with same premisses and conclusion in \InCFLew', which gives, according to \textbf{Lemma} \ref{only one negation}, a deduction in \InCFLew with same premisses and conclusion as $D$. 
\end{proof} 

\section{Boolean case} \label{section Boolean case} 

Here, we deal with the case which cumulates the properties of the intuitionistic and involutive cases. It is the analogue of the study of Boolean logic in the continuous setting. We first prove that the ordered topological space associated to any IC-algebra (\bf{Theorem} \ref{isomorphie MC-algèbres}) is actually just a topological space. Second, we show that the theory obtained to describe this Boolean Continuous Logic is equivalent to the theory of classical continuous logic (\bf{Theorem} \ref{equivalence to classical logic}). Finally, we exhibit a sequent-style deductive system admitting the cut rule that describes this logic. 

\subsection{Boolean Continuous Algebras} 

\begin{defi}[{\cite[p. 22]{galatos2007residuated}}] 

A Boolean algebra is a Heyting algebra $\bcal$ such that, for all $x \in \bcal$, $\ng \ng x = x$. 

\end{defi} 

\begin{thm} 

Let $\lcal$ be a commutative residuated complete lattice. 

$\lcal$ is a Boolean algebra if and only if the negation on $USC(\lcal)$ is involutive and \linebreak $USC(\lcal) \models 2v \geq v \+ v$. 

\end{thm}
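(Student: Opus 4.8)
The plan is to reduce the biconditional to the two transfer results already established, namely \textbf{Corollary} \ref{2+} and \textbf{Theorem} \ref{involutive correspondence}, after reformulating the notion of Boolean algebra in terms adapted to the residuated setting. Recall that, by the definition above, a Boolean algebra is a Heyting algebra whose negation is involutive. So the first thing I would do is observe that, for a \crcl{} $\lcal$, being a Heyting algebra is exactly the condition $\otimes = \wedge$: indeed, as is shown inside the proof of \textbf{Corollary} \ref{2+}, in a \crcl{} the idempotency law $v \leq v \otimes v$ is equivalent to $u \otimes v = u \wedge v$, and a complete lattice whose monoid operation is the binary meet is precisely a locale. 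Likewise, involutiveness of the negation of $\lcal$ is by definition the identity $\ng\ng x = x$, that is $(x \nrightarrow \bot) \nrightarrow \bot = x$. With these two translations in hand, the target statement becomes: $\lcal$ is a locale with involutive negation if and only if $\USC$ has involutive negation \et $\USC \models 2v \geq v \+ v$.

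For the forward implication I would assume $\lcal$ is a Boolean algebra. Then $\lcal$ is a locale, so \textbf{Corollary} \ref{2+} gives $\USC \models 2v = v \+ v$, hence in particular $\USC \models 2v \geq v \+ v$; and since the negation of $\lcal$ is involutive, \textbf{Theorem} \ref{involutive correspondence} yields that the negation of $\USC$ is involutive.

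For the reverse implication I would assume that $\USC$ has involutive negation and that $\USC \models 2v \geq v \+ v$. Because $\USC$ is a model of $\T$ (\textbf{Lemma} \ref{soundness for T}), it satisfies axiom \ref{2v <=}, namely $2v \leq v \+ v$, so in fact $\USC \models 2v = v \+ v$; \textbf{Corollary} \ref{2+} then forces $\lcal$ to be a locale, hence a complete Heyting algebra. Applying \textbf{Theorem} \ref{involutive correspondence} in the other direction gives that the negation of $\lcal$ is involutive. Thus $\lcal$ is a Heyting algebra with involutive negation, i.e. a Boolean algebra.

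The whole argument is a bookkeeping composition of the two correspondence theorems, so the only point that genuinely needs care — the hard part, modest as it is — is the definitional bridge of the first paragraph: one must be sure that for a \crcl{} the abstract Boolean-algebra condition coincides exactly with the conjunction ``locale'' \et ``involutive negation'', and in particular that the one-sided hypothesis $2v \geq v \+ v$ may be promoted to the equality required by \textbf{Corollary} \ref{2+} by invoking the always-valid converse inequality $2v \leq v \+ v$ coming from $\T$.
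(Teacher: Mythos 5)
Your proposal is correct and follows essentially the same route as the paper: reduce the statement to the identity ``Boolean algebra $=$ Heyting algebra with involutive negation $=$ ($\otimes = \wedge$) together with involutive negation'' and then invoke \textbf{Corollary} \ref{2+} and \textbf{Theorem} \ref{involutive correspondence}; the paper's proof is just a terser version that proves only the algebraic reformulation and leaves the two transfer results implicit. Your extra care in promoting $2v \geq v \+ v$ to an equality via the axiom $2v \leq v \+ v$ is sound, and is in fact already absorbed into the proof of \textbf{Corollary} \ref{2+}, which works directly from the one-sided inequality via \textbf{Theorem} \ref{first half}.
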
 

\begin{proof} 

The claim is a consequence of the following one : $\lcal$ is a Boolean algebra if and only if the negation on $\lcal$ is involutive and $\otimes = \wedge$. The claim is true since, for all commutative residuated lattice $\lcal$, $\lcal$ is a Heyting algebra \ssi $\otimes = \wedge$ and for every Heyting algebra $\lcal$, $\lcal$ is a Boolean algebra \ssi its negation is involutive. 
\end{proof} 

\begin{defi} 

We denote by $\T_\class$ the theory $\T_\int \cup \{v \leq (1 \- (1 \- v))\}$. 

We denote by $BC$ the class whose elements are the $USC(\bcal)$ for $\bcal$ a complete Boolean algebra and call these algebras \emph{Boolean Continuous Algebras}. 

\end{defi} 

\begin{thm} \label{class big theorem} 

For all model $A$ of $\T_\class$, there exists a complete Boolean algebra $\bcal$ such that the quotient of the Macneille completion of $A$ by $\simeq$ is isomorphic to $USC(\bcal)$. 

For all model $A$ of $\T_\class$, there exists a complete Boolean algebra $\bcal$ such that the quotient of $A$ by $\simeq$ embeds into $USC(\bcal)$. 

\end{thm}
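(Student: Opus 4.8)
The plan is to run in parallel the arguments already used for the intuitionistic case (\textbf{Corollary} \ref{int big corollary}) and the involutive case (\textbf{Theorem} \ref{inv big theorem}), since $\T_\class = \T_\int \cup \{v \leq (1 \- (1 \- v))\}$ carries both the ``locale'' axiom $2v \geq v \+ v$ (furnished by $\T_\int$, recall \textbf{Theorem} \ref{Equivalent theories}) and the ``involutive'' axiom $v \leq 1 \- (1 \- v)$. First I would take a model $A$ of $\T_\class$. Since $\T_\class \supseteq \T$, the algebra $A$ is in particular a model of $\T$, so \textbf{Theorem} \ref{big theorem} supplies a commutative residuated complete lattice $\lcal$ together with an isomorphism between the quotient of the Macneille completion $\bar A$ of $A$ by $\simeq$ and $\USC$, and an embedding of the quotient of $A$ by $\simeq$ into $\USC$. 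It then remains only to verify that this $\lcal$ is the underlying lattice of a complete Boolean algebra, after which one sets $\bcal = \lcal$ for both assertions.

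Next I would transfer the two extra axioms along the completion. Both $2v \geq v \+ v$ (equivalently $v \+ v \leq 2v$) and $v \leq 1 \- (1 \- v)$ are inequalities $f \leq g$ in which either the left-hand side preserves upper bounds or the right-hand side preserves lower bounds in each coordinate, so \textbf{Lemma} \ref{lemme Macneille} guarantees that they continue to hold in $\bar A$, hence in the quotient $\quotient{\bar A}{\simeq}$, and therefore in $\USC$. Consequently $\USC$ satisfies $2v \geq v \+ v$ and has involutive negation.

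From here the two characterizations close the argument. By \textbf{Theorem} \ref{first half} together with \textbf{Corollary} \ref{2+}, the identity $2v \geq v \+ v$ in $\USC$ forces $\otimes = \wedge$ in $\lcal$, i.e.\ $\lcal$ is a locale (a complete Heyting algebra); by \textbf{Theorem} \ref{involutive correspondence}, the involutiveness of the negation of $\USC$ forces the negation of $\lcal$ to be involutive. A complete Heyting algebra whose negation satisfies $\ng \ng x = x$ is a complete Boolean algebra, exactly as recorded in the theorem opening this section, so $\lcal$ is indeed a complete Boolean algebra and both statements follow.

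I do not expect a genuine obstacle here, since the substantive work has already been carried out in the two special cases; the only point requiring care is the bookkeeping of the transfer step, where one must check that each of the two added axioms really meets the monotonicity and bound-preservation hypotheses of \textbf{Lemma} \ref{lemme Macneille} so as to survive the Macneille completion, and that the resulting identities descend to the quotient by $\simeq$ as in \textbf{Corollary} \ref{int big corollary} and \textbf{Theorem} \ref{inv big theorem}. Once $\USC$ is known to satisfy both properties simultaneously, the Boolean conclusion is immediate.
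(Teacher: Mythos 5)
Your proposal is correct and follows essentially the same route as the paper: the paper's proof simply cites \textbf{Corollary} \ref{int big corollary} to obtain a locale $\bcal$ and \textbf{Theorem} \ref{inv big theorem} to conclude that this same lattice has involutive negation, and those two results internally consist of exactly the steps you spell out (apply \textbf{Theorem} \ref{big theorem}, transfer the extra axiom through the Macneille completion via \textbf{Lemma} \ref{lemme Macneille}, then invoke \textbf{Corollary} \ref{2+}/\textbf{Theorem} \ref{first half} and \textbf{Theorem} \ref{involutive correspondence}). Your inlined version has the minor virtue of making explicit why the lattice produced by the two cited results is the same one, a point the paper leaves implicit in the phrase ``this same locale''.
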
 

\begin{proof} 

Let $A$ be a model of $\T_\class$. $A$ is a model of $\T_\int$, so, according to \bf{Corollary} \ref{int big corollary}, there exists a locale $\bcal$ such that the quotient of the Macneille completion of $A$ by $\simeq$ is isomorphic to $USC(\bcal)$. According to this same theorem, the quotient of $A$ by $\simeq$ also embeds into $USC(\bcal)$. Finally, according to \bf{Theorem} \ref{inv big theorem}, this same locale is involutive, so $\bcal$ is a complete Boolean algebra. 
\end{proof} 

\begin{cor} \label{complétude des BC-algèbres} 

For all $\mathcal{L}$-terms $\phi \et \psi$, the following assertions are equivalent: 

\begin{enumerate} 

\item \label{complétude des BC-algèbres 1} For all complete Boolean algebra $\bcal$, $USC(\bcal) \models \phi \leq \psi$. 

\item \label{complétude des BC-algèbres 2} For all $n \in \nn$, $\phi \leq \psi \+ \ul{\frac{1}{2^n}}$ is consequence of $\T_\class$. 

\end{enumerate} 

\end{cor}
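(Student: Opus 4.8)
The plan is to follow verbatim the template of the three preceding completeness corollaries (\textbf{Corollaries} \ref{complétude des AC-algèbres}, \ref{complétude des IC-algèbres} and \ref{complétude des InAC-algèbres}), feeding in the Boolean representation result \textbf{Theorem} \ref{class big theorem} in place of the earlier structure theorems. Once that theorem is in hand the corollary is almost purely formal, so no new computation is needed; the whole task is to thread the representation and the infinitesimal tolerance correctly.

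First I would dispatch the implication \ref{complétude des BC-algèbres 2} $\Rightarrow$ \ref{complétude des BC-algèbres 1}. For every complete Boolean algebra $\bcal$, the algebra $USC(\bcal)$ is a model of $\T_\class$: it models $\T_\int$ because $\bcal$ is a locale, so $2v \geq v \+ v$ holds by \textbf{Corollary} \ref{2+}, and it models $v \leq 1 \- (1 \- v)$ because the negation of $\bcal$ is involutive (\textbf{Theorem} \ref{involutive correspondence}). Hence, if $\phi \leq \psi \+ \frac{\ul 1}{2^n}$ is a consequence of $\T_\class$ for every $n$, then $USC(\bcal)$ satisfies all of these inequalities at once; since $USC(\bcal)$ is Archimedean, taking the infimum over $n$ yields $USC(\bcal) \models \phi \leq \psi$.

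For the converse \ref{complétude des BC-algèbres 1} $\Rightarrow$ \ref{complétude des BC-algèbres 2}, I would fix $\mathcal{L}$-terms $\phi$ and $\psi$ with $k$ free variables satisfying \ref{complétude des BC-algèbres 1}, fix $n \in \nn$, and take an arbitrary model $A$ of $\T_\class$. By \textbf{Theorem} \ref{class big theorem} there is a complete Boolean algebra $\bcal$ together with a morphism $i \colon A \rightarrow USC(\bcal)$ realizing the quotient of $A$ by $\simeq$ as a subalgebra, so that $i(a) \leq i(b) \Leftrightarrow \forall m \in \nn\; a \leq b \+ \frac{\ul 1}{2^m}$ for all $a \et b \in A$. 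For each $a \in A^k$, hypothesis \ref{complétude des BC-algèbres 1} gives $i(\phi(a)) = \phi(i(a)) \leq \psi(i(a)) = i(\psi(a))$ in $USC(\bcal)$, whence $\phi(a) \leq \psi(a) \+ \frac{\ul 1}{2^n}$ in $A$. As $A$ was an arbitrary model of $\T_\class$, and the class of models of $\T_\class$ is complete for $\T_\class$, I conclude that $\phi \leq \psi \+ \frac{\ul 1}{2^n}$ is a consequence of $\T_\class$.

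The only genuinely substantial input is \textbf{Theorem} \ref{class big theorem}, which has already been established; everything else is bookkeeping. The single point that deserves care is the quantifier structure: hypothesis \ref{complétude des BC-algèbres 1} is semantic, being universal over Boolean algebras and their valuations, whereas \ref{complétude des BC-algèbres 2} is a statement about the syntactic consequence relation. The passage between the two rests on working modulo $\simeq$ and on the infinitesimal slack $\frac{\ul 1}{2^n}$ supplied by the equivalence $i(a)\leq i(b)\Leftrightarrow \forall m\; a\leq b\+\frac{\ul 1}{2^m}$, exactly as in the non-involutive and intuitionistic cases; correctly reproducing this threading for the Boolean instance is all that is required, and I expect no difficulty beyond it.
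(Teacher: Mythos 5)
Your proposal is correct and follows essentially the same route as the paper's own proof: soundness of $USC(\bcal)$ for $\T_\class$ plus Archimedeanity in one direction, and \textbf{Theorem} \ref{class big theorem} together with the order-reflecting property $i(a) \leq i(b) \Leftrightarrow \forall m \in \nn \; a \leq b \+ \frac{\ul 1}{2^m}$ and completeness of the class of models in the other. In fact your write-up is slightly more careful than the paper's, which asserts the soundness step without citing \textbf{Corollary} \ref{2+} and \textbf{Theorem} \ref{involutive correspondence}, and which twice writes $\T$ where $\T_\class$ is meant.
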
 

\begin{proof} 

For all complete Boolean algebra $\bcal$, $USC(\bcal)$ is a model of $\T_\int$ and satisfies \linebreak $v \leq 1 \- (1 \- v)$, so $USC(\bcal)$ is a model of $\T_\class$. By Archimedeanity of every BC-algebra, we have that \ref{complétude des BC-algèbres 2} implies \ref{complétude des BC-algèbres 1}. 

Let $\phi \et \psi$ be $\mathcal{L}$-terms both having $k$ free variables such that, for all complete Boolean algebra $\bcal$, \linebreak $USC(\bcal) \models \phi \leq \psi$. Let $A$ be a model of $\T_\class$. According to \bf{Theorem} \ref{class big theorem}, there exists a complete Boolean algebra $\bcal$ and a morphism $i \colon A \rightarrow USC(\bcal)$ such that, for all $a \et b \in A$, $i(a) \leq i(b) \Leftrightarrow \forall n \in \nn \, a \leq b \+ \ul{\frac{1}{2^n}}$. For all $a \in A^k$ and $n \in \nn$, since $i(\phi(a)) \leq i(\psi(a))$, $\phi(a) \leq \psi(a) \+ \ul{\frac{1}{2^n}}$. The class of all models of $\T_\class$ being complete for $\T_\class$, $\phi \leq \psi \+ \ul{\frac{1}{2^n}}$ is consequence of $T$. 
\end{proof} 

\begin{lem} \label{Y fermé} 

Let $(X, \leq)$ be an intuitionnnistic Hausdorff ordered space. The set $Y$ of minimal elements of $X$ is closed. 

\end{lem}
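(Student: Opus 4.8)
The plan is to prove that the complement $Y^c = \{x \in X \mid x \text{ is not minimal}\}$ is open. First I would record a preliminary observation that will let me use topological separation: by hypothesis $X$ is Hausdorff in the sense that the relation ${\leq}$ is closed in $X \times X$ (\textbf{Definition} \ref{Hausdorff}). Since the diagonal is $\Delta = {\leq} \cap {\leq}^{-1}$ and ${\leq}^{-1}$ is the image of ${\leq}$ under the coordinate-swapping homeomorphism, $\Delta$ is closed; hence $X$ is a $T_2$ space and any two distinct points can be separated by disjoint open sets. This is the only use of the closedness of the order.

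Next I would take an arbitrary $x \in Y^c$ and build an open neighbourhood of $x$ inside $Y^c$. Because $x$ is not minimal, there exists $y$ with $y \leq x$ and $y \neq x$. Using the $T_2$ property just established, I choose disjoint open sets $U$ and $V$ with $y \in U$ and $x \in V$. The intuitionistic hypothesis enters here: since $U$ is open, $\uparrow U$ is open, and it contains $x$ because $x \geq y \in U$. Therefore $W := \uparrow U \cap V$ is an open neighbourhood of $x$.

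The key verification, and the only place any care is needed, is to show $W \subseteq Y^c$. Given $x' \in W$, membership in $\uparrow U$ yields some $u \in U$ with $u \leq x'$; the point of intersecting with $V$ is precisely to force $u \neq x'$, since $u \in U$, $x' \in V$, and $U \cap V = \emptyset$ give $u \notin V \ni x'$. Thus $u \leq x'$ with $u \neq x'$, so $x'$ is not minimal, i.e. $x' \in Y^c$. This shows every point of $Y^c$ has an open neighbourhood contained in $Y^c$, so $Y^c$ is open and $Y$ is closed. I expect no serious obstacle: the whole argument rests on combining the openness of $\uparrow U$ (intuitionisticity) with the separation of the witness $y$ from $x$ (Hausdorffness), and notably it requires neither compactness nor any of the continuity results of the preceding sections.
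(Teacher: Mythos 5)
Your proof is correct, and it takes a genuinely different route from the paper's. You prove that the complement of $Y$ is open by a direct pointwise separation argument: from the closedness of ${\leq}$ you first extract that the topology is $T_2$ (via $\Delta = {\leq} \cap {\leq}^{-1}$ closed), then, given a non-minimal $x$ with witness $y \leq x$, $y \neq x$, you separate $y$ from $x$ by disjoint opens $U, V$ and observe that $\uparrow U \cap V$ is an open neighbourhood of $x$ consisting of non-minimal points -- the only input from intuitionisticity being the definition itself, that $\uparrow U$ is open. The paper instead proves $\overline{Y} \subseteq Y$: for $y \in \overline{Y}$ it uses Hausdorffness to write $\{y\} = \bigcap_{U \ni y \text{ open}} \overline{U \cap Y}$, notes that every subset of $Y$ is automatically downward closed (minimal elements have no strictly smaller elements), and then invokes the implication $(1)$ of \textbf{Lemma} \ref{gros lemme} (intuitionistic implies $\downarrow \bar{A} \subseteq \overline{\downarrow A}$, so closures of downward closed sets are downward closed) to conclude that $\{y\}$ is downward closed, i.e.\ $y$ is minimal. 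Your argument is more elementary and self-contained: it uses only the raw definition of intuitionistic together with the $T_2$ property, produces explicit neighbourhoods, and sidesteps the slightly delicate intersection-of-closures step (which in the paper's write-up is in fact stated with the closure bars dropped). The paper's version, on the other hand, recycles the machinery of \textbf{Lemma} \ref{gros lemme} already established in the same subsection, and isolates the structural fact doing the work -- that closure preserves downward-closedness in intuitionistic spaces -- which is reused in spirit elsewhere. Both proofs correctly avoid compactness.
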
 

\begin{proof} 

Let $y \in \ol Y$. 

Since $X$ is Hausdorff, $\{y\} = \bigcap[\ust{U \texte{open}}{\tq y \in U}] U$. Moreover, for all open $U$ such that $y \in U$, $y \in \ol{U \cap Y}$, so \linebreak $\{y\} = \bigcap[\ust{U \texte{open}}{\tq y \in U}] U \cap Y$. However, for all open $U$ such that $y \in U$, $U \cap Y \subset Y$, so $U \cap Y$ is downward closed. Hence $\{y\}$ is downward closed, which proves that $y \in Y$. 

\end{proof} 

\begin{lem} \label{lemme Y non vide} 

Let $(X, \leq)$ be a non-empty compact ordered space. The set $Y$ of minimal elements of $X$ is not empty. 

\end{lem}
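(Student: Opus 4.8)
The plan is to produce a $\leq$-minimal element by applying Zorn's lemma to the reverse order $(X, \geq)$, the only substantial ingredient being that compactness turns chains into families admitting lower bounds. Throughout I use that the down-sets are closed: for each $c \in X$ the set $\downarrow \{c\} = \{x \in X \tq x \leq c\}$ is closed, being the preimage of the closed subset $R = \{(x,y) \in X \times X \tq x \leq y\}$ under the continuous map $x \mapsto (x,c)$. In the setting at hand this closedness of $R$ is exactly the hypothesis that $\leq$ is closed (\bf{Definition} \ref{Hausdorff}).

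First I would fix an arbitrary nonempty chain $C \subseteq X$ and consider the family of closed sets $(\downarrow \{c\})_{c \in C}$. This family has the finite intersection property: for $c_1, \ldots, c_n \in C$, totality of $C$ gives a least element among them, say $c_{i_0}$, and then $c_{i_0} \leq c_k$ for every $k$, so $c_{i_0} \in \bigcap_{k=1}^n \downarrow \{c_k\}$. By compactness of $X$ it follows that $\bigcap_{c \in C} \downarrow \{c\} \neq \emptyset$, and any point $m$ of this intersection satisfies $m \leq c$ for all $c \in C$, i.e. is a lower bound of $C$ in $X$.

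Having shown that every chain of $(X, \leq)$ admits a lower bound, I would conclude by Zorn's lemma applied to the dual: chains of $(X, \geq)$ are exactly chains of $(X, \leq)$, and their $\geq$-upper bounds are their $\leq$-lower bounds, so $(X, \geq)$ is a nonempty poset in which every chain is bounded above and hence has a maximal element. Such an element is $\leq$-minimal in $X$, which shows $Y \neq \emptyset$. The argument is entirely standard; the only place where the hypotheses genuinely bite is the passage from the finite intersection property to a nonempty total intersection, which is precisely where compactness of $X$ together with closedness of the down-sets is needed, so this is the step I would take most care over.
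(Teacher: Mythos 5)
Your proof is correct, but it takes a genuinely different route from the paper's. The paper does not apply Zorn's lemma to $X$ itself: it applies it to the set $E$ of non-empty chains of $X$ ordered by inclusion, obtaining a maximal chain $C$; it then forms the family $\fcal$ of subsets of $X$ meeting $C$ and downward closed along $C$, checks that $\fcal$ is a prefilter, uses compactness to produce a cluster point $x_0 \in \bigcap_{A \in \fcal} \ol A$, and finally argues via the maximality of $C$ that $x_0$ is minimal. You instead apply Zorn directly to $(X, \geq)$, verifying its hypothesis by the standard compactness argument: the down-sets $\downarrow \{c\}$, $c \in C$, are closed and have the finite intersection property along any chain $C$, so their total intersection is non-empty and yields a lower bound. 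Your version is shorter, uses compactness in exactly one clean step (which you correctly identify as the crux), and avoids the most delicate point of the paper's argument, namely the final inference ``if $x \leq x_0$ then $x \in C$ by maximality of $C$'', which as written requires further justification.

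One caveat: you invoke closedness of $\leq$ (\textbf{Definition}~\ref{Hausdorff}), which is not literally among the hypotheses of \textbf{Lemma}~\ref{lemme Y non vide} --- the statement only says ``compact ordered space''. This is, however, a defect of the statement rather than of your proof. Some such hypothesis is genuinely needed: on $X = \{0\} \cup \{1/n \mid n \geq 1\}$ with its compact subspace topology and the total order $\cdots \leq 1/3 \leq 1/2 \leq 1 \leq 0$, every element has an element strictly below it, so $Y = \emptyset$; here $\leq$ is not closed, since $(1/n, 1/m) \in {\leq}$ for $n \geq m$ but the limit $(0, 1/m)$ is not. Moreover, the paper's own proof tacitly uses the same hypothesis: its last step deduces $x_0 \leq x$ from $x_0 \in \ol{\downarrow\{x\}}$ with $\downarrow\{x\} \in \fcal$, which requires down-sets of points to be closed. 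In the lemma's actual application (\textbf{Theorem}~\ref{thm order is equality}) the ambient space is compact Hausdorff and the subspace $\downarrow\{z\}$ inherits closedness of the order, so your additional hypothesis is satisfied wherever the lemma is used.
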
 

\begin{proof} 

Let $(X, \leq)$ be a non-empty compact ordered space. Let $E$ be the set of non empty chains on $X$. 

$E$ is not empty because $X$ is not empty. 

Let $\ccal \subset E$ be totally ordered by inclusion and assume $\ccal$ is not empty. Let $A = \bigcup[C \in \ccal] C$. For all $C \in \ccal$, $C \subset A$. For all $x \et y \in A$, there exists $C \in \ccal$ such that $x \et y \in C$, so either $x \leq y$ or $y \leq X$. Thus, $A \in E$. 

According to Zorn's lemma, $E$ admits a maximal element $C$. 

Let $\fcal = \{A \subset X \,|\, A \cap C \neq \emptyset \et \forall x < y \in C \; (y \in A \Rightarrow x \in A)\}$. For all $A \et B \in \fcal$, there exists $x_1 \in A \cap C$ and $x_2 \in B \cap C$, so, since $C$ is a chain, $x_1 \in (A \cap B) \cap C$ or $x_2 \in A \cap B$ and thus $A \cap B \in \fcal$. Hence $\fcal$ is a prefilter, so, since $X$ is compact, the exists $x_0 \in \bigcap[A \in \fcal] \bar A$. For all $x \in X$, if $x \leq x_0$, by maximality of $C$, $x \in C$, so $\downarrow\{x\} \in \fcal$ and thus $x_0 \leq x$. 

$x$ is thus a minimal element of $X$. 

\end{proof} 

\begin{thm} \label{thm  order is  equality} 

For all intuitionistic compact Hausdorff ordered topological space $X$, \linebreak $\Cn{X} \models v \leq 1 \- (1 \- v)$ if and only if the order is the equality on $X$. 

\end{thm}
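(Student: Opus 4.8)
Throughout, I would work with $\Cn X = C^0_\nearrow(X)$ under its pointwise order and its MC-algebra operations, so that the negation is $1 \- v$ with $\-$ the residual of $\+$. Since $X$ is intuitionistic, \textbf{Theorem} \ref{gros thm} gives $f \- g = \uparrow(f \dotm g)$, and \textbf{Corollary} \ref{the smallest non-decreasing usc function greater} ensures that $\uparrow$ sends a continuous function to a continuous non-decreasing one; hence $1 \- f = \uparrow(1 - f) \in \Cn X$, with pointwise value $(1 \- f)(x) = 1 - \inf[y \leq x] f(y)$. Iterating this, the plan hinges on first recording the pointwise identity $(1 \- (1 \- f))(x) = \sup[y \leq x]\inf[z \leq y] f(z)$, obtained by substituting the single-negation value into itself.

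\textbf{The easy direction.} If $\leq$ is the equality on $X$, then every function is (vacuously) non-decreasing, so $\uparrow h = h$ for every $h$. Consequently $1 \- f = 1 - f$ and $1 \- (1 \- f) = f$ pointwise, and in particular $v \leq 1 \- (1 \- v)$ holds in $\Cn X$.

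\textbf{The hard direction (contrapositive).} Suppose $\leq$ is not the equality, and fix $x_0 < x_1$ with $x_0 \neq x_1$. Let $Y$ be the set of minimal elements of $X$ and set $M = Y \cap \downarrow\{x_1\}$. I would verify three facts: $M$ is closed (intersection of $Y$, closed by \textbf{Lemma} \ref{Y fermé}, with the closed set $\downarrow\{x_1\}$); $M$ is downward closed (anything below a minimal element equals it); and $M \cap \uparrow\{x_1\} = \emptyset$ (a minimal $m$ with $m \leq x_1$ and $m \geq x_1$ would force $m = x_1$, but $x_1$ is not minimal as $x_0 < x_1$). Since $\uparrow\{x_1\}$ is closed and upward closed, the ordered Urysohn \textbf{Lemma} \ref{Urysohn croissant} produces $f \in \Cn X$ with $f|_M = 0$ and $f|_{\uparrow\{x_1\}} = 1$, so $f(x_1) = 1$. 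To finish, I would evaluate the double negation at $x_1$ with the formula above: for each $y \leq x_1$, applying \textbf{Lemma} \ref{lemme Y non vide} to the compact set $\downarrow\{y\}$ yields a minimal $m \leq y$, whence $m \in Y \cap \downarrow\{x_1\} = M$, so $f(m) = 0$ and $\inf[z \leq y] f(z) = 0$. Taking the supremum over $y \leq x_1$ gives $(1 \- (1 \- f))(x_1) = 0 < 1 = f(x_1)$, so $f \not\leq 1 \- (1 \- f)$ and the inequality fails.

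\textbf{Main obstacle.} The delicate step is the pointwise computation of $1 \- (1 \- f)$: it relies on the fact that $1 \- f$ is genuinely computed as $\uparrow(1-f)$ \emph{inside} $\Cn X$, and not merely in the larger algebra $USC_\nearrow(X)$. This is precisely where intuitionicity of $X$ is used, via \textbf{Theorem} \ref{gros thm} and \textbf{Corollary} \ref{the smallest non-decreasing usc function greater} (together with \textbf{Lemma} \ref{gros lemme}). Once the identity $(1 \- (1 \- f))(x) = \sup[y \leq x]\inf[z \leq y] f(z)$ is secured, the remaining work is the routine arrangement, through ordered Urysohn, of a function that vanishes on every minimal element below $x_1$ yet equals $1$ at $x_1$; the guaranteed existence of minimal elements below each point (\textbf{Lemmas} \ref{Y fermé} and \ref{lemme Y non vide}) is what makes that arrangement possible.
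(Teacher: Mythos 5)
Your proposal is correct and takes essentially the same approach as the paper: the same pointwise formula $(1 \- (1 \- f))(x) = \bigvee[y \leq x] \bigwedge[z \leq y] f(z)$ in $\Cn{X}$ (justified via \textbf{Theorem} \ref{gros thm} and \textbf{Corollary} \ref{the smallest non-decreasing usc function greater}), the same facts about minimal elements (\textbf{Lemmas} \ref{Y fermé} and \ref{lemme Y non vide}), and the same ordered Urysohn construction (\textbf{Lemma} \ref{Urysohn croissant}) of a function vanishing on minimal elements yet equal to $1$ at the chosen point. Your contrapositive phrasing with $M = Y \cap \downarrow\{x_1\}$ is merely a reorganization of the paper's direct argument that, under the axiom, $Y \cap \uparrow\{x\} \neq \emptyset$ forces every $x$ to be minimal.
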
 

\begin{proof} 

Let $(X, \leq)$ be an intuitionistic compact Hausdorff ordered topological space. 

If the order is equality on $X$, then $\Cn{X} = C^0(X)$ and thus $\Cn{X} \models v \leq 1 \- (1 \- v)$. 

Assume now that $\Cn{X} \models v \leq 1 \- (1 \- v)$. Let $Y$ be the set of minimal elements of $X$ and $x \in X$. Let us prove that $x \in Y$, and thus that every element of $X$ is minimal. 

For all $f \in \Cn{X}$ and all $z \leq x$, since $\downarrow\{z\}$ admits minimal elements (\bf{Lemma} \ref{lemme Y non vide}), $$(1 \- f)(z) = \bigvee[y \leq z] 1 \- f(y) \geq \bigvee[\ust{y \in Y}{y \leq z}] 1 \- f(y) = 1,$$ so $f(x) = (1 \- (1 \- f))(x) = \bigvee[z \leq x] 1 \- (1 \- f)(z) = 0$. By \bf{Lemma} \ref{Y fermé}, $Y$ is closed. Thus $Y$ is a downward closed closed subset and $\uparrow\{x\}$ is an upward closed closed subset, so, according to \bf{Lemma} \ref{Urysohn croissant}, $Y \cap \uparrow\{x\} \neq \emptyset$. Hence $x \in Y$. 

\end{proof} 

\begin{cor} \label{corollary order is equality} 

For all BC-algebra $A$, the order on $Sp(A)$ is the equality. 

\end{cor}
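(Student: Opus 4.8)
The plan is to realise the BC-algebra $A$ as $\Cn{Sp(A)}$ through the MC-algebra duality and then feed the involutivity inequality into \textbf{Theorem} \ref{thm  order is  equality}. First I would unfold the definition: a BC-algebra is $A = USC(\bcal)$ for some complete Boolean algebra $\bcal$, and such a $\bcal$ is in particular a locale, so $A$ is an IC-algebra. Being of the form $USC(\lcal)$ for $\lcal$ a locale, $A$ is an MC-algebra by \textbf{Theorem} \ref{gros thm}(\ref{gros thm 1}), and it is Cauchy-complete and Archimedean (exactly the fact already used in the proof of \textbf{Theorem} \ref{Reduction}). Hence $Sp(A)$ is well-defined and the equivalence of \textbf{Theorem} \ref{isomorphie MC-algèbres} supplies an isomorphism of MC-algebras $A \cong \Cn{Sp(A)}$.

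Next I would pin down the intuitionistic nature of the spectrum. Since $\+$ admits the residual $\-$ on $USC(\bcal)$ (\textbf{Lemma} \ref{lem nrightarrow}), $A$ is an \emph{intuitionistic} MC-algebra, so by \textbf{Theorem} \ref{équivalence} the space $Sp(A)$ is a compact Hausdorff intuitionistic ordered topological space. This is precisely the class of spaces to which \textbf{Theorem} \ref{thm  order is  equality} applies, so securing this membership is the point of the step.

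Then I would transport the involutivity inequality. As a BC-algebra, $A$ is a model of $T_\class = T_\int \cup \{v \leq (1 \- (1 \- v))\}$, so $A \models v \leq 1 \- (1 \- v)$ (equivalently, $\bcal$ being Boolean forces the negation of $USC(\bcal)$ to be involutive by \textbf{Theorem} \ref{involutive correspondence}). The isomorphism $A \cong \Cn{Sp(A)}$ is an isomorphism of $\mathcal{L}$-structures preserving $\-$ and the constants $\ul 0, \ul 1$, hence it preserves the term $1 \- (1 \- v)$; transporting the inequality gives $\Cn{Sp(A)} \models v \leq 1 \- (1 \- v)$. Finally I would invoke \textbf{Theorem} \ref{thm  order is  equality}: for the intuitionistic compact Hausdorff ordered space $Sp(A)$, the satisfaction of $v \leq 1 \- (1 \- v)$ by $\Cn{Sp(A)}$ is equivalent to the order on $Sp(A)$ being equality, which is exactly the claim.

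The only delicate point is the simultaneous bookkeeping that $A$ meets all the hypotheses of the two background theorems at once — that it is a Cauchy-complete Archimedean \emph{intuitionistic} MC-algebra, so that both $Sp(A)$ and its intuitionisticity are available — together with the verification that the duality isomorphism of \textbf{Theorem} \ref{isomorphie MC-algèbres} genuinely respects the operation $\-$ used to phrase the involutivity inequality. Beyond this, the argument is a direct chaining of already-established equivalences and requires no computation.
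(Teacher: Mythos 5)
Your proposal is correct and follows essentially the same route as the paper: invoke the duality of \textbf{Theorem}~\ref{isomorphie MC-algèbres} to identify $A$ with $\Cn{Sp(A)}$, then transport $v \leq 1 \- (1 \- v)$ and apply \textbf{Theorem}~\ref{thm  order is  equality}. In fact you are slightly more careful than the paper's own proof, which never explicitly checks that $Sp(A)$ is \emph{intuitionistic} (a genuine hypothesis of \textbf{Theorem}~\ref{thm  order is  equality}, used there via \textbf{Lemma}~\ref{Y fermé}); your step securing this through residuation of $\+$ and \textbf{Theorem}~\ref{équivalence} fills that implicit gap.
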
 

\begin{proof} 

For all BC-algebra $A$, since $A$ is a Cauchy-complete Archimedean MC-algebra, according to \bf{Theorem} \ref{isomorphie MC-algèbres}, $A \simeq \Cn{Sp(A)}$ and $Sp(A)$ is compact and Hausdorff, so, according to \bf{Theorem} \ref{thm  order is  equality}, since $A \models v \leq 1 \- (1 \- v)$, the order on $Sp(A)$ is the equality. 
\end{proof} 

\color{blue} We now recall the theory $T_c$ of classical logic, given in \cite[p. 5]{benyaacovProofCompletenessContinuous2010}: 

\begin{enumerate}[label = (A\arabic*)] 

\item \label{A1} $a \- b \leq a$ 

\item \label{A2} $(c \- a) \- (c \- b) \leq b \- a$ 

\item \label{A3} $a \- (a \- b) \leq b \- (b \- a)$ 

\item \label{A4} $a \- b \leq \ng b \- \ng a$ 

\item \label{A5} $\frac{a}{2} \leq a \- \frac{a}{2}$ 

\item \label{A6} $a \- \frac{a}{2} \leq \frac{a}{2}$

\end{enumerate} 

\begin{thm} \label{equivalence to classical logic} 

The Archimedean models of $T_c$ are exactly the Archimedean models of $\T_\class$. 

\end{thm} 

\begin{proof} 

Since $\T_\class$ axiomatizes $[0,1] = USC(\{\emptyset\})$, and, according to \cite[Fact 4.4]{benyaacovProofCompletenessContinuous2010}, $\{[0,1]\}$ is complete for $T_c$, $T_c$ is a consequence of $\T_\class$ with Archimedeanity. 

Conversely, let $A$ be an Archimedean model of $\T_\class$. According to \bf{Corollary} \ref{corollary order is equality}, there exists an embedding $i\colon A \rightarrow \Cn{Sp(A)}$. We will prove that subtraction is calculated pointwise in $\Cn{Sp(A)}$. From this, we can easily deduce that \ref{A1}, \ref{A2}, \ref{A3}, \ref{A4}, \ref{A5} et \ref{A6} are satisfied by $\Cn{\Sp(A)}$, from which we can conlude that they are also satisfied by $A$. 

Let us now prove that subtraction is calculated pointwise in $\Cn{Sp(A)}$. For all $f$ and \linebreak $g \in \Cn{Sp(A)}$, $f \- g$ is the smallest non decreasing function above $x \mapsto f(x) \- g(x)$, which is non decreasing since the order on $Sp(A)$ is the equality, so $f \- g\colon x \mapsto f(x) \- g(x)$. 
\end{proof} 

\begin{cor} 

For all $n \in \nn$, for all terms $\phi$ and $\psi$ in  the language $\mathcal{L}$, $[0,1] \models \phi \leq \psi$ if and only if for all complete Boolean algebra $\bcal$ $USC(\bcal) \models \phi \leq \psi$. 

\end{cor}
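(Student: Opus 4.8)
The right-hand side of the asserted equivalence is verbatim condition \ref{complétude des BC-algèbres 1} of \textbf{Corollary} \ref{complétude des BC-algèbres}, so the statement really compares truth in $[0,1]$ with truth throughout the class $BC$. The plan is to establish the two implications separately, the easy one being that truth over all of $BC$ forces truth in $[0,1]$, and the substantial one being the converse transfer. For the easy direction I would observe that $[0,1]$ itself belongs to $BC$: the two-element Boolean algebra $\{\bot,\top\}$ is a complete Boolean algebra and $USC(\{\bot,\top\}) \cong [0,1]$ as $\mathcal{L}$-structures, the identification $[0,1] = USC(\{\emptyset\})$ already exploited in the proof of \textbf{Theorem} \ref{equivalence to classical logic}. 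Specializing the hypothesis $USC(\bcal) \models \phi \leq \psi$ to $\bcal = \{\bot,\top\}$ then yields $[0,1] \models \phi \leq \psi$ immediately.

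For the converse, suppose $[0,1] \models \phi \leq \psi$. I would invoke that $\{[0,1]\}$ is complete for $T_c$ (\cite[Fact 4.4]{benyaacovProofCompletenessContinuous2010}): because deduction in continuous logic is approximate, truth of $\phi \leq \psi$ in $[0,1]$ gives, for each $n \in \nn$, that $\phi \leq \psi \+ \frac{\ul 1}{2^n}$ is a consequence of $T_c$ and therefore holds in every model of $T_c$. Next I would fix an arbitrary complete Boolean algebra $\bcal$ and check that $USC(\bcal)$ is such a model: by the soundness part of \textbf{Corollary} \ref{complétude des BC-algèbres} it is a model of $T_\class$, and it is Archimedean since the family $\left(\frac{\ul 1}{2^n}\right)_{n\in\nn}$ has $\ul 0$ as its lower bound (property \ref{archimedean}); \textbf{Theorem} \ref{equivalence to classical logic} then promotes it to an Archimedean model of $T_c$. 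Consequently $USC(\bcal) \models \phi \leq \psi \+ \frac{\ul 1}{2^n}$ for every $n$.

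It would then remain to cancel the infinitesimal error. Since $\+$ admits the residual $\-$, it preserves lower bounds in each argument, so in $USC(\bcal)$ one has $\bigwedge[n \in \nn]\left(\psi \+ \frac{\ul 1}{2^n}\right) = \psi \+ \bigwedge[n \in \nn]\frac{\ul 1}{2^n} = \psi \+ \ul 0 = \psi$; together with $\phi \leq \psi \+ \frac{\ul 1}{2^n}$ for all $n$ this forces $\phi \leq \psi$ in $USC(\bcal)$, and as $\bcal$ was arbitrary the right-hand side follows. I expect the genuine difficulty to sit in this transfer from $[0,1]$ to an arbitrary $USC(\bcal)$: it cannot proceed by elementary equivalence, since $USC(\bcal)$ is generally not elementarily equivalent to $[0,1]$ (it contains nontrivial idempotent $0$-indicators $0_U$), so the argument must go through the coincidence of the two Archimedean model classes supplied by \textbf{Theorem} \ref{equivalence to classical logic} and then use Archimedeanity to pass to the limit in $n$.
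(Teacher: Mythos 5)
Your proof is correct and takes essentially the same route as the paper's (very terse) proof, which argues that $\{[0,1]\}$ is complete for $T_c$, hence for $T_\class$ too, and that reciprocally $[0,1]$ is itself a BC-algebra, concluding via \textbf{Corollary} \ref{complétude des BC-algèbres}. Your unpacking of the forward direction — transferring the approximate consequences $\phi \leq \psi \+ \frac{\ul 1}{2^n}$ of $T_c$ to an arbitrary $USC(\bcal)$ through \textbf{Theorem} \ref{equivalence to classical logic}, then cancelling the error term by residuation of $\+$ and Archimedeanity — is exactly the content the paper compresses into that one-line appeal.
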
 

\begin{proof} 

Indeed, the class $\{[0,1]\}$ is complete for $T_c$ with Archimedeanity, so is complete for $\T_\class$ with Archimedeanity too, and thus, for all complete Boolean algebra $\bcal$ $USC(\bcal) \models \phi \leq \psi$. Reciprocally, $[0,1]$ is itself a BC-algebra. 
\end{proof} 

\begin{cor} 

The definition of $\odot$ given in \bf{Definition} \ref{def odot} coincide with \linebreak the naive one, i.e., for all complete Boolean algebra $\bcal$ and all $f$ and $g \in USC(\bcal)$, \linebreak $j\left(\frac{f}{2} \+ \frac{g}{2}\right) = 1 \- ((1 \- f) \+ (1 \- g))$. 

\end{cor}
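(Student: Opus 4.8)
The plan is to reduce the asserted identity to a variable-level computation in $[0,1]$ and then transport it to every $USC(\bcal)$ by the completeness corollary proved immediately above. First I would unwind \textbf{Definition} \ref{def odot}: the claim $j\left(\frac{f}{2} \+ \frac{g}{2}\right) = 1 \- ((1 \- f) \+ (1 \- g))$ is exactly the assertion that, for every complete Boolean algebra $\bcal$,
$$USC(\bcal) \models j\left(\frac{v}{2} \+ \frac{w}{2}\right) = \ul 1 \- ((\ul 1 \- v) \+ (\ul 1 \- w)).$$
The point to record here is that both sides are honest terms of the language $\mathcal{L}$, since $j$, $\frac{\cdot}{2}$, $\+$, $\-$ and $\ul 1$ all belong to $\mathcal{L}$; this is what allows the transport step to apply.

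Next I would check the corresponding equality in $[0,1]$, reading off the interpretations from \textbf{Definition} \ref{Définition du langage}. On the left, since $v, w \leq 1$ no truncation occurs, so $\frac{v}{2} \+ \frac{w}{2} = \frac{v + w}{2}$ and therefore $j\left(\frac{v}{2} \+ \frac{w}{2}\right) = \max[v + w - 1, 0]$, the truncated (strong) conjunction. On the right, $(1 \- v) \+ (1 \- w)$ evaluates to $\min[2 - v - w, 1]$, and the truncated subtraction $\ul 1 \- (\cdot)$ applied to it returns $\max[v + w - 1, 0]$ as well. Hence $[0,1]$ satisfies the equality, i.e.\ it satisfies both inequalities $\leq$ and $\geq$ between these two $\mathcal{L}$-terms.

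Finally I would invoke the preceding corollary, which gives $[0,1] \models \phi \leq \psi$ if and only if $USC(\bcal) \models \phi \leq \psi$ for every complete Boolean algebra $\bcal$, applying it once to each of the two inequalities obtained above; combining them yields the desired equality in every $USC(\bcal)$. There is no genuine obstacle in this argument: the only computational content is the elementary truncated-arithmetic simplification in $[0,1]$, and the lift is immediate from the already-established corollary. The single subtlety worth flagging is that the cited corollary is stated for inequalities, so the equality must be produced by applying it separately to the $\leq$ and $\geq$ directions rather than in one stroke.
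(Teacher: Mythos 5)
Your proof is correct and is precisely the argument the paper intends: the corollary is stated without an explicit proof because it follows immediately from the preceding corollary (transfer of $\mathcal{L}$-term inequalities between $[0,1]$ and all $USC(\bcal)$ for $\bcal$ a complete Boolean algebra), combined with the elementary check that both sides reduce to $\max[v + w - 1, 0]$ in $[0,1]$. Your observation that the transfer corollary handles inequalities and must be applied separately in each direction to obtain the equality is the right reading of that statement.
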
 \color{black} 

\subsection{Cut Admissibility} 

The language for sequent calculus in the intuitionistic case is the same as the one for \InCFLew, recalled in \it{Figure} \ref{InLJK}. The language for the structures and formulas of sequent calculus in the involutive case, and its correspondence with the language $\mathcal{L}$ are given in \it{Figure} \ref{Bool Correspondence between structure symbols and L}. As for the involutive case, we also consider the language $\mathcal{L}_\inv'$ which is $\mathcal{L}_\inv$ with the symbols of the \it{Figure} \ref{Bool Correspondence between the new structure symbols and L}. We keep Notation \ref{notations coupures}. 

\begin{figure}[!ht] 

$$\begin{array}{|c|c|c||c|c|} 
\hline 
\multicolumn{3}{|c||}{\text{Left interpretation}}& \multicolumn{2}{c|}{\text{Right interpretation}}\\ 
\hline 
\text{structures}& \text{formulas} & \text{algebraic correspondent} & \text{formulas} & \text{algebraic correspondent}\\ 
\hline 
,& \+& \+& \odot& 1 - ((1 - \_) \+ (1 - \_))\\ 
\hline 
\epsilon& \ul 0& \ul 0& \ul 1& \ul 1\\ 
\hline 
\circ_2& 2& 2& \frac{\cdot}{2}& \frac{\cdot}{2}\\ 
\hline 
\bullet_2& j_\ast& j_\ast& j& j\\ 
\hline 
\circ_\alpha& \alpha& \alpha& \blacksquare_\alpha& 2v \wedge j_\ast(v)\\ 
\hline 
\ng[\blank]& \ng& \ng& \ng& \ng\\ 
\hline 
\end{array}$$ 

\caption{Correspondence between structure symbols and $\mathcal{L}$} \label{Bool Correspondence between structure symbols and L} 

\end{figure} 

\begin{figure}[!ht] 

$$\begin{array}{|c|c|c||c|c|} 
\hline 
\multicolumn{3}{|c||}{\text{Left interpretation}}& \multicolumn{2}{c|}{\text{Right interpretation}}\\ 
\hline 
\text{structures}& \text{formulas} & \text{algebraic correspondent} & \text{formulas} & \text{algebraic correspondent}\\ 
\hline 
\bullet_\alpha& \alpha& \alpha& \blacksquare_\alpha& 2v \wedge j_\ast(v)\\ 
\hline 
\bs[\blank]& \bs& \bs& \bs& \bs\\ 
\hline 
\end{array}$$ 

\caption{Correspondence between the new structure symbols and $\mathcal{L}$} \label{Bool Correspondence between the new structure symbols and L} 

\end{figure} 

Here again, we use a system similar to \InMGL with several modalities but only one negation. We still work with the system \InMGL($\circ_2, \bullet_2, \circ_\alpha)$ (\it{Figure} \ref{inv Introduction Rules for Modalities}). In addition to the rules of $\InMGL(\circ_2,\, \bullet_2,\, \circ_\alpha)$ (\it{Figure} \ref{inv Introduction Rules for Modalities}), we add the structural rules given in \textit{Figure} \ref{InLJK} and call the total system \InLJK. 

\begin{figure}[!ht] 

\setlength{\arraycolsep}{5 pt} 

\adjusttopage{$\begin{array}{|ccc|} 

\hline 

[\ref{com monoid}a] \seq{{\Gamma, \Delta \vdash \Theta}}{{\Delta, \Gamma \vdash \Theta}} & [\ref{com monoid}b] \seq{{\Gamma, (\Delta, \Pi) \vdash \Theta}}{{(\Gamma, \Delta), \Pi \vdash \Theta}} & [\ref{com monoid}c] \seq[=]{{\epsilon, \Gamma \vdash \Theta}}{{\Gamma \vdash \Theta}} \\ 

[\ref{2 >=}] \seq{\Gamma \vdash \Theta}{\circ_2 \Gamma \vdash \Theta} & [\ref{j* >=}] \seq{\Gamma \vdash \Theta}{\bullet_2 \Gamma \vdash \Theta} & [\ref{bounded lattice}] \seq{\epsilon \vdash \Theta}{\Gamma \vdash \Theta} \\ 

[\ref{2v <=}] \seq{\circ_2 \Gamma \vdash \Theta, \circ_2 \Delta \vdash \Theta}{{\Gamma, \Delta \vdash \Theta}} & \blue{[\ref{int 2v >=}] \seq{{\Gamma, \Gamma \vdash \Theta}}{{\circ_2 \Gamma \vdash \Theta}}} & \blue{[\ref{bounded lattice} \et \ref{2j*(0) >= 1}]} \seq{{}}{\epsilon_1 \vdash \Theta} \\ 

& [(\hyperlink{def j*}{4.b})] \blue{\seq[=]{\bullet_2 (\circ_2 \Gamma {,} \Delta) \vdash \Theta}{{\Gamma , \bullet_2 \Delta \vdash \Theta}}} & \\ 

\hypertarget{4.d}{[\ref{alpha beta v >= v}a]} \seq{\Gamma \vdash \Theta}{\circ_\alpha \circ_2 \Gamma \vdash \Theta} & [\ref{alpha beta v >= v}b] \seq{\Gamma \vdash \Theta}{\circ_\alpha \bullet_2 \Gamma \vdash \Theta} & [\ref{alpha beta v <=  v}] \seq{\circ_\alpha \circ_2 \Gamma \vdash \Theta, \circ_\alpha \bullet_2 \Gamma \vdash \Theta}{\Gamma \vdash \Theta} \\ 

[\ref{beta alpha v >= v}a] \seq{\Gamma \vdash \Theta}{\circ_2 \circ_\alpha \Gamma \vdash \Theta} & [\ref{beta alpha v >= v}b] \seq{\Gamma \vdash \Theta}{\bullet_2 \circ_\alpha \Gamma \vdash \Theta} & [\ref{beta alpha v <=  v}] \seq{\circ_2 \circ_\alpha \Gamma \vdash \Theta, \bullet_2 \circ_\alpha \Gamma \vdash \Theta}{\Gamma \vdash \Theta} \\ 

[\ref{v<=d}] \seq{\Gamma \vdash \Theta}{{\epsilon_d, {\circ_j}^n (\Gamma {,} \epsilon_{1 - d}) \vdash \Theta}} & [\ref{1/2^n + 1/2^n}] \seq{{\epsilon_{\frac{1}{2^n}}, \epsilon_{\frac{1}{2^n}} \vdash \Theta}}{\epsilon_{\frac{1}{2^{n-1}}} \vdash \Theta} & [\hypertarget{Archi}{\ref{archimedean}}] \seq{\forall n \in \nn\; \epsilon_{\frac{1}{2^n}} \vdash \Theta}{\epsilon \vdash \Theta} \\ 

\multicolumn{3}{|c|}{[\ref{v>=d}] \seq{{\epsilon_{1 - \frac{1}{2^{n+1}}} ,  {\circ_j}^{n+1}\left(\Gamma , \epsilon_{\frac{1}{2^{n+1}}}\right) \vdash \Theta}, \ldots, {\epsilon_{1 - \frac{2^{n+1} - 2}{2^{n+1}}} ,  {\circ_j}^{n+1}\left(\Gamma , \epsilon_{\frac{2^{n+1} - 2}{2^{n+1}}}\right) \vdash \Theta}}{{\Gamma , \epsilon_{\frac{1}{2^n}} \vdash \Theta}}}\\ 

\hline 

\end{array}$} 

\caption{\InLJK} \label{InLJK}  

\end{figure} 

By following the same steps as in subsection \ref{subsection Cut Admissibility}, since all the rules added to obtain \InLJK are analytic (cf. \bf{Definition} \ref{analytic rules annexes}), according to \textbf{Theorems} \ref{Completeness theorem annexes} and \ref{Cut Admissibility annexes}, the following two theorems are true. 

\begin{thm}[Completeness theorem] 

The class $BC$ is sound and complete for $\InLJK$. 

\end{thm}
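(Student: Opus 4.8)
The plan is to transpose the argument of the involutive case (Section~\ref{Involutive case}) almost verbatim, the only change being that the underlying theory is now $\T_\class$ instead of $\T_\inv$. The system $\InLJK$ is $\InMGL(\circ_2, \bullet_2, \circ_\alpha)$ augmented by the analytic structural rules of \textit{Figure}~\ref{InLJK}, which differ from those defining $\InCFLew$ only in that the pair \ref{+ 2 <= 2(+)}/\ref{2(+) <= 2 + 2} is replaced by the single contraction-style rule \ref{int 2v >=}. So the first step is to reprove, exactly as in the lemma characterising the algebraic models of $\InCFLew$, that the algebraic models of $\InLJK$ are the commutative involutive residuated complete lattices validating, rule by rule, the $\mathcal{L}_\inv'$-translation of each structural rule (replace each structure variable by a fresh formula variable, $,$ by $\+$, $\circ_2$ by $2 \cdot$, $\bullet_2$ by $j_\ast$, $\circ_\alpha$ by $\alpha$, and each $\epsilon_d$ by $\ul d$). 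As in \textbf{Example}~\ref{Exemples d'équivalence}, each translation recovers its axiom; in particular \ref{int 2v >=} yields $2v \geq v \+ v$. Combined with the involutive axiom read off the negation rules and the Archimedean rule \ref{archimedean} (whose models equivalently satisfy $\forall n\; \frac{\ul 1}{2^n} \geq v \Rightarrow \ul 0 \geq v$), this identifies the algebraic models of $\InLJK$ with the Archimedean models of $\T_\class$, recalling that $\T_\class = \T_\int \cup \{v \leq 1 \- (1 \- v)\}$ and that $\T_\int$ is equivalent to $\T \cup \{2v \geq v \+ v\}$ by \textbf{Theorem}~\ref{Equivalent theories}.

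Because $\InLJK$ carries a single negation and a self-dual $\circ_\alpha$, the second step is to establish the Boolean analogue of \textbf{Lemma}~\ref{only one negation}: $\InLJK$ is interderivable with a two-negation system built on $\InMGL(\circ_2, \bullet_2, \circ_\alpha, \bullet_\alpha)$ by adding commutativity together with the collapse rule making $\circ_\alpha \Gamma \vdash \Delta$ and $\bullet_\alpha \Gamma \vdash \Delta$ interderivable. Setting $\bullet_\alpha = \circ_\alpha$ and $\bs = \ng$, the proof trees of \textbf{Lemma}~\ref{only one negation} carry over unchanged, since the two rules that are new relative to $\InCFLew$ do not touch the negation machinery.

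With this two-negation presentation in hand, the third step invokes \textbf{Theorem}~\ref{Completeness theorem annexes}: the two-negation system is sound and complete for the class of involutive residuated complete lattices validating its analytic rules, which by the first step are exactly the complete Archimedean models of $\T_\class$. Soundness for $BC$ is then immediate, as every $USC(\bcal)$ with $\bcal$ a complete Boolean algebra is a model of $\T_\class$ and each rule corresponds to a valid axiom. For completeness, if $\phi \leq \psi$ holds throughout $BC$, then by \textbf{Corollary}~\ref{complétude des BC-algèbres} the inequality $\phi \leq \psi \+ \frac{\ul 1}{2^n}$ is a consequence of $\T_\class$ for every $n \in \nn$; feeding these finitary consequences into the derivations supplied by \textbf{Theorem}~\ref{Completeness theorem annexes} and closing the resulting infinitary family with the Archimedean rule produces a derivation of $\phi \vdash \psi$ in $\InLJK$.

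The main obstacle is bookkeeping rather than conceptual. One must check that replacing \ref{+ 2 <= 2(+)}/\ref{2(+) <= 2 + 2} by \ref{int 2v >=} shifts the algebraic content precisely from $\T_\inv$ to $\T_\class$ while keeping the rule set analytic, and that the reduction of the second step genuinely matches $\InLJK$-derivations with derivations of the two-negation system in both directions. The delicate point is the interface between the infinitary Archimedean rule and the finitary completeness furnished by \textbf{Theorem}~\ref{Completeness theorem annexes}, which is exactly what \textbf{Corollary}~\ref{complétude des BC-algèbres} is designed to mediate.
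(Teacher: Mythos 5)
Your proposal is correct and takes essentially the same route as the paper, which establishes this theorem precisely by transposing the involutive case of Section \ref{Involutive case}: the rules added to obtain \InLJK are analytic (with \ref{int 2v >=} replacing the doubling rule, shifting the algebraic content from $\T_\inv$ to $\T_\class$), the single-negation presentation reduces to the two-negation one exactly as in \textbf{Lemma} \ref{only one negation}, and the conclusion then follows from \textbf{Theorem} \ref{Completeness theorem annexes}, with \textbf{Corollary} \ref{complétude des BC-algèbres} and the infinitary Archimedean rule mediating between consequence over $BC$ and consequence over all Archimedean models of $\T_\class$. Nothing further is needed.
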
 

\begin{thm}[Cut Admissibility theorem] 

In the system \InLJK, for all formulas $a_1, \ldots, a_n \et b$ and $\{, \, , \circ_2, \bullet_2, \circ_\alpha, \epsilon\}$-term $G$ such that there exists a deduction of $G(a_1, \ldots, a_n) \vdash b$ using the cut rule, there exists a deduction of $G(a_1, \ldots, a_n) \vdash b$ not using the cut rule. 

\end{thm}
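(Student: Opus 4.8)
The plan is to obtain this statement exactly as its involutive counterpart in Section~\ref{Involutive case}, by reducing it to the general cut-admissibility result \textbf{Theorem}~\ref{Cut Admissibility annexes}. That theorem certifies cut admissibility for an involutive modal full Lambek system \InMGL extended by \emph{analytic} structural rules, but in the presentation carrying two negations $\bs$ and $\ng$ and with each modality accompanied by its dual. The system \InLJK, on the other hand, uses a single negation $\ng$ and treats $\circ_\alpha$ as self-dual via the rule identifying $\circ_\alpha \Gamma \vdash \Delta$ with $\Gamma \vdash \circ_\alpha \Delta$ (\emph{Figure}~\ref{inv Introduction Rules for Modalities}). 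So, as in the involutive case, I would first pass to an expanded two-negation system to which \textbf{Theorem}~\ref{Cut Admissibility annexes} applies, establish cut admissibility there, and then transfer cut-free deductions back to \InLJK.

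First I would verify that every structural rule of \emph{Figure}~\ref{InLJK} is analytic in the sense of \textbf{Definition}~\ref{analytic rules annexes}, i.e. satisfies Linearity, Separation and Inclusion. All but one of these rules already occur in \InCFLew and were certified analytic there; the only genuinely Boolean rule is the contraction-type rule $[\ref{int 2v >=}]$, passing from the premise $\Gamma,\Gamma \vdash \Theta$ to the conclusion $\circ_2 \Gamma \vdash \Theta$, which replaces the pair coming from the two inequalities $2(u\+v)\leq 2u\+2v$ and $2u\+2v\leq 2(u\+v)$. Its conclusion has the single structure variable $\Gamma$ (so its variables are trivially distinct), the right-hand side $\Theta$ does not occur in that conclusion, and the premise involves only $\Gamma$; thus all three conditions hold. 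This is precisely the same verification already relied upon for the corresponding rule of \LJK in Section~\ref{section Intuitionistic continuous logic}.

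Next, following the involutive development verbatim, I would form the expanded system on \InMGL($\circ_2,\bullet_2,\circ_\alpha,\bullet_\alpha$) together with the structural rules of \emph{Figure}~\ref{InLJK}. Because those structural rules are phrased using only $,$, $\circ_2$, $\bullet_2$, $\circ_\alpha$ and the constants $\epsilon_d$—never $\bullet_\alpha$ nor a negation—they are literally the same rules in the one- and two-negation presentations. Consequently \textbf{Lemma}~\ref{only one negation}, which establishes the equivalence of the bare modal systems, extends unchanged to the two full systems: setting $\bullet_\alpha=\circ_\alpha$ and $\bs=\ng$ sends every deduction of the expanded system to one of \InLJK, and conversely the proof trees exhibited in the proof of \textbf{Lemma}~\ref{only one negation} simulate the self-dual $\circ_\alpha$-rule and the single negation inside the expanded system. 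Since the expanded system is \InMGL augmented by analytic structural rules, \textbf{Theorem}~\ref{Cut Admissibility annexes} grants it cut admissibility.

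Finally I would conclude by the transfer argument of the involutive case. Given a deduction of $G(a_1,\ldots,a_n)\vdash b$ in \InLJK extended by the cut rule, the reduction lemma lifts it to a deduction of the same end-sequent in the expanded system with cut; cut admissibility there yields a cut-free deduction with identical premisses and conclusion; and the projection $\bullet_\alpha=\circ_\alpha$, $\bs=\ng$ turns this back into a cut-free deduction in \InLJK. I expect the only real work to lie in the reduction lemma: one must write down explicit derivations in both directions, as in \textbf{Lemma}~\ref{only one negation}, and check that the Boolean contraction rule $[\ref{int 2v >=}]$ and the negation rules do not interfere, so that analyticity—and hence the hypotheses of \textbf{Theorem}~\ref{Cut Admissibility annexes}—are genuinely preserved upon moving to the expanded system.
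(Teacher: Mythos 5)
Your proposal is correct and is essentially the paper's own argument: the paper proves this theorem by noting that all rules added to form \InLJK are analytic and then ``following the same steps as in section \ref{Involutive case}'', i.e.\ applying \textbf{Theorem} \ref{Cut Admissibility annexes} to the expanded two-negation system \InMGL($\circ_2,\bullet_2,\circ_\alpha,\bullet_\alpha$) plus the \InLJK structural rules, and transferring deductions back and forth via \textbf{Lemma} \ref{only one negation}. Your additional explicit check that the contraction-type rule [\ref{int 2v >=}] satisfies Linearity, Separation and Inclusion (duplication of $\Gamma$ in the premise being harmless since linearity constrains only the conclusion) is exactly the verification the paper relies on implicitly.
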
 

\section{Annexes} \label{Annexes} 

\addtocounter{subsection}{1} 

The aim of the \bf{Annexes} is to prove the cut-admissibility theorem \ref{Cut Admissibility annexes} for it is useful to prove all other cut-admissibility theorems of this paper. 

Let $\F_m$ (resp. $\InFm$) be the set of formulas of the language $\L = \{\ul 1, \cdot, /, \setminus, \vee, \wedge, \lozenge, \blacksquare\}$ (resp. $\InL$ = $\{\cdot, \vee, \wedge, \ul 1, \lozenge, \blacklozenge, \bs, \neg\}$), where $\cdot$, $\vee$, $\wedge$, $\setminus$ and $/$ are binary function symbols, $\lozenge \et \square$ are unary function symbols, $\ul 1$ is a constant symbol and $\bs \et \neg$ are unary function symbols. We call variables in formulas \emph{propositional} variables. 

\begin{nota} 

We recall that the \emph{sequent space} $\Sqt$ is the algebra of $\{, , \circ\}$-terms over $\F_m$. We now see $\F_m$ as embedded in $\Sqt$. We denote by small greek letters the elements of $\Sqt$ and call them \emph{\sequences} and by latin letters the ones of $\F_m$. By capital greek letters, we mean a context. 

The \emph{involutive sequent space} $\InSqt$ is the quotient of the algebra of $\{, , \circ, \bullet, \bs[], \ng[], \epsilon\}$-terms over $\InFm$, where $,$ is a binary function symbol, $\circ$, $\bullet$, $\bs$, and $\ng$ are unary function symbols and $\epsilon$ is a constant symbol by the following relations: 

\begin{enumerate} 

\item $\ng[\bs[\gamma]] = \bs[\ng[\gamma]] = \gamma$ 

\item $\bs[(\odot \gamma)] = \odot \left(\bs[\gamma]\right)$ 

\item $\ng[(\odot \gamma)] = \odot \left(\ng[\gamma]\right)$ 

\item $\bs[(\gamma, \delta)] = \bs[\delta], \bs[\gamma]$ et $\ng[(\gamma, \delta)] = \ng[\delta], \ng[\gamma]$ 

\end{enumerate} 

where $\odot$ denotes either $\circ$ or $\bullet$. We finally define $\square a = \bs \lozenge \neg a$ and $\blacksquare = \bs \blacklozenge \neg a$, for all $a \in \InFm$ and the notations $\dotlozenge$ either $\lozenge$ or $\blacklozenge$ and $\dotblacksquare$ denote $\blacksquare$ when $\odot$ denotes $\circ$ and $\square$ when $\odot$ denotes $\bullet$. 

\end{nota} 

We give the rules of \MGL and \InMGL, following respectively \cite{galatosCutEliminationStrong2010} and \cite{galatosResiduatedFramesApplications2012}. They consist respectively of the rules of \GL and \InGL, which are recalled in \it{Figures} \ref{GL Rules} and \ref{InGL Rules}, to which are added the rules of \it{Figures} \ref{MGL Rules} and \it{Figures} \ref{InMGL Rules}. 

\begin{figure}[!ht] 

\begin{center} 

\adjusttopage{$\begin{array}{|cccc|} 

\hline 

\seq{{{\Gamma[a,b] \vdash c}}}{{{\Gamma[a \cdot b] \vdash c}}}\target{[L.]} &\seq{{\gamma \vdash a}, {\delta \vdash b}}{{\gamma, \delta \vdash a \cdot b}}\target{[R.]} &\seq{{\Gamma[b] \vdash c}, {\gamma \vdash a}}{{\Gamma[b / a, \gamma] \vdash c}}\target{[L/]} &\seq{{{\gamma, a \vdash b}}}{\gamma \vdash b / a}\target{[R/]} \\ 

\seq{{\Gamma[\epsilon] \vdash a}}{{\Gamma[1] \vdash a}}\target{[L1]} &\seq{{}}{\epsilon \vdash \ul 1}\target{[R1]} &\seq{{\gamma \vdash a}, {\Gamma[b] \vdash c}}{{\Gamma[\gamma, a \setminus b] \vdash c}}\target{[L\setminus]} &\seq{{{a, \gamma \vdash b}}}{\gamma \vdash a \setminus b}\target{[R\setminus]} \\ 

\seq{{\Gamma[a] \vdash c}, {\Gamma[b] \vdash c}}{\Gamma[a \vee b] \vdash c}\target{[L\vee]} &\seq{{\gamma \vdash a}}{\gamma \vdash a \vee b}\target{[R\vee_1]} &\seq{{\gamma \vdash b}}{\gamma \vdash a \vee b}\target{[R\vee_2]} & \seq{{}}{a \vdash a} \target{[Id]}\\ 

\seq{{\Gamma[a] \vdash c}}{\Gamma[a \wedge b] \vdash c}\target{[L\wedge]} &\seq{{\Gamma[b] \vdash c}}{\Gamma[a \wedge b] \vdash c}\target{[R\wedge_1]} &\seq{\gamma \vdash a, \gamma \vdash b}{\gamma \vdash a \wedge b}\target{[R\wedge_2]} & \\ 

\hline 

\end{array}$} 

\end{center} 

\caption{\GL rules} \label{GL Rules} 

\end{figure} 

\begin{figure}[!ht]

\begin{center} 

$\begin{array}{|cccc|} 

\hline 

\seq{\Gamma[\circ a] \vdash b}{{\Gamma[\lozenge a] \vdash b}}\target{[L\lozenge]} &\seq{\gamma \vdash a}{\circ \gamma \vdash \lozenge a}\target{[R\lozenge]} &\seq{{\Gamma[a] \vdash b}}{{\Gamma[\circ \blacksquare a] \vdash b}}\target{[L\blacksquare]} &\seq{{\circ \gamma \vdash a}}{\gamma \vdash \blacksquare a}\target{[R\blacksquare]} \\ 

\hline 

\end{array}$ 

\end{center} 

\caption{\MGL rules} \label{MGL Rules} 

\end{figure} 

\begin{figure}[!ht] 

\begin{center} 

\adjusttopage{$\begin{array}{|cccc|} 

\hline 

\seq{{{a, b \vdash \delta}}}{{{a \cdot b \vdash \delta}}} \target{[InvL.]}[[L.]]&\seq{{\gamma \vdash a}, {\delta \vdash b}}{{\gamma, \delta \vdash a \cdot b}} \target{[InvR.]}[[R.]]&\seq{{}}{\gamma \vdash \ul 1} \target{[InvR1]}[[R1]]&\seq{{\epsilon \vdash \delta}}{{1 \vdash \delta}} \target{[InvL1]}[[L1]]\\ 

\seq{{a \vdash \delta}, {b \vdash \delta}}{a \vee b \vdash \delta} \target{[InvLv]}[[L\vee]]&\seq{{\gamma \vdash a}}{\gamma \vdash a \vee b} \target{[InvRv1]}[[R\vee_1]]&\seq{{\gamma \vdash b}}{\gamma \vdash a \vee b} \target{[InvRv2]}[[R\vee_2]]&\seq[=]{{\gamma, \delta \vdash \beta}}{{\delta \vdash \bs[\gamma], \beta}} \target{[Invtild]}[[\bs]]\\ 

\seq{{a \vdash \delta}}{a \wedge b \vdash \delta} \target{[InvLw]}[[L\wedge]]&\seq{{b \vdash \delta}}{a \wedge b \vdash \delta} \target{[InvRw1]}[[R\wedge_1]]&\seq{\gamma \vdash a, \gamma \vdash b}{\gamma \vdash a \wedge b} \target{[InvRw2]}[[R\wedge_2]]&\seq[=]{{\gamma, \delta \vdash \beta}}{{\gamma \vdash \beta, \ng[\delta]}} \target{[Invneg]}[[\ng]]\\ 

\seq{{\bs[a] \vdash \delta}}{\bs a \vdash \delta} \target{[Ltild]}[[L\bs]]&\seq{{\gamma \vdash \bs[a]}}{\gamma \vdash \bs a} \target{[Rtild]}[[R\bs]]&\seq{{\ng[a] \vdash \delta}}{\ng a \vdash \delta} \target{[Lneg]}[[L\ng]]&\seq{{\gamma \vdash \ng[a]}}{\gamma \vdash \ng a} \target{[Rneg]}[[R\ng]]\\ 

\multicolumn{4}{|c|}{\seq{{}}{a \vdash a} \target{[InvId]}[[Id]]} \\ 

\hline 

\end{array}$} 

\end{center} 

\caption{\InGL rules} \label{InGL Rules} 

\end{figure} 

\begin{figure}[!ht] 

\begin{center} 

$\begin{array}{|ccc|} 

\hline 

\seq{\odot a \vdash \delta}{{\dotlozenge a \vdash \delta}} \target{[L\dotlozenge]}&\seq{\gamma \vdash a}{\odot \gamma \vdash \dotlozenge a} \target{[R\dotlozenge]}&\seq[=]{{\circ \gamma \vdash \delta}}{\gamma \vdash \bullet \delta} \target{[\circ/\bullet]}\\ 

\hline 

\end{array}$ 

\end{center} 

\caption{\InMGL rules} \label{InMGL Rules}  

\end{figure} 

{\remark the following rules can easily be deduced from the ones of \hyperlink{InMGL rules}{\InMGL}: 

\begin{center} $\begin{array}{ccccc} 
\seq[=]{\gamma \vdash \ng[\delta]}{\delta \vdash \bs[\gamma]}\target{[\bs/\ng]} &\seq{{a \vdash \delta}}{{\square a \vdash \circ \gamma}} &\seq{{\gamma \vdash \circ a}}{\gamma \vdash \square a} &\seq{{a \vdash \delta}}{{\blacksquare a \vdash \bullet \gamma}} &\seq{{\gamma \vdash \bullet a}}{\gamma \vdash \blacksquare a} 
\end{array}$ \end{center} 

To these rules can be added the cut rules for $\MGL$ and $\InMGL$: $$\seq{\gamma \vdash a, \Gamma[a] \vdash b}{\Gamma[\gamma] \vdash b} \et \seq{\gamma \vdash \delta, \delta \vdash \beta}{\gamma \vdash \beta} \hypertarget{Cut}{\texte{(Cut)}}$$}

We now discuss in parallel the involutive and non involutive cases. 

\begin{defi} \label{analytic rules annexes} \it{Non involutive structural and analytic rules:} 

We denote by $\Upsilon$ some terms on $\Sqt$ for the language $\L$ and $A$ terms for $\L$ on $\F_m$. 

A \emph{structural rule} is a rule $r = \seq{\Gamma[\Upsilon_1] \vdash A_1 , \ldots , \Gamma[\Upsilon_n] \vdash A_n}{\Gamma[\Upsilon_0] \vdash A_0}$ such that no symbol other than "$,$", "$\circ$", or "$\epsilon$" appears in any $\Upsilon_i$ nor $A_i$. 

A rule be said \emph{analytic} when it is a structural rule of the form \linebreak $\seq{\Gamma[\Upsilon_1] \vdash A , \ldots , \Gamma[\Upsilon_n] \vdash A}{\Gamma[\Upsilon_0] \vdash A}$ satisfying: 

\begin{enumerate} 

\item[] \hypertarget{Linearity}{Linearity :} $A$ is a formula variable and the variables of $\Upsilon_0$ are distinct. 

\item[] \hypertarget{Separation}{Separation :} $A$ doesn't appear in $\Upsilon_0$. 

\item[] \hypertarget{Inclusion}{Inclusion :} The variables of the $\Upsilon_i$'s are among the ones of $\Upsilon_0$. 

\end{enumerate} 

\it{Involutive structural and analytic rules:} 

We now denote by $\Upsilon$ and $\Psi$ some terms on $\InSqt$ for the language $\InL$. 

A \emph{structural rule} is a rule $r = \seq{\Upsilon_1 \vdash \Psi_1 , \ldots , \Upsilon_n \vdash \Psi_n}{\Upsilon_0 \vdash \Psi_0}$ such that no symbol other than "$,$", "$\circ$", "$\bullet$", $\epsilon$, $\bs[\_]$ and $\ng[\_]$ appears in any $\Upsilon_i$ nor $\Psi_i$. 

A rule be said \emph{analytic} when it is a structural rule of the form $\seq{\Upsilon_1 \vdash \Psi , \ldots , \Upsilon_n \vdash \Psi}{\Upsilon_0 \vdash \Psi}$ satisfying: 

\begin{enumerate} 

\item[] \hypertarget{Linearity}{Linearity :} $\Psi$ is a structure variable and the variables of $\Upsilon_0$ are distinct. 

\item[] \hypertarget{Separation}{Separation :} $\Psi$ don't appear in $\Upsilon_0$. 

\item[] \hypertarget{Inclusion}{Inclusion :} The variables of the $\Upsilon_i$'s are among the ones of $\Upsilon_0$. 

\item[] \hypertarget{Positivity}{Positivity :} No negation symbol appears in the rule. 

\end{enumerate} 

\end{defi} 

{\remark Following \cite{galatosResiduatedFramesApplications2012}, we could replace the positivity condition as follows: any $\Upsilon_i$ is ${,\, , \circ, \bullet, \epsilon}$-terms on variables negated an even numbers of times. Since we won't need this extension, we leave the minor changes in the proofs to the reader.} 

Let $R$ be a set of analytic structural rules of \MGL (\InMGL). 

\begin{defi} 

From here on, we can endow $\Sqt$ $(\InSqt)$ with a binary relation \linebreak $\vdash \subset \Sqt \times \F_m$ $(\vdash \subset \InSqt \times  \InSqt)$ which is the smallest relation satisfying the rules of \it{Figures} \ref{GL Rules} and~\ref{MGL Rules} (\it{Figures} \ref{InGL Rules} and ~\ref{InMGL Rules}) and the rules of $R$ and call the resulting structure $\Sqt_R$ $(\InSqt_R$). We denote by $\Sqt_{\cut,R}$ $(\InSqt_{\cut,R})$ the sets $\Sqt$ $(\InSqt)$ endowd with the smallest relation $\vdash_\cut$ satisfying the previous rules and \hyperlink{Cut}{(Cut)}. 

\end{defi} 

\begin{defi} 

We can define a preorder $\leq$ on $\Sqt_R$ by $\gamma \leq \delta$ if and only if $\forall a \in \F_m\; \forall \text{ context } \Gamma \linebreak \Gamma[\delta] \vdash a \Rightarrow \Gamma[\gamma] \vdash a$ and a preorder $\leq$ on $\InSqt_R$ by $\gamma \leq \delta$ if and only if $\forall \beta \in \InSqt \linebreak \delta \vdash \beta \Rightarrow \gamma \vdash \beta$. 

For all $X \subset \Sqt$, let $$j(X) = \{\gamma \in \Sqt \;|\; \forall a \in \F_m \et \text{context } \Gamma, \; (\forall \delta \in X \; \Gamma[\delta] \vdash a) \Rightarrow \Gamma[\gamma] \vdash a\}.$$ 

For all $X \subset \InSqt$, let $$j(X) = \{\gamma \in \InSqt \;|\; \forall \beta \in \InSqt, \; (\forall \delta \in X \; \delta \vdash \beta) \Rightarrow \gamma \vdash \beta\}.$$ 

\end{defi} 

\begin{lem}[{\cite[p. 279]{ciabattoniAlgebraicProofTheory2012}, \cite[p.1221]{galatosResiduatedFramesApplications2012}}] 

$j$ is a closure operator on the set of subsets of $\Sqt$ $(\InSqt)$. 

\end{lem}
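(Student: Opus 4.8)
The plan is to verify directly the three defining properties of a closure operator---extensivity, monotonicity and idempotency---after observing that the two definitions of $j$ (on subsets of $\Sqt$ and of $\InSqt$) are formally identical once one adopts a uniform vocabulary of \emph{tests}. In the non-involutive case a test is a pair $(\Gamma, a)$ consisting of a context $\Gamma$ and a formula $a \in \F_m$, and a \sequence $\gamma$ \emph{passes} it when $\Gamma[\gamma] \vdash a$; in the involutive case a test is a single $\beta \in \InSqt$, passed by $\gamma$ when $\gamma \vdash \beta$. Under this reading $j(X)$ is exactly the set of \sequences that pass every test passed by all members of $X$, that is, the Galois closure of the polarity given by the passing relation between \sequences and tests. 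Since any such double-polar operator is automatically a closure operator, the lemma follows; I would then indicate how each clause unwinds, using the non-involutive notation, the involutive one being verbatim the same.

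For \textbf{extensivity} ($X \subset j(X)$), I would take $\gamma \in X$ together with a test passed by all of $X$ and instantiate the quantifier on $\delta$ at $\gamma$ itself, which shows $\gamma$ passes that test, so $\gamma \in j(X)$. For \textbf{monotonicity}, if $X \subset Y$ and $\gamma \in j(X)$, then any test passed by all of $Y$ is a fortiori passed by all of $X$, hence by $\gamma$; thus $\gamma \in j(Y)$. Both are immediate quantifier manipulations.

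The step deserving care is \textbf{idempotency}, $j(j(X)) = j(X)$. The inclusion $j(X) \subset j(j(X))$ is just extensivity applied to the set $j(X)$. For the converse I would take $\gamma \in j(j(X))$ and fix a test $(\Gamma, a)$ passed by every $\delta \in X$. The key observation is that this same test is then passed by every $\epsilon \in j(X)$: this is precisely the content of the membership $\epsilon \in j(X)$ once it is applied to $(\Gamma, a)$, whose defining hypothesis---that $\Gamma[\delta] \vdash a$ for all $\delta \in X$---is exactly the assumption at hand. Hence $(\Gamma, a)$ is passed by all of $j(X)$, and since $\gamma \in j(j(X))$ it is passed by $\gamma$, i.e. $\Gamma[\gamma] \vdash a$. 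As the test was arbitrary, $\gamma \in j(X)$.

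I expect no genuine obstacle: the whole argument is a routine unwinding of nested quantifiers, and the only delicate point---the self-referential appeal to the definition of $j(X)$ inside the idempotency argument---is the standard mechanism by which a polar closure becomes idempotent. The involutive case is recovered by replacing each occurrence of \emph{$\Gamma[\cdot] \vdash a$ for a context $\Gamma$ and $a \in \F_m$} by \emph{$\cdot \vdash \beta$ for $\beta \in \InSqt$}, and the three verifications transcribe without change.
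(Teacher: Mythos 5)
Your proof is correct and is essentially the argument the paper relies on: the lemma is stated with citations rather than an in-text proof, and the cited sources (Ciabattoni--Galatos--Terui; Galatos--Jipsen) establish it exactly as you do, by recognising $j$ as the double-polar of the relation between elements of $\Sqt$ (resp.\ $\InSqt$) and tests $(\Gamma, a)$ (resp.\ $\beta$), for which extensivity, monotonicity and idempotency follow by the quantifier manipulations you carry out. Note only that the closure-operator claim is all that is asserted here; the compatibility of $j$ with $,$, $\circ$ and $\epsilon$ is a separate matter, treated in Lemma \ref{remarque}.
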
 

\begin{lem} \label{ce lemme} 

Every $j$-closed subset of $\Sqt_R$ is downward closed. 

\end{lem}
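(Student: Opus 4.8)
The plan is to unwind the two relevant definitions and observe that downward closure is essentially immediate from the fact that the membership condition for $j(X)$ and the preorder $\leq$ are governed by the same quantifier pattern over contexts and formulas. Recall that $X$ being $j$-closed means $j(X) = X$; since $j$ is a closure operator we always have $X \subseteq j(X)$, so the content of $j$-closedness is the reverse inclusion $j(X) \subseteq X$. To prove that $X$ is downward closed I would fix $\delta \in X$ and $\gamma \leq \delta$ and aim to show $\gamma \in X$; because $X = j(X)$, it suffices to establish $\gamma \in j(X)$.

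Concretely, to verify $\gamma \in j(X)$ I would take an arbitrary formula $a \in \F_m$ and an arbitrary context $\Gamma$, assume the hypothesis $\forall \delta' \in X\; \Gamma[\delta'] \vdash a$, and derive $\Gamma[\gamma] \vdash a$. Specializing this hypothesis to $\delta' = \delta$, which is legitimate since $\delta \in X$, gives $\Gamma[\delta] \vdash a$. The definition of $\gamma \leq \delta$ states precisely that $\Gamma[\delta] \vdash a$ implies $\Gamma[\gamma] \vdash a$ for every formula $a$ and every context $\Gamma$, so $\Gamma[\gamma] \vdash a$ follows at once. As $a$ and $\Gamma$ were arbitrary, this shows $\gamma \in j(X) = X$, which is exactly downward closure.

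There is no genuine obstacle in this argument: the result is a formal consequence of the observation that $j(X)$ is, by construction, closed under passing from an element to any $\leq$-smaller element, because $\leq$ is defined by the very same implication $\Gamma[\delta] \vdash a \Rightarrow \Gamma[\gamma] \vdash a$ that appears inside the definition of $j$. The involutive case for $\InSqt_R$ is handled identically, replacing the pair (formula $a$, context $\Gamma[\cdot]$) by a structure $\beta \in \InSqt$ and the derivability condition $\Gamma[\delta] \vdash a$ by $\delta \vdash \beta$, with $\gamma \leq \delta$ meaning $\delta \vdash \beta \Rightarrow \gamma \vdash \beta$ for all $\beta$; the same specialization-and-implication step then applies verbatim.
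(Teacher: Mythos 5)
Your proof is correct: the paper states this lemma without any proof, treating it as immediate, and your argument is exactly the routine unwinding it leaves implicit --- since $\gamma \leq \delta$ is defined by the very implication $\Gamma[\delta] \vdash a \Rightarrow \Gamma[\gamma] \vdash a$ occurring in the definition of $j$, specializing the membership hypothesis for $j(X)$ to $\delta \in X$ and applying $\gamma \leq \delta$ gives $\gamma \in j(X) = X$. Your remark that the same specialization handles the involutive case, with pairs $(a, \Gamma)$ replaced by structures $\beta \in \InSqt$, is likewise accurate.
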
 

We then define $\Sqt_{R,+}$ (resp. $\InSqt_{R,+}$) as the set of $j$-closed subsets of $\Sqt$ (resp. $\InSqt$) and $\fct[g][\F_m, \Sqt_{R,+}]{a, \{\gamma \in \Sqt \; | \; \gamma \vdash a\}}$ $\left(\texte{resp.} \fct[g][\F_m, \InSqt_{R,+}]{\delta, \{\gamma \in \Sqt \; | \; \gamma \vdash \delta\}}\right)$. We endow the set of subsets of $\Sqt$ with the following structure: \begin{enumerate} 

\item For all $X \et Y \subset \Sqt$, $X,Y = \{\gamma,\delta\; \gamma \in X \et \delta \in Y\}$. 

\item For all $X \subset \Sqt$, $\circ X = \{\circ \gamma, \; \gamma \in X\}$. 

\item $\epsilon = \{\epsilon\}$. 

\item For all $X \et Y \subset \Sqt$, $X \wedge Y = X \cap Y$, $X \vee Y = j(X \cup Y)$. 

\item For all $X \et Y \subset \Sqt$, $X \cdot Y = j(X,Y)$. 

\item For all $X \subset \Sqt$, $\lozenge X = j(\circ X)$. 

\item $\ul 1 = j(\epsilon)$.

\item For all $X \et Y \subset \Sqt$, $X \setminus Y = \{\gamma \in \Sqt \; | \; \forall \delta \in X \; \delta \cdot \gamma \in Y\}$, and \linebreak $X / Y = \{\gamma \in \Sqt \; | \; \forall \delta \in X \; \gamma \cdot \delta \in Y\}$ 

\item For all $X \subset \Sqt$, $\blacksquare X = \{\gamma \in \Sqt \; | \; \circ \gamma \in X\}$. 

\end{enumerate} 

This induces an $\L$-structure on $\Sqt_{R,+}$. 

\begin{lem} \label{remarque} 

$j$ is a $\{, , \circ, \epsilon\}$-morphism from the subsets of $\Sqt$ to $\Sqt_{R,+}$. 

\end{lem}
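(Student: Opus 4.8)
The plan is to verify directly that $j$ preserves each of the three operations "$,$", "$\circ$", "$\epsilon$", relying only on the fact (established in the preceding lemma) that $j$ is a closure operator — extensive ($X\subseteq j(X)$), monotone, and idempotent — together with the observation that wrapping the hole of a context by "$,$" or "$\circ$" again produces a context. Concretely, being a $\{, , \circ, \epsilon\}$-morphism amounts to the three identities $j(X,Y)=j(j(X),j(Y))$, $j(\circ X)=j(\circ\,j(X))$, and $j(\epsilon)=j(\epsilon)$.

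I would start with the trivial case: since $\ul 1$ is defined as $j(\epsilon)$, there is nothing to prove for the constant. Next, for $\circ$, the goal is $j(\circ X)=\lozenge\,j(X)=j(\circ\,j(X))$. One inclusion is formal: $X\subseteq j(X)$ gives $\circ X\subseteq\circ\,j(X)$, and $j$ is monotone. For the reverse, by monotonicity and idempotency it is enough to show $\circ\,j(X)\subseteq j(\circ X)$; I would fix $\gamma\in j(X)$ and check $\circ\gamma\in j(\circ X)$ by taking any $a\in\F_m$ and context $\Gamma$ with $\Gamma[\circ\xi]\vdash a$ for all $\xi\in X$, noting that $\Gamma[\circ\,{-}]$ is itself a context, and applying the defining property of $j(X)$ to conclude $\Gamma[\circ\gamma]\vdash a$.

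The comma is the only step requiring a little care. The target identity is $j(X,Y)=j(X)\cdot j(Y)=j(j(X),j(Y))$; extensivity again gives the inclusion $j(X,Y)\subseteq j(j(X),j(Y))$, and for the converse it suffices to prove $(j(X),j(Y))\subseteq j(X,Y)$. Here, because "$,$" is binary, I cannot absorb both slots at once, so the argument nests two applications of the definition of $j$: given $\gamma\in j(X)$, $\delta\in j(Y)$, and $a,\Gamma$ with $\Gamma[\xi,\eta]\vdash a$ for all $\xi\in X,\eta\in Y$, I would first use that $\Gamma[\,{-}\,,\eta]$ is a context to deduce $\Gamma[\gamma,\eta]\vdash a$ for each fixed $\eta\in Y$ (applying $\gamma\in j(X)$), and then use that $\Gamma[\gamma,{-}]$ is a context to deduce $\Gamma[\gamma,\delta]\vdash a$ (applying $\delta\in j(Y)$). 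Closing under $j$ and using idempotency finishes the equality.

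The main — indeed only — obstacle is this two-stage context absorption for the binary comma; everything else is a formal manipulation of the closure operator. The involutive analogue for $\InSqt$ is handled identically, treating "$,$", "$\circ$", "$\bullet$", "$\epsilon$" in the same manner and replacing the quantification over contexts $\Gamma[{-}]$ by the quantification over right-hand sides $\beta$ in the one-sided entailment $\gamma\vdash\beta$.
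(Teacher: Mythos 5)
Your proposal is correct and takes essentially the same route as the paper: the paper likewise reduces the statement to the closure-operator properties of $j$ together with the two inclusions $\circ\, j(X) \subset j(\circ X)$ and $j(X),j(Y) \subset j(X,Y)$, the latter proved by exactly your two-stage context absorption (first with the context $\Gamma[\,\_\,,\beta]$, then with $\Gamma[\gamma,\_\,]$), and the unit case is immediate since $\ul 1 = j(\epsilon)$ by definition. The only cosmetic difference is that the paper re-verifies idempotency of $j$ inside its proof rather than citing the preceding lemma, which changes nothing substantive.
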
 

\begin{proof} 

Claim: $j$ is non-decreasing and, for all $X \et Y \subset \Sqt$, $X \subset j(X)$, $j^2(X) = j(X)$, \linebreak $\circ j(X) \subset j(\circ X)$ and $j(X),j(Y) \subset j(X,Y)$. 

The facts that $j$ is non-decreasing and, for all $X \subset \Sqt$, $X \subset j(X)$ are clear. 

Let $X \et Y \subset \Sqt$. Let $\gamma \in j^2(X)$.  For all context $\Gamma$ and $a \in \F_m$ such that for all $\delta \in X$ $\Gamma[\delta] \vdash a$, and for all $\beta \in j(X)$, by definition of $j$, $\Gamma[\beta] \vdash a$, so, by definition of $j$ again, $\Gamma[\gamma] \vdash a$. Thus $\gamma \in j(X)$. 

Let $\gamma \in j(X)$. For all context $\Gamma$ and $a \in \F_m$ such that for all $\delta \in \circ X$ $\Gamma[\delta] \vdash a$, and for all $\beta \in X$, $\Gamma[\circ \beta] \vdash a$, so, by definition of $j$, taking $\Gamma[\circ \_]$ as context, $\Gamma[\circ \gamma] \vdash a$. Thus $\circ \gamma \in j(\circ X)$. 

Let $\gamma \in j(X) \et \delta \in j(Y$). For all context $\Gamma$ and $a \in \F_m$ such that for all $\eta \in X,Y$ $\Gamma[\eta] \vdash a$, and for all $\alpha \in X \et \beta \in Y$, $\Gamma[\alpha,\beta] \vdash a$, so, by definition of $j$, taking $\Gamma[\_,\beta]$ as a context $\Gamma[\gamma,\beta] \vdash a$ and then, taking $\Gamma[\gamma,\_]$ as a context, $\Gamma[\gamma,\delta] \vdash a$. Thus $\gamma,\delta \; \in j(X,Y)$, which concludes the proof of the claim. 

Thanks to the claim: 

$j(X,Y) \subset j(j(X),j(Y)) = j(X) \cdot j(Y)$ and $j(X) \cdot j(Y) = j(j(X),j(Y)) \subset j^2(X,Y) = j(X,Y)$, so \linebreak $j(X,Y) = j(X) \cdot j(Y)$. $j(\circ X) \subset j(\circ j(X)) = \lozenge j(X)$ and $\lozenge j(X) = j(\circ j(X)) \subset j^2(\circ X) = j(\circ X)$, so $j(\circ X) = \lozenge j(X)$. Finally, $\ul 1 = j(\epsilon)$. 
\end{proof} 

We define the operators on $\InSqt_{R,+}$ $\vee$, $\wedge$, $\ul 1$, $\cdot$ and $\lozenge$ and $\blacklozenge$ in the same way. We \linebreak also define, for all $X \in \InSqt_{R,+}$, $\bs X = \{\gamma \in \InSqt \;|\; \forall \delta \in X\; \gamma \vdash \bs[\delta]\}$ and \linebreak $\ng X = \{\gamma \in \InSqt \;|\; \forall \delta \in X\; \gamma \vdash \ng[\delta]\}$. 

\begin{lem} 

The set of $j$-closed subsets of $\Sqt$ $(\InSqt)$ is closed for these operations. Moreover, $\blacksquare$ is right adjoint to $\lozenge$ on $\Sqt_{R,+}$, for all $X \in \InSqt_{R,+}$, $\bs \ng X = \ng \bs X = X$ and $\bs \lozenge \ng$ is right adjoint to $\blacklozenge$ and $\bs \blacklozenge \ng$ is right adjoint to $\lozenge$. 

\end{lem}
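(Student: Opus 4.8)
The plan is to verify closure and the adjunction/involution properties directly from the definitions of the operators, treating the non-involutive and involutive cases in parallel as the paper does. For the non-involutive statement, I first would check that each operator lands in the $j$-closed subsets. The operators $\wedge$, $\vee$, $\cdot$, $\lozenge$, and $\ul 1$ are $j$-closed by construction, since each is either an intersection of $j$-closed sets (for $\wedge$) or has the form $j(\cdot)$ (for $\vee$, $\cdot$, $\lozenge$, $\ul 1$), and the image of $j$ is exactly the set of $j$-closed subsets because $j$ is a closure operator. The residuals $X \setminus Y$, $X / Y$ and the operator $\blacksquare X = \{\gamma \mid \circ \gamma \in X\}$ require a small argument: I would show that when $Y$ (resp. $X$) is $j$-closed, these sets are $j$-closed too, using that $\gamma \in j(Z)$ forces $\gamma$ to satisfy every entailment that all members of $Z$ satisfy in every context, and then specializing the context appropriately (to $\Gamma[\_ \cdot \delta]$ or $\Gamma[\circ \_]$).

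Next I would establish that $\blacksquare$ is right adjoint to $\lozenge$ on $\Sqt_{R,+}$, i.e. $\lozenge X \leq Y \Leftrightarrow X \leq \blacksquare Y$ for $j$-closed $X, Y$, where $\leq$ is inclusion. Using \textbf{Lemma} \ref{remarque}, which gives $\lozenge X = j(\circ X)$ and that $j$ is a $\{,,\circ,\epsilon\}$-morphism, the inclusion $\lozenge X \subseteq Y$ is equivalent to $\circ X \subseteq Y$ (since $Y$ is $j$-closed and $\lozenge X$ is the $j$-closure of $\circ X$), which unwinds directly to $X \subseteq \{\gamma \mid \circ\gamma \in Y\} = \blacksquare Y$. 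This is the cleanest of the claims and essentially formal once the morphism lemma is in hand.

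For the involutive case, I would prove the analogous closure statements for $\bs X$ and $\ng X$, again by showing that membership is cut out by universally quantified entailments and hence $j$-closed. The core identities $\bs \ng X = \ng \bs X = X$ would follow from the defining relations of $\InSqt$, in particular $\ng[\bs[\gamma]] = \bs[\ng[\gamma]] = \gamma$, combined with the derived rule $\seq[=]{\gamma \vdash \ng[\delta]}{\delta \vdash \bs[\gamma]}$ noted after \it{Figure} \ref{InMGL Rules}; this rule converts a membership condition for $\bs$ into one for $\ng$ and vice versa, and applying it twice returns $X$ (using that $X$ is $j$-closed to recover equality rather than mere inclusion). Finally, the two adjunctions ``$\bs \lozenge \ng$ is right adjoint to $\blacklozenge$'' and ``$\bs \blacklozenge \ng$ is right adjoint to $\lozenge$'' I would obtain by composing the basic $\lozenge \dashv \blacksquare$-type adjunction with the order-reversing involution $\ng$ (equivalently $\bs$): conjugating a right adjoint by a dualizing involution on both sides produces a right adjoint for the dual modality, which matches the definitions $\square a = \bs\lozenge\neg a$ and $\blacksquare a = \bs\blacklozenge\neg a$.

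The main obstacle I anticipate is bookkeeping rather than conceptual difficulty: in the involutive setting the relation $\vdash$ lives on $\InSqt \times \InSqt$ with structures allowed on both sides of the turnstile, so the ``specialize the context'' trick used in the non-involutive proof must be replaced by careful use of the two-sided rules $[\bs]$, $[\ng]$ and their derived forms to move structure across $\vdash$. Keeping track of which direction each of $\bs, \ng$ reverses the order, and ensuring that the involution identities yield equalities (not just inclusions) of $j$-closed sets, is where the proof needs the most care; everything else reduces to the closure-operator formalism already set up in \textbf{Lemma} \ref{remarque}.
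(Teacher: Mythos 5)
Your proposal is correct, but it takes a more self-contained route than the paper, which delegates most of the lemma to the literature: the paper invokes \cite[Lemma 5.4]{ciabattoniAlgebraicProofTheory2012} and \cite[Corollary 4.3]{galatosResiduatedFramesApplications2012} for the closure of the $j$-closed sets under the lattice, monoid, residual and negation operations (and, in effect, for the involution identity), and then proves by hand only two things: that $\blacksquare X$ is $j$-closed, by exactly the context-specialization argument you sketch (the adjunction $\lozenge \dashv \blacksquare$ itself is treated as immediate, where you spell it out via \textbf{Lemma} \ref{remarque}), and the equivalence $X \subset \bs\lozenge\ng Y \Leftrightarrow \blacklozenge X \subset Y$, established by an explicit chain of equivalences built from the Galois property $W \subset \bs Z \Leftrightarrow Z \subset \ng W$ of \cite[Lemma 4.2(i)]{galatosResiduatedFramesApplications2012} together with the rules $[\bs/\ng]$ and $[\circ/\bullet]$ and the relation $\bs[(\odot \gamma)] = \odot(\bs[\gamma])$. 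Your direct derivation of $\bs\ng X = \ng\bs X = X$ does go through as you indicate: if $\beta$ is a common consequence of $X$, then each $\delta \in X$ satisfies $\delta \vdash \bs[{\ng[\beta]}]$, whence $\ng[\beta] \in \ng X$ by $[\bs/\ng]$, and any $\gamma \in \bs\ng X$ then satisfies $\gamma \vdash \beta$; so $j$-closedness of $X$ yields the nontrivial inclusion $\bs\ng X \subset X$, precisely the role you assign it. For the final two adjunctions your conjugation argument is sound but conceals the same sequent-level moves the paper displays: to identify the conjugate $\bs\lozenge\ng Y$ with the elementwise right adjoint $\{\gamma \;|\; \bullet\gamma \in Y\}$ (and $\bs\blacklozenge\ng Y$ with $\{\gamma \;|\; \circ\gamma \in Y\}$) you need, first, that $\bs$ and $\ng$ restrict to mutually inverse antitone bijections of $\InSqt_{R,+}$ — your involution identity plus the Galois property — and, second, that $\bs j(W) = \bs W$, so that $\bs\lozenge\ng Y = \bs(\circ \ng Y)$ can be computed on generators using $[\circ/\bullet]$ and $\bs[(\odot \gamma)] = \odot(\bs[\gamma])$; after that the adjunction follows exactly as for $\blacksquare$. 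So the bookkeeping you flag does not disappear under conjugation, it migrates into identifying the conjugated modality; what your decomposition buys is a citation-free proof and an explicit statement that the involutive adjunctions are the non-involutive one transported along the involution, which the paper's citation-plus-computation presentation leaves implicit.
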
 

\begin{proof} According to \cite[Lemma 5.4]{ciabattoniAlgebraicProofTheory2012} and \cite[Corollary 4.3]{galatosResiduatedFramesApplications2012}, we only have to prove that : 

\begin{enumerate} 

\item $\blacksquare$ is well defined on $\Sqt_{R,+}$ and right adjoint to $\lozenge$ on it 

\item $\bs \lozenge \ng$ is right adjoint to $\blacklozenge$ and $\bs \blacklozenge \ng$ is right adjoint to $\lozenge$. 

\end{enumerate} 

\begin{enumerate} 

\item Let $X \subset \Sqt$ be a $j$-closed set and let us show that $j(\blacksquare X) \subset \blacksquare X$. Let $\gamma \in j(\blacksquare X)$. 

By definition of $j$, for all $a \in \F_m \et \text{context } \Delta$, if for all $\delta \in \Sqt$ such that $\circ \delta \in X$ $\Delta[\delta] \vdash b$, then $\Delta[\gamma] \vdash a$. Thus, for all $a \in \F_m \et \texte{context} \Gamma$, for all $\delta \in \Sqt$ such that $\circ \delta \in X$, since $\Gamma[\circ \delta] \vdash a$, taking $\Delta = \Gamma[\circ \_]$, $\Gamma[\circ \gamma] \vdash a$. Hence $\circ \gamma \in j(X) = X$, ie $\gamma \in \blacksquare X$. 

\item Since, by \cite[Lemma 4.2 (i)]{galatosResiduatedFramesApplications2012}, for all $W \et Z \in \Sqt_{R,+}$, $W \subset \bs Z \Leftrightarrow Z \subset \ng W$, for all $X \et Y \in \InSqt_{R,+}$, \begin{align*} X \subset \bs \lozenge \ng Y &\Leftrightarrow \lozenge \ng Y \subset \ng X\\ &\Leftrightarrow \{\circ \gamma, \; \gamma \in \ng Y\} \subset \ng X&& (\texte{def. of} \lozenge)\\ 
&\Leftrightarrow \forall \gamma \in \ng Y\; \forall \delta \in X\; \circ \gamma \vdash \ng[\delta]&& (\texte{def. of} \ng)\\ 
&\Leftrightarrow \forall \delta \in X\; \forall \gamma \in \InSqt\; (\forall \alpha \in Y\; \gamma \vdash \ng[\alpha] \Rightarrow \circ \gamma \vdash \ng[\delta])&& (\texte{def. of} \ng)\\ 
&\Leftrightarrow \forall \delta \in X\; \forall \gamma \in \InSqt\; (\forall \alpha \in Y\; \alpha \vdash \bs[\gamma] \Rightarrow \delta \vdash \bs[(\circ \gamma)])&& \link{[\bs/\ng]}\\ 
&\Leftrightarrow \forall \delta \in X\; \forall \gamma \in \InSqt\; (\forall \alpha \in Y\; \alpha \vdash \bs[\gamma] \Rightarrow \bullet \delta \vdash \bs[\gamma])&& \link{[\circ/\bullet]}\\ 
&\Leftrightarrow \forall \delta \in X\; \forall \gamma \in \InSqt\; (\forall \alpha \in Y\; \alpha \vdash \gamma \Rightarrow \bullet \delta \vdash \gamma)\\ 
&\Leftrightarrow \blacklozenge X \subset Y&& (\texte{def. of} \blacksquare).\end{align*}

The adjunction between $\lozenge$ and $\bs \blacklozenge \ng$ can be proven in the same way. 

\end{enumerate} 
\end{proof} 

We define $f$ as the unique $\L$-morphism from $\F_m$ to $\Sqt_{R,+}$ $(\InSqt_{R,+})$ such that, for every propositional variable $p$, $f(p) = g(p)$. 

\begin{lem} \label{f is nice} 

For all $a \in \F_m$, $a \in f(a) \subset g(a)$. 

\end{lem}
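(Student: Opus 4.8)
The lemma to prove is: for all $a \in \F_m$, we have $a \in f(a) \subset g(a)$.

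Let me think about what we're working with. We have the formula algebra $\F_m$, the sequent space $\Sqt$, the closure operator $j$, the algebra $\Sqt_{R,+}$ of $j$-closed sets, the map $g\colon \F_m \to \Sqt_{R,+}$ sending $a \mapsto \{\gamma \in \Sqt \mid \gamma \vdash a\}$, and $f$ the unique $\L$-morphism with $f(p) = g(p)$ on propositional variables. We must show $a \in f(a)$ and $f(a) \subseteq g(a)$.

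The natural approach is induction on the structure of the formula $a$. Let me think carefully about each connective and whether we can use the rules.

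The plan is to prove both inclusions $f(a) \subset g(a)$ and $a \in f(a)$ simultaneously by induction on the structure of $a$.

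For the base case, if $a = p$ is a propositional variable, then $f(p) = g(p)$ by definition, so $f(a) \subset g(a)$ trivially. For membership $p \in g(p) = \{\gamma \mid \gamma \vdash p\}$, we note $p \vdash p$ by the identity rule [Id], so $p \in g(p) = f(p)$.

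For the inductive step, let me consider the connectives. The key observation is that the operations on $\Sqt_{R,+}$ are defined precisely so that $g$ interacts well with the left- and right-introduction rules.

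**Membership $a \in f(a)$.** For a connective, say $a = b \cdot c$, we have $f(a) = f(b) \cdot f(c) = j(f(b), f(c))$. By induction, $b \in f(b)$ and $c \in f(c)$, so $b, c \in f(b), f(c) \subset j(f(b), f(c))$. But we need $b \cdot c$, i.e., the formula, not the structure $b, c$. Here we use that $f(b) \subset g(b)$ and $f(c) \subset g(c)$ (the other inductive hypothesis), so $f(b), f(c) \subset g(b), g(c)$. Now for any $\delta \in f(b)$ and $\eta \in f(c)$, we have $\delta \vdash b$ and $\eta \vdash c$; we want to show $b \cdot c \in j(f(b), f(c))$, i.e., whenever $\Gamma[\gamma] \vdash x$ holds for all $\gamma \in f(b), f(c)$, then $\Gamma[b \cdot c] \vdash x$. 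Since $b \vdash b$ and $c \vdash c$ give $b, c \in f(b), f(c)$ (using $f(b) \ni b$), applying the hypothesis gives $\Gamma[b, c] \vdash x$, and then [L.] gives $\Gamma[b \cdot c] \vdash x$. So $b \cdot c \in j(f(b), f(c)) = f(a)$.

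**Inclusion $f(a) \subset g(a)$.** Again for $a = b \cdot c$: since $g(a)$ is $j$-closed and $f(b), f(c) \subset g(b), g(c)$, it suffices by the closure property and monotonicity of $j$ to show $g(b), g(c) \subset g(b \cdot c)$, equivalently $j(g(b), g(c)) \subset g(b \cdot c)$; since $g(b \cdot c)$ is $j$-closed, this reduces to $g(b), g(c) \subset g(b \cdot c)$. For $\delta \in g(b)$ and $\eta \in g(c)$, we have $\delta \vdash b$ and $\eta \vdash c$, so by [R.], $\delta, \eta \vdash b \cdot c$, i.e., $\delta, \eta \in g(b \cdot c)$. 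Then $f(a) = f(b) \cdot f(c) = j(f(b), f(c)) \subset j(g(b), g(c)) \subset g(b \cdot c)$, using that $g(b\cdot c)$ is $j$-closed.

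The remaining connectives $\vee$, $\wedge$, $/$, $\setminus$, $\ul 1$, $\lozenge$, $\blacksquare$ (and in the involutive case $\bs$, $\ng$, $\blacklozenge$) are handled in exactly the same way: the left rule for each connective gives the membership $a \in f(a)$, and the right rule gives the inclusion $f(a) \subset g(a)$, with the definitions of the operations on $\Sqt_{R,+}$ chosen so the bookkeeping with $j$ goes through. For the residuals $/$ and $\setminus$ one uses that $f(b) \setminus f(c)$ and $f(b) / f(c)$ are automatically $j$-closed, and [R$\setminus$], [R/] (resp. [L$\setminus$], [L/]) deliver the two statements. For $\lozenge$ and $\blacksquare$ one uses [L$\lozenge$], [R$\lozenge$], [L$\blacksquare$], [R$\blacksquare$] together with \textbf{Lemma} \ref{remarque} ($j$ being a $\{,\,,\circ,\epsilon\}$-morphism), which ensures $\lozenge f(b) = j(\circ f(b))$ behaves correctly.

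\textbf{Main obstacle.} The only genuinely delicate point is the interlocking of the two statements: the membership claim $a \in f(a)$ for a compound formula requires the inclusion $f(b) \subset g(b)$ for its subformulas (to know that elements of $f(b)$ really derive $b$), while the inclusion claim uses the membership $b \in f(b)$ (through the right rules applied to the generators $b \vdash b$). Hence both must be carried through the same induction as a conjoined invariant, and one must be careful that each appeal to a sequent rule stays within the admissible derivations of $\Sqt_R$ (no cut is needed). Once the pairing is set up, each connective is a routine application of its left rule (for membership) and right rule (for inclusion), so no step beyond this structural coordination should cause trouble.
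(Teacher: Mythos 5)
Your proof is correct and follows essentially the same route as the paper: a simultaneous induction over formulas carrying the conjoined invariant $a \in f(a) \subset g(a)$, with each left-introduction rule yielding the membership claim and each right-introduction rule yielding the inclusion. The only difference is one of presentation: the paper writes out only the modal cases $\lozenge$, $\blacksquare$ (and cites \cite[Theorem 5.11]{ciabattoniAlgebraicProofTheory2012} and \cite[Theorem 4.4]{galatosResiduatedFramesApplications2012} for the residuated-lattice connectives and the involutive negations), whereas you inline those standard cases yourself.
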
 

\begin{proof} The \MGL case: 

We prove it by induction over the formulas. For all propositional variable $p$, \linebreak $p \in f(p) = \{\gamma \in \Sqt \;|\; \gamma \vdash p\}$. Using \cite[Theorem 5.11]{ciabattoniAlgebraicProofTheory2012}, there remains to prove that for all $a \in \F_m$, if $a \in f(a) \subset \{\gamma \in \Sqt \;|\; \gamma \vdash a\}$, then $\lozenge a \in f(\lozenge a) \subset \{\gamma \in \Sqt \;|\; \gamma \vdash \lozenge a\}$ and $\square a \in f(\square a) \subset \{\gamma \in \Sqt \;|\; \gamma \vdash \square a\}$. 

Let $a \in \F_m$ such that $a \in f(a) \subset \{\gamma \in \Sqt \;|\; \gamma \vdash a\}$. Since $a \in f(a)$, by $\link{[L\lozenge]}$, $\lozenge a \leq \circ a \in \lozenge f(a) = f(\lozenge a)$, so, by \bf{Lemma} \ref{ce lemme}, $\lozenge a \in f(\lozenge a)$. Moreover, let $\gamma \in f(\lozenge a) = \lozenge f(a)$. By $\link{[R\lozenge]}$, for all $\delta \in f(a) \subset g(a)$ we deduce $\circ \delta \vdash \lozenge a$, so, by definition of $\lozenge f(a)$, $\gamma \vdash \lozenge a$, i.e. $\gamma \in g(\lozenge a)$. 

For all $b \in \F_m$ and $\Gamma$ such that $\Gamma[a] \vdash b$, by $\link{[L\blacksquare]}$, $\Gamma[\circ \blacksquare a] \vdash b$, so $\circ \blacksquare a \leq a$, which gives, since $f(a)$ is downward closed and $a \in f(a)$, $\circ \blacksquare a \in f(a)$ and thus $\blacksquare a \in \blacksquare f(a) = f(\blacksquare a)$. Finally, for all $\gamma \in f(\blacksquare a) = \blacksquare f(a) \subset \blacksquare\{\delta \in \Sqt \;|\; \delta \vdash a\}$, $\circ \gamma \vdash a$, so, by $\link{[R\blacksquare]}$, $\gamma \vdash \blacksquare a$. Hence $f(\blacksquare a) \subset \{\gamma \in \Sqt \;|\; \gamma \vdash \blacksquare a\}$. 

The \InMGL case: 

Note that the two statements of the \MGL case remain valid with, mutatis mutandis, the same proof. So there only remains to prove that, for all $a \in \F_m$, if $a \in f(a) \subset g(a)$, then $\ng a \in \ng f(a) \subset g(\ng a)$ and $\bs a \in \bs f(a) \subset g(\bs a)$. 

This has been proven in \cite[Theorem 4.4]{galatosResiduatedFramesApplications2012}. 
\end{proof} 

For every $\{, , \circ, \bullet, \epsilon\}$-term $G$ (resp. $\{, , \circ, \bullet, \bs[], \ng[], \epsilon\}$-term), let's write $\ol G$ the corresponding $\{\cdot, \lozenge, \blacklozenge, \epsilon\}$-term (resp. $\{\cdot, \lozenge, \blacklozenge, \bs, \ng, \epsilon\}$-term), and for every $\gamma \in \Sqt$ (resp. $\InSqt$), $\ol \gamma$ the corresponding formula. 

\begin{defi} \label{def satisfaction} 

We say that an $\L$-structure (resp.$\InL$-structure) $A$ satisfies an analytic structural rule $r = \seq{\Gamma[\Upsilon_1] \vdash a , \ldots , \Gamma[\Upsilon_n] \vdash a}{\Gamma[\Upsilon_0] \vdash a} \in R$ if and only if $A$ satisfies the formula $r^\bullet$ where $\seq{{}}{}$ is replaced by $\Rightarrow$, $\vdash$ by $\leq$, variables for \sequences by propositional variables and $\epsilon$, $,$ and $\circ$ are respectively replaced by $\ul 1$, $\cdot$ and $\lozenge$ (and $\bullet$ by $\blacklozenge$). Define $R^\bullet = \{r^\bullet, \, r \in R\}$. 

\end{defi} 

We denote by $\T_\MGL$ the following theory: 

\begin{enumerate} 

\item $\{\ul 1, \cdot, /, \setminus, \wedge, \vee\}$ is a residuated lattice structure 

\item $\lozenge$ is left adjoint to $\blacksquare$. 

\end{enumerate} 

We denote by $\T_\InMGL$ the following theory: 

\begin{enumerate} 

\item $\{\ul 1, \cdot, /, \setminus, \wedge, \vee\}$ is a residuated lattice structure 

\item $\bs \lozenge \ng$ is right adjoint to $\blacklozenge$ and $\bs \blacklozenge \ng$ is right adjoint to $\lozenge$. 

\item $\bs \ng x = \ng \bs x = x$. 

\end{enumerate} 

\begin{lem}[(cf. \citup{ciabattoniAlgebraicProofTheory2012} \textbf{Lemma 5.20})] 

$\Sqt_{R,+}$ $(\InSqt_{R,+})$ is a model of $\T_\MGL$ \linebreak $(\T_\InMGL)$ that satisfies every $r \in R$. 

\end{lem}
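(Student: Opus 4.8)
The plan is to treat the residuated–lattice axioms and the modal/involutive adjunctions as essentially bookkeeping on top of the nucleus construction already in place, and to concentrate the real work on verifying that $\Sqt_{R,+}$ (resp.\ $\InSqt_{R,+}$) validates each analytic rule of $R$. First, by the preceding lemma all the operations $\vee,\wedge,\cdot,\setminus,/,\ul 1,\lozenge,\blacksquare$ (resp.\ together with $\blacklozenge,\bs,\ng$) land in $j$-closed sets, so they genuinely define an $\L$-structure (resp.\ $\InL$-structure) on $\Sqt_{R,+}$. That this structure is a residuated lattice — the $j$-closed subsets ordered by inclusion form a complete lattice with meets given by intersection and joins by $j$ of unions, $(\cdot,\ul 1)$ is a monoid, and $\setminus,/$ are the two residuals of $\cdot$ — is exactly the standard residuated-frame output of \cite{ciabattoniAlgebraicProofTheory2012} and \cite{galatosResiduatedFramesApplications2012}; the residuation law $X\cdot Y\subseteq Z\Leftrightarrow Y\subseteq X\setminus Z\Leftrightarrow X\subseteq Z/Y$ follows at once from the definitions of $\setminus$ and $/$ together with $X\cdot Y=j(X,Y)$ and \textbf{Lemma} \ref{remarque}. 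The remaining axioms of $\T_\MGL$ (resp.\ $\T_\InMGL$), namely the adjunction $\lozenge\dashv\blacksquare$ (resp.\ the two adjunctions involving $\bs\lozenge\ng$ and $\bs\blacklozenge\ng$ and the identities $\bs\ng X=\ng\bs X=X$), are precisely what the preceding lemma establishes.

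The substance of the proof is that $\Sqt_{R,+}$ satisfies every $r\in R$. Fix such an $r$, with premises $\Gamma[\Upsilon_1]\vdash a,\ldots,\Gamma[\Upsilon_n]\vdash a$ and conclusion $\Gamma[\Upsilon_0]\vdash a$, and fix an assignment of $j$-closed sets $(X_v)_v$ to the variables $v$ of $\Upsilon_0$. Writing $\overline{\Upsilon_i}$ for the value of $\Upsilon_i$ in the complex algebra under this assignment, the quasi-inequality $r^\bullet$ of \textbf{Definition} \ref{def satisfaction} reads $\bigwedge_i(\overline{\Upsilon_i}\subseteq z)\Rightarrow(\overline{\Upsilon_0}\subseteq z)$ for every closed $z$ (the conclusion formula $a$ being, by \emph{Linearity} and \emph{Separation}, a fresh variable that does not occur in any $\Upsilon_i$). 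Instantiating $z=\overline{\Upsilon_1}\vee\cdots\vee\overline{\Upsilon_n}$ and arguing back, this quasi-inequality is equivalent to the single inequation $\overline{\Upsilon_0}\subseteq\overline{\Upsilon_1}\vee\cdots\vee\overline{\Upsilon_n}$. Since the $X_v$ are closed and $j$ is a $\{,\,,\circ,\epsilon\}$-morphism (\textbf{Lemma} \ref{remarque}), one has $\overline{\Upsilon_i}=j(\Upsilon_i^{c})$, where $\Upsilon_i^{c}$ denotes the raw pointwise complex value of $\Upsilon_i$ on the $X_v$; hence the right-hand join equals $j\big(\bigcup_i\Upsilon_i^{c}\big)$ and the goal reduces to the inclusion $\Upsilon_0^{c}\subseteq j\big(\bigcup_i\Upsilon_i^{c}\big)$.

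It then suffices to unwind the definition of $j$. A sequent $\gamma\in\Upsilon_0^{c}$ has the form $\Upsilon_0[\sigma]$ for a substitution $\sigma$ sending each variable $v$ to some $\gamma_v\in X_v$, where \emph{Linearity} (the variables of $\Upsilon_0$ are distinct) makes $\sigma$ well defined. Let $\Delta$ be any context and $c$ any formula with $\Delta[\delta]\vdash c$ for all $\delta\in\bigcup_i\Upsilon_i^{c}$; in particular $\Delta[\Upsilon_i[\sigma]]\vdash c$ for each $i$, since $\Upsilon_i[\sigma]\in\Upsilon_i^{c}$ — this uses \emph{Inclusion}, as the variables of $\Upsilon_i$ lie among those of $\Upsilon_0$, so $\sigma$ instantiates $\Upsilon_i$ as well. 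Because $\vdash$ is closed under $r$, applying $r$ with its own context instantiated to $\Delta$, its conclusion formula to $c$, and its variables to $\sigma$ yields $\Delta[\Upsilon_0[\sigma]]=\Delta[\gamma]\vdash c$. Thus $\gamma\in j\big(\bigcup_i\Upsilon_i^{c}\big)$, giving the required inclusion. The main obstacle is exactly this matching step: aligning the free choice of representatives inside the complex sets $\Upsilon_i^{c}$ with one coherent substitution into the rule instance, and it is precisely the three side conditions defining analytic rules that license it.

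For the involutive case the argument is identical once the unary symbols $\bs,\ng$ are carried along; the \emph{Positivity} condition guarantees that no negation interferes with the substitution argument above, while the defining relations of $\InSqt$ and the negation adjunctions supplied by the preceding lemma account for $\bs$ and $\ng$, following \cite{galatosResiduatedFramesApplications2012}. This completes the verification that $\Sqt_{R,+}$ (resp.\ $\InSqt_{R,+}$) is a model of $\T_\MGL$ (resp.\ $\T_\InMGL$) satisfying every $r\in R$.
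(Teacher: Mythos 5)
Your proposal is correct and follows essentially the same route as the paper: the residuated-lattice axioms and the adjunctions are delegated to the cited Lemma 5.20 and to the preceding lemma, and each analytic rule is verified by unwinding the definition of $j$, instantiating all premises with a single coherent substitution (licensed exactly by Linearity and Inclusion), applying the rule inside an arbitrary context, and passing from the raw pointwise set to the algebra value via \textbf{Lemma} \ref{remarque}, with the involutive case dispatched by Positivity just as in the paper. The only (cosmetic) difference is that you first reduce the quasiequation to the single inequation $\ol{\Upsilon_0} \subseteq \ol{\Upsilon_1} \vee \cdots \vee \ol{\Upsilon_n}$ by instantiating the succedent variable at the join, whereas the paper verifies the implication directly for a generic $j$-closed set $X_0$; the two formulations are interchangeable here.
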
 

\begin{proof} 

According to \textbf{Lemma 5.20} of \citup{ciabattoniAlgebraicProofTheory2012}, $\Sqt_{R,+}$ is a model of every axiom of $\T_\MGL \cup R^\bullet$ where no $\lozenge \ou \square$ appears. Moreover, $\lozenge$ is left adjoint to $\blacksquare$ thanks to $\link{[L\blacksquare]}$ and $\link{[R\blacksquare]}$. 

Let $r = \seq{{\Gamma[G_1(\gamma_1, \ldots, \gamma_n)] \vdash a}, \ldots, {\Gamma[G_k(\gamma_1, \ldots, \gamma_n)] \vdash a}}{{\Gamma[G_0(\gamma_1, \ldots, \gamma_n)] \vdash a}}$ be an analytic rule, where the $G_i$s are $\{, , \circ, \epsilon\}$-terms satisfied by $\Sqt_R$. Let $X_0, \ldots, X_n \in \Sqt_{R,+}$, such that \linebreak $\ol{G_1}(X_1, \ldots, X_n) \subset X_0, \ldots, \ol{G_k}(X_1, \ldots, X_n) \subset X_0$ and $(\gamma_i)_{1 \leq i \leq n} \in \prod[i = 1][n] X_i$. 

For all $1 \leq i \leq k$, $G_i(X_1, \ldots, X_n) \subset j(X_0)$, so, for all $a \in \F_m$ context $\Gamma$ such that for all $\delta \in X_0$ $\Gamma[\delta] \vdash a$, $\Gamma[G_1(\gamma_1, \ldots, \gamma_n)] \vdash a, \ldots, \Gamma[G_k(\gamma_1, \ldots, \gamma_n)] \vdash a$, so, by $r$, $\Gamma[G_0(\gamma_1, \ldots, \gamma_n)] \vdash a$. Hence \linebreak $G_0(\gamma_1, \ldots, \gamma_n) \in j(X_0) = X_0$. Since the variables $\gamma_1, \ldots, \gamma_n$ are distinct, \linebreak \adjusttopage{$\{G_0(\gamma_1, \ldots, \gamma_n),\; \gamma_i \in X_i\} \subset X_0$. By \textbf{Lemma} \ref{remarque}, $\ol{G_0}(X_1, \ldots, X_n) = j(\{G_0(\gamma_1, \ldots, \gamma_n),\; \gamma_i \in X_i\})$,} so we can conclude that $\ol{G_0}(X_1, \ldots, X_n) \subset X_0$. 

Since the analytic rules of $\InMGL$ do not contain $\ng$ or $\bs$, the proof also works in the $\InMGL$ case. 
\end{proof} 

\begin{thm}[Completeness theorem] \label{Completeness theorem annexes} 

The class of models of $\T_\MGL \cup R^\bullet$ (resp. $\T_\InMGL \cup R^\bullet$) is sound and complete both for $\MGL \cup R$ and $\MGL \cup R \cup \{\hyperlink{Cut}{\text{(Cut)}}\}$ (resp. $\InMGL \cup R$ and $\InMGL \cup R \cup \{\hyperlink{Cut}{\text{(Cut)}}\}$). 

\end{thm}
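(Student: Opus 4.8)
The plan is to deduce all four assertions from just two of them, using the trivial inclusion of derivation systems. Since every cut-free derivation is in particular a derivation in the system with $\text{(Cut)}$, provability in $\MGL \cup R$ implies provability in $\MGL \cup R \cup \{\text{(Cut)}\}$; hence soundness of the larger system yields soundness of the smaller, and completeness of the smaller yields completeness of the larger. So it suffices to prove soundness of $\MGL \cup R \cup \{\text{(Cut)}\}$ and completeness of the cut-free system $\MGL \cup R$ with respect to the class of models of $\T_\MGL \cup R^\bullet$. The involutive case is handled by the same argument word for word, with $\InSqt_{R,+}$, $\T_\InMGL$ and $\InMGL$ in place of their plain counterparts.

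For \textbf{soundness}, fix a model $A$ of $\T_\MGL \cup R^\bullet$ and a valuation of the propositional variables, extended to an $\L$-homomorphism $\bar v \colon \F_m \to A$ and to \sequences by interpreting $,$, $\circ$, $\bullet$, $\epsilon$ as $\cdot$, $\lozenge$, $\blacklozenge$, $\ul 1$ (and, in the involutive case, $\bs$, $\ng$ by the corresponding operations of $A$). I would argue by induction on the height of a derivation that $\gamma \vdash a$ forces $\bar v(\ol\gamma) \leq \bar v(a)$: the logical rules of $\GL$/$\MGL$ are precisely the residuated-lattice and adjunction axioms packaged in $\T_\MGL$, each structural rule $r \in R$ is validated exactly because $A \models r^\bullet$ (\textbf{Definition} \ref{def satisfaction}), and $\text{(Cut)}$ is transitivity of $\leq$. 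Once the dictionary between rules and axioms is spelled out this is a routine induction.

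For \textbf{completeness}, suppose the sequent $\gamma \vdash a$ is valid, i.e.\ $\ol\gamma \leq a$ holds in every model of $\T_\MGL \cup R^\bullet$. The preceding lemma exhibits $\Sqt_{R,+}$ as such a model, so under the valuation $f$ we get $f(\ol\gamma) \subseteq f(a)$, the order on $\Sqt_{R,+}$ being inclusion of $j$-closed sets. The key point is that $\gamma$ itself lies in $f(\ol\gamma)$: writing $\ol\gamma = \ol G(a_1,\ldots,a_n)$ for the structural term $G$ and its constituent formulas, \textbf{Lemma} \ref{remarque} gives $f(\ol\gamma) = j\big(G(f(a_1),\ldots,f(a_n))\big)$, and since $a_i \in f(a_i)$ by \textbf{Lemma} \ref{f is nice} we obtain $\gamma = G(a_1,\ldots,a_n) \in G(f(a_1),\ldots,f(a_n)) \subseteq f(\ol\gamma)$. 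Combining, $\gamma \in f(\ol\gamma) \subseteq f(a) \subseteq g(a) = \{\delta \mid \delta \vdash a\}$, where the last inclusion is again \textbf{Lemma} \ref{f is nice}. As $\vdash$ is the cut-free provability relation of $\Sqt_R$, this is exactly a cut-free derivation of $\gamma \vdash a$ in $\MGL \cup R$.

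The point demanding genuine care, and the main obstacle, is the involutive case: one must check that the defining relations of the quotient $\InSqt$ — namely $\ng\bs\gamma = \bs\ng\gamma = \gamma$, $\bs(\gamma,\delta) = \bs\delta,\bs\gamma$ (and its $\ng$-analogue), and the commutations of $\bs,\ng$ with $\circ,\bullet$ — are all validated in any model of $\T_\InMGL$, so that $\bar v$ descends to $\InSqt$ and the soundness induction is well founded. The matching morphism and membership facts for $\bs,\ng$ that the completeness step needs were already secured in \textbf{Lemma} \ref{f is nice} through the cited involutive results, so only this well-definedness of the interpretation has to be verified separately; everything else is the same bookkeeping induction as in the non-involutive case.
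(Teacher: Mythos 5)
Your non-involutive argument is essentially the paper's own proof: soundness is obtained by checking that the relation $\gamma \vdash_v b \Leftrightarrow v(\ol{\gamma}) \leq v(b)$ is closed under every rule of $\MGL \cup R \cup \{\hyperlink{Cut}{\text{(Cut)}}\}$ (your induction on derivations is the same thing, since $\vdash_\cut$ is the smallest such relation; for $[\link{L/}]$ and $[\link{L\setminus}]$ monotonicity of contexts suffices, where the paper routes this through the right adjoints $r_\Gamma$), and your completeness step is the paper's verbatim: evaluate in the syntactic model $\Sqt_{R,+}$, get $G(a_1,\ldots,a_n) \in \ol{G}(f(a_1),\ldots,f(a_n))$ from \textbf{Lemmas} \ref{remarque} and \ref{f is nice}, conclude membership in $g(b)$, i.e.\ a cut-free derivation, and transfer between the two systems by the trivial inclusion of provability relations.

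The involutive half, however, contains a step that fails as you set it up. You interpret structures uniformly — $,$ as $\cdot$, $\circ$ as $\lozenge$, $\bullet$ as $\blacklozenge$, and $\bs[\blank]$, $\ng[\blank]$ as the algebra's $\bs$, $\ng$ — and then propose to verify that the defining relations of the quotient $\InSqt$ are validated in every model of $\T_\InMGL$. But the relation $\bs[(\gamma,\delta)] = \bs[\delta], \bs[\gamma]$ would then require the identity $\bs(x \cdot y) = \bs y \cdot \bs x$, which is not a consequence of $\T_\InMGL$ (its axioms never relate $\bs$ or $\ng$ to $\cdot$ directly) and is false in the intended models: in an involutive residuated lattice, De Morgan pairs $\cdot$ with its \emph{dual} operation $x \oplus y := \bs(\ng y \cdot \ng x)$, not with itself (already in $[0,1]$ with \L{}ukasiewicz fusion and $\bs x = 1 - x$, taking $x = y = \frac{1}{2}$ gives $\bs(x \cdot y) = 1$ while $\bs y \cdot \bs x = 0$). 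So with a single-polarity interpretation, $\bar v$ does not descend to $\InSqt$ and your soundness induction is not well posed. The repair is the standard two-sided interpretation of \cite{galatosResiduatedFramesApplications2012}, which the paper itself displays in its left/right correspondence tables (e.g.\ \it{Figure} \ref{Inv Correspondence between structure symbols and L}, where $,$ is interpreted as $\+$ on the left of $\vdash$ but as $\odot$ on the right): read a sequent $\gamma \vdash \delta$ as $v_l(\gamma) \leq v_r(\delta)$, where $v_l$ sends $,$ to $\cdot$ and $v_r$ sends $,$ to $\oplus$, the right readings of $\circ, \bullet$ are the conjugates $\ng \lozenge \bs$, $\ng \blacklozenge \bs$, and $\bs[\blank]$, $\ng[\blank]$ exchange the two polarities ($v_l(\bs[\gamma]) = \bs v_r(\gamma)$, etc.). Under that scheme every defining relation of $\InSqt$ holds by definition together with $\bs \ng x = \ng \bs x = x$, the rules $[\link{[Invtild]}]$ and $[\link{[Invneg]}]$ become the residuation laws for $\oplus$, and the rest of your induction goes through; your completeness step via $\InSqt_{R,+}$ is unaffected, since the analytic rules are negation-free by the \hyperlink{Positivity}{Positivity} condition.
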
 

\begin{proof} 

\ul{Soundness:} 

Let $A$ be a model of $\T_\MGL \cup R^\bullet$, $\L$-morphism $v\colon \F_m \rightarrow A$. Let us define $\vdash_v$ as follows: for all $\gamma \in \Sqt$ and $b \in \F_m$, $\gamma \vdash_v b \Leftrightarrow v(\ol \gamma) \leq v(b)$. We show that $\vdash_v$ satisfies all the rules satisfied by $\vdash_\cut$ and so $\vdash_\cut \subset \vdash_v$, ie, for all $\gamma \in \Sqt$ and $b \in \F_m$, if $\gamma \vdash b$, then $v(\ol \gamma) \leq v(b)$. 

The fact that $\vdash_v$ satisfies the left and right introduction rules $\link{[L.]}$, $\link{[R.]}$, $\link{[L\vee]}$, $\link{[R\vee_1]}$, $\link{[R\vee_2]}$, $\link{[L\wedge]}$, $\link{[R\wedge_1]}$, $\link{[R\wedge_2]}$, $\link{[L1]}$ and $\link{[R1]}$ is an immediate consequence of $v(\bar\_)$ sending $,$ on $\cdot$, $\vee$ on $\vee$, $\wedge$ on $\wedge$ and $\epsilon$ on $\ul 1$. Let us prove that $\vdash_v$ satisfies $\link{[L.]}$ and $\link{[R.]}$ as examples. 

For all $a$, $b \et c \in \F_m$ and all context $\Gamma$, if $\Gamma[a,b] \vdash_v c$, then $v\left(\ol{\Gamma[a,b]}\right) \leq v(c)$, ie $v(\bar\Gamma[a \cdot b]) \leq v(c)$, ie $v(\Gamma[a \cdot b]) \leq v(c)$, ie $\Gamma[a \cdot b] \vdash_v c$. Hence $\vdash_v$ satisfies $\link{[L.]}$. 

For all $\gamma \et \delta \in \Sqt$ and $a \et b \in \F_m$, if $\gamma \vdash_v a \et \Delta \vdash_v b$, then $v(\bar \gamma) \leq v(a) \et v(\bar \delta) \leq v(b)$, so $v\left(\ol{\gamma, \delta}\right) = v(\bar \gamma) \cdot v(\bar \delta) \leq v(a) \cdot v(b) = v(a \cdot b)$, ie $\gamma, \delta \vdash_v a \cdot b$. Hence $\vdash_v$ satisfies $\link{[R.]}$. 

The fact that $\vdash_v$ satisfies $\link{[R/]}$ and $\link{[R\setminus]}$ is an immediate consequence of $/$ and $\setminus$ in $A$ being the right and left residual of $\cdot$ and $v$ sending being a morphism for these symbols. Let us prove $\link{[R/]}$, as $\link{[R\setminus]}$ can be proven in the same way. 

For all $\gamma \in \Sqt$ and $a \et b \in \F_m$ such that $\gamma, a \vdash_v b$, $v(\bar \gamma) \cdot v(a) \leq v(b)$, ie \linebreak $v(\bar \gamma) \leq v(b) / v(a) = v(b/a)$, ie $\gamma \vdash_v b/a$. Hence $\vdash_v$ satisfies $\link{[R/]}$. 

Finally, there remains to prove that $\vdash_v$ satisfies $\link{[L\setminus]}$ and $\link{[L/]}$. Since the two proofs are analogous, let us check that $\vdash_v$ satisfies $\seq{{\Gamma[b] \vdash c}, {\gamma \vdash a}}{{\Gamma[b / a, \gamma] \vdash c}}\link{[L/]}$. 

For this, we will use the fact that, for all context $\Gamma$, there exists $r_\Gamma \colon A \rightarrow A$ such that, for all $x \et y \in A$, $\ol\Gamma[x] \leq y \Leftrightarrow x \leq r_\Gamma[y]$, which can be proven by induction over the $\{, \, , \circ, \epsilon\}$-terms. 

Let $\Gamma$ be a $\{, \, , \circ, \epsilon\}$-term, $\gamma \in \Sqt$, $a$, $b$ and $c \in \F_m$ such that $\Gamma[b] \vdash_v c \et \gamma \vdash_v a$. \linebreak $v(\ol{\Gamma[b]}) \leq v(c)$ and $v(\ol \gamma) \leq v(a)$, so $\ol\Gamma[v(b)] \leq v(c)$, so $v(b) \leq r_\Gamma[v(c)]$ and thus \linebreak $v(b) / v(a) \cdot v(\ol\gamma) \leq v(b) / v(a) \cdot v(a) \leq v(b) \leq r_\Gamma[v(c)],$ ie $\ol\Gamma[v(\gamma) \cdot v(a) / v(b)] \leq v(c)$, which is equivalent to $v(\ol{\Gamma[\gamma, a / b]}) \leq v(c)$, itself equivalent to $\Gamma[\gamma, a / b] \vdash_v c$. Hence $\vdash_v$ satisfies $\link{[L/]}$. 

We have thus proven that $\vdash_v$ satisfies all the introduction rules of \MGL. 

Let $r \in R$. We can write $r = \seq{{\Gamma[G_1(\gamma_1, \ldots, \gamma_n)] \vdash a}, \ldots, {\Gamma[G_k(\gamma_1, \ldots, \gamma_n)] \vdash a}}{{\Gamma[G_0(\gamma_1, \ldots, \gamma_n)] \vdash a}}$. For all context $\Gamma$, $G_1, \ldots, G_k$ $\{, \, , \circ, \epsilon\}$-terms of arity $n$, $\gamma_1, \ldots, \gamma_n \in \Sqt$ and $a \in \F_m$ such that \linebreak $\Gamma[G_1(\gamma_1, \ldots, \gamma_n)] \vdash_v a, \ldots, \Gamma[G_k(\gamma_1, \ldots, \gamma_n)] \vdash_v a$, $$\ol\Gamma[\ol G_1(v(\ol\gamma_1), \ldots, v(\ol\gamma_n))] \leq v(a), \ldots, \ol\Gamma[\ol G_k(v(\ol\gamma_1), \ldots, v(\ol\gamma_n))] \leq v(a),$$ so, since $A$ satisfies $R^\bullet$, $\ol\Gamma[\ol G_0(v(\ol\gamma_1), \ldots, v(\ol\gamma_n))] \leq v(a)$, ie $\Gamma[G_0(\gamma_1, \ldots, \gamma_n)] \vdash_v a$. Hence $\vdash_v$ satisfies $r$. 

We finally prove that $\vdash_v$ satisfies the cut rule. For all context $\Gamma$, $\gamma \in \Sqt$, $a \et b \in \F_m$ such that $\gamma \vdash_v a \et \Gamma[a] \vdash_v b$, $v(\ol \gamma) \leq v(a) \et \ol\Gamma[v(a)] \leq v(b)$, so, since $\ol \Gamma$ is non-decreasing, $\ol\Gamma[v(\ol\gamma)] \leq v(b)$, ie $\Gamma[\gamma] \vdash_v b$. Thus, $\vdash_v$ satisfies the cut rule. 

This concludes the proof of soundness of models of $T_\MGL \cup R^\bullet$ with respect to $\MGL \cup R \cup \{\hyperlink{Cut}{\text{(Cut)}}\}$. We can prove soundness of models of $T_\InMGL \cup R^\bullet$ with respect to $\InMGL \cup R \cup \{\hyperlink{Cut}{\text{(Cut)}}\}$ in the same way. From this soundness, one can deduce the soundness of the models of $T_\MGL \cup R^\bullet$ (resp. $T_\InMGL \cup R^\bullet$) with respect to $\MGL \cup R$ (resp. $\InMGL \cup R$). 

\ul{Completeness:} 

Let us prove completeness for $\T_\MGL \cup R^\bullet$. The proof for $\T_\InMGL \cup R^\bullet$ is the same and left to the reader. 

Let $(a_1 , \ldots , a_n , b) \in {\F_m}^{n+1}$ and $G$ be a $\{, , \circ, \epsilon\}$-term such that $\ol G(a_1, \ldots, a_n) \Rightarrow b$ is true in $\T_\MGL \cup R^\bullet$. Since $\Sqt_{R,+}$ is a model of $\T_\MGL \cup R^\bullet$, $\ol G(f(a_1), \ldots, f(a_n)) \subset f(b)$. However, according to \textbf{Lemma} \ref{f is nice}, $f(b) \subset g(b)$ and for all $a \in \F_m$, $a \in f(a)$. Hence \linebreak $G(a_1, \ldots, a_n) \in \ol G(f(a_1), \ldots, f(a_n)) \subset g(b)$, ie $G(a_1, \ldots, a_n) \vdash b$. 
\end{proof} 

\begin{cor}[Cut Admissibility theorem] \label{Cut Admissibility annexes} 

For every $(a_1, \ldots, a_n, b) \in \F_m^{n+1}$ and $\{, \, , \circ, \epsilon\}$-term $G$ such that $G(a_1, \ldots, a_n) \vdash_\cut b$, $G(a_1, \ldots, a_n) \vdash b$. 

\end{cor}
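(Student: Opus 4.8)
The Cut Admissibility theorem (Corollary \ref{Cut Admissibility annexes}) asserts that for every tuple of formulas $(a_1,\ldots,a_n,b) \in \F_m^{n+1}$ and every $\{,\,,\circ,\epsilon\}$-term $G$, if $G(a_1,\ldots,a_n) \vdash_\cut b$ (a derivation possibly using the cut rule), then $G(a_1,\ldots,a_n) \vdash b$ (a cut-free derivation exists).

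The plan is to obtain this as an immediate consequence of the Completeness theorem \ref{Completeness theorem annexes}, exploiting the fact that the \emph{same} class of algebraic models is sound and complete for both the system \emph{with} cut and the system \emph{without} cut. First I would spell out the logical core: Theorem \ref{Completeness theorem annexes} establishes that the class of models of $\T_\MGL \cup R^\bullet$ is sound and complete simultaneously for $\MGL \cup R$ and for $\MGL \cup R \cup \{\text{(Cut)}\}$. The key point is that the algebraic semantics does not "see" the cut rule at all—soundness holds for the cut-extended system (the model-level interpretation $\vdash_v$ validates (Cut) because $\ol\Gamma$ is non-decreasing, as shown in the soundness part of the proof), while completeness is proved without ever invoking cut (it goes through the syntactic model $\Sqt_{R,+}$ and Lemma \ref{f is nice}, which yields a \emph{cut-free} derivation $G(a_1,\ldots,a_n) \vdash b$).

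The argument then runs as follows. Suppose $G(a_1,\ldots,a_n) \vdash_\cut b$. By the soundness direction of Theorem \ref{Completeness theorem annexes} applied to $\MGL \cup R \cup \{\text{(Cut)}\}$, the sequent is valid in every model of $\T_\MGL \cup R^\bullet$; that is, $\ol G(v(a_1),\ldots,v(a_n)) \leq v(b)$ holds for every $\L$-morphism $v\colon \F_m \rightarrow A$ into every such model $A$. In particular the universally quantified inequality $\ol G(a_1,\ldots,a_n) \Rightarrow b$ is a theorem of $\T_\MGL \cup R^\bullet$. Now the completeness direction of the \emph{same} theorem, applied this time to $\MGL \cup R$ (the cut-free system), yields $G(a_1,\ldots,a_n) \vdash b$ without cut. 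Applying both halves of Theorem \ref{Completeness theorem annexes} to the two different syntactic systems that share one semantics is the entire content of the corollary.

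I do not anticipate a genuine obstacle here, since all the work has been front-loaded into Theorem \ref{Completeness theorem annexes}; the only thing to be careful about is the bookkeeping of which system each half of that theorem is applied to. The subtle methodological point—worth stating explicitly in the proof so the reader sees why this is not circular—is that soundness is used for the cut-\emph{ful} system whereas completeness is used for the cut-\emph{free} system, and this asymmetry is exactly what the double statement of Theorem \ref{Completeness theorem annexes} was engineered to supply. The completeness argument itself produces a cut-free proof object because it factors through $f(b) \subset g(b)$ with $g(b) = \{\gamma \mid \gamma \vdash b\}$ defined via the cut-free relation $\vdash$, so no hidden appeal to cut is made. Thus the corollary follows in two lines once the two directions of Theorem \ref{Completeness theorem annexes} are lined up correctly, and the identical reasoning covers the involutive variant $\InMGL$ verbatim.
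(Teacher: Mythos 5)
Your proposal is correct and follows essentially the same route as the paper: the paper's proof likewise applies the soundness half of Theorem \ref{Completeness theorem annexes} for $\MGL \cup R \cup \{\hyperlink{Cut}{\text{(Cut)}}\}$ to pass from $G(a_1,\ldots,a_n) \vdash_\cut b$ to validity in every model of $\T_\MGL \cup R^\bullet$, then the completeness half for the cut-free system $\MGL \cup R$ (which, as you note, factors through $\Sqt_{R,+}$ and \textbf{Lemma} \ref{f is nice} and so produces a cut-free derivation), and handles the involutive case identically. Your explicit remark on the soundness-for-cutful/completeness-for-cutfree asymmetry matches the paper's bookkeeping exactly.
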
 

\begin{proof} 

For every $(a_1, \ldots, a_n, b) \in \F_m^{n+1}$ and $\{, \, , \circ, \epsilon\}$-term $G$ such that $G(a_1, \ldots, a_n) \vdash_\cut b$, for all model $A$ of $\T_\MGL \cup R^\bullet$ and $v\colon \F_m \rightarrow A$, thanks to the soundness with respect to $\MGL \cup R \cup \{\hyperlink{Cut}{\text{(Cut)}}\}$, \linebreak $\ol G(v(a_1), \ldots, v(a_n)) \subset v(b)$, so, thanks to completeness of models of $\T_\MGL \cup R^\bullet$ with respect to $\MGL \cup R$, $G(a_1, \ldots, a_n) \vdash b$. 

Once again, it works the same way for the involutive case. 
\end{proof} 

As a final remark, we can notice that adding a countable rule in analytic form still gives a system that enjoys cut elimination, soundness and completeness.

\end{document}